\title{The Augustin Capacity and Center}
\author{Bar\i\c{s} Nakibo\u{g}lu\\ \small{\href{mailto:bnakib@metu.edu.tr}{bnakib@metu.edu.tr}}
\thanks{This paper was presented in part at 
	the 2017 IEEE International Symposium on Information Theory
	\cite{nakiboglu17}.}
%\thanks{Parts of this article were presented at 
%	the 2017 IEEE International Symposium on Information Theory
% \cite{nakiboglu17}.}
}
\theoremstyle{plain}
\newtheorem{lemma}{Lemma} 
\newtheorem{theorem}{Theorem}
\newtheorem*{conjecture*}{Conjecture}
\theoremstyle{definition}
\newtheorem{definition}{Definition} 
\newtheorem{example}{Example}
\newtheorem{remark}{Remark}
\newtheorem*{remark*}{Remark}
\definecolor{mygray}{gray}{0.4}
\newcommand{\set} [1]			{{\mathscr{{#1}}}}
\newcommand{\alg}[1]			{{\mathcal{{#1}}}}
\newcommand{\rndv}[1]      {{\mathsf{{#1}}}}
\newcommand{\oper}[1]      {{\mathtt{{#1}}}}
\newcommand{\msr}[1]       {{\it    {{#1}}}}
\newcommand{\cnst}[1]      {{\mathit{{#1}}}}
\newcommand{\sss}[1]		{{\mathit{2}^{{#1}}}}
\newcommand{\integers}[1]	{{\mathbb{Z}}_{^{{#1}}}}
\newcommand{\reals}[1]		{{\mathbb{R}}_{^{{#1}}}}
\newcommand{\inte}[1]      {{\mathtt{int}{{#1}}}}
\newcommand{\clos}[1]      {{\mathtt{cl}{{#1}}}}
\newcommand{\conv}[1]      {{\mathtt{ch}{{#1}}}}
\DeclareMathOperator*{\essup}{ess\,sup}
\newcommand{\dif}[1]       {{\mathrm{d}{#1}}}  
\newcommand{\der}[2]        {\tfrac{\dif{#1}}{\dif{#2}}}  
\newcommand{\pder}[2]       {\tfrac{\partial{#1}}{\partial{#2}}}  
\newcommand{\supp}[1]       {\mathtt{supp}({{#1}})}       
\newcommand{\DEF}[0]			{{\!\!~\triangleq\!~}}  
\newcommand{\mtimes}[0]			{{\circledast}}
\newcommand{\AC}[0]            {{\prec}}  
\newcommand{\UAC}[0]           {{\mathop{\prec}\nolimits^{uni}}}  
\newcommand{\NAC}[0]           {{\nprec}}  
\newcommand{\abs}[1]           {{\left\lvert{{#1}}\right\lvert}}
\newcommand{\abp}[1]           {{\left\lvert{{#1}}\right\lvert^{+}}}
\newcommand{\lon}[1]           {{{\left\lVert{{#1}}\right\lVert}}} 
\newcommand{\IND}[1]           {{\mathds{1}_{\{#1\}}}}    
\newcommand{\ind}[0]           {{\imath}}
\newcommand{\jnd}[0]           {{\jmath}}
\newcommand{\knd}[0]           {{\kappa}}
\newcommand{\tin}[0]           {{\cnst{t}}}
\newcommand{\blx}[0]           {{\cnst{n}}}
\newcommand{\tinS}[0]          {{\set{T}}}
\newcommand{\domtr}[1]         {{\set{Q}}_{{#1}}}
\newcommand{\EXS}[2]         {{\bf E}_{{#1}}\!\left[{#2}\right]}
\newcommand{\EX}[1]          {\EXS{\!}{{#1}}}                      %Expectation
\newcommand{\fX}[0]          {{\cnst{f}}}
\newcommand{\gX}[0]          {{\cnst{g}}}   
\newcommand{\GX}[0]          {{\cnst{G}}}   
\newcommand{\fXS}[0]         {{\set{F}}}
\newcommand{\bent}[1]			{{\cnst{\hslash}}			  \!\left(       			  {#1}								  	 \right)}
\newcommand{\RD}[3]				{{\cnst{D}}_{{#1}}            \!\left(\left.            \! {#2}\right\Vert {#3}                  \right)}
\newcommand{\CRD}[4]			{{\cnst{D}}_{{#1}}            \!\left(\left.\!\left.    \! {#2}\right\Vert {#3} \right\vert{{#4}}\right)}
\newcommand{\RDF}[4]			{{\cnst{D}}_{{#1}}^{{#2}}     \!\left(\left.            \! {#3}\right\Vert {#4}                  \right)}
\newcommand{\RMI}[3]			{{\cnst{I}}_{{#1}}            \!\left(                  \! {#2};         \!{#3}                \!\right)} 
\newcommand{\GMI}[3]			{{\cnst{I}}_{{#1}}^{{\scriptscriptstyle g}}\!\left(                  \! {#2};         \!{#3}                \!\right)} 
\newcommand{\RMIL}[4]			{{\cnst{I}}_{{#1}}^{{#4}}	  \!\left(                  \! {#2};         \!{#3}                \!\right)} 
\newcommand{\GMIL}[4]			{{\cnst{I}}_{{#1}}^{{\scriptscriptstyle g}{#4}}	  \!\left(                  \! {#2};         \!{#3}                \!\right)} 
\newcommand{\GCL}[3]			{{\cnst{C}}_{{#1},{#2}}^{{\scriptscriptstyle g}{#3}}}
\newcommand{\RCL}[3]			{{\cnst{C}}_{{#1},{#2}}^{{#3}}}
\newcommand{\RRL}[3]			{{\cnst{S}}_{{#1},{#2}}^{{#3}}}
\newcommand{\RC}[2]				{{\cnst{C}}_{{#1},{#2}}}
\newcommand{\CGC}[3]			{{\cnst{C}}_{{#1},{#2},{#3}}^{{\scriptscriptstyle g}}}
\newcommand{\CRC}[3]			{{\cnst{C}}_{{#1},{#2},{#3}}}
\newcommand{\RR}[2]				{{\cnst{S}}_{{#1},{#2}}}
\newcommand{\Aopi}[4]			{{\oper{T}}_{{#1},{#2}}^{#3}\left({#4}\right)} 
\newcommand{\Aop}[3]			{{\Aopi{#1}{#2}{}{#3}}}
\newcommand{\costf}[0]			{{\cnst{\rho}}}
\newcommand{\costc}[0]			{{\cnst{\varrho}}}
\newcommand{\lgm}[0]			{{\cnst{\lambda}}}
\newcommand{\rfm}[0]			{{{\msr{\nu}}}} 
\newcommand{\uc}[0]				{{\mathds{1}}}
\newcommand{\fcc}[1]			{{\cnst{\Gamma}_{{#1}}}}
\newcommand{\fccc}[1]			{{\cnst{\Gamma}_{{#1}}^{{\scriptscriptstyle ex}}}}
\newcommand{\csetA}[0]			{{\alg{A}}}
\newcommand{\cset}[0]			{{\set{A}}}
\newcommand{\rate}[0]			{{\cnst{R}}}
\newcommand{\cln}[1]          {{{\xi}_{{#1}}}}
\newcommand{\rnb}[0]          {{\cnst{\beta}}}
\newcommand{\rnf}[0]          {{\cnst{\phi}}}
\newcommand{\rno}[0]          {{\cnst{\alpha}}}
\newcommand{\rnt}[0]          {{\cnst{\eta}}}
\newcommand{\brl}[0]           {{\alg{B}}}
\newcommand{\rborel}[1]        {{\brl}({#1})}
\newcommand{\gsal}[1]          {{\sigma({{#1}})}}
\newcommand{\oev}[0]           {{\set{E}}}
\newcommand{\opa}[0]           {{\alg{E}}}
\newcommand{\cha}[0]           {{\set{W}}}
\newcommand{\chu}[0]           {{\set{U}}}
\newcommand{\GausDen}[1]		{{{{\cnst{\varphi}}}_{{#1}}}}
\newcommand{\smea}[1]          {{{\alg{M}}({#1})}}
\newcommand{\fmea}[1]          {{{\alg{M}}^{^{+}}\!({#1})}}
\newcommand{\zmea}[1]          {{{\alg{M}}_{{0}}^{^{+}}\!({#1})}}
\newcommand{\pmea}[1]          {{{\alg{P}}({#1})}}
\newcommand{\fdis}[1]          {{{\set{M}}^{^{+}}\!({#1})}}
\newcommand{\pdis}[1]          {{{\set{P}}({#1})}}
\newcommand{\dinp}[0]          {{\cnst{x}}}
\newcommand{\inpS}[0]          {{\set{X}}}
\newcommand{\inpA}[0]          {{\alg{X}}}
\newcommand{\dout}[0]          {{\cnst{y}}}
\newcommand{\out}[0]           {{\rndv{Y}}}
\newcommand{\outS}[0]          {{\set{Y}}}
\newcommand{\outA}[0]          {{\alg{Y}}}
\newcommand{\dsta}[0]          {{\cnst{z}}}
\newcommand{\mean}[0]        {{{\msr{\mu}}}}    
\newcommand{\mmn}[1]         {{{\mean}_{{#1}}}}
\newcommand{\mma}[2]         {{{\mean}_{{#1}}^{{#2}}}}    
\newcommand{\qgn}[1]         {{{\mQ}_{{#1}}^{{\scriptscriptstyle g}}}}
\newcommand{\qga}[2]         {{{\mQ}_{{#1}}^{{\scriptscriptstyle g}{#2}}}}
\newcommand{\mA}[0]				{{\msr{a}}}    
\newcommand{\amn}[1]			{{{\mA}_{{#1}}}}
\newcommand{\ama}[2]			{{{\mA}_{{#1}}^{{#2}}}}
\newcommand{\mB}[0]				{{\msr{b}}}
\newcommand{\mP}[0]				{{\msr{p}}}    
\newcommand{\pmn}[1]			{{{\mP}_{{#1}}}}
\newcommand{\pma}[2]			{{{\mP}_{{#1}}^{{#2}}}}
\newcommand{\mQ}[0]				{{\msr{q}}}    
\newcommand{\qmn}[1]			{{{\mQ}_{{#1}}}}
\newcommand{\qma}[2]			{{{\mQ}_{{#1}}^{{#2}}}}
\newcommand{\Qm}[0]				{{{\cnst{Q}}}}
\newcommand{\mS}[0]				{{\msr{s}}}    
\newcommand{\smn}[1]			{{{\mS}_{{#1}}}}
\newcommand{\mU}[0]				{{\msr{u}}}    
\newcommand{\umn}[1]			{{{\mU}_{{#1}}}}
\newcommand{\uma}[2]			{{{\mU}_{{#1}}^{{#2}}}}
\newcommand{\Umn}[1]			{{{\cnst{U}}_{{#1}}}}
\newcommand{\mV}[0]				{{\msr{v}}}
\newcommand{\Vm}[0]				{{{\cnst{V}}}}
\newcommand{\Vma}[2]			{{{\cnst{V}}_{{#1}}^{{#2}}}}
\newcommand{\mW}[0]				{{\msr{w}}}    
\newcommand{\wmn}[1]			{{{\mW}_{{#1}}}}
\newcommand{\wma}[2]			{{{\mW}_{{#1}}^{{#2}}}}
\newcommand{\Wm}[0]				{{{\cnst{W}}}}
\newcommand{\Wmn}[1]			{{{\cnst{W}}_{{#1}}}}
\newcommand{\Wma}[2]			{{{\cnst{W}}_{{#1}}^{{#2}}}}
\newcommand{\csiszar}[0]							{Csisz\'{a}r~}
\newcommand{\harremoes}[0]							{Harremo\"{e}s~}
\newcommand{\korner}[0]								{K\"{o}rner~}
\newcommand{\renyi}[0]								{R\'{e}nyi~}
\DeclareRobustCommand{\bigplus}{%
	\mathop{\vphantom{\sum}\mathpalette\@bigplus\relax}\slimits@
}
\newcommand{\@bigplus}[2]{\vcenter{\hbox{\make@bigplus{#1}}}}
\newcommand{\make@bigplus}[1]{%
	\sbox\z@{$\m@th#1\sum$}%
	\setlength{\unitlength}{\wd\z@}%
	\begin{picture}(1.4,1.4)
	%\roundcap
	\linethickness{.17ex}
	\Line(.7,.14)(.7,1.26)
	\Line(.14,.7)(1.26,.7)
	\end{picture}%
}
\DeclareRobustCommand{\bigtimes}{%
	\mathop{\vphantom{\sum}\mathpalette\@bigtimes\relax}\slimits@
}
\newcommand{\@bigtimes}[2]{\vcenter{\hbox{\make@bigtimes{#1}}}}
\newcommand{\make@bigtimes}[1]{%
	\sbox\z@{$\m@th#1\sum$}%
	\setlength{\unitlength}{\wd\z@}%
	\begin{picture}(1,1)
	%\roundcap
	\linethickness{.17ex}
	\Line(.1,.1)(.9,.9)
	\Line(.1,.9)(.9,.1)
	\end{picture}%
}
\def\thesubsection{\thesection.\arabic{subsection}}
\def\thesubsectiondis{\thesectiondis\arabic{subsection}.} 
\begin{document}
\setcounter{tocdepth}{2}	
\pagestyle{plain}
\pagenumbering{arabic}
\hypersetup{hidelinks}
\maketitle 
\thispagestyle{empty}
\vspace{-.72cm}
%!TEX root=../main-C.tex
\begin{abstract}		
For any channel, the existence of a unique Augustin mean is established for any 
positive order and probability mass function on the input set.
The Augustin mean is shown to be the unique fixed point of an operator 
defined in terms of the order and the input distribution.
The Augustin information is shown to be continuously differentiable in the order.  
For any channel and convex constraint set with finite Augustin capacity, 
the existence of a unique Augustin center and the associated van Erven-Harremo\"{e}s 
bound are established.
The Augustin-Legendre (A-L) information, capacity, center, and radius are introduced and 
the latter three are proved to be equal to the corresponding  R\'{e}nyi-Gallager quantities.
The equality of the A-L capacity to the A-L radius for arbitrary channels
and the existence of a unique A-L center for channels with finite A-L capacity
are established.
For all interior points of the feasible set of cost constraints, 
the cost constrained Augustin capacity and center are expressed 
in terms of the A-L capacity and center. 
Certain shift invariant families of probabilities 
and certain Gaussian channels are analyzed as examples.
\end{abstract}
\vspace{-.55cm}
\tableofcontents
%!TEX root=../main-C.tex
\section{Introduction}\label{sec:introduction}
The mutual information, which is sometimes called the Shannon information, is a pivotal 
quantity in the analysis of various information transmission problems.
It is defined without referring to an optimization problem, but it satisfies the 
following two identities given in terms of the Kullback-Leibler divergence
\begin{align}
\label{eq:def:mutualinformation:R}
\RMI{}{\mP}{\Wm}
&=\inf\nolimits_{\mQ\in \pmea{\outA}} 
\RD{}{\mP \mtimes \Wm}{\mP\otimes\mQ}
\\
\label{eq:def:mutualinformation:A}
&=\inf\nolimits_{\mQ\in \pmea{\outA}}\sum\nolimits_{\dinp} \mP(\dinp) \RD{}{\Wm(\dinp)}{\mQ}
\end{align}
where 
\(\pmea{\outA}\) is the set of all probability measures on the output space
\((\outS,\outA)\),
\(\mP\) is a probability mass function that is positive only on a finite subset 
of the input set 
\(\inpS\),
and
\(\Wm\) is a function of the form \(\Wm:\inpS\to\pmea{\outA}\). 
Either of the expressions on the right hand side can be taken as 
the definition of the mutual information.
One can define the order \(\rno\) \renyi information via these expressions 
by replacing the Kullback-Leibler divergence with the order \(\rno\) \renyi 
divergence.
Since the order one \renyi divergence is the Kullback-Leibler divergence, the order one
\renyi information is equal to the mutual information for both definitions. 
For other orders, however,  these two definitions 
are not equivalent to the definition of the
mutual information or to one another, as pointed out by \csiszar \cite{csiszar95}. 
The generalization associated with the expression in \eqref{eq:def:mutualinformation:R} 
is called the order \(\rno\) \renyi information and denoted by \(\GMI{\rno}{\mP}{\Wm}\).
The generalization associated with the expression in \eqref{eq:def:mutualinformation:A} 
is called the order \(\rno\) Augustin information and denoted
by \(\RMI{\rno}{\mP}{\Wm}\).
Following the convention for the constrained Shannon capacity, the order 
\(\rno\) Augustin capacity for the constraint set \(\cset\) is defined as 
\(\sup_{\mP\in\cset} \RMI{\rno}{\mP}{\Wm}\).

For constant composition codes on the memoryless classical-quantum channels,
the Augustin information for orders less than one arises in 
the expression for the sphere packing exponent
and
the Augustin information for orders greater than one arises in 
the expression for the strong converse exponent,
as recently pointed out by
Dalai\cite{dalai17} and by Mosonyi and Ogawa \cite{mosonyiO18},
respectively.
For the constant composition codes on the discrete stationary product channels,
these observations were made implicitly 
by \csiszar and \korner in \cite[p. 172]{csiszarkorner}
and by \csiszar in \cite{csiszar95}.
For the cost constrained codes on (possibly non-stationary) product channels
with additive cost functions,
the cost constrained Augustin capacity plays an analogous role in the 
expressions for the sphere packing exponent and the strong converse exponent. 
The observations about the sphere packing exponent were also reported by
Augustin in \cite[Remark 36.7-(i) and \S36]{augustin78}
for quite general channel models.
Therefore Augustin's information measures do have operational significance, 
at the very least for the channel coding problem. 
Our main aim in the current manuscript, however, is to analyze 
the Augustin information and capacity as measure theoretic concepts.  
Throughout the manuscript, we will refrain from referring to the channel coding 
problem or the operational significance of Augustin's information measures because 
we believe the Augustin 
information and capacity can and should be understood as measure theoretic concepts 
first.
The operational significance of the Augustin information and capacity can be 
established afterward using information theoretic techniques together with the results 
of the measure theoretic analysis, as we do in \cite{nakiboglu18D}.

\begin{comment}
The Augustin information emerges naturally in the expression of the sphere 
packing exponent for the constant composition codes on the
discrete\footnote{This observation holds for all stationary channels, not just
the discrete ones, as pointed out by Augustin in \cite[Remark 36.7-(i)]{augustin78}.} 
stationary product channels, as recently pointed out 
by\footnote{In \cite{dalai17}, relying on \cite{dalaiW17}, 
	Dalai made this observation for the constant composition codes on 
	the classical quantum channels, as well.}
Dalai\cite{dalai17}.
This observation was made implicit by \csiszar and \korner in 
\cite[p. 172]{csiszarkorner}, as well.
The cost constrained Augustin capacity plays an analogous role in the 
expression of the sphere packing exponent for the cost constrained codes 
on (possibly non-stationary) product channels, as pointed out by Augustin 
in \cite[\S36]{augustin78}, provided that the cost function is additive. 
We have extended this observation to the case of unbounded cost functions 
with better prefactors, i.e. approximation error terms, in \cite{nakiboglu18D}.
Both Augustin information and capacity are important in characterizing the optimal 
performance for various channel coding problems as we discuss in more detail in 
\cite{nakiboglu18D}.

Our main aim in the current manuscript is to analyze 
the Augustin information and capacity, purely as measure theoretic concepts.  
Throughout the manuscript, we will refrain from referring to the channel coding 
problem or any operational significance because we believe the Augustin 
information and capacity can and should be understood as measure theoretic concepts 
first.
The operational significance of the Augustin information and capacity can be 
established afterwards using information theoretic techniques together with the results 
of the measure theoretic analysis, as we do in \cite{nakiboglu18D}.
%\end{comment}

All of the previous works on the Augustin information or capacity,
except Augustin's \cite{augustin78}, assume the output set \(\outS\)
of the channel \(\Wm\) to be a finite set 
\cite{csiszar95,dalai17,ervenH14,shayevitz10,shayevitz11,verdu15}. 
This, however, is a major drawback because the finite output set 
assumption is violated by certain analytically interesting models 
that are also important because of their prominence in engineering
applications, such as the Gaussian and Poisson 
channel models.
We pursue our analysis on a more general model and
assume\footnote{We have additional hypotheses in \S\ref{sec:cost-TP},
	but those assumptions are satisfied by essentially all models
	of interest, as well.} 
the output space \((\outS,\outA)\) to be a measurable space composed of 
an output set \(\outS\) and a \(\sigma\)-algebra of its subsets \(\outA\). 
Our analysis of the Augustin information and capacity in this
general framework is built around two fundamental concepts:
the Augustin mean and the Augustin center.

Recall that the mutual information is defined as 
\(\RMI{}{\mP}{\Wm}\DEF\sum\nolimits_{\dinp} \mP(\dinp) \RD{}{\Wm(\dinp)}{\qmn{1,\mP}}\)
where \(\qmn{1,\mP}=\sum_{\dinp}\mP(\dinp)\Wm(\dinp)\).
Hence the infimum in \eqref{eq:def:mutualinformation:A} is achieved by \(\qmn{1,\mP}\).
Furthermore, one can confirm by substitution that
\begin{align}
\notag
\sum\nolimits_{\dinp} \mP(\dinp) \RD{}{\Wm(\dinp)}{\mQ}
&=\RMI{}{\mP}{\Wm}+\RD{}{\qmn{1,\mP}}{\mQ}
&
&\forall \mQ\in\pmea{\outA}.
\end{align}
Thus \(\qmn{1,\mP}\) is the only probability measure achieving the infimum 
in \eqref{eq:def:mutualinformation:A} because  the Kullback-Leibler divergence 
is positive for distinct probability measures.
A similar relation holds for other orders, as well:
for any \(\rno\) in \(\reals{+}\) there exists a unique probability measure 
\(\qmn{\rno,\mP}\) satisfying 
\(\RMI{\rno}{\mP}{\Wm}=\sum\nolimits_{\dinp} \mP(\dinp) \RD{\rno}{\Wm(\dinp)}{\qmn{\rno,\mP}}\).
We call the probability measure \(\qmn{\rno,\mP}\), \emph{the order \(\rno\) Augustin mean}.
In \cite[Lemma 34.2]{augustin78}, Augustin established the existence of a unique 
\(\qmn{\rno,\mP}\) for \(\rno\)'s in \((0,1]\) and derived certain important characteristics
of \(\qmn{\rno,\mP}\) that are the corner stones of the analysis of the Augustin information 
and capacity. 
We establish analogous relations for orders greater than one in \S\ref{sec:information},
see Lemma \ref{lem:information}-(\ref{information:oti}).

In \cite{kemperman74},
Kemperman proved the equality of the (unconstrained) Shannon capacity to the Shannon 
radius\footnote{Shannon radius is defined as 
	\(\inf_{\mQ\in\pmea{\outA}}\sup_{\dinp\in\inpS} \RD{}{\Wm(\dinp)}{\mQ}\).}
for any channel of the form \(\Wm:\inpS\to\pmea{\outA}\)
and the existence a unique Shannon center for channels with finite Shannon capacity. 
Using ideas that are already present in Kemperman's proof, one can establish a similar result
for the constrained Shannon capacity provided that the constrained set is convex, 
see \cite[Thm. \ref*{A-thm:Cminimax}]{nakiboglu19A}:
For any channel \(\Wm\) of the form \(\Wm:\inpS\to\pmea{\outA}\) and convex constraint set \(\cset\),
\begin{align}
\label{eq:shannon-minimax}
\sup\nolimits_{\mP\in\cset} \RMI{}{\mP}{\Wm}
&=\inf\nolimits_{\mQ\in\pmea{\outA}} \sup\nolimits_{\mP\in\cset} 
\sum\nolimits_{\dinp} \mP(\dinp) \RD{}{\Wm(\dinp)}{\mQ}.
\end{align}
Considering \eqref{eq:def:mutualinformation:A}, one can interpret \eqref{eq:shannon-minimax} 
as a minimax theorem. 
Furthermore, if the Shannon capacity for the constraint set \(\cset\) 
is finite, then there exists a unique probability measure \(\qmn{1,\Wm,\cset}\), 
called the Shannon center for the constraint set \(\cset\), such that
\begin{align}
%\label{eq:shannon-center}
\notag
\sup\nolimits_{\mP\in\cset} \RMI{}{\mP}{\Wm}
&=\sup\nolimits_{\mP\in\cset} 
\sum\nolimits_{\dinp} \mP(\dinp) \RD{}{\Wm(\dinp)}{\qmn{1,\Wm,\cset}}.
\end{align}
The name center is reminiscent of the name of the corresponding quantity in the unconstrained 
case, which is discussed in \cite{kemperman74}.
Augustin proved an analogous result for \(\RMI{\rno}{\mP}{\Wm}\) assuming \(\rno\) 
to be an order in \((0,1]\) and \(\cset\) to be a constraint set determined by cost 
constraints, see \cite[Lemma 34.7]{augustin78}.
We prove an analogous proposition for \(\RMI{\rno}{\mP}{\Wm}\) for any \(\rno\) in 
\(\reals{+}\) and convex constraint set \(\cset\)  in \S\ref{sec:capacity}, 
see Theorem \ref{thm:minimax}.
We call the corresponding probability measure \(\qmn{\rno,\Wm,\cset}\) \emph{the order 
\(\rno\) Augustin center for the constraint set \(\cset\)}.

Constraint sets determined by cost constraints are frequently encountered 
while employing the Augustin capacity to analyze channel coding problems. 
One can apply the convex conjugation techniques to provide an alternative 
characterization of the cost constrained Augustin capacity and center.
Augustin did so in \cite[\S35]{augustin78},
relying on a quantity that was previously 
employed 
in discrete channels by Gallager \cite[pp. 13-15]{gallager65}, \cite[\S7.3]{gallager} 
and
in various Gaussian channel 
models\footnote{Augustin assumed neither a specific noise model nor 
	the finiteness of the output set. Nevertheless, Gaussian channels are not subsumed 
	by Augustin's model in \cite[\S35]{augustin78} 
	because Augustin assumed a bounded cost function.} 
by Gallager \cite[pp. 15,16]{gallager65}, 
\cite[\S\S7.4,7.5]{gallager}, Ebert \cite{ebert66}, and Richters \cite{richters67}.
We call this quantity the \renyi\!\!-Gallager information and analyze it in \S\ref{sec:cost-RG}.
Compared to the application of convex conjugation techniques
to the cost constrained Shannon capacity provided by
\csiszar and \korner in \cite[Ch. 8]{csiszarkorner}, 
Augustin's analysis in \cite[\S 35]{augustin78} relying on the \renyi\!\!-Gallager 
information is rather convoluted. 
In \S\ref{sec:cost-AL}, we adhere to a more standard approach
and provide an analysis, 
which can be seen as a generalization of \cite[Ch. 8]{csiszarkorner}, 
relying on a new quantity, which we call the Augustin-Legendre information. 
We show the equivalence of these two approaches using minimax theorems
similar to the one described above for the constrained Augustin capacity.

Some of the most important observations we present in this paper have 
already been derived previously in \cite[\S\S33-35]{augustin78}, 
\cite{shayevitz11}, \cite{haroutunian68}, \cite{poltyrev82}.
In order to delineate our main contributions in the context of these works, 
we provide a tally in \S\ref{sec:introduction-contributions}.
Before doing that, we describe our notational conventions 
in \S\ref{sec:introduction-notation} and our model 
in \S\ref{sec:introduction-model}.

\subsection{Notational Conventions}\label{sec:introduction-notation}
 
The inner product of any two vectors \(\mean\) and \(\mQ\) in \(\reals{}^{\ell}\), 
i.e. \(\sum_{\ind=1}^{\ell} \mean^{\ind} \mQ^{\ind}\), is denoted by \(\mean \cdot \mQ\).
The \(\ell\) dimensional vector whose all entries are one
is denoted by \(\uc\) for any \(\ell\in\integers{+}\), 
the dimension \(\ell\) will be clear from the context.
We denote the closure, interior, and convex hull of a set \(\set{S}\) by 
\(\clos{\set{S}}\), \(\inte{\set{S}}\), and \(\conv{\set{S}}\), respectively;
the relevant topology or vector space structure will be evident from the context.

For any set \(\outS\), we denote 
the set of all subsets of \(\outS\) ---i.e. the power set of \(\outS\)--- 
by \(\sss{\outS}\),
the set of all probability measures on finite subsets of \(\outS\) 
by \(\pdis{\outS}\),
and 
the set of all non-zero finite measures with the same property by \(\fdis{\outS}\).
For any \(\mP\) in \(\fdis{\outS}\),  we call the set of all \(\dout\)'s 
satisfying \(\mP(\dout)>0\) the support of \(\mP\) and denote it by
\(\supp{p}\).

On a measurable space \((\outS,\outA)\), we denote 
the set of all finite signed measures by \(\smea{\outA}\),
the set of all finite measures by \(\zmea{\outA}\),
the set of all non-zero finite measures by \(\fmea{\outA}\),
and 
the set of all probability measures by  \(\pmea{\outA}\). 
Let \(\mean\) and \(\mQ\) be two measures on the measurable space \((\outS,\outA)\).
Then \(\mean\) is  absolutely continuous with respect to \(\mQ\), i.e. \(\mean\AC \mQ\), 
iff \(\mean(\oev)=0\) for any \(\oev \in \outA\) such that \(\mQ(\oev)=0\);
\(\mean\) and \(\mQ\) are equivalent, i.e. \(\mean\sim\mQ\),
iff \(\mean\AC\mQ\)  and \(\mQ\AC\mean\);
\(\mean\) and \(\mQ\) are singular, i.e. \(\mean\perp\mQ\),
iff \(\exists\oev \in \outA\) such that  \(\mean(\oev)=\mQ(\outS\setminus\oev)=0\).
Furthermore, a set of measures \(\cha\) on \((\outS,\outA)\) is absolutely continuous 
with respect to \(\mQ\), i.e. \(\cha\AC\mQ\), iff \(\mW\AC\mQ\) for all \(\mW\in \cha\)
and uniformly absolutely continuous with respect to \(\mQ\), i.e. \(\cha\UAC\mQ\), 
iff for every \(\epsilon>0\) there exists a \(\delta>0\) such that \(\mW(\oev)<\epsilon\)
for all \(\mW\in\cha\) provided that \(\mQ(\oev)<\delta\).

We denote the integral of a measurable function \(\fX\) with respect to the measure \(\mean\) 
by \(\int \fX \mean(\dif{\dout})\) or \(\int \fX(\dout) \mean(\dif{\dout})\).
If the integral is on the real line and if it is with respect to the Lebesgue measure, we 
denote it by \(\int\fX \dif{\dout}\) or \(\int \fX(\dout) \dif{\dout}\), as well.
If \(\mean\) is a probability measure, then we also call the integral of \(\fX\) with respect \(\mean\)
the expectation of \(\fX\) or the expected value of \(\fX\) and denote it by
\(\EXS{\mean}{\fX}\) or \(\EXS{\mean}{\fX(\out)}\).

Our notation will be overloaded for certain symbols; however, the relations represented 
by these symbols will be clear from the context.
We use \(\bent{\cdot}\) to denote both the Shannon entropy and the binary entropy:
\(\bent{\mP}\DEF\sum_{\dout} \mP(\dout)\ln\frac{1}{\mP(\dout)}\) for all \(\mP\in\pdis{\outS}\)
and \(\bent{\dsta}\DEF\dsta \ln \frac{1}{\dsta}+(1-\dsta)\ln\frac{1}{1-\dsta}\) for all \(\dsta\in[0,1]\).
We denote the product of topologies \cite[p. 38]{dudley}, 
\(\sigma\)-algebras \cite[p. 118]{dudley}, and measures \cite[Thm. 4.4.4]{dudley} by \(\otimes\).
We denote the Cartesian product  of sets \cite[p. 38]{dudley} by \(\times\).
We use the short hand 
\(\inpS_{1}^{\blx}\) for the Cartesian product of sets \(\inpS_{1},\ldots,\inpS_{\blx}\)
and 
\(\outA_{1}^{\blx}\) for the product of the \(\sigma\)-algebras  \(\outA_{1},\ldots,\outA_{\blx}\). 
We use \(\abs{\cdot}\) to denote the absolute value of real numbers and the size of sets. 
The sign \(\leq\) stands for the usual less than or equal to relation for real numbers
and the corresponding point-wise inequity for functions and vectors. 
For two measures \(\mean\) and \(\mQ\) on the measurable space \((\outS,\outA)\), 
\(\mean \leq \mQ\) iff \(\mean(\oev)\leq \mQ(\oev)\) for all \(\oev\in\outA\).

For \(\mA,\mB\in\reals{}\), \(\mA\wedge\mB\) is the minimum of \(\mA\) and \(\mB\).
For \(\fX:\outS\to\reals{}\) and \(\gX:\outS\to\reals{}\), the function \(\fX\wedge\gX\) is 
the pointwise minimum of \(\fX\) and \(\gX\).
For \(\mean,\mQ\in\smea{\outA}\), \(\mean\wedge\mQ\) is the unique measure 
satisfying \(\der{\mean\wedge\mQ}{\rfm}=\der{\mean}{\rfm}\wedge\der{\mQ}{\rfm}\)
\(\rfm\)-a.e. for any \(\rfm\) satisfying \(\mean\AC\rfm\) and \(\mQ\AC\rfm\).
For a  collection \(\fXS\) of real valued functions \(\wedge_{\fX\in\fXS}\fX\)
is the pointwise infimum of \(\fX\)'s in \(\fXS\), which is an extended real valued function. 
For a collection of measures \(\chu\subset\smea{\outA}\) satisfying \(\mW\leq\mU\)
for all \(\mU\in\chu\) for some \(\mW\in\pmea{\outA}\), 
\(\wedge_{\mU\in\chu}\mU\) is the infimum of
\(\chu\) with respect to the partial order \(\leq\).
There exists a unique infimum measure by \cite[Thm. 4.7.5]{bogachev}. 
We use the symbol \(\vee\) analogously to \(\wedge\) but we represent maxima and 
suprema with it, rather than minima and infima.

\subsection{Channel Model}\label{sec:introduction-model}
A \emph{channel} \(\Wm\) is a function from \emph{the input set} \(\inpS\) to  the set of all probability 
measures on \emph{the output space} \((\outS,\outA)\):
\begin{align}
\label{eq:def:channel}
\Wm:\inpS \to \pmea{\outA}.
\end{align}
\(\outS\) is called \emph{the output set} and \(\outA\) is called \emph{the \(\sigma\)-algebra  of the 
output events}. 
We denote the set of all channels from the input set \(\inpS\) to the output space \((\outS,\outA)\)
by \(\pmea{\outA|\inpS}\).
For any \(\mP\in\pdis{\inpS}\)  and \(\Wm\in\pmea{\outA|\inpS}\), 
the probability
measure whose marginal on \(\inpS\) is \(\mP\) and whose conditional distribution given \(\dinp\) is
\(\Wm(\dinp)\)
is denoted by \(\mP\mtimes\Wm\). 
Until \S\ref{sec:cost-TP}, we confine our discussion to the input distributions in \(\pdis{\inpS}\)
and avoid the subtleties related to measurability.
The more general case of input distributions in \(\pmea{\inpA}\) 
is considered\footnote{The structure described in \eqref{eq:def:channel} is not sufficient 
	on its own to ensure the existence of a unique \(\mP\mtimes\Wm\) with the desired properties for all  \(\mP\) in \(\pmea{\inpA}\).
	The existence of such a unique \(\mP\mtimes\Wm\) is guaranteed for all \(\mP\) in \(\pmea{\inpA}\),
	if  \(\Wm\) is a transition probability from \((\inpS,\inpA)\) to \((\outS,\outA)\), i.e. a
	member of \(\pmea{\outA|\inpA}\) rather than \(\pmea{\outA|\inpS}\).}  
in \S\ref{sec:cost-TP}.

A channel \(\Wm\) is called a \emph{discrete channel} if both \(\inpS\) and \(\outA\) are finite sets.
For any \(\blx\in\integers{+}\) and 
channels \(\Wmn{\tin}\!:\!\inpS_{\tin}\!\to\!\pmea{\outA_{\tin}}\) for \(\tin\in\{1,\ldots,\blx\}\),
the \emph{length \(\blx\) product channel}
\(\Wmn{[1,\blx]}\!:\!\inpS_{1}^{\blx}\!\to\!\pmea{\outA_{1}^{\blx}}\) is defined via the following relation:
\begin{align}
%\label{eq:def:product}
\notag
\Wmn{[1,\blx]}(\dinp_{1}^{\blx})
&=\bigotimes\nolimits_{\tin=1}^{\blx}\Wmn{\tin}(\dinp_{\tin})
&
&\forall \dinp_{1}^{\blx}\in\inpS_{1}^{\blx}.
\end{align}
A product channel is \emph{stationary} iff \(\Wmn{\tin}\!=\!\Wm\) 
for all \(\tin\!\in\!\{1,\ldots,\blx\}\) for some \(\Wm\!:\!\inpS\!\to\!\pmea{\outA}\).

For any \(\ell\in\integers{+}\), an \(\ell\) dimensional \emph{cost function} \(\costf\) 
is a function from the input set to \(\reals{}^{\ell}\) 
that is bounded from below, i.e. that is of the form
\(\costf:\inpS \to \reals{\geq \dsta}^{\ell}\)
for some \(\dsta\in \reals{}\). 
%%%%%%%%%For any \(\ell\in\integers{+}\), an \(\ell\) dimensional \emph{cost function} \(\costf\) 
%%%%%%%%%is a function from the input set to \(\reals{}^{\ell}\) 
%%%%%%%%%that is bounded from below:
%%%%%%%%%\begin{align}
%%%%%%%%%%\label{eq:def:costfunction}
%%%%%%%%%\notag
%%%%%%%%%&\costf:\inpS \to \reals{\geq \dsta}^{\ell}
%%%%%%%%%&
%%%%%%%%%&\dsta\in \reals{}.
%%%%%%%%%\end{align}
We assume without loss of generality that\footnote{Augustin \cite[\S33]{augustin78} has an additional 
	hypothesis, \(\bigvee_{\dinp\in \inpS} \costf(\dinp)\leq\uc\).
	This hypothesis, however, excludes certain important cases, 
	such as the Gaussian channels.}
\begin{align}
\notag
\inf\nolimits_{\dinp\in \inpS} \costf^{\ind}(\dinp) 
&\geq 0
&
&\forall \ind \in \{1,\ldots,\ell\}.
\end{align}
We denote the set of all cost constraints that can be satisfied by some member of \(\inpS\) by \(\fccc{\costf}\) 
and the set of all cost constraints that can be satisfied by some member of \(\pdis{\inpS}\) by \(\fcc{\costf}\):
\begin{align}
\label{eq:def:feasibleX}
\fccc{\costf}
&\DEF\{\costc\in\reals{\geq0}^{\ell}:\exists \dinp\in\inpS \mbox{~s.t.~}\costf(\dinp) \leq \costc\} 
\\
\label{eq:def:feasible}
\fcc{\costf}
&\DEF\{\costc\in\reals{\geq0}^{\ell}:\exists\mP\in\pdis{\inpS} \mbox{~s.t.~}\sum\nolimits_{\dinp} \mP(\dinp) \costf(\dinp) \leq \costc\}.
\end{align}
Then both \(\fccc{\costf}\) and \(\fcc{\costf}\) have non-empty interiors
and \(\fcc{\costf}\) is the convex hull of \(\fccc{\costf}\), i.e. \(\fcc{\costf}=\conv{\fccc{\costf}}\).

A cost function on a product channel is said to be additive iff 
it can be written as the sum of cost functions defined on the component channels.
Given \(\Wmn{\tin}\!:\!\inpS_{\tin}\!\to\!\pmea{\outA_{\tin}}\) and 
\(\costf_{\tin}\!:\!\inpS_{\tin}\!\to\!\reals{\geq0}^{\ell}\) for \(\tin\!\in\!\{1,\ldots,\blx\}\),
we denote the resulting additive cost function on \(\inpS_{1}^{\blx}\) for the channel
\(\Wmn{[1,\blx]}\)  by \(\costf_{[1,\blx]}\), i.e. 
{\color{white}\cite{fano,arimoto76,oohama17A,oohama17B,vazquezMF15,renyi61,csiszar67A,gilardoni10B,shiryaev,polyanskiyV10,kolmogorovfomin75,csiszar72,sibson69,blahut74,kostinaV15,nakiboglu19B,krantzparks,munkres,rudin,bertsekas,komiya88,sion58}}
\begin{align}
%\label{eq:additivecost}
\notag
\costf_{[1,\blx]}(\dinp_{1}^{\blx})
&=\sum\nolimits_{\tin=1}^{\blx} \costf_{\tin}(\dinp_{\tin})  
&
&\forall \dinp_{1}^{\blx}\in\inpS_{1}^{\blx}. 
\end{align}

\subsection{Previous Work and Main Contributions}\label{sec:introduction-contributions}
The following is a list of our contributions
that are important for a thorough understanding of the Augustin information 
measures and related results that have been reported before.
\begin{enumerate}[{\bf I.}]
		\item\label{contribution:mean-zto}
	For all \(\rno\) in \((0,1)\), \cite[Lemma 34.2]{augustin78} of Augustin asserts 
	the existence of a unique probability measure \(\qmn{\rno,\mP}\) satisfying  \(\RMI{\rno}{\mP}{\Wm}=\CRD{\rno}{\Wm}{\qmn{\rno,\mP}}{\mP}\)
	and characterizes \(\qmn{\rno,\mP}\) in terms of the 
	operator\footnote{The operator \(\Aop{\rno}{\mP}{\cdot}\), defined in \eqref{eq:def:Aoperator},
		is determined uniquely by \(\rno\) and \(\mP\) and well-defined for all \(\mQ\) with finite \(\CRD{\rno}{\Wm}{\mQ}{\mP}\).} 
	\(\Aop{\rno}{\mP}{\cdot}\) as follows:
		\begin{itemize}
		\item \(\Aop{\rno}{\mP}{\qmn{\rno,\mP}}=\qmn{\rno,\mP}\) and \(\qmn{\rno,\mP}\sim \qmn{1,\mP}\).
		\item  If \(\qmn{1,\mP}\AC \mQ\) and \(\Aop{\rno}{\mP}{\mQ}=\mQ\), then \(\qmn{\rno,\mP}=\mQ\).
		\item \(\lim\nolimits_{\jnd\to\infty}\lon{\qmn{\rno,\mP}\!-\Aopi{\rno}{\mP}{\jnd}{\qmn{1,\mP}}}=0\).
		\item \(\CRD{\rno}{\Wm}{\mQ}{\mP}\geq\RMI{\rno}{\mP}{\Wm}+\RD{\rno}{\qmn{\rno,\mP}}{\mQ}\)
		for\footnote{To be precise \cite[Lemma 34.2]{augustin78} asserts the inequality 
			\(\CRD{\rno}{\Wm}{\mQ}{\mP}\geq\RMI{\rno}{\mP}{\Wm}+\tfrac{\rno}{2}\lon{\qmn{\rno,\mP}-\mQ}^{2}\)
			rather than the one given above.
			But Augustin proves the inequality given above first and then uses 
			Pinsker's inequality to establish the one given in \cite[Lemma 34.2]{augustin78}.} 
		all \(\mQ\in\pmea{\outA}\).
	\end{itemize}
	We can not verify the correctness of the proof of \cite[Lemma 34.2]{augustin78};
	we discuss our reservations in 
    Appendix \ref{sec:augustinsproof}. 
	Lemma \ref{lem:information}-(\ref{information:zto}) is 
	proved\footnote{One can prove Lemma \ref{lem:information}-(\ref{information:zto}) using 
		the ideas employed in the proof of Lemma \ref{lem:information}-(\ref{information:oti}), 
		as well.}
	relying on the ideas employed in Augustin's proof of \cite[Lemma 34.2]{augustin78}.
	Lemma \ref{lem:information}-(\ref{information:zto}) implies all assertions of 
	\cite[Lemma 34.2]{augustin78}  except for \(\lim\nolimits_{\jnd\to\infty}\lon{\qmn{\rno,\mP}\!-\Aopi{\rno}{\mP}{\jnd}{\qmn{1,\mP}}}=0\);
	Lemma \ref{lem:information}-(\ref{information:zto}) establishes 
	\(\lim\nolimits_{\jnd\to\infty}\lon{\qmn{\rno,\mP}\!-\Aopi{\rno}{\mP}{\jnd}{\qgn{\rno,\mP}}}=0\)
	instead ---see \eqref{eq:lem:information-zto:iteration-general-new}
	and 
	Remark \ref{remark:information-zto:iteration-general}.
	Unlike \cite[Lemma 34.2]{augustin78}, Lemma \ref{lem:information}-(\ref{information:zto}) also 
	bounds \(\CRD{\rno}{\Wm}{\mQ}{\mP}\) from above. This bound is new to the best of our knowledge.
	The following inequality summarizes the upper and lower bounds on \(\CRD{\rno}{\Wm}{\mQ}{\mP}\)
	established in  Lemma \ref{lem:information}-(\ref{information:zto},\ref{information:oti}):
	\begin{align}
	\label{eq:augustinslaw}
	\RD{1\vee \rno}{\qmn{\rno,\mP}}{\mQ}
	\geq
	\CRD{\rno}{\Wm}{\mQ}{\mP}-\RMI{\rno}{\mP}{\Wm}
	&\geq \RD{1 \wedge \rno}{\qmn{\rno,\mP}}{\mQ}
	&
	&\forall \mQ\in\pmea{\outA}.
	\end{align}
	For finite \(\outS\) case, the existence of a \(\mQ\) in \(\pdis{\outS}\) satisfying both 
	\(\mQ\sim\qmn{1,\mP}\) and \(\Aop{\rno}{\mP}{\mQ}=\mQ\) has been discussed by other 
	authors. 
	We make a brief digression to point out the discussion of the aforementioned existence 
	result in these works. 
	\begin{itemize}
\item While deriving the sphere packing bound for the constant composition codes 
on discrete stationary product channels, Fano implicitly asserts
the existence of a fixed point that is equivalent to \(\qmn{1,\mP}\)
for each \(\rno\) in \((0,1)\), 
see  \cite[\S 9.2, (9.24) \!\&\! p.\! 292]{fano}.
Fano, however,  does not explain why such a fixed point must exist
and does not elaborate on its uniqueness or on its relation 
to \(\qmn{\rno,\mP}\) in \cite[\S 9.2]{fano}.
		
\item While establishing the equivalence of his expression for the sphere packing 
exponent in finite \(\outS\) case to the one provided by Fano in \cite{fano}, 
Haroutunian proved the existence of a fixed point that is equivalent to \(\qmn{1,\mP}\)
for each \(\rno\) in \((0,1)\), see  \cite[(16)-(19)]{haroutunian68}.

\item While discussing the random coding bounds for discrete stationary product channels, 
Poltyrev makes an observation that is equivalent to asserting 
for each \(\rno\) in \([\sfrac{1}{2},1)\)
the existence of a fixed point that is equivalent to \(\qmn{1,\mP}\),
see	\cite[(3.15), (3.16) and Thm. 3.2]{poltyrev82}.
		Poltyrev, however, does not formulate his observations as a fixed point property.
	\end{itemize}
	In our understanding, the main conceptual contribution of \cite[Lemma 34.2]{augustin78} 
	is the characterization of the Augustin mean as a fixed point of \(\Aop{\rno}{\mP}{\cdot}\) 
	that is equivalent to \(\qmn{1,\mP}\). 
	Bounds such as the one given in \eqref{eq:augustinslaw} follow from this observation 
	via Jensen's inequality. 
	
	\item\label{contribution:mean-oti}  
	For \(\rno\in(1,\infty)\), Lemma \ref{lem:information}-(\ref{information:oti})
	establishes the existence of a unique Augustin mean \(\qmn{\rno,\mP}\)
	and proves that it satisfies \eqref{eq:augustinslaw} as well as 
	the following two assertions:
	\begin{itemize}
		\item \(\Aop{\rno}{\mP}{\qmn{\rno,\mP}}=\qmn{\rno,\mP}\) and \(\qmn{\rno,\mP}\sim \qmn{1,\mP}\).
		\item  If \(\Aop{\rno}{\mP}{\mQ}=\mQ\), then \(\qmn{\rno,\mP}=\mQ\).
	\end{itemize}
Lemma \ref{lem:information}-(\ref{information:oti}) is new to the best of our knowledge.
For \(\rno\in(1,\infty)\) case, 
neither the characterization of \(\qmn{\rno,\mP}\) in terms of \(\Aop{\rno}{\mP}{\cdot}\), 
nor the inequalities given in \eqref{eq:augustinslaw} have been reported before, even for finite \(\outS\) case.

	\item\label{contribution:informationO:differentiability}
	\(\RMI{\rno}{\mP}{\Wm}\) is a continuously differentiable function of \(\rno\) from 
	\(\reals{+}\) to \([0,\bent{\mP}]\) by 
	Lemma \ref{lem:informationO}-(\ref{informationO:differentiability}).
	
	\item\label{contribution:capacity:minimax} 
	The following minimax identity is established in Theorem \ref{thm:minimax} 
	for any convex constraint set \(\cset\)
	\begin{align}
	\notag
	\sup\nolimits_{\mP \in \cset} \inf\nolimits_{\mQ \in \pmea{\outA}} 
	\CRD{\rno}{\Wm}{\mQ}{\mP}
	&=
	\inf\nolimits_{\mQ \in \pmea{\outA}} \sup\nolimits_{\mP \in \cset} 
	\CRD{\rno}{\Wm}{\mQ}{\mP}.
	\end{align}
	Theorem \ref{thm:minimax} establishes the existence of a unique Augustin center, \(\qmn{\rno,\Wm,\cset}\), 
	for any convex \(\cset\) with finite Augustin capacity
	and the convergence of \(\{\qmn{\rno,\pma{}{(\ind)}}\}_{\ind\in\integers{+}}\) 
	to \(\qmn{\rno,\Wm,\cset}\) in total variation topology 
	for any \(\{\pma{}{(\ind)}\}_{\ind\in\integers{+}}\subset\cset\) satisfying
	\(\lim_{\ind\to\infty}\RMI{\rno}{\pma{}{(\ind)}}{\Wm}=\CRC{\rno}{\Wm}{\cset}\). 
	Augustin proved this result only for \(\rno\)'s  in \((0,1]\) and the constraint sets determined by 
	cost constraints, see \cite[Lemma 34.7]{augustin78}.
	For \(\cset=\pdis{\inpS}\) case similar results were proved 
	by \csiszar \cite[Proposition 1]{csiszar95} assuming both \(\inpS\) and \(\outS\) are 
	finite sets
	and
	by van Erven and \harremoes \cite[Thm. 34]{ervenH14} assuming \(\outS\) is a finite set. 
	\item\label{contribution:capacity:EHB} 
	The following bound in terms of the Augustin capacity and center
	established in Lemma \ref{lem:EHB} is new to the best of our knowledge
	\begin{align}
	\notag
	\sup\nolimits_{\mP \in \cset} \CRD{\rno}{\Wm}{\mQ}{\mP}
	&\geq  \CRC{\rno}{\!\Wm\!}{\cset}+\RD{\rno\wedge 1}{\qmn{\rno,\!\Wm\!,\cset}}{\mQ}
	&
	&\forall \mQ \in \pmea{\outA}.
	\end{align} 	
	A similar  bound has been conjectured by van Erven and \harremoes in \cite{ervenH14}.
	For the \renyi capacity and center, 
	we have proved that conjecture and extended it to the constrained 
	case elsewhere, see \cite[Lemmas \ref*{A-lem:EHB} \& \ref*{A-lem:CEHB}]{nakiboglu19A}.
			
	\item\label{contribution:cost-AL} 
	The Augustin-Legendre information \(\RMIL{\rno}{\mP}{\Wm}{\lgm}\),
	defined as \(\RMI{\rno}{\mP}{\Wm}-\lgm\cdot\EXS{\mP}{\costf}\),
	as well as the resulting capacity, center, and radius are new 
	concepts that have not been studied before, except for \(\rno=1\) case.
	Thus, formally speaking, all of the propositions in \S\ref{sec:cost-AL}
	are new.
	The analysis presented in \S\ref{sec:cost-AL} is a standard application of 
	the convex conjugation techniques to characterize the cost constrained Augustin 
	capacity and center.  
	A similar analysis for \(\rno=1\) case is provided by \csiszar and \korner in 
	\cite[Ch. 8]{csiszarkorner} for discrete channels with a single cost constraint.
	The most important conclusions of the analysis presented in \S\ref{sec:cost-AL}
	are the followings:
	\begin{itemize}
		\item \(\RCL{\rno}{\Wm}{\lgm}\), defined as \(\sup_{\mP\in\pdis{\inpS}}\RMIL{\rno}{\mP}{\Wm}{\lgm}\),
		satisfies \(\RCL{\rno}{\Wm}{\lgm}=\sup\nolimits_{\costc\geq0} \CRC{\rno}{\Wm}{\costc}-\lgm\cdot\costc\)
		for all \(\lgm\in\reals{\geq0}^{\ell}\) by \eqref{eq:Lcapacity-astheconjugate}.

		\item \(\CRC{\rno}{\!\Wm\!}{\costc}=\inf\nolimits_{\lgm\geq0} \RCL{\rno}{\Wm}{\lgm}+\lgm\cdot\costc\)
		for all \(\costc\in\inte{\fcc{\costf}}\) and 
		the set of \(\lgm\)'s achieving this infimum form a non-empty 
		convex compact set whenever \(\CRC{\rno}{\!\Wm\!}{\costc}\) is finite by Lemma \ref{lem:Lcapacity}.
		
		\item \(\RCL{\rno}{\Wm}{\lgm}=\RRL{\rno}{\Wm}{\lgm}\) 
		where \(\RRL{\rno}{\Wm}{\lgm}\) is defined as 
		\(\inf_{\mQ\in\pmea{\outA}}\sup_{\dinp\in\inpS} \RD{\rno}{\Wm(\dinp)}{\mQ}-\lgm\cdot \costf(\dinp) \)
		by Theorem \ref{thm:Lminimax}.
		\item  If \(\RCL{\rno}{\Wm}{\lgm}<\infty\), then there exists a unique A-L center 
		\(\qma{\rno,\Wm}{\lgm}\) satisfying 
		\(\RCL{\rno}{\Wm}{\lgm}=\sup\nolimits_{\dinp \in \inpS}\RD{\rno}{\Wm(\dinp)}{\qma{\rno,\Wm}{\lgm}}-\lgm\cdot\costf(\dinp)\).
		Furthermore,
		\(\lim_{\ind \to\infty}\lon{\qmn{\rno,\mP}-\qma{\rno,\Wm}{\lgm}}=0\)
		for all \(\{\pma{}{(\ind)}\}_{\ind\in\integers{+}}\subset\pdis{\inpS}\)
		satisfying 
		\(\lim_{\ind \to\infty} \RMIL{\rno}{\pma{}{(\ind)}}{\Wm}{\lgm}=\RCL{\rno}{\Wm}{\lgm}\) 
		by Theorem \ref{thm:Lminimax}.
		
		\item  If \(\CRC{\rno}{\!\Wm\!}{\costc}=\RCL{\rno}{\Wm}{\lgm}+\lgm\cdot\costc<\infty\) 
		for a \(\lgm\in \reals{\geq0}^{\ell}\), then \(\qmn{\rno,\!\Wm\!,\costc}=\qma{\rno,\Wm}{\lgm}\)
		by Lemma \ref{lem:Lcenter}.
		
		\item  If \(\Wmn{[1,\blx]}\) is a product channel with an additive cost function, then 
		\(\RCL{\rno}{\Wmn{[1,\blx]}}{\lgm}=\sum\nolimits_{\tin=1}^{\blx} \RCL{\rno}{\Wmn{\tin}}{\lgm}\)
		for all \(\lgm \in \reals{\geq0}^{\ell},~\rno\in \reals{+}\) and
		whenever either of them 
%		\(\qma{\rno,\Wmn{[1,\blx]}}{\lgm}\) or \(\bigotimes\nolimits_{\tin=1}^{\blx} \qma{\rno,\Wmn{\tin}}{\lgm}\)
		exists \(\qma{\rno,\Wmn{[1,\blx]}}{\lgm}\) is equal to
		\(\bigotimes\nolimits_{\tin=1}^{\blx} \qma{\rno,\Wmn{\tin}}{\lgm}\) by Lemma \ref{lem:Lcapacityproduct}.
	\end{itemize}

	\item\label{contribution:cost-RG} 	
	The \renyi\!\!-Gallager information \(\GMIL{\rno}{\mP}{\Wm}{\lgm}\) is
	a generalization of the \renyi information \(\GMI{\rno}{\mP}{\Wm}\) with a Lagrange 
	multiplier because \(\GMIL{\rno}{\mP}{\Wm}{0}=\GMI{\rno}{\mP}{\Wm}\).
	This quantity was first employed by Gallager in \cite{gallager65} by a different 
	parametrization and scaling; later considered by 
	Arimoto \cite[\S IV]{arimoto76}, Augustin \cite{augustin78}, Ebert \cite{ebert66},
	Richters \cite{richters67},
	Oohama \cite{oohama17A}, \cite{oohama17B}, 
	and Vazquez-Vilar, Martinez, and {F{\`{a}}bregas} \cite{vazquezMF15}
	 with various parametrizations, scalings,
	and	names. 
	We chose the scaling and the parametrization so as to be compatible 
	with the ones for Augustin-Legendre information. 
	The most important conclusions of our analysis in \S\ref{sec:cost-RG} are 
	the followings:
	\begin{itemize}
		\item \(\GCL{\rno}{\Wm}{\lgm}=\RRL{\rno}{\Wm}{\lgm}\) by Theorem \ref{thm:Gminimax}, where
		\(\GCL{\rno}{\Wm}{\lgm}\) is defined as \(\sup_{\mP\in\pdis{\inpS}}\GMIL{\rno}{\mP}{\Wm}{\lgm}\).
		\item If \(\RCL{\rno}{\Wm}{\lgm}<\infty\) 
		and \(\lim_{\ind \to\infty} \GMI{\rno}{\pma{}{(\ind)}}{\Wm}=\RCL{\rno}{\Wm}{\lgm}\),
		then \(\lim_{\ind \to\infty}\lon{\qga{\rno,\mP}{\lgm}-\qma{\rno,\Wm}{\lgm}}=0\)
		by Theorem \ref{thm:Gminimax}.
		\item
		\(\sup\nolimits_{\dinp \in \inpS} \RD{\rno}{\Wm(\dinp)}{\mQ}-\lgm \cdot\costf(\dinp)
		\geq \RCL{\rno}{\Wm}{\lgm}+\RD{\rno}{\qma{\rno,\Wm}{\lgm}}{\mQ}\)
		for all \(\mQ \in \pmea{\outA}\) by Lemma \ref{lem:GEHB}.
	\end{itemize}
	Lemma \ref{lem:GEHB} is new to the best of our knowledge.
	For the case when both \(\rno\in(0,1)\) and \(\vee_{\dinp\in\inpS}\costf(\dinp)\leq \uc\), 
	Theorem \ref{thm:Gminimax} is implied by \cite[Lemma 35.2]{augustin78}.
	
	While pursuing a similar analysis in \cite[\S 35]{augustin78}, Augustin assumed 
	the cost function to be bounded. This hypothesis, however, excludes certain 
	important and interesting cases such as the Gaussian channels.
	The issue here is not a matter of rescaling: certain conclusions of Augustin's
	analysis,
	e.g. \cite[Lemma 35.3-(a)]{augustin78}, 
	are not correct when the cost function is unbounded.
	We do not assume the cost function to be bounded.
	Thus our model subsumes not only Augustin's model in \cite[\S 35]{augustin78}  
	but also other previously considered models,
	which were either discrete 
	\cite[pp. 13-15]{gallager65}, \cite[\S7.3]{gallager}, 
	\cite[\S IV]{arimoto76}, \cite{oohama17B}, \cite{vazquezMF15}
	or Gaussian 
	\cite[pp. 15,16]{gallager65}, \cite[\S\S7.4,7.5]{gallager},\cite{ebert66}, \cite{richters67}, \cite{oohama17A}.

	\item\label{contribution:cost-TP} 
	For channels with uncountable input sets the Shannon information and capacity 
	is often defined via the probability measures on the input space \((\inpS,\inpA)\), rather 
	than the probability mass functions on the input set \(\inpS\).
	In \S\ref{sec:cost-TP}, we discuss how and under which conditions one can make such a 
	generalization for Augustin's information measures.
	The most important conclusions of our analysis are the followings:
	\begin{itemize}
	\item If \(\Wm\) is a transition probability \((\inpS,\inpA)\) to \((\outS,\outA)\) 
	---i.e. \(\Wm\in\pmea{\outA|\inpA}\)--- and 
	\(\outA\) is countably generated, then
	\begin{itemize}
		\item \(\RMI{\rno}{\mP}{\Wm}\) is
		well defined for all \(\rno\in\reals{+}\) and \(\mP\in\pmea{\inpA}\)
		by \eqref{eq:def:general-conditionaldivergence}, \eqref{eq:def:general-information},
		and Lemma \ref{lem:divergence-measurability}  
		\item \(\RMIL{\rno}{\mP}{\Wm}{\lgm}\) is
		well defined for all \(\rno\in\reals{+}\), \(\mP\in\pmea{\inpA}\), and 
		\(\lgm\in\reals{\geq0}^{\ell}\) 
		by \eqref{eq:def:general-Linformation} provided that 
		\(\costf\) is \(\inpA\)-measurable.
	\end{itemize}
		
	\item If \(\Wm\in\pmea{\outA|\inpA}\), \(\inpA\) is countably separated, 
	\(\outA\) is countably generated, and \(\costf\) is \(\inpA\)-measurable, 
	then 
	\begin{itemize}
		\item \(\RCL{\rno}{\Wm}{\lgm}=\sup\nolimits_{\mP\in \csetA^{\lgm}} \RMIL{\rno}{\mP}{\Wm}{\lgm}\)
		for all \(\lgm\) in \(\reals{\geq0}^{\ell}\)
		by Theorem \ref{thm:general-Lminimax}
		where \(\csetA^{\lgm}\) is defined as \(\{\mP\in\pmea{\inpA}:\lgm\cdot\EXS{\mP}{\costf}<\infty \}\).
		\item If \(\RCL{\rno}{\Wm}{\lgm}<\infty\) for a \(\lgm\) in \(\reals{\geq0}^{\ell}\), 
		then
		\(\RCL{\rno}{\Wm}{\lgm}=\sup\nolimits_{\mP \in \csetA^{\lgm}} \CRD{\rno}{\Wm}{\qma{\rno,\Wm}{\lgm}}{\mP}-\lgm\cdot\EXS{\mP}{\costf}\)
		by Theorem \ref{thm:general-Lminimax}.
		\item \(\CRC{\rno}{\Wm}{\costc}=\sup\nolimits_{\mP\in \csetA(\costc)} \RMI{\rno}{\mP}{\Wm}\)
		for all \(\costc\) in \(\inte{\fcc{\costf}}\)
		by Theorem \ref{thm:general-minimax}
		where \(\csetA(\costc)\) is defined as \(\{\mP\!\in\!\pmea{\inpA}: \EXS{\mP}{\costf}\leq \costc\}\).
		\item If \(\CRC{\rno}{\Wm}{\costc}<\infty\) for a \(\costc\) in \(\inte{\fcc{\costf}}\), 
		then
		\(\CRC{\rno}{\Wm}{\costc}=\sup\nolimits_{\mP \in \csetA(\costc)} \CRD{\rno}{\Wm}{\qmn{\rno,\Wm,\costc}}{\mP}\)
		by Theorem \ref{thm:general-minimax}.		
	\end{itemize}
	Thus the A-L capacity and center as well as the cost constrained Augustin capacity and center 
	defined via probability mass functions are equal to the corresponding quantities that might 
	be defined via probability measures on \((\inpS,\inpA)\), provided that \(\inpA\) is countably 
	separated and \(\outA\) is countably generated. 
\end{itemize}
\end{enumerate}

\begin{comment}
\begin{enumerate}[(a)]
	\item\label{contribution:analyticity}
	Lemmas \ref{lem:analyticity} and \ref{lem:uniform-analyticity} are new to the best of our knowledge; 
	however, their proofs are nothing  but straightforward applications of standard tools.
	\begin{itemize}
		\item Lemma \ref{lem:analyticity} establishes the analyticity of 
		\(\RD{\rno}{\mW}{\mQ}\), \(\RD{1}{\wma{\rno}{\mQ}}{\mW}\), and \(\RD{1}{\wma{\rno}{\mQ}}{\mQ}\) 
		as functions of \(\rno\) on the interior of the interval in which \(\RD{\rno}{\mW}{\mQ}\) is finite. 
		\item Lemma \ref{lem:uniform-analyticity} bounds the error terms in the Taylor expansion of 
		\(\RD{\rno}{\mW}{\mQ}\) around \(\rno=\rnf\) uniformly over all probability measures \(\mW\) and \(\mQ\)
		satisfying \(\RD{\rnb}{\mW}{\mQ}\leq \gamma\) for some \(\rnb>\rnf\) and \(\gamma\in \reals{+}\).
	\end{itemize}  	

	\item\label{contribution:information:alternative}
	The alternative characterization of the Augustin information in terms of the mutual information and the
	conditional Kullback-Leibler divergence, given in Lemma \ref{lem:information}-(\ref{information:alternative}), 
	was proved before only for finite \(\outS\) case by \csiszar \cite[(A24), (A27)]{csiszar95}.

	\item\label{contribution:informationO:mean-uec}
	The equicontinuity of \(\{\ln\der{\qmn{\rno,\mP}}{\qmn{1,\mP}}\}_{\dout\in\outS}\) as family of functions 
	of \(\rno\) on \(\reals{+}\), established in Lemma \ref{lem:informationO}-(\ref{informationO:mean-uec}),  
	was not known before to the best of our knowledge.

	\item\label{contribution:blahutidentity}
	The characterization of \(\CRC{\rno}{\!\Wm\!}{\cset}\) in terms of mutual information and 
	Kullback-Leibler divergence given in Lemma \ref{lem:blahutidentity} has not been proven in 
	its current generality before.
	But for discrete channels under \(\cset=\pdis{\inpS}\) hypothesis it is implied by
	Blahut's \cite[Thm. 16]{blahut74}.

	\item\label{contribution:polsha:poltyrev} 
	The characterization of \(\RMIL{\rno}{\mP}{\Wm}{\lgm}\) in terms of \(\GMIL{\rno}{\mP}{\Wm}{\lgm}\) 
	given in Lemma \ref{lem:Lpolsha}-(\ref{Lpolsha:poltyrev}) was proved before only 
	for \(\RMI{\rno}{\mP}{\Wm}\) and \(\GMI{\rno}{\mP}{\Wm}\), i.e. \(\lgm=0\) case,
	assuming \(\outS\) is finite and \(\rno\) is in \([\sfrac{1}{2},1)\) by Poltyrev, \cite[Thm. 3.4]{poltyrev82}.

	\item\label{contribution:polsha:shayevitz} 
	The characterization of	\(\GMIL{\rno}{\mP}{\Wm}{\lgm}\) in terms of \(\RMIL{\rno}{\mP}{\Wm}{\lgm}\) 
	given in Lemma \ref{lem:Lpolsha}-(\ref{Lpolsha:shayevitz}) was proved before only for 
	\(\GMI{\rno}{\mP}{\Wm}\) and \(\RMI{\rno}{\mP}{\Wm}\), i.e. \(\lgm=0\) case,
	assuming \(\outS\) is finite by Shayevitz, \cite[Thm. 1]{shayevitz11}.
\end{enumerate}
%\end{comment} 
%!TEX root=../main-C.tex
\section{Preliminaries}\label{sec:preliminary}
\subsection{The \renyi Divergence}\label{sec:preliminary-divergence}
\begin{definition}\label{def:divergence}
For any \(\rno\in\reals{+}\) and \(\mW,\mQ\in\fmea{\outA}\)
\emph{the order \(\rno\) \renyi divergence between \(\mW\) and \(\mQ\)} is
\begin{align}
\label{eq:def:divergence}
\RD{\rno}{\mW}{\mQ}
&\DEF \begin{cases}
\tfrac{1}{\rno-1}\ln \int (\der{\mW}{\rfm})^{\rno} (\der{\mQ}{\rfm})^{1-\rno} \rfm(\dif{\dout})
&\rno\neq 1\\
\int  \der{\mW}{\rfm}\left[ \ln\der{\mW}{\rfm} -\ln \der{\mQ}{\rfm}\right] \rfm(\dif{\dout})
&\rno=1
\end{cases}
\end{align}
where \(\rfm\) is any measure satisfying \(\mW\AC\rfm\) and \(\mQ\AC\rfm\).
\end{definition}

Customarily, the \renyi divergence is defined for pairs of probability measures 
---see \cite{ervenH14} and \cite{renyi61} for example--- rather than pairs of 
non-zero finite measures. 
We adopt this slightly more general definition because it allows us to use 
the \renyi divergence to express certain observations more succinctly, 
see Lemma \ref{lem:divergence-RM} in the following and  \S\ref{sec:cost-RG}. 
For pairs of probability measures Definition \ref{def:divergence} is equivalent 
to usual definition employed in \cite{ervenH14} by \cite[Thm. 5]{ervenH14}. 

\begin{lemma}[{\!\!\cite[Lemma \ref*{A-lem:divergence-RM}]{nakiboglu19A}}]
	\label{lem:divergence-RM}
	Let \(\rno\) be a positive real number and	\(\mW\), \(\mQ\), \(\mV\) be 
	non-zero finite measures on \((\outS,\outA)\).
	\begin{itemize}
		\item If \(\mV\leq\mQ\), then \(\RD{\rno}{\mW}{\mQ}\leq \RD{\rno}{\mW}{\mV}\).
\item  If \(\mQ=\gamma\mV\) for some \(\gamma\in\reals{+}\)
and 
either \(\mW\) is a probability measure
or \(\rno\neq1\), then 
\(\RD{\rno}{\mW}{\mQ}=\RD{\rno}{\mW}{\mV}-\ln \gamma\).
	\end{itemize} 
\end{lemma}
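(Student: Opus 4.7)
The plan is to pick a common dominating measure and then reduce each of the two claims to an elementary pointwise inequality or an algebraic manipulation of the integrand, handling the cases $\rno\neq 1$ and $\rno=1$ separately.

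For the monotonicity claim, first I would fix $\rfm$ satisfying $\mW\AC\rfm$, $\mQ\AC\rfm$, $\mV\AC\rfm$ (for instance $\rfm=\mW+\mQ+\mV$), and write the Radon--Nikodym derivatives $\der{\mW}{\rfm}$, $\der{\mQ}{\rfm}$, $\der{\mV}{\rfm}$. Since $\mV\leq\mQ$, we have $\der{\mV}{\rfm}\leq\der{\mQ}{\rfm}$ $\rfm$-a.e.  For $\rno\in(0,1)$, the map $t\mapsto t^{1-\rno}$ is increasing, so the integrand $(\der{\mW}{\rfm})^{\rno}(\der{\mQ}{\rfm})^{1-\rno}$ dominates its $\mV$-analog pointwise; the prefactor $\tfrac{1}{\rno-1}$ is negative, which flips the inequality in the desired direction. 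For $\rno>1$ I would rewrite the integrand as $(\der{\mW}{\rfm})^{\rno}/(\der{\mV}{\rfm})^{\rno-1}$ (with the usual conventions $0/0=0$ and $c/0=\infty$ for $c>0$), so smaller $\mV$ makes the denominator smaller and the integrand larger, and the positive prefactor $\tfrac{1}{\rno-1}$ preserves the inequality. For $\rno=1$, $\ln$ is monotone, so $-\ln\der{\mV}{\rfm}\geq-\ln\der{\mQ}{\rfm}$ $\mW$-a.e., and integrating against $\der{\mW}{\rfm}\,\rfm$ yields $\RD{1}{\mW}{\mV}\geq\RD{1}{\mW}{\mQ}$.

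For the scaling claim, note that $\mQ=\gamma\mV$ gives $\der{\mQ}{\rfm}=\gamma\der{\mV}{\rfm}$ $\rfm$-a.e.  When $\rno\neq 1$, pulling the factor $\gamma^{1-\rno}$ out of the integral and through $\tfrac{1}{\rno-1}\ln$ produces exactly $-\ln\gamma$, yielding $\RD{\rno}{\mW}{\mQ}=\RD{\rno}{\mW}{\mV}-\ln\gamma$. When $\rno=1$, the identity $\ln\der{\mQ}{\rfm}=\ln\gamma+\ln\der{\mV}{\rfm}$ inside the definition gives
\begin{equation*}
\RD{1}{\mW}{\mQ}=\RD{1}{\mW}{\mV}-(\ln\gamma)\,\mW(\outS),
\end{equation*}
and $\mW(\outS)=1$ precisely because $\mW$ is a probability measure, delivering the claim.

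The only real subtlety, and the part I would be most careful with, is the bookkeeping on the sets where the Radon--Nikodym derivatives vanish: one has to verify that the claimed pointwise inequalities still hold $\rfm$-a.e.\ under the standard conventions $0^{0}=1$, $0\cdot\infty=0$, and $0\ln 0=0$, so that both sides of each inequality, finite or $+\infty$, behave as expected and the conclusions remain valid when $\RD{\rno}{\mW}{\mV}=\infty$ or when $\rno>1$ and the support condition $\mW\AC\mV$ fails. Once these edge cases are handled the argument is purely computational.
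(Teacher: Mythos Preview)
Your proposal is correct and is the natural way to prove this elementary lemma. Note, however, that the paper does not actually prove this statement; it is quoted from \cite{nakiboglu19A} without proof, so there is no in-paper argument to compare against. Your direct computation with a common dominating measure, splitting into the cases $\rno\in(0,1)$, $\rno>1$, and $\rno=1$, is exactly what one expects, and your remark about the edge cases (vanishing densities, the conventions $0\cdot\infty=0$ and $0\ln 0=0$, and the possibility that one side is $+\infty$) covers the only points that require care.
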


If both arguments of the \renyi divergence are probability measures, then it is positive
unless the arguments are equal to one another by Lemma \ref{lem:divergence-pinsker}.
\begin{lemma}[{\!\!\cite[Thm. 3, Thm. 31]{ervenH14}}]\label{lem:divergence-pinsker}
	For any \(\rno\in\reals{+}\), probability measure \(\mW\) and \(\mQ\) on \((\outS,\outA)\)
	\begin{align}
\notag %\label{eq:lem:divergence-pinsker}
	\RD{\rno}{\mW}{\mQ}
	&\geq\tfrac{1\wedge\rno}{2} \lon{\mW-\mQ}^2.
	\end{align}
\end{lemma}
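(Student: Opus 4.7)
The plan is to split the argument into two regimes according to whether $\rno \geq 1$ or $\rno \in (0,1)$, and in each case use the data processing inequality to reduce to a scalar calculation on Bernoulli distributions.

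For $\rno \geq 1$, I would invoke the monotonicity of the \renyi divergence in the order, namely $\RD{\rno}{\mW}{\mQ} \geq \RD{1}{\mW}{\mQ}$, which follows from a standard Jensen argument applied to \eqref{eq:def:divergence}. Then the classical Pinsker inequality $\RD{1}{\mW}{\mQ} \geq \tfrac{1}{2}\lon{\mW-\mQ}^{2}$ together with the identity $1\wedge\rno=1$ yields the claim in this regime.

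The substantive case is $\rno\in(0,1)$. Here I would pick $\oev\in\outA$ with $\mW(\oev)-\mQ(\oev)=\tfrac{1}{2}\lon{\mW-\mQ}$ (possibly after swapping $\mW$ and $\mQ$; such an $\oev$ exists by the Hahn decomposition of $\mW-\mQ$). Applying the data processing inequality for \renyi divergence to the two-point measurable partition $\{\oev,\outS\setminus\oev\}$---which is itself a straightforward consequence of the joint convexity/log-sum-type structure of \eqref{eq:def:divergence}---reduces the problem to the scalar inequality
\[
\tfrac{1}{\rno-1}\ln\!\bigl(p^{\rno}q^{1-\rno}+(1-p)^{\rno}(1-q)^{1-\rno}\bigr)\geq 2\rno(p-q)^{2},\qquad p,q\in[0,1],
\]
with $p=\mW(\oev)$ and $q=\mQ(\oev)$, since $2(p-q)=\lon{\mW-\mQ}$.

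The main obstacle is establishing this binary inequality for $\rno\in(0,1)$. My plan is to fix $p$ and treat both sides as functions of $q$: they agree at $q=p$ with value zero, and a direct computation shows that the first derivatives in $q$ also agree there while the second derivative of the left-hand side dominates that of the right-hand side throughout $[0,1]$. Concretely, writing $\fX(q)\DEF\tfrac{1}{\rno-1}\ln(p^{\rno}q^{1-\rno}+(1-p)^{\rno}(1-q)^{1-\rno})$, one checks that $\fX''(q)\geq 4\rno$ on $[0,1]$ by bounding the ratio of sums in the derivative via Cauchy--Schwarz, after which a double integration from $p$ to $q$ gives the inequality. An equivalent route is the one used by van Erven and \harremoes in \cite[Thm.~31]{ervenH14}, which can be cited directly; either way, the analytic content is the uniform lower bound on the Hessian of the scalar \renyi divergence in its second argument.
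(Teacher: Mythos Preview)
The paper does not supply its own proof of Lemma~\ref{lem:divergence-pinsker}; it is quoted from \cite[Thm.~3, Thm.~31]{ervenH14} without argument. Your outline is correct and is essentially the approach of that reference: monotonicity in the order handles \(\rno\geq 1\) via classical Pinsker, and for \(\rno\in(0,1)\) data processing to a two-set partition reduces the claim to the binary inequality, which is exactly \cite[Thm.~31]{ervenH14}.

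One small remark: the parenthetical ``possibly after swapping \(\mW\) and \(\mQ\)'' is both unnecessary and would be illegitimate, since \(\RD{\rno}{\cdot}{\cdot}\) is not symmetric. The Hahn decomposition of \(\mW-\mQ\) already yields \(\oev\) with \(\mW(\oev)-\mQ(\oev)=\tfrac{1}{2}\lon{\mW-\mQ}\) directly, so no swap is needed. With that correction your plan goes through; since you explicitly allow citing \cite[Thm.~31]{ervenH14} for the scalar step, there is no gap.
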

For orders in \((0,1]\) this inequality is called the Pinsker's inequality, 
\cite{csiszar67A}, \cite{gilardoni10B}.
%\cite[Lemma 26.5b]{augustin78}, \cite[Theorem 4.1]{csiszar67A}, \cite[Corollary 6]{gilardoni10B}.
For orders in \((0,1)\) it is possible to bound the \renyi divergence
from above in terms of the total variation distance.
For \(\rno=\sfrac{1}{2}\) case \cite[eq. (21), p. 364]{shiryaev} 
asserts
\begin{align}
\label{eq:shiryaev}
\RD{\sfrac{1}{2}}{\mW}{\mQ}
&\leq 2 \ln \tfrac{2}{2-\lon{\mW-\mQ}}.
\end{align}

As a function of its arguments, the order \(\rno\) \renyi divergence 
is continuous for the total variation topology provided that 
\(\rno\in(0,1)\). For arbitrary orders we only have lower semicontinuity,
but that holds even for the topology of setwise convergence.  
\begin{lemma}[{\!\!\cite[Thm. 15]{ervenH14}}]\label{lem:divergence:lsc}
	For any \(\rno\in\reals{+}\), \(\RD{\rno}{\mW}{\mQ}\) is a 
	lower semicontinuous function of the pair of probability measures 
	\((\mW,\mQ)\) in the topology of setwise convergence.
\end{lemma}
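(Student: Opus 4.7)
The plan is to reduce lower semicontinuity on the infinite-dimensional space of pairs of probability measures to lower semicontinuity of a finite-dimensional discrete divergence via a variational characterization in terms of finite measurable partitions. Let $\mathcal{P}=\{E_{1},\ldots,E_{k}\}$ range over the finite measurable partitions of $(\outS,\outA)$, and for any $\mu\in\pmea{\outA}$ let $\mu_{\mathcal{P}}$ denote the probability vector $(\mu(E_{1}),\ldots,\mu(E_{k}))$ on the finite set $\{1,\ldots,k\}$. The first step is to establish the representation
\begin{align}
\notag
\RD{\rno}{\mW}{\mQ}
&=\sup\nolimits_{\mathcal{P}}\RD{\rno}{\mW_{\mathcal{P}}}{\mQ_{\mathcal{P}}},
\end{align}
valid for every $\rno\in\reals{+}$ and every pair $\mW,\mQ\in\pmea{\outA}$, where the right hand side uses the \renyi divergence on the $k$-dimensional probability simplex.

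Granting this representation, the second step is to fix such a partition $\mathcal{P}$ and show that $(\mW,\mQ)\mapsto\RD{\rno}{\mW_{\mathcal{P}}}{\mQ_{\mathcal{P}}}$ is lower semicontinuous for setwise convergence: by definition, setwise convergence of $(\mW_{n},\mQ_{n})$ to $(\mW,\mQ)$ forces the coordinate convergences $\mW_{n}(E_{i})\to\mW(E_{i})$ and $\mQ_{n}(E_{i})\to\mQ(E_{i})$, and the discrete \renyi divergence on a fixed simplex is lower semicontinuous in both arguments, continuous on the relatively open set where the denominator dominates the numerator and jumping to $+\infty$ on its complement. A supremum of lower semicontinuous functions being lower semicontinuous, the second step together with the representation from the first step yields the claim.

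The main obstacle is the partition representation itself, in particular for $\rno\neq1$. One direction is the \renyi data-processing inequality, $\RD{\rno}{\mW_{\mathcal{P}}}{\mQ_{\mathcal{P}}}\leq\RD{\rno}{\mW}{\mQ}$, which follows from Jensen's inequality applied to the convex function $t\mapsto t^{\rno}$ for $\rno>1$ and to the concave function $t\mapsto t^{\rno}$ for $\rno\in(0,1)$, combined with the sign of $\tfrac{1}{\rno-1}$ and the logarithm's monotonicity. The opposite direction is obtained by choosing partitions that approximate the Radon-Nikodym density $\der{\mW}{\mQ}$ by simple functions dense in the relevant $\Lp{1}{\rfm}$ topology, and by passing $t\mapsto t^{\rno}$ and $\ln$ through the limit using monotone convergence for refining partitions in the $\rno\in(0,1)$ regime and a truncation-plus-dominated-convergence argument for $\rno>1$. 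The $\rno=1$ case is the classical Gelfand-Yaglom-Perez theorem. Once the representation is in place, the rest of the argument is immediate.
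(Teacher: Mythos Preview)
The paper does not prove this lemma; it is quoted verbatim from \cite[Thm.~15]{ervenH14} as a preliminary fact, with no argument given. Your proposal is correct and is essentially the proof that appears in the cited reference: van Erven and Harremo\"{e}s establish the partition representation \(\RD{\rno}{\mW}{\mQ}=\sup_{\mathcal{P}}\RD{\rno}{\mW_{\mathcal{P}}}{\mQ_{\mathcal{P}}}\) (their Theorem~2, a data-processing inequality plus a refinement-limit argument), then observe that each term in the supremum is a composition of the setwise-continuous evaluation maps \((\mW,\mQ)\mapsto(\mW(E_{i}),\mQ(E_{i}))_{i}\) with a finite-dimensional \renyi divergence that is lower semicontinuous on the simplex, and finally conclude by taking the supremum. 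So there is no alternative route to compare against; you have reconstructed the original.
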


\begin{lemma}[{\!\!\cite[Thm. 17]{ervenH14}}]\label{lem:divergence:uc}
	For any \(\rno\in(0,1)\),  \(\RD{\rno}{\mW}{\mQ}\) is a 
	uniformly continuous function of the pair of probability measures
	\((\mW,\mQ)\) in the total variation topology.
\end{lemma}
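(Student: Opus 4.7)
My plan is to reduce the uniform continuity of $\RD{\rno}{\cdot}{\cdot}$ to that of the Hellinger-type affinity
\[
\eta_{\rno}(\mW,\mQ)\DEF\int\bigl(\tfrac{\dif{\mW}}{\dif{\rfm}}\bigr)^{\rno}\bigl(\tfrac{\dif{\mQ}}{\dif{\rfm}}\bigr)^{1-\rno}\rfm(\dif{\dout}),
\]
where $\rfm$ is any measure dominating $\mW$ and $\mQ$. By Hölder's inequality applied to the probability measures $\mW$ and $\mQ$, one has $\eta_{\rno}(\mW,\mQ)\in[0,1]$, with the value $0$ attained exactly when $\mW\perp\mQ$, and by Definition~\ref{def:divergence} we have $\RD{\rno}{\mW}{\mQ}=-\tfrac{1}{1-\rno}\ln\eta_{\rno}(\mW,\mQ)$. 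Since $-\tfrac{1}{1-\rno}\ln(\cdot)$ is uniformly continuous from $[0,1]$ into $[0,\infty]$ equipped with any metric that makes the latter a compact interval (e.g.\ via $\arctan$), it suffices to establish uniform continuity of $\eta_{\rno}$ on $\pmea{\outA}\times\pmea{\outA}$ in the total variation topology.

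The heart of the argument is then a Hölder-type bound on $\eta_{\rno}$. Fix $\rfm$ dominating $\mW,\widetilde\mW,\mQ,\widetilde\mQ$ and write $p,\widetilde p,q,\widetilde q$ for the corresponding densities. Using the triangle inequality
\[
\bigl|p^{\rno}q^{1-\rno}-\widetilde p^{\rno}\widetilde q^{1-\rno}\bigr|\leq \bigl|p^{\rno}-\widetilde p^{\rno}\bigr|q^{1-\rno}+\widetilde p^{\rno}\bigl|q^{1-\rno}-\widetilde q^{1-\rno}\bigr|,
\]
together with the elementary estimate $|x^{\rno}-y^{\rno}|\leq|x-y|^{\rno}$ on $[0,\infty)$ for $\rno\in(0,1)$ (a consequence of the concavity of $t\mapsto t^{\rno}$, proved by showing $(1-t)^{\rno}+t^{\rno}\geq 1$ on $[0,1]$), followed by Hölder's inequality with conjugate exponents $\tfrac{1}{\rno}$ and $\tfrac{1}{1-\rno}$, each integrated term becomes a power of a total variation:
\[
\int |p-\widetilde p|^{\rno} q^{1-\rno}\rfm(\dif{\dout})\leq \lon{\mW-\widetilde\mW}^{\rno},\qquad \int \widetilde p^{\rno}|q-\widetilde q|^{1-\rno}\rfm(\dif{\dout})\leq \lon{\mQ-\widetilde\mQ}^{1-\rno}.
\]
Adding these yields the uniform Hölder estimate
\[
|\eta_{\rno}(\mW,\mQ)-\eta_{\rno}(\widetilde\mW,\widetilde\mQ)|\leq \lon{\mW-\widetilde\mW}^{\rno}+\lon{\mQ-\widetilde\mQ}^{1-\rno},
\]
which is independent of the choice of probability measures and therefore gives uniform continuity of $\eta_{\rno}$. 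Composing with the logarithm gives the statement.

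The main obstacle is that $\RD{\rno}{\mW}{\mQ}$ can be $+\infty$ (precisely on the mutually singular pairs), so uniform continuity cannot be interpreted into $\reals{}$ with its usual metric. The reduction to $\eta_{\rno}\in[0,1]$ sidesteps this cleanly: $\eta_{\rno}$ is always finite and it is $0$ exactly on the mutually singular pairs, so the divergence's blow-up is entirely absorbed by the logarithm acting on a compactified range. Everything else is a routine two-line application of Hölder's inequality, so once the reduction is in place the proof is essentially immediate.
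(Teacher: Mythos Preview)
Your proof is correct. The paper does not supply its own proof of this lemma; it merely cites the result from van Erven and Harremo\"{e}s \cite[Thm.~17]{ervenH14}. Your argument---reducing to H\"{o}lder continuity of the Hellinger integral \(\eta_{\rno}\) via the pointwise estimate \(|x^{\rno}-y^{\rno}|\leq |x-y|^{\rno}\) and H\"{o}lder's inequality, then composing with \(-\tfrac{1}{1-\rno}\ln(\cdot)\) viewed as a continuous map from the compact interval \([0,1]\) into a compactified \([0,\infty]\)---is the standard approach and is essentially what that reference does.
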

The \renyi divergence is convex in its second argument for all positive orders, 
jointly convex in its arguments for positive orders that are not greater than one,
and jointly quasi-convex in its arguments for all positive orders. 

\begin{lemma}[{\!\!\cite[Thm.  12]{ervenH14}}]\label{lem:divergence-convexity}
For all \(\rno\in\reals{+}\), \(\mW,\qmn{0},\qmn{1}\in\pmea{\outA}\), \(\beta\in(0,1)\),
and \(\rfm\) satisfying \((\qmn{0}+\qmn{1})\AC\rfm\),
\begin{align}
\notag %\label{eq:lem:divergence-convexity-second}
\RD{\rno}{\mW}{\beta\qmn{1}+(1-\beta)\qmn{0}} 
&\leq \beta\RD{\rno}{\mW}{\qmn{1}}+(1-\beta)\RD{\rno}{\mW}{\qmn{0}}.
\end{align}
Furthermore, the equality holds iff 
\(\der{\qmn{1}}{\rfm}=\der{\qmn{0}}{\rfm}\) \(\mW\)-almost surely.
\end{lemma}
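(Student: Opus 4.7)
The plan is to treat the three regimes $\rno=1$, $\rno\in(0,1)$, and $\rno>1$ separately, reducing each to a pointwise inequality on the Radon-Nikodym derivatives followed by integration. Write $\cnst{g}=\der{\mW}{\rfm}$, $\cnst{f}_{i}=\der{\qmn{i}}{\rfm}$ for $i\in\{0,1\}$, and note that $\der{(\beta\qmn{1}+(1-\beta)\qmn{0})}{\rfm}=\beta\cnst{f}_{1}+(1-\beta)\cnst{f}_{0}$. For $\rno=1$, the strict convexity of $-\ln$ yields $-\ln(\beta\cnst{f}_{1}+(1-\beta)\cnst{f}_{0})\leq-\beta\ln\cnst{f}_{1}-(1-\beta)\ln\cnst{f}_{0}$ pointwise (the log-sum inequality), and multiplying by $\cnst{g}$ and integrating against $\rfm$ gives the claim, with equality iff $\cnst{f}_{1}=\cnst{f}_{0}$ $\mW$-almost surely.

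For $\rno\neq1$, introduce $\cnst{I}_{\mQ}=\int\cnst{g}^{\rno}(\der{\mQ}{\rfm})^{1-\rno}\rfm(\dif{\dout})$, so that $\RD{\rno}{\mW}{\mQ}=\tfrac{1}{\rno-1}\ln\cnst{I}_{\mQ}$. Because the prefactor $\tfrac{1}{\rno-1}$ has fixed sign, the convexity claim is equivalent to $\cnst{I}_{\beta\qmn{1}+(1-\beta)\qmn{0}}\leq\cnst{I}_{\qmn{1}}^{\beta}\cnst{I}_{\qmn{0}}^{1-\beta}$ when $\rno>1$ and to the reverse inequality when $\rno\in(0,1)$; the lemma then follows by taking logarithms.

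For $\rno>1$, I would first apply the pointwise AM-GM bound $\beta\cnst{f}_{1}+(1-\beta)\cnst{f}_{0}\geq\cnst{f}_{1}^{\beta}\cnst{f}_{0}^{1-\beta}$. Since $1-\rno<0$, raising to the power $1-\rno$ reverses this to $(\beta\cnst{f}_{1}+(1-\beta)\cnst{f}_{0})^{1-\rno}\leq\cnst{f}_{1}^{\beta(1-\rno)}\cnst{f}_{0}^{(1-\beta)(1-\rno)}$. Multiplying by $\cnst{g}^{\rno}=\cnst{g}^{\beta\rno}\cdot\cnst{g}^{(1-\beta)\rno}$, integrating, and applying H\"older's inequality with exponents $\tfrac{1}{\beta}$ and $\tfrac{1}{1-\beta}$ to the factors $\cnst{g}^{\beta\rno}\cnst{f}_{1}^{\beta(1-\rno)}$ and $\cnst{g}^{(1-\beta)\rno}\cnst{f}_{0}^{(1-\beta)(1-\rno)}$ produces $\cnst{I}_{\beta\qmn{1}+(1-\beta)\qmn{0}}\leq\cnst{I}_{\qmn{1}}^{\beta}\cnst{I}_{\qmn{0}}^{1-\beta}$. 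For $\rno\in(0,1)$, the map $x\mapsto x^{1-\rno}$ is concave (as $1-\rno\in(0,1)$), so $(\beta\cnst{f}_{1}+(1-\beta)\cnst{f}_{0})^{1-\rno}\geq\beta\cnst{f}_{1}^{1-\rno}+(1-\beta)\cnst{f}_{0}^{1-\rno}$; integrating against $\cnst{g}^{\rno}\rfm$ gives $\cnst{I}_{\beta\qmn{1}+(1-\beta)\qmn{0}}\geq\beta\cnst{I}_{\qmn{1}}+(1-\beta)\cnst{I}_{\qmn{0}}$, and a scalar AM-GM bounds the right side below by $\cnst{I}_{\qmn{1}}^{\beta}\cnst{I}_{\qmn{0}}^{1-\beta}$.

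The main obstacle I anticipate is the careful tracking of the equality case across the multi-step arguments in the $\rno>1$ regime, as well as handling of degenerate points where some density vanishes or where some $\cnst{I}_{\qmn{i}}$ is $0$ or $\infty$ (in which cases the lemma is either trivial or requires the conventions $0^{a}=0$ for $a>0$). In each case, the strict convexity/concavity of the underlying scalar function and the equality condition of H\"older both reduce to $\cnst{f}_{1}=\cnst{f}_{0}$ on the set $\{\cnst{g}>0\}$, which is exactly the desired $\mW$-almost-sure identity; by reducing to the common support of $\mW$, $\qmn{0}$, $\qmn{1}$ (using $\rfm$ as a reference measure dominating both) before invoking the strict inequalities, one obtains the sharp characterization of the equality case stated in the lemma.
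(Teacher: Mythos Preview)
The paper does not prove this lemma; it is quoted verbatim from \cite[Thm.~12]{ervenH14} as a preliminary fact, with no argument given. So there is nothing to compare against on the paper's side.

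Your proof is correct and self-contained. The three-regime split, with the pointwise convexity/concavity of the relevant power or logarithm followed by integration and (for $\rno>1$) H\"older, is exactly the standard route. Your treatment of the equality case is also sound: in each regime the chain of inequalities collapses to equality precisely when the strict-convexity step does, and that step is an equality iff $\cnst{f}_{1}=\cnst{f}_{0}$ on $\{\cnst{g}>0\}$, i.e.\ $\mW$-almost surely. One small remark on the $\rno\in(0,1)$ case: equality in your chain formally requires \emph{both} the pointwise concavity step and the scalar AM--GM step to be tight, but the former ($\cnst{f}_{1}=\cnst{f}_{0}$ $\mW$-a.s.) already forces $\cnst{I}_{\qmn{1}}=\cnst{I}_{\qmn{0}}$ since the integrals live on $\{\cnst{g}>0\}$, so the second is automatic. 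The degenerate cases you flag ($\cnst{I}_{\qmn{i}}\in\{0,\infty\}$) render the inequality trivial, as you note.
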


\begin{lemma}[{\!\!\cite[Thm.  11]{ervenH14}}]\label{lem:divergence-jointconvexity}
For all \(\rno\in(0,1]\), \(\wmn{0},\wmn{1},\qmn{0},\qmn{1}\in\pmea{\outA}\), 
\(\beta\in(0,1)\), and \(\rfm\) satisfying \((\wmn{0}+\wmn{1}+\qmn{0}+\qmn{1})\AC\rfm\),
\begin{align}
\label{eq:divergence-convexity-zo}
\RD{\rno}{\beta\wmn{1}+(1-\beta)\wmn{0}}{\beta\qmn{1}+(1-\beta)\qmn{0}} 
&\leq \beta\RD{\rno}{\wmn{1}}{\qmn{1}}+(1-\beta)\RD{\rno}{\wmn{0}}{\qmn{0}}.
\end{align}
Furthermore,
for \(\rno=1\) the equality holds iff 
\(\der{\wmn{0}}{\rfm}\der{\qmn{1}}{\rfm}=\der{\wmn{1}}{\rfm}\der{\qmn{0}}{\rfm}\) 
and
for \(\rno\in(0,1)\) the equality holds iff 
\(\der{\wmn{0}}{\rfm}\der{\qmn{1}}{\rfm}=\der{\wmn{1}}{\rfm}\der{\qmn{0}}{\rfm}\) 
and \(\RD{\rno}{\wmn{1}}{\qmn{1}}=\RD{\rno}{\wmn{0}}{\qmn{0}}\).
\end{lemma}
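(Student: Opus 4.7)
The plan is to reduce joint convexity of $\RD{\rno}{\cdot}{\cdot}$ to the pointwise concavity of the two-variable power product $(a,b)\mapsto a^{\rno}b^{1-\rno}$ on $\reals{\geq0}^{2}$ for $\rno\in(0,1)$, and to the log-sum inequality for $\rno=1$. First, I would fix a common dominating measure $\rfm$ as in the hypothesis and set $w_{i}\DEF \der{\wmn{i}}{\rfm}$ and $q_{i}\DEF \der{\qmn{i}}{\rfm}$, and record that the densities of the mixtures with respect to $\rfm$ are simply the convex combinations $\beta w_{1}+(1-\beta)w_{0}$ and $\beta q_{1}+(1-\beta)q_{0}$.

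For $\rno\in(0,1)$, the map $(a,b)\mapsto a^{\rno}b^{1-\rno}$ is concave on $\reals{\geq0}^{2}$ (equivalent to the two-point H\"older inequality). Applied pointwise and integrated against $\rfm$, this gives joint concavity of the Hellinger-type functional $I(\mW,\mQ)\DEF\int (\der{\mW}{\rfm})^{\rno}(\der{\mQ}{\rfm})^{1-\rno}\rfm(\dif{\dout})$. Since $\rno-1<0$, the scalar map $t\mapsto\tfrac{1}{\rno-1}\ln t$ is convex and strictly decreasing on $\reals{+}$; composing this decreasing convex map with the concave functional $I$ preserves the inequality direction and produces joint convexity of $\RD{\rno}{\mW}{\mQ}=\tfrac{1}{\rno-1}\ln I(\mW,\mQ)$. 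For $\rno=1$, I would apply the log-sum inequality pointwise to the two pairs $(\beta w_{1},(1-\beta)w_{0})$ and $(\beta q_{1},(1-\beta)q_{0})$, then integrate against $\rfm$; this yields \eqref{eq:divergence-convexity-zo} immediately.

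For the equality conditions, pointwise equality in the concavity of $(a,b)\mapsto a^{\rno}b^{1-\rno}$ requires $(w_{1},q_{1})$ to be proportional to $(w_{0},q_{0})$, which rearranges to $w_{0}q_{1}=w_{1}q_{0}$ $\rfm$-a.e. For $\rno=1$ this is already both necessary and sufficient for tightness of the log-sum inequality, so it exhausts the equality case. For $\rno\in(0,1)$ there is a second reduction step through the strictly convex map $-\ln$, which is tight only when $I(\wmn{1},\qmn{1})=I(\wmn{0},\qmn{0})$, i.e.\ $\RD{\rno}{\wmn{1}}{\qmn{1}}=\RD{\rno}{\wmn{0}}{\qmn{0}}$; combining the two necessary conditions yields the stated characterization. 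The main obstacle is the careful bookkeeping of boundary cases in which some $w_{i}$ or $q_{i}$ vanishes on a set of positive $\rfm$-measure, so that the factors $0^{\rno}$ and $0^{1-\rno}$ must be interpreted by continuity and the proportionality condition phrased as a cross-product identity rather than a ratio; once this is organized, the proof is a mechanical chaining of elementary convexity reductions.
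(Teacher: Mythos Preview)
The paper does not prove this lemma; it is quoted verbatim as \cite[Thm.~11]{ervenH14} and used without argument. So there is no ``paper's own proof'' to compare against.

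Your approach is correct and is essentially the standard one (and, as far as I know, the one in \cite{ervenH14}): for $\rno\in(0,1)$ you use the pointwise concavity of $(a,b)\mapsto a^{\rno}b^{1-\rno}$ to get joint concavity of the Hellinger integral, then compose with the strictly convex, strictly decreasing map $t\mapsto\tfrac{1}{\rno-1}\ln t$; for $\rno=1$ you use the log-sum inequality. The equality analysis is also right: for $\rno\in(0,1)$ the two-step composition (concavity of the power product, then strict convexity of $-\ln$) produces the two conditions (the cross-product identity $\rfm$-a.e.\ and equality of the two divergences), while for $\rno=1$ only the cross-product identity arises from the log-sum equality case. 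The boundary cases you flag (vanishing densities) are indeed the only place requiring care, and your reformulation of the proportionality condition as a cross-product identity handles them cleanly.
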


\begin{lemma}[{\!\!\cite[Thm. 13]{ervenH14}}]\label{lem:divergence-quasiconvexity}
For all \(\rno\in\reals{+}\), \(\wmn{0},\wmn{1},\qmn{0},\qmn{1}\in\pmea{\outA}\), 
and \(\beta\in(0,1)\)
\begin{align}
\notag %\label{eq:lem:divergence-quasiconvexity}
\RD{\rno}{\beta\wmn{1}+(1-\beta)\wmn{0}}{\beta\qmn{1}+(1-\beta)\qmn{0}} 
&\leq \RD{\rno}{\wmn{1}}{\qmn{1}}\vee \RD{\rno}{\wmn{0}}{\qmn{0}}.
\end{align}
\end{lemma}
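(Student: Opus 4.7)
The plan is to split on the order. For $\rno \in (0,1]$ the claim is immediate: Lemma \ref{lem:divergence-jointconvexity} already establishes joint convexity, and since $\beta a + (1-\beta) b \leq a \vee b$ for any reals $a, b$ and $\beta \in (0,1)$, joint convexity implies joint quasi-convexity. All of the real work is therefore for $\rno > 1$.

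For $\rno > 1$, I would fix a common dominating measure, e.g.\ $\rfm = \tfrac{1}{4}(\wmn{0}+\wmn{1}+\qmn{0}+\qmn{1})$, set $r_i = \der{\wmn{i}}{\rfm}$ and $q_i = \der{\qmn{i}}{\rfm}$ for $i\in\{0,1\}$, and work with the Hellinger integral $H_{\rno}(\mW,\mQ) \DEF \int (\der{\mW}{\rfm})^{\rno} (\der{\mQ}{\rfm})^{1-\rno} \rfm(\dif\dout)$, with the standard conventions that the integrand vanishes where $\der{\mW}{\rfm}=0$ and equals $+\infty$ where $\der{\mW}{\rfm}>0=\der{\mQ}{\rfm}$. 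Under these conventions $\RD{\rno}{\mW}{\mQ} = \tfrac{1}{\rno-1}\ln H_{\rno}(\mW,\mQ)$ even in the singular case. Since $\rno-1>0$ and $\ln(\beta a+(1-\beta)b) \leq \ln a \vee \ln b$ for positive reals, joint quasi-convexity of $\RD{\rno}$ reduces to joint convexity of $H_{\rno}$.

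Joint convexity of $H_{\rno}$ in turn reduces to a pointwise joint convexity of the integrand $f(r,q) \DEF r^{\rno} q^{1-\rno}$ on $(0,\infty)^2$. The cleanest justification is that $f$ is the \emph{perspective} of the convex map $r \mapsto r^{\rno}$ and hence automatically jointly convex; a direct Hessian calculation confirms this with $D^{2} f(r,q)[u,v] = \rno(\rno-1)\, r^{\rno-2} q^{-\rno-1} (u q - v r)^{2} \geq 0$. Substituting $(r_i,q_i)$ pointwise into the resulting convexity inequality and integrating against $\rfm$ yields $H_\rno(\mW_\beta,\mQ_\beta) \leq \beta H_{\rno}(\wmn{1},\qmn{1}) + (1-\beta) H_{\rno}(\wmn{0},\qmn{0})$ where $\mW_\beta = \beta \wmn{1} + (1-\beta)\wmn{0}$ and $\mQ_\beta$ is defined similarly.

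The only mild subtlety is the boundary behaviour. If $\RD{\rno}{\wmn{1}}{\qmn{1}} \vee \RD{\rno}{\wmn{0}}{\qmn{0}} = \infty$ the statement is trivial, so one may assume both sides are finite, which for $\rno>1$ forces $\wmn{i} \AC \qmn{i}$ and hence $\mW_\beta \AC \mQ_\beta$; this confines the nontrivial contribution of the integrand to the set $\{q_i>0\}$, where the pointwise Hessian bound applies without modification. I do not foresee any serious obstacle here --- this is the standard perspective-function argument, and the only care is in bookkeeping the edge conventions for $0$ and $\infty$.
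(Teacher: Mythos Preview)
Your proof is correct. The paper itself does not prove this lemma; it is stated with a direct citation to \cite[Thm.~13]{ervenH14} and no argument is supplied. Your split on the order, with the $\rno\in(0,1]$ case falling out of Lemma~\ref{lem:divergence-jointconvexity} and the $\rno>1$ case handled via joint convexity of the Hellinger integral (equivalently, the perspective-function argument for $r^{\rno}q^{1-\rno}$), is the standard route and matches what one finds in the cited reference. The boundary bookkeeping you outline is adequate once both divergences on the right are assumed finite.
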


\begin{lemma}[{\!\!\cite[Thm. 3, Thm. 7]{ervenH14}}]\label{lem:divergence-order}
	For all \(\mW,\mQ\in\pmea{\outA}\), \(\RD{\rno}{\mW}{\mQ}\) is a 
	nondecreasing and lower semicontinuous function of \(\rno\) on \(\reals{+}\)
	that is continuous on \((0,(1\vee\chi_{\mW,\mQ})]\) where
	\(\chi_{\mW,\mQ}\DEF\sup\{\rno:\RD{\rno}{\mW}{\mQ}<\infty\}\).
\end{lemma}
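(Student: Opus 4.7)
The plan is to reduce both claims to properties of the log-moment function
\(\psi(\rno) \DEF \ln \int (\der{\mW}{\rfm})^{\rno} (\der{\mQ}{\rfm})^{1-\rno} \rfm(\dif{\dout})\)
for any common dominating \(\rfm\) (the value being independent of \(\rfm\)).
By Definition \ref{def:divergence}, \(\RD{\rno}{\mW}{\mQ} = \psi(\rno)/(\rno-1)\)
for \(\rno \neq 1\), while \(\psi(1)=0\) and the one-sided limiting slopes of \(\psi\) at \(1\)
coincide with \(\RD{1}{\mW}{\mQ}\) whenever the latter is finite.

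For monotonicity I would first show that \(\psi\) is convex on \(\reals{+}\) by H\"older's inequality applied to \((\der{\mW}{\rfm})^{\rno_{1}}(\der{\mQ}{\rfm})^{1-\rno_{1}}\) and \((\der{\mW}{\rfm})^{\rno_{0}}(\der{\mQ}{\rfm})^{1-\rno_{0}}\) with exponents \(1/\beta\) and \(1/(1-\beta)\) for \(\rno = \beta \rno_{1} + (1-\beta) \rno_{0}\). Since \(\psi(1) = 0\), the classical chord-slope characterization of convex functions implies that \(\rno \mapsto \psi(\rno)/(\rno-1)\) is nondecreasing on \(\reals{+}\setminus\{1\}\); the sandwich \(\sup_{\rno < 1}\psi(\rno)/(\rno-1) \leq \RD{1}{\mW}{\mQ} \leq \inf_{\rno > 1}\psi(\rno)/(\rno-1)\) patches the missing point, the outer inequalities being the standard identification of the Kullback--Leibler divergence as the common value of the one-sided limits of \(\psi(\rno)/(\rno-1)\) at \(\rno=1\).

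For continuity I would split \(\reals{+}\) into three regimes. On \((0,1)\), the weighted AM--GM bound \((\der{\mW}{\rfm})^{\rno}(\der{\mQ}{\rfm})^{1-\rno} \leq \rno\der{\mW}{\rfm}+(1-\rno)\der{\mQ}{\rfm}\) dominates the integrand by the integrable function \(\der{\mW}{\rfm}+\der{\mQ}{\rfm}\), so dominated convergence yields continuity of \(\psi\) and hence of \(\RD{\rno}{\mW}{\mQ}\). On \((1, \chi_{\mW,\mQ})\), convexity of \(\psi\) lets one dominate the integrand at any such \(\rno\) by the integrands at two bracketing orders, both integrable since \(\RD{\rno'}{\mW}{\mQ} < \infty\) on this range; dominated convergence applies again. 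Left continuity at \(\rno = 1\) and at the endpoint \(\chi_{\mW,\mQ}\) (when in \([1,\infty)\) and \(\RD{\chi_{\mW,\mQ}}{\mW}{\mQ}<\infty\)) follows from monotone convergence after splitting \(\outS\) into the regions \(\{\der{\mW}{\rfm} \geq \der{\mQ}{\rfm}\}\) and its complement, where monotonicity of the integrand in \(\rno\) is of opposite sign. Lower semicontinuity on all of \(\reals{+}\) then follows from Fatou's lemma applied to the integrand defining \(\psi\), after accounting for the sign of \(\rno-1\) in the conversion back to \(\RD{\rno}{\mW}{\mQ}\).

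The main obstacle is the transition through \(\rno = 1\) and the behavior at \(\chi_{\mW,\mQ}\). At \(\rno = 1\) the definition of \(\RD{\rno}{\mW}{\mQ}\) changes from a ratio to an integral, so one must verify that the ratio limit agrees with the Kullback--Leibler integral even when \(\RD{1}{\mW}{\mQ} = \infty\); this requires careful monotone-convergence book-keeping on the two halves of \(\outS\) described above. Continuity at \(\chi_{\mW,\mQ}\) is necessarily one-sided because \(\RD{\rno}{\mW}{\mQ}\) can jump to \(+\infty\) immediately past it, which is precisely why the statement restricts continuity to \((0, 1 \vee \chi_{\mW,\mQ}]\) rather than to an open interval strictly containing this endpoint.
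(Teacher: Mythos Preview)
The paper does not prove this lemma; it is quoted verbatim from van Erven and Harremo\"{e}s \cite[Thms.~3 and 7]{ervenH14} as a preliminary fact, so there is no in-paper proof to compare against. Your outline is essentially the standard argument found in that reference: convexity of the log-moment function \(\psi\) via H\"older, chord-slope monotonicity of \(\psi(\rno)/(\rno-1)\) from \(\psi(1)=0\), and dominated/monotone convergence on the three regimes \((0,1)\), \((1,\chi_{\mW,\mQ})\), and at the endpoints. The plan is sound.

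One small caution on your lower-semicontinuity step: for a nondecreasing function, lower semicontinuity is equivalent to \emph{left} continuity, so the Fatou argument is not really doing the work there---you need precisely the left-continuity you already establish via monotone convergence (splitting on \(\{\der{\mW}{\rfm}\geq\der{\mQ}{\rfm}\}\) and its complement). Also be careful at \(\rno=1\) when \(\RD{1}{\mW}{\mQ}=\infty\): the identification of \(\RD{1}{\mW}{\mQ}\) with \(\lim_{\rno\uparrow 1}\psi(\rno)/(\rno-1)\) in that case requires the monotone-convergence argument you sketch, not just the ``standard identification'' you allude to; this is exactly where van Erven and Harremo\"{e}s spend some effort.
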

Since \(\RD{\rno}{\mW}{\mQ}=\tfrac{\rno}{1-\rno}\RD{1-\rno}{\mQ}{\mW}\) for 
all \(\rno\in(0,1)\),
Lemma \ref{lem:divergence-order} and \eqref{eq:shiryaev} imply
\begin{align}
\notag
\RD{\rno}{\mW}{\mQ}
&\leq \begin{cases}
\RD{\sfrac{1}{2}}{\mW}{\mQ}
&\mbox{if~}\rno\in(0,\sfrac{1}{2}]
\\
\tfrac{\rno}{1-\rno}\RD{\sfrac{1}{2}}{\mW}{\mQ}
&\mbox{if~}\rno\in(\sfrac{1}{2},1)
\end{cases}
\\
&\leq \tfrac{2}{1-\rno} \ln \tfrac{2}{2-\lon{\mW-\mQ}}
&
&\forall \rno\in(0,1).
\label{eq:shiryaev-new}
\end{align}
For a slightly tighter bound, see \cite[eq. (24), p. 365]{shiryaev}.

If \(\alg{G}\) is a sub-\(\sigma\)-algebra of \(\outA\),
then for any  \(\mW\) and \(\mQ\) in \(\pmea{\outA}\) 
the identities  \(\wmn{|\alg{G}}(\oev)\!=\!\mW(\oev)\) for all 
\(\oev\!\in\!\alg{G}\) 
and \(\qmn{|\alg{G}}(\oev)\!=\!\mQ(\oev)\) for all \(\oev\!\in\!\alg{G}\) uniquely define 
probability measures \(\wmn{|\alg{G}}\) and \(\qmn{|\alg{G}}\) on \((\outS,\alg{G})\).
We denote \(\RD{\rno}{\wmn{|\alg{G}}}{\qmn{|\alg{G}}}\) by 
\(\RDF{\rno}{\alg{G}}{\mW}{\mQ}\). 
\begin{lemma}[{\!\!\cite[Thm. 21]{ervenH14}}]\label{lem:divergence:ISAlimit}
	Let \(\outA_{1}\subset\outA_{2}\subset\cdots\subset \outA\) be an increasing family
	of \(\sigma\)-algebras, and let \(\outA_{\infty}=\gsal{\cup_{\ind=1}^{\infty}\outA_{\ind}}\)	
	be the smallest \(\sigma\)-algebra containing them. Then for any order \(\rno\in\reals{+}\)
	\begin{align}
	\notag
	\lim\nolimits_{\ind \to \infty}
	\RDF{\rno}{\outA_{\ind}}{\mW}{\mQ}
	&=\RDF{\rno}{\outA_{\infty}}{\mW}{\mQ}.
	\end{align}
\end{lemma}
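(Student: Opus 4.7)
The plan is to establish the two halves of the equality separately. Monotonicity of R\'enyi divergence under refinement of the $\sigma$-algebra---an immediate consequence of the data processing inequality applied to the identity map viewed as a coarsening from $\outA_{\jnd}$ to $\outA_{\ind}$ for $\ind\leq\jnd$---shows that $\RDF{\rno}{\outA_{\ind}}{\mW}{\mQ}$ is monotone nondecreasing in $\ind$ and bounded above by $\RDF{\rno}{\outA_{\infty}}{\mW}{\mQ}$. This gives $\lim_{\ind\to\infty}\RDF{\rno}{\outA_{\ind}}{\mW}{\mQ}\leq\RDF{\rno}{\outA_{\infty}}{\mW}{\mQ}$ directly.

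For the reverse inequality I would invoke Doob's martingale convergence theorem. Taking $\rfm=\tfrac{1}{2}(\mW+\mQ)$ as a common dominating measure on $(\outS,\outA_{\infty})$ and writing $f=\der{\mW}{\rfm}$, $g=\der{\mQ}{\rfm}$, the conditional expectations $f_{\ind}=\ECXS{\rfm}{f}{\outA_{\ind}}$ and $g_{\ind}=\ECXS{\rfm}{g}{\outA_{\ind}}$ coincide with the $\rfm_{|\outA_{\ind}}$-densities of $\mW_{|\outA_{\ind}}$ and $\mQ_{|\outA_{\ind}}$, so for $\rno\neq 1$,
\[
\RDF{\rno}{\outA_{\ind}}{\mW}{\mQ}
=\tfrac{1}{\rno-1}\ln\int f_{\ind}^{\rno}g_{\ind}^{1-\rno}\,\rfm(\dif{\dout}).
\]
Doob's theorem supplies $f_{\ind}\to f$ and $g_{\ind}\to g$ both $\rfm$-a.s.\ and in $L^{1}(\rfm)$, so the integrand converges $\rfm$-a.e.\ to $f^{\rno}g^{1-\rno}$.

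I would then pass the limit through the integral separately in each regime of $\rno$. For $\rno\in(0,1)$ the weighted AM--GM bound $f_{\ind}^{\rno}g_{\ind}^{1-\rno}\leq\rno f_{\ind}+(1-\rno)g_{\ind}$ dominates the integrand by a uniformly integrable sequence, allowing Vitali's theorem to give convergence of the integrals. For $\rno>1$ the map $(x,y)\mapsto x^{\rno}y^{1-\rno}$ is convex on the positive quadrant (being the perspective of the convex function $t\mapsto t^{\rno}$), so conditional Jensen combined with the refinement inequality shows the integral sequence is monotone nondecreasing and bounded above by $\int f^{\rno}g^{1-\rno}\,\rfm(\dif{\dout})$, while Fatou's lemma supplies the matching liminf. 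The $\rno=1$ case is handled by applying the same machinery to $f_{\ind}\ln(f_{\ind}/g_{\ind})$, which forms an $\rfm$-submartingale by conditional Jensen on the convex perspective $(x,y)\mapsto x\ln(x/y)$; Fatou applies after the zero-mean correction $g_{\ind}-f_{\ind}$, using the bound $x\ln(x/y)\geq x-y$ to handle the sign issue. The main obstacle will be the $\rno>1$ regime when $\mW\NAC\mQ$ on $\outA_{\infty}$: the limiting integrand is then $\rfm$-essentially infinite, and one must combine Fatou's lemma with the monotonicity in $\ind$ to conclude that $\RDF{\rno}{\outA_{\ind}}{\mW}{\mQ}\uparrow\infty$, matching $\RDF{\rno}{\outA_{\infty}}{\mW}{\mQ}=\infty$.
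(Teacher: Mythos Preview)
The paper does not supply its own proof of this lemma: it is quoted verbatim from \cite[Thm.~21]{ervenH14} in the preliminaries section and used as a black box (notably in the proof of Lemma~\ref{lem:divergence-measurability}). There is therefore nothing in the paper to compare your argument against.

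That said, your outline is sound and is essentially the standard route to this result. The data-processing direction is immediate. For the reverse inequality your martingale setup with the dominating measure \(\rfm=\tfrac12(\mW+\mQ)\) is the right one, and the case split is correct: for \(\rno\in(0,1)\) the AM--GM domination \(f_{\ind}^{\rno}g_{\ind}^{1-\rno}\leq\rno f_{\ind}+(1-\rno)g_{\ind}\) gives a uniformly integrable majorant so Vitali applies; for \(\rno>1\) Fatou on the nonnegative integrand yields \(\int f^{\rno}g^{1-\rno}\rfm(\dif{\dout})\leq\liminf_{\ind}\int f_{\ind}^{\rno}g_{\ind}^{1-\rno}\rfm(\dif{\dout})\), which combined with the data-processing upper bound closes the gap (the submartingale observation is true but not actually needed here); and for \(\rno=1\) the shift by \(g_{\ind}-f_{\ind}\) to get a nonnegative integrand before applying Fatou is the standard fix. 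Your handling of the \(\mW\NAC\mQ\), \(\rno>1\) case is also correct: Fatou alone forces the limit to \(\infty\).
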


\subsection{Tilted Probability Measure}\label{sec:preliminary-tiltedprobabilitymeasure}
\begin{definition}\label{def:tiltedprobabilitymeasure}
For any \(\rno\in\reals{+}\) and \(\mW,\mQ\in\pmea{\outA}\) satisfying 
\(\RD{\rno}{\mW}{\mQ}<\infty\), 
\emph{the order \(\rno\) tilted probability measure} \(\wma{\rno}{\mQ}\) is 
	\begin{align}
\notag %	\label{eq:def:tiltedprobabilitymeasure}
	\der{\wma{\rno}{\mQ}}{\rfm}
	&\DEF e^{(1-\rno)\RD{\rno}{\mW}{\mQ}}(\der{\mW}{\rfm})^{\rno} (\der{\mQ}{\rfm})^{1-\rno}.
	\end{align}
\end{definition}

Note that \(\wma{1}{\mQ}=\mW\) for any \(\mQ\) satisfying \(\RD{1}{\mW}{\mQ}<\infty\).
For other orders one can confirm the following identity by substitution:
if \(\RD{\rno}{\mW}{\mQ}<\infty\), then for any \(\mV\in\pmea{\outA}\) satisfying 
both \(\RD{1}{\mV}{\mW}<\infty\) and \(\RD{1}{\mV}{\mQ}<\infty\) also satisfies
\begin{align}
\notag
\tfrac{1}{1-\rno}\RD{1}{\mV}{\wma{\rno}{\mQ}}
+\RD{\rno}{\mW}{\mQ}
&=\tfrac{\rno}{1-\rno}\RD{1}{\mV}{\mW}+\RD{1}{\mV}{\mQ}.
\end{align}
This identity is used to derive the following variational characterization of
the \renyi divergence for orders other than one.
\begin{lemma}[{\!\!\cite[Thm. 30]{ervenH14}}]\label{lem:variational}
For any \(\mW,\mQ\in\pmea{\outA}\)	
	\begin{align}
\notag %	\label{eq:lem:variational}
	\RD{\rno}{\mW}{\mQ}
	&=\begin{cases}
	\inf\nolimits_{\mV\in\pmea{\outA}}
	\tfrac{\rno}{1-\rno}\RD{1}{\mV}{\mW}+\RD{1}{\mV}{\mQ}
	&\rno\in(0,1)
	\\
	\sup\nolimits_{\mV\in\pmea{\outA}}
	\tfrac{\rno}{1-\rno}\RD{1}{\mV}{\mW}+\RD{1}{\mV}{\mQ}
	&\rno\in(1,\infty)
	\end{cases}
	\end{align}
	where \(\tfrac{\rno}{1-\rno}\RD{1}{\mV}{\mW}+\RD{1}{\mV}{\mQ}\) stands for \(-\infty\) 
	when \(\rno\in(1,\infty)\) and \(\RD{1}{\mV}{\mW}=\RD{1}{\mV}{\mQ}=\infty\).
	Furthermore, if \(\RD{\rno}{\mW}{\mQ}\) is finite and 
	either \(\rno\in(0,1)\) or \(\RD{1}{\wma{\rno}{\mQ}}{\mW}<\infty\), then
	\begin{align}
	\label{eq:lem:variational-tilted}
	\RD{\rno}{\mW}{\mQ}
	&=\tfrac{\rno}{1-\rno}\RD{1}{\wma{\rno}{\mQ}}{\mW}+\RD{1}{\wma{\rno}{\mQ}}{\mQ}.
	\end{align}
\end{lemma}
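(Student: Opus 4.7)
The plan is to leverage the identity displayed just before the lemma statement: for any $\mV \in \pmea{\outA}$ with $\RD{1}{\mV}{\mW}$ and $\RD{1}{\mV}{\mQ}$ finite and any $\rno \in \reals{+}\setminus\{1\}$ with $\RD{\rno}{\mW}{\mQ} < \infty$,
\begin{equation*}
\tfrac{1}{1-\rno}\RD{1}{\mV}{\wma{\rno}{\mQ}} + \RD{\rno}{\mW}{\mQ} = \tfrac{\rno}{1-\rno}\RD{1}{\mV}{\mW} + \RD{1}{\mV}{\mQ}.
\end{equation*}
Because $\RD{1}{\mV}{\wma{\rno}{\mQ}} \geq 0$ by Lemma \ref{lem:divergence-pinsker}, the sign of the prefactor $\tfrac{1}{1-\rno}$ already delivers the inequality half of the variational formula in each case: for $\rno \in (0,1)$ one obtains $\RD{\rno}{\mW}{\mQ} \leq \tfrac{\rno}{1-\rno}\RD{1}{\mV}{\mW} + \RD{1}{\mV}{\mQ}$, giving the nontrivial bound on the claimed infimum, while for $\rno \in (1,\infty)$ the inequality reverses and gives the nontrivial bound on the claimed supremum. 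The convention that the supremum equals $-\infty$ when $\rno>1$ and no admissible $\mV$ exists is built into the statement.

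For achievability when $\RD{\rno}{\mW}{\mQ}$ is finite, I would substitute $\mV = \wma{\rno}{\mQ}$ into the identity: the left-hand side collapses to $\RD{\rno}{\mW}{\mQ}$ because $\RD{1}{\wma{\rno}{\mQ}}{\wma{\rno}{\mQ}} = 0$, yielding \eqref{eq:lem:variational-tilted}. This substitution is legitimate as soon as $\wma{\rno}{\mQ}$ is itself admissible, i.e.\ both $\RD{1}{\wma{\rno}{\mQ}}{\mW}$ and $\RD{1}{\wma{\rno}{\mQ}}{\mQ}$ are finite. Inserting $\ln\der{\wma{\rno}{\mQ}}{\rfm} = (1-\rno)\RD{\rno}{\mW}{\mQ} + \rno\ln\der{\mW}{\rfm} + (1-\rno)\ln\der{\mQ}{\rfm}$ into the order-one divergence, a direct calculation shows that for $\rno \in (0,1)$ finiteness of $\RD{\rno}{\mW}{\mQ}$ is enough (the relevant log-integrals against $\der{\wma{\rno}{\mQ}}{\rfm}$ are controlled by $(\der{\mW}{\rfm})^{\rno}(\der{\mQ}{\rfm})^{1-\rno}$, whose integral equals $e^{(\rno-1)\RD{\rno}{\mW}{\mQ}}$). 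For $\rno > 1$ only the extra hypothesis $\RD{1}{\wma{\rno}{\mQ}}{\mW} < \infty$ is needed, since the identity with $\mV = \wma{\rno}{\mQ}$ then forces $\RD{1}{\wma{\rno}{\mQ}}{\mQ}$ to be finite automatically. This bifurcation is exactly what the achievability clause of the lemma records.

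The main obstacle will be the case $\RD{\rno}{\mW}{\mQ} = \infty$, where the displayed identity is not directly available and one must argue that the variational expression itself attains $+\infty$. For $\rno \in (0,1)$ this only happens when $\mW \perp \mQ$, so no probability measure can be absolutely continuous with respect to both $\mW$ and $\mQ$; the set of admissible $\mV$ is empty and the infimum is $+\infty$ by convention, matching $\RD{\rno}{\mW}{\mQ}$. For $\rno \in (1,\infty)$ with $\RD{\rno}{\mW}{\mQ} = \infty$, I would take tilted measures $\wma{\rno'}{\mQ}$ at orders $\rno' \uparrow \chi_{\mW,\mQ}$ as test points, exploit the achievability already proved at each such $\rno'$, and combine it with the monotonicity and continuity of $\rnf \mapsto \RD{\rnf}{\mW}{\mQ}$ from Lemma \ref{lem:divergence-order} to conclude that the supremum is at least $\lim_{\rno'\uparrow\chi_{\mW,\mQ}}\RD{\rno'}{\mW}{\mQ}$. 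The delicate part is showing this limit is indeed $+\infty$ (and separately treating the range $\rno \geq \chi_{\mW,\mQ}$); checking that the candidate test measures really do keep $\tfrac{\rno}{1-\rno}\RD{1}{\mV}{\mW} + \RD{1}{\mV}{\mQ}$ divergent, rather than letting it be artificially finite, is where the bulk of the technical work will lie.
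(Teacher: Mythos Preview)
The paper does not give its own proof of this lemma: it is cited as \cite[Thm.~30]{ervenH14}, and the only argument the paper supplies is the identity displayed immediately before the lemma together with the one-line remark ``This identity is used to derive the following variational characterization.'' Your proposal takes exactly that identity as its starting point and fleshes out the derivation, so in that sense you are following precisely the route the paper indicates.

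Your sketch is sound in its broad outline. A few remarks. First, for the inequality half when $\rno>1$ you should also dispose of the case where $\RD{1}{\mV}{\mW}<\infty$ but $\RD{1}{\mV}{\mQ}=\infty$ (so the objective is $+\infty$): here $\mV\AC\mW$, and if $\mV\not\AC\mQ$ one finds an event with positive $\mW$-mass and zero $\mQ$-mass, forcing $\RD{\rno}{\mW}{\mQ}=\infty$ as well; the remaining sub-case $\mV\AC\mQ$ with divergent integral needs a short separate argument. Second, your claim that for $\rno\in(0,1)$ finiteness of $\RD{\rno}{\mW}{\mQ}$ alone guarantees finiteness of both $\RD{1}{\wma{\rno}{\mQ}}{\mW}$ and $\RD{1}{\wma{\rno}{\mQ}}{\mQ}$ is correct, but the one-line justification you give (``controlled by $(\der{\mW}{\rfm})^{\rno}(\der{\mQ}{\rfm})^{1-\rno}$'') is not quite enough on its own; you will need a bound of the form $t^{\rno}\abs{\ln t}\leq C_{\rno}(1+t)$ applied pointwise. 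Third, the $\rno>1$, $\RD{\rno}{\mW}{\mQ}=\infty$ case that you flag as delicate is indeed where the work lies; your idea of using $\wma{\rno'}{\mQ}$ for $\rno'\uparrow\chi_{\mW,\mQ}$ works when $\chi_{\mW,\mQ}>1$, but you also need to handle $\chi_{\mW,\mQ}\leq 1$, where no such tilted measures at orders above $1$ are available and one instead shows directly that $\mV=\mW$ (or a suitable truncation) already gives $+\infty$.
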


We have observed in Lemma \ref{lem:divergence-order} that 
\(\RD{\rno}{\mW}{\mQ}\)
is continuous in \(\rno\) on the closure of the interval that it is finite. 
Lemma \ref{lem:analyticity}, in the following, establishes 
the analyticity of \(\RD{\rno}{\mW}{\mQ}\) in \(\rno\) on the interior of the interval 
that \(\RD{\rno}{\mW}{\mQ}\) is finite. 
Lemma \ref{lem:analyticity} also establishes the analyticity 
---and hence the finiteness---
of \(\RD{1}{\wma{\rno}{\mQ}}{\mW}\) and \(\RD{1}{\wma{\rno}{\mQ}}{\mQ}\) 
on the same interval.
This allows us to assert the validity 
of \eqref{eq:lem:variational-tilted} on the same interval:
\begin{align}
\notag %\label{eq:tilted}
\RD{\rno}{\mW}{\mQ}
&=\tfrac{\rno}{1-\rno}\RD{1}{\wma{\rno}{\mQ}}{\mW}+\RD{1}{\wma{\rno}{\mQ}}{\mQ}
&
&\forall \rno\in(0,\chi_{\mW,\mQ}).
\end{align}
 
\begin{lemma}\label{lem:analyticity}
For any \(\mW,\mQ\in\pmea{\outA}\) satisfying
 \(\chi_{\mW,\mQ}>0\), for \(\chi_{\mW,\mQ}\DEF\sup\{\rno:\RD{\rno}{\mW}{\mQ}<\infty\}\),
 \(\RD{\rno}{\mW}{\mQ}\), \(\RD{1}{\wma{\rno}{\mQ}}{\mW}\),
and
\(\RD{1}{\wma{\rno}{\mQ}}{\mQ}\) 
are analytic functions of \(\rno\)
on \((0,\chi_{\mW,\mQ})\).
Furthermore,	
\begin{align}
\label{eq:lem:analyticity-derivative}
\left.\pder{^{\knd}\RD{\rno}{\mW}{\mQ}}{\rno^{\knd}}\right\vert_{\rno=\rnf}
&=\begin{cases}
\knd!
\sum\limits_{\tin=0}^{\knd}
\tfrac{(-1)^{\knd-\tin}}{(\rnf-1)^{\knd-\tin+1}}
\GX_{\mW,\mQ}^{\tin}(\rnf)
&\rnf\neq 1 
\\
\knd!
\GX_{\mW,\mQ}^{\knd+1}(1)
&\rnf=1
\end{cases}
\end{align}
where \(\GX_{\mW,\mQ}^{\tin}(\rnf)\) is 
defined in terms of the set 
\(\set{J_{\tin}}\) as follows
\begin{align}
\label{eq:lem:def:set}
\set{J_{\tin}}
&\DEF\{(\jnd_{1},\jnd_{2},\ldots,\jnd_{\tin}): \jnd_{\ind}\in\integers{\geq0}~\forall \ind\mbox{~and~}
1\jnd_{1}+2\jnd_{2}+\ldots+\tin \jnd_{\tin}=\tin\},
\\
\label{eq:lem:def:G}
\GX_{\mW,\mQ}^{\tin}(\rnf)
&\DEF\begin{cases}
(\rnf-1)\RD{\rnf}{\mW}{\mQ}
&\tin=0
\\
\sum\nolimits_{\set{J}_{\tin}} 
\tfrac{-(\jnd_{1}+\jnd_{2}+\cdots +\jnd_{\tin}-1)!}{\jnd_{1}! \jnd_{2}!\ldots \jnd_{\tin}!}
\prod\nolimits_{\ind=1}^{\tin}
\left(\tfrac{(-1)}{\ind!}\EXS{\wma{\rnf}{\mQ}}{
	\left(\ln \der{\mW}{\rfm}-\ln \der{\mQ}{\rfm}\right)^{\ind}}\right)^{\jnd_{\ind}}
&\tin\in\integers{+}
\end{cases}.
\end{align}
\end{lemma}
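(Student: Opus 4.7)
The plan is to reduce all three analyticity claims to a single technical fact: the function $M(\rno)\DEF\int\fX^{\rno}\gX^{1-\rno}\rfm(\dif\dout)$---where $\fX\DEF\der{\mW}{\rfm}$, $\gX\DEF\der{\mQ}{\rfm}$, and $\rfm\DEF\mW+\mQ$---is real-analytic on $(0,\chi_{\mW,\mQ})$. By Definition \ref{def:divergence} and Lemma \ref{lem:divergence-order}, on this interval $M(\rno)$ is finite and $\ln M(\rno)=(\rno-1)\RD{\rno}{\mW}{\mQ}$; since $M$ is nonnegative with finite logarithm, $M(\rno)>0$ throughout, and $M(1)=\int\fX\,\rfm(\dif\dout)=1$, so $\GX(\rno)\DEF\ln M(\rno)$ satisfies $\GX(1)=0$.

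The key step is the analyticity of $M$. Fix $\rnf\in(0,\chi_{\mW,\mQ})$ and pick $\epsilon>0$ with $[\rnf-\epsilon,\rnf+\epsilon]\subset(0,\chi_{\mW,\mQ})$. On $A\DEF\{\fX>0,\gX>0\}$ one has $\fX^{\rno}\gX^{1-\rno}=\gX\,e^{\rno(\ln\fX-\ln\gX)}$, while the integrand vanishes on $A^{c}$ after discarding the $\rfm$-null set $\{\fX>0,\gX=0\}$ (null because $\RD{\rno}{\mW}{\mQ}<\infty$ for some $\rno>1$ forces $\mW\AC\mQ$, or else the integrand simply vanishes there for $\rno\in(0,1)$). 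Expanding $e^{\rno(\ln\fX-\ln\gX)}$ in a Taylor series about $\rnf$ and using the pointwise bound
\[
\sum_{\knd\geq 0}\frac{|\rno-\rnf|^{\knd}|\ln\fX-\ln\gX|^{\knd}}{\knd!}\,e^{\rnf(\ln\fX-\ln\gX)}
\leq e^{(\rnf+\epsilon)(\ln\fX-\ln\gX)}+e^{(\rnf-\epsilon)(\ln\fX-\ln\gX)}
\]
valid for $|\rno-\rnf|\leq\epsilon$, with both dominants $\mQ$-integrable because $M(\rnf\pm\epsilon)<\infty$, the dominated convergence theorem licenses interchanging sum and integral. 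This produces a convergent power series representation of $M$ on $(\rnf-\epsilon,\rnf+\epsilon)$ with coefficients $\tfrac{1}{\knd!}M(\rnf)\EXS{\wma{\rnf}{\mQ}}{(\ln\fX-\ln\gX)^{\knd}}$, where the identification uses $\der{\wma{\rnf}{\mQ}}{\mQ}=e^{\rnf(\ln\fX-\ln\gX)}/M(\rnf)$ on $A$.

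Analyticity and strict positivity of $M$ make $\GX=\ln M$ analytic on $(0,\chi_{\mW,\mQ})$, whence $\RD{\rno}{\mW}{\mQ}=\GX(\rno)/(\rno-1)$ is analytic at every $\rnf\neq 1$; since $\GX(1)=0$, the singularity at $\rnf=1$ is removable. For the other two divergences, a direct computation from $\der{\wma{\rno}{\mQ}}{\mQ}=e^{\rno(\ln\fX-\ln\gX)}/M(\rno)$ and $\der{\wma{\rno}{\mQ}}{\mW}=e^{(\rno-1)(\ln\fX-\ln\gX)}/M(\rno)$, together with $\EXS{\wma{\rno}{\mQ}}{\ln\fX-\ln\gX}=M'(\rno)/M(\rno)=\GX'(\rno)$, yields
\[
\RD{1}{\wma{\rno}{\mQ}}{\mQ}=-\GX(\rno)+\rno\GX'(\rno),\qquad
\RD{1}{\wma{\rno}{\mQ}}{\mW}=-\GX(\rno)+(\rno-1)\GX'(\rno),
\]
both analytic on $(0,\chi_{\mW,\mQ})$.

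Finally, \eqref{eq:lem:analyticity-derivative} follows from Fa\`a di Bruno combined with Leibniz. Setting $u(\rno)\DEF M(\rno)/M(\rnf)-1=\sum_{\ind\geq 1}\tfrac{\EXS{\wma{\rnf}{\mQ}}{(\ln\fX-\ln\gX)^{\ind}}}{\ind!}(\rno-\rnf)^{\ind}$ and using $\ln(1+u)=\sum_{n\geq 1}(-1)^{n-1}u^{n}/n$, a routine regrouping of terms by the partitions $\set{J}_{\tin}$ identifies the $(\rno-\rnf)^{\tin}$-coefficient of $\GX(\rno)-\GX(\rnf)$ as $\GX_{\mW,\mQ}^{\tin}(\rnf)$, whence $\GX^{(\tin)}(\rnf)=\tin!\,\GX_{\mW,\mQ}^{\tin}(\rnf)$ for $\tin\geq 1$; the $\tin=0$ entry matches the convention $\GX_{\mW,\mQ}^{0}(\rnf)=\GX(\rnf)$. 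For $\rnf\neq 1$, Leibniz applied to $\GX(\rno)\cdot(\rno-1)^{-1}$, combined with $\frac{d^{m}}{d\rno^{m}}(\rno-1)^{-1}\big|_{\rnf}=(-1)^{m}m!(\rnf-1)^{-m-1}$, yields the alternating sum as claimed. For $\rnf=1$, the Taylor expansion $\GX(\rno)=\sum_{\tin\geq 1}\GX_{\mW,\mQ}^{\tin}(1)(\rno-1)^{\tin}$ divided by $(\rno-1)$ becomes $\sum_{\knd\geq 0}\GX_{\mW,\mQ}^{\knd+1}(1)(\rno-1)^{\knd}$, so the $\knd$-th derivative at $\rno=1$ equals $\knd!\,\GX_{\mW,\mQ}^{\knd+1}(1)$. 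The main technical hurdle is the dominated-convergence analyticity of $M$; the subsequent Fa\`a di Bruno/Leibniz algebra is routine.
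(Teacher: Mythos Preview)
Your argument is correct and shares the paper's overall architecture: establish analyticity of \(M(\rno)=\int(\der{\mW}{\rfm})^{\rno}(\der{\mQ}{\rfm})^{1-\rno}\rfm(\dif{\dout})\), pass to \(\GX=\ln M\) and then to \(\RD{\rno}{\mW}{\mQ}=\GX(\rno)/(\rno-1)\), and identify the derivatives via Fa\`a di Bruno and the Leibniz rule. The execution differs in two places. For the analyticity of \(M\), the paper bounds each \(\lvert\partial_{\rno}^{\knd}M\rvert\) using \(z\ln z\geq -1/e\) and Stirling's approximation, then appeals to an analyticity criterion from Krantz--Parks; your two-sided envelope \(e^{(\rnf+\epsilon)(\ln\fX-\ln\gX)}+e^{(\rnf-\epsilon)(\ln\fX-\ln\gX)}\) together with dominated convergence exhibits the power series directly and is more economical. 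For \(\RD{1}{\wma{\rno}{\mQ}}{\mW}\) and \(\RD{1}{\wma{\rno}{\mQ}}{\mQ}\), the paper first reads the former off from \eqref{eq:divergence-derivative-first} and then extracts the latter via Lemma~\ref{lem:variational}; your closed forms \((\rno-1)\GX'(\rno)-\GX(\rno)\) and \(\rno\GX'(\rno)-\GX(\rno)\) bypass both steps. One trade-off worth noting: the paper's explicit bound on \(\lvert\partial_{\rno}^{\knd}M\rvert\) is precisely what drives the uniform estimates in Lemma~\ref{lem:uniform-analyticity}, so under your route that lemma would require a separate estimate. A minor wording issue: the set \(\{\fX>0,\gX=0\}\) need not be \(\rfm\)-null when \(\chi_{\mW,\mQ}\leq 1\); your parenthetical correctly handles that case by noting the integrand vanishes there, but the phrase ``discarding the \(\rfm\)-null set'' is accurate only when \(\chi_{\mW,\mQ}>1\).
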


Lemma \ref{lem:analyticity} is new to the best of our knowledge; it is proved in 
Appendix \ref{sec:analyticityproofs}
using standard results on the continuity and differentiability of parametric integrals and 
Fa\`{a} di Bruno formula for derivatives of compositions of smooth functions.

Note that 
\(\set{J}_{1}=\{(1)\}\),
\(\set{J}_{2}=\{(2,0),(0,1)\}\),
and
\(\set{J}_{3}=\{(3,0,0),(1,1,0),(0,0,1)\}\).
Thus one can confirm using \eqref{eq:lem:def:G} by substitution that
\begin{align}
\notag
\GX_{\mW,\mQ}^{1}(\rnf)
&=\EXS{\wma{\rnf}{\mQ}}{\cln{}}
\\
\notag
\GX_{\mW,\mQ}^{2}(\rnf)
&=\tfrac{1}{2}\left(\EXS{\wma{\rnf}{\mQ}}{\cln{}^{2}}-\EXS{\wma{\rnf}{\mQ}}{\cln{}}^{2}\right)
%\\
%\notag
&
&=\tfrac{1}{2}\EXS{\wma{\rnf}{\mQ}}{\left(\cln{}-\EXS{\wma{\rnf}{\mQ}}{\cln{}}\right)^{2}}
\\
\notag
\GX_{\mW,\mQ}^{3}(\rnf)
&=\tfrac{1}{3!}\EXS{\wma{\rnf}{\mQ}}{\cln{}^{3}} 
-\tfrac{1}{2}\EXS{\wma{\rnf}{\mQ}}{\cln{}^{2}}\EXS{\wma{\rnf}{\mQ}}{\cln{}} 
+\tfrac{1}{3}\EXS{\wma{\rnf}{\mQ}}{\cln{}}^{3} 
%\\
%\notag
&
&=\tfrac{1}{3!}\EXS{\wma{\rnf}{\mQ}}{\left(\cln{}-\EXS{\wma{\rnf}{\mQ}}{\cln{}}\right)^{3}}
\end{align}
where \(\cln{}=\ln\der{\mW}{\rfm}-\ln \der{\mQ}{\rfm}\).
If we substitute these expressions for 
\(\GX_{\mW,\mQ}^{1}(\rnf)\), \(\GX_{\mW,\mQ}^{2}(\rnf)\), and \(\GX_{\mW,\mQ}^{3}(\rnf)\)
in \eqref{eq:lem:analyticity-derivative}
and use the identity \(\cln{}=\tfrac{1}{\rnf-1}\left(\ln\der{\wma{\rnf}{\mQ}}{\mW}+
\GX_{\mW,\mQ}^{0}(\rnf)\right)\) which holds \(\wma{\rnf}{\mQ}\)-almost surely for
\(\rnf\in(0,\chi_{\mW,\mQ})\setminus\{1\}\),  
we get the following more succinct expressions for the first two derivatives
of \(\RD{\rno}{\mW}{\mQ}\) with respect to \(\rno\):
\begin{align}
\label{eq:divergence-derivative-first}
\left.\pder{}{\rno}\RD{\rno}{\mW}{\mQ}\right\vert_{\rno=\rnf}
&=\begin{cases}
\tfrac{1}{(\rnf-1)^2}\RD{1}{\wma{\rnf}{\mQ}}{\mW}
&\rnf\neq 1
\\
\tfrac{1}{2}
\EXS{\mW}{\left(\ln \der{\mW}{\mQ}-\RD{1}{\mW}{\mQ}\right)^{2}}
&\rnf= 1
\end{cases},
\\
\label{eq:divergence-derivative-second}
\left.\pder{^{2}}{\rno^{2}}\RD{\rno}{\mW}{\mQ}\right\vert_{\rno=\rnf}
&=\begin{cases}
\tfrac{1}{(\rnf-1)^{3}}\left(
\EXS{\wma{\rnf}{\mQ}}{\left(\ln\der{\wma{\rnf}{\mQ}}{\mW}\right)^{2}}
-2\RD{1}{\wma{\rnf}{\mQ}}{\mW}
-\left[\RD{1}{\wma{\rnf}{\mQ}}{\mW}\right]^{2}
\right)
&\rnf\neq 1
\\
\tfrac{1}{3}
\EXS{\mW}{\left(\ln \der{\mW}{\mQ}-\RD{1}{\mW}{\mQ}\right)^{3}}
&\rnf=1
\end{cases}.
\end{align}

Analyticity of \(\RD{\rno}{\mW}{\mQ}\) on \((0,\chi_{\mW,\mQ})\) implies that 
for any \(\rnf\in(0,\chi_{\mW,\mQ})\) there exists an open interval containing
\(\rnf\) on which \(\RD{\rno}{\mW}{\mQ}\) is equal to the power series determined 
by the derivatives of \(\RD{\rno}{\mW}{\mQ}\) at \(\rno=\rnf\).
If we have a finite collection of pairs of probability measures 
\(\{(\wmn{\ind},\qmn{\ind})\}_{\ind\in\set{I}}\), 
then for any \(\rnf\) that is in \((0,\chi_{\wmn{\ind},\qmn{\ind}})\) for all 
\(\ind\in\set{I}\) there exists an open interval containing \(\rnf\) on which each \(\RD{\rno}{\wmn{\ind}}{\qmn{\ind}}\) 
is equal to the power series determined by the derivatives of \(\RD{\rno}{\wmn{\ind}}{\qmn{\ind}}\) 
at \(\rno=\rnf\).
When the collection of pairs of probability measures is infinite,
then there might not be an open interval containing \(\rnf\) that is contained in all 
\((0,\chi_{\wmn{\ind},\qmn{\ind}})\)'s. 
Lemma \ref{lem:uniform-analyticity}, in the following, asserts the existence of such an interval 
when \(\RD{\rnb}{\wmn{\ind}}{\qmn{\ind}}\) is uniformly bounded for a \(\rnb>\rnf\) for all
\(\ind\in\set{I}\).
In addition, Lemma \ref{lem:uniform-analyticity} asserts uniform approximation error 
terms, over all \(\ind\in\set{I}\), for the power series on that interval.

\begin{lemma}\label{lem:uniform-analyticity}
For any \(\gamma,\rnf,\rnb\in\reals{+}\) satisfying \(\rnf\in(0,\rnb)\) 
and \(\mW,\mQ\in\pmea{\outA}\) satisfying \(\RD{\rnb}{\mW}{\mQ}\leq \gamma\),
\begin{align}
\label{eq:lem:uniform-analyticity-derivativebound}
\abs{\left.\pder{^{\knd}\RD{\rno}{\mW}{\mQ}}{\rno^{\knd}}\right\vert_{\rno=\rnf}} 
&\leq
\begin{cases}
\knd!\tau^{\knd+1}\knd
&\rnf\neq 1
\\
\knd!\tau^{\knd+1}
&\rnf=1
\end{cases}
\\
\label{eq:lem:uniform-analyticity-taylor}
\abs{\RD{\rnt}{\mW}{\mQ}-\sum\nolimits_{\ind=0}^{\knd-1}\tfrac{(\rnt-\rnf)^{\ind}}{\ind!}
\left.\pder{^{\ind}\RD{\rno}{\mW}{\mQ}}{\rno^{\ind}}\right\vert_{\rno=\rnf}}
&\leq
\begin{cases}
\tfrac{\tau^{\knd+1}\abs{\rnt-\rnf}^{\knd}}{1-\abs{\rnt-\rnf}\tau}\left[\knd-1+\tfrac{1}{1-\abs{\rnt-\rnf}\tau}\right]
&\rnf\neq1
\\
\tfrac{\tau^{\knd+1}\abs{\rnt-\rnf}^{\knd}}{1-\abs{\rnt-\rnf}\tau}
&\rnf=1
\end{cases}
&
&\forall \rnt:\abs{\rnt-\rnf}\leq \tfrac{1}{\tau}
\end{align}
where
\begin{align}
\label{eq:lem:uniform-analyticity-tau}
\tau
&\DEF\begin{cases}
\tfrac{1}{\abs{\rnf-1}}\vee 
\left[\tfrac{1+e^{(1\vee\rnb)\gamma}}{\rnf \wedge(\rnb-\rnf)}+\gamma\right]
&\rnf\neq 1
\\
\tfrac{1+e^{\rnb\gamma}}{1 \wedge(\rnb-1)}
&\rnf=1
\end{cases}.
\end{align}
\end{lemma}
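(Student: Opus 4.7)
\smallskip
\noindent\textbf{Proof plan for Lemma \ref{lem:uniform-analyticity}.}

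The plan is to reduce both bounds to suitable control of the moments of the log-likelihood ratio \(\cln{}=\ln\der{\mW}{\rfm}-\ln\der{\mQ}{\rfm}\) under the tilted measure \(\wma{\rnf}{\mQ}\). First I would invoke the explicit derivative identity \eqref{eq:lem:analyticity-derivative} of Lemma \ref{lem:analyticity}, which for \(\rnf\neq 1\) expresses \(\left.\partial^{\knd}\RD{\rno}{\mW}{\mQ}/\partial\rno^{\knd}\right\vert_{\rno=\rnf}\) as a linear combination of the quantities \(\GX_{\mW,\mQ}^{t}(\rnf)\) with weights \((-1)^{\knd-t}/(\rnf-1)^{\knd-t+1}\). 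Since \(\GX_{\mW,\mQ}^{t}(\rnf)\) is a polynomial (with partition-indexed coefficients) in the moments \(\EXS{\wma{\rnf}{\mQ}}{\cln{}^{\ind}}\), bounding these moments is the heart of the argument.

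The key tool for the moment bounds is the identity
\[
\EXS{\wma{\rnf}{\mQ}}{e^{u\cln{}}}=e^{\Lambda(\rnf+u)-\Lambda(\rnf)},
\qquad \Lambda(\rno)\DEF(\rno-1)\RD{\rno}{\mW}{\mQ},
\]
which is a straightforward substitution using Definition \ref{def:tiltedprobabilitymeasure}. The hypothesis \(\RD{\rnb}{\mW}{\mQ}\leq\gamma\) together with the monotonicity and nonnegativity in Lemma \ref{lem:divergence-order} yields \(|\Lambda(\rno)|\leq(1\vee\rnb)\gamma\) for every \(\rno\in(0,\rnb]\). Choosing \(r\DEF\rnf\wedge(\rnb-\rnf)\) and restricting to \(|u|\leq r\) keeps \(\rnf+u\in(0,\rnb]\), so the MGF obeys \(\EXS{\wma{\rnf}{\mQ}}{e^{u\cln{}}}\leq e^{(1\vee\rnb)\gamma}\). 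The elementary estimate \(|x|^{\ind}\leq(\ind!/r^{\ind})e^{r|x|}\) together with \(e^{r|\cln{}|}\leq e^{r\cln{}}+e^{-r\cln{}}\) then gives the moment bound \(\EXS{\wma{\rnf}{\mQ}}{|\cln{}|^{\ind}}\leq(\ind!/r^{\ind})(1+e^{(1\vee\rnb)\gamma})\), which is what we need.

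Next I would insert these moment bounds into \eqref{eq:lem:def:G}. Each product \(\prod_{\ind=1}^{t}\bigl((-1/\ind!)\EXS{\wma{\rnf}{\mQ}}{\cln{}^{\ind}}\bigr)^{j_{\ind}}\) is bounded by \(M^{j_{1}+\cdots+j_{t}}/r^{t}\) with \(M\DEF 1+e^{(1\vee\rnb)\gamma}\), using \(\sum_{\ind}\ind j_{\ind}=t\). The combinatorial sum \(\sum_{\set{J_{t}}}(j_{1}+\cdots+j_{t}-1)!/(j_{1}!\cdots j_{t}!)\) is a standard cumulant-moment partition count that grows at most like a constant times \(t\), so a direct estimate produces \(|\GX_{\mW,\mQ}^{t}(\rnf)|\leq C_{1}/r^{t}\) for a constant absorbable into \(\tau\); the \(t=0\) term is handled separately by \(|\GX_{\mW,\mQ}^{0}(\rnf)|\leq|\rnf-1|\gamma\), which is where the \(+\gamma\) summand in \eqref{eq:lem:uniform-analyticity-tau} originates. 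Plugging into the derivative identity and summing the \(\knd+1\) terms weighted by \(1/|\rnf-1|^{\knd-t+1}\) yields the bound \(\knd!\tau^{\knd+1}\knd\) in \eqref{eq:lem:uniform-analyticity-derivativebound}; the extra factor \(\knd\) reflects the \(t\)-sum over \(\{0,\ldots,\knd\}\) and disappears in the \(\rnf=1\) case, where \eqref{eq:lem:analyticity-derivative} reduces to a single term \(\knd!\GX_{\mW,\mQ}^{\knd+1}(1)\).

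Finally, for \eqref{eq:lem:uniform-analyticity-taylor} I would simply apply the derivative bound term-by-term to the Taylor tail \(\sum_{\ind=\knd}^{\infty}(\rnt-\rnf)^{\ind}/\ind!\cdot\left.\partial^{\ind}\RD{\rno}{\mW}{\mQ}/\partial\rno^{\ind}\right\vert_{\rno=\rnf}\); analyticity on \((0,\chi_{\mW,\mQ})\) from Lemma \ref{lem:analyticity} justifies the termwise expansion for \(|\rnt-\rnf|\leq 1/\tau\). The resulting series \(\tau\sum_{\ind=\knd}^{\infty}\ind(|\rnt-\rnf|\tau)^{\ind}\) is elementary: writing \(x\DEF|\rnt-\rnf|\tau\), it equals \(\tau x^{\knd}[\knd/(1-x)+x/(1-x)^{2}]=\tau x^{\knd}/(1-x)\cdot[\knd-1+1/(1-x)]\), which is precisely the stated bound; the \(\rnf=1\) case drops the factor \(\ind\) and gives the cleaner geometric sum. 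The main obstacle is the combinatorial step that converts moment bounds into uniform bounds on \(\GX_{\mW,\mQ}^{t}\); getting the constants sharp enough that only \(M=1+e^{(1\vee\rnb)\gamma}\) appears in \(\tau\) (rather than, say, \(2e^{(1\vee\rnb)\gamma}\) from a naive split of \(e^{r|\cln{}|}\)) requires handling the cases \(\rnf<1\) and \(\rnf>1\) separately and exploiting the sign of \(\Lambda\) in each regime.
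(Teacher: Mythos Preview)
Your overall architecture---moment bounds for \(\cln{}\) under \(\wma{\rnf}{\mQ}\), then bounds on \(\GX_{\mW,\mQ}^{t}(\rnf)\) via \eqref{eq:lem:def:G}, then the derivative bound, then the Taylor tail---matches the paper's proof, and your MGF route to the moment bounds is a clean alternative to the paper's use of the pointwise estimate \eqref{eq:analyticity-5}. The Taylor-tail computation is correct and identical to the paper's.

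The gap is the combinatorial step. Your claim that \(\sum_{\set{J}_{t}}(j_{1}+\cdots+j_{t}-1)!/(j_{1}!\cdots j_{t}!)\) ``grows at most like a constant times \(t\)'' is false: for \(t=1,2,3,4\) the values are \(1,\tfrac{3}{2},\tfrac{7}{3},\tfrac{15}{4}\), i.e.\ \((2^{t}-1)/t\), which is exponential in \(t\). More to the point, what you actually need to control is the \emph{weighted} sum \(\sum_{\set{J}_{t}}\tfrac{(\sum j_{\ind}-1)!}{\prod j_{\ind}!}M^{\sum j_{\ind}}\), and this is where the precise constant \(M=1+e^{(1\vee\rnb)\gamma}\) in \(\tau\) must come from. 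The paper handles this with the Fa\`a di Bruno--type identity \(\sum_{\set{J}_{t}}\tfrac{(\sum j_{\ind})!}{\prod j_{\ind}!}\xi^{\sum j_{\ind}}=\xi(1+\xi)^{t-1}\) (cited from \cite[Thm.~1.4.1]{krantzparks}), which---after the trivial bound \((\sum j_{\ind}-1)!\leq(\sum j_{\ind})!\)---yields \(\abs{\GX_{\mW,\mQ}^{t}(\rnf)}\leq\bigl(\tfrac{1+e^{(1\vee\rnb)\gamma}}{\rnf\wedge(\rnb-\rnf)}\bigr)^{t}\) directly. Without this identity you will not recover the stated \(\tau\); a crude bound would give a larger constant and your ``absorb into \(\tau\)'' step would change the lemma's conclusion. (A minor side issue: your MGF bound \(e^{\Lambda(\rnf+u)-\Lambda(\rnf)}\leq e^{(1\vee\rnb)\gamma}\) drops a factor of \(2\), since both \(\Lambda(\rnf+u)\) and \(-\Lambda(\rnf)\) can contribute; this is harmless structurally but would also perturb the constant.)
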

Lemma \ref{lem:uniform-analyticity} is new to the best of our knowledge; it is proved in 
Appendix \ref{sec:analyticityproofs}
using \eqref{eq:lem:analyticity-derivative}
together with the elementary properties of the real analytic functions and power series.

\subsection{The Conditional \renyi Divergence and Tilted Channel}\label{sec:preliminary-conditionaldivergence}
The conditional \renyi divergence and the tilted channel
allows us to write certain frequently used expressions more succinctly.
\begin{definition}\label{def:conditionaldivergence}
	For any \(\rno\in\reals{+}\), \(\Wm:\inpS\to\pmea{\outA}\), \(\Qm:\inpS\to\pmea{\outA}\),
	and  \(\mP\in\pdis{\inpS}\) \emph{the order \(\rno\) conditional \renyi divergence for 
		the input distribution \(\mP\)} 
	is
	\begin{align}
	\label{eq:def:conditionaldivergence}
	\CRD{\rno}{\Wm}{\Qm}{\mP}
	&\DEF \sum\nolimits_{\dinp\in \inpS}  \mP(\dinp) \RD{\rno}{\Wm(\dinp)}{\Qm(\dinp)}.
	\end{align}
If \(\exists\mQ\in\pmea{\outA}\) such that  \(\Qm(\dinp)=\mQ\) for all \(\dinp\in\inpS\), then 
we denote \(\CRD{\rno}{\Wm}{\Qm}{\mP}\) by \(\CRD{\rno}{\Wm}{\mQ}{\mP}\).
\end{definition}
\begin{remark}
	In \cite{verdu15} and \cite{polyanskiyV10}, \(\CRD{\rno}{\Wm}{\Qm}{\mP}\) stands for
	\(\RD{\rno}{\mP\mtimes \Wm}{\mP\mtimes \Qm}\).
	For \(\rno=1\) case the convention used in \cite{verdu15} and \cite{polyanskiyV10} is equivalent to ours;
	for \(\rno\neq 1\) case, however, it is not. 
	If either \(\rno=1\) or 
	\(\RD{\rno}{\Wm(\dinp)}{\Qm(\dinp)}\) has the same value for all \(\dinp\)'s with positive 
	\(\mP(\dinp)\), then
	\(\RD{\rno}{\mP\mtimes \Wm}{\mP\mtimes \Qm}=\sum_{\dinp} \mP(\dinp)\RD{\rno}{\Wm(\dinp)}{\Qm(\dinp)}\),
	else \(\RD{\rno}{\mP\mtimes \Wm}{\mP\mtimes \Qm}<\sum_{\dinp} \mP(\dinp)\RD{\rno}{\Wm(\dinp)}{\Qm(\dinp)}\)
	for \(\rno\in(0,1)\) 
	and 
	\(\RD{\rno}{\mP\mtimes \Wm}{\mP\mtimes \Qm}>\sum_{\dinp} \mP(\dinp)\RD{\rno}{\Wm(\dinp)}{\Qm(\dinp)}\)
	for \(\rno\in (1,\infty)\).
	The inequalities follow from the Jensen's inequality and the 
	strict concavity of the natural logarithm 
	function.
\end{remark}

\begin{definition}\label{def:tiltedchannel}
	For any \(\rno\in\reals{+}\), \(\Wm:\inpS\to\pmea{\outA}\) and \(\Qm:\inpS\to\pmea{\outA}\),
	\emph{the order \(\rno\) tilted channel \(\Wma{\rno}{\Qm}\)} is a function
	from  \(\{\dinp:\RD{\rno}{\Wm(\dinp)}{\Qm(\dinp)}<\infty\}\) to \(\pmea{\outA}\)
	given by
	\begin{align}
	\label{eq:def:tiltedchannel}
	\der{\Wma{\rno}{\Qm}(\dinp)}{\rfm}
	&\DEF e^{(1-\rno)\RD{\rno}{\Wm(\dinp)}{\Qm(\dinp)}}
	\left[\der{\Wm(\dinp)}{\rfm}\right]^{\rno} 
	\left[\der{\Qm(\dinp)}{\rfm}\right]^{1-\rno}.
	\end{align}
	If \(\exists\mQ\in\pmea{\outA}\) such that  \(\Qm(\dinp)=\mQ\) for all \(\dinp\in\inpS\), then 
	we denote \(\Wma{\rno}{\Qm}\) by \(\Wma{\rno}{\mQ}\).
\end{definition}

%%%%%%%%As an immediate consequence of Definitions
%%%%%%%%\ref{def:divergence}, \ref{def:tiltedprobabilitymeasure}, \ref{def:conditionaldivergence}, \ref{def:tiltedchannel} 
%%%%%%%%and Lemma \ref{lem:variational} we get the following lemma. 
%%%%%%%%\begin{lemma}\label{lem:cvariational}
%%%%%%%%	For any \(\Wm:\inpS\to\pmea{\outA}\), \(\Qm:\inpS\to\pmea{\outA}\), and  \(\mP\in\pdis{\inpS}\) 
%%%%%%%%	\begin{align}
%%%%%%%%	\label{eq:lem:cvariational}
%%%%%%%%	\CRD{\rno}{\Wm}{\Qm}{\mP}
%%%%%%%%	&=\begin{cases}
%%%%%%%%	\inf\nolimits_{\Vm}
%%%%%%%%	\tfrac{\rno}{1-\rno}\CRD{1}{\Vm}{\Wm}{\mP}+\CRD{1}{\Vm}{\Qm}{\mP}
%%%%%%%%	&\rno\in(0,1)
%%%%%%%%	\\
%%%%%%%%	\sup\nolimits_{\Vm}
%%%%%%%%	\tfrac{\rno}{1-\rno}\CRD{1}{\Vm}{\Wm}{\mP}+\CRD{1}{\Vm}{\Qm}{\mP}
%%%%%%%%	&\rno\in(1,\infty)
%%%%%%%%	\end{cases}
%%%%%%%%	\end{align}
%%%%%%%%	where \(\tfrac{\rno}{1-\rno}\CRD{1}{\Vm}{\Wm}{\mP}+\CRD{1}{\Vm}{\Qm}{\mP}\)
%%%%%%%%	stands for \(-\infty\) when \(\rno\in(1,\infty)\) and 
%%%%%%%%	\(\CRD{1}{\Vm}{\Wm}{\mP}=\CRD{1}{\Vm}{\Qm}{\mP}=\infty\).
%%%%%%%%	Furthermore, if \(\CRD{\rno}{\Wm}{\Qm}{\mP}<\infty\) and either 
%%%%%%%%	\(\rno\in(0,1)\) or \(\CRD{1}{\Wma{\rno}{\Qm}}{\Wm}{\mP}<\infty\), 
%%%%%%%%	then
%%%%%%%%	\begin{align}
%%%%%%%%	\label{eq:lem:cvariational-tilted}
%%%%%%%%	\CRD{\rno}{\Wm}{\Qm}{\mP}
%%%%%%%%	&=\tfrac{\rno}{1-\rno}\CRD{1}{\Wma{\rno}{\Qm}}{\Wm}{\mP}+\CRD{1}{\Wma{\rno}{\Qm}}{\Qm}{\mP}.
%%%%%%%%	\end{align}
%%%%%%%%\end{lemma}

%!TEX root=../main-C.tex
\section{The Augustin Information}\label{sec:information}
The main aim of this section is to introduce the concepts of Augustin information and mean.
We define the order \(\rno\) Augustin information for the input distribution \(\mP\) and 
establish the existence of a unique Augustin mean for any input distribution 
\(\mP\) and positive finite order \(\rno\) in \S\ref{sec:information-mean}.
After that we analyze the Augustin information,
first as a function of the input distribution for a given order in \S\ref{sec:information-prior} 
and then as a function of the order for a given input distribution in \S\ref{sec:information-order}.
We conclude our discussion by comparing the Augustin information with the \renyi information 
and characterizing each quantity in terms of the other in \S\ref{sec:information-renyi}.
Some of the most important observations about the Augustin information and mean were first reported 
by Augustin in \cite[\S34]{augustin78} for orders not exceeding one. 
This is why we suggest naming these concepts after him. 
Proof of the lemmas presented in this section are presented in 
Appendix \ref{sec:informationproofs}.

\subsection{Existence of a Unique Augustin Mean}\label{sec:information-mean}
\begin{definition}\label{def:information}
For any \(\rno\in \reals{+}\), \(\Wm:\inpS\!\to\!\pmea{\outA}\), and \(\mP\in \pdis{\inpS}\)  
\emph{the order \(\rno\) Augustin information for the input distribution \(\mP\)} is
\begin{align}
\label{eq:def:information}
\RMI{\rno}{\mP}{\Wm}
&\DEF \inf\nolimits_{\mQ\in \pmea{\outA}} \CRD{\rno}{\Wm}{\mQ}{\mP}.
\end{align}
\end{definition}
One can confirm by substitution that
\begin{align}\label{eq:topsoe}
	\CRD{1}{\Wm}{\mQ}{\mP}
	&=\CRD{1}{\Wm}{\qmn{1,\mP}}{\mP}+\RD{1}{\qmn{1,\mP}}{\mQ}
	&
	&\forall \mQ\in\pmea{\outA}
\end{align}
where
\begin{align}
\label{eq:def:orderonemean}
\qmn{1,\mP}
&\DEF\sum\nolimits_{\dinp}\mP(\dinp)\Wm(\dinp).
\end{align}
Then Lemma \ref{lem:divergence-pinsker} and \eqref{eq:def:information} imply
\begin{align}
\notag
\RMI{1}{\mP}{\Wm}
&=\CRD{1}{\Wm}{\qmn{1,\mP}}{\mP}.
\end{align}
Thus the order one Augustin information has a closed form expression, which  
is equal to the mutual information.
For other orders, however, Augustin information does not have a closed form expression.
Nonetheless, Lemma \ref{lem:information}, presented in the following, establishes the existence 
of a unique probability measure \(\qmn{\rno,\mP}\) satisfying 
\(\RMI{\rno}{\mP}{\Wm}=\CRD{\rno}{\Wm}{\qmn{\rno,\mP}}{\mP}\) 
for\footnote{This is rather easy to prove when \(\outS\) is a finite set.
The uniqueness of \(\qmn{\rno,\mP}\) follows from the strict convexity of the \renyi divergence 
in its second argument described in Lemma \ref{lem:divergence-convexity}.
If \(\outS\) is finite, then \(\pmea{\outA}\) is  compact and the existence of \(\qmn{\rno,\mP}\) 
follows from the lower semicontinuity of the \renyi divergence in its second argument 
---which follows from Lemma \ref{lem:divergence:lsc}--- 
and  the extreme value theorem for the lower semicontinuous 
functions \cite[Ch3\S12.2]{kolmogorovfomin75}.
For channels with arbitrary output spaces, however, \(\pmea{\outA}\) is not compact;
thus we can not invoke the extreme value theorem to establish the existence of \(\qmn{\rno,\mP}\).}	
any positive order \(\rno\) and input distribution \(\mP\).
Furthermore, parts (\ref{information:zto}) and (\ref{information:oti}) of Lemma \ref{lem:information}
present an alternative characterization of \(\qmn{\rno,\mP}\) by showing that \(\qmn{\rno,\mP}\) 
is the unique 
fixed point of the operator \(\Aop{\rno}{\mP}{\cdot}\) satisfying \(\qmn{1,\mP}\AC\qmn{\rno,\mP}\).
Lemma \ref{lem:information}-(\ref{information:alternative})  provides an alternative characterization
of the Augustin information for orders other than 
one.\footnote{This alternative characterization is employed to prove the equivalence of two 
definitions of the sphere packing exponent and the strong converse exponent.}

\begin{definition}\label{def:preliminary}
Let \(\rno\) be a positive real number and
\(\Wm\) be a channel of the form
\(\Wm:\inpS\to \pmea{\outA}\).
\begin{itemize}
	\item For any \(\mP\in \fdis{\inpS}\), 
		\emph{the order \(\rno\) mean measure for the input distribution \(\mP\)} is given by
	\begin{align}
	\label{eq:def:meanmeasure}
	\der{\mmn{\rno,\mP}}{\rfm}
	&\DEF \left[\sum\nolimits_{\dinp} \mP(\dinp) \left(\der{\Wm{(\dinp)}}{\rfm}\right)^{\rno}  \right]^{\frac{1}{\rno}}
	\end{align}
	where \(\rfm\) is any measure for which \((\sum_{\dinp}\mP(\dinp) \Wm(\dinp))\AC\rfm\).
	\item For any \(\mP\in \pdis{\inpS}\), \emph{the order \(\rno\) \renyi mean for the input distribution \(\mP\)} is given by
	\begin{align}
	\label{eq:def:renyimean}
	\qgn{\rno,\mP}
	&\DEF\tfrac{\mmn{\rno,\mP}}{\lon{\mmn{\rno,\mP}}}.
	\end{align}
	\item For any \(\mP\in \pdis{\inpS}\), \emph{the order \(\rno\) Augustin operator for the input distribution \(\mP\)}, i.e. \(\Aop{\rno}{\mP}{\cdot}:\domtr{\rno,\mP}\to\pmea{\outA}\), is given by
	\begin{align}
	\label{eq:def:Aoperator}
	\Aop{\rno}{\mP}{\mQ}
	&\DEF\sum\nolimits_{\dinp}\mP(\dinp) \Wma{\rno}{\mQ}(\dinp)
	&
	&\forall \mQ\in  \domtr{\rno,\mP}
	\end{align}
	where 
\(\domtr{\rno,\mP}\DEF\{\mQ\in\pmea{\outA}:\CRD{\rno}{\Wm}{\mQ}{\mP}<\infty\}\) and
the tilted channel \(\Wma{\rno}{\mQ}\) is defined in \eqref{eq:def:tiltedchannel}.
Furthermore, \(\Aopi{\rno}{\mP}{0}{\mQ}=\mQ\) and 
\(\Aopi{\rno}{\mP}{\ind+1}{\mQ}\DEF\Aop{\rno}{\mP}{\Aopi{\rno}{\mP}{\ind}{\mQ}}\) 
for any non-negative integer \(\ind\).
\end{itemize}
\end{definition}
 
\begin{lemma}\label{lem:information}
Let \(\Wm\) be a channel of the form \(\Wm:\inpS\to \pmea{\outA}\) and \(\mP\) be 
an input distribution in \(\pdis{\inpS}\).
\begin{enumerate}[(a)]
\item\label{information:bounded}
\(\RMI{\rno}{\mP}{\Wm}\leq \CRD{\rno}{\Wm}{\qmn{1,\mP}}{\mP}\leq \bent{\mP}<\infty\) for all \(\rno\in\reals{+}\)
where \(\qmn{1,\mP}\) is defined in \eqref{eq:def:orderonemean}.
\item\label{information:one}
\(\RMI{1}{\mP}{\Wm}=\CRD{1}{\Wm}{\qmn{1,\mP}}{\mP}\).  Furthermore,
\begin{align}
\label{eq:lem:information-one:EHB}
\CRD{1}{\Wm}{\mQ}{\mP}-\RMI{1}{\mP}{\Wm}
&=\RD{1}{\qmn{1,\mP}}{\mQ}
&
&\forall \mQ\in\pmea{\outA}.
\end{align}
\item\label{information:zto} 
If \(\rno\in (0,1)\), then \(\exists! \qmn{\rno,\mP}\) such that \(\RMI{\rno}{\mP}{\Wm}=\CRD{\rno}{\Wm}{\qmn{\rno,\mP}}{\mP}\).
Furthermore,
\begin{align}
\label{eq:lem:information-zto:fixedpoint}
\Aop{\rno}{\mP}{\qmn{\rno,\mP}}
&=\qmn{\rno,\mP},
\\
\label{eq:lem:information-zto:iteration}
\lim\limits_{\jnd\to\infty}  
\lon{\qmn{\rno,\mP}\!-\Aopi{\rno}{\mP}{\jnd}{\qgn{\rno,\mP}}}
&=0,
\\
\label{eq:lem:information-zto:EHB}
\RD{1}{\qmn{\rno,\mP}}{\mQ}\geq 
\CRD{\rno}{\Wm}{\mQ}{\mP}-\RMI{\rno}{\mP}{\Wm}
&\geq \RD{\rno}{\qmn{\rno,\mP}}{\mQ}
&
&\forall \mQ\in\pmea{\outA},
\end{align}
and \(\qmn{\rno,\mP}\sim \qmn{1,\mP}\).
In addition,\footnote{Note that \(\Aop{\rno}{\mP}{\mQ}=\mQ\), on its own, 
does not imply \(\qmn{\rno,\mP}=\mQ\) for \(\rno\)'s in \((0,1)\). 
Consider for example a binary symmetric 
channel	and let \(\mQ\) be the probability measure that puts all of its probability to 
one of the output letters. Then \(\Aop{\rno}{\mP}{\mQ}=\mQ\), but \(\qmn{\rno,\mP}\neq\mQ\), 
for all \(\mP\in\pdis{\inpS}\) and \(\rno\in(0,1)\).} 
if \(\qmn{1,\mP}\AC \mQ\) and \(\Aop{\rno}{\mP}{\mQ}=\mQ\), then \(\qmn{\rno,\mP}=\mQ\).
\item\label{information:oti}
If \(\rno\in (1,\infty)\), then \(\exists! \qmn{\rno,\mP}\) such that \(\RMI{\rno}{\mP}{\Wm}=\CRD{\rno}{\Wm}{\qmn{\rno,\mP}}{\mP}\).
Furthermore, 
\begin{align}
\label{eq:lem:information-oti:fixedpoint}
\Aop{\rno}{\mP}{\qmn{\rno,\mP}}
&=\qmn{\rno,\mP},
\\
\label{eq:lem:information-oti:EHB}
\RD{\rno}{\qmn{\rno,\mP}}{\mQ}\geq 
\CRD{\rno}{\Wm}{\mQ}{\mP}-\RMI{\rno}{\mP}{\Wm}
&\geq \RD{1}{\qmn{\rno,\mP}}{\mQ}
&
&\forall \mQ\in\pmea{\outA},
\end{align}
and \(\qmn{\rno,\mP}\sim \qmn{1,\mP}\). 
In addition, if  \(\Aop{\rno}{\mP}{\mQ}=\mQ\), then \(\qmn{\rno,\mP}=\mQ\). 
\item\label{information:alternative}
If \(\rno\in\reals{+}\setminus\{1\}\), then 
\begin{align}
\label{eq:lem:information:alternative:opt}
\RMI{\rno}{\mP}{\Wm}
&=\tfrac{\rno}{1-\rno}\CRD{1}{\Wma{\rno}{\qmn{\rno,\mP}}}{\Wm}{\mP}+\RMI{1}{\mP}{\Wma{\rno}{\qmn{\rno,\mP}}}
\\
\label{eq:lem:information:alternative:def}
&=\begin{cases}
\inf\nolimits_{\Vm\in\pmea{\outA|\inpS}} \tfrac{\rno}{1-\rno}\CRD{1}{\Vm}{\Wm}{\mP}+\RMI{1}{\mP}{\Vm}
&\rno\in(0,1)
\\
\sup\nolimits_{\Vm\in\pmea{\outA|\inpS}} \tfrac{\rno}{1-\rno}\CRD{1}{\Vm}{\Wm}{\mP}+\RMI{1}{\mP}{\Vm}
&\rno\in(1,\infty)
\end{cases}
\\
\notag
&=\tfrac{\rno}{1-\rno}
\inf\nolimits_{\Vm\in\pmea{\outA|\inpS}}\left(
 \CRD{1}{\Vm}{\Wm}{\mP}+
\tfrac{1-\rno}{\rno}\RMI{1}{\mP}{\Vm}\right).
\end{align}
\end{enumerate}
\end{lemma}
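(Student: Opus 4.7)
For (a), I plug $\mQ=\qmn{1,\mP}$ into \eqref{eq:def:information} and use the pointwise bound $\qmn{1,\mP}(\oev)\geq\mP(\dinp)\Wm(\dinp)(\oev)$ for every $\dinp\in\supp{\mP}$, which gives $\der{\Wm(\dinp)}{\qmn{1,\mP}}\leq 1/\mP(\dinp)$ almost surely. Inserting this into Definition~\ref{def:divergence} (and using monotonicity from Lemma~\ref{lem:divergence-RM} to reduce the $\rno\neq 1$ case to a direct computation) yields $\RD{\rno}{\Wm(\dinp)}{\qmn{1,\mP}}\leq\ln(1/\mP(\dinp))$; averaging delivers the entropy bound. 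For (b), I would expand $\CRD{1}{\Wm}{\mQ}{\mP}$ by adding and subtracting $\ln\der{\qmn{1,\mP}}{\rfm}$ inside the integrand to obtain the telescope \eqref{eq:topsoe}, and then apply Lemma~\ref{lem:divergence-pinsker} to identify $\qmn{1,\mP}$ as the unique minimizer and to read off \eqref{eq:lem:information-one:EHB}.

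\textbf{Parts (c) and (d), deriving the two-sided EHB bound from the fixed point.} My core calculation is the following. Suppose $\qmn{}$ satisfies $\Aop{\rno}{\mP}{\qmn{}}=\qmn{}$ (with $\qmn{1,\mP}\AC\qmn{}$ in the $\rno<1$ case). Expanding $\der{\qmn{}}{\rfm}$ via the fixed-point identity and substituting into $\int(\der{\qmn{}}{\rfm})^{\rno}(\der{\mQ}{\rfm})^{1-\rno}\rfm(\dif\dout)$ produces the clean identity
\begin{align}
\notag
\RD{\rno}{\qmn{}}{\mQ}
&=\tfrac{1}{\rno-1}\ln\sum_{\dinp}\mP(\dinp)\,
e^{(1-\rno)\left[\RD{\rno}{\Wm(\dinp)}{\qmn{}}-\RD{\rno}{\Wm(\dinp)}{\mQ}\right]}.
\end{align}
Jensen's inequality applied to the convex (resp.\ concave) function $\tfrac{1}{\rno-1}\ln(\cdot)$ when $\rno\in(0,1)$ (resp.\ $\rno>1$) converts this into $\RD{\rno}{\qmn{}}{\mQ}\lessgtr\CRD{\rno}{\Wm}{\mQ}{\mP}-\CRD{\rno}{\Wm}{\qmn{}}{\mP}$. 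Similarly, rewriting $\der{\Wm(\dinp)}{\mQ}=\der{\Wm(\dinp)}{\qmn{}}\cdot\der{\qmn{}}{\mQ}$ inside $\RD{\rno}{\Wm(\dinp)}{\mQ}$, averaging in $\Wma{\rno}{\qmn{}}(\dinp)$, and using the fixed-point identity $\sum_{\dinp}\mP(\dinp)\Wma{\rno}{\qmn{}}(\dinp)=\qmn{}$ produces the companion bound with $\RD{1}{\qmn{}}{\mQ}$ instead of $\RD{\rno}{\qmn{}}{\mQ}$, again by a Jensen step whose direction flips as $\rno$ crosses $1$. Together these furnish both sides of \eqref{eq:lem:information-zto:EHB} and \eqref{eq:lem:information-oti:EHB}, and Lemma~\ref{lem:divergence-pinsker} then forces any fixed point (satisfying the relevant absolute-continuity condition) to be the unique minimizer of $\CRD{\rno}{\Wm}{\cdot}{\mP}$.

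\textbf{Existence via iteration.} With uniqueness secured, existence is the substantive task. I would initialize at $\qgn{\rno,\mP}$ (whose support matches $\qmn{1,\mP}$ by construction) and iterate $\Aop{\rno}{\mP}{\cdot}$. The plan is to show: (i) monotonicity $\CRD{\rno}{\Wm}{\Aop{\rno}{\mP}{\mQ}}{\mP}\leq\CRD{\rno}{\Wm}{\mQ}{\mP}$, obtained from Lemma~\ref{lem:divergence-convexity} applied to the mixture defining $\Aop{\rno}{\mP}{\cdot}$; (ii) the iterates form a Cauchy sequence in $\lon{\cdot}$; (iii) the limit is a fixed point achieving the infimum. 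Step (iii) follows once (ii) is established, using Lemma~\ref{lem:divergence:lsc} for the lower-semicontinuity side and a continuity argument for $\Aop{\rno}{\mP}{\cdot}$. The hard part, which I anticipate to be the main obstacle, is (ii) in the $\rno>1$ regime, where Lemma~\ref{lem:divergence-jointconvexity} is unavailable; there I would control $\lon{\Aopi{\rno}{\mP}{\jnd+1}{\cdot}-\Aopi{\rno}{\mP}{\jnd}{\cdot}}$ by $\RD{1\wedge\rno}{\cdot}{\cdot}$ through the EHB inequality already derived against tentative fixed points, combined with Pinsker, to convert the monotone decrease of $\CRD{\rno}{\Wm}{\cdot}{\mP}$ into a summable total-variation bound. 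The absolute continuity $\qmn{\rno,\mP}\sim\qmn{1,\mP}$ is read off from the explicit form of $\Aop{\rno}{\mP}{\cdot}$.

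\textbf{Part (e).} I would apply Lemma~\ref{lem:variational} pointwise to each $\RD{\rno}{\Wm(\dinp)}{\qmn{\rno,\mP}}$, noting that the optimizing $\mV$ in that lemma is precisely $\wma{\rno}{\qmn{\rno,\mP}}(\dinp)=\Wma{\rno}{\qmn{\rno,\mP}}(\dinp)$ by Definitions~\ref{def:tiltedprobabilitymeasure} and~\ref{def:tiltedchannel}. Averaging against $\mP$ and collapsing $\sum_{\dinp}\mP(\dinp)\RD{1}{\Wma{\rno}{\qmn{\rno,\mP}}(\dinp)}{\qmn{\rno,\mP}}$ via the fixed-point identity into $\RMI{1}{\mP}{\Wma{\rno}{\qmn{\rno,\mP}}}$ yields \eqref{eq:lem:information:alternative:opt}. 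The variational characterization \eqref{eq:lem:information:alternative:def} then follows because for an arbitrary $\Vm$, the identity $\tfrac{\rno}{1-\rno}\CRD{1}{\Vm}{\Wm}{\mP}+\CRD{1}{\Vm}{\qmn{\rno,\mP}}{\mP}=\CRD{\rno}{\Wm}{\qmn{\rno,\mP}}{\mP}$ (the conditional form of Lemma~\ref{lem:variational}'s optimizer equality) combined with $\RMI{1}{\mP}{\Vm}\leq\CRD{1}{\Vm}{\qmn{\rno,\mP}}{\mP}$ (from Lemma~\ref{lem:divergence-pinsker} applied to the order-one mean of $\Vm$) produces the required one-sided bound, with the sign of $\tfrac{\rno}{1-\rno}$ selecting infimum or supremum and equality realized at $\Vm=\Wma{\rno}{\qmn{\rno,\mP}}$.
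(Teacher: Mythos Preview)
Your treatment of parts (a), (b), and the ``fixed point $\Rightarrow$ EHB bounds $\Rightarrow$ uniqueness'' implication in (c)--(d) is correct and matches the paper closely. The existence argument, however, has genuine gaps.

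\textbf{The monotonicity step.} Your claim that $\CRD{\rno}{\Wm}{\Aop{\rno}{\mP}{\mQ}}{\mP}\leq\CRD{\rno}{\Wm}{\mQ}{\mP}$ follows from Lemma~\ref{lem:divergence-convexity} is not right: convexity in the second argument only gives $\RD{\rno}{\Wm(\dinp')}{\Aop{\rno}{\mP}{\mQ}}\leq\sum_{\dinp}\mP(\dinp)\RD{\rno}{\Wm(\dinp')}{\Wma{\rno}{\mQ}(\dinp)}$, which does not compare to $\RD{\rno}{\Wm(\dinp')}{\mQ}$. The paper instead proves the stronger inequality $\CRD{\rno}{\Wm}{\mQ}{\mP}-\CRD{\rno}{\Wm}{\Aop{\rno}{\mP}{\mQ}}{\mP}\geq\RD{1}{\Aop{\rno}{\mP}{\mQ}}{\mQ}$ by a direct Jensen step on $\ln\int(\der{\Aop{\rno}{\mP}{\mQ}}{\mQ})^{1-\rno}\Wma{\rno}{\mQ}(\dinp)(\dif{\dout})$, and this argument only works for $\rno\in(0,1)$; for $\rno>1$ the inequality reverses and monotonicity of the iteration is not available.

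\textbf{The Cauchy step.} Even with the correct monotonicity inequality in hand for $\rno\in(0,1)$, telescoping gives only $\sum_{\jnd}\lon{\Aopi{\rno}{\mP}{\jnd+1}{\qgn{\rno,\mP}}-\Aopi{\rno}{\mP}{\jnd}{\qgn{\rno,\mP}}}^{2}<\infty$ via Pinsker, which is not enough for Cauchy. Your plan to upgrade this using ``the EHB inequality already derived against tentative fixed points'' is circular: the EHB bound \eqref{eq:lem:information-zto:EHB} requires a fixed point, which is precisely what you are trying to construct. The paper circumvents this by proving that the iterate sequence is \emph{totally bounded} in total variation (a delicate combinatorial argument exploiting the explicit mixture structure of $\Aop{\rno}{\mP}{\cdot}$), extracting a convergent \emph{subsequence}, showing its limit is a fixed point via continuity of $\Aop{\rno}{\mP}{\cdot}$ and lower semicontinuity of $\RD{1}{\Aop{\rno}{\mP}{\cdot}}{\cdot}$, and only \emph{then} invoking EHB to promote subsequential to full convergence.

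\textbf{The $\rno>1$ case.} Here the paper abandons iteration entirely and proceeds through the variational form of part~(e): it shows $\fX(\Vm)=\tfrac{\rno}{1-\rno}\CRD{1}{\Vm}{\Wm}{\mP}+\RMI{1}{\mP}{\Vm}$ is concave and upper semicontinuous for setwise convergence, that the superlevel set $\{\Vm:\max_{\dinp}\mP(\dinp)\RD{1}{\Vm(\dinp)}{\Wm(\dinp)}\leq\tfrac{\rno-1}{\rno}\bent{\mP}\}$ is setwise-compact (via de la Vall\'ee Poussin uniform integrability and Dunford--Pettis), extracts a maximizer $\Umn{*}$, and then proves that the order-one mean $\umn{*}=\sum_{\dinp}\mP(\dinp)\Umn{*}(\dinp)$ is a fixed point of $\Aop{\rno}{\mP}{\cdot}$. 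Your iteration proposal for $\rno>1$ has no clear path: monotonicity fails, and there is no analogue of the total-boundedness argument.

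\textbf{Part (e).} Your outline is essentially right for $\rno\in(0,1)$. One correction: the expression $\tfrac{\rno}{1-\rno}\CRD{1}{\Vm}{\Wm}{\mP}+\CRD{1}{\Vm}{\qmn{\rno,\mP}}{\mP}=\CRD{\rno}{\Wm}{\qmn{\rno,\mP}}{\mP}$ is not an identity for arbitrary $\Vm$ but an inequality (Lemma~\ref{lem:variational}); combined with $\RMI{1}{\mP}{\Vm}\leq\CRD{1}{\Vm}{\qmn{\rno,\mP}}{\mP}$ it gives the bound you want. For $\rno>1$ the paper simply observes that the variational characterization is exactly $\sup_{\Vm}\fX(\Vm)$, which was computed in the existence proof of part~(d).
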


The convergence described in \eqref{eq:lem:information-zto:iteration} 
holds not just for the \renyi mean \(\qgn{\rno,\mP}\) but also 
for certain other probability measures, 
as well. 
Remark \ref{remark:information-zto:iteration-general} in Appendix \ref{sec:informationproofs}
describes how one can establish
the following more general convergence result for any \(\rno\in(0,1)\)
and \(\mP\in\pdis{\inpS}\):
\begin{align}
\label{eq:lem:information-zto:iteration-general-new}
\lim\nolimits_{\jnd\to\infty}  
\lon{\qmn{\rno,\mP}\!-\Aopi{\rno}{\mP}{\jnd}{\mQ}}
&=0
&
&\mbox{if~}\mQ\sim\qmn{1,\mP}\mbox{~and~}
\essup\nolimits_{\qmn{1,\mP}}\abs{\ln \der{\mQ}{\qmn{1,\mP}}}<\infty.
\end{align}
Part (\ref{information:bounded}) is 
proved using Lemma \ref{lem:divergence-RM};
\(\RMI{\rno}{\mP}{\Wm}\leq \bent{\mP}\) was proved 
by \csiszar through a different argument in \cite[(24)]{csiszar95}.
Part (\ref{information:one}), which is well known, is proved by substitution.
Part (\ref{information:zto}) is due 
to\footnote{To be precise \cite[Lemma 34.2]{augustin78} does not include the assertion 
	\(\RD{1}{\qmn{\rno,\mP}}{\mQ}\geq \CRD{\rno}{\Wm}{\mQ}{\mP}-\RMI{\rno}{\mP}{\Wm}\)
	and claims \eqref{eq:lem:information-zto:iteration} for \(\qmn{1,\mP}\) 
	instead of \(\qgn{\rno,\mP}\). 
	We cannot verify the correctness of Augustin's proof of \cite[Lemma 34.2]{augustin78}, 
	see 
	Appendix \ref{sec:augustinsproof} 
	for a more detailed discussion.}
Augustin \cite[Lemma 34.2]{augustin78}.
Part (\ref{information:oti}) is new to the best our knowledge.
Part (\ref{information:alternative}) was proved for the finite \(\outS\) case 
by \csiszar \cite[(A24), (A27)]{csiszar95}.

\begin{definition}\label{def:mean}
For any \(\rno\in\reals{+}\), \(\Wm:\inpS\to \pmea{\outA}\), \(\mP\in \pdis{\inpS}\) the unique 
probability measure 
\(\qmn{\rno,\mP}\) on \((\outS,\outA)\) satisfying \(\RMI{\rno}{\mP}{\Wm}=\CRD{\rno}{\Wm}{\qmn{\rno,\mP}}{\mP}\) is 
called \emph{the order \(\rno\) Augustin mean for the input distribution \(\mP\)}.
\end{definition}

Lemma \ref{lem:divergence-pinsker} and 
%%%%%%%%Lemma\ref{lem:information}-(\ref{information:one},\ref{information:zto},\ref{information:oti}), i.e. 
\eqref{eq:lem:information-one:EHB}, \eqref{eq:lem:information-zto:EHB}, 
\eqref{eq:lem:information-oti:EHB},
imply the following bound, which is analogous to \cite[Thm. 3.1]{csiszar72} of \csiszar\!\!:
\begin{align}
\notag
\sqrt{2\tfrac{\CRD{\rno}{\Wm}{\mQ}{\mP}-\RMI{\rno}{\mP}{\Wm}}{\rno\wedge 1}}
&\geq  \lon{\qmn{\rno,\mP}-\mQ}
&
&\forall \mQ\in\pmea{\outA},\forall \rno\in\reals{+}.
\end{align}  

The Augustin information and mean have closed form expressions only for \(\rno=1\);
for other orders they do not have closed form expressions.
However, the fixed point property \(\Aop{\rno}{\mP}{\qmn{\rno,\mP}}=\qmn{\rno,\mP}\) 
established in Lemma \ref{lem:information}-(\ref{information:zto},\ref{information:oti}) 
and the definition of \(\Aop{\rno}{\mP}{\cdot}\) given in \eqref{eq:def:Aoperator}
imply the following identity for the Augustin mean:
\begin{align}
\label{eq:mean}
\der{\qmn{\rno,\mP}}{\rfm}
&=\left[\sum\nolimits_{\dinp}\mP(\dinp) \left(\der{\Wm(\dinp)}{\rfm}\right)^{\rno} e^{(1-\rno)\RD{\rno}{\Wm(\dinp)}{\qmn{\rno,\mP}}}\right]^{\frac{1}{\rno}}
&
&\forall \rfm:\qmn{1,\mP}\AC\rfm.
\end{align} 
In \S\ref{sec:information-order}, we use this identity in lieu of a closed form expression 
while analyzing \(\RMI{\rno}{\mP}{\Wm}\) and \(\qmn{\rno,\mP}\) as a function of \(\rno\).

\begin{lemma}\label{lem:information:product}
For any length \(\blx\) product channel \(\Wmn{[1,\blx]}:\inpS_{1}^{\blx}\to\pmea{\outA_{1}^{\blx}}\)
and input distribution \(\mP\in\pdis{\inpS_{1}^{\blx}}\) we have
\begin{align}
\label{eq:lem:information:product}
\RMI{\rno}{\mP}{\Wmn{[1,\blx]}}
&\leq\sum\nolimits_{\tin=1}^{\blx} \RMI{\rno}{\pmn{\tin}}{\Wmn{\tin}}
\end{align}
for all \(\rno\in\reals{+}\) 
where \(\pmn{\tin}\in \pdis{\inpS_{\tin}}\) is the marginal of \(\mP\) on \(\inpS_{\tin}\).
Furthermore, the inequality in \eqref{eq:lem:information:product} is an equality
for an \(\rno\in\reals{+}\)  iff \(\qmn{\rno,\mP}\) satisfies
	\begin{align}
	\label{eq:lem:information:product-mean}
	\qmn{\rno,\mP}
	&=\bigotimes\nolimits_{\tin=1}^{\blx}\qmn{\rno,\pmn{\tin}}.	
	\end{align} 
If  \(\mP=\bigotimes\nolimits_{\tin=1}^{\blx}\pmn{\tin}\),
then \eqref{eq:lem:information:product-mean} holds for all \(\rno\in\reals{+}\)
and consequently \eqref{eq:lem:information:product} holds as an equality 
for all \(\rno\in\reals{+}\).
\end{lemma}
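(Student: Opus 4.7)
The plan is to establish the inequality by a direct test-measure calculation, then derive the iff characterization of equality from the uniqueness of the Augustin mean, and finally handle the product input case via the fixed-point characterization from Lemma \ref{lem:information}.

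First, I would prove the inequality \eqref{eq:lem:information:product} by substituting the product probability measure $\mQ^{\otimes}\DEF\bigotimes_{\tin=1}^{\blx}\qmn{\rno,\pmn{\tin}}$ into the definition of the Augustin information. For this, I use the tensorization identity for the \renyi divergence, namely
\begin{align*}
\RD{\rno}{\bigotimes\nolimits_{\tin=1}^{\blx}\Wmn{\tin}(\dinp_{\tin})}{\bigotimes\nolimits_{\tin=1}^{\blx}\qmn{\rno,\pmn{\tin}}}
&=\sum\nolimits_{\tin=1}^{\blx}\RD{\rno}{\Wmn{\tin}(\dinp_{\tin})}{\qmn{\rno,\pmn{\tin}}},
\end{align*}
which follows immediately from \eqref{eq:def:divergence} by taking the reference measure to be a product measure. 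Averaging this identity over $\dinp_{1}^{\blx}$ with respect to $\mP$ and exchanging the order of summation so that only the marginal $\pmn{\tin}$ appears for the $\tin$th term yields $\CRD{\rno}{\Wmn{[1,\blx]}}{\mQ^{\otimes}}{\mP}=\sum_{\tin=1}^{\blx}\RMI{\rno}{\pmn{\tin}}{\Wmn{\tin}}$. The inequality \eqref{eq:lem:information:product} then follows from the definition of the Augustin information in \eqref{eq:def:information}.

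For the iff characterization of the equality in \eqref{eq:lem:information:product}, I use the uniqueness of the Augustin mean given by Lemma \ref{lem:information}-(\ref{information:zto},\ref{information:oti}) for $\rno\neq 1$ and parts (\ref{information:bounded},\ref{information:one}) for $\rno=1$. If equality holds in \eqref{eq:lem:information:product} for some $\rno\in\reals{+}$, then the calculation above shows that $\mQ^{\otimes}$ achieves the infimum in \eqref{eq:def:information}, so by uniqueness $\qmn{\rno,\mP}=\mQ^{\otimes}$, which is \eqref{eq:lem:information:product-mean}. Conversely, if \eqref{eq:lem:information:product-mean} holds, then $\RMI{\rno}{\mP}{\Wmn{[1,\blx]}}=\CRD{\rno}{\Wmn{[1,\blx]}}{\mQ^{\otimes}}{\mP}$ by the definition of the Augustin mean, and the earlier calculation identifies this quantity with $\sum_{\tin}\RMI{\rno}{\pmn{\tin}}{\Wmn{\tin}}$.

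For the final claim, the key is to verify that $\mQ^{\otimes}$ satisfies the hypotheses of the uniqueness parts of Lemma \ref{lem:information} when $\mP=\bigotimes_{\tin}\pmn{\tin}$. A short calculation using \eqref{eq:def:tiltedchannel} shows that for any product output measure the tilted channel factorizes,
\begin{align*}
\Wma{\rno}{\mQ^{\otimes}}(\dinp_{1}^{\blx})
&=\bigotimes\nolimits_{\tin=1}^{\blx}\Wmna{\tin}{\rno}{\qmn{\rno,\pmn{\tin}}}(\dinp_{\tin}),
\end{align*}
because the Radon-Nikodym derivative factors and the normalizing exponential $e^{(1-\rno)\RD{\rno}{\cdot}{\cdot}}$ splits into a product by the tensorization of the \renyi divergence. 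When $\mP$ itself factorizes, the sum defining $\Aop{\rno}{\mP}{\mQ^{\otimes}}$ in \eqref{eq:def:Aoperator} then splits as
\begin{align*}
\Aop{\rno}{\mP}{\mQ^{\otimes}}
&=\bigotimes\nolimits_{\tin=1}^{\blx}\Aop{\rno}{\pmn{\tin}}{\qmn{\rno,\pmn{\tin}}}
=\bigotimes\nolimits_{\tin=1}^{\blx}\qmn{\rno,\pmn{\tin}}=\mQ^{\otimes},
\end{align*}
where the middle equality uses the fixed-point property from \eqref{eq:lem:information-zto:fixedpoint} and \eqref{eq:lem:information-oti:fixedpoint}. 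For $\rno\in(1,\infty)$, Lemma \ref{lem:information}-(\ref{information:oti}) then forces $\qmn{\rno,\mP}=\mQ^{\otimes}$. For $\rno\in(0,1)$, one must additionally check $\qmn{1,\mP}\AC\mQ^{\otimes}$; but $\qmn{1,\mP}=\bigotimes_{\tin}\qmn{1,\pmn{\tin}}$ when $\mP$ is a product, and each $\qmn{1,\pmn{\tin}}\sim \qmn{\rno,\pmn{\tin}}$ by Lemma \ref{lem:information}-(\ref{information:zto}), so the required absolute continuity holds and Lemma \ref{lem:information}-(\ref{information:zto}) yields $\qmn{\rno,\mP}=\mQ^{\otimes}$. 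The case $\rno=1$ is immediate since $\qmn{1,\mP}=\bigotimes_{\tin}\qmn{1,\pmn{\tin}}$ directly from \eqref{eq:def:orderonemean}. The main obstacle is the bookkeeping in the $\rno\in(0,1)$ case, where one must be careful that the absolute continuity hypothesis of Lemma \ref{lem:information}-(\ref{information:zto}) actually holds; the rest is routine manipulation of product measures and Radon-Nikodym derivatives.
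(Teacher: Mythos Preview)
Your proof is correct and follows essentially the same approach as the paper. The only minor difference is that the paper obtains the iff characterization from the two-sided bounds \eqref{eq:lem:information-one:EHB}, \eqref{eq:lem:information-zto:EHB}, \eqref{eq:lem:information-oti:EHB}, which sandwich \(\RMI{\rno}{\mP}{\Wmn{[1,\blx]}}\) between \(\sum_{\tin}\RMI{\rno}{\pmn{\tin}}{\Wmn{\tin}}-\RD{\rno\vee 1}{\qmn{\rno,\mP}}{\mQ^{\otimes}}\) and \(\sum_{\tin}\RMI{\rno}{\pmn{\tin}}{\Wmn{\tin}}-\RD{\rno\wedge 1}{\qmn{\rno,\mP}}{\mQ^{\otimes}}\), whereas you invoke the uniqueness of the Augustin mean directly; both are immediate consequences of the same parts of Lemma~\ref{lem:information}.
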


\subsection{Augustin Information as a Function of the Input Distribution}\label{sec:information-prior}
The order \(\rno\) Augustin information for the input distribution \(\mP\) is defined as 
the infimum of  a family of conditional \renyi divergences, which are linear in \(\mP\).
Then the Augustin information is concave in \(\mP\), 
because pointwise infimum of a family of concave functions is concave. 
Lemma \ref{lem:informationP} strengthens this observation using Lemma \ref{lem:information}.

\begin{lemma}\label{lem:informationP}
	For any \(\rno\in\reals{+}\) and \(\Wm:\inpS\to \pmea{\outA}\),
	\(\RMI{\rno}{\mP}{\Wm}\) is a concave function of \(\mP\) satisfying
	\begin{align}
	\label{eq:lem:informationP-A}
	\RMI{\rno}{\pmn{\beta}}{\Wm}
	&\geq \beta\RMI{\rno}{\pmn{1}}{\Wm}+(1-\beta)\RMI{\rno}{\pmn{0}}{\Wm}
	+\beta\RD{\rno\wedge 1}{\qmn{\rno,\pmn{1}}}{\qmn{\rno,\pmn{\beta}}}
	+(1-\beta)\RD{\rno\wedge 1}{\qmn{\rno,\pmn{0}}}{\qmn{\rno,\pmn{\beta}}}
	&&
	\\
	\label{eq:lem:informationP-B}
	\RMI{\rno}{\pmn{\beta}}{\Wm}
	&\leq \beta\RMI{\rno}{\pmn{1}}{\Wm}+(1-\beta)\RMI{\rno}{\pmn{0}}{\Wm}
	+\beta\RD{\rno\vee 1}{\qmn{\rno,\pmn{1}}}{\qmn{\rno,\pmn{\beta}}}
	+(1-\beta)\RD{\rno\vee 1}{\qmn{\rno,\pmn{0}}}{\qmn{\rno,\pmn{\beta}}}
	&&
	\\
	\label{eq:lem:informationP-C}
	\RMI{\rno}{\pmn{\beta}}{\Wm}
	&\leq \beta\RMI{\rno}{\pmn{1}}{\Wm}+(1-\beta)\RMI{\rno}{\pmn{0}}{\Wm}+\bent{\beta}
	-\RD{\rno\wedge 1}{\qmn{\rno,\pmn{\beta}}}{\beta\qmn{\rno,\pmn{1}}+(1-\beta)\qmn{\rno,\pmn{0}}}
	&&
	\end{align}
	where \(\pmn{\beta}=\beta \pmn{1}+(1-\beta)\pmn{0}\) for all \(\pmn{0},\pmn{1}\in\pdis{\inpS}\)
	and \(\beta\in[0,1]\).
\end{lemma}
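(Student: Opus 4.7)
My plan is to reduce all three inequalities to the sharp van Erven--Harremo\"es-type bounds supplied by Lemma \ref{lem:information}. Unifying \eqref{eq:lem:information-one:EHB}, \eqref{eq:lem:information-zto:EHB}, and \eqref{eq:lem:information-oti:EHB}, I will work with the two-sided estimate
\begin{align*}
\RMI{\rno}{\mP}{\Wm}+\RD{\rno\wedge 1}{\qmn{\rno,\mP}}{\mQ}
\leq \CRD{\rno}{\Wm}{\mQ}{\mP}
\leq \RMI{\rno}{\mP}{\Wm}+\RD{\rno\vee 1}{\qmn{\rno,\mP}}{\mQ},
\end{align*}
which is valid for every $\rno\in\reals{+}$, $\mP\in\pdis{\inpS}$, and $\mQ\in\pmea{\outA}$ (both bounds collapse to equality in the $\rno=1$ case). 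The concavity claim at the head of the lemma will drop out of (A), since $\RD{\rno\wedge 1}{\cdot}{\cdot}\geq 0$ by Lemma \ref{lem:divergence-pinsker}.

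For (A) and (B) I will substitute $\mQ=\qmn{\rno,\pmn{\beta}}$ in the two-sided estimate and invoke it once with $\mP=\pmn{1}$ and once with $\mP=\pmn{0}$. Taking the $\beta$-weighted convex combination and exploiting the linearity of $\mP\mapsto\CRD{\rno}{\Wm}{\mQ}{\mP}$ collapses the left-hand side of the two evaluations to $\CRD{\rno}{\Wm}{\qmn{\rno,\pmn{\beta}}}{\pmn{\beta}}$, which by Definition \ref{def:information} is exactly $\RMI{\rno}{\pmn{\beta}}{\Wm}$. The lower half of the estimate then yields (A) and the upper half yields (B).

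For (C) the natural test measure is the mixture $\mQ^{\star}\DEF\beta\qmn{\rno,\pmn{1}}+(1-\beta)\qmn{\rno,\pmn{0}}$. On one side, applying the lower half of the two-sided estimate at $\mP=\pmn{\beta}$ and $\mQ=\mQ^{\star}$ gives
\begin{align*}
\RMI{\rno}{\pmn{\beta}}{\Wm}+\RD{\rno\wedge 1}{\qmn{\rno,\pmn{\beta}}}{\mQ^{\star}}\leq \CRD{\rno}{\Wm}{\mQ^{\star}}{\pmn{\beta}}.
\end{align*}
On the other side, since $\mQ^{\star}\geq \beta\qmn{\rno,\pmn{1}}$ and $\mQ^{\star}\geq(1-\beta)\qmn{\rno,\pmn{0}}$, the monotonicity plus scaling identities of Lemma \ref{lem:divergence-RM} give, for every input $\dinp$, both
\begin{align*}
\RD{\rno}{\Wm(\dinp)}{\mQ^{\star}}
&\leq \RD{\rno}{\Wm(\dinp)}{\qmn{\rno,\pmn{1}}}-\ln\beta,
\\
\RD{\rno}{\Wm(\dinp)}{\mQ^{\star}}
&\leq \RD{\rno}{\Wm(\dinp)}{\qmn{\rno,\pmn{0}}}-\ln(1-\beta).
\end{align*}
Decomposing $\CRD{\rno}{\Wm}{\mQ^{\star}}{\pmn{\beta}}=\beta\CRD{\rno}{\Wm}{\mQ^{\star}}{\pmn{1}}+(1-\beta)\CRD{\rno}{\Wm}{\mQ^{\star}}{\pmn{0}}$, using the first pointwise bound inside the $\pmn{1}$ summand and the second inside the $\pmn{0}$ summand, and recalling $\CRD{\rno}{\Wm}{\qmn{\rno,\pmn{u}}}{\pmn{u}}=\RMI{\rno}{\pmn{u}}{\Wm}$ then delivers $\CRD{\rno}{\Wm}{\mQ^{\star}}{\pmn{\beta}}\leq \beta\RMI{\rno}{\pmn{1}}{\Wm}+(1-\beta)\RMI{\rno}{\pmn{0}}{\Wm}+\bent{\beta}$; chaining this with the preceding inequality yields (C).

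The only step that will require any real insight is the choice of $\mQ^{\star}$ in (C) together with the realization that the two one-sided dominations $\mQ^{\star}\geq \beta\qmn{\rno,\pmn{1}}$ and $\mQ^{\star}\geq(1-\beta)\qmn{\rno,\pmn{0}}$ have to be used \emph{selectively}, matching each of them to the corresponding piece of the decomposition $\pmn{\beta}=\beta\pmn{1}+(1-\beta)\pmn{0}$; this selective matching is precisely what produces the $\bent{\beta}$ correction. Everything else is routine: the EHB bounds of Lemma \ref{lem:information} do the heavy lifting, and the remaining manipulations are the elementary monotonicity and scaling properties of the R\'enyi divergence in Lemma \ref{lem:divergence-RM}.
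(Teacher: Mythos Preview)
Your proposal is correct and matches the paper's proof essentially step for step: for \eqref{eq:lem:informationP-A} and \eqref{eq:lem:informationP-B} the paper also decomposes \(\CRD{\rno}{\Wm}{\qmn{\rno,\pmn{\beta}}}{\pmn{\beta}}\) linearly and applies the two-sided bounds of Lemma~\ref{lem:information} at \(\pmn{1}\) and \(\pmn{0}\); for \eqref{eq:lem:informationP-C} the paper uses the same mixture \(\mQ^{\star}=\beta\qmn{\rno,\pmn{1}}+(1-\beta)\qmn{\rno,\pmn{0}}\) together with the same selective application of Lemma~\ref{lem:divergence-RM} (bounding \(\mQ^{\star}\geq\beta\qmn{\rno,\pmn{1}}\) inside the \(\pmn{1}\) summand and \(\mQ^{\star}\geq(1-\beta)\qmn{\rno,\pmn{0}}\) inside the \(\pmn{0}\) summand) to produce the \(\bent{\beta}\) term. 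The only cosmetic difference is that the paper states concavity up front via the ``infimum of linear functions'' observation, whereas you deduce it from \eqref{eq:lem:informationP-A}; both are fine.
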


Lemma \ref{lem:informationP} implies that for any positive order \(\rno\) and channel \(\Wm\),
the order \(\rno\) Augustin information \(\RMI{\rno}{\mP}{\Wm}\) is a continuous function of 
the input distribution \(\mP\) iff \(\sup_{\mP\in\pdis{\inpS}}\RMI{\rno}{\mP}{\Wm}\)
is finite.\footnote{The \renyi information, discussed in \S\ref{sec:information-renyi}, 
	has already shown to satisfy analogous relations, see 
	\cite[{Lemma \ref*{A-lem:finitecapacity}-(\ref*{A-finitecapacity-d},\ref*{A-finitecapacity-uecP})}]{nakiboglu19A}. 
	The only substantial subtlety is that for orders in \((0,1)\) the \renyi information is a
	continuous function of \(\mP\) even when the corresponding capacity expression is 
	infinite because the \renyi information is quasi-concave rather than concave in \(\mP\)
	for orders in \((0,1)\), 
	see \cite[{Lemma \ref*{A-lem:informationP}-(\ref*{A-informationP-a})}]{nakiboglu19A}.} 
Furthermore, if \(\sup_{\mP\in\pdis{\inpS}}\RMI{\rnt}{\mP}{\Wm}\) is finite for 
an \(\rnt\in\reals{+}\) then \(\{\RMI{\rno}{\mP}{\Wm}\}_{\rno\in(0,\rnt]}\) is uniformly 
equicontinuous in \(\mP\) on \(\pdis{\inpS}\).

In order to see why the finiteness of \(\sup_{\mP\in\pdis{\inpS}}\RMI{\rno}{\mP}{\Wm}\) 
is necessary for the continuity, note that the non-negativity of the \renyi 
divergence for probability measures and \eqref{eq:lem:informationP-A} imply that
\begin{align}
\notag
\RMI{\rno}{\pmn{\beta}}{\Wm}-\RMI{\rno}{\pmn{0}}{\Wm} 
&\geq \beta (\RMI{\rno}{\pmn{1}}{\Wm}-\RMI{\rno}{\pmn{0}}{\Wm})
+\beta\RD{\rno\wedge 1}{\qmn{\rno,\pmn{1}}}{\qmn{\rno,\pmn{\beta}}}
+(1-\beta)\RD{\rno\wedge 1}{\qmn{\rno,\pmn{0}}}{\qmn{\rno,\pmn{\beta}}}
\\
\notag
&\geq \beta (\RMI{\rno}{\pmn{1}}{\Wm}-\RMI{\rno}{\pmn{0}}{\Wm}).
\end{align}
On the other hand \(\lon{\pmn{\beta}-\pmn{0}}\leq 2\beta\).
Thus if there exists a \(\{\pmn{\ind}\}_{\ind\in\integers{+}}\subset\pdis{\inpS}\) such that
\(\lim_{\ind \uparrow\infty}\RMI{\rno}{\pmn{\ind}}{\Wm}=\infty\) then \(\RMI{\rno}{\mP}{\Wm}\)
is discontinuous at every \(\mP\) in \(\pdis{\inpS}\).

The converse statement, i.e. the sufficiency, can be established together with the equicontinuity.
For any \(\pmn{0},\pmn{1} \in \pdis{\inpS}\) such that \(\pmn{0}\neq\pmn{1}\) let 
\(\smn{\wedge}\), \(\smn{1}\), and \(\smn{0}\) be
\begin{align}
\notag
\smn{\wedge}
&=\tfrac{\pmn{1}\wedge\pmn{0}}{\lon{\pmn{1}\wedge\pmn{0}}},
\\
\notag
\smn{1}
&=\tfrac{\pmn{1}-\pmn{1}\wedge\pmn{0}}{1-\lon{\pmn{1}\wedge \pmn{0}}},
\\
\notag
\smn{0}
&=\tfrac{\pmn{0}-\pmn{1}\wedge\pmn{0}}{1-\lon{\pmn{1}\wedge \pmn{0}}}.
\end{align}
Then \(\smn{\wedge},\smn{1},\smn{0}\in \pdis{\inpS}\) and 
\(\smn{1}\perp \smn{0}\).
On the other hand \(\lon{\pmn{1}-\pmn{0}}=2-2\lon{\pmn{1}\wedge \pmn{0}}\). 
Therefore,
\begin{align}
\notag
\pmn{1}
&=(\tfrac{2-\lon{\pmn{1}-\pmn{0}}}{2})\smn{\wedge} + \tfrac{\lon{\pmn{1}-\pmn{0}}}{2}\smn{1},
\\
\notag
\pmn{0}
&=(\tfrac{2-\lon{\pmn{1}-\pmn{0}}}{2})\smn{\wedge} + \tfrac{\lon{\pmn{1}-\pmn{0}}}{2}\smn{0}.
\end{align}
Thus as a result of Lemmas \ref{lem:divergence-pinsker} and \ref{lem:informationP} we have
\begin{align}
\notag
\RMI{\rno}{\pmn{0}}{\Wm}\!-\!\RMI{\rno}{\pmn{1}}{\Wm}
&\!\leq\!\bent{\tfrac{\lon{\pmn{1}-\pmn{0}}}{2}}+
\tfrac{\lon{\pmn{1}\!-\!\pmn{0}}}{2} (\RMI{\rno}{\mS_{0}}{\Wm}-\RMI{\rno}{\mS_{1}}{\Wm})
\\
\label{eq:concavitybound}
&\!\leq\!\bent{\tfrac{\lon{\pmn{1}-\pmn{0}}}{2}}+
\tfrac{\lon{\pmn{1}\!-\!\pmn{0}}}{2}\RMI{\rno}{\mS_{0}}{\Wm}
&
&\forall \pmn{1},\pmn{0}\in\pdis{\inpS},\rno\in\reals{+}.
\end{align}
Thus 
\begin{align}
\notag
\abs{\RMI{\rno}{\pmn{0}}{\Wm}\!-\!\RMI{\rno}{\pmn{1}}{\Wm}}
&\!\leq\!\bent{\tfrac{\lon{\pmn{1}-\pmn{0}}}{2}}+
\tfrac{\lon{\pmn{1}\!-\!\pmn{0}}}{2}
\sup\nolimits_{\mP\in\pdis{\inpS}} \RMI{\rnt}{\mP}{\Wm}
&
&\forall \pmn{1},\pmn{0}\in\pdis{\inpS}, \rno\in(0,\rnt].
\end{align}

 \subsection{Augustin Information as a Function of the Order}\label{sec:information-order}
 	The main goal of this subsection is to characterize the behavior of 
 	the Augustin information as 
 	a function of the order for a given input distribution.
 	Lemma \ref{lem:meanO} presents preliminary observations that facilitate the analysis of 
 	Augustin information as a function of the order;
 	results of this analysis are presented in Lemma \ref{lem:informationO}. 
 	
\begin{lemma}\label{lem:meanO}
	For any channel \(\Wm\) of the form \(\Wm:\inpS\to \pmea{\outA}\) and 
 	input distribution \(\mP\in\pdis{\inpS}\),
 	\begin{enumerate}[(a)]
 	\item\label{meanO:Rdivergencebound} 
 	\(\RD{\rno}{\Wm(\dinp)}{\qmn{\rno,\mP}}\leq \ln \tfrac{1}{\mP(\dinp)}\),
 	\item\label{meanO:measurebound} 
 	\([\mP(\dinp)]^{\frac{1}{\rno\wedge 1}}\Wm(\dinp)\leq \qmn{\rno,\mP}\),
 	\item\label{meanO:boundedlogRND}  
 	 \(\abs{\ln \der{\qmn{\rno,\mP}}{\qmn{1,\mP}}}\leq \tfrac{\abs{\rno-1}}{\rno}\ln \tfrac{1}{\min\nolimits_{\dinp:\mP(\dinp)>0} \mP(\dinp)}\).
  \end{enumerate}
\end{lemma}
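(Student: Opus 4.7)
The plan is to derive all three bounds from the fixed-point identity $\Aop{\rno}{\mP}{\qmn{\rno,\mP}}=\qmn{\rno,\mP}$ established in Lemma~\ref{lem:information}-(\ref{information:zto},\ref{information:oti}). Writing $D_\dinp\DEF\RD{\rno}{\Wm(\dinp)}{\qmn{\rno,\mP}}$ and substituting the tilted density from \eqref{eq:def:tiltedchannel}, the identity becomes $(\der{\qmn{\rno,\mP}}{\rfm})^{\rno}=\sum_{\dinp}\mP(\dinp)e^{(1-\rno)D_\dinp}(\der{\Wm(\dinp)}{\rfm})^{\rno}$, $\rfm$-a.e., for any $\rfm$ with $\qmn{1,\mP}\AC\rfm$. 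Retaining a single summand and taking $\rno$-th roots yields the cornerstone estimate $\qmn{\rno,\mP}\geq c_\dinp\Wm(\dinp)$ with $c_\dinp\DEF[\mP(\dinp)]^{1/\rno}e^{(1-\rno)D_\dinp/\rno}$.

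For part (\ref{meanO:Rdivergencebound}), I would apply Lemma~\ref{lem:divergence-RM} to this measure domination to obtain $D_\dinp\leq\RD{\rno}{\Wm(\dinp)}{c_\dinp\Wm(\dinp)}=-\ln c_\dinp=-\tfrac{1}{\rno}\ln\mP(\dinp)-\tfrac{1-\rno}{\rno}D_\dinp$, which rearranges to $D_\dinp\leq\ln(1/\mP(\dinp))$. Part (\ref{meanO:measurebound}) then follows by bounding the factor $e^{(1-\rno)D_\dinp/\rno}$ appearing in $c_\dinp$: when $\rno\in(0,1]$ this factor is at least $1$, so $c_\dinp\geq[\mP(\dinp)]^{1/\rno}$; when $\rno>1$, substituting the upper bound on $D_\dinp$ from part (\ref{meanO:Rdivergencebound}) makes the factor at least $[\mP(\dinp)]^{(\rno-1)/\rno}$, so $c_\dinp\geq\mP(\dinp)$.

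For part (\ref{meanO:boundedlogRND}), I would bound $\der{\qmn{\rno,\mP}}{\qmn{1,\mP}}$ above and below by $[p^*]^{\mp\abs{\rno-1}/\rno}$, where $p^*\DEF\min_{\dinp:\mP(\dinp)>0}\mP(\dinp)$. For the upper bound, when $\rno\leq 1$ part (\ref{meanO:Rdivergencebound}) bounds $e^{(1-\rno)D_\dinp}\leq[\mP(\dinp)]^{\rno-1}$, so the fixed-point sum gives $(\der{\qmn{\rno,\mP}}{\rfm})^{\rno}\leq\sum_\dinp[\mP(\dinp)\der{\Wm(\dinp)}{\rfm}]^{\rno}$, which is in turn $\leq\abs{\supp{\mP}}^{1-\rno}(\der{\qmn{1,\mP}}{\rfm})^{\rno}$ by Jensen applied to the concave map $x\mapsto x^\rno$ with uniform weights on the support; when $\rno\geq 1$ I would instead drop $e^{(1-\rno)D_\dinp}\leq 1$ and factor $(\der{\Wm(\dinp)}{\rfm})^{\rno-1}\leq[\mP(\dinp)]^{1-\rno}(\der{\qmn{1,\mP}}{\rfm})^{\rno-1}$ using $\mP(\dinp)\Wm(\dinp)\leq\qmn{1,\mP}$. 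Since $\abs{\supp{\mP}}\leq 1/p^*$ and $\mP(\dinp)\geq p^*$, both branches produce $(\der{\qmn{\rno,\mP}}{\rfm})^{\rno}\leq[1/p^*]^{\abs{\rno-1}}(\der{\qmn{1,\mP}}{\rfm})^{\rno}$. The lower bound is dual: for $\rno\leq 1$, part (\ref{meanO:measurebound}) supplies $(\der{\qmn{\rno,\mP}}{\rfm})^{1-\rno}\geq[\mP(\dinp)]^{(1-\rno)/\rno}(\der{\Wm(\dinp)}{\rfm})^{1-\rno}$, so the tilted density satisfies $\der{\Wma{\rno}{\qmn{\rno,\mP}}(\dinp)}{\rfm}\geq[p^*]^{(1-\rno)/\rno}\der{\Wm(\dinp)}{\rfm}$, and averaging this against $\mP$ through the fixed-point identity yields $\qmn{\rno,\mP}\geq[p^*]^{(1-\rno)/\rno}\qmn{1,\mP}$; for $\rno\geq 1$, part (\ref{meanO:Rdivergencebound}) instead gives $e^{(1-\rno)D_\dinp}\geq[\mP(\dinp)]^{\rno-1}$, leading to $(\der{\qmn{\rno,\mP}}{\rfm})^{\rno}\geq\sum_\dinp[\mP(\dinp)\der{\Wm(\dinp)}{\rfm}]^{\rno}\geq\abs{\supp{\mP}}^{1-\rno}(\der{\qmn{1,\mP}}{\rfm})^{\rno}\geq[p^*]^{\rno-1}(\der{\qmn{1,\mP}}{\rfm})^{\rno}$ by the convex-Jensen inequality. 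The main bookkeeping obstacle will be tracking the exponent signs in the two branches $\rno\lessgtr 1$ so that all constants collapse to the symmetric bound $[p^*]^{\pm\abs{\rno-1}/\rno}$.
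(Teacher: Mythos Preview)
Your proposal is correct and follows essentially the same route as the paper. Parts (\ref{meanO:Rdivergencebound}) and (\ref{meanO:measurebound}) are identical to the paper's argument: both extract the single-term lower bound \([\mP(\dinp)]^{1/\rno}e^{(1-\rno)D_\dinp/\rno}\Wm(\dinp)\leq\qmn{\rno,\mP}\) from the fixed-point identity \eqref{eq:mean}, feed it through Lemma~\ref{lem:divergence-RM}, and then split on the sign of \(1-\rno\). For part (\ref{meanO:boundedlogRND}) you and the paper use slightly different but equivalent elementary inequalities: the paper applies Jensen's inequality with weights \(\mP\) (i.e.\ the power-mean inequality \([\sum_\dinp \mP(\dinp)\xi_\dinp^{\rno}]^{1/\rno}\lessgtr\sum_\dinp \mP(\dinp)\xi_\dinp\)) in one direction and the unweighted super/sub-additivity \(\sum_\dinp \xi_\dinp^{\rno}\gtrless(\sum_\dinp \xi_\dinp)^{\rno}\) in the other, whereas you use Jensen with uniform weights on \(\supp{\mP}\) together with the pointwise bound \(\mP(\dinp)\Wm(\dinp)\leq\qmn{1,\mP}\) and the tilted-density form of the fixed point. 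Both routes collapse to the same constant \([p^{*}]^{\pm\abs{\rno-1}/\rno}\); the paper's is marginally more uniform in structure, but there is no substantive difference.
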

Bounds given in Lemma \ref{lem:meanO} follow from \eqref{eq:mean} via elementary manipulations.

\begin{lemma}\label{lem:informationO} 
For any channel \(\Wm\) of the form \(\Wm:\inpS\to \pmea{\outA}\) and input distribution 
\(\mP\in\pdis{\inpS}\),
\begin{enumerate}[(a)]
\item\label{informationO:strictconvexity}
Either 
\((\rno-1)\RMI{\rno}{\mP}{\Wm}\) is a strictly convex function 
of \(\rno\) from \(\reals{+}\) to  \([-\bent{\mP},\infty)\)
or 
\(\RMI{\rno}{\mP}{\Wm}=\sum_{\dinp}\mP(\dinp)\ln \gamma(\dinp)\) 
for some \(\gamma:\inpS\to[1,\infty)\) satisfying
\(\der{\Wm(\dinp)}{\qmn{1,\mP}}=\gamma(\dinp)\)  \(\Wm(\dinp)\)-a.s. for all 
\(\dinp\in\supp{\mP}\)
and \(\qmn{\rno,\mP}=\qmn{1,\mP}\) for all \(\rno\in\reals{+}\).

\item\label{informationO:decreasing}
\(\frac{1-\rno}{\rno}\RMI{\rno}{\mP}{\Wm}\) is a nonincreasing and continuous function of \(\rno\)
from \(\reals{+}\) to \(\reals{}\).
\item\label{informationO:continuity}
 \(\RMI{\rno}{\mP}{\Wm}\) is a nondecreasing and continuous function of \(\rno\)
 from \(\reals{+}\) to \([0,\bent{\mP}]\). 
\item\label{informationO:mean-uec} 
 \(\{\ln\der{\qmn{\rno,\mP}}{\qmn{1,\mP}}\}_{\dout\in\outS}\) is an
equicontinuous family of functions of \(\rno\) on \(\reals{+}\).
\item\label{informationO:differentiability}
\(\RMI{\rno}{\mP}{\Wm}\) is a continuously differentiable function of \(\rno\) 
from \(\reals{+}\) to \([0,\bent{\mP}]\) such that
\begin{align}
\label{eq:lem:informationO:differentiability}
\left.\pder{}{\rno}\RMI{\rno}{\mP}{\Wm}\right\vert_{\rno=\rnf}
&=\left.\pder{}{\rno}\CRD{\rno}{\Wm}{\qmn{\rnf,\mP}}{\mP}\right\vert_{\rno=\rnf}
\\
\label{eq:lem:informationO:differentiability-alt}
&=\begin{cases}
\tfrac{1}{(\rnf-1)^2}\CRD{1}{\Wma{\rnf}{\qmn{\rnf,\mP}}}{\Wm}{\mP}
&\rnf\neq 1
\\
\sum\nolimits_{\dinp}\tfrac{\mP(\dinp)}{2}
 \EXS{\Wm(\dinp)}{\left(\ln \der{\Wm(\dinp)}{\qmn{1,\mP}}-\RD{1}{\Wm(\dinp)}{\qmn{1,\mP}}\right)^{2}}
 &\rnf= 1
 \end{cases}.
 \end{align}
\item\label{informationO:monotonicityofharoutunianinformation}
If \((\rno-1)\RMI{\rno}{\mP}{\Wm}\) is strictly convex in \(\rno\), then 
\(\RMI{1}{\mP}{\Wma{\rno}{\qmn{\rno,\mP}}}\) ---i.e. \(\CRD{1}{\Wma{\rno}{\qmn{\rno,\mP}}}{\qmn{\rno,\mP}}{\mP}\)---
is a monotonically increasing continuous function of \(\rno\) on \(\reals{+}\);
else  \(\RMI{1}{\mP}{\Wma{\rno}{\qmn{\rno,\mP}}}=\sum_{\dinp}\mP(\dinp)\ln \gamma(\dinp)\)
---i.e. \(\CRD{1}{\Wma{\rno}{\qmn{\rno,\mP}}}{\qmn{\rno,\mP}}{\mP}=\sum_{\dinp}\mP(\dinp)\ln \gamma(\dinp)\)---
for some \(\gamma:\inpS\to[1,\infty)\) satisfying
\(\der{\Wm(\dinp)}{\qmn{1,\mP}}=\gamma(\dinp)\)  \(\Wm(\dinp)\)-a.s. for all 
\(\dinp\in\supp{\mP}\)
and \(\qmn{\rno,\mP}=\qmn{1,\mP}\) for all \(\rno\in\reals{+}\). 
\item\label{informationO:limitofharoutunianinformation} 
\(\lim_{\rno\downarrow0}\RMI{1}{\mP}{\Wma{\rno}{\qmn{\rno,\mP}}}=
\lim_{\rno\downarrow0}\RMI{\rno}{\mP}{\Wm}\).
\end{enumerate}
\end{lemma}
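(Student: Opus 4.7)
The plan hinges on the variational identity from Lemma \ref{lem:information}-(\ref{information:alternative}). Multiplying through by \((\rno-1)\) and noting that the sign flip converts the infimum for \(\rno\in(0,1)\) and the supremum for \(\rno\in(1,\infty)\) into a single supremum, I get
\[
(\rno-1)\RMI{\rno}{\mP}{\Wm}
=\sup\nolimits_{\Vm\in\pmea{\outA|\inpS}}
\bigl[\rno(\RMI{1}{\mP}{\Vm}-\CRD{1}{\Vm}{\Wm}{\mP})-\RMI{1}{\mP}{\Vm}\bigr]
\]
valid on all of \(\reals{+}\), with the supremum attained at \(\Vm=\Wma{\rno}{\qmn{\rno,\mP}}\). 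This sup-of-affine representation, together with the closed-form fixed-point relation \eqref{eq:mean} for \(\qmn{\rno,\mP}\), drives nearly every assertion of the lemma.

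For (\ref{informationO:strictconvexity}), the function is convex as a supremum of affine functions; strict convexity can fail only if a single \(\Vm\) is optimal over an interval of \(\rno\)'s, which by Lemma \ref{lem:information}-(\ref{information:alternative}) forces \(\Wma{\rno}{\qmn{\rno,\mP}}\) to be \(\rno\)-independent there, and comparing two such orders in \eqref{eq:def:tiltedchannel} forces \(\der{\Wm(\dinp)}{\qmn{1,\mP}}\) to be \(\Wm(\dinp)\)-a.s.\ constant on \(\supp{\mP}\); the equality \(\qmn{\rno,\mP}=\qmn{1,\mP}\) for all \(\rno\) under this degeneracy then follows by direct substitution in \eqref{eq:mean}. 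Rewriting the representation as
\[
\tfrac{1-\rno}{\rno}\RMI{\rno}{\mP}{\Wm}
=\inf\nolimits_{\Vm}\bigl[\CRD{1}{\Vm}{\Wm}{\mP}+\tfrac{1-\rno}{\rno}\RMI{1}{\mP}{\Vm}\bigr]
\]
yields (\ref{informationO:decreasing}): each summand is nonincreasing in \(\rno\) because \(\RMI{1}{\mP}{\Vm}\ge 0\) and \(\tfrac{1-\rno}{\rno}\) is strictly decreasing, and continuity follows from concavity of the infimum as a function of \(\tfrac{1-\rno}{\rno}\) on the open interval \((-1,\infty)\). Part (\ref{informationO:continuity}) then uses that \((\rno-1)\RMI{\rno}{\mP}{\Wm}\) is convex and vanishes at \(\rno=1\), so the chord slope \(\RMI{\rno}{\mP}{\Wm}\) is nondecreasing; the bound by \(\bent{\mP}\) is Lemma \ref{lem:information}-(\ref{information:bounded}), and continuity away from \(\rno=1\) is automatic for a convex function, while continuity at \(\rno=1\) comes from squeezing the one-sided secants.

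The main obstacles are (\ref{informationO:mean-uec}) and (\ref{informationO:differentiability}). For equicontinuity, I would differentiate \eqref{eq:mean} in \(\rno\), treating \(\qmn{\rno,\mP}\) as the implicit solution of the fixed-point equation; the uniform pointwise bounds in Lemma \ref{lem:meanO} supply a dominating function that controls \(\partial_{\rno}\ln\der{\qmn{\rno,\mP}}{\qmn{1,\mP}}\) uniformly in \(\dout\) on compact subintervals of \(\reals{+}\). For differentiability, the envelope structure of the variational representation gives
\[
\pder{}{\rno}\RMI{\rno}{\mP}{\Wm}\bigg|_{\rno=\rnf}
=\pder{}{\rno}\CRD{\rno}{\Wm}{\qmn{\rnf,\mP}}{\mP}\bigg|_{\rno=\rnf},
\]
and then \eqref{eq:divergence-derivative-first} applied summand-wise yields the explicit formulas in \eqref{eq:lem:informationO:differentiability-alt}; continuity of the derivative requires exchanging limit and integration, which is justified by the equicontinuity from (\ref{informationO:mean-uec}). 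The hardest substep is justifying the envelope exchange uniformly and handling \(\rnf=1\) as a separate limiting case via dominated convergence against the bounds of Lemma \ref{lem:meanO}.

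Parts (\ref{informationO:monotonicityofharoutunianinformation}) and (\ref{informationO:limitofharoutunianinformation}) are read off from the affine-tangent structure. Writing \(s_\rno=\RMI{1}{\mP}{\Wma{\rno}{\qmn{\rno,\mP}}}-\CRD{1}{\Wma{\rno}{\qmn{\rno,\mP}}}{\Wm}{\mP}\) for the slope and \(I_\rno=\RMI{1}{\mP}{\Wma{\rno}{\qmn{\rno,\mP}}}\) for the negated intercept, sup-optimality at \(\rno_1<\rno_2\) yields the cross-inequalities \(\rno_1(s_{\rno_1}-s_{\rno_2})\ge I_{\rno_1}-I_{\rno_2}\ge \rno_2(s_{\rno_1}-s_{\rno_2})\), which force \(s_{\rno_1}\le s_{\rno_2}\) and hence \(I_{\rno_1}\le I_{\rno_2}\), with strict inequalities under the strict-convexity alternative of (\ref{informationO:strictconvexity}); continuity of \(I_\rno\) in \(\rno\) follows from (\ref{informationO:differentiability}) combined with (\ref{informationO:mean-uec}). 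For (\ref{informationO:limitofharoutunianinformation}), the inequality \(\RMI{\rno}{\mP}{\Wm}\ge I_\rno\) read off from \eqref{eq:lem:information:alternative:opt} gives one direction; for the reverse, I would use \(\Vm=\Wma{\rno'}{\qmn{\rno',\mP}}\) as a test channel in \eqref{eq:lem:information:alternative:def} for \(\rno<\rno'\), send \(\rno\downarrow 0\) first to kill the \(\tfrac{\rno}{1-\rno}\CRD{1}{\Vm}{\Wm}{\mP}\) term, and then send \(\rno'\downarrow 0\).
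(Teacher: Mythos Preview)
Your overall architecture---driving everything through the sup-of-affine representation from Lemma~\ref{lem:information}-(\ref{information:alternative})---matches the paper's, and parts (\ref{informationO:decreasing}), (\ref{informationO:differentiability}), (\ref{informationO:monotonicityofharoutunianinformation}), (\ref{informationO:limitofharoutunianinformation}) are essentially the paper's arguments in outline. Your chord-slope argument for monotonicity in (\ref{informationO:continuity}) is actually cleaner than the paper's route (which argues via ``infimum of nondecreasing functions'' using Lemma~\ref{lem:divergence-order}); both work.

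There is a genuine gap in your plan for (\ref{informationO:mean-uec}). Implicitly differentiating \eqref{eq:mean} in \(\rno\) presupposes regularity of \(\rno\mapsto\qmn{\rno,\mP}\) that you do not yet have: the right-hand side of \eqref{eq:mean} depends on \(\qmn{\rno,\mP}\) through \(\RD{\rno}{\Wm(\dinp)}{\qmn{\rno,\mP}}\), so a formal implicit-function argument would need a priori control on \(\partial_{\rno}\qmn{\rno,\mP}\), which is exactly what you are trying to establish. The paper avoids this circularity: it bounds the \emph{difference} \(\ln\der{\qmn{\rnt,\mP}}{\qmn{1,\mP}}-\ln\der{\qmn{\rnf,\mP}}{\qmn{1,\mP}}\) directly from \eqref{eq:mean} via Jensen and the elementary inequality \(\sum\xi_{\dinp}^{\rnt/\rnf}\geq(\sum\xi_{\dinp})^{\rnt/\rnf}\), reducing equicontinuity to continuity of the finitely many scalar functions \(\tau_{\dinp}(\rno)=\tfrac{\rno-1}{\rno}\RD{\rno}{\Wm(\dinp)}{\qmn{\rno,\mP}}\). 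Continuity of each \(\tau_{\dinp}\) is then obtained by introducing the interpolated reference measure \(\smn{\rno}=\tfrac{\rnf-\rno}{\rnf-\rnt}\qmn{\rnt,\mP}+\tfrac{\rno-\rnt}{\rnf-\rnt}\qmn{\rnf,\mP}\), differentiating \(\int(\der{\Wm(\dinp)}{\rfm})^{\rno}(\der{\smn{\rno}}{\rfm})^{1-\rno}\rfm(\dif{\dout})\) under the integral (justified by dominated convergence against the bounds of Lemma~\ref{lem:meanO}), and invoking the mean value theorem. This step is the technical heart of the lemma, and your implicit-differentiation shortcut does not close.

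A smaller looseness in (\ref{informationO:strictconvexity}): ``a single \(\Vm\) is optimal over an interval'' does not follow immediately from the sup being affine there, because distinct \(\Vm\)'s can yield the same affine function, and your optimizer \(\Wma{\rno}{\qmn{\rno,\mP}}\) has \(\qmn{\rno,\mP}\) itself varying with \(\rno\). The paper instead establishes strict convexity separately on \((0,1)\) and \((1,\infty)\) by a direct H\"older argument (with the H\"older equality case giving the degeneracy \(\der{\Wm(\dinp)}{\qmn{1,\mP}}=\gamma(\dinp)\) a.s.), and then patches across \(\rno=1\) using the already-established global convexity.
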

The (strict) convexity of \((\rno-1)\RMI{\rno}{\mP}{\Wm}\) in \(\rno\) on  \(\reals{+}\) 
is equivalent to the (strict) concavity of the function \(\mS\RMI{\frac{1}{1+\mS}}{\mP}{\Wm}\)
in \(\mS\) on \((-1,\infty)\), 
see the proof of part (\ref{informationO:monotonicityofharoutunianinformation})
for a proof.
The concavity of the function \(\mS\RMI{\frac{1}{1+\mS}}{\mP}{\Wm}\)
in \(\mS\) on \((-1,\infty)\) and 
parts (\ref{informationO:decreasing}) and (\ref{informationO:continuity}) 
of Lemma \ref{lem:informationO}
have been reported by Augustin in \cite[Lemma 34.3]{augustin78} for orders between
zero and one. Parts (\ref{informationO:strictconvexity}), (\ref{informationO:mean-uec}), 
  (\ref{informationO:differentiability}), (\ref{informationO:monotonicityofharoutunianinformation}),
  and (\ref{informationO:limitofharoutunianinformation}) 
  of Lemma \ref{lem:informationO} are new to the best of our 
 knowledge.
 Lemma \ref{lem:informationO} is primarily about the Augustin information as a function of the order for a given 
 	input distribution.
 	Part (\ref{informationO:mean-uec}), i.e. 
 	the equicontinuity of \(\{\ln\der{\qmn{\rno,\mP}}{\qmn{1,\mP}}\}_{\dout\in\outS}\) as a 
 	family of functions of the order \(\rno\),
 	  is derived as a necessary tool for establishing the continuity 
 	of the derivative of the Augustin information, i.e. part (\ref{informationO:differentiability}). 
 	Note that Lemma \ref{lem:meanO}-(\ref{meanO:boundedlogRND}) has already established this equicontinuity at \(\rno=1\).

\subsection{Augustin Information vs \renyi Information}\label{sec:information-renyi}
The Augustin information is not the only information that has been 
defined in terms of the \renyi divergence; 
there are others. 
The \renyi information, defined first by 
Gallager\footnote{Gallager uses a different parametrization
and confines his discussion to \(\rno\in(0,1)\) case. } 
\cite{gallager65} and then by Sibson \cite{sibson69}, 
is arguably the most prominent one among them 
because of its operational significance 
established by Gallager \cite{gallager65}.
\begin{definition}\label{def:renyiinformation}
	For any \(\rno\in \reals{+}\), \(\Wm:\inpS\to \pmea{\outA}\), and \(\mP\in \pdis{\inpS}\)  
	\emph{the order \(\rno\) \renyi information for the input distribution \(\mP\)} is
	\begin{align}
	\label{eq:def:renyiinformation}
	\GMI{\rno}{\mP}{\Wm}
	&\DEF \inf\nolimits_{\mQ\in \pmea{\outA}} \RD{\rno}{\mP\mtimes \Wm}{\mP\otimes\mQ}.
	\end{align}
\end{definition}

As noted by Sibson \cite{sibson69}, one can confirm by substitution that  
\begin{align}
\notag  %\label{eq:sibson}
\RD{\rno}{\mP\mtimes \Wm}{\mP\otimes\mQ}
&=\RD{\rno}{\mP\mtimes \Wm}{\mP\otimes\qgn{\rno,\mP}}+\RD{\rno}{\qgn{\rno,\mP}}{\mQ}
&
&\forall \mP\in\pdis{\inpS},\mQ\in\pmea{\outA},\rno\in\reals{+}
\end{align}
where \(\qgn{\rno,\mP}\) is the \renyi mean defined in \eqref{eq:def:renyimean}.
Then using Lemma \ref{lem:divergence-pinsker} we can conclude
that
\begin{align}
\label{eq:renyiinformation}
\GMI{\rno}{\mP}{\Wm}
&=\RD{\rno}{\mP\mtimes \Wm}{\mP\otimes\qgn{\rno,\mP}}
&
&\forall \mP\in\pdis{\inpS},\rno\in\reals{+},
\\
\label{eq:renyiinformation:EHB}
\RD{\rno}{\mP\mtimes \Wm}{\mP\otimes\mQ}
&=\GMI{\rno}{\mP}{\Wm}+\RD{\rno}{\qgn{\rno,\mP}}{\mQ}
&
&\forall \mP\in\pdis{\inpS},\mQ\in\pmea{\outA},\rno\in\reals{+}.
\end{align}
For orders other than one the closed form expression given in \eqref{eq:renyiinformation}
is equal to the following expression, which is sometimes taken as the definition of
the \renyi information,
\begin{align}
\notag %\label{eq:renyiinformation-neq-alternative}
\GMI{\rno}{\mP}{\Wm}
&=\tfrac{\rno}{\rno-1}\ln \lon{\mmn{\rno,\mP}}.
&
&\rno\in\reals{+}\setminus\{1\}.
\end{align}

Note that unlike the order \(\rno\) Augustin mean, the order \(\rno\) \renyi mean 
has a closed form expression for orders other than one,
as well. Furthermore, the inequalities given in equations 
\eqref{eq:lem:information-one:EHB},
\eqref{eq:lem:information-zto:EHB},
\eqref{eq:lem:information-oti:EHB}
of Lemma \ref{lem:information}
are replaced by the equality given in \eqref{eq:renyiinformation:EHB}. 
A discussion of the \renyi information similar to the one we have presented in this section 
for the Augustin information can be found in \cite{nakiboglu19A}. 

The order one \renyi information is equal to the order one Augustin information for all input 
distributions. For other orders such an equality does not hold for arbitrary input distributions. 
However, it is possible to characterize the Augustin information and the \renyi information in 
terms of one another through appropriate variational forms. 
Characterizing the Augustin information in a variational form in terms of the \renyi information 
is especially useful, because the Augustin information does not have a closed form expression 
whereas the \renyi information does.
This characterization also implies another variational characterization of 
the Augustin information.
\begin{lemma}\label{lem:polsha}
	Let \(\Wm\) be a channel of the form \(\Wm:\inpS\to \pmea{\outA}\) and \(\mP\) be 
	an input distribution in \(\pdis{\inpS}\).
	\begin{enumerate}[(a)]
		\item\label{polsha:poltyrev}
		Let \(\umn{\rno,\mP}\in\pdis{\inpS}\) be
		\(\umn{\rno,\mP}(\dinp)=\tfrac{\mP(\dinp)e^{(1-\rno)\RD{\rno}{\Wm(\dinp)}{\qmn{\rno,\mP}}}}{\sum_{\tilde{\dinp}}\mP(\tilde{\dinp})e^{(1-\rno)\RD{\rno}{\Wm(\tilde{\dinp})}{\qmn{\rno,\mP}}}}\) for all \(\dinp\); then
		\begin{align}
		\label{eq:lem:polsha:poltyrev}
		\RMI{\rno}{\mP}{\Wm}
		&=\GMI{\rno}{\umn{\rno,\mP}}{\Wm}+\tfrac{1}{\rno-1}\RD{1}{\mP}{\umn{\rno,\mP}}
		&
		&
		\\
		\label{eq:lem:polsha:poltyrev-variational}
		&=\begin{cases}
		\sup_{\mU\in\pdis{\inpS}} \GMI{\rno}{\mU}{\Wm}+\tfrac{1}{\rno-1}\RD{1}{\mP}{\mU}
		&\rno\in(0,1)
		\\
		\inf_{\mU\in\pdis{\inpS}} \GMI{\rno}{\mU}{\Wm}+\tfrac{1}{\rno-1}\RD{1}{\mP}{\mU}
		&\rno\in(1,\infty)
		\end{cases}.
		&
		&
		\end{align}
		\item\label{polsha:shayevitz}
		Let \(\amn{\rno,\mP}\in\pdis{\inpS}\) be
		\(\amn{\rno,\mP}(\dinp)=\tfrac{\mP(\dinp)e^{(\rno-1)\RD{\rno}{\Wm(\dinp)}{\qgn{\rno,\mP}}}}{\sum_{\tilde{\dinp}}\mP(\tilde{\dinp})e^{(\rno-1)\RD{\rno}{\Wm(\tilde{\dinp})}{\qgn{\rno,\mP}}}}\) for all \(\dinp\); 
		then
		\begin{align}
		\label{eq:lem:polsha:shayevitz}
		\GMI{\rno}{\mP}{\Wm}
		&=\RMI{\rno}{\amn{\rno,\mP}}{\Wm}-\tfrac{1}{\rno-1}\RD{1}{\amn{\rno,\mP}}{\mP}
		&
		&
		\\
		\label{eq:lem:polsha:shayevitz-variational}
		&=\begin{cases}
		\inf_{\mA\in\pdis{\inpS}} \RMI{\rno}{\mA}{\Wm}-\tfrac{1}{\rno-1}\RD{1}{\mA}{\mP}
		&\rno\in(0,1)
		\\
		\sup_{\mA\in\pdis{\inpS}} \RMI{\rno}{\mA}{\Wm}-\tfrac{1}{\rno-1}\RD{1}{\mA}{\mP}
		&\rno\in(1,\infty)
		\end{cases}.
		&
		&
		\end{align}
		\item\label{polsha:augustin}
		Let \(\fX_{\rno,\mP}:\inpS\to\reals{}\) be 
		\(\fX_{\rno,\mP}(\dinp)=[\RD{\rno}{\Wm(\dinp)}{\qmn{\rno,\mP}}-\RMI{\rno}{\mP}{\Wm}]\IND{\mP(\dinp)>0}\) 
		for all \(\dinp\); then
		\begin{align}
		\label{eq:lem:polsha:augustin}
		\RMI{\rno}{\mP}{\Wm}
		&=\tfrac{\rno}{\rno-1}\ln\EXS{\rfm}{\left(
			\sum\nolimits_{\dinp}\mP(\dinp)e^{(1-\rno)\fX_{\rno,\mP}(\dinp)}
			\left[\der{\Wm(\dinp)}{\rfm}\right]^{\rno}
			\right)^{\sfrac{1}{\rno}}}
		&
		&
		\\
		\label{eq:lem:polsha:augustin-variational}
		&=\tfrac{\rno}{\rno-1}\ln \inf\nolimits_{\fX:\EXS{\mP}{\fX}=0}\EXS{\rfm}{\left(
			\sum\nolimits_{\dinp}\mP(\dinp)e^{(1-\rno)\fX(\dinp)}
			\left[\der{\Wm(\dinp)}{\rfm}\right]^{\rno}
			\right)^{\sfrac{1}{\rno}}}.		
		&
		&
		\end{align}		
	\end{enumerate}	
\end{lemma}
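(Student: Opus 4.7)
The plan is to derive all three identities from a common engine: Gibbs's (log moment generating function) variational principle applied to the closed-form representation \(\RD{\rno}{\mU\mtimes\Wm}{\mU\otimes\mQ}=\tfrac{1}{\rno-1}\ln\EXS{\mU}{e^{(\rno-1)\RD{\rno}{\Wm(\inp)}{\mQ}}}\) of the \renyi divergence, combined with the fixed point characterization of the Augustin mean from Lemma \ref{lem:information}-(\ref{information:zto},\ref{information:oti}). For part (\ref{polsha:poltyrev}), I would raise \(\Aop{\rno}{\mP}{\qmn{\rno,\mP}}=\qmn{\rno,\mP}\) to its Radon--Nikodym form, obtaining \([\der{\qmn{\rno,\mP}}{\rfm}]^{\rno}=Z\sum_{\dinp}\umn{\rno,\mP}(\dinp)[\der{\Wm(\dinp)}{\rfm}]^{\rno}\), where \(Z\DEF\sum_{\dinp}\mP(\dinp)e^{(1-\rno)\RD{\rno}{\Wm(\dinp)}{\qmn{\rno,\mP}}}\) is the normalizer of \(\umn{\rno,\mP}\). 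Integrating in \(\rfm\) then identifies \(\qgn{\rno,\umn{\rno,\mP}}=\qmn{\rno,\mP}\) and \(\GMI{\rno}{\umn{\rno,\mP}}{\Wm}=-\tfrac{1}{\rno-1}\ln Z\), while direct manipulation gives \(\RD{1}{\mP}{\umn{\rno,\mP}}=\ln Z-(1-\rno)\RMI{\rno}{\mP}{\Wm}\); summing the two produces \eqref{eq:lem:polsha:poltyrev}. For the variational form \eqref{eq:lem:polsha:poltyrev-variational}, Gibbs's principle with reference \(\mU\) and test measure \(\mP\) yields, for every \(\mQ\), the bound \(\RD{\rno}{\mU\mtimes\Wm}{\mU\otimes\mQ}+\tfrac{1}{\rno-1}\RD{1}{\mP}{\mU}\geq\CRD{\rno}{\Wm}{\mQ}{\mP}\) when \(\rno>1\) and its reversal when \(\rno\in(0,1)\); taking the infimum over \(\mQ\) on both sides collapses the left side to \(\GMI{\rno}{\mU}{\Wm}+\tfrac{1}{\rno-1}\RD{1}{\mP}{\mU}\) and the right side to \(\RMI{\rno}{\mP}{\Wm}\), delivering the required one-sided bound for every \(\mU\).

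For part (\ref{polsha:shayevitz}), I first check that \(\qgn{\rno,\mP}\) is a fixed point of \(\Aop{\rno}{\amn{\rno,\mP}}{\cdot}\): the tilting factors \(e^{(\rno-1)\RD{\rno}{\Wm(\dinp)}{\qgn{\rno,\mP}}}\) built into the definition of \(\amn{\rno,\mP}\) annihilate the factors \(e^{(1-\rno)\RD{\rno}{\Wm(\dinp)}{\qgn{\rno,\mP}}}\) in \(\Wma{\rno}{\qgn{\rno,\mP}}(\dinp)\), and the remaining sum reduces to \([\der{\mmn{\rno,\mP}}{\rfm}]^{\rno}/\lon{\mmn{\rno,\mP}}^{\rno}=[\der{\qgn{\rno,\mP}}{\rfm}]^{\rno}\) by the definition of the mean measure. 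Since \(\qmn{1,\amn{\rno,\mP}}\AC\qgn{\rno,\mP}\), Lemma \ref{lem:information} forces \(\qmn{\rno,\amn{\rno,\mP}}=\qgn{\rno,\mP}\). Computing the normalizer \(Z'\DEF\sum_{\dinp}\mP(\dinp)e^{(\rno-1)\RD{\rno}{\Wm(\dinp)}{\qgn{\rno,\mP}}}\) as \(\lon{\mmn{\rno,\mP}}^{\rno}=e^{(\rno-1)\GMI{\rno}{\mP}{\Wm}}\) and substituting into the definition of \(\amn{\rno,\mP}\) yields \(\RD{1}{\amn{\rno,\mP}}{\mP}=(\rno-1)[\RMI{\rno}{\amn{\rno,\mP}}{\Wm}-\GMI{\rno}{\mP}{\Wm}]\), which rearranges into \eqref{eq:lem:polsha:shayevitz}. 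For \eqref{eq:lem:polsha:shayevitz-variational}, I apply Gibbs's principle with reference \(\mP\) to the same log-MGF, specializing the free measure to \(\mQ=\qgn{\rno,\mP}\) in the \(\rno>1\) branch (and combining with \(\RMI{\rno}{\mA}{\Wm}\leq\CRD{\rno}{\Wm}{\qgn{\rno,\mP}}{\mA}\)) or to \(\mQ=\qmn{\rno,\mA}\) in the \(\rno\in(0,1)\) branch (and combining with \(\RD{\rno}{\mP\mtimes\Wm}{\mP\otimes\qmn{\rno,\mA}}\geq\GMI{\rno}{\mP}{\Wm}\) via \eqref{eq:renyiinformation:EHB}) to produce the required sided bound, with the equality case at \(\mA=\amn{\rno,\mP}\) already supplied by the identity proved above.

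Part (\ref{polsha:augustin}) is a reparametrization of part (\ref{polsha:poltyrev}). The substitution \(\mU(\dinp)=\mP(\dinp)e^{(1-\rno)\fX(\dinp)}/N\) with \(N\DEF\sum_{\dinp}\mP(\dinp)e^{(1-\rno)\fX(\dinp)}\) bijects real-valued \(\fX\) satisfying \(\EXS{\mP}{\fX}=0\) with probability measures \(\mU\) satisfying \(\supp{\mU}=\supp{\mP}\); under it \(\ln N=\RD{1}{\mP}{\mU}\) and the inner expression equals \(N^{1/\rno}\der{\mmn{\rno,\mU}}{\rfm}\), so the functional in \eqref{eq:lem:polsha:augustin-variational} becomes precisely \(\GMI{\rno}{\mU}{\Wm}+\tfrac{1}{\rno-1}\RD{1}{\mP}{\mU}\) and part (\ref{polsha:poltyrev}) closes the argument, provided we verify that the restriction \(\supp{\mU}\subseteq\supp{\mP}\) loses no optimum; this holds because the optimizer \(\umn{\rno,\mP}\) itself lies in the restricted class and any \(\mU\) with \(\mP\not\AC\mU\) renders \(\RD{1}{\mP}{\mU}=\infty\), sending the objective to \(\pm\infty\) on the disfavored side. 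The principal obstacle throughout will be the consistent bookkeeping of the sign of \(\rno-1\) through the Gibbs inequalities so that suprema and infima end up on the correct sides in the two regimes \(\rno\in(0,1)\) and \(\rno\in(1,\infty)\); a secondary subtlety is that part (\ref{polsha:shayevitz}) requires two different choices of the free measure \(\mQ\) in the two regimes in order to combine Gibbs's bound with the slack inequality \(\RMI\leq\CRD\) in the favorable direction.
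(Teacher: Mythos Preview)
Your proposal is correct and follows essentially the same route as the paper. The paper reduces Lemma~\ref{lem:polsha} to the \(\lgm=0\) case of Lemma~\ref{lem:Lpolsha}, whose proof uses precisely your ingredients: the fixed-point identity \eqref{eq:mean} to establish the point identities in (\ref{polsha:poltyrev})--(\ref{polsha:shayevitz}), and Jensen's inequality for the logarithm---what you package as the Gibbs/Donsker--Varadhan bound \(\ln\EXS{\mU}{e^{g}}\geq\EXS{\mP}{g}-\RD{1}{\mP}{\mU}\)---to obtain the one-sided variational bounds. The only cosmetic difference is that the paper specializes \(\mQ\) at the outset (to \(\qmn{\rno,\mP}\) or \(\qgn{\rno,\mU}\) depending on the part and the regime) and handles \(\mP\not\AC\mU\) via the Lebesgue decomposition \(\mU=\umn{ac}+\umn{s}\), whereas you keep \(\mQ\) free and take the infimum afterward; the two are equivalent because the paper's choices are exactly the minimizers of the two sides.
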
 
Lemma \ref{lem:polsha}-(\ref{polsha:poltyrev}) was first proved by Poltyrev,
\cite[Thm. 3.4]{poltyrev82}, in a slightly different form 
for \(\rno\in[\sfrac{1}{2},1)\) case 
assuming that \(\outS\) is finite.
Equation \eqref{eq:lem:polsha:shayevitz-variational} of Lemma \ref{lem:polsha}-(\ref{polsha:shayevitz}) 
was first proved by Shayevitz, \cite[Thm. 1]{shayevitz11}, for finite \(\outS\) case.
Shayevitz, however, neither gave the expression for the optimal \(\amn{\rno,\mP}\), 
nor asserted its existence in \cite{shayevitz11}. 
Lemma \ref{lem:polsha}-(\ref{polsha:augustin}) was first proved by 
Augustin, \cite[Lemma 35.7]{augustin78} for orders less than 
one.\footnote{\!\cite[Lemma 35.7-(d)]{augustin78} is implied by the stronger inequalities 
	established using \eqref{eq:lem:information-zto:EHB} and 
	Lemma \ref{lem:polsha}-(\ref{polsha:augustin}).} 

The following inequalities are implied by both
\(\mU=\mP\) point in the variational characterization given in Lemma \ref{lem:polsha}-(\ref{polsha:poltyrev})
and
 \(\mA=\mP\) point in the variational characterization given in Lemma \ref{lem:polsha}-(\ref{polsha:shayevitz}).
These inequalities can also be 
obtained using the Jensen's inequality and the concavity of the natural logarithm function.
%On the other hand
%\(\RD{\rno}{\mP\mtimes \Wm}{\mP\otimes\mQ}\leq \CRD{\rno}{\Wm}{\mQ}{\mP}\)
%for \(\rno\in(0,1)\)
%and
%\(\RD{\rno}{\mP\mtimes \Wm}{\mP\otimes\mQ}\geq \CRD{\rno}{\Wm}{\mQ}{\mP}\)
%for \(\rno \in (1,\infty)\)
%by Jensen's inequality and the concavity of the natural logarithm function.
%Hence
\begin{align}
\label{eq:renyiaugustin-information-zto}
\RMI{\rno}{\mP}{\Wm}
&\geq \GMI{\rno}{\mP}{\Wm}
&
&\rno \in (0,1]
\\
\label{eq:renyiaugustin-information-oti}
\RMI{\rno}{\mP}{\Wm}
&\leq \GMI{\rno}{\mP}{\Wm}
&
&\rno \in [1,\infty)
\end{align}

%!TEX root=../main-C.tex
\section{The Augustin Capacity}\label{sec:capacity}
In the previous section we have defined and analyzed the Augustin information and mean;
our main aim in this section is doing the same for the Augustin capacity and center. 
In \S\ref{sec:capacity-center}, we establish the existence of a unique Augustin center 
for all convex constraint sets with finite Augustin capacity and investigate the 
implications of the existence of an Augustin center for a given order and constraint set.  
In \S\ref{sec:capacity-order}, we analyze the Augustin capacity and center as a function 
of the order for a given constraint set.
In \S\ref{sec:miscellaneous}, we bound the Augustin capacity of the convex hull of
a collection of constraint sets on a given channel in terms of the Augustin capacities 
of individual constraint sets and determine the Augustin capacity of products of 
constraint sets on the product channels. Proofs of the propositions presented 
in this section can be found in 
Appendix \ref{sec:capacityproofs}.
 
Augustin provided a presentation similar to the current section in \cite[\S\S33,34]{augustin78}
and derived many of the key results ---such as the existence of unique Augustin center and 
its continuity as a function of order, see \cite[Lemmas 34.6, 34.7, 34.8]{augustin78}---
for orders not exceeding one.
Augustin, however, defines capacity and center only for the subsets of \(\pdis{\inpS}\) 
defined through cost constraints. 
We investigate that important special case more closely in \S\ref{sec:cost}.

\subsection{Existence of a Unique Augustin Center}\label{sec:capacity-center}
\begin{definition}\label{def:capacity}
	For any \(\rno\!\in\!\reals{+}\), \(\Wm\!:\!\inpS\!\to\!\pmea{\outA}\), and \(\cset\!\subset\!\pdis{\inpS}\),
	\emph{the order \(\rno\) Augustin capacity of \(\Wm\) for constraint set \(\cset\)} is 
	\begin{align}
\notag	%\label{eq:def:capacity}
	\CRC{\rno}{\!\Wm\!}{\cset}
	&\DEF \sup\nolimits_{\mP \in \cset}  \RMI{\rno}{\mP}{\Wm}.
	\end{align}
	When the constraint set \(\cset\) is the whole \(\pdis{\inpS}\), we denote the order \(\rno\) 
	Augustin capacity by \(\RC{\rno}{\Wm}\), i.e. 
	\(\RC{\rno}{\!\Wm\!}\DEF\CRC{\rno}{\!\Wm\!}{\pdis{\inpS}}\).
\end{definition}

Using the definition of the Augustin information  \(\RMI{\rno}{\mP}{\Wm}\) given in
\eqref{eq:def:information} we get  the following expression for \(\CRC{\rno}{\!\Wm\!}{\cset}\)
\begin{align}
\label{eq:capacity}
\CRC{\rno}{\!\Wm\!}{\cset}
&=\sup\nolimits_{\mP \in \cset}\inf\nolimits_{\mQ\in\pmea{\outA}} \CRD{\rno}{\Wm}{\mQ}{\mP}.
\end{align}
Theorem \ref{thm:minimax} in the following demonstrates that at least for convex \(\cset\)'s 
one can exchange the order of the supremum and infimum without changing the value in the 
above expression. 

\begin{theorem}\label{thm:minimax}
	For any order \(\rno\in \reals{+}\),
	channel \(\Wm\) of the form \(\Wm:\inpS\to \pmea{\outA}\), 
	and convex constraint set \(\cset\subset \pdis{\inpS}\)
	\begin{align}
	\label{eq:thm:minimax}
	\sup\nolimits_{\mP \in \cset} \inf\nolimits_{\mQ \in \pmea{\outA}} 
	\CRD{\rno}{\Wm}{\mQ}{\mP}
	&=
	\inf\nolimits_{\mQ \in \pmea{\outA}} \sup\nolimits_{\mP \in \cset} 
	\CRD{\rno}{\Wm}{\mQ}{\mP}.
	\end{align}
	If the expression on the left hand side of \eqref{eq:thm:minimax} is finite,
	i.e. if \(\CRC{\rno}{\Wm}{\cset}\!\in\!\reals{\geq0}\), 
	then \(\exists!\qmn{\rno,\!\Wm\!,\cset}\!\in\!\pmea{\outA}\),
	called the order \(\rno\) Augustin center of \(\Wm\) for the constraint set \(\cset\), 
	satisfying
	\begin{align}
	\label{eq:thm:minimaxcenter}
	\CRC{\rno}{\!\Wm\!}{\cset}
	&=\sup\nolimits_{\mP \in \cset} \CRD{\rno}{\Wm}{\qmn{\rno,\!\Wm\!,\cset}}{\mP}.
	\end{align}
	Furthermore, for every sequence of input distributions \(\{\pma{}{(\ind)}\}_{\ind\in\integers{+}}\subset\cset\)
	satisfying 
	\(\lim_{\ind \to \infty} \RMI{\rno}{\pma{}{(\ind)}}{\Wm}=\CRC{\rno}{\!\Wm\!}{\cset}\),
	the  corresponding sequence of order \(\rno\) Augustin means \(\{\qmn{\rno,\pma{}{(\ind)}}\}_{\ind\in\integers{+}}\)  
	is a Cauchy sequence  for the total variation metric on \(\pmea{\outA}\) 
	and \(\qmn{\rno,\!\Wm\!,\cset}\) is the unique limit point of that Cauchy sequence.
\end{theorem}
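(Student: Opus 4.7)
The direction $\sup_{\mP}\inf_{\mQ}\leq\inf_{\mQ}\sup_{\mP}$ of \eqref{eq:thm:minimax} is automatic. If $\CRC{\rno}{\Wm}{\cset}=\infty$, the lower bound $\CRD{\rno}{\Wm}{\mQ}{\mP}\geq\RMI{\rno}{\mP}{\Wm}$ from Lemma \ref{lem:information}-(b,c,d) gives $\sup_{\mP\in\cset}\CRD{\rno}{\Wm}{\mQ}{\mP}=\infty$ for every $\mQ$, so both sides of \eqref{eq:thm:minimax} equal $\infty$ and \eqref{eq:thm:minimaxcenter} is vacuous. Henceforth I assume $\CRC{\rno}{\Wm}{\cset}<\infty$ and aim to construct a single $\qmn{\rno,\Wm,\cset}\in\pmea{\outA}$ with $\sup_{\mP\in\cset}\CRD{\rno}{\Wm}{\qmn{\rno,\Wm,\cset}}{\mP}\leq\CRC{\rno}{\Wm}{\cset}$; combined with the trivial direction this yields \eqref{eq:thm:minimax} and \eqref{eq:thm:minimaxcenter} at once.

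\textbf{Constructing the center via a Cauchy argument.} Fix a maximising sequence $\{\pma{}{(\ind)}\}\subset\cset$ with $\RMI{\rno}{\pma{}{(\ind)}}{\Wm}\to\CRC{\rno}{\Wm}{\cset}$ and set $\pma{}{(\ind,\jnd)}\DEF\tfrac{1}{2}\pma{}{(\ind)}+\tfrac{1}{2}\pma{}{(\jnd)}\in\cset$. Applying the concavity bound \eqref{eq:lem:informationP-A} of Lemma \ref{lem:informationP} at $\beta=\tfrac{1}{2}$ and using $\RMI{\rno}{\pma{}{(\ind,\jnd)}}{\Wm}\leq\CRC{\rno}{\Wm}{\cset}$ produces
\[
\tfrac{1}{2}\RD{\rno\wedge 1}{\qmn{\rno,\pma{}{(\ind)}}}{\qmn{\rno,\pma{}{(\ind,\jnd)}}}+\tfrac{1}{2}\RD{\rno\wedge 1}{\qmn{\rno,\pma{}{(\jnd)}}}{\qmn{\rno,\pma{}{(\ind,\jnd)}}}\leq\CRC{\rno}{\Wm}{\cset}-\tfrac{1}{2}\RMI{\rno}{\pma{}{(\ind)}}{\Wm}-\tfrac{1}{2}\RMI{\rno}{\pma{}{(\jnd)}}{\Wm},
\]
whose right side vanishes as $\ind,\jnd\to\infty$. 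Pinsker's inequality (Lemma \ref{lem:divergence-pinsker}) and the triangle inequality then force $\lon{\qmn{\rno,\pma{}{(\ind)}}-\qmn{\rno,\pma{}{(\jnd)}}}\to 0$; since $\pmea{\outA}$ is complete under total variation as a closed subset of $\smea{\outA}$, I define $\qmn{\rno,\Wm,\cset}$ to be this limit. Interleaving any two maximising sequences produces another maximising sequence, so the limit is independent of the chosen sequence, which yields the final convergence assertion in the theorem.

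\textbf{The center attains the minimax value (main obstacle).} For $\mP\in\cset$ and $\beta\in(0,1)$ put $\pma{\beta}{(\ind)}\DEF\beta\mP+(1-\beta)\pma{}{(\ind)}\in\cset$. Applying \eqref{eq:lem:informationP-A} once more, this time to $\pma{\beta}{(\ind)}$ viewed as a combination of $\mP$ and $\pma{}{(\ind)}$, yields $(1-\beta)\RD{\rno\wedge 1}{\qmn{\rno,\pma{}{(\ind)}}}{\qmn{\rno,\pma{\beta}{(\ind)}}}\leq\beta(\CRC{\rno}{\Wm}{\cset}-\RMI{\rno}{\mP}{\Wm})+o(1)$, whence by Pinsker together with the preceding paragraph $\limsup_{\ind}\lon{\qmn{\rno,\pma{\beta}{(\ind)}}-\qmn{\rno,\Wm,\cset}}=O(\sqrt{\beta})$. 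Linearity of $\CRD$ in its conditioning argument gives
\[
\beta\,\CRD{\rno}{\Wm}{\qmn{\rno,\Wm,\cset}}{\mP}=\CRD{\rno}{\Wm}{\qmn{\rno,\Wm,\cset}}{\pma{\beta}{(\ind)}}-(1-\beta)\CRD{\rno}{\Wm}{\qmn{\rno,\Wm,\cset}}{\pma{}{(\ind)}}.
\]
Applying the upper EHB estimate of Lemma \ref{lem:information}-(c,d) to the first summand and the matching lower estimate to the second, the $\RMI$ contributions collapse into $\beta\,\CRC{\rno}{\Wm}{\cset}+o(1)$ as $\ind\to\infty$, while the remaining \renyi divergence contributions are controlled by the total-variation closeness established above together with the lower semicontinuity of \renyi divergence (Lemma \ref{lem:divergence:lsc}). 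Dividing by $\beta$ and letting $\beta\downarrow 0$ produces $\CRD{\rno}{\Wm}{\qmn{\rno,\Wm,\cset}}{\mP}\leq\CRC{\rno}{\Wm}{\cset}$, completing the proof of \eqref{eq:thm:minimax} and \eqref{eq:thm:minimaxcenter}; uniqueness of the center then follows by strict convexity (Lemmas \ref{lem:divergence-convexity} and \ref{lem:divergence-pinsker}). The delicate point is that the upper EHB bound carries an $\RD{\rno\vee 1}$ term which can be infinite in isolation; this has to be absorbed through the joint $\ind\to\infty$, $\beta\downarrow 0$ limit rather than estimated term by term, exploiting that the Augustin means involved are close to $\qmn{\rno,\Wm,\cset}$ in total variation and that the \renyi divergence is lower semicontinuous.
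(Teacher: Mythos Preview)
Your Cauchy argument via Lemma \ref{lem:informationP} (taking midpoints and using \eqref{eq:lem:informationP-A}) is correct and in fact cleaner than the paper's route, which proves the finite-input case first (Lemma \ref{lem:capacityFLB}) and then uses the centers of nested finite-support restrictions as anchor points for the triangle inequality. Your argument bypasses the finite-input lemma entirely for this step.

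The ``main obstacle'' step, however, has a genuine gap. You try to bound
\(\CRD{\rno}{\Wm}{\qmn{\rno,\Wm,\cset}}{\pma{\beta}{(\ind)}}\)
from above via the upper EHB inequality, which produces the term
\(\RD{\rno\vee 1}{\qmn{\rno,\pma{\beta}{(\ind)}}}{\qmn{\rno,\Wm,\cset}}\).
You then claim this is ``controlled by the total-variation closeness \dots\ together with the lower semicontinuity of \renyi divergence.'' Neither tool helps: total-variation proximity does not bound \(\RD{1}\) or \(\RD{\rno}\) for \(\rno>1\) (it need not even guarantee absolute continuity of \(\qmn{\rno,\pma{\beta}{(\ind)}}\) with respect to \(\qmn{\rno,\Wm,\cset}\)), and lower semicontinuity only gives \(\liminf\geq\), the wrong direction. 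The ``joint limit'' remark does not resolve this, because the offending term is not known to be finite at any stage.

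The paper avoids this by never evaluating at \(\qmn{\rno,\Wm,\cset}\) until the last line: Lemma \ref{lem:capacityFLB} gives \(\CRD{\rno}{\Wm}{\qmn{\rno,\Wm^{(\ind)},\cset^{(\ind)}}}{\mP}\leq\CRC{\rno}{\Wm}{\cset}\) for a sequence of \emph{probability measures in the second slot} converging to the center, and then lower semicontinuity in the second argument (Lemma \ref{lem:divergence:lsc}) pushes the inequality to the limit --- the correct direction. Your framework can be repaired along the same lines without invoking Lemma \ref{lem:capacityFLB}: from \(\RMI{\rno}{\pma{\beta}{(\ind)}}{\Wm}=\beta\,\CRD{\rno}{\Wm}{\qmn{\rno,\pma{\beta}{(\ind)}}}{\mP}+(1-\beta)\,\CRD{\rno}{\Wm}{\qmn{\rno,\pma{\beta}{(\ind)}}}{\pma{}{(\ind)}}\) and \(\CRD{\rno}{\Wm}{\qmn{\rno,\pma{\beta}{(\ind)}}}{\pma{}{(\ind)}}\geq\RMI{\rno}{\pma{}{(\ind)}}{\Wm}\) you get
\(\CRD{\rno}{\Wm}{\qmn{\rno,\pma{\beta}{(\ind)}}}{\mP}\leq\CRC{\rno}{\Wm}{\cset}+\tfrac{1-\beta}{\beta}\bigl(\CRC{\rno}{\Wm}{\cset}-\RMI{\rno}{\pma{}{(\ind)}}{\Wm}\bigr)\).
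Choosing \(\beta=\beta_{\ind}\downarrow 0\) slowly enough that the second term vanishes, your own estimate gives \(\qmn{\rno,\pma{\beta_{\ind}}{(\ind)}}\to\qmn{\rno,\Wm,\cset}\) in total variation, and \emph{now} lower semicontinuity in the second argument yields \(\CRD{\rno}{\Wm}{\qmn{\rno,\Wm,\cset}}{\mP}\leq\CRC{\rno}{\Wm}{\cset}\). This fixes the proof while keeping your more elementary Cauchy argument.
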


In order to prove Theorem \ref{thm:minimax}, we follow the program put forward 
by Kemperman \cite{kemperman74} for establishing a similar result for \(\rno=1\) 
and \(\cset=\pdis{\inpS}\) case. 
We first state and prove Theorem \ref{thm:minimax} assuming that the input set 
is finite. 
Then we generalize the result to the case with arbitrary input sets.
In the case when \(\inpS\) is a finite set, we can also assert the existence of
an optimal input distribution for which the Augustin information is equal to 
the Augustin capacity.

\begin{lemma}\label{lem:capacityFLB}
	For any order \(\rno\in\reals{+}\),
	channel \(\Wm\) of the form \(\Wm:\inpS\to \pmea{\outA}\) with a finite input set \(\inpS\), 
	and closed convex constraint set \(\cset\subset \pdis{\inpS}\), there exists
	\(\widetilde{\mP} \in \cset\) such that \(\RMI{\rno}{\widetilde{\mP}}{\Wm}=\CRC{\rno}{\!\Wm\!}{\cset}\)
	and \(\exists!\qmn{\rno,\!\Wm\!,\cset}\in\pmea{\outA}\) satisfying  
	\begin{align}
	\label{eq:lem:capacityFLB}
	\CRD{\rno}{\Wm}{\qmn{\rno,\!\Wm\!,\cset}}{\mP}
	&\leq  \CRC{\rno}{\!\Wm\!}{\cset}
	&
	&\forall \mP \in \cset.   
	\end{align}
	Furthermore, \(\qmn{\rno,\widetilde{\mP}}=\qmn{\rno,\!\Wm\!,\cset}\) for all \(\widetilde{\mP}\in \cset\) 
	such that \(\RMI{\rno}{\widetilde{\mP}}{\Wm}=\CRC{\rno}{\!\Wm\!}{\cset}\).
\end{lemma}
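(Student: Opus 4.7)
Because $\inpS$ is finite, $\pdis{\inpS}$ is a compact subset of a simplex in $\reals{}^{|\inpS|}$, hence the closed convex set $\cset$ is compact. Lemma \ref{lem:information}-(\ref{information:bounded}) gives $\RMI{\rno}{\mP}{\Wm}\le\bent{\mP}\le\ln|\inpS|$ on $\pdis{\inpS}$, so the equicontinuity estimate \eqref{eq:concavitybound} shows that $\mP\mapsto\RMI{\rno}{\mP}{\Wm}$ is (uniformly) continuous on $\pdis{\inpS}$. By the extreme value theorem a maximizer $\widetilde{\mP}\in\cset$ exists, and the candidate for the center is $\qmn{\rno,\widetilde{\mP}}$. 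The task then reduces to checking that $\qmn{\rno,\widetilde{\mP}}$ satisfies \eqref{eq:lem:capacityFLB}.

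The plan is a minimax-with-limits argument. I first apply Sion's theorem \cite{sion58} to $f(\mP,\mQ)\DEF\CRD{\rno}{\Wm}{\mQ}{\mP}$ on $\cset\times\pmea{\outA}$, with $\pmea{\outA}$ carrying the topology of setwise convergence: $\cset$ is compact convex, $f(\cdot,\mQ)$ is linear (hence continuous and concave), and $f(\mP,\cdot)$ is convex by Lemma \ref{lem:divergence-convexity} and lower semicontinuous by Lemma \ref{lem:divergence:lsc}. Sion's theorem then yields
\begin{align*}
\CRC{\rno}{\Wm}{\cset}
=\sup\nolimits_{\mP\in\cset}\inf\nolimits_{\mQ\in\pmea{\outA}} f(\mP,\mQ)
=\inf\nolimits_{\mQ\in\pmea{\outA}}\sup\nolimits_{\mP\in\cset} f(\mP,\mQ).
\end{align*}

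Next pick a sequence $\{\mQ_\ind\}_\ind\subset\pmea{\outA}$ with $\sup_{\mP\in\cset} f(\mP,\mQ_\ind)\to\CRC{\rno}{\Wm}{\cset}$. Evaluating the supremum at $\mP=\widetilde{\mP}$ sandwiches $\CRD{\rno}{\Wm}{\mQ_\ind}{\widetilde{\mP}}$ between the infimum $\RMI{\rno}{\widetilde{\mP}}{\Wm}$ and the tail of $\sup_\mP f(\mP,\mQ_\ind)$, both tending to $\CRC{\rno}{\Wm}{\cset}=\RMI{\rno}{\widetilde{\mP}}{\Wm}$; hence $\CRD{\rno}{\Wm}{\mQ_\ind}{\widetilde{\mP}}\to\RMI{\rno}{\widetilde{\mP}}{\Wm}$. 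The lower van Erven-\harremoes inequalities in Lemma \ref{lem:information}-(\ref{information:zto},\ref{information:oti}) then force $\RD{\rno\wedge 1}{\qmn{\rno,\widetilde{\mP}}}{\mQ_\ind}\to 0$, and Lemma \ref{lem:divergence-pinsker} gives $\mQ_\ind\to\qmn{\rno,\widetilde{\mP}}$ in total variation, hence setwise. Invoking the lower semicontinuity of $f(\mP,\cdot)$ pointwise in $\mP$ and the elementary bound $\sup_\mP\liminf_\ind\le\liminf_\ind\sup_\mP$ delivers
\begin{align*}
\sup\nolimits_{\mP\in\cset}\CRD{\rno}{\Wm}{\qmn{\rno,\widetilde{\mP}}}{\mP}
\le\liminf\nolimits_{\ind\to\infty}\sup\nolimits_{\mP\in\cset}\CRD{\rno}{\Wm}{\mQ_\ind}{\mP}
=\CRC{\rno}{\Wm}{\cset},
\end{align*}
which is \eqref{eq:lem:capacityFLB} for $\qmn{\rno,\widetilde{\mP}}$.

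For uniqueness of the center and the identification $\qmn{\rno,\widetilde{\mP}}=\qmn{\rno,\Wm,\cset}$ for every maximizer $\widetilde{\mP}$: any $\mQ'$ satisfying \eqref{eq:lem:capacityFLB} has $\CRD{\rno}{\Wm}{\mQ'}{\widetilde{\mP}}\le\CRC{\rno}{\Wm}{\cset}=\RMI{\rno}{\widetilde{\mP}}{\Wm}$, while by definition of the Augustin information $\CRD{\rno}{\Wm}{\mQ'}{\widetilde{\mP}}\ge\RMI{\rno}{\widetilde{\mP}}{\Wm}$; equality forces $\mQ'=\qmn{\rno,\widetilde{\mP}}$ by the uniqueness part of Lemma \ref{lem:information}, and this argument applies to every maximizer. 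The principal technical obstacle is that $\pmea{\outA}$ is not compact, so Sion's theorem alone does not produce a minimizer of the outer infimum; that minimizer must instead be constructed explicitly via the EHB-plus-Pinsker convergence of the minimizing sequence to the Augustin mean of the chosen maximizer.
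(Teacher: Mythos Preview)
Your proof is correct but takes a genuinely different route from the paper's.

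The paper avoids Sion's theorem entirely. For each $\mP\in\cset$ it forms the convex combinations $\pma{}{(\ind)}=\tfrac{\ind-1}{\ind}\widetilde{\mP}+\tfrac{1}{\ind}\mP\in\cset$, expands $\RMI{\rno}{\pma{}{(\ind)}}{\Wm}=\CRD{\rno}{\Wm}{\qmn{\rno,\pma{}{(\ind)}}}{\pma{}{(\ind)}}$ by linearity in the prior, and applies the lower bound in \eqref{eq:lem:information-zto:EHB}/\eqref{eq:lem:information-oti:EHB} to the $\widetilde{\mP}$-piece together with the optimality $\RMI{\rno}{\pma{}{(\ind)}}{\Wm}\le\CRC{\rno}{\Wm}{\cset}=\RMI{\rno}{\widetilde{\mP}}{\Wm}$. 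This single decomposition simultaneously yields $\CRD{\rno}{\Wm}{\qmn{\rno,\pma{}{(\ind)}}}{\mP}\le\CRC{\rno}{\Wm}{\cset}$ and, via Pinsker, $\qmn{\rno,\pma{}{(\ind)}}\to\qmn{\rno,\widetilde{\mP}}$ in total variation; the conclusion then follows by lower semicontinuity, exactly as in your final step. Your approach instead first establishes the minimax equality via Sion and then forces an \emph{abstract} minimizing sequence in $\mQ$ to converge to $\qmn{\rno,\widetilde{\mP}}$ through the same EHB--Pinsker mechanism. Conceptually this is cleaner and more modular, but it relies on an external minimax theorem (and on checking that $\pmea{\outA}$ with the setwise topology sits as a convex subset of a linear topological space and that Sion tolerates the value $+\infty$); the paper's convex-combination argument is entirely self-contained and produces the approximating $\mQ$'s constructively as Augustin means $\qmn{\rno,\pma{}{(\ind)}}$. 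The existence-of-maximizer step and the uniqueness argument are essentially identical in both proofs.
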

If \(\cset\) is \(\pdis{\inpS}\), then the expression on the right hand side of 
\eqref{eq:thm:minimaxcenter}, is equal to the \renyi radius \(\RR{\rno}{\!\Wm\!}\) 
defined in the following. Thus Theorem \ref{thm:minimax} implies
\(\RC{\rno}{\Wm}=\RR{\rno}{\Wm}\). 
\begin{definition}\label{def:renyiradius}
	For any \(\rno\in\reals{+}\) and \(\Wm:\inpS\to\pmea{\outA}\), 
	\emph{the order \(\rno\) \renyi radius of \(\Wm\)} is 
	\begin{align}
\notag	%\label{eq:def:renyiradius}
	\RR{\rno}{\!\Wm\!}
	&\DEF \inf\nolimits_{\mQ\in\pmea{\outA}} \sup\nolimits_{\dinp\in\inpS}  \RD{\rno}{\Wm(\dinp)}{\mQ}.
	\end{align}
\end{definition}

Theorem \ref{thm:minimax} asserts the existence of a unique order \(\rno\) Augustin center 
for convex constraint sets with finite Augustin capacity.
However, a probability measure \(\qmn{\rno,\!\Wm\!,\cset}\) satisfying \eqref{eq:thm:minimaxcenter}, 
i.e. an order \(\rno\) Augustin center, can in principle exist even for non-convex constraint sets.

\begin{definition}
	A constraint set \(\cset\) for the channel \(\Wm:\inpS\to\pmea{\outA}\) 
	has an order \(\rno\) Augustin center iff 
	\(\exists\mQ\in\pmea{\outA}\) such that
	\begin{align}
	\label{eq:def:center}
	\sup\nolimits_{\mP\in\cset} \CRD{\rno}{\Wm}{\mQ}{\mP}
	&=\CRC{\rno}{\!\Wm\!}{\cset}. 
	\end{align}
\end{definition}

If \(\CRC{\rno}{\!\Wm\!}{\cset}\) is infinite, then all probability measures on 
the output space satisfy \eqref{eq:def:center}  as a result of \eqref{eq:capacity} 
and the max-min inequality. 
Thus for constraint sets with infinite order \(\rno\) Augustin capacity all probability 
measures on the output space are order \(\rno\) Augustin centers.
On the other hand, some constraint sets do not have any order \(\rno\) Augustin center.
Consider for example \(\pmn{1}\) and \(\pmn{2}\) satisfying 
\(\qmn{\rno,\pmn{1}}\neq\qmn{\rno,\pmn{2}}\)
and
\(\RMI{\rno}{\pmn{1}}{\Wm}=\RMI{\rno}{\pmn{2}}{\Wm}\). 
Then \eqref{eq:def:center} is not satisfied by any probability measure for
\(\cset=\{\pmn{1},\pmn{2}\} \) and \(\cset\) does not have an order \(\rno\)
Augustin center. 
Lemma \ref{lem:center} asserts that if Augustin center exists for a 
constraint set with finite Augustin capacity, then the Augustin center
is unique.

\begin{lemma}\label{lem:center}
	Let \(\cset\subset\pdis{\inpS}\) be a constraint set satisfying \(\CRC{\rno}{\!\Wm\!}{\cset}\in\reals{\geq0}\),
	and \(\qmn{\rno,\!\Wm\!,\cset}\) be a probability measure satisfying \eqref{eq:def:center}.
	Then for every \(\{\pma{}{(\ind)}\}_{\ind\in\integers{+}}\subset \cset\) satisfying
	\(\lim_{\ind \to \infty} \RMI{\rno}{\pma{}{(\ind)}}{\Wm}=\CRC{\rno}{\!\Wm\!}{\cset}\)
	the sequence of order \(\rno\) Augustin means \(\{\qmn{\rno,\pma{}{(\ind)}}\}_{\ind\in\integers{+}}\) 
	is a Cauchy sequence with the limit point  \(\qmn{\rno,\!\Wm\!,\cset}\) 
	and the order \(\rno\) Augustin center \(\qmn{\rno,\!\Wm\!,\cset}\) is unique.  
\end{lemma}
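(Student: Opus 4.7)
The plan is to deploy the van Erven--Harremo\"{e}s bound established in Lemma \ref{lem:information}-(\ref{information:zto},\ref{information:oti}), which asserts that
\begin{align*}
\CRD{\rno}{\Wm}{\mQ}{\mP} - \RMI{\rno}{\mP}{\Wm}
&\geq \RD{\rno\wedge1}{\qmn{\rno,\mP}}{\mQ}
&
&\forall \mQ\in\pmea{\outA},~\mP\in\pdis{\inpS},
\end{align*}
(the $\rno=1$ case being covered by Lemma \ref{lem:information}-(\ref{information:one})), specialized to $\mQ=\qmn{\rno,\!\Wm\!,\cset}$ and $\mP=\pma{}{(\ind)}$.

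First I would observe that the defining property \eqref{eq:def:center} of the order $\rno$ Augustin center upper bounds $\CRD{\rno}{\Wm}{\qmn{\rno,\!\Wm\!,\cset}}{\pma{}{(\ind)}}$ by $\CRC{\rno}{\!\Wm\!}{\cset}$, so the left hand side of the display above is at most $\CRC{\rno}{\!\Wm\!}{\cset}-\RMI{\rno}{\pma{}{(\ind)}}{\Wm}$, a quantity that is nonnegative and converges to zero by the choice of $\{\pma{}{(\ind)}\}$. Pinsker's inequality (Lemma \ref{lem:divergence-pinsker}) then yields
\begin{align*}
\tfrac{\rno\wedge1}{2}\lon{\qmn{\rno,\pma{}{(\ind)}}-\qmn{\rno,\!\Wm\!,\cset}}^{2}
&\leq \RD{\rno\wedge1}{\qmn{\rno,\pma{}{(\ind)}}}{\qmn{\rno,\!\Wm\!,\cset}}
\leq \CRC{\rno}{\!\Wm\!}{\cset}-\RMI{\rno}{\pma{}{(\ind)}}{\Wm},
\end{align*}
whose right hand side vanishes as $\ind\to\infty$. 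Thus $\{\qmn{\rno,\pma{}{(\ind)}}\}$ converges to $\qmn{\rno,\!\Wm\!,\cset}$ in the total variation metric; in particular it is a Cauchy sequence.

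For the uniqueness claim, I would note that the finiteness of $\CRC{\rno}{\!\Wm\!}{\cset}$ guarantees the existence of at least one maximizing sequence $\{\pma{}{(\ind)}\}\subset\cset$ in the sense required by the hypothesis. Applying the preceding argument to any two probability measures satisfying \eqref{eq:def:center} shows that each of them is the total variation limit of the same sequence $\{\qmn{\rno,\pma{}{(\ind)}}\}$; since total variation is a metric, the two limit points must coincide.

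There is essentially no technical obstacle once the van Erven--Harremo\"{e}s bound of Lemma \ref{lem:information} is in hand: the present lemma becomes a direct corollary, with Pinsker's inequality providing the bridge from the divergence bound to the total variation metric. The one point that calls for a moment of care is verifying that the right hand side in the Pinsker display is well defined, which is immediate because $\RMI{\rno}{\pma{}{(\ind)}}{\Wm}$ is finite for every $\pma{}{(\ind)}\in\pdis{\inpS}$ by Lemma \ref{lem:information}-(\ref{information:bounded}) and does not exceed $\CRC{\rno}{\!\Wm\!}{\cset}<\infty$.
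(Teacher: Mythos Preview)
Your proposal is correct and follows essentially the same approach as the paper: combine the lower bound $\CRD{\rno}{\Wm}{\mQ}{\mP}-\RMI{\rno}{\mP}{\Wm}\geq\RD{\rno\wedge1}{\qmn{\rno,\mP}}{\mQ}$ from Lemma \ref{lem:information}-(\ref{information:one},\ref{information:zto},\ref{information:oti}) with the defining property \eqref{eq:def:center} and Pinsker's inequality to conclude total variation convergence and hence uniqueness. The paper's proof is slightly terser (it does not spell out the uniqueness step or the finiteness check you mention), but the argument is the same.
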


For any \(\cset\) that has an order \(\rno\) Augustin center and a finite \(\CRC{\rno}{\!\Wm\!}{\cset}\),
Lemma \ref{lem:information}-(\ref{information:one},\ref{information:zto},\ref{information:oti}) and 
Lemma \ref{lem:center} imply that
\begin{align}
\notag%\label{eq:capacityLB}
\CRC{\rno}{\!\Wm\!}{\cset}-\RMI{\rno}{\mP}{\Wm}
&\geq  \RD{\rno\wedge 1}{\qmn{\rno,\mP}}{\qmn{\rno,\!\Wm\!,\cset}}
&
&\forall \mP \in \cset.
\end{align}
Lemma \ref{lem:information}-(\ref{information:one},\ref{information:zto},\ref{information:oti}) 
and 
Lemma \ref{lem:center} can also be used establish a lower bound on  
\(\sup\nolimits_{\mP \in \cset} \CRD{\rno}{\Wm}{\mQ}{\mP}\) 
in terms of the Augustin capacity and 
center.
\begin{lemma}\label{lem:EHB}
	For any constraint set \(\cset\) that has an order \(\rno\) Augustin center 
	and a finite \(\CRC{\rno}{\!\Wm\!}{\cset}\) we have
	\begin{align}
	\label{eq:lem:EHB}
	\sup\nolimits_{\mP \in \cset} \CRD{\rno}{\Wm}{\mQ}{\mP}
	&\geq  \CRC{\rno}{\!\Wm\!}{\cset}+\RD{\rno\wedge 1}{\qmn{\rno,\!\Wm\!,\cset}}{\mQ}
	&
	&\forall \mQ \in \pmea{\outA}.
	\end{align}
\end{lemma}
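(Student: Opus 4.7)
The plan is to combine the pointwise lower bound on the conditional R\'enyi divergence, provided by Lemma \ref{lem:information}-(\ref{information:one},\ref{information:zto},\ref{information:oti}), with a capacity-achieving sequence and the convergence of the corresponding Augustin means to the Augustin center guaranteed by Lemma \ref{lem:center}.

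First I would observe that parts (\ref{information:one}), (\ref{information:zto}), (\ref{information:oti}) of Lemma \ref{lem:information} together yield
\begin{align}
\notag
\CRD{\rno}{\Wm}{\mQ}{\mP}
\geq \RMI{\rno}{\mP}{\Wm}+\RD{\rno\wedge 1}{\qmn{\rno,\mP}}{\mQ}
\end{align}
for every \(\mP\in\pdis{\inpS}\) and every \(\mQ\in\pmea{\outA}\). Taking the supremum over \(\mP\in\cset\) on both sides of the inequality and using that supremum of a sum dominates the supremum of either term, I obtain
\begin{align}
\notag
\sup\nolimits_{\mP\in\cset} \CRD{\rno}{\Wm}{\mQ}{\mP}
\geq \sup\nolimits_{\mP\in\cset} \left[\RMI{\rno}{\mP}{\Wm}+\RD{\rno\wedge 1}{\qmn{\rno,\mP}}{\mQ}\right].
\end{align}

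Next I would fix an arbitrary \(\mQ\in\pmea{\outA}\) and select a sequence \(\{\pma{}{(\ind)}\}_{\ind\in\integers{+}}\subset\cset\) satisfying \(\lim_{\ind\to\infty}\RMI{\rno}{\pma{}{(\ind)}}{\Wm}=\CRC{\rno}{\!\Wm\!}{\cset}\). Since \(\cset\) has an order \(\rno\) Augustin center and \(\CRC{\rno}{\!\Wm\!}{\cset}\) is finite by hypothesis, Lemma \ref{lem:center} implies \(\{\qmn{\rno,\pma{}{(\ind)}}\}_{\ind\in\integers{+}}\) converges in total variation to \(\qmn{\rno,\!\Wm\!,\cset}\). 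Substituting \(\mP=\pma{}{(\ind)}\) in the displayed inequality and taking \(\liminf_{\ind\to\infty}\) on the right-hand side reduces the proof to showing
\begin{align}
\notag
\liminf\nolimits_{\ind\to\infty} \RD{\rno\wedge 1}{\qmn{\rno,\pma{}{(\ind)}}}{\mQ}
\geq \RD{\rno\wedge 1}{\qmn{\rno,\!\Wm\!,\cset}}{\mQ}.
\end{align}

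This last step splits by cases based on the value of \(\rno\wedge 1\). When \(\rno\in(0,1)\), so that \(\rno\wedge 1=\rno\in(0,1)\), Lemma \ref{lem:divergence:uc} gives uniform continuity of the R\'enyi divergence in total variation, so the liminf is actually a limit and equals the desired quantity. When \(\rno\geq 1\), so that \(\rno\wedge 1=1\), Lemma \ref{lem:divergence:lsc} gives lower semicontinuity of the R\'enyi divergence in the topology of setwise convergence, which is coarser than the total variation topology, so again the liminf bound holds. Combining the two cases with the earlier inequality and the fact that \(\RMI{\rno}{\pma{}{(\ind)}}{\Wm}\to\CRC{\rno}{\!\Wm\!}{\cset}\) completes the argument. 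The main potential pitfall is the \(\rno\geq 1\) regime where only lower semicontinuity (not continuity) is available, but since only a liminf is needed for the bound, this suffices and no further work is required.
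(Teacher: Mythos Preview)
Your proof is correct and follows essentially the same route as the paper: apply the pointwise bound from Lemma \ref{lem:information}, pick a capacity-achieving sequence, use Lemma \ref{lem:center} to get total-variation convergence of the Augustin means to the center, and pass to the limit in the divergence term. The only difference is that the paper handles the limit uniformly via lower semicontinuity (Lemma \ref{lem:divergence:lsc}) for all orders, whereas you split into the cases \(\rno<1\) (uniform continuity, Lemma \ref{lem:divergence:uc}) and \(\rno\geq 1\) (lower semicontinuity); the case split is harmless but unnecessary, since lower semicontinuity alone already gives the required \(\liminf\) inequality for every \(\rno\wedge 1\in(0,1]\).
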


Note that the form of the lower bound given in \eqref{eq:lem:EHB} is, in a sense, 
analogous to the ones given in \eqref{eq:lem:information-one:EHB},  
\eqref{eq:lem:information-zto:EHB}, \eqref{eq:lem:information-oti:EHB}. 
The bound given in \eqref{eq:lem:EHB} is a van Erven-\harremoes 
bound\footnote{In \cite{ervenH14} van Erven and 
	\harremoes have conjectured that the inequality  
	\(\sup_{\dinp\in \inpS} \RD{\rno}{\Wm(\dinp)}{\mQ}\geq \RC{\rno}{\Wm}+\RD{\rno}{\qmn{\rno,\Wm}}{\mQ}\)
	holds for all \(\mQ\in\pmea{\outA}\).
	Van Erven and \harremoes have also proved the bound for the case when  \(\rno=\infty\),  
	assuming that \(\outS\) is countable \cite[Thm. 37]{ervenH14}. 
	We have confirmed van Erven-\harremoes conjecture in \cite[Lemma \ref*{A-lem:EHB}]{nakiboglu19A}
	and generalized it to the convex constrained case for the \renyi capacity and center in 
	\cite[Lemma \ref*{A-lem:CEHB}]{nakiboglu19A}. See \S\ref{sec:capacity-renyi} for a 
	brief discussion of the \renyi capacity and center; a more comprehensive discussion can 
	be found in \cite{nakiboglu19A}.} 
for \(\rno\in(0,1]\), but it is not a van Erven-\harremoes bound for \(\rno\in(1,\infty)\) 
because we have a \(\RD{1}{\qmn{\rno,\!\Wm\!,\cset}}{\mQ}\) term rather than a
\(\RD{\rno}{\qmn{\rno,\!\Wm\!,\cset}}{\mQ}\) term for \(\rno\in(1,\infty)\).

For orders other than one, using \csiszar\!\!'s form for the Augustin information 
given in \eqref{eq:lem:information:alternative:def} and the definition of the 
Augustin capacity, we obtain the following expressions:
\begin{align}
\label{eq:capacity:altdef}
\CRC{\rno}{\!\Wm\!}{\cset}
&=
\begin{cases}
\sup\nolimits_{\mP \in \cset}\inf\nolimits_{\Vm\in\pmea{\outA|\inpS}} 
\tfrac{\rno}{1-\rno}\CRD{1}{\Vm}{\Wm}{\mP}+\RMI{1}{\mP}{\Vm}
&\rno\in(0,1)
\\
\sup\nolimits_{\mP \in \cset}\sup\nolimits_{\Vm\in\pmea{\outA|\inpS}} 
\tfrac{\rno}{1-\rno}\CRD{1}{\Vm}{\Wm}{\mP}+\RMI{1}{\mP}{\Vm}
&\rno\in(1,\infty)
\end{cases}.
\end{align}
Then
\begin{align}
\notag
\CRC{\rno}{\!\Wm\!}{\cset}
&=\sup\nolimits_{\Vm\in\pmea{\outA|\inpS}}\sup\nolimits_{\mP \in \cset} 
\tfrac{\rno}{1-\rno}\CRD{1}{\Vm}{\Wm}{\mP}+\RMI{1}{\mP}{\Vm}
&
&\forall \rno\in(1,\infty).
\end{align}
For \(\rno\in(0,1)\), if the constraint set \(\cset\) has an order \(\rno\)
Augustin center, e.g. when \(\cset\) is convex, then one can exchange 
the order of the supremum and the infimum and  replace the infimum with 
a minimum whenever the Augustin capacity is finite
by Lemma \ref{lem:blahutidentity}, given in the following.

\begin{lemma}\label{lem:blahutidentity}
	For any \(\rno\in(0,1)\), 
	if the constraint set \(\cset\) for the channel \(\Wm:\inpS\to \pmea{\outA}\)
	has an order \(\rno\) Augustin center, then 
	\begin{align}
	\label{eq:lem:blahutminimax}
	\CRC{\rno}{\!\Wm\!}{\cset}
	&=\inf\nolimits_{\Vm\in\pmea{\outA|\inpS}} \sup\nolimits_{\mP \in \cset} 
	\tfrac{\rno}{1-\rno}\CRD{1}{\Vm}{\Wm}{\mP}+\RMI{1}{\mP}{\Vm}.
	\end{align}
	If \(\CRC{\rno}{\!\Wm\!}{\cset}\) is finite, then \(\Wma{\rno}{\qmn{\rno,\!\Wm\!,\cset}}\)
	satisfies
	\begin{align}
	\label{eq:lem:blahutidentity}
	\CRC{\rno}{\!\Wm\!}{\cset}
	&=
	\sup\nolimits_{\mP \in \cset}
	\tfrac{\rno}{1-\rno}\CRD{1}	{\Wma{\rno}{\qmn{\rno,\!\Wm\!,\cset}}}{\Wm}{\mP}
	+\RMI{1}{\mP}{\Wma{\rno}{\qmn{\rno,\!\Wm\!,\cset}}}.
	\end{align}
\end{lemma}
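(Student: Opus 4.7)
The plan is to prove \eqref{eq:lem:blahutminimax} by exhibiting an explicit minimizer, namely $\Vm^{*} \DEF \Wma{\rno}{\qmn{*}}$ with $\qmn{*} \DEF \qmn{\rno,\!\Wm\!,\cset}$, and showing that it attains $\CRC{\rno}{\!\Wm\!}{\cset}$. I would first substitute \csiszar's variational form \eqref{eq:lem:information:alternative:def} into the definition of $\CRC{\rno}{\!\Wm\!}{\cset}$ ---as already recorded in \eqref{eq:capacity:altdef}--- to obtain $\CRC{\rno}{\!\Wm\!}{\cset} = \sup_{\mP \in \cset}\inf_{\Vm} F(\mP,\Vm)$ with $F(\mP,\Vm) \DEF \tfrac{\rno}{1-\rno}\CRD{1}{\Vm}{\Wm}{\mP} + \RMI{1}{\mP}{\Vm}$. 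The max-min inequality then immediately yields $\CRC{\rno}{\!\Wm\!}{\cset} \leq \inf_{\Vm}\sup_{\mP \in \cset} F(\mP,\Vm)$, and equality is automatic when $\CRC{\rno}{\!\Wm\!}{\cset} = \infty$, so the work lies entirely in the finite-capacity case.

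When $\CRC{\rno}{\!\Wm\!}{\cset} < \infty$, Theorem \ref{thm:minimax} furnishes $\qmn{*}$ and guarantees $\CRD{\rno}{\Wm}{\qmn{*}}{\mP} \leq \CRC{\rno}{\!\Wm\!}{\cset}$ for every $\mP \in \cset$, forcing $\RD{\rno}{\Wm(\dinp)}{\qmn{*}} < \infty$ for every $\dinp \in \supp{\mP}$ with positive mass. Since $\rno \in (0,1)$, the tilted-measure identity \eqref{eq:lem:variational-tilted} of Lemma \ref{lem:variational} applies pointwise without any side hypothesis; integrating against $\mP$ yields
\begin{align*}
\CRD{\rno}{\Wm}{\qmn{*}}{\mP} &= \tfrac{\rno}{1-\rno}\CRD{1}{\Vm^{*}}{\Wm}{\mP} + \CRD{1}{\Vm^{*}}{\qmn{*}}{\mP}.
\end{align*}
Applying the Topsoe decomposition \eqref{eq:topsoe} to the channel $\Vm^{*}$ rewrites the last term as $\RMI{1}{\mP}{\Vm^{*}} + \RD{1}{\bar{\Vm}^{*}}{\qmn{*}}$, where $\bar{\Vm}^{*} \DEF \sum_{\dinp}\mP(\dinp)\Vm^{*}(\dinp)$, so rearranging gives
\begin{align*}
F(\mP,\Vm^{*}) &= \CRD{\rno}{\Wm}{\qmn{*}}{\mP} - \RD{1}{\bar{\Vm}^{*}}{\qmn{*}} \;\leq\; \CRD{\rno}{\Wm}{\qmn{*}}{\mP}.
\end{align*}

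Taking $\sup_{\mP \in \cset}$ on both sides and invoking the center identity \eqref{eq:thm:minimaxcenter} of Theorem \ref{thm:minimax} then gives $\sup_{\mP \in \cset}F(\mP,\Vm^{*}) \leq \CRC{\rno}{\!\Wm\!}{\cset}$, hence $\inf_{\Vm}\sup_{\mP \in \cset}F(\mP,\Vm) \leq \CRC{\rno}{\!\Wm\!}{\cset}$, which together with the max-min inequality closes \eqref{eq:lem:blahutminimax}. Since the reverse bound $\sup_{\mP \in \cset}F(\mP,\Vm^{*}) \geq \sup_{\mP \in \cset}\inf_{\Vm}F(\mP,\Vm) = \CRC{\rno}{\!\Wm\!}{\cset}$ is immediate from \csiszar's form, the infimum in \eqref{eq:lem:blahutminimax} is attained at $\Vm^{*}$, which is precisely the content of \eqref{eq:lem:blahutidentity}. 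The main subtlety will be guaranteeing that the integrated tilted-measure identity is free of $\infty - \infty$ ambiguity; this is exactly where $\rno \in (0,1)$ enters, since \eqref{eq:lem:variational-tilted} then requires no side-condition on $\RD{1}{\Vm^{*}(\dinp)}{\Wm(\dinp)}$, and Theorem \ref{thm:minimax} simultaneously provides the pointwise finiteness of $\RD{\rno}{\Wm(\dinp)}{\qmn{*}}$ throughout $\supp{\mP}$.
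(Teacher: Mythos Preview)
Your approach is essentially the same as the paper's: both directions come from the max-min inequality combined with the tilted-measure identity \eqref{eq:lem:variational-tilted}, the Tops\o{e} decomposition \eqref{eq:topsoe} applied to the channel $\Vm^{*}=\Wma{\rno}{\qmn{*}}$, and dropping the nonnegative $\RD{1}{\bar{\Vm}^{*}}{\qmn{*}}$ term.

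One correction is needed. The lemma's hypothesis is only that $\cset$ \emph{has} an order $\rno$ Augustin center, i.e.\ that some $\qmn{*}$ satisfies \eqref{eq:def:center}; it does not assume $\cset$ is convex. Theorem~\ref{thm:minimax}, which you invoke to produce $\qmn{*}$ and to cite \eqref{eq:thm:minimaxcenter}, requires convexity and is therefore not available here. The fix is immediate: the existence of $\qmn{*}$ with $\sup_{\mP\in\cset}\CRD{\rno}{\Wm}{\qmn{*}}{\mP}=\CRC{\rno}{\!\Wm\!}{\cset}$ is precisely the hypothesis, and uniqueness (if you want it) is Lemma~\ref{lem:center}. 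The paper's proof makes exactly this replacement; otherwise your argument and the paper's coincide line for line.
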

Lemma \ref{lem:blahutidentity} is proved using 
\csiszar\!\!'s form for the Augustin information, given in 
Lemma \ref{lem:information}-(\ref{information:alternative}),
and Lemma \ref{lem:center}.
In \cite{blahut74}, Blahut proved a similar result assuming 
both \(\inpS\) and \(\outS\) are finite sets and \(\cset=\pdis{\inpS}\).
Even under those assumptions Blahut's result \cite[Thm. 16]{blahut74} imply 
\eqref{eq:lem:blahutminimax} and \eqref{eq:lem:blahutidentity} for all orders 
in \((0,1)\) only when \(\RC{\rno}{\Wm}\) is a differentiable function of the
order \(\rno\). 
Blahut was motivated by the expression for the sphere packing exponent; 
consequently, \cite[Thm. 16]{blahut74} is stated in terms of an optimal input distribution 
at a given rate \(\rate\in (\RC{0}{\Wm},\RC{1}{\Wm})\) 
and the corresponding optimal order \(\rno^{*}(\rate)\).

\subsection{Augustin Capacity and Center as a Function of the Order}\label{sec:capacity-order}
\begin{lemma}\label{lem:capacityO}
For any channel \(\Wm\) of the form \(\Wm:\inpS\to\pmea{\outA}\) and constraint set \(\cset\subset\pdis{\inpS}\),
\begin{enumerate}[(a)]
\item\label{capacityO-ilsc}
\(\CRC{\rno}{\!\Wm\!}{\cset}\) is a nondecreasing and lower semicontinuous function of \(\rno\) on \(\reals{+}\).
\item\label{capacityO-decreasing}
\(\tfrac{1-\rno}{\rno}\CRC{\rno}{\!\Wm\!}{\cset}\) 
is a nonincreasing and continuous function of \(\rno\) 
on\footnote{We exclude \(\rno=1\) case because we do not want to assume  \(\CRC{1}{\!\Wm\!}{\cset}\) to be finite.}
\((0,1)\).
\item\label{capacityO-convexity} \((\rno-1)\CRC{\rno}{\!\Wm\!}{\cset}\) is a convex function of \(\rno\) on \((1,\infty)\).
\item\label{capacityO-continuity} 
\(\CRC{\rno}{\!\Wm\!}{\cset}\) is nondecreasing and continuous in \(\rno\) 
on \((0,1]\) and \((1,\chi_{\Wm,\cset}]\) where 
\(\chi_{\Wm,\cset}\DEF\sup\{\rnf:\CRC{\rnf}{\!\Wm\!}{\cset}\in\reals{\geq0}\}\).
\item\label{capacityO-continuity-extension} 
If \(\sup_{\mP\in\cset} \GMI{\rnf}{\mP}{\Wm}\in\reals{\geq0}\) for a \(\rnf>1\), 
then \(\CRC{\rno}{\!\Wm\!}{\cset}\) is nondecreasing and continuous in \(\rno\) 
on \((0,(1\vee\chi_{\Wm,\cset})]\). 
\end{enumerate}
\end{lemma}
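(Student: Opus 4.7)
The plan is to derive each part from the corresponding properties of the Augustin information in $\rno$ established in Lemma \ref{lem:informationO}, exploiting that $\CRC{\rno}{\!\Wm\!}{\cset}$ is the pointwise supremum over $\mP\in\cset$ of $\RMI{\rno}{\mP}{\Wm}$.

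Parts (\ref{capacityO-ilsc}) and (\ref{capacityO-convexity}) are immediate. For (\ref{capacityO-ilsc}), monotonicity of $\RMI{\rno}{\mP}{\Wm}$ in $\rno$ from Lemma \ref{lem:informationO}-(\ref{informationO:continuity}) passes to the supremum, and the continuity of each summand makes the supremum lower semicontinuous. For (\ref{capacityO-convexity}), the identity $(\rno-1)\CRC{\rno}{\!\Wm\!}{\cset}=\sup_{\mP\in\cset}(\rno-1)\RMI{\rno}{\mP}{\Wm}$ holds on $(1,\infty)$ because the prefactor is positive, and Lemma \ref{lem:informationO}-(\ref{informationO:strictconvexity}) renders each summand convex in $\rno$, so the pointwise supremum is convex.

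For (\ref{capacityO-decreasing}), monotonicity follows from Lemma \ref{lem:informationO}-(\ref{informationO:decreasing}) by passing to the supremum. The key step, which is the main obstacle in the whole proof, is continuity on $(0,1)$. Reparametrize via $\mS=\tfrac{1-\rno}{\rno}$ and set $G(\mS)\DEF\sup_{\mP\in\cset} g_\mP(\mS)$ with $g_\mP(\mS)\DEF\mS\RMI{\frac{1}{1+\mS}}{\mP}{\Wm}$. By the remark following Lemma \ref{lem:informationO}, each $g_\mP$ is concave on $(-1,\infty)$ with $g_\mP(0)=0$, and Lemma \ref{lem:informationO}-(\ref{informationO:decreasing}) makes $g_\mP$ nondecreasing in $\mS$. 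Thus $G$ is nondecreasing and lower semicontinuous, which already delivers left continuity of $G$ in $\mS$ (equivalently, right continuity of $\tfrac{1-\rno}{\rno}\CRC{\rno}{\!\Wm\!}{\cset}$ in $\rno$). For right continuity of $G$ at any $\mS_0>0$, applying concavity between the points $0$ and $\mS_1>\mS_0$ yields $g_\mP(\mS_0)\geq \tfrac{\mS_0}{\mS_1}g_\mP(\mS_1)$; taking the supremum over $\mP$ gives $G(\mS_1)\leq \tfrac{\mS_1}{\mS_0}G(\mS_0)$, and letting $\mS_1\downarrow\mS_0$ produces the desired right continuity, with the case $G(\mS_0)=\infty$ handled directly by monotonicity.

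Part (\ref{capacityO-continuity}) then follows quickly. On $(0,1)$, continuity of $\CRC{\rno}{\!\Wm\!}{\cset}=\tfrac{\rno}{1-\rno}\cdot\tfrac{1-\rno}{\rno}\CRC{\rno}{\!\Wm\!}{\cset}$ comes from part (\ref{capacityO-decreasing}) together with continuity of $\tfrac{\rno}{1-\rno}$; left continuity at $\rno=1$ is automatic from part (\ref{capacityO-ilsc}), since lower semicontinuity combined with monotonicity forces left continuity for nondecreasing functions. On $(1,\chi_{\Wm,\cset})$ the convex function $(\rno-1)\CRC{\rno}{\!\Wm\!}{\cset}$ from part (\ref{capacityO-convexity}) is finite by definition of $\chi_{\Wm,\cset}$ and hence continuous there, which transfers to $\CRC{\rno}{\!\Wm\!}{\cset}$; left continuity at $\chi_{\Wm,\cset}$ again follows from part (\ref{capacityO-ilsc}). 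Finally, part (\ref{capacityO-continuity-extension}) demands, beyond (\ref{capacityO-continuity}), right continuity at $\rno=1$; from \eqref{eq:renyiaugustin-information-oti} one obtains $\CRC{\rno}{\!\Wm\!}{\cset}\leq\sup_{\mP\in\cset}\GMI{\rno}{\mP}{\Wm}$ for $\rno\in[1,\rnf]$, and combining the right continuity of the \renyi capacity at $\rno=1$ established in \cite{nakiboglu19A} with $\GMI{1}{\mP}{\Wm}=\RMI{1}{\mP}{\Wm}$ and monotonicity of $\CRC{\rno}{\!\Wm\!}{\cset}$ closes the argument.
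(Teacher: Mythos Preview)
Your proposal is correct. For parts (\ref{capacityO-ilsc}), (\ref{capacityO-convexity}), and (\ref{capacityO-continuity}) it coincides with the paper's proof. The substantive difference is in part~(\ref{capacityO-decreasing}): you reparametrize via $\mS=\tfrac{1-\rno}{\rno}$ and use concavity of $g_\mP$ together with $g_\mP(0)=0$ to obtain $G(\mS_1)\leq\tfrac{\mS_1}{\mS_0}G(\mS_0)$, which yields right continuity of $G$ in $\mS$. The paper instead observes directly that $\tfrac{1-\rno}{\rno}\CRC{\rno}{\!\Wm\!}{\cset}$, being a supremum of continuous nonincreasing functions, is lower semicontinuous and nonincreasing in $\rno$ (hence right continuous), while $\CRC{\rno}{\!\Wm\!}{\cset}$ is lower semicontinuous and nondecreasing by part~(\ref{capacityO-ilsc}) (hence left continuous), and then transfers continuity between the two via the continuous factor $\tfrac{1-\rno}{\rno}$. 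The paper's route is shorter and avoids the reparametrization; once unpacked, your concavity inequality is essentially another encoding of the monotonicity of $\CRC{\rno}{\!\Wm\!}{\cset}$ in $\rno$ that the paper uses anyway. For part~(\ref{capacityO-continuity-extension}) both approaches agree in substance: bound $\CRC{\rno}{\!\Wm\!}{\cset}$ above by the constrained \renyi capacity via \eqref{eq:renyiaugustin-information-oti} and appeal to \cite{nakiboglu19A}; the paper cites the explicit pointwise estimate \cite[(\ref*{A-eq:UECatone})]{nakiboglu19A} and then takes the supremum over $\mP\in\cset$, whereas you invoke the resulting right-continuity of the constrained \renyi capacity at $\rno=1$ directly.
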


The continuity results presented in parts (\ref{capacityO-continuity}) and (\ref{capacityO-continuity-extension}) 
are somewhat unsatisfactory. One would like to
either establish the continuity of \(\CRC{\rno}{\!\Wm\!}{\cset}\) from the right at \(\rno=1\) whenever
\(\CRC{\rnf}{\!\Wm\!}{\cset}\) is finite for a \(\rnf>1\)
or provide a channel \(\Wm\) and a constraint set \(\cset\)  for which \(\CRC{\rnf}{\!\Wm\!}{\cset}\) is 
finite for a \(\rnf>1\) and \(\lim_{\rno \downarrow 1} \CRC{\rno}{\!\Wm\!}{\cset}>\CRC{1}{\!\Wm\!}{\cset}\).
We could not do either. Instead we establish the continuity of \(\CRC{\rno}{\!\Wm\!}{\cset}\) from the right 
at \(\rno=1\) assuming  that \(\sup_{\mP\in\cset} \GMI{\rnf}{\mP}{\Wm}\) is finite for a \(\rnf>1\).

%%%%If \(\RC{\rnf}{\!\Wm\!}\) is finite, then \(\sup_{\mP\in\cset} \GMI{\rnf}{\mP}{\Wm}\) is
%%%%finite for all \(\cset\subset\pdis{\inpS}\) by 
%%%% \eqref{eq:def:renyiinformation} and Theorem \ref{thm:minimax}. 
%%%%Thus \(\CRC{\rno}{\!\Wm\!}{\cset}\) is nondecreasing and continuous in \(\rno\) on 
%%%%\((0,\chi_{\Wm,\cset}]\) for all \(\cset\subset\pdis{\inpS}\),
%%%%provided that \(\RC{\rnf}{\Wm}\) is finite for a \(\rnf>1\).

Since \(\RC{\rnf}{\!\Wm\!}=\RR{\rnf}{\!\Wm\!}\) by Theorem \ref{thm:minimax}
and \(\GMI{\rnf}{\mP}{\Wm}\leq \RR{\rnf}{\!\Wm\!}\) for all
\(\mP\in\pdis{\inpS}\) by \eqref{eq:def:renyiinformation},
\(\sup_{\mP\in\cset} \GMI{\rnf}{\mP}{\Wm}\) is
finite for all \(\cset\subset\pdis{\inpS}\) whenever 
\(\RC{\rnf}{\!\Wm\!}\) is finite.
Thus  \(\CRC{\rno}{\!\Wm\!}{\cset}\) is nondecreasing and continuous in \(\rno\) on 
\((0,\chi_{\Wm,\cset}]\) for all \(\cset\subset\pdis{\inpS}\),
provided that \(\RC{\rnf}{\Wm}\) is finite for a \(\rnf>1\).

Lemma \ref{lem:EHB} allows us to use  the continuity of \(\CRC{\rno}{\!\Wm\!}{\cset}\) in \(\rno\) and  
Lemma \ref{lem:divergence-pinsker} to establish the continuity of \(\qmn{\rno,\!\Wm\!,\cset}\) in  \(\rno\) for the 
total variation topology  on \(\pmea{\outA}\).

\begin{lemma}\label{lem:centercontinuity} 
For any \(\rnt\in \reals{+}\), \(\Wm:\inpS\to \pmea{\outA}\), 
and convex \(\cset\subset \pdis{\inpS}\) such that \(\CRC{\rnt}{\!\Wm\!}{\cset}\in\reals{+}\),
\begin{align}
\label{eq:lem:centercontinuity}
\RD{\rno\wedge 1}{\qmn{\rno,\!\Wm\!,\cset}}{\qmn{\rnf,\Wm,\cset}} 
&\leq \CRC{\rnf}{\Wm}{\cset}-\CRC{\rno}{\!\Wm\!}{\cset}
&
&\forall \rno,\rnf \mbox{~such that~}0<\rno<\rnf\leq\rnt.
\end{align}
Consequently, if \(\CRC{\rno}{\!\Wm\!}{\cset}\) is continuous in \(\rno\) on  
\(\alg{I}\) for some \(\alg{I}\subset (0,\rnt]\), 
then \(\qmn{\rno,\!\Wm\!,\cset}:\alg{I}\to \pmea{\outA}\) 
is continuous  in \(\rno\) on \(\alg{I}\) 
for the total variation topology on \(\pmea{\outA}\).
\end{lemma}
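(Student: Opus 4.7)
The plan is to obtain the divergence inequality \eqref{eq:lem:centercontinuity} by applying the van Erven--\harremoes-type bound of Lemma \ref{lem:EHB} at the smaller order $\rno$ with the test measure specialized to the Augustin center of the larger order, and then to derive the continuity consequence via Pinsker's inequality. Since $\cset$ is convex and $\CRC{\rnf}{\Wm}{\cset}\leq \CRC{\rnt}{\Wm}{\cset}<\infty$ for every $\rnf\in(0,\rnt]$ by Lemma \ref{lem:capacityO}-(\ref{capacityO-ilsc}), Theorem \ref{thm:minimax} ensures that the Augustin centers $\qmn{\rno,\Wm,\cset}$ and $\qmn{\rnf,\Wm,\cset}$ both exist and are unique, and the hypotheses of Lemma \ref{lem:EHB} are met at every order in $(0,\rnt]$.

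Setting $\mQ=\qmn{\rnf,\Wm,\cset}$ in \eqref{eq:lem:EHB} at order $\rno$ yields
\begin{align*}
\sup\nolimits_{\mP\in\cset}\CRD{\rno}{\Wm}{\qmn{\rnf,\Wm,\cset}}{\mP}
&\geq \CRC{\rno}{\Wm}{\cset}+\RD{\rno\wedge 1}{\qmn{\rno,\Wm,\cset}}{\qmn{\rnf,\Wm,\cset}}.
\end{align*}
Since $\rno<\rnf$, the monotonicity of the \renyi divergence in its order (Lemma \ref{lem:divergence-order}) gives $\RD{\rno}{\Wm(\dinp)}{\qmn{\rnf,\Wm,\cset}}\leq \RD{\rnf}{\Wm(\dinp)}{\qmn{\rnf,\Wm,\cset}}$ for every $\dinp\in\inpS$. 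Integrating against $\mP$ and invoking \eqref{eq:thm:minimaxcenter} at order $\rnf$ yields $\CRD{\rno}{\Wm}{\qmn{\rnf,\Wm,\cset}}{\mP}\leq \CRC{\rnf}{\Wm}{\cset}$ for every $\mP\in\cset$. Chaining these two estimates produces \eqref{eq:lem:centercontinuity}.

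For the continuity consequence, fix $\rnb\in\alg{I}$ and any sequence $\{\rno_{\ind}\}_{\ind\in\integers{+}}\subset\alg{I}$ with $\rno_{\ind}\to\rnb$. Applying \eqref{eq:lem:centercontinuity} to the ordered pair $(\rno_{\ind}\wedge\rnb,\rno_{\ind}\vee\rnb)$ and then Lemma \ref{lem:divergence-pinsker} gives
\begin{align*}
\tfrac{(\rno_{\ind}\wedge\rnb)\wedge 1}{2}\lon{\qmn{\rno_{\ind},\Wm,\cset}-\qmn{\rnb,\Wm,\cset}}^{2}
&\leq \abs{\CRC{\rno_{\ind}}{\Wm}{\cset}-\CRC{\rnb}{\Wm}{\cset}}.
\end{align*}
The right-hand side vanishes as $\ind\to\infty$ by the hypothesized continuity of the capacity on $\alg{I}$, while the prefactor is bounded below by a positive constant for $\ind$ large enough since $\rnb>0$; hence $\qmn{\rno_{\ind},\Wm,\cset}\to\qmn{\rnb,\Wm,\cset}$ in the total variation topology, establishing continuity of $\qmn{\rno,\Wm,\cset}$ on $\alg{I}$. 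No genuine obstacle arises: the only point demanding care is to apply Lemma \ref{lem:EHB} at the smaller of the two orders and to invoke the monotonicity of the \renyi divergence in its order in the direction that bounds $\RD{\rno}{\cdot}{\cdot}$ above by $\RD{\rnf}{\cdot}{\cdot}$ so that \eqref{eq:thm:minimaxcenter} can be used.
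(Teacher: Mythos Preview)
Your proof is correct and follows essentially the same approach as the paper: apply Lemma \ref{lem:EHB} at the smaller order with $\mQ=\qmn{\rnf,\Wm,\cset}$, use the monotonicity of the \renyi divergence in the order to bound the left-hand side by $\CRC{\rnf}{\Wm}{\cset}$ via \eqref{eq:thm:minimaxcenter}, and then deduce continuity from Pinsker's inequality. The only cosmetic difference is that the paper states the resulting total-variation bound for arbitrary $0<\rno<\rnf\leq\rnt$ first and then invokes continuity of the capacity, whereas you pass directly to sequences.
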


\subsection{Convex Hulls of Constraints and Product Constraints}\label{sec:miscellaneous}
In the following we consider two kinds of frequently encountered constraint sets 
that are described in terms of simpler constraint sets.
Lemma \ref{lem:capacityunion} considers convex hull of a family constraint sets
and bounds the Augustin capacity for the convex hull in terms of the Augustin 
capacities of the individual constraint sets.
Lemma \ref{lem:capacityproduct} considers a product channel 
for the constraint set that is the product of
convex hulls of the constraint sets on the component channels 
that have Augustin centers
and shows that Augustin capacity has an additive form and Augustin center has a 
product form. 

\begin{lemma}\label{lem:capacityunion}
Let \(\rno\) be a positive real, \(\Wm\) be a channel of the form
\(\Wm:\inpS\to \pmea{\outA}\), and 
\(\cset^{(\ind)}\) be a constraint set that has an order \(\rno\) Augustin center 
and a finite \(\CRC{\rno}{\Wm}{\cset^{(\ind)}}\) for all \(\ind\in\tinS\).
Then
\begin{align}
\notag %\label{eq:lem:capacityunion}
\sup\nolimits_{\ind \in \tinS} \CRC{\rno}{\!\Wm\!}{\cset^{(\ind)}}
\leq 
\CRC{\rno}{\!\Wm\!}{\cset}
&\leq \ln \sum\nolimits_{\ind\in \tinS} e^{\CRC{\rno}{\!\Wm\!}{\cset^{(\ind)}}}
\end{align}
where \(\cset\) is the convex hull of the union, i.e. \(\cset=\conv{(\cup_{\ind\in \tinS} \cset^{(\ind)})}\). 
Furthermore, 
\begin{itemize}
	\item \(\CRC{\rno}{\!\Wm\!}{\cset^{(\ind)}}=\CRC{\rno}{\!\Wm\!}{\cset}<\infty\) \(\Leftrightarrow\) 
	\(\sup_{\mP\in\cset} \CRD{\rno}{\Wm}{\qmn{\rno,\!\Wm\!,\cset^{(\ind)}}}{\mP}\leq \CRC{\rno}{\!\Wm\!}{\cset^{(\ind)}}\)
	\(\Rightarrow\) 
	\(\qmn{\rno,\!\Wm\!,\cset}=\qmn{\rno,\!\Wm\!,\cset^{(\ind)}}\). 
	\item \(\CRC{\rno}{\!\Wm\!}{\cset}=\ln \sum\nolimits_{\ind\in\tinS} e^{\CRC{\rno}{\!\Wm\!}{\cset^{(\ind)}}}<\infty\)
	\(\Leftrightarrow\) 
	\(\qmn{\rno,\!\Wm\!,\cset^{(\ind)}}\perp\qmn{\rno,\!\Wm\!,\cset^{(\jnd)}}~~\forall\ind \neq \jnd\)
	and \(\abs{\tinS}<\infty\)
	\(\Rightarrow\)
	\(\qmn{\rno,\!\Wm\!,\cset}=\sum\nolimits_{\ind\in\tinS}
	\tfrac{e^{\CRC{\rno}{\!\Wm\!}{\cset^{(\ind)}}}}{e^{\CRC{\rno}{\!\Wm\!}{\cset}}} 
	\qmn{\rno,\!\Wm\!,\cset^{(\ind)}}\).
\end{itemize}
\end{lemma}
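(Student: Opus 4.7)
The plan is to prove the lower bound by inclusion and the upper bound by constructing a mixture of the individual centers. Since $\cset^{(\ind)}\subset\cset$ for every $\ind$, monotonicity of $\CRC{\rno}{\!\Wm\!}{\cdot}$ in the constraint set gives $\sup_{\ind\in\tinS}\CRC{\rno}{\!\Wm\!}{\cset^{(\ind)}}\leq\CRC{\rno}{\!\Wm\!}{\cset}$. For the upper bound set $c_{\ind}\DEF\CRC{\rno}{\!\Wm\!}{\cset^{(\ind)}}$, $S\DEF\sum_{\ind\in\tinS}e^{c_{\ind}}$, and assume $S<\infty$ (else the bound is vacuous). Define $\mQ^{\star}\DEF\sum_{\ind\in\tinS}\tfrac{e^{c_{\ind}}}{S}\qmn{\rno,\!\Wm\!,\cset^{(\ind)}}$. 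Since $\mQ^{\star}\geq\tfrac{e^{c_{\ind}}}{S}\qmn{\rno,\!\Wm\!,\cset^{(\ind)}}$, Lemma \ref{lem:divergence-RM} gives $\RD{\rno}{\Wm(\dinp)}{\mQ^{\star}}\leq\RD{\rno}{\Wm(\dinp)}{\qmn{\rno,\!\Wm\!,\cset^{(\ind)}}}+\ln S-c_{\ind}$ for every $\dinp$ and $\ind$. Integrating against any $\mP\in\cset^{(\ind)}$ and invoking \eqref{eq:def:center} yields $\CRD{\rno}{\Wm}{\mQ^{\star}}{\mP}\leq\ln S$, and linearity of $\mP\mapsto\CRD{\rno}{\Wm}{\mQ^{\star}}{\mP}$ extends this to $\cset=\conv{(\cup_{\ind}\cset^{(\ind)})}$; the definition of $\RMI{\rno}{\mP}{\Wm}$ then delivers the claimed upper bound.

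For the first equivalence, the hypothesis $\sup_{\mP\in\cset}\CRD{\rno}{\Wm}{\qmn{\rno,\!\Wm\!,\cset^{(\ind)}}}{\mP}\leq c_{\ind}$ makes $\qmn{\rno,\!\Wm\!,\cset^{(\ind)}}$ admissible in the minimax form \eqref{eq:capacity} for $\cset$, forcing $\CRC{\rno}{\!\Wm\!}{\cset}\leq c_{\ind}$; together with the lower bound this gives $c_{\ind}=\CRC{\rno}{\!\Wm\!}{\cset}$, and Lemma \ref{lem:center} identifies $\qmn{\rno,\!\Wm\!,\cset^{(\ind)}}=\qmn{\rno,\!\Wm\!,\cset}$. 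Conversely, if $c_{\ind}=\CRC{\rno}{\!\Wm\!}{\cset}<\infty$, any sequence $\{\pma{}{(\jnd)}\}\subset\cset^{(\ind)}$ approaching $c_{\ind}$ also optimises over $\cset$, so by Lemma \ref{lem:center} the associated Augustin means converge in total variation to both $\qmn{\rno,\!\Wm\!,\cset^{(\ind)}}$ and $\qmn{\rno,\!\Wm\!,\cset}$; the two centers coincide, and applying \eqref{eq:def:center} on $\cset$ produces the sup-inequality.

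For the second equivalence, assume the centers are pairwise singular and $\abs{\tinS}<\infty$. For any $\mP\in\cset^{(\ind)}$ and $\dinp\in\supp{\mP}$, the bound $\mP(\dinp)^{1/(\rno\wedge 1)}\Wm(\dinp)\leq\qmn{\rno,\mP}$ from Lemma \ref{lem:meanO}-(\ref{meanO:measurebound}), combined with the total-variation convergence $\qmn{\rno,\pma{}{(\jnd)}}\to\qmn{\rno,\!\Wm\!,\cset^{(\ind)}}$ along an optimising sequence (Lemma \ref{lem:center}), yields $\Wm(\dinp)\AC\qmn{\rno,\!\Wm\!,\cset^{(\ind)}}$, whence $\Wm(\dinp)\perp\qmn{\rno,\!\Wm\!,\cset^{(\jnd)}}$ for $\jnd\neq\ind$; then $\mQ^{\star}$ agrees with $\tfrac{e^{c_{\ind}}}{S}\qmn{\rno,\!\Wm\!,\cset^{(\ind)}}$ on $\supp{\Wm(\dinp)}$, the Lemma \ref{lem:divergence-RM} step becomes an equality, and $\CRD{\rno}{\Wm}{\mQ^{\star}}{\mP}=\ln S$, forcing $\CRC{\rno}{\!\Wm\!}{\cset}=\ln S$ and $\mQ^{\star}=\qmn{\rno,\!\Wm\!,\cset}$ via Lemma \ref{lem:center}. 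The main obstacle is the converse direction---extracting pairwise singularity and $\abs{\tinS}<\infty$ from the scalar identity $\CRC{\rno}{\!\Wm\!}{\cset}=\ln S<\infty$. The argument is contrapositive and quantitative: if some pair $\qmn{\rno,\!\Wm\!,\cset^{(\ind)}}$ and $\qmn{\rno,\!\Wm\!,\cset^{(\jnd)}}$ share a non-singular part, then the Lemma \ref{lem:divergence-RM} inequality is strict on a set of positive $\Wm(\dinp)$-mass for $\mP$-a.e.\ $\dinp$ and a suitable $\mP\in\cset^{(\ind)}$, which propagates to $\CRD{\rno}{\Wm}{\mQ^{\star}}{\mP}<\ln S$ uniformly; similarly an infinite $\tinS$ with $c_{\ind}\geq 0$ and $S<\infty$ precludes the supremum from being saturated. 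Making these slack estimates quantitative while handling general (rather than finite) $\tinS$ is the delicate step.
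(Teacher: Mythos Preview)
Your lower bound and first equivalence are essentially correct, though for the converse of the first bullet the paper argues more directly: since \(\cset\) is convex with finite capacity, Theorem~\ref{thm:minimax} supplies \(\qmn{\rno,\Wm,\cset}\), and then Lemma~\ref{lem:EHB} applied to \(\cset^{(\ind)}\) with \(\mQ=\qmn{\rno,\Wm,\cset}\) gives
\(\CRC{\rno}{\Wm}{\cset}\geq\CRC{\rno}{\Wm}{\cset^{(\ind)}}+\RD{\rno\wedge1}{\qmn{\rno,\Wm,\cset^{(\ind)}}}{\qmn{\rno,\Wm,\cset}}\),
forcing the centers to coincide without passing through sequences.

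The second equivalence, however, has genuine gaps in both directions, and the fix is the same in each case: replace your normalized sum \(\mQ^{\star}\) by the pointwise supremum \(\mean=\bigvee_{\ind\in\tinS}e^{c_{\ind}}\qmn{\rno,\Wm,\cset^{(\ind)}}\). For the \(\Rightarrow\) direction, your construction gives \(\lon{\mQ^{\star}}=1\) regardless of whether the centers overlap, so there is no slack to exploit; with \(\mean\) one has \(\CRC{\rno}{\Wm}{\cset}\leq\ln\lon{\mean}\) by the same Lemma~\ref{lem:divergence-RM} argument, and \(\lon{\mean}<\sum_{\ind}e^{c_{\ind}}\) precisely when some pair of centers is not singular. (The finiteness of \(\tinS\) is immediate: \(c_{\ind}\geq0\) forces \(e^{c_{\ind}}\geq1\), so \(S<\infty\) already implies \(\abs{\tinS}<\infty\); this is not delicate.)

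For the \(\Leftarrow\) direction your argument does not close. First, the claim \(\Wm(\dinp)\AC\qmn{\rno,\Wm,\cset^{(\ind)}}\) is not justified: Lemma~\ref{lem:meanO}-(\ref{meanO:measurebound}) gives \(\Wm(\dinp)\AC\qmn{\rno,\mP}\) only for \(\mP\) with \(\mP(\dinp)>0\), and absolute continuity does not pass to total-variation limits. Second, and more seriously, even granting that claim you would obtain \(\sup_{\mP\in\cset}\CRD{\rno}{\Wm}{\mQ^{\star}}{\mP}=\ln S\) for your particular \(\mQ^{\star}\), which only shows the infimum in \eqref{eq:thm:minimax} is at most \(\ln S\)---exactly the upper bound you already have. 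What is needed is a lower bound on \(\sup_{\mP\in\cset}\CRD{\rno}{\Wm}{\mS}{\mP}\) valid for \emph{every} \(\mS\in\pmea{\outA}\). The paper obtains this from Lemma~\ref{lem:EHB}: with the centers pairwise singular, Lebesgue-decompose \(\mS=\sum_{\ind}\mS_{\ind}+\mS_{\perp}\) where \(\mS_{\ind}\AC\qmn{\rno,\Wm,\cset^{(\ind)}}\); then
\(\sup_{\mP\in\cset^{(\ind)}}\CRD{\rno}{\Wm}{\mS}{\mP}\geq c_{\ind}+\RD{\rno\wedge1}{\qmn{\rno,\Wm,\cset^{(\ind)}}}{\mS}\geq c_{\ind}-\ln\lon{\mS_{\ind}}\),
and maximizing \(\ind\) under the constraint \(\sum_{\ind}\lon{\mS_{\ind}}\leq1\) gives \(\ln S\).
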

Note that if \(\cset^{(\ind)}\) is convex and \(\CRC{\rno}{\Wm}{\cset^{(\ind)}}\) is finite, 
then 
\(\cset^{(\ind)}\) has a unique order \(\rno\) Augustin center by Theorem \ref{thm:minimax}.
 
\begin{lemma}\label{lem:capacityproduct}
For any \(\rno\in\reals{+}\),
length \(\blx\) product channel
\(\Wmn{[1,\blx]}:\inpS_{1}^{\blx}\to\pmea{\outA_{1}^{\blx}}\),
and  constraint sets \(\cset_{\tin}\subset\pdis{\inpS_{\tin}}\) 
that have order \(\rno\) Augustin centers
	\begin{align}
\notag  %	\label{eq:lem:capacityproduct}
	\CRC{\rno}{\!\Wmn{[1,\blx]}\!}{\cset}
	&=\CRC{\rno}{\!\Wmn{[1,\blx]}\!}{\cset_{1}^{\blx}}
	=\sum\nolimits_{\tin=1}^{\blx}\CRC{\rno}{\Wmn{\tin}}{\cset_{\tin}}
	\end{align}
	where \(\cset\!=\!\{\mP\!\in\!\pdis{\inpS_{1}^{\blx}}:\pmn{\tin}\!\in\!\conv{\cset_{\tin}}~\forall\tin\in \{1,\ldots,\blx\}\}\),
	i.e. a \(\mP\in\pdis{\inpS_{1}^{\blx}}\) is in \(\cset\) iff 
	for all  \(\tin\in\{1,\ldots,\blx\}\) 	its 
	\(\inpS_{\tin}\) marginal \(\pmn{\tin}\) is in the convex hull of \(\cset_{\tin}\).
	Furthermore, if \(\CRC{\rno}{\Wmn{\tin}}{\cset_{\tin}}\) is finite for all \(\tin\in\{1,\ldots,\blx\}\), then 
	\(\qmn{\rno,\!\Wmn{[1,\blx]}\!,\cset}=\qmn{\rno,\!\Wmn{[1,\blx]}\!,\cset_{1}^{\blx}}=\bigotimes\nolimits_{\tin=1}^{\blx}\qmn{\rno,\Wmn{\tin},\cset_{\tin}}\).
\end{lemma}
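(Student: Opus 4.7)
The plan is to reduce everything to two facts: the subadditivity of the Augustin information on product channels (Lemma \ref{lem:information:product}), which becomes an equality for product input distributions, and the tensorization of the \renyi divergence, \(\RD{\rno}{\bigotimes_{\tin}\!\wmn{\tin}}{\bigotimes_{\tin}\!\qmn{\tin}}=\sum_{\tin}\RD{\rno}{\wmn{\tin}}{\qmn{\tin}}\). I would organise the proof in two stages: first the three-way equality of the capacities, then the factorisation of the Augustin center via the uniqueness statement in Lemma \ref{lem:center}.

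For the capacity equalities, first observe that \(\cset_{1}^{\blx}\subset\cset\) under the natural reading that \(\cset_{1}^{\blx}\) consists of product distributions whose marginals lie in \(\cset_{\tin}\), so \(\CRC{\rno}{\Wmn{[1,\blx]}}{\cset_{1}^{\blx}}\leq\CRC{\rno}{\Wmn{[1,\blx]}}{\cset}\). For the upper bound on \(\CRC{\rno}{\Wmn{[1,\blx]}}{\cset}\), pick any \(\mP\in\cset\); its marginals \(\pmn{\tin}\) lie in \(\conv{\cset_{\tin}}\), and the concavity of \(\RMI{\rno}{\cdot}{\Wmn{\tin}}\) (Lemma \ref{lem:informationP}) yields \(\RMI{\rno}{\pmn{\tin}}{\Wmn{\tin}}\leq\CRC{\rno}{\Wmn{\tin}}{\conv{\cset_{\tin}}}=\CRC{\rno}{\Wmn{\tin}}{\cset_{\tin}}\), where the final equality is the standard fact that the supremum of a concave function over a convex hull equals the supremum over the original set. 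Combining this with the subadditive bound \(\RMI{\rno}{\mP}{\Wmn{[1,\blx]}}\leq\sum_{\tin}\RMI{\rno}{\pmn{\tin}}{\Wmn{\tin}}\) from Lemma \ref{lem:information:product} and taking the supremum over \(\mP\in\cset\) gives \(\CRC{\rno}{\Wmn{[1,\blx]}}{\cset}\leq\sum_{\tin}\CRC{\rno}{\Wmn{\tin}}{\cset_{\tin}}\). The matching lower bound comes from evaluating at product distributions \(\mP=\bigotimes_{\tin}\pmn{\tin}\) with \(\pmn{\tin}\in\cset_{\tin}\): Lemma \ref{lem:information:product} gives equality \(\RMI{\rno}{\mP}{\Wmn{[1,\blx]}}=\sum_{\tin}\RMI{\rno}{\pmn{\tin}}{\Wmn{\tin}}\), and taking independent suprema over the \(\pmn{\tin}\)'s yields \(\sum_{\tin}\CRC{\rno}{\Wmn{\tin}}{\cset_{\tin}}\leq\CRC{\rno}{\Wmn{[1,\blx]}}{\cset_{1}^{\blx}}\). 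Chaining the three inequalities closes the loop.

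For the center factorisation under the finiteness hypothesis, set \(\qmn{\star}\DEF\bigotimes_{\tin=1}^{\blx}\qmn{\rno,\Wmn{\tin},\cset_{\tin}}\). Tensorisation of the \renyi divergence gives
\begin{align}
\notag
\CRD{\rno}{\Wmn{[1,\blx]}}{\qmn{\star}}{\mP}
&=\sum\nolimits_{\tin=1}^{\blx}\CRD{\rno}{\Wmn{\tin}}{\qmn{\rno,\Wmn{\tin},\cset_{\tin}}}{\pmn{\tin}}
\end{align}
because the conditional divergence on the right is, by definition, the \(\mP\)-expectation of a sum that splits along the coordinates. For each \(\tin\), the map \(\mP\mapsto\CRD{\rno}{\Wmn{\tin}}{\qmn{\rno,\Wmn{\tin},\cset_{\tin}}}{\mP}\) is linear, so its supremum over \(\conv{\cset_{\tin}}\) coincides with its supremum over \(\cset_{\tin}\), which by the defining property \eqref{eq:thm:minimaxcenter} of the Augustin center equals \(\CRC{\rno}{\Wmn{\tin}}{\cset_{\tin}}\). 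Hence \(\sup_{\mP\in\cset}\CRD{\rno}{\Wmn{[1,\blx]}}{\qmn{\star}}{\mP}\leq\sum_{\tin}\CRC{\rno}{\Wmn{\tin}}{\cset_{\tin}}=\CRC{\rno}{\Wmn{[1,\blx]}}{\cset}\). The reverse inequality is the max-min inequality together with the definition of capacity, so \(\qmn{\star}\) satisfies \eqref{eq:def:center} for \(\cset\); the uniqueness assertion of Lemma \ref{lem:center} then identifies \(\qmn{\star}\) with \(\qmn{\rno,\Wmn{[1,\blx]},\cset}\). The same computation, restricted to \(\mP\in\cset_{1}^{\blx}\), shows \(\qmn{\star}\) is also the Augustin center for \(\cset_{1}^{\blx}\).

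The arguments are essentially routine once the two structural identities (subadditivity of \(\RMI{\rno}{\cdot}{\Wmn{[1,\blx]}}\) and tensorisation of \(\RD{\rno}{\cdot}{\cdot}\)) are in hand; the only point requiring a light touch is the passage from \(\cset_{\tin}\) to its convex hull, which uses concavity of Augustin information on the capacity side and linearity of conditional divergence in its input distribution on the center side. I do not anticipate a substantial obstacle beyond keeping these two roles straight.
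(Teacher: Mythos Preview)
Your argument is organized much like the paper's, but the justification you give for the capacity upper bound is wrong. You assert that concavity of \(\RMI{\rno}{\cdot}{\Wmn{\tin}}\) yields \(\CRC{\rno}{\Wmn{\tin}}{\conv{\cset_{\tin}}}=\CRC{\rno}{\Wmn{\tin}}{\cset_{\tin}}\) via ``the standard fact that the supremum of a concave function over a convex hull equals the supremum over the original set.'' That statement is false for concave functions (you are presumably thinking of the affine/convex case): take a two-input channel with \(\Wm(x_{1})\neq\Wm(x_{2})\) and \(\cset_{\tin}=\{\delta_{x_{1}},\delta_{x_{2}}\}\); both point masses give Augustin information zero, yet any nontrivial mixture gives a strictly positive value, so the supremum over \(\conv{\cset_{\tin}}\) is strictly larger. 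This counterexample violates the lemma's hypothesis precisely because such a \(\cset_{\tin}\) has no Augustin center, which is the clue that the hypothesis, not concavity, is doing the work here.

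The fix is already in your second paragraph. Linearity of \(\mP\mapsto\CRD{\rno}{\Wmn{\tin}}{\qmn{\rno,\Wmn{\tin},\cset_{\tin}}}{\mP}\) together with the center property extends the bound \(\CRD{\rno}{\Wmn{\tin}}{\qmn{\rno,\Wmn{\tin},\cset_{\tin}}}{\pmn{\tin}}\leq\CRC{\rno}{\Wmn{\tin}}{\cset_{\tin}}\) from \(\cset_{\tin}\) to \(\conv{\cset_{\tin}}\), and then \(\RMI{\rno}{\pmn{\tin}}{\Wmn{\tin}}\leq\CRC{\rno}{\Wmn{\tin}}{\cset_{\tin}}\) follows. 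This is exactly how the paper proceeds: it disposes of the infinite-capacity case trivially, and in the finite case builds \(\mQ=\bigotimes_{\tin}\qmn{\rno,\Wmn{\tin},\cset_{\tin}}\), uses Tonelli--Fubini and the linearity-in-\(\mP\) extension to get \(\CRD{\rno}{\Wmn{[1,\blx]}}{\mQ}{\mP}\leq\sum_{\tin}\CRC{\rno}{\Wmn{\tin}}{\cset_{\tin}}\) for all \(\mP\in\cset\), thereby obtaining the capacity upper bound and the center identification simultaneously. Once you reroute the upper bound through the center hypothesis, your proof and the paper's coincide; the residual difference (you invoke Lemma~\ref{lem:center} directly for \(\cset_{1}^{\blx}\), the paper cites Lemma~\ref{lem:capacityunion}) is cosmetic.
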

\begin{remark}
Note that the convex hull of any subset of \(\cset\) is a subset of \(\cset\)
because \(\cset\) is convex by definition. In particular, \(\cset_{1}^{\blx}\!\subset\!\conv{\cset_{1}^{\blx}}\!\subset\!\cset\).
Then \(\CRC{\rno}{\!\Wmn{[1,\blx]}\!}{\conv{\cset_{1}^{\blx}}}\!=\!\sum\nolimits_{\tin=1}^{\blx}\CRC{\rno}{\Wmn{\tin}}{\cset_{\tin}}\)
by Lemma \ref{lem:capacityproduct}.
Furthermore,
if \(\CRC{\rno}{\Wmn{\tin}}{\cset_{\tin}}\) is finite for all \(\tin\in\{1,\ldots,\blx\}\), 
then \(\qmn{\rno,\!\Wmn{[1,\blx]}\!,\conv{\cset_{1}^{\blx}}}=\bigotimes\nolimits_{\tin=1}^{\blx}\qmn{\rno,\Wmn{\tin},\cset_{\tin}}\)
by Lemma \ref{lem:capacityunion}.
\end{remark}
\begin{remark}
The constraint set \(\cset_{1}^{\blx}\) described in 
Lemma \ref{lem:capacityproduct} may not be convex, yet
\(\cset_{1}^{\blx}\) is guaranteed 
to have an order \(\rno\) Augustin center.
\end{remark}

\subsection{Augustin Capacity vs \renyi Capacity}\label{sec:capacity-renyi}
Using the \renyi information instead of the Augustin information, one can define the \renyi capacity,
as follows.
\begin{definition}\label{def:renyicapacity}
	For any \(\rno\in\reals{+}\), \(\Wm:\inpS\to\pmea{\outA}\), and \(\cset\subset\pdis{\inpS}\)
	\emph{the order \(\rno\) \renyi capacity of \(\Wm\) for constraint set \(\cset\)} is 
	\begin{align}
\notag	%\label{eq:def:renyicapacity}
	\CGC{\rno}{\!\Wm\!}{\cset}
	&\DEF \sup\nolimits_{\mP \in \cset}  \GMI{\rno}{\mP}{\Wm}.
	\end{align}
	When the constraint set \(\cset\) is the whole \(\pdis{\inpS}\), we denote the order \(\rno\) 
	\renyi capacity by \(\GCL{\rno}{\Wm}{}\), i.e. 
	\(\GCL{\rno}{\!\Wm\!}{}\DEF\CGC{\rno}{\!\Wm\!}{\pdis{\inpS}}\).
\end{definition}

Since \(\RMI{1}{\mP}{\Wm}\!=\!\GMI{1}{\mP}{\Wm}\), \(\CGC{1}{\Wm}{\cset}\!=\!\CRC{1}{\Wm}{\cset}\) 
by definition. 
We cannot say the same for other orders;
by \eqref{eq:renyiaugustin-information-zto},
\eqref{eq:renyiaugustin-information-oti} we have
\begin{align}
\notag   %\label{eq:renyiaugustin-capacity-zto}
\CGC{\rno}{\Wm}{\cset}
&\leq \CRC{\rno}{\!\Wm\!}{\cset}
&
&\rno \in (0,1],
\\
\notag   %\label{eq:renyiaugustin-capacity-oti}
\CGC{\rno}{\Wm}{\cset}
&\geq \CRC{\rno}{\!\Wm\!}{\cset}
&
&\rno \in [1,\infty).
\end{align}
As a result of definitions of the \renyi information and capacity we have
\begin{align}
\notag
%\label{eq:capacity-renyi}
\CGC{\rno}{\!\Wm\!}{\cset}
&=\sup\nolimits_{\mP \in \cset}\inf\nolimits_{\mQ\in\pmea{\outA}} \RD{\rno}{\mP\mtimes\Wm}{\mP\otimes\mQ}.
\end{align}
The \renyi capacity satisfies a minimax theorem, \cite[Thm. \ref*{A-thm:Cminimax}]{nakiboglu19A}, 
similar to Theorem \ref{thm:minimax}: For any convex constraint set \(\cset\subset\pdis{\inpS}\)
\begin{align}
\notag
\sup\nolimits_{\mP \in \cset}\inf\nolimits_{\mQ\in\pmea{\outA}} \RD{\rno}{\mP\mtimes\Wm}{\mP\otimes\mQ}
&=
\inf\nolimits_{\mQ\in\pmea{\outA}}\sup\nolimits_{\mP \in \cset} \RD{\rno}{\mP\mtimes\Wm}{\mP\otimes\mQ}.
\end{align}
If \(\CGC{\rno}{\Wm}{\cset}\) is finite, then \(\exists!\qgn{\rno,\Wm,\cset}\in\pmea{\outA}\), 
\emph{the order \(\rno\) \renyi center  \(\Wm\) for the constraint set \(\cset\)}, 
satisfying
\begin{align}
\notag
\CGC{\rno}{\!\Wm\!}{\cset}
&=\sup\nolimits_{\mP \in \cset} \RD{\rno}{\mP\mtimes\Wm}{\mP\otimes\qgn{\rno,\Wm,\cset}}.
\end{align}
Consequently, the \renyi capacity equals to the \renyi radius provided that \(\cset\!=\!\pdis{\inpS}\).
Hence \(\GCL{\rno}{\!\Wm\!}{}\!=\!\RC{\rno}{\!\Wm\!}\) and \(\qgn{\rno,\Wm}\!=\!\qmn{\rno,\Wm}\)
by Theorem \ref{thm:minimax}.
The other observations presented in this section have their counter parts for the \renyi capacity and center;
compare for example 
Lemma \ref{lem:EHB} and \cite[Lemma \ref*{A-lem:CEHB}]{nakiboglu19A}.

%%%%%%Using the above described minimax theorem for the \renyi capacity, 
%%%%%%i.e. \cite[Thm. \ref*{A-thm:Cminimax}]{nakiboglu19A},
%%%%%%we have established the following van Erven-\harremoes bound
%%%%%%in  \cite[Lemma \ref*{A-lem:CEHB}]{nakiboglu19A}:
%%%%%%for any convex constraint set \(\cset\subset \pdis{\inpS}\) satisfying 
%%%%%%\(\CGC{\rno}{\Wm}{\cset}<\infty\)
%%%%%%\begin{align}
%%%%%%\notag
%%%%%%\sup\nolimits_{\mP \in \cset} \RD{\rno}{\mP\mtimes \Wm}{\mP\otimes\mQ}
%%%%%%&\geq  \CGC{\rno}{\!\Wm\!}{\cset}+\RD{\rno}{\qgn{\rno,\!\Wm\!,\cset}}{\mQ}
%%%%%%&
%%%%%%&\forall \mQ \in \pmea{\outA}.
%%%%%%\end{align} 
%%%%%%Note that this bound implies that Lemma \ref{lem:EHB} can be strengthened
%%%%%%as follows, for \(\cset=\pdis{\inpS}\) case 
%%%%%%\begin{align}
%%%%%%\notag
%%%%%%\sup\nolimits_{\mP \in \pdis{\inpS}} \CRD{\rno}{\Wm}{\mQ}{\mP}
%%%%%%&\geq  \RC{\rno}{\Wm}+\RD{\rno}{\qmn{\rno,\Wm}}{\mQ}
%%%%%%&
%%%%%%&\forall \mQ \in \pmea{\outA}.
%%%%%%\end{align}
%%%%%%Other observations presented in this section have their counter parts 
%%%%%%for the \renyi capacity and center. 
%%%%%%Compare for example 
%%%%%%Lemma \ref{lem:capacityunion} and \cite[Lemma \ref*{A-lem:constrainedcapacityUnion}]{nakiboglu19A}
%%%%%%or
%%%%%%Lemma \ref{lem:capacityproduct} and \cite[Lemma \ref*{A-lem:constrainedcapacityProduct}]{nakiboglu19A}.

%!TEX root=../main-C.tex
\section{The Cost Constrained Problem}\label{sec:cost}
In the previous section, we have defined the Augustin capacity for arbitrary constraint sets 
and proved the existence of a unique Augustin center for any convex constraint set with
finite Augustin capacity. 
The convex constraint sets of interest are often defined via the cost constraints;
the main aim of this section is to investigate this important special case more closely.
In \S\ref{sec:cost-CC} we investigate the immediate consequences of the definition 
of the cost constrained Augustin capacity and ramifications of the analysis presented in the previous section. 
In \S\ref{sec:cost-AL} we define and analyze the Augustin-Legendre (A-L) 
information, capacity, radius, and center. 
The discussion in \S\ref{sec:cost-AL} is a generalization of certain parts of 
the analysis presented by \csiszar and \korner in \cite[Ch. 8]{csiszarkorner} 
for the supremum of the mutual information for discrete channels with single cost constraint, 
i.e. \(\rno=1\), \(\abs{\inpS}<\infty\), \(\abs{\outS}<\infty\), \(\ell=1\) case.
In \S\ref{sec:cost-RG} we define and analyze 
the \renyi\!\!-Gallager (R-G) information, mean, capacity,
radius, and center.
The most important conclusion of our analysis in \S\ref{sec:cost-RG} is the equality of 
the A-L capacity and center to the R-G capacity and center. 
In \S\ref{sec:cost-TP}, we demonstrate how the results presented in 
\S\ref{sec:cost-CC}, \S\ref{sec:cost-AL}, and \S\ref{sec:cost-RG} 
can be used to determine the Augustin capacity and center of a 
transition probability with cost constraints.
Proofs of the propositions presented in \S\ref{sec:cost-CC}, 
\S\ref{sec:cost-AL}, and \S\ref{sec:cost-RG} can be found is 
Appendix \ref{sec:costproofs}.

Augustin presented a discussion of the cost constrained capacity 
\(\CRC{\rno}{\!\Wm\!}{\costc}\) in \cite[\S34]{augustin78}
for the case when the cost function \(\costf\) is a bounded function of the form \(\costf:\inpS\to [0,1]^{\ell}\)
and the order \(\rno\) is in \((0,1]\). 
In \cite[\S35]{augustin78}, Augustin also analyzed quantities closely related to 
the R-G information and capacity.
The quantities analyzed by Augustin in \cite[\S35]{augustin78} have first appeared in Gallager's error 
exponents analysis for cost constrained channels \cite[\S6]{gallager65}, 
\cite[\S7.3,\S7.4,\S7.5]{gallager}.
Unlike Augustin, Gallager did not assume \(\costf\) to be bounded;
but Gallager confined his analysis to the case when there is a single cost constraint,
i.e. \(\ell=1\) case, and refrained from defining the R-G capacity as a quantity that is of 
interest on its own 
right.
Other authors studying cost constrained problems,
\cite[\S IV]{arimoto76},  \cite{oohama17A,oohama17B,vazquezMF15},
have considered the R-G information and capacity, as well.
Yet to the best of our knowledge for orders other than one 
the A-L information measures, 
which are obtained through a more direct application of convex conjugation,
have not been studied before. 

\subsection{The Cost Constrained Augustin Capacity and Center}\label{sec:cost-CC}
We denote the set of all probability mass functions satisfying a cost constraint \(\costc\) by \(\cset(\costc)\), i.e.
\begin{align}
\notag %\label{eq:def:costconstraint}
\cset(\costc)
&\DEF \{\mP\in\pdis{\inpS}:\EXS{\mP}{\costf}\leq\costc\}.
\end{align}
\(\cset(\costc)\!\neq\!\emptyset\) iff \(\costc\!\in\!\fcc{\costf}\) 
where \(\fcc{\costf}\) is defined  in \eqref{eq:def:feasible} 
as the set of all feasible cost constraints for the cost function \(\costf\). 
\(\cset(\costc)\) is nondecreasing in \(\costc\), i.e. \(\costc_{1}\!\leq\!\costc_{2}\) implies
 \(\cset(\costc_{1})\!\subset\!\cset(\costc_{2})\).
We define the order \(\rno\) Augustin capacity of \(\Wm\) for the cost constraint \(\costc\) as
\begin{align}
\label{eq:def:costcapacity}
\CRC{\rno}{\!\Wm\!}{\costc}
&\DEF 
\begin{cases}
\sup\nolimits_{\mP\in\cset(\costc)} \RMI{\rno}{\mP}{\Wm}
&\mbox{if~}\costc\in \fcc{\costf}
\\
-\infty
&\mbox{if~}\costc\in \reals{\geq0}^{\ell}\setminus\fcc{\costf}
\end{cases}
&
&\forall\rno\in\reals{+}.
\end{align}
We defined \(\CRC{\rno}{\!\Wm\!}{\costc}\) for \(\costc\)'s that are not feasible
in order to be able to use standard results without modifications.
Since \(\cset(\costc)\) is a convex set, Theorem \ref{thm:minimax} holds for \(\cset(\costc)\).
We denote\footnote{This slight abuse of notation ---which can be avoided by using 
\(\CRC{\rno}{\!\Wm\!}{\cset(\costc)}\) and \(\qmn{\rno,\!\Wm\!,\cset(\costc)}\) instead of 
\(\CRC{\rno}{\!\Wm\!}{\costc}\) and \(\qmn{\rno,\!\Wm\!,\costc}\)--- 
provides brevity without leading to any notational ambiguity.} 
the order \(\rno\) Augustin center of \(\Wm\) for the cost constraint \(\costc\) by 
\(\qmn{\rno,\!\Wm\!,\costc}\).

For a given order \(\rno\), the Augustin capacity \(\CRC{\rno}{\!\Wm\!}{\costc}\) is a 
concave function of the cost constraint \(\costc\). Hence,
if it is finite at an interior point of \(\fcc{\costf}\), 
then it is a continuous function of the cost constraint \(\costc\) 
that lies below its tangent planes drawn at interior points of \(\fcc{\costc}\). 
Lemma \ref{lem:CCcapacity}, in the following, summarizes these observations. 
\begin{lemma}\label{lem:CCcapacity}
Let \(\Wm\) be a channel of the form \(\Wm:\inpS\to \pmea{\outA}\) 
with the cost function \(\costf\) of the form \(\costf:\inpS\to \reals{\geq0}^{\ell}\).
\begin{enumerate}[(a)]
\item\label{CCcapacity:function} 
For any \(\rno\in\reals{+}\),
\(\CRC{\rno}{\!\Wm\!}{\costc}\) is a nondecreasing and concave function of \(\costc\) 
on \(\reals{\geq0}^{\ell}\),
which is either infinite on every point in \(\inte{\fcc{\costf}}\)
or finite and continuous on \(\inte{\fcc{\costf}}\). 
\item\label{CCcapacity:interior}
If \(\CRC{\rno}{\!\Wm\!}{\costc}\) is finite on \(\inte{\fcc{\costf}}\) for an \(\rno\in\reals{+}\), 
then 
for every \(\costc\in\inte{\fcc{\costf}}\) there exists a \(\lgm_{\rno,\!\Wm\!,\costc}\in \reals{\geq0}^{\ell}\) such that
\begin{align}
\label{eq:CCcapacity:interior}
\CRC{\rno}{\!\Wm\!}{\costc}+\lgm_{\rno,\!\Wm\!,\costc}\cdot(\tilde{\costc}-\costc)
&\geq \CRC{\rno}{\!\Wm\!}{\tilde{\costc}}
&
&\forall \tilde{\costc}\in\reals{\geq0}^{\ell}
\end{align}
Furthermore, the set of all such \(\lgm_{\rno,\!\Wm\!,\costc}\)'s is convex and compact.
\item\label{CCcapacity:zeroonedichotomy}
Either \(\CRC{\rno}{\!\Wm\!}{\costc}\!=\!\infty\) for all \((\rno,\costc)\!\in\!(0,1)\!\times\!\inte{\fcc{\costf}}\)
or \(\CRC{\rno}{\!\Wm\!}{\costc}\) 
and \(\qmn{\rno,\!\Wm\!,\costc}\)
are continuous in \((\rno,\costc)\) on \((0,1)\!\times\!\inte{\fcc{\costf}}\)
for the total variation topology on \(\pmea{\outA}\).
\end{enumerate}
\end{lemma}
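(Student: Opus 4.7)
For part (a) the monotonicity is immediate from $\cset(\costc_{1})\subset\cset(\costc_{2})$ when $\costc_{1}\leq\costc_{2}$. Concavity in $\costc$ follows from the concavity of $\RMI{\rno}{\mP}{\Wm}$ in $\mP$ established in Lemma \ref{lem:informationP}: if $\pmn{0}\in\cset(\costc_{0})$ and $\pmn{1}\in\cset(\costc_{1})$ then $\beta\pmn{1}+(1-\beta)\pmn{0}\in\cset(\beta\costc_{1}+(1-\beta)\costc_{0})$, and taking the supremum over $\pmn{0},\pmn{1}$ gives the desired concavity inequality. A finite concave function on an open convex subset of $\reals{}^{\ell}$ is continuous, which yields continuity of $\CRC{\rno}{\Wm}{\cdot}$ on $\inte{\fcc{\costf}}$ whenever it is finite there. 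For the dichotomy, standard convex analysis gives: if $\CRC{\rno}{\Wm}{\costc_{0}}=\infty$ at an interior point $\costc_{0}$, then for any other interior $\costc_{1}$ one can write $\costc_{0}=\beta\costc_{1}+(1-\beta)\costc_{2}$ with $\costc_{2}=\costc_{0}+\frac{\beta}{1-\beta}(\costc_{0}-\costc_{1})\in\inte{\fcc{\costf}}$ for small $\beta>0$; since the hypograph of a concave function is convex and contains $\{\costc_{0}\}\times\reals{}$, the segment argument forces $\CRC{\rno}{\Wm}{\costc_{1}}=\infty$.

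For part (b), once $\CRC{\rno}{\Wm}{\cdot}$ is finite and concave on the open convex set $\inte{\fcc{\costf}}$, the existence of a supergradient at every interior point is a classical result of convex analysis, and this supergradient is exactly $-\lgm_{\rno,\Wm,\costc}$ satisfying \eqref{eq:CCcapacity:interior}. Componentwise nonnegativity of $\lgm_{\rno,\Wm,\costc}$ follows by substituting $\tilde{\costc}=\costc+\epsilon e_{i}$ in \eqref{eq:CCcapacity:interior} and using the monotonicity from part (a). The set of admissible $\lgm$ is an intersection of closed half-spaces, hence convex and closed; boundedness follows because a finite concave function on an open convex set is locally Lipschitz, which bounds the magnitude of any supergradient.

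For part (c), the dichotomy is obtained by combining the monotonicity assertions in Lemma \ref{lem:capacityO}-(\ref{capacityO-ilsc},\ref{capacityO-decreasing}): the nondecrease in $\rno$ bounds $\CRC{\rno}{\Wm}{\costc_{0}}$ above for $\rno\leq\rno_{0}$, while the nonincrease of $\frac{1-\rno}{\rno}\CRC{\rno}{\Wm}{\costc_{0}}$ on $(0,1)$ bounds $\CRC{\rno}{\Wm}{\costc_{0}}$ above for $\rno\in[\rno_{0},1)$; thus finiteness at one $(\rno_{0},\costc_{0})$ propagates to all of $(0,1)\times\{\costc_{0}\}$ and then, via part (a), to $(0,1)\times\inte{\fcc{\costf}}$. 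In the finite case, joint continuity of $\CRC{\rno}{\Wm}{\costc}$ is established by a sandwich argument: picking $\delta>0$ with $\costc_{0}\pm\delta\uc\in\inte{\fcc{\costf}}$ and using componentwise monotonicity in $\costc$, Lemma \ref{lem:capacityO}-(\ref{capacityO-continuity}) for fixed $\costc$, and part (a) for fixed $\rno$.

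The main obstacle is the joint continuity of the center. The key tool is Lemma \ref{lem:EHB}: for $\costc\leq\costc'$ we have $\cset(\costc)\subset\cset(\costc')$, and evaluating EHB at $\mQ=\qmn{\rno,\Wm,\costc'}$ on the set $\cset(\costc)$ gives
\begin{align*}
\RD{\rno\wedge 1}{\qmn{\rno,\Wm,\costc}}{\qmn{\rno,\Wm,\costc'}}
\leq\CRC{\rno}{\Wm}{\costc'}-\CRC{\rno}{\Wm}{\costc}.
\end{align*}
For general nearby $\costc_{0},\costc_{1}$, comparing each with $\costc_{0}\vee\costc_{1}$ (componentwise maximum), invoking Lemma \ref{lem:divergence-pinsker} and the triangle inequality for the total variation distance gives continuity of $\qmn{\rno,\Wm,\costc}$ in $\costc$, with an error controlled by the already-established continuity of the capacity. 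Combining this with Lemma \ref{lem:centercontinuity}, which provides continuity in $\rno$ for fixed $\costc$, yields joint continuity in $(\rno,\costc)$ for the total variation topology, noting that on compact subsets of $(0,1)$ we have $\rno\wedge 1=\rno$ bounded away from zero so that Pinsker's constant stays under control.
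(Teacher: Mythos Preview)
Your proposal is correct and follows essentially the same approach as the paper: parts (a) and (b) are identical (monotonicity from nested constraint sets, concavity from Lemma \ref{lem:informationP}, subdifferential existence from standard convex analysis), and for part (c) you use the same EHB-plus-Pinsker-plus-triangle-via-$\costc_{\vee}$ scheme for the center continuity together with Lemma \ref{lem:centercontinuity} for the $\rno$-direction. The only cosmetic difference is that for joint continuity of the capacity you use a sandwich argument via $\costc_{0}\pm\delta\uc$, whereas the paper derives the explicit bound $\abs{\CRC{\rno_{1}}{\Wm}{\costc_{2}}-\CRC{\rno_{2}}{\Wm}{\costc_{2}}}\leq \tfrac{\abs{\rno_{2}-\rno_{1}}}{(\rno_{1}\wedge\rno_{2})(1-\rno_{1}\vee\rno_{2})}\CRC{\rno_{1}}{\Wm}{\costc_{2}}$ from the two monotonicities in Lemma \ref{lem:capacityO} and then controls $\CRC{\rno_{1}}{\Wm}{\costc_{2}}$ uniformly in $\costc_{2}$ via the supergradient at $\costc_{1}$; both routes work.
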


If the cost function for a product channel is additive, 
then the cost constrained Agustin capacity 
of the product channel is equal to the supremum of the sum of the cost constrained Augustin capacities
of the component channels over all feasible cost allocations. Furthermore, if there exists an 
optimal cost allocation, 
then the Augustin center of the product channel is a product measure.
Lemma \ref{lem:CCcapacityproduct}, given in the following, states these observations formally.

\begin{lemma}\label{lem:CCcapacityproduct}
For any length \(\blx\) product channel
\(\Wmn{[1,\blx]}:\inpS_{1}^{\blx}\to\pmea{\outA_{1}^{\blx}}\)
and additive cost function 
\(\costf_{[1,\blx]}:\inpS_{1}^{\blx}\to\reals{\geq0}^{\ell}\)
%satisfying 
%\(\costf_{[1,\blx]}(\dinp_{1}^{\blx})\!=\!\sum_{\tin=1}^{\blx}\!\costf_{\tin}(\dinp_{\tin})\) 
%for some \(\costf_{\tin}:\inpS_{\tin}\to\reals{\geq0}^{\ell}\)
we have\footnote{If \(\CRC{\rno}{\!\Wmn{\tin}}{\costc_{\tin}}=-\infty\) for any 
	\(\tin\in\{1,\ldots,\blx\}\), then 
	\(\sum\nolimits_{\tin=1}^{\blx} \CRC{\rno}{\!\Wmn{\tin}}{\costc_{\tin}}\) 
	stands for \(-\infty\); even if one or more of other
	\(\CRC{\rno}{\!\Wmn{\tin}}{\costc_{\tin}}\)'s are equal to \(\infty\).}
\begin{align}
\label{eq:lem:CCcapacityproduct}
\CRC{\rno}{\!\Wmn{[1,\blx]}}{\costc}
&=\sup\left\{\sum\nolimits_{\tin=1}^{\blx} \CRC{\rno}{\!\Wmn{\tin}}{\costc_{\tin}}:
\sum\nolimits_{\tin=1}^{\blx} \costc_{\tin}\leq \costc,~
\costc_{\tin}\in \reals{\geq0}^{\ell} \right\}
&
&\forall \costc\in \reals{\geq0}^{\ell}, \rno\in\reals{+}. 
\end{align}
If 
\(\CRC{\rno}{\!\Wmn{[1,\blx]}}{\costc}\in\reals{\geq0}\) for an \(\rno\in\reals{+}\)
and \(\exists(\costc_{1},\ldots,\costc_{\blx})\) such that
\(\CRC{\rno}{\!\Wmn{[1,\blx]}}{\costc}=\sum\nolimits_{\tin=1}^{\blx} \CRC{\rno}{\!\Wmn{\tin}}{\costc_{\tin}}\),
then \(\qmn{\rno,\!\Wmn{[1,\blx]},\costc}=\bigotimes\nolimits_{\tin=1}^{\blx}\qmn{\rno,\Wmn{\tin},\costc_{\tin}}\).
\end{lemma}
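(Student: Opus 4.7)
\emph{Plan.} The capacity identity \eqref{eq:lem:CCcapacityproduct} will follow from Lemma \ref{lem:information:product}, which gives subadditivity of Augustin information over product channels with equality for product input distributions. The product form of the Augustin center will then be deduced by combining Lemma \ref{lem:capacityproduct} with the uniqueness of the Augustin center asserted in Theorem \ref{thm:minimax}, applied to two nested constraint sets that happen to have equal Augustin capacity.

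For the \(\geq\) direction of \eqref{eq:lem:CCcapacityproduct}, fix any allocation \((\costc_{1},\ldots,\costc_{\blx})\) with \(\costc_{\tin}\in\reals{\geq0}^{\ell}\) and \(\sum_{\tin}\costc_{\tin}\leq \costc\). For any choice of \(\pmn{\tin}\in\cset(\costc_{\tin})\), additivity of \(\costf_{[1,\blx]}\) yields \(\EXS{\bigotimes_{\tin}\pmn{\tin}}{\costf_{[1,\blx]}}=\sum_{\tin}\EXS{\pmn{\tin}}{\costf_{\tin}}\leq \costc\), so \(\bigotimes_{\tin}\pmn{\tin}\in\cset(\costc)\); Lemma \ref{lem:information:product} provides \(\RMI{\rno}{\bigotimes_{\tin}\pmn{\tin}}{\Wmn{[1,\blx]}}=\sum_{\tin}\RMI{\rno}{\pmn{\tin}}{\Wmn{\tin}}\), and taking suprema first over each \(\pmn{\tin}\in\cset(\costc_{\tin})\) and then over feasible allocations gives \(\geq\). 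For the \(\leq\) direction, every \(\mP\in\cset(\costc)\) has marginals \(\pmn{\tin}\) whose costs \(\tilde{\costc}_{\tin}\DEF\EXS{\pmn{\tin}}{\costf_{\tin}}\) satisfy \(\sum_{\tin}\tilde{\costc}_{\tin}=\EXS{\mP}{\costf_{[1,\blx]}}\leq \costc\); Lemma \ref{lem:information:product} then gives \(\RMI{\rno}{\mP}{\Wmn{[1,\blx]}}\leq\sum_{\tin}\RMI{\rno}{\pmn{\tin}}{\Wmn{\tin}}\leq\sum_{\tin}\CRC{\rno}{\!\Wmn{\tin}}{\tilde{\costc}_{\tin}}\), which is dominated by the supremum on the right of \eqref{eq:lem:CCcapacityproduct}. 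The case \(\costc\in\reals{\geq0}^{\ell}\setminus\fcc{\costf_{[1,\blx]}}\) is immediate: were some allocation feasible, the product of witnessing marginals would place \(\costc\) in \(\fcc{\costf_{[1,\blx]}}\), a contradiction, so both sides equal \(-\infty\).

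For the center claim, assume \(\CRC{\rno}{\!\Wmn{[1,\blx]}}{\costc}\in\reals{\geq0}\) and let \((\costc_{1},\ldots,\costc_{\blx})\) attain the supremum in \eqref{eq:lem:CCcapacityproduct}. Since this finite non-negative sum equals the LHS, each \(\CRC{\rno}{\!\Wmn{\tin}}{\costc_{\tin}}\) is finite and each \(\costc_{\tin}\) feasible; the convex set \(\cset(\costc_{\tin})\) thus has a unique Augustin center \(\qmn{\rno,\Wmn{\tin},\costc_{\tin}}\) by Theorem \ref{thm:minimax}. Applying Lemma \ref{lem:capacityproduct} with \(\cset_{\tin}=\cset(\costc_{\tin})\), the product-of-marginals constraint set \(\cset_{1}^{\blx}\DEF\{\mP\in\pdis{\inpS_{1}^{\blx}}:\pmn{\tin}\in\cset(\costc_{\tin})\,\forall\tin\}\) has Augustin capacity \(\sum_{\tin}\CRC{\rno}{\!\Wmn{\tin}}{\costc_{\tin}}\) and Augustin center \(\bigotimes_{\tin}\qmn{\rno,\Wmn{\tin},\costc_{\tin}}\). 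Now pick sequences \(\{\pma{\tin}{(\ind)}\}_{\ind}\subset\cset(\costc_{\tin})\) with \(\RMI{\rno}{\pma{\tin}{(\ind)}}{\Wmn{\tin}}\to\CRC{\rno}{\!\Wmn{\tin}}{\costc_{\tin}}\); by Lemma \ref{lem:information:product}, \(\pma{}{(\ind)}\DEF\bigotimes_{\tin}\pma{\tin}{(\ind)}\in\cset_{1}^{\blx}\subset\cset(\costc)\) satisfies \(\RMI{\rno}{\pma{}{(\ind)}}{\Wmn{[1,\blx]}}\to\sum_{\tin}\CRC{\rno}{\!\Wmn{\tin}}{\costc_{\tin}}=\CRC{\rno}{\!\Wmn{[1,\blx]}}{\cset_{1}^{\blx}}=\CRC{\rno}{\!\Wmn{[1,\blx]}}{\costc}\).

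\emph{Main obstacle.} The delicate point is that the single sequence \(\{\pma{}{(\ind)}\}\) is capacity-achieving for \emph{both} constraint sets simultaneously precisely because their Augustin capacities coincide. Invoking Theorem \ref{thm:minimax} once for \(\cset_{1}^{\blx}\) and once for \(\cset(\costc)\), the Augustin means \(\{\qmn{\rno,\pma{}{(\ind)}}\}\) must converge in total variation to \emph{both} \(\bigotimes_{\tin}\qmn{\rno,\Wmn{\tin},\costc_{\tin}}\) and \(\qmn{\rno,\!\Wmn{[1,\blx]},\costc}\); uniqueness of the total-variation limit forces these to agree. This device avoids any direct examination of how total variation convergence interacts with product measures, which would be the natural but more laborious alternative route.
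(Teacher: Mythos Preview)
Your proof is correct. For the capacity identity \eqref{eq:lem:CCcapacityproduct}, your argument and the paper's are essentially the same: both directions are driven by Lemma \ref{lem:information:product}, with the paper routing the \(\geq\) direction through Lemma \ref{lem:capacityproduct} and Lemma \ref{lem:capacityunion} rather than appealing to product input distributions directly---a cosmetic difference.

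For the center claim the routes diverge slightly. The paper observes that \(\bigtimes_{\tin}\cset_{\tin}(\costc_{\tin})\subset\cset(\costc)\) with equal finite Augustin capacities and invokes Lemma \ref{lem:capacityunion}, whose first bullet asserts that a sub-constraint-set with the same capacity as the ambient set must share its Augustin center; Lemma \ref{lem:capacityproduct} then identifies that center as the product. Your argument instead constructs a single sequence \(\{\pma{}{(\ind)}\}\) that is simultaneously capacity-achieving for the marginal-constrained set and for \(\cset(\costc)\), and invokes the convergence clause of Theorem \ref{thm:minimax} for each set to force the centers to coincide. This effectively re-derives the relevant implication of Lemma \ref{lem:capacityunion} in situ. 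The paper's route is shorter; yours is more self-contained and makes the mechanism (uniqueness of the total-variation limit of Augustin means along any capacity-achieving sequence) explicit. One small point worth stating in your write-up: your marginal-constrained set is convex (marginals are linear in \(\mP\) and each \(\cset(\costc_{\tin})\) is convex), which is what licenses the application of Theorem \ref{thm:minimax} to it.
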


Since the Augustin capacity is concave in the cost constraint by Lemma \ref{lem:CCcapacity}-(\ref{CCcapacity:function}),
\(\CRC{\rno}{\!\Wmn{[1,\blx]}}{\costc}=\sum\nolimits_{\tin=1}^{\blx} \CRC{\rno}{\Wmn{\tin}}{\frac{\costc}{\blx}}\)
whenever \(\Wmn{[1,\blx]}\) is stationary and \(\costf_{\tin}=\costf_{1}\) for all \(\tin\in\{1,\ldots,\blx\}\).
Alternatively, if \(\fcc{\costf_{\tin}}\)'s are closed and 
\(\CRC{\rno}{\Wmn{\tin}}{\costc}\)'s are upper semicontinuous functions of \(\costc\)
on \(\fcc{\costf_{\tin}}\)'s, then we can use the extreme value 
theorem\footnote{Consider the function \(\fX(\costc_{1},\ldots,\costc_{\blx})\) which is equal to 
\(\sum\nolimits_{\tin=1}^{\blx} \CRC{\rno}{\!\Wmn{\tin}}{\costc_{\tin}}\)
if \(\sum\nolimits_{\tin=1}^{\blx} \costc_{\tin}\leq \costc\) and
\(\costc_{\tin}\in\fcc{\costf_{\tin}}\) for all \(\tin\in\{1,\ldots,\blx\}\)
and which is equal to \(-\infty\) otherwise. 
We choose a large enough but bounded set using the vector \(\costc\)
to obtain a compact set for the supremum.} 
for the upper semicontinuous functions to establish the 
existence of a \((\costc_{1},\ldots,\costc_{\blx})\) satisfying
both
\(\CRC{\rno}{\!\Wmn{[1,\blx]}}{\costc}=\sum\nolimits_{\tin=1}^{\blx} \CRC{\rno}{\!\Wmn{\tin}}{\costc_{\tin}}\)
and \(\sum\nolimits_{\tin=1}^{\blx} \costc_{\tin}\leq\costc\).
However, such an existence assertion does not hold in general,
see Example \ref{eg:non-additive-product}.

\subsection{The Augustin-Legendre Information Measures}\label{sec:cost-AL}
The cost constrained Augustin capacity \(\CRC{\rno}{\!\Wm\!}{\costc}\) and center \(\qmn{\rno,\!\Wm\!,\costc}\) 
can be characterized using convex conjugation, as well. 
In this part of the paper, we introduce and analyze the concepts of the Augustin-Legendre information, 
capacity, center, and radius in order to obtain a more complete understanding of this characterization. 
The current method seems to us to be the standard application of the convex conjugation technique 
to characterize the cost constrained Augustin capacity. 
Yet, it is not the customary method.
Starting with the seminal work of Gallager \cite{gallager65}, 
a more ad hoc method based on the \renyi information became the customary way to 
apply Lagrange multipliers techniques to characterize the Augustin capacity,
see \cite[\S 35]{augustin78}, \cite{oohama17A,oohama17B}.
We discuss that approach in \S\ref{sec:cost-RG}. 
Theorem \ref{thm:Lminimax} presented in the following
and 
Theorem \ref{thm:Gminimax} presented in \S\ref{sec:cost-RG}
establish the equivalence of these two approaches by establishing 
the equality of 
the Augustin-Legendre capacity and center to the \renyi\!\!-Gallager 
capacity and center.

\begin{definition}\label{def:Linformation}
For any \(\rno\in\reals{+}\), channel \(\Wm\) of the form \(\Wm:\inpS\to \pmea{\outA}\) with 
a cost function \(\costf:\inpS\to \reals{\geq0}^{\ell}\), \(\mP\in \pdis{\inpS}\),  
and \(\lgm \in \reals{\geq0}^{\ell}\), 
\emph{the order \(\rno\) Augustin-Legendre information for
	the input distribution \(\mP\) 
	and the Lagrange multiplier \(\lgm\)} is
\begin{align}
\label{eq:def:Linformation}
\RMIL{\rno}{\mP}{\Wm}{\lgm}
&\DEF \RMI{\rno}{\mP}{\Wm}-\lgm\cdot \EXS{\mP}{\costf}.
\end{align}
\end{definition}
Note that as an immediate consequence of the definition of the A-L information we 
have
\begin{align}
\label{eq:information-constrained}
\inf\nolimits_{\lgm\geq 0} \RMIL{\rno}{\mP}{\Wm}{\lgm}+\lgm\cdot\costc
&=\xi_{\rno,\mP}(\costc)
\end{align}
where \(\xi_{\rno,\mP}(\cdot):\reals{\geq0}^{\ell}\to[-\infty,\infty)\) is defined as
\begin{align}
\label{eq:def:information-constrained}
\xi_{\rno,\mP}(\costc)
&\DEF \begin{cases}
\RMI{\rno}{\mP}{\Wm}		&\costc\geq \EXS{\mP}{\costf}
\\
-\infty						&\mbox{else}
\end{cases}.
\end{align}
Then the Augustin-Legendre information \(\RMIL{\rno}{\mP}{\Wm}{\lgm}\) can also be expressed 
as
\begin{align}
\label{eq:Linformation-conjugate}
\RMIL{\rno}{\mP}{\Wm}{\lgm}
&=\sup\nolimits_{\costc\geq0} \xi_{\rno,\mP}(\costc)-\lgm\cdot\costc
\end{align}
\begin{remark}
Note that if \(\fX:\reals{\geq0}^{\ell}\to (-\infty,\infty]\)
		and \(\fX^{*}:(-\infty,0]^{\ell}\to \reals{}\) are defined as 
		\(\fX(\costc)\DEF-\xi_{\rno,\mP}(\costc)\)
		and 
		\(\fX^{*}(\gamma)\DEF\RMIL{\rno}{\mP}{\Wm}{-\gamma}\),
		then \(\fX^{*}\) is the convex conjugate, i.e. Legendre transform, of the convex function
		\(\fX\). This is why we call \(\RMIL{\rno}{\mP}{\Wm}{\lgm}\) the 
		Augustin-Legendre  information.
	\end{remark}
\begin{definition}\label{def:Lcapacity}
For any \(\rno\in\reals{+}\), channel \(\Wm\) of the form \(\Wm:\inpS\to \pmea{\outA}\) with 
a cost function \(\costf:\inpS\to \reals{\geq0}^{\ell}\), and \(\lgm \in \reals{\geq0}^{\ell}\)
\emph{the order \(\rno\) Augustin-Legendre (A-L) capacity for the Lagrange multiplier \(\lgm\)} is
\begin{align}
\label{eq:def:Lcapacity}
\RCL{\rno}{\Wm}{\lgm}
&\DEF \sup\nolimits_{\mP\in \pdis{\inpS}} \RMIL{\rno}{\mP}{\Wm}{\lgm}.
\end{align}
\end{definition}
Then as a result of \eqref{eq:def:information-constrained} and \eqref{eq:Linformation-conjugate}
we have
\begin{align}
\label{eq:Lcapacity-astheconjugate}
\RCL{\rno}{\Wm}{\lgm}
&=\sup\nolimits_{\costc\geq0} \CRC{\rno}{\Wm}{\costc}-\lgm\cdot\costc
&
&\forall \lgm\in\reals{\geq0}^{\ell}.
\end{align}
Hence, using the max-min inequality we can conclude that
\begin{align}
\label{eq:CCcapacity:dualitybound}
\CRC{\rno}{\!\Wm\!}{\costc} 
&\leq \inf\nolimits_{\lgm\geq 0} \RCL{\rno}{\Wm}{\lgm}+\lgm\cdot \costc
&
&\forall \costc\in\reals{\geq0}^{\ell}.
\end{align}
Then \(\CRC{\rno}{\!\Wm\!}{\costc}<\infty\) for all \(\costc\in\reals{\geq0}^{\ell}\)
provided that \(\RCL{\rno}{\Wm}{\lgm}<\infty\) for a \(\lgm\in\reals{\geq0}\).
But \(\RCL{\rno}{\Wm}{\lgm}=\infty\) might hold for \(\lgm\) small 
enough even when \(\CRC{\rno}{\!\Wm\!}{\costc}<\infty\) 
for all \(\costc\in\reals{\geq0}^{\ell}\),
see
Example \ref{eg:affine}.
\begin{remark}
In \cite[\S33-\S35]{augustin78}, Augustin  considered the case when the cost function
\(\costf\) is a bounded function of the form \(\costf:\inpS\to[0,1]^{\ell}\). 
In that case 	\(\RCL{\rno}{\!\Wm\!}{\lgm}<\infty\) for all \(\lgm\in\reals{\geq0}^{\ell}\) provided that  
\(\CRC{\rno}{\!\Wm\!}{\costc}<\infty\) for a \(\costc\in\inte{\fcc{\costf}}\)
because \(\CRC{\rno}{\Wm}{\uc}<\infty\) by Lemma \ref{lem:CCcapacity}-(\ref{CCcapacity:interior}) and \(\CRC{\rno}{\Wm}{\uc}=\RC{\rno}{\!\Wm\!}\)
and \(\RCL{\rno}{\Wm}{\lgm}\leq \RCL{\rno}{\Wm}{0}=\RC{\rno}{\!\Wm\!}\) for all \(\lgm\in\reals{\geq0}^{\ell}\) by definition.	
	\end{remark}

The inequality given in \eqref{eq:CCcapacity:dualitybound} is an equality 
for many cases of interest as demonstrated by the following lemma.
However, the inequality given in \eqref{eq:CCcapacity:dualitybound}
is not an equality in general, see Example \ref{eg:non-uppersemicontinuous}. 

\begin{lemma}\label{lem:Lcapacity}
Let \(\rno\in\reals{+}\) and \(\Wm\) be a channel of the form \(\Wm:\inpS\to \pmea{\outA}\) with 
a cost function \(\costf:\inpS\to \reals{\geq0}^{\ell}\). Then
\begin{enumerate}[(a)]
\item\label{Lcapacity:function}
\(\RCL{\rno}{\Wm}{\lgm}\) is convex, nonincreasing, and lower semicontinuous in \(\lgm\) 
on \(\reals{\geq0}^{\ell}\) and continuous in \(\lgm\) on 
\(\{\lgm\!:\exists\epsilon>0~s.t.~ \RCL{\rno}{\Wm}{\lgm-\epsilon\uc}\!<\!\infty\!\}\).

\item\label{Lcapacity:minimax} 
If \(\inpS\) is a finite set, then
\(\CRC{\rno}{\!\Wm\!}{\costc}=\inf\nolimits_{\lgm\geq0} \RCL{\rno}{\Wm}{\lgm}+\lgm\cdot\costc\).
\item\label{Lcapacity:interior} 
If \(\costc\in\inte{\fcc{\costf}}\), then
\(\CRC{\rno}{\!\Wm\!}{\costc}=\inf\nolimits_{\lgm\geq0} \RCL{\rno}{\Wm}{\lgm}+\lgm\cdot\costc\).
If in addition \(\CRC{\rno}{\!\Wm\!}{\costc}<\infty\), 
then there exists a non-empty convex, compact  set of 
\(\lgm_{\rno,\!\Wm\!,\costc}\)'s satisfying
both  \eqref{eq:CCcapacity:interior}
and
\(\CRC{\rno}{\!\Wm\!}{\costc}=\RCL{\rno}{\Wm}{\lgm_{\rno,\!\Wm\!,\costc}}+\lgm_{\rno,\!\Wm\!,\costc}\cdot \costc\).
\item\label{Lcapacity:optimal}
If \(\CRC{\rno}{\!\Wm\!}{\costc}\) is finite and \(\CRC{\rno}{\!\Wm\!}{\costc}=\RCL{\rno}{\Wm}{\lgm}+\lgm\cdot\costc\) 
for some \(\costc\in \fcc{\costf}\) and \(\lgm\in\reals{\geq0}^{\ell}\), then
\(\lim\nolimits_{\ind \to \infty} \RMIL{\rno}{\pma{}{(\ind)}}{\Wm}{\lgm}=\RCL{\rno}{\Wm}{\lgm}\)
 for all \(\{\pma{}{(\ind)}\}_{\ind\in\integers{+}}\in \cset(\costc)\) s.t.
\(\lim\nolimits_{\ind \to \infty} \RMI{\rno}{\pma{}{(\ind)}}{\Wm}=\CRC{\rno}{\!\Wm\!}{\costc}\).
\end{enumerate}
\end{lemma}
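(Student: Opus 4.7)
The plan is to handle the four parts in the order (a), (c), (b), (d), since (c) contains the core duality argument that (b) then extends globally. For part (a), the identity \eqref{eq:Lcapacity-astheconjugate} expresses \(\RCL{\rno}{\Wm}{\cdot}\) as a pointwise supremum of continuous affine functions in \(\lgm\), each nonincreasing because \(\costc\geq 0\), so convexity, monotonicity, and lower semicontinuity are immediate. Since \(\RCL{\rno}{\Wm}{\cdot}\) is nonincreasing, its effective domain is upward-closed in the componentwise order, so the hypothesis \(\RCL{\rno}{\Wm}{\lgm-\epsilon\uc}<\infty\) places \(\lgm\) in the interior of that domain; continuity at \(\lgm\) then follows from the standard fact that a proper convex function is continuous on the interior of its effective domain.

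For part (c) I would invoke the supporting hyperplane supplied by Lemma \ref{lem:CCcapacity}-(\ref{CCcapacity:interior}) at an interior \(\costc\) with finite capacity. Rewriting \eqref{eq:CCcapacity:interior} as
\begin{align*}
\CRC{\rno}{\!\Wm\!}{\tilde\costc}-\lgm_{\rno,\!\Wm\!,\costc}\cdot\tilde\costc
&\leq \CRC{\rno}{\!\Wm\!}{\costc}-\lgm_{\rno,\!\Wm\!,\costc}\cdot\costc,
\end{align*}
taking the supremum of the left-hand side over \(\tilde\costc\geq 0\), and using \eqref{eq:Lcapacity-astheconjugate} yields \(\RCL{\rno}{\Wm}{\lgm_{\rno,\!\Wm\!,\costc}}+\lgm_{\rno,\!\Wm\!,\costc}\cdot\costc\leq \CRC{\rno}{\!\Wm\!}{\costc}\); the a priori bound \eqref{eq:CCcapacity:dualitybound} in the reverse direction then forces equality and exhibits \(\lgm_{\rno,\!\Wm\!,\costc}\) as a minimizer. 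If instead \(\CRC{\rno}{\!\Wm\!}{\costc}=\infty\) at one interior \(\costc\), Lemma \ref{lem:CCcapacity}-(\ref{CCcapacity:function}) propagates this to all of \(\inte{\fcc{\costf}}\), and then \eqref{eq:Lcapacity-astheconjugate} forces \(\RCL{\rno}{\Wm}{\lgm}=\infty\) for every \(\lgm\geq 0\), so both sides are infinite. Convexity and closedness of the optimal-multiplier set follow from convexity of \(\lgm\mapsto \RCL{\rno}{\Wm}{\lgm}+\lgm\cdot\costc\) and the lower semicontinuity established in part (a); for boundedness I would pick \(\epsilon>0\) small enough that \(\costc-\epsilon\uc\in\inte{\fcc{\costf}}\) and subtract the dual bound \(\RCL{\rno}{\Wm}{\lgm}+\lgm\cdot(\costc-\epsilon\uc)\geq \CRC{\rno}{\!\Wm\!}{\costc-\epsilon\uc}\) from the equality at \(\costc\), obtaining \(\epsilon\,\lgm\cdot\uc\leq \CRC{\rno}{\!\Wm\!}{\costc}-\CRC{\rno}{\!\Wm\!}{\costc-\epsilon\uc}\), which is finite by Lemma \ref{lem:CCcapacity}-(\ref{CCcapacity:function}).

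For part (b) I would extend the duality beyond \(\inte{\fcc{\costf}}\) via Fenchel--Moreau. When \(\inpS\) is finite, \(\pdis{\inpS}\) is a compact simplex, \(\RMI{\rno}{\cdot}{\Wm}\) is continuous and bounded by \(\ln\abs{\inpS}\) on it (Lemma \ref{lem:informationP} together with Lemma \ref{lem:information}-(\ref{information:bounded})), and the constraint multifunction \(\costc\mapsto \cset(\costc)\) is upper hemicontinuous with closed values; hence the supremum defining \(\CRC{\rno}{\!\Wm\!}{\costc}\) is attained at every \(\costc\in\fcc{\costf}\) and \(\CRC{\rno}{\!\Wm\!}{\cdot}\) is upper semicontinuous on all of \(\reals{\geq0}^{\ell}\) (it equals \(-\infty\) off the closed set \(\fcc{\costf}\)). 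Combined with concavity, this makes \(-\CRC{\rno}{\!\Wm\!}{\cdot}\) a proper convex lower-semicontinuous function, and \(\RCL{\rno}{\Wm}{\cdot}\) is precisely its convex conjugate via \eqref{eq:Lcapacity-astheconjugate}; the biconjugate identity then recovers the claim on \(\fcc{\costf}\). For \(\costc\notin\fcc{\costf}\), a separation argument exploiting that \(\fcc{\costf}\) is upward-closed yields \(\lgm\geq 0\) with \(\lgm\cdot\costc<\inf_{\tilde\costc\in\fcc{\costf}}\lgm\cdot\tilde\costc\); the uniform bound \(\RCL{\rno}{\Wm}{t\lgm}\leq \RC{\rno}{\Wm}\leq \ln\abs{\inpS}\) then drives \(\RCL{\rno}{\Wm}{t\lgm}+t\lgm\cdot\costc\to -\infty\) as \(t\uparrow\infty\), matching \(\CRC{\rno}{\!\Wm\!}{\costc}=-\infty\).

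For part (d), combining the definition of \(\RMIL\) with \(\pma{}{(\ind)}\in\cset(\costc)\) gives \(\RMIL{\rno}{\pma{}{(\ind)}}{\Wm}{\lgm}\geq \RMI{\rno}{\pma{}{(\ind)}}{\Wm}-\lgm\cdot\costc\); taking \(\liminf\) and using the hypotheses \(\RMI{\rno}{\pma{}{(\ind)}}{\Wm}\to \CRC{\rno}{\!\Wm\!}{\costc}\) and \(\CRC{\rno}{\!\Wm\!}{\costc}=\RCL{\rno}{\Wm}{\lgm}+\lgm\cdot\costc\) yields \(\liminf \RMIL{\rno}{\pma{}{(\ind)}}{\Wm}{\lgm}\geq \RCL{\rno}{\Wm}{\lgm}\), while the matching upper bound is built into the definition of \(\RCL\). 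The principal technical obstacle is the boundary analysis for part (b): extending the interior duality from (c) to all of \(\reals{\geq0}^{\ell}\), which is where finiteness of \(\inpS\) enters in an essential way to supply the compactness and uniform boundedness needed for upper semicontinuity of \(\CRC{\rno}{\!\Wm\!}{\cdot}\) and for the separation argument outside \(\fcc{\costf}\).
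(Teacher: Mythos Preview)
Parts (a), (c), (d) of your proposal match the paper's proof essentially verbatim: the paper also derives (a) from \eqref{eq:Lcapacity-astheconjugate} plus continuity of convex functions on the interior of their effective domain, derives (c) from Lemma~\ref{lem:CCcapacity}-(\ref{CCcapacity:interior}) and \eqref{eq:CCcapacity:dualitybound} exactly as you do, and proves (d) by the same sandwich.

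Part (b) is where you diverge. The paper does not use Fenchel--Moreau at all; instead it writes
\(\CRC{\rno}{\!\Wm\!}{\costc}=\sup_{\mP\in\pdis{\inpS}}\inf_{\lgm\geq 0}\bigl[\RMIL{\rno}{\mP}{\Wm}{\lgm}+\lgm\cdot\costc\bigr]\)
via \eqref{eq:information-constrained}, verifies that the bracketed function is concave--continuous in \(\mP\) on the compact simplex \(\pdis{\inpS}\) and convex--continuous in \(\lgm\), and invokes Sion's minimax theorem to swap the order. This handles feasible boundary points and infeasible \(\costc\) in one stroke, with no separate case analysis. Your biconjugate route also works and is arguably more conceptual, but it requires you to check upper semicontinuity of \(\CRC{\rno}{\!\Wm\!}{\cdot}\) on \(\reals{\geq0}^{\ell}\), which you do correctly via compactness of \(\pdis{\inpS}\).

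There is, however, a slip in your treatment of \(\costc\notin\fcc{\costf}\): the bound \(\RCL{\rno}{\Wm}{t\lgm}\leq\ln\abs{\inpS}\) by itself cannot drive \(\RCL{\rno}{\Wm}{t\lgm}+t\lgm\cdot\costc\) to \(-\infty\), since \(\lgm\geq 0\) and \(\costc\geq 0\) force \(t\lgm\cdot\costc\geq 0\). What you need is the sharper estimate \(\RCL{\rno}{\Wm}{t\lgm}\leq\ln\abs{\inpS}-t\inf_{\tilde\costc\in\fcc{\costf}}\lgm\cdot\tilde\costc\), which follows immediately from \eqref{eq:Lcapacity-astheconjugate} and then gives \(\RCL{\rno}{\Wm}{t\lgm}+t\lgm\cdot\costc\leq\ln\abs{\inpS}-t\delta\) with \(\delta>0\). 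Alternatively, this separate argument is unnecessary: once you have established that \(-\CRC{\rno}{\!\Wm\!}{\cdot}\) is proper, convex, and lower semicontinuous on all of \(\reals{\geq0}^{\ell}\) (including the value \(+\infty\) off the closed set \(\fcc{\costf}\)), Fenchel--Moreau already recovers the identity at infeasible points as well.
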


Using the definitions of \(\RMI{\rno}{\mP}{\Wm}\), \(\RMIL{\rno}{\mP}{\Wm}{\lgm}\), 
and \(\RCL{\rno}{\Wm}{\lgm}\) given in  
\eqref{eq:def:information}, \eqref{eq:def:Linformation}, \eqref{eq:def:Lcapacity}
we get the following expression for \(\RCL{\rno}{\Wm}{\lgm}\).
\begin{align}
\label{eq:Lcapacity}
\RCL{\rno}{\Wm}{\lgm}
&=\sup\nolimits_{\mP \in \pdis{\inpS}}\inf\nolimits_{\mQ\in\pmea{\outA}} \CRD{\rno}{\Wm}{\mQ}{\mP}-\lgm\cdot \EXS{\mP}{\costf}.
\end{align}

The A-L capacity satisfies a minimax theorem similar to the one satisfied by the Augustin capacity,
which allows us to assert the existence of a unique A-L center whenever the A-L capacity 
is finite. 
\begin{theorem}\label{thm:Lminimax}
For any \(\rno\in \reals{+}\), channel \(\Wm:\inpS\to \pmea{\outA}\) 
with a cost function \(\costf:\inpS\to \reals{\geq0}^{\ell}\), and Lagrange multiplier 
\(\lgm \in \reals{\geq0}^{\ell}\)
\begin{align}
\label{eq:thm:Lminimax}
\sup\nolimits_{\mP\in \pdis{\inpS}}\inf\nolimits_{\mQ \in \pmea{\outA}} 
\CRD{\rno}{\Wm}{\mQ}{\mP }-\lgm\cdot\EXS{\mP}{\costf}
&=
\inf\nolimits_{\mQ \in \pmea{\outA}} \sup\nolimits_{\mP\in \pdis{\inpS}}
\CRD{\rno}{\Wm}{\mQ}{\mP }-\lgm\cdot\EXS{\mP}{\costf}
\\
\label{eq:thm:Lminimaxradius}
&=\inf\nolimits_{\mQ\in\pmea{\outA}}\sup\nolimits_{\dinp \in \inpS} 
\RD{\rno}{\Wm(\dinp)}{\mQ}-\lgm\cdot \costf(\dinp).
\end{align}
If the expression on the left hand side of \eqref{eq:thm:Lminimax} is finite,
i.e. if \(\RCL{\rno}{\Wm}{\lgm}<\infty\), then
\(\exists!\qma{\rno,\Wm}{\lgm}\!\in\!\pmea{\outA}\),
called the order \(\rno\) Augustin-Legendre center of \(\Wm\) for the Lagrange multiplier \(\lgm\),
satisfying
\begin{align}
\label{eq:thm:Lminimaxcenter}
\RCL{\rno}{\Wm}{\lgm}
&=\sup\nolimits_{\mP \in \pdis{\inpS}} 
\CRD{\rno}{\Wm}{\qma{\rno,\Wm}{\lgm}}{\mP}-\lgm\cdot\EXS{\mP}{\costf}
\\
\label{eq:thm:Lminimaxradiuscenter}
&=\sup\nolimits_{\dinp \in \inpS} \RD{\rno}{\Wm(\dinp)}{\qma{\rno,\Wm}{\lgm}}-\lgm\cdot\costf(\dinp).
\end{align}
Furthermore, for every sequence of input distributions \(\{\pma{}{(\ind)}\}_{\ind\in\integers{+}} \subset \pdis{\inpS}\) such 
that \(\lim_{\ind \to \infty} \RMIL{\rno}{\pma{}{(\ind)}}{\Wm}{\lgm}=\RCL{\rno}{\Wm}{\lgm}\), 
the corresponding sequence of order \(\rno\) Augustin means \(\{\qmn{\rno,\pma{}{(\ind)}}\}_{\ind\in\integers{+}}\)is a Cauchy 
sequence  for the total variation metric on \(\pmea{\outA}\) and \(\qma{\rno,\Wm}{\lgm}\) is the unique 
limit point of that Cauchy sequence.
\end{theorem}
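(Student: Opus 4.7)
The plan is to equate the right-hand sides of \eqref{eq:thm:Lminimax} and \eqref{eq:thm:Lminimaxradius}, establish the Cauchy property for the Augustin means of any maximizing sequence, identify their total variation limit as an A-L center via a first-order perturbation argument combined with lower semicontinuity, and deduce uniqueness from a van Erven-Harremo\"{e}s type bound. To see the equality of the two right-hand sides, I would note that for each fixed \(\mQ \in \pmea{\outA}\), the map \(\mP \mapsto \CRD{\rno}{\Wm}{\mQ}{\mP} - \lgm\cdot\EXS{\mP}{\costf} = \sum_{\dinp}\mP(\dinp)[\RD{\rno}{\Wm(\dinp)}{\mQ} - \lgm\cdot\costf(\dinp)]\) is linear on \(\pdis{\inpS}\), so its supremum equals the pointwise supremum of the bracket over \(\inpS\). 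The left-hand side of \eqref{eq:thm:Lminimax} equals \(\RCL{\rno}{\Wm}{\lgm}\) by Definition \ref{def:Lcapacity}, and the max-min inequality yields weak duality. The case \(\RCL{\rno}{\Wm}{\lgm}=\infty\) is trivial, so assume the capacity is finite.

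For the Cauchy property, take any \(\{\pma{}{(\ind)}\}\) with \(\RMIL{\rno}{\pma{}{(\ind)}}{\Wm}{\lgm} \to \RCL{\rno}{\Wm}{\lgm}\) and apply the concavity bound \eqref{eq:lem:informationP-A} to the midpoint \(\mS_{\ind,\jnd}=\tfrac{1}{2}(\pma{}{(\ind)}+\pma{}{(\jnd)})\); subtracting the cost term (which is linear in \(\mP\)) and using \(\RMIL{\rno}{\mS_{\ind,\jnd}}{\Wm}{\lgm} \leq \RCL{\rno}{\Wm}{\lgm}\) gives
\[
\tfrac{1}{2}\RD{\rno\wedge 1}{\qmn{\rno,\pma{}{(\ind)}}}{\qmn{\rno,\mS_{\ind,\jnd}}}+\tfrac{1}{2}\RD{\rno\wedge 1}{\qmn{\rno,\pma{}{(\jnd)}}}{\qmn{\rno,\mS_{\ind,\jnd}}}\leq\RCL{\rno}{\Wm}{\lgm}-\tfrac{1}{2}\RMIL{\rno}{\pma{}{(\ind)}}{\Wm}{\lgm}-\tfrac{1}{2}\RMIL{\rno}{\pma{}{(\jnd)}}{\Wm}{\lgm},
\]
whose right-hand side vanishes as \(\ind,\jnd\to\infty\). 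Pinsker's inequality (Lemma \ref{lem:divergence-pinsker}) converts this into total variation Cauchyness, so \(\qmn{\rno,\pma{}{(\ind)}}\) converges in \(\pmea{\outA}\) (complete for the total variation norm) to a limit I denote \(\qma{\rno,\Wm}{\lgm}\).

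The main obstacle is verifying the extremality condition \eqref{eq:thm:Lminimaxradiuscenter} at \(\qma{\rno,\Wm}{\lgm}\). Fix \(\dinp\in\inpS\) and \(\epsilon\in(0,1)\), and set \(\mP_\epsilon^{(\ind)}=(1-\epsilon)\pma{}{(\ind)}+\epsilon\delta_{\dinp}\). Writing \(\RMI{\rno}{\mP_\epsilon^{(\ind)}}{\Wm}=\CRD{\rno}{\Wm}{\qmn{\rno,\mP_\epsilon^{(\ind)}}}{\mP_\epsilon^{(\ind)}}\) from Lemma \ref{lem:information}, splitting the sum between \(\pma{}{(\ind)}\) and \(\delta_{\dinp}\), and bounding the first piece below by \(\RMI{\rno}{\pma{}{(\ind)}}{\Wm}\) yields, after rearrangement with \(\delta_\ind \DEF \RCL{\rno}{\Wm}{\lgm}-\RMIL{\rno}{\pma{}{(\ind)}}{\Wm}{\lgm}\),
\[
\RD{\rno}{\Wm(\dinp)}{\qmn{\rno,\mP_\epsilon^{(\ind)}}}-\lgm\cdot\costf(\dinp)\leq\RMIL{\rno}{\pma{}{(\ind)}}{\Wm}{\lgm}+\tfrac{\delta_\ind}{\epsilon}.
\]
Choosing \(\epsilon_\ind=\sqrt{\delta_\ind}\wedge \tfrac{1}{2}\) sends \(\delta_\ind/\epsilon_\ind \to 0\); combined with \(\RMI{\rno}{\delta_{\dinp}}{\Wm}=0\) and the concavity of \(\RMI{\rno}{\cdot}{\Wm}\), this verifies that \(\{\mP_{\epsilon_\ind}^{(\ind)}\}\) is itself a maximizing sequence, so interleaving it with \(\{\pma{}{(\ind)}\}\) and reapplying the Cauchy argument forces \(\qmn{\rno,\mP_{\epsilon_\ind}^{(\ind)}}\to\qma{\rno,\Wm}{\lgm}\) in total variation. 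Lower semicontinuity of \(\RD{\rno}{\Wm(\dinp)}{\cdot}\) (Lemma \ref{lem:divergence:lsc}, applicable since total variation convergence implies setwise convergence) then yields \(\RD{\rno}{\Wm(\dinp)}{\qma{\rno,\Wm}{\lgm}}-\lgm\cdot\costf(\dinp)\leq\RCL{\rno}{\Wm}{\lgm}\); taking the supremum over \(\dinp\) and combining with weak duality closes the chain \eqref{eq:thm:Lminimax}, \eqref{eq:thm:Lminimaxradius}, \eqref{eq:thm:Lminimaxcenter}, \eqref{eq:thm:Lminimaxradiuscenter}.

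Finally, for uniqueness and the convergence claim, any \(\mQ^{\star}\) satisfying \eqref{eq:thm:Lminimaxradiuscenter} must obey \(\CRD{\rno}{\Wm}{\mQ^{\star}}{\mP}\leq\RCL{\rno}{\Wm}{\lgm}+\lgm\cdot\EXS{\mP}{\costf}\) for every \(\mP\in\pdis{\inpS}\) by summing against \(\mP\). Invoking Lemma \ref{lem:information}-(\ref{information:one},\ref{information:zto},\ref{information:oti}) gives \(\RD{\rno\wedge 1}{\qmn{\rno,\mP}}{\mQ^{\star}}\leq\RCL{\rno}{\Wm}{\lgm}-\RMIL{\rno}{\mP}{\Wm}{\lgm}\), and Pinsker's inequality forces \(\qmn{\rno,\pma{}{(\ind)}}\to\mQ^{\star}\) in total variation for every maximizing sequence, identifying \(\mQ^{\star}=\qma{\rno,\Wm}{\lgm}\) as the unique A-L center.
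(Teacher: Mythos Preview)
Your argument is correct and takes a genuinely different route from the paper's proof. The paper follows Kemperman's program: it first establishes the finite-input-set version in Lemma~\ref{lem:LcapacityFLB} (via compactness of \(\pdis{\inpS}\), the extreme value theorem, and the same perturbation-plus-lower-semicontinuity idea you use), then for general \(\inpS\) it exhausts a maximizing sequence \(\{\pma{}{(\ind)}\}\) by the nested finite restrictions \(\inpS^{(\ind)}\DEF\cup_{\jnd\leq\ind}\supp{\pma{}{(\jnd)}}\), bounds \(\lon{\qmn{\rno,\pma{}{(\jnd)}}-\qmn{\rno,\pma{}{(\ind)}}}\) through the triangle inequality via the intermediate center \(\qma{\rno,\Wm^{(\ind)}}{\lgm}\), and finally passes the inequality \eqref{eq:lem:LcapacityFLB} to the limit using lower semicontinuity. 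Your proof bypasses Lemma~\ref{lem:LcapacityFLB} entirely: the Cauchy property comes directly from the quantitative concavity bound \eqref{eq:lem:informationP-A} applied to midpoints, and the extremality at \(\qma{\rno,\Wm}{\lgm}\) comes from perturbing the maximizing sequence itself by \(\epsilon_\ind\delta_\dinp\) with \(\epsilon_\ind\to 0\) slowly enough that \(\delta_\ind/\epsilon_\ind\to 0\), then using that the perturbed sequence is again maximizing. What your approach buys is a self-contained argument that never needs compactness of \(\pdis{\inpS}\) or an auxiliary finite-input lemma; what the paper's approach buys is a cleaner modular structure that reuses the same template across Theorems~\ref{thm:minimax}, \ref{thm:Lminimax}, and~\ref{thm:Gminimax}. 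One minor wrinkle: your choice \(\epsilon_\ind=\sqrt{\delta_\ind}\wedge\tfrac{1}{2}\) is ill-defined when \(\delta_\ind=0\) (i.e.\ when an exact maximizer exists); you should either note that in that case the argument of Lemma~\ref{lem:LcapacityFLB}-(ii) applies directly with \(\epsilon\downarrow 0\) at fixed \(\ind\), or simply replace \(\epsilon_\ind\) by \((\sqrt{\delta_\ind}\vee \ind^{-1})\wedge\tfrac{1}{2}\).
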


Note that Theorem \ref{thm:Lminimax} for \(\lgm=0\) is nothing but 
Theorem \ref{thm:minimax} for \(\cset=\pdis{\inpS}\).
The proof of Theorem \ref{thm:Lminimax} is very similar to that of Theorem \ref{thm:minimax},
as well;
it employs Lemma \ref{lem:LcapacityFLB}, presented in the following, 
instead of Lemma \ref{lem:capacityFLB}. 
Note that, Lemma \ref{lem:LcapacityFLB} for \(\lgm=0\) is nothing but Lemma \ref{lem:capacityFLB}
for \(\cset=\pdis{\inpS}\), as well.
\begin{lemma}\label{lem:LcapacityFLB}
For any \(\rno\in\reals{+}\), channel \(\Wm:\inpS\to\pmea{\outA}\) 
with a cost function \(\costf:\inpS\to \reals{\geq0}^{\ell}\) for a finite input set \(\inpS\), 
and Lagrange multiplier \(\lgm \in \reals{\geq0}^{\ell}\), 
there exists a \(\widetilde{\mP}\in\pdis{\inpS}\) such that 
\(\RMIL{\rno}{\widetilde{\mP}}{\Wm}{\lgm}=\RCL{\rno}{\Wm}{\lgm}\)
and \(\exists!\qma{\rno,\Wm}{\lgm}\in\pmea{\outA}\) satisfying 
	\begin{align}
	\label{eq:lem:LcapacityFLB}
	\CRD{\rno}{\Wm}{\qma{\rno,\Wm}{\lgm}}{\mP}-\lgm\cdot\EXS{\mP}{\costf}
	&\leq  \RCL{\rno}{\Wm}{\lgm}
	&
	&\forall \mP \in \pdis{\inpS}.   
	\end{align}
Furthermore, \(\qmn{\rno,\widetilde{\mP}}=\qma{\rno,\Wm}{\lgm}\) for all 
\(\widetilde{\mP}\in\pdis{\inpS}\) such that \(\RMIL{\rno}{\widetilde{\mP}}{\Wm}{\lgm}=\RCL{\rno}{\Wm}{\lgm}\).
\end{lemma}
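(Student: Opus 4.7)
The plan is to follow the strategy of Lemma \ref{lem:capacityFLB} while carrying the Lagrange penalty $\lgm\cdot\EXS{\mP}{\costf}$ through every step. Since $\inpS$ is finite, $\pdis{\inpS}$ is a compact simplex in $\reals{}^{\abs{\inpS}}$, and the uniform bound $\RMI{\rno}{\mP}{\Wm}\leq\bent{\mP}\leq\ln\abs{\inpS}$ from Lemma \ref{lem:information}-(\ref{information:bounded}) combined with Lemma \ref{lem:informationP} and the continuity estimate \eqref{eq:concavitybound} makes $\RMI{\rno}{\cdot}{\Wm}$ continuous on $\pdis{\inpS}$; linearity of $\mP\mapsto\EXS{\mP}{\costf}$ promotes this to continuity of $\RMIL{\rno}{\cdot}{\Wm}{\lgm}$, so the extreme value theorem produces $\widetilde{\mP}\in\pdis{\inpS}$ with $\RMIL{\rno}{\widetilde{\mP}}{\Wm}{\lgm}=\RCL{\rno}{\Wm}{\lgm}$. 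I would then propose $\qma{\rno,\Wm}{\lgm}\DEF\qmn{\rno,\widetilde{\mP}}$ as the candidate A-L center.

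For the bound \eqref{eq:lem:LcapacityFLB}, fix $\mP\in\pdis{\inpS}$ and set $\pmn{\beta}\DEF\beta\mP+(1-\beta)\widetilde{\mP}$ for $\beta\in(0,1]$. Using $\RMI{\rno}{\pmn{\beta}}{\Wm}=\CRD{\rno}{\Wm}{\qmn{\rno,\pmn{\beta}}}{\pmn{\beta}}$, the linearity of $\CRD{\rno}{\Wm}{\cdot}{\cdot}$ in its third argument, and the defining infimum bound $\CRD{\rno}{\Wm}{\qmn{\rno,\pmn{\beta}}}{\widetilde{\mP}}\geq\RMI{\rno}{\widetilde{\mP}}{\Wm}$, I obtain
\begin{align*}
\RMIL{\rno}{\pmn{\beta}}{\Wm}{\lgm}
\geq \beta\left[\CRD{\rno}{\Wm}{\qmn{\rno,\pmn{\beta}}}{\mP}-\lgm\cdot\EXS{\mP}{\costf}\right]+(1-\beta)\RCL{\rno}{\Wm}{\lgm}.
\end{align*}
Optimality $\RCL{\rno}{\Wm}{\lgm}\geq\RMIL{\rno}{\pmn{\beta}}{\Wm}{\lgm}$, followed by subtracting $(1-\beta)\RCL{\rno}{\Wm}{\lgm}$ and dividing by $\beta>0$, leaves the pre-limit inequality
\begin{align*}
\RCL{\rno}{\Wm}{\lgm}\geq\CRD{\rno}{\Wm}{\qmn{\rno,\pmn{\beta}}}{\mP}-\lgm\cdot\EXS{\mP}{\costf}\qquad\forall\beta\in(0,1].
\end{align*}

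To pass to $\beta\downarrow0$ I need $\qmn{\rno,\pmn{\beta}}\to\qmn{\rno,\widetilde{\mP}}$ in total variation. Applying \eqref{eq:lem:informationP-C} between $\widetilde{\mP}$ and $\mP$ gives $(1-\beta)\RD{\rno\wedge 1}{\qmn{\rno,\widetilde{\mP}}}{\qmn{\rno,\pmn{\beta}}}\leq\bent{\beta}\to 0$, after which Pinsker's inequality (Lemma \ref{lem:divergence-pinsker}) yields total variation, hence setwise, convergence of the Augustin means. Because $\CRD{\rno}{\Wm}{\mQ}{\mP}=\sum_{\dinp}\mP(\dinp)\RD{\rno}{\Wm(\dinp)}{\mQ}$ is a finite sum of maps lower semicontinuous in $\mQ$ for setwise convergence (Lemma \ref{lem:divergence:lsc}), taking $\liminf_{\beta\downarrow0}$ in the pre-limit bound delivers $\RCL{\rno}{\Wm}{\lgm}\geq\CRD{\rno}{\Wm}{\qmn{\rno,\widetilde{\mP}}}{\mP}-\lgm\cdot\EXS{\mP}{\costf}$, i.e.\ \eqref{eq:lem:LcapacityFLB}. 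Uniqueness of the A-L center and the identification $\qmn{\rno,\widetilde{\mP}}=\qma{\rno,\Wm}{\lgm}$ for every optimizer then fall out immediately: if $\mQ\in\pmea{\outA}$ satisfies \eqref{eq:lem:LcapacityFLB}, specializing to $\mP=\widetilde{\mP}$ and using $\RMIL{\rno}{\widetilde{\mP}}{\Wm}{\lgm}=\RCL{\rno}{\Wm}{\lgm}$ forces $\CRD{\rno}{\Wm}{\mQ}{\widetilde{\mP}}\leq\RMI{\rno}{\widetilde{\mP}}{\Wm}$, the opposite infimum inequality gives equality, and the uniqueness assertion of Lemma \ref{lem:information} identifies $\mQ$ with $\qmn{\rno,\widetilde{\mP}}$. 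The only delicate step is the limit passage $\beta\downarrow 0$, which hinges simultaneously on the total variation continuity of $\beta\mapsto\qmn{\rno,\pmn{\beta}}$ (via \eqref{eq:lem:informationP-C} and Pinsker) and on the setwise lower semicontinuity of $\mQ\mapsto\CRD{\rno}{\Wm}{\mQ}{\mP}$; everything else is linearity and infimum bookkeeping.
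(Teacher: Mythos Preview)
Your proof is correct and follows essentially the same strategy as the paper's: existence of $\widetilde{\mP}$ via continuity and compactness, the pre-limit inequality along the segment $\pmn{\beta}=\beta\mP+(1-\beta)\widetilde{\mP}$, convergence $\qmn{\rno,\pmn{\beta}}\to\qmn{\rno,\widetilde{\mP}}$ in total variation, and passage to the limit via lower semicontinuity. One small inaccuracy: the bound you write, $(1-\beta)\RD{\rno\wedge 1}{\qmn{\rno,\widetilde{\mP}}}{\qmn{\rno,\pmn{\beta}}}\leq\bent{\beta}$, does not follow from \eqref{eq:lem:informationP-C} alone---that inequality has $\qmn{\rno,\pmn{\beta}}$ as the \emph{first} argument of the divergence and a mixture as the second. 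The bound you state is obtained by combining \eqref{eq:lem:informationP-A} (which supplies the term $(1-\beta)\RD{\rno\wedge 1}{\qmn{\rno,\widetilde{\mP}}}{\qmn{\rno,\pmn{\beta}}}$ as a lower bound on $\RMI{\rno}{\pmn{\beta}}{\Wm}$) with \eqref{eq:lem:informationP-C} (which supplies the matching upper bound). The paper instead works directly from Lemma~\ref{lem:information}-(\ref{information:one},\ref{information:zto},\ref{information:oti}) to get the same convergence, which amounts to the same thing since \eqref{eq:lem:informationP-A} is derived from that lemma.
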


Note that the expression on the left hand side of equation \eqref{eq:thm:Lminimax} is nothing but
the A-L capacity. Thus Theorem \ref{thm:Lminimax} is establishes the equality of the A-L capacity to the A-L radius
defined in the following. 
\begin{definition}\label{def:Lradius}
For any \(\rno\in\reals{+}\), channel \(\Wm:\inpS\to \pmea{\outA}\) with 
a cost function \(\costf:\inpS\to \reals{\geq0}^{\ell}\), and \(\lgm \in \reals{\geq0}^{\ell}\),
\emph{the order \(\rno\) Augustin-Legendre radius of \(\Wm\) for the Lagrange multiplier \(\lgm\)} is
\begin{align}
\label{eq:def:Lradius}
\RRL{\rno}{\Wm}{\lgm}
&\DEF \inf\nolimits_{\mQ\in\pmea{\outA}} \sup\nolimits_{\dinp\in \inpS} \RD{\rno}{\Wm(\dinp)}{\mQ}-\lgm\cdot\costf(\dinp). 
\end{align}
\end{definition}

If \(\RCL{\rno}{\Wm}{\lgm}\) is finite, then Lemma \ref{lem:information}-(\ref{information:one},\ref{information:zto},\ref{information:oti}),
Theorem \ref{thm:Lminimax}, and the definition of \(\RMIL{\rno}{\mP}{\Wm}{\lgm}\) given in 
\eqref{eq:def:Linformation} imply that 
\begin{align}
\notag %\label{eq:LcapacityLB}
\RCL{\rno}{\Wm}{\lgm}-\RMIL{\rno}{\mP}{\Wm}{\lgm}
&\geq  \RD{\rno\wedge 1}{\qmn{\rno,\mP}}{\qma{\rno,\Wm}{\lgm}}
&
&\forall \mP \in \pdis{\inpS}.
\end{align}
Using Lemma \ref{lem:information} and Theorem \ref{thm:Lminimax} one can also 
establish a bound similar to the one given in Lemma \ref{lem:EHB}.
However, we will not do so here because one can obtain a slightly stronger results, 
using the characterization of the A-L capacity and center 
via R-G capacity and center presented in \S\ref{sec:cost-RG}, 
see Lemma \ref{lem:GEHB} and the ensuing discussion. 

As a result of Lemma \ref{lem:Lcapacity}-(\ref{Lcapacity:interior}), we know that 
if \(\CRC{\rno}{\Wm}{\costc}\) is finite for a \(\costc\in\inte{\fcc{\costf}}\),
then there exists at least one \(\lgm_{\rno,\Wm,\costc}\) for which 
\(\CRC{\rno}{\Wm}{\costc}=\RCL{\rno}{\Wm}{\lgm_{\rno,\Wm,\costc}}+\lgm_{\rno,\Wm,\costc}\cdot\costc\)
holds.
Lemma \ref{lem:Lcenter}, given in the following, asserts that for any such 
Lagrange multiplier the corresponding order \(\rno\) A-L center should 
be equal to the order \(\rno\) Augustin center for the cost constraint \(\costc\).
Thus if there are multiple \(\lgm_{\rno,\Wm,\costc}\)'s 
satisfying 
\(\CRC{\rno}{\Wm}{\costc}=\RCL{\rno}{\Wm}{\lgm_{\rno,\Wm,\costc}}+\lgm_{\rno,\Wm,\costc}\cdot\costc\),
then they all have the same order \(\rno\) A-L center.

\begin{lemma}\label{lem:Lcenter} 
For any \(\rno\in\reals{+}\), channel \(\Wm:\inpS\to \pmea{\outA}\) 
with a cost function \(\costf:\inpS\to \reals{\geq0}^{\ell}\), and 
a cost constraint \(\costc\in\fcc{\costf}\) such that \(\CRC{\rno}{\!\Wm\!}{\costc}<\infty\),
if \(\CRC{\rno}{\!\Wm\!}{\costc}=\RCL{\rno}{\Wm}{\lgm}+\lgm\cdot\costc\) for a \(\lgm\in \reals{\geq0}^{\ell}\),
then \(\qmn{\rno,\!\Wm\!,\costc}=\qma{\rno,\Wm}{\lgm}\).
\end{lemma}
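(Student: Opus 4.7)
The plan is to identify a single sequence of input distributions that simultaneously witnesses the supremum defining $\CRC{\rno}{\!\Wm\!}{\costc}$ and the supremum defining $\RCL{\rno}{\Wm}{\lgm}$, and to then appeal to the uniqueness of limits in the total variation topology.

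First I would note that the hypotheses guarantee that both centers exist and are unique. Since $\costc\in\fcc{\costf}$, the constraint set $\cset(\costc)$ is non-empty and convex, and by assumption $\CRC{\rno}{\!\Wm\!}{\costc}$ is finite, so Theorem \ref{thm:minimax} furnishes the unique Augustin center $\qmn{\rno,\!\Wm\!,\costc}$. From the duality equation $\CRC{\rno}{\!\Wm\!}{\costc}=\RCL{\rno}{\Wm}{\lgm}+\lgm\cdot\costc$ together with $\lgm\cdot\costc\geq 0$, I obtain $\RCL{\rno}{\Wm}{\lgm}<\infty$, and Theorem \ref{thm:Lminimax} furnishes the unique A-L center $\qma{\rno,\Wm}{\lgm}$.

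Next I would choose any $\{\pma{}{(\ind)}\}_{\ind\in\integers{+}}\subset\cset(\costc)$ satisfying $\lim_{\ind\to\infty}\RMI{\rno}{\pma{}{(\ind)}}{\Wm}=\CRC{\rno}{\!\Wm\!}{\costc}$; such a sequence exists by the definition of the supremum in \eqref{eq:def:costcapacity}. On the one hand, Theorem \ref{thm:minimax} applied to the convex set $\cset(\costc)$ tells us that the sequence of order $\rno$ Augustin means $\{\qmn{\rno,\pma{}{(\ind)}}\}_{\ind\in\integers{+}}$ converges in total variation to $\qmn{\rno,\!\Wm\!,\costc}$. On the other hand, the duality hypothesis lets us invoke Lemma \ref{lem:Lcapacity}-(\ref{Lcapacity:optimal}), which yields $\lim_{\ind\to\infty}\RMIL{\rno}{\pma{}{(\ind)}}{\Wm}{\lgm}=\RCL{\rno}{\Wm}{\lgm}$; then Theorem \ref{thm:Lminimax} tells us that the very same sequence of Augustin means converges in total variation to $\qma{\rno,\Wm}{\lgm}$. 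Since the total variation distance is a metric on $\pmea{\outA}$, the limit of a convergent sequence is unique, forcing $\qmn{\rno,\!\Wm\!,\costc}=\qma{\rno,\Wm}{\lgm}$, which is the desired conclusion.

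This argument is essentially bookkeeping: all of the genuine analytic work has been carried out in Theorems \ref{thm:minimax} and \ref{thm:Lminimax} and in Lemma \ref{lem:Lcapacity}-(\ref{Lcapacity:optimal}). The only delicate point is verifying the hypotheses of each invoked result---that $\cset(\costc)$ is non-empty and convex, that $\RCL{\rno}{\Wm}{\lgm}$ is finite, and that the chosen sequence both lies in $\cset(\costc)$ and achieves the Augustin capacity---all of which are immediate from the assumptions of the lemma. Consequently I expect no substantive obstacle; the proof is really just the observation that a capacity-achieving sequence for $\CRC{\rno}{\!\Wm\!}{\costc}$ is automatically a capacity-achieving sequence for $\RCL{\rno}{\Wm}{\lgm}$ whenever the duality equation holds with equality.
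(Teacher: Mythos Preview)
Your proposal is correct and follows essentially the same argument as the paper: take a capacity-achieving sequence in \(\cset(\costc)\), invoke Theorem~\ref{thm:minimax} to get convergence of the Augustin means to \(\qmn{\rno,\Wm,\costc}\), use Lemma~\ref{lem:Lcapacity}-(\ref{Lcapacity:optimal}) to see the same sequence is A-L capacity-achieving, invoke Theorem~\ref{thm:Lminimax} to get convergence to \(\qma{\rno,\Wm}{\lgm}\), and conclude by uniqueness of limits. Your additional remarks verifying the finiteness of \(\RCL{\rno}{\Wm}{\lgm}\) and the hypotheses of the cited results are accurate and make the argument slightly more self-contained than the paper's version.
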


For product constraints on product channels, 
the Augustin capacity has an additive form and the Augustin center has a multiplicative form
---whenever it exists--- by Lemma \ref{lem:capacityproduct}.
The cost constraints for additive cost functions, however, are not product constraints. 
In order to calculate the cost constrained Augustin capacity for product channels with 
additive cost functions, we need to optimize over the feasible allocations of the cost over 
the component channels by Lemma \ref{lem:CCcapacityproduct}.
In addition, we can express the cost constrained Augustin center of the product channel 
as the product of the cost constrained Augustin centers of the components channels
---using Lemma \ref{lem:CCcapacityproduct}--- 
only when there exists a feasible allocation of the cost that achieves the optimum value. 
For the A-L capacity and center, on the other hand, we have a considerably neater picture:
For product channels with additive cost functions the A-L capacity is additive and 
the A-L center is multiplicative, whenever it exists.
\begin{lemma}\label{lem:Lcapacityproduct}
For any length \(\blx\) product channel
\(\Wmn{[1,\blx]}:\inpS_{1}^{\blx}\to\pmea{\outA_{1}^{\blx}}\)
and additive cost function 
\(\costf_{[1,\blx]}:\inpS_{1}^{\blx}\to\reals{\geq0}^{\ell}\)
%satisfying 
%\(\costf_{[1,\blx]}(\dinp_{1}^{\blx})\!=\!\sum_{\tin=1}^{\blx}\!\costf_{\tin}(\dinp_{\tin})\) 
%for some \(\costf_{\tin}:\inpS_{\tin}\to\reals{\geq0}^{\ell}\)
we have	
\begin{align}
\label{eq:lem:Lcapacityproduct}
\RCL{\rno}{\Wmn{[1,\blx]}}{\lgm}
&=\sum\nolimits_{\tin=1}^{\blx} \RCL{\rno}{\Wmn{\tin}}{\lgm}
&
&\forall \lgm \in \reals{\geq0}^{\ell},~\rno\in \reals{+}. 
\end{align}
Furthermore, if \(\RCL{\rno}{\Wmn{[1,\blx]}}{\lgm}<\infty\), 
then \(\qma{\rno,\Wmn{[1,\blx]}}{\lgm}=\bigotimes\nolimits_{\tin=1}^{\blx} \qma{\rno,\Wmn{\tin}}{\lgm}\).
\end{lemma}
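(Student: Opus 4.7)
The plan is to combine two elementary product identities with the A-L radius characterization from Theorem~\ref{thm:Lminimax}. The first ingredient is the additivity of Augustin information for product input distributions provided by Lemma~\ref{lem:information:product}; the second is the additivity of R\'{e}nyi divergence on product measures, \(\RD{\rno}{\wmn{1}\otimes\wmn{2}}{\qmn{1}\otimes\qmn{2}}=\RD{\rno}{\wmn{1}}{\qmn{1}}+\RD{\rno}{\wmn{2}}{\qmn{2}}\), which follows from Definition~\ref{def:divergence} via a product Radon--Nikodym derivative.

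To establish \eqref{eq:lem:Lcapacityproduct}, I first restrict the supremum in \eqref{eq:def:Lcapacity} to product input distributions \(\mP=\bigotimes_{\tin=1}^{\blx}\pmn{\tin}\). For such \(\mP\), Lemma~\ref{lem:information:product} yields \(\RMI{\rno}{\mP}{\Wmn{[1,\blx]}}=\sum_{\tin=1}^{\blx}\RMI{\rno}{\pmn{\tin}}{\Wmn{\tin}}\), and additivity of \(\costf_{[1,\blx]}\) yields \(\EXS{\mP}{\costf_{[1,\blx]}}=\sum_{\tin=1}^{\blx}\EXS{\pmn{\tin}}{\costf_{\tin}}\); hence \(\RMIL{\rno}{\mP}{\Wmn{[1,\blx]}}{\lgm}=\sum_{\tin=1}^{\blx}\RMIL{\rno}{\pmn{\tin}}{\Wmn{\tin}}{\lgm}\), and taking suprema separately in each \(\pmn{\tin}\in\pdis{\inpS_{\tin}}\) gives \(\RCL{\rno}{\Wmn{[1,\blx]}}{\lgm}\geq\sum_{\tin=1}^{\blx}\RCL{\rno}{\Wmn{\tin}}{\lgm}\). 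For the reverse inequality, pick any \(\mP\in\pdis{\inpS_{1}^{\blx}}\) with marginals \(\pmn{1},\ldots,\pmn{\blx}\); the sub-additivity inequality of Lemma~\ref{lem:information:product} combined with additivity of the cost expectation gives \(\RMIL{\rno}{\mP}{\Wmn{[1,\blx]}}{\lgm}\leq\sum_{\tin=1}^{\blx}\RMIL{\rno}{\pmn{\tin}}{\Wmn{\tin}}{\lgm}\leq\sum_{\tin=1}^{\blx}\RCL{\rno}{\Wmn{\tin}}{\lgm}\), so taking the supremum in \(\mP\) on the left completes the proof of \eqref{eq:lem:Lcapacityproduct}.

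Next, assume \(\RCL{\rno}{\Wmn{[1,\blx]}}{\lgm}<\infty\). Evaluating \eqref{eq:def:Lcapacity} at point-mass input distributions shows \(\RCL{\rno}{\Wmn{\tin}}{\lgm}\geq-\inf_{\dinp_{\tin}\in\inpS_{\tin}}\lgm\cdot\costf_{\tin}(\dinp_{\tin})\in(-\infty,0]\); combined with the additivity just proved, this forces each component capacity to be finite, so Theorem~\ref{thm:Lminimax} produces the component A-L centers \(\qma{\rno,\Wmn{\tin}}{\lgm}\). Set \(\mQ^{\star}\DEF\bigotimes_{\tin=1}^{\blx}\qma{\rno,\Wmn{\tin}}{\lgm}\). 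Using additivity of R\'{e}nyi divergence on products and of \(\costf_{[1,\blx]}\), together with the fact that the supremum of a sum of functions in disjoint variables decomposes into a sum of suprema,
\begin{align*}
\sup_{\dinp_{1}^{\blx}\in\inpS_{1}^{\blx}}\left[\RD{\rno}{\Wmn{[1,\blx]}(\dinp_{1}^{\blx})}{\mQ^{\star}}-\lgm\cdot\costf_{[1,\blx]}(\dinp_{1}^{\blx})\right]
&=\sum_{\tin=1}^{\blx}\sup_{\dinp_{\tin}\in\inpS_{\tin}}\left[\RD{\rno}{\Wmn{\tin}(\dinp_{\tin})}{\qma{\rno,\Wmn{\tin}}{\lgm}}-\lgm\cdot\costf_{\tin}(\dinp_{\tin})\right]\\
&=\sum_{\tin=1}^{\blx}\RCL{\rno}{\Wmn{\tin}}{\lgm}=\RCL{\rno}{\Wmn{[1,\blx]}}{\lgm},
\end{align*}
where the second equality invokes \eqref{eq:thm:Lminimaxradiuscenter}. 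Hence \(\mQ^{\star}\) attains the A-L radius of \(\Wmn{[1,\blx]}\), and the uniqueness clause of Theorem~\ref{thm:Lminimax} identifies \(\qma{\rno,\Wmn{[1,\blx]}}{\lgm}=\mQ^{\star}\).

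I do not foresee any substantive obstacle; the whole lemma reduces to bookkeeping around the two product identities and the A-L radius characterization. The only point that requires care is ensuring that \(\mQ^{\star}\) is well defined whenever \(\RCL{\rno}{\Wmn{[1,\blx]}}{\lgm}<\infty\), which is precisely what the uniform point-mass lower bound on each \(\RCL{\rno}{\Wmn{\tin}}{\lgm}\) sketched above delivers.
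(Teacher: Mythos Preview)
Your proposal is correct and follows essentially the same route as the paper: both arguments derive \eqref{eq:lem:Lcapacityproduct} from the sub-additivity/product-equality of Lemma~\ref{lem:information:product} combined with additivity of the cost, and both identify the center by verifying that \(\bigotimes_{\tin}\qma{\rno,\Wmn{\tin}}{\lgm}\) satisfies \eqref{eq:thm:Lminimaxradiuscenter} and invoking the uniqueness clause of Theorem~\ref{thm:Lminimax}. Your treatment is in fact slightly more careful than the paper's, which does not explicitly justify that finiteness of \(\RCL{\rno}{\Wmn{[1,\blx]}}{\lgm}\) forces finiteness of each \(\RCL{\rno}{\Wmn{\tin}}{\lgm}\); your point-mass lower bound handles this cleanly.
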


The additivity of the cost function \(\costf_{[1,\blx]}\) 
implies for any \(\mP\) in \(\pdis{\inpS_{1}^{\blx}}\)
\begin{align}
\notag
\EXS{\mP}{\costf_{[1,\blx]}}
=\sum\nolimits_{\tin=1}^{\blx}\EXS{\pmn{\tin}}{\costf_{\tin}}
\end{align}
where \(\pmn{\tin}\in\pdis{\inpS_{\tin}}\) is  the \(\inpS_{\tin}\) marginal of \(\mP\).
Thus  Lemma \ref{lem:information:product}
and the definition of the A-L information
imply
	\begin{align}
\notag
\RMIL{\rno}{\mP}{\Wmn{[1,\blx]}}{\lgm}
&\leq \RMIL{\rno}{\pmn{1}\otimes\cdots\otimes\pmn{\blx}}{\Wmn{[1,\blx]}}{\lgm}
\\
\label{eq:lem:A-L-information:product}
&=\sum\nolimits_{\tin=1}^{\blx}\RMIL{\rno}{\pmn{\tin}}{\Wmn{\tin}}{\lgm}.
\end{align}
Lemma \ref{lem:Lcapacityproduct} is proved using 
\eqref{eq:lem:A-L-information:product} together with Theorem \ref{thm:Lminimax}.

\subsection{The \renyi\!\!-Gallager Information Measures}\label{sec:cost-RG}
In \S\ref{sec:cost-AL}, we have characterized the cost constrained Augustin capacity and center 
in terms of the A-L capacity and center.
The A-L capacity is defined as the supremum of the A-L information. 
Gallager ---implicitly--- proposed another information
with a Lagrange multiplier in \cite[(103) and (116)]{gallager65}. 
Augustin characterized the cost constrained Augustin capacity 
in terms of the supremum of this information, 
assuming that the cost function is bounded, in
 \cite[Lemmas 35.4-(b) and 35.8-(b)]{augustin78}.
We call this supremum the R-G capacity. 
The main aim of this subsection is establishing the equality of the A-L capacity and center 
to the R-G capacity and center.
We will also derive a van Erven-\harremoes bound for the A-L capacity and center and use it 
to derive the continuity of the A-L center as a function of the Lagrange multiplier \(\lgm\).

\begin{definition}\label{def:Ginformation}
For any \(\rno\in\reals{+}\), channel \(\Wm:\inpS\to \pmea{\outA}\) with 
a cost function \(\costf:\inpS\to \reals{\geq0}^{\ell}\), \(\mP\in \pdis{\inpS}\),  
and \(\lgm \in \reals{\geq0}^{\ell}\) 
\emph{the order \(\rno\) \renyi\!\!-Gallager (R-G) information for
	the input distribution \(\mP\) 
	and the Lagrange multiplier \(\lgm\)} is
\begin{align}
\label{eq:def:Ginformation}
\GMIL{\rno}{\mP}{\Wm}{\lgm}
&\DEF 
\begin{cases}
\inf\nolimits_{\mQ\in \pmea{\outA}} \RD{\rno}{\mP \mtimes  \Wm e^{\frac{1-\rno}{\rno}\lgm\cdot\costf}}{\mP\otimes \mQ}
&\rno\in\reals{+}\setminus\{1\}
\\
\inf\nolimits_{\mQ\in \pmea{\outA}} \RD{1}{\mP \mtimes  \Wm}{\mP\otimes \mQ}
-\lgm\cdot\EXS{\mP}{\costf}
&\rno=1
\end{cases}.
\end{align}
\end{definition}
If \(\lgm\) is a vector of zeros, then the R-G information is the \renyi information.
Similar to the \renyi information, the R-G information has a closed form expression, described
in terms of the probability measure achieving the infimum in its definition.
\begin{definition}\label{def:Gmean}
For any \(\rno\in\reals{+}\), channel \(\Wm:\inpS\to \pmea{\outA}\) with 
a cost function \(\costf:\inpS\to \reals{\geq0}^{\ell}\), \(\mP\in \pdis{\inpS}\),  
and \(\lgm \in \reals{\geq0}^{\ell}\),
\emph{the order \(\rno\) mean measure for the input distribution \(\mP\) and  
the Lagrange multiplier \(\lgm\)} is
\begin{align}
\label{eq:def:Gmeanmeasure}
\der{\mma{\rno,\mP}{\lgm}}{\rfm}
&\DEF\left[\sum\nolimits_{\dinp} \mP(\dinp) e^{(1-\rno)\lgm \cdot \costf(\dinp)}\left(\der{\Wm(\dinp)}{\rfm}\right)^{\rno}\right]^{\frac{1}{\rno}}.
\end{align}
\emph{The order \(\rno\) \renyi\!\!-Gallager (R-G) mean for the input distribution \(\mP\) 
and the  Lagrange multiplier \(\lgm\)} is
\begin{align}
\label{eq:def:Gmean}
\qga{\rno,\mP}{\lgm}
&\DEF \tfrac{\mma{\rno,\mP}{\lgm}}{\lon{\mma{\rno,\mP}{\lgm}}}.
\end{align}
\end{definition}
Both \(\mma{\rno,\mP}{\lgm}\) and \(\qga{\rno,\mP}{\lgm}\) depend on the Lagrange multiplier 
\(\lgm\) for \(\rno\!\in\!\reals{+}\!\setminus\!\{1\}\).
Furthermore, one can confirm by substitution that 
\begin{align}
\label{eq:Ginformation-neq-A}
\RD{\rno}{\mP \mtimes  \Wm e^{\frac{1-\rno}{\rno}\lgm\cdot\costf}}{\mP\otimes \mQ}
&=\RD{\rno}{\mP \mtimes  \Wm e^{\frac{1-\rno}{\rno}\lgm\cdot\costf}}{\mP\otimes \qga{\rno,\mP}{\lgm}}
+\RD{\rno}{\qga{\rno,\mP}{\lgm}}{\mQ}
&
&\rno\in\reals{+}\setminus\{1\}.
\end{align}
Then as a result of  Lemma \ref{lem:divergence-pinsker}, we have
\begin{align}
\label{eq:Ginformation-neq}
\GMIL{\rno}{\mP}{\Wm}{\lgm}
&=\RD{\rno}{\mP \mtimes  \Wm e^{\frac{1-\rno}{\rno}\lgm\cdot\costf}}{\mP\otimes \qga{\rno,\mP}{\lgm}}
&
&
\\
\label{eq:Ginformation-neq-alternative}
&=\tfrac{\rno}{\rno-1}\ln \lon{\mma{\rno,\mP}{\lgm}}
&
&\rno\in\reals{+}\!\setminus\!\{1\}.
\end{align}
Neither \(\mma{1,\mP}{\lgm}\), nor \(\qga{1,\mP}{\lgm}\) depends on the Lagrange multiplier \(\lgm\).
In addition, one can confirm by substitution that 
\begin{align}
\label{eq:Ginformation-one-A}
\RD{1}{\mP \mtimes  \Wm}{\mP\otimes \mQ}-\lgm\cdot\EXS{\mP}{\costf}
&=\RD{1}{\mP \mtimes  \Wm}{\mP\otimes \qga{1,\mP}{\lgm}}-\lgm\cdot\EXS{\mP}{\costf}
+\RD{1}{\qga{1,\mP}{\lgm}}{\mQ}.
&
&
\end{align}
Then as a result of  Lemma \ref{lem:divergence-pinsker}, we have
\begin{align}
\label{eq:Ginformation-one}
\GMIL{1}{\mP}{\Wm}{\lgm}
&=\RD{1}{\mP \mtimes  \Wm}{\mP\otimes \qga{1,\mP}{\lgm}}-\lgm \cdot \EXS{\mP}{\costf}.
&
&
\end{align}
Using the definitions of the A-L information and the R-G information given 
in \eqref{eq:def:Linformation} and \eqref{eq:def:Ginformation} together
with the Jensen's inequality and the concavity of the natural logarithm 
function we get
\begin{align}
\notag %\label{eq:RGAL-information-zto}
\RMIL{\rno}{\mP}{\Wm}{\lgm}
&\geq \GMIL{\rno}{\mP}{\Wm}{\lgm}
&
&\rno \in (0,1]
\\
\notag %\label{eq:RGAL-information-oti}
\RMIL{\rno}{\mP}{\Wm}{\lgm}
&\leq \GMIL{\rno}{\mP}{\Wm}{\lgm}
&
&\rno \in [1,\infty).
\end{align}
It is possible to strengthen these relations by expressing the
A-L information and the R-G information in terms of one another as follows.

\begin{lemma}\label{lem:Lpolsha}
	Let \(\Wm\) be a channel of the form \(\Wm:\inpS\to \pmea{\outA}\) with a cost function
	\(\costf:\inpS\to \reals{\geq0}^{\ell}\), \(\mP\) be an input distribution in \(\pdis{\inpS}\) 
	and \(\lgm\)  be a Lagrange multiplier in \(\reals{\geq0}^{\ell}\).
	\begin{enumerate}[(a)]
		\item\label{Lpolsha:poltyrev}
		Let \(\uma{\rno,\mP}{\lgm}\in\pdis{\inpS}\) be
		\(\uma{\rno,\mP}{\lgm}(\dinp)=\tfrac{\mP(\dinp)e^{(1-\rno)\RD{\rno}{\Wm(\dinp)}{\qmn{\rno,\mP}}+(\rno-1)\lgm \cdot \costf(\dinp)} }{\sum_{\tilde{\dinp}}\mP(\tilde{\dinp})e^{(1-\rno)\RD{\rno}{\Wm(\tilde{\dinp})}{\qmn{\rno,\mP}}+(\rno-1)\lgm \cdot \costf(\dinp)}}\) 
		for all \(\dinp\); then
		\begin{align}
		\label{eq:lem:Lpolsha:poltyrev}
		\RMIL{\rno}{\mP}{\Wm}{\lgm}
		&=\GMIL{\rno}{\umn{\rno,\mP}}{\Wm}{\lgm}+\tfrac{1}{\rno-1}\RD{1}{\mP}{\umn{\rno,\mP}}
		&
		&
		\\
		\label{eq:lem:Lpolsha:poltyrev-variational}
		&=\begin{cases}
		\sup_{\mU\in\pdis{\inpS}} \GMIL{\rno}{\mU}{\Wm}{\lgm}+\tfrac{1}{\rno-1}\RD{1}{\mP}{\mU}
		&\rno\in(0,1)
		\\
		\inf_{\mU\in\pdis{\inpS}} \GMIL{\rno}{\mU}{\Wm}{\lgm}+\tfrac{1}{\rno-1}\RD{1}{\mP}{\mU}
		&\rno\in(1,\infty)
		\end{cases}.
		&
		&
		\end{align}
		\item\label{Lpolsha:shayevitz}
		Let \(\ama{\rno,\mP}{\lgm}\in\pdis{\inpS}\) be
		\(\ama{\rno,\mP}{\lgm}(\dinp)=\tfrac{\mP(\dinp)e^{(\rno-1)\RD{\rno}{\Wm(\dinp)}{\qga{\rno,\mP}{\lgm}}+(1-\rno)\lgm \cdot \costf(\dinp) }}{\sum_{\tilde{\dinp}}\mP(\tilde{\dinp})e^{(\rno-1)\RD{\rno}{\Wm(\tilde{\dinp})}{\qga{\rno,\mP}{\lgm}}+(1-\rno)\lgm \cdot \costf(\dinp)}}\) 
		for all \(\dinp\); then
		\begin{align}
		\label{eq:lem:Lpolsha:shayevitz}
		\GMIL{\rno}{\mP}{\Wm}{\lgm}
		&=\RMIL{\rno}{\ama{\rno,\mP}{\lgm}}{\Wm}{\lgm}-\tfrac{1}{\rno-1}\RD{1}{\ama{\rno,\mP}{\lgm}}{\mP}
		&
		&
		\\
		\label{eq:lem:Lpolsha:shayevitz-variational}
		&=\begin{cases}
		\inf_{\mA\in\pdis{\inpS}} \RMIL{\rno}{\mA}{\Wm}{\lgm}-\tfrac{1}{\rno-1}\RD{1}{\mA}{\mP}
		&\rno\in(0,1)
		\\
		\sup_{\mA\in\pdis{\inpS}} \RMIL{\rno}{\mA}{\Wm}{\lgm}-\tfrac{1}{\rno-1}\RD{1}{\mA}{\mP}
		&\rno\in(1,\infty)
		\end{cases}.
		&
		&
		\end{align}
		\item\label{Lpolsha:augustin}
		Let \(\fX_{\rno,\mP}^{\lgm}:\inpS\to\reals{}\) be 
		\(\fX_{\rno,\mP}^{\lgm}(\dinp)=
		[\RD{\rno}{\Wm(\dinp)}{\qmn{\rno,\mP}}-\lgm\cdot\costf(\dinp)-\RMIL{\rno}{\mP}{\Wm}{\lgm}]\IND{\mP(\dinp)>0}\) 
		for all \(\dinp\); then
		\begin{align}
\label{eq:lem:Lpolsha:augustin}
\RMIL{\rno}{\mP}{\Wm}{\lgm}
&=\tfrac{\rno}{\rno-1}\ln\EXS{\rfm}{\left(\sum\nolimits_{\dinp}\mP(\dinp)
	e^{(1-\rno)(\fX_{\rno,\mP}^{\lgm}(\dinp)+\lgm\cdot\costf(\dinp))}
	\left[\der{\Wm(\dinp)}{\rfm}\right]^{\rno}
	\right)^{\sfrac{1}{\rno}}}
&
&
\\
\label{eq:lem:Lpolsha:augustin-variational}
&=\tfrac{\rno}{\rno-1}\ln\inf\nolimits_{\fX:\EXS{\mP}{\fX}=0}\EXS{\rfm}{\left(\sum\nolimits_{\dinp}\mP(\dinp)
	e^{(1-\rno)(\fX(\dinp)+\lgm\cdot\costf(\dinp))}
	\left[\der{\Wm(\dinp)}{\rfm}\right]^{\rno}
	\right)^{\sfrac{1}{\rno}}}.		
&
&
\end{align}		
	\end{enumerate}
\end{lemma}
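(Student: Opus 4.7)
The plan is to prove each of the three parts by first verifying the explicit identity via direct computation, and then deducing the variational characterization by showing that the specific distribution given (either $\umn{\rno,\mP}$ or $\ama{\rno,\mP}{\lgm}$ or $\fX_{\rno,\mP}^{\lgm}$) attains the claimed extremum. Throughout, the key tools are the fixed-point characterization of the Augustin mean (Lemma \ref{lem:information}-(\ref{information:zto},\ref{information:oti})), the closed-form expression for the R-G mean measure \eqref{eq:def:Gmeanmeasure}, and the expressions for R-G information in \eqref{eq:Ginformation-neq-alternative} and \eqref{eq:Ginformation-one}.

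For part (\ref{Lpolsha:poltyrev}), I would compute $\mma{\rno,\umn{\rno,\mP}}{\lgm}$ directly from \eqref{eq:def:Gmeanmeasure} by substituting the explicit form of $\umn{\rno,\mP}$. Letting $Z \DEF \sum_{\dinp}\mP(\dinp)e^{(1-\rno)\RD{\rno}{\Wm(\dinp)}{\qmn{\rno,\mP}}+(\rno-1)\lgm\cdot\costf(\dinp)}$ be the normalization constant, the factors $e^{(1-\rno)\lgm\cdot\costf(\dinp)}$ in $\mma{\rno,\umn{\rno,\mP}}{\lgm}$ cancel the $e^{(\rno-1)\lgm\cdot\costf(\dinp)}$ factors in $\umn{\rno,\mP}$, so the Augustin fixed-point equation \eqref{eq:mean} yields $\mma{\rno,\umn{\rno,\mP}}{\lgm}=Z^{-1/\rno}\qmn{\rno,\mP}$. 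Hence $\lon{\mma{\rno,\umn{\rno,\mP}}{\lgm}}=Z^{-1/\rno}$ and $\GMIL{\rno}{\umn{\rno,\mP}}{\Wm}{\lgm}=\tfrac{-1}{\rno-1}\ln Z$ for $\rno\neq 1$. A direct computation of $\RD{1}{\mP}{\umn{\rno,\mP}}$ from its definition gives $\ln Z+(\rno-1)\RMIL{\rno}{\mP}{\Wm}{\lgm}$, which combined with the previous expression yields \eqref{eq:lem:Lpolsha:poltyrev}. For part (\ref{Lpolsha:shayevitz}), I would mirror the argument but start from the closed form $\qga{\rno,\mP}{\lgm}$ for the R-G mean; the fixed-point type identity that arises is essentially the defining identity for $\qga{\rno,\mP}{\lgm}$, and substituting $\ama{\rno,\mP}{\lgm}$ in the formula for $\qmn{\rno,\ama{\rno,\mP}{\lgm}}$ via Lemma \ref{lem:information} recovers $\qga{\rno,\mP}{\lgm}$, after which the arithmetic mirrors part (\ref{Lpolsha:poltyrev}).

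For part (\ref{Lpolsha:augustin}), the closed-form identity \eqref{eq:lem:Lpolsha:augustin} follows from direct substitution: the definition of $\fX_{\rno,\mP}^{\lgm}$ makes $e^{(1-\rno)\fX_{\rno,\mP}^{\lgm}(\dinp)}\cdot e^{(1-\rno)\lgm\cdot\costf(\dinp)}= e^{(1-\rno)\RD{\rno}{\Wm(\dinp)}{\qmn{\rno,\mP}}}\cdot e^{-(1-\rno)\RMIL{\rno}{\mP}{\Wm}{\lgm}}$ (on the support of $\mP$), and using the Augustin fixed-point identity \eqref{eq:mean} the inner sum integrates to give $\der{\qmn{\rno,\mP}}{\rfm}\cdot e^{-(1-\rno)\RMIL{\rno}{\mP}{\Wm}{\lgm}/\rno}$; evaluating the logarithm then recovers $\RMIL{\rno}{\mP}{\Wm}{\lgm}$.

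The variational characterizations \eqref{eq:lem:Lpolsha:poltyrev-variational}, \eqref{eq:lem:Lpolsha:shayevitz-variational}, and \eqref{eq:lem:Lpolsha:augustin-variational} are the main obstacle. For (\ref{Lpolsha:poltyrev-variational}), applying the van Erven-Harremoës bounds for the \renyi information from \eqref{eq:renyiinformation:EHB} to replace $\qga{\rno,\mU}{\lgm}$ with $\qmn{\rno,\mP}$ gives an inequality relating $\GMIL{\rno}{\mU}{\Wm}{\lgm}$ to a quantity involving $\qmn{\rno,\mP}$ and $\EXS{\mU}{\costf}$, which when combined with the Donsker-Varadhan type inequality $\sum_{\dinp}\mU(\dinp)\gX(\dinp)-\RD{1}{\mU}{\mP}\leq\ln\sum_{\dinp}\mP(\dinp)e^{\gX(\dinp)}$ establishes the sup/inf direction with the correct sign depending on whether $\rno\in(0,1)$ or $\rno\in(1,\infty)$ (arising from the sign of $\tfrac{1}{\rno-1}$). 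The same machinery applied with the roles of $\RMIL$ and $\GMIL$ swapped handles (\ref{Lpolsha:shayevitz-variational}), while (\ref{Lpolsha:augustin-variational}) is obtained from Jensen's inequality applied inside the integral and the observation that mean-zero $\fX$'s with a specific scaling minimize/maximize the resulting expression. These steps essentially generalize Lemma \ref{lem:polsha} from the $\lgm=0$ case, where the role of the cost-tilt factor $e^{(1-\rno)\lgm\cdot\costf}$ is to shift the normalization without affecting the fundamental convexity structure.
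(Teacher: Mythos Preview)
Your approach is essentially the same as the paper's: both establish the identities via the fixed-point equation \eqref{eq:mean} and direct substitution, and both prove the variational bounds via Jensen's inequality (your Donsker--Varadhan form is equivalent to the paper's concavity-of-log step, though note you need it with \(\RD{1}{\mP}{\mU}\) rather than \(\RD{1}{\mU}{\mP}\) for part (\ref{Lpolsha:poltyrev}), and the Sibson-type identity you should cite is \eqref{eq:Ginformation-neq-A} rather than \eqref{eq:renyiinformation:EHB}). The one substantive difference is in part (\ref{Lpolsha:augustin}): the paper derives \eqref{eq:lem:Lpolsha:augustin-variational} by reducing to the already-proven \eqref{eq:lem:Lpolsha:poltyrev-variational} via the reparametrization \(\umn{\fX}(\dinp)\propto\mP(\dinp)e^{(1-\rno)\fX(\dinp)}\), which is cleaner than attempting a direct Jensen argument inside the output-space integral.
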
 
Lemma \ref{lem:Lpolsha} for \(\lgm=0\) is Lemma \ref{lem:polsha}, which was previously discussed 
by Poltyrev \cite{poltyrev82}, Shayevitz \cite{shayevitz11}, and  Augustin \cite{augustin78}. 

\begin{definition}
For any \(\rno\in\reals{+}\), channel  \(\Wm:\inpS\to \pmea{\outA}\) with 
a cost function \(\costf:\inpS\to \reals{\geq0}^{\ell}\), and \(\lgm \in \reals{\geq0}^{\ell}\), 
\emph{the order \(\rno\) \renyi\!\!-Gallager (R-G) capacity for 
	the Lagrange multiplier \(\lgm\)} is
\begin{align}
\notag %\label{eq:def:Gcapacity}
\GCL{\rno}{\Wm}{\lgm}
&\DEF \sup\nolimits_{\mP\in \pdis{\inpS}} \GMIL{\rno}{\mP}{\Wm}{\lgm}.
\end{align}
\end{definition} 
Using the definition of \(\GMIL{\rno}{\mP}{\Wm}{\lgm}\), given in  \eqref{eq:def:Ginformation}, 
we get  the following expression for \(\GCL{\rno}{\Wm}{\lgm}\).
\begin{align}
\label{eq:Gcapacity}
\GCL{\rno}{\Wm}{\lgm}
&=
\begin{cases}
\sup\nolimits_{\mP \in \pdis{\inpS}}\inf\nolimits_{\mQ\in \pmea{\outA}} 
\RD{\rno}{\mP \mtimes  \Wm e^{\frac{1-\rno}{\rno}\lgm\cdot\costf}}{\mP\otimes \mQ}
&\rno\in\reals{+}\setminus\{1\}
\\
\sup\nolimits_{\mP \in \pdis{\inpS}}\inf\nolimits_{\mQ\in \pmea{\outA}} 
\RD{\rno}{\mP \mtimes  \Wm}{\mP\otimes \mQ}-\lgm\cdot\EXS{\mP}{\costf}
&\rno=1
\end{cases}
\end{align}
The R-G capacity satisfies a minimax theorem similar to the one satisfied by 
the A-L capacity, i.e. Theorem \ref{thm:Lminimax}.
Since both the statement and the proof of the minimax theorems are identical 
for the order one A-L capacity and the order one R-G capacity,
we state the minimax theorem for the R-G capacity 
only for finite positive orders other than one.

\begin{theorem}\label{thm:Gminimax}
For any \(\rno\!\in\!\reals{+}\!\setminus\!\{1\}\), channel \(\Wm\!:\!\inpS\to \pmea{\outA}\) with 
a cost function \(\costf\!:\!\inpS\to \reals{\geq0}^{\ell}\), and Lagrange multiplier 
\(\lgm\!\in\!\reals{\geq0}^{\ell}\)
\begin{align}
\label{eq:thm:Gminimax}
\sup\nolimits_{\mP\in \pdis{\inpS}}\inf\nolimits_{\mQ \in \pmea{\outA}} 
\RD{\rno}{\mP \mtimes  \Wm e^{\frac{1-\rno}{\rno}\lgm\cdot\costf}}{\mP\otimes \mQ}
&=
\inf\nolimits_{\mQ \in \pmea{\outA}} \sup\nolimits_{\mP\in \pdis{\inpS}}
\RD{\rno}{\mP \mtimes  \Wm e^{\frac{1-\rno}{\rno}\lgm\cdot\costf}}{\mP\otimes \mQ}
\\
\label{eq:thm:Gminimaxradius}
&=\inf\nolimits_{\mQ\in\pmea{\outA}}\sup\nolimits_{\dinp \in \inpS} 
\RD{\rno}{\Wm(\dinp)}{\mQ}-\lgm\cdot \costf(\dinp).
\end{align}
If the expression on the left hand side of \eqref{eq:thm:Gminimax} is finite,
i.e. if \(\GCL{\rno}{\Wm}{\lgm}<\infty\), then 
\(\exists!\qga{\rno,\Wm}{\lgm}\in\pmea{\outA}\),
called the order \(\rno\) \renyi\!\!-Gallager center of \(\Wm\) for the Lagrange multiplier \(\lgm\), 
satisfying
\begin{align}
\label{eq:thm:Gminimaxcenter}
\GCL{\rno}{\Wm}{\lgm}
&=\sup\nolimits_{\mP \in \pdis{\inpS}} 
\RD{\rno}{\mP \mtimes  \Wm e^{\frac{1-\rno}{\rno}\lgm\cdot\costf}}{\mP\otimes \qga{\rno,\Wm}{\lgm}}
\\
\label{eq:thm:Gminimaxradiuscenter}
&=\sup\nolimits_{\dinp \in \inpS} 
\RD{\rno}{\Wm(\dinp)}{\qga{\rno,\Wm}{\lgm}}-\lgm\cdot\costf(\dinp).
\end{align}
Furthermore, for every sequence of input distributions 
\(\{\pma{}{(\ind)}\}_{\ind\in\integers{+}}\subset\pdis{\inpS}\) such that  
\(\lim_{\ind \to \infty} \GMIL{\rno}{\pma{}{(\ind)}}{\Wm}{\lgm}=\GCL{\rno}{\Wm}{\lgm}\),
corresponding sequence of the order \(\rno\) \renyi\!\!-Gallager means \(\{\qga{\rno,\pma{}{(\ind)}}{\lgm}\}_{\ind\in\integers{+}}\)  
is a Cauchy sequence  for the total variation metric on \(\pmea{\outA}\) 
and \(\qga{\rno,\Wm}{\lgm}\) is the unique limit point of that Cauchy sequence.
\end{theorem}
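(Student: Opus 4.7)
The strategy is to reduce Theorem \ref{thm:Gminimax} to the already-proven A-L minimax identity (Theorem \ref{thm:Lminimax}) via the variational identities of Lemma \ref{lem:Lpolsha}, which explicitly relate the R-G and A-L information measures. The first step is to fix $\mQ\in\pmea{\outA}$ and write out the inner objective,
\begin{align*}
\RD{\rno}{\mP\mtimes \Wm e^{\frac{1-\rno}{\rno}\lgm\cdot\costf}}{\mP\otimes\mQ}
=\tfrac{1}{\rno-1}\ln\sum\nolimits_\dinp \mP(\dinp)\, e^{(\rno-1)[\RD{\rno}{\Wm(\dinp)}{\mQ}-\lgm\cdot\costf(\dinp)]}.
\end{align*}
Supremizing over $\mP\in\pdis{\inpS}$ concentrates mass on an input achieving $\sup_\dinp[\RD{\rno}{\Wm(\dinp)}{\mQ}-\lgm\cdot\costf(\dinp)]$; the conclusion holds uniformly in both regimes because the sign of $\tfrac{1}{\rno-1}$ and the monotonicity of $x\mapsto e^{(\rno-1)x}$ flip together as $\rno$ crosses $1$. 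Consequently,
\begin{align*}
\sup\nolimits_{\mP\in\pdis{\inpS}}\RD{\rno}{\mP\mtimes \Wm e^{\frac{1-\rno}{\rno}\lgm\cdot\costf}}{\mP\otimes\mQ}=\sup\nolimits_{\dinp\in\inpS}\left[\RD{\rno}{\Wm(\dinp)}{\mQ}-\lgm\cdot\costf(\dinp)\right],
\end{align*}
and infimizing over $\mQ$ yields the A-L radius $\RRL{\rno}{\Wm}{\lgm}$, establishing \eqref{eq:thm:Gminimaxradius} and the inf-sup side of \eqref{eq:thm:Gminimax}.

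The sup-inf side of \eqref{eq:thm:Gminimax} is $\GCL{\rno}{\Wm}{\lgm}$; to show it equals $\RRL{\rno}{\Wm}{\lgm}=\RCL{\rno}{\Wm}{\lgm}$ (the latter by Theorem \ref{thm:Lminimax}) I would use Lemma \ref{lem:Lpolsha}. For $\rno\in(0,1)$, \eqref{eq:lem:Lpolsha:poltyrev-variational} reads $\RMIL{\rno}{\mP}{\Wm}{\lgm}=\sup_{\mU}[\GMIL{\rno}{\mU}{\Wm}{\lgm}+\tfrac{1}{\rno-1}\RD{1}{\mP}{\mU}]$; taking $\sup_\mP$ and noting that for each $\mU$ the inner objective is maximized at $\mP=\mU$ (since $\tfrac{1}{\rno-1}<0$) gives $\RCL{\rno}{\Wm}{\lgm}=\sup_\mU\GMIL{\rno}{\mU}{\Wm}{\lgm}=\GCL{\rno}{\Wm}{\lgm}$. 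For $\rno\in(1,\infty)$, \eqref{eq:lem:Lpolsha:shayevitz-variational} reads $\GMIL{\rno}{\mP}{\Wm}{\lgm}=\sup_{\mA}[\RMIL{\rno}{\mA}{\Wm}{\lgm}-\tfrac{1}{\rno-1}\RD{1}{\mA}{\mP}]$, and the symmetric argument ($-\tfrac{1}{\rno-1}<0$) yields $\GCL{\rno}{\Wm}{\lgm}=\RCL{\rno}{\Wm}{\lgm}$. Together with Step 1 this completes \eqref{eq:thm:Gminimax}.

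Assuming $\GCL{\rno}{\Wm}{\lgm}<\infty$, I would define $\qga{\rno,\Wm}{\lgm}:=\qma{\rno,\Wm}{\lgm}$. Then \eqref{eq:thm:Gminimaxradiuscenter} is immediate from \eqref{eq:thm:Lminimaxradiuscenter}, and \eqref{eq:thm:Gminimaxcenter} follows by plugging $\mQ=\qma{\rno,\Wm}{\lgm}$ into the Step 1 formula. For uniqueness and for Cauchy convergence of R-G means along a maximizing sequence $\{\pma{}{(\ind)}\}_{\ind\in\integers{+}}$, I would invoke the identity \eqref{eq:Ginformation-neq-A} with $\mQ=\qga{\rno,\Wm}{\lgm}$ to rearrange
\begin{align*}
\RD{\rno}{\qga{\rno,\pma{}{(\ind)}}{\lgm}}{\qga{\rno,\Wm}{\lgm}}
=\RD{\rno}{\pma{}{(\ind)}\mtimes\Wm e^{\frac{1-\rno}{\rno}\lgm\cdot\costf}}{\pma{}{(\ind)}\otimes\qga{\rno,\Wm}{\lgm}}-\GMIL{\rno}{\pma{}{(\ind)}}{\Wm}{\lgm}.
\end{align*}
The first RHS term is bounded above by $\GCL{\rno}{\Wm}{\lgm}$ (by \eqref{eq:thm:Gminimaxcenter}), while the second converges to $\GCL{\rno}{\Wm}{\lgm}$, so the LHS vanishes. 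Lemma \ref{lem:divergence-pinsker} then converts this to total-variation convergence, and the Cauchy property follows from the triangle inequality; the same argument applied to any $\mQ'$ satisfying \eqref{eq:thm:Gminimaxcenter} forces $\mQ'$ to be this common limit, giving uniqueness.

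The main obstacle is the case analysis in the first two steps, which is routine but must be performed carefully because every sign-bearing factor ($\tfrac{1}{\rno-1}$, the monotonicity of $x\mapsto e^{(\rno-1)x}$, and the sup/inf in Lemma \ref{lem:Lpolsha}) flips as $\rno$ crosses $1$. Once this bookkeeping is controlled, the rest is a clean transfer from the A-L side, and in particular the identification $\qga{\rno,\Wm}{\lgm}=\qma{\rno,\Wm}{\lgm}$ makes the existence and uniqueness of the R-G center automatic from the corresponding statements for the A-L center.
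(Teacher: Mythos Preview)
Your proof is correct but takes a genuinely different route from the paper's. The paper proves Theorem~\ref{thm:Gminimax} in a self-contained way, parallel to its proofs of Theorems~\ref{thm:minimax} and~\ref{thm:Lminimax}: it invokes the finite-input result (Lemma~\ref{lem:GcapacityFLB}) on a nested sequence of finite sub-channels built from the supports of a maximizing sequence $\{\pma{}{(\ind)}\}$, bounds $\lon{\qga{\rno,\pma{}{(\jnd)}}{\lgm}-\qga{\rno,\pma{}{(\ind)}}{\lgm}}$ via a triangle inequality through $\qga{\rno,\Wm^{(\ind)}}{\lgm}$, and then passes to the limit using completeness of $\smea{\outA}$ and lower semicontinuity of the \renyi divergence. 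Only \emph{afterwards} does the paper deduce $\GCL{\rno}{\Wm}{\lgm}=\RCL{\rno}{\Wm}{\lgm}$ and $\qga{\rno,\Wm}{\lgm}=\qma{\rno,\Wm}{\lgm}$ by comparing \eqref{eq:thm:Lminimaxradiuscenter} with \eqref{eq:thm:Gminimaxradiuscenter}.

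You instead front-load the identification $\GCL{\rno}{\Wm}{\lgm}=\RCL{\rno}{\Wm}{\lgm}$ via the variational formulas of Lemma~\ref{lem:Lpolsha}, then import the A-L center from Theorem~\ref{thm:Lminimax} and verify the R-G center properties a posteriori using \eqref{eq:Ginformation-neq-A}. This is more economical---it avoids repeating the Kemperman-style limiting argument a third time---and there is no circularity, since the proof of Lemma~\ref{lem:Lpolsha} relies only on Lemma~\ref{lem:information} and the closed-form identities \eqref{eq:Ginformation-neq-A}, \eqref{eq:Ginformation-neq}, not on Theorem~\ref{thm:Gminimax}. The paper's approach, on the other hand, keeps the three minimax theorems logically independent and makes the R-G result available without appealing to the A-L machinery.
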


Proof of Theorem \ref{thm:Gminimax} is very similar to the proofs of Theorem \ref{thm:minimax} 
and Theorem \ref{thm:Lminimax}. It relies on Lemma \ref{lem:GcapacityFLB}, given in the following, 
instead of Lemma \ref{lem:capacityFLB}  or Lemma \ref{lem:LcapacityFLB}. 
\begin{lemma}\label{lem:GcapacityFLB}
For any \(\rno\in \reals{+}\setminus\{1\}\), 
channel \(\Wm:\inpS\to \pmea{\outA}\) 
with cost function \(\costf:\inpS\to \reals{\geq0}^{\ell}\) for a finite input set \(\inpS\), 
and Lagrange multiplier \(\lgm \in \reals{\geq0}^{\ell}\),
there exists a \(\widetilde{\mP} \in \pdis{\inpS}\) 
such that \(\RMIL{\rno}{\widetilde{\mP}}{\Wm}{\lgm}=\RCL{\rno}{\Wm}{\lgm}\)
and \(\exists!\qma{\rno,\Wm}{\lgm}\in\pmea{\outA}\) satisfying 
\begin{align}
\label{eq:lem:GcapacityFLB}
\RD{\rno}{\mP \mtimes  \Wm e^{\frac{1-\rno}{\rno}\lgm \cdot \costf}}{\mP \otimes\qga{\rno,\Wm}{\lgm}}
&\leq  \GCL{\rno}{\Wm}{\lgm}
&
&\forall \mP \in \pdis{\inpS}.   
\end{align}
Furthermore,  \(\qga{\rno,\widetilde{\mP}}{\lgm}=\qga{\rno,\Wm}{\lgm}\) for all 
\(\widetilde{\mP}\in\pdis{\inpS}\)
such that \(\GMIL{\rno}{\widetilde{\mP}}{\Wm}{\lgm}=\GCL{\rno}{\Wm}{\lgm}\).
\end{lemma}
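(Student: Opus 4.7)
The plan is to parallel the proofs of Lemmas \ref{lem:capacityFLB} and \ref{lem:LcapacityFLB}, exploiting the compactness of the simplex $\pdis{\inpS}$ for a finite input set $\inpS$ and deriving the key bound \eqref{eq:lem:GcapacityFLB} from a first-order optimality condition at a maximizer $\widetilde{\mP}$ of $\GMIL{\rno}{\cdot}{\Wm}{\lgm}$ on $\pdis{\inpS}$.

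First I would establish the existence of $\widetilde{\mP}$ by showing that $\GMIL{\rno}{\mP}{\Wm}{\lgm}$ is upper semicontinuous in $\mP$ on $\pdis{\inpS}$. By \eqref{eq:Ginformation-neq-alternative} it suffices to control the semicontinuity of $\lon{\mma{\rno,\mP}{\lgm}}$ in $\mP$. For $\rno\in(0,1)$, Fatou's lemma applied to the integrand
$\bigl(\sum_{\dinp}\mP(\dinp)e^{(1-\rno)\lgm\cdot\costf(\dinp)}(\der{\Wm(\dinp)}{\rfm})^{\rno}\bigr)^{\sfrac{1}{\rno}}$,
which depends continuously on $\mP$ for each $\dout$, yields lower semicontinuity of $\lon{\mma{\rno,\mP}{\lgm}}$, and the negative factor $\tfrac{\rno}{\rno-1}$ in \eqref{eq:Ginformation-neq-alternative} turns this into upper semicontinuity of $\GMIL{\rno}{\mP}{\Wm}{\lgm}$. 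For $\rno\in(1,\infty)$, the same integrand is dominated pointwise by $\sum_{\dinp}\der{\Wm(\dinp)}{\rfm}$ (using $e^{(1-\rno)\lgm\cdot\costf(\dinp)}\leq 1$ and the $\ell^{\rno}$-versus-$\ell^{1}$ norm inequality), which is integrable with integral $\abs{\inpS}$, and the dominated convergence theorem yields continuity of $\GMIL{\rno}{\mP}{\Wm}{\lgm}$ in $\mP$. The extreme value theorem for upper semicontinuous functions on the compact set $\pdis{\inpS}$ then produces $\widetilde{\mP}$ with $\GMIL{\rno}{\widetilde{\mP}}{\Wm}{\lgm}=\GCL{\rno}{\Wm}{\lgm}$.

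Next I would define $\qga{\rno,\Wm}{\lgm}\DEF\qga{\rno,\widetilde{\mP}}{\lgm}$ and derive \eqref{eq:lem:GcapacityFLB} by a perturbation argument. Fix $\mP\in\pdis{\inpS}$ and set $\mP_{\beta}\DEF(1-\beta)\widetilde{\mP}+\beta\mP$ and $\phi(\beta)\DEF\GMIL{\rno}{\mP_{\beta}}{\Wm}{\lgm}$. Differentiating $\phi$ at $\beta=0$ under the integral sign (justified by dominated convergence), rewriting the factor $(\der{\mma{\rno,\widetilde{\mP}}{\lgm}}{\rfm})^{1-\rno}$ produced by the chain rule as $\lon{\mma{\rno,\widetilde{\mP}}{\lgm}}^{1-\rno}(\der{\qga{\rno,\widetilde{\mP}}{\lgm}}{\rfm})^{1-\rno}$, and invoking the identity
\[
\int\bigl(\der{\qga{\rno,\widetilde{\mP}}{\lgm}}{\rfm}\bigr)^{1-\rno}e^{(1-\rno)\lgm\cdot\costf(\dinp)}\bigl(\der{\Wm(\dinp)}{\rfm}\bigr)^{\rno}\rfm(\dif{\dout})=e^{(\rno-1)[\RD{\rno}{\Wm(\dinp)}{\qga{\rno,\widetilde{\mP}}{\lgm}}-\lgm\cdot\costf(\dinp)]}
\]
together with $\lon{\mma{\rno,\widetilde{\mP}}{\lgm}}^{\rno}=e^{(\rno-1)\GCL{\rno}{\Wm}{\lgm}}$ collapses the derivative to
\[
\phi'(0)=\tfrac{1}{\rno-1}\Bigl[e^{(\rno-1)(\RD{\rno}{\mP\mtimes\Wm e^{\frac{1-\rno}{\rno}\lgm\cdot\costf}}{\mP\otimes\qga{\rno,\widetilde{\mP}}{\lgm}}-\GCL{\rno}{\Wm}{\lgm})}-1\Bigr].
\]
Since $\widetilde{\mP}$ maximizes $\phi$ on $[0,1]$ we have $\phi'(0)\leq 0$, and because $\tfrac{1}{\rno-1}(e^{(\rno-1)x}-1)$ has the same sign as $x$ for both $\rno\in(0,1)$ and $\rno\in(1,\infty)$, this inequality is exactly \eqref{eq:lem:GcapacityFLB}.

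Uniqueness of $\qga{\rno,\Wm}{\lgm}$ and independence of $\qga{\rno,\widetilde{\mP}}{\lgm}$ from the choice of maximizer both follow from \eqref{eq:Ginformation-neq-A}: if any $\mQ\in\pmea{\outA}$ satisfies \eqref{eq:lem:GcapacityFLB} (in particular, the R-G mean $\qga{\rno,\widetilde{\mP}'}{\lgm}$ of any other optimizer $\widetilde{\mP}'$), then applying the bound at $\mP=\widetilde{\mP}$ or $\mP=\widetilde{\mP}'$ and invoking $\GMIL{\rno}{\widetilde{\mP}}{\Wm}{\lgm}=\GMIL{\rno}{\widetilde{\mP}'}{\Wm}{\lgm}=\GCL{\rno}{\Wm}{\lgm}$ collapses the inequality, via \eqref{eq:Ginformation-neq-A}, to $\RD{\rno}{\qga{\rno,\widetilde{\mP}}{\lgm}}{\mQ}\leq 0$ or $\RD{\rno}{\qga{\rno,\widetilde{\mP}'}{\lgm}}{\qga{\rno,\widetilde{\mP}}{\lgm}}\leq 0$, forcing equality via Lemma \ref{lem:divergence-pinsker}. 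The main obstacle is the derivative computation itself: the chain rule produces the factor $(\der{\mma{\rno,\widetilde{\mP}}{\lgm}}{\rfm})^{1-\rno}$ that must be rewritten through $\qga{\rno,\widetilde{\mP}}{\lgm}$ before the \renyi divergence identity becomes visible, and the sign analysis has to be carried out uniformly over $\rno\in(0,1)\cup(1,\infty)$ since $\tfrac{\rno}{\rno-1}$ changes sign across $\rno=1$; the observation that $\tfrac{1}{\rno-1}(e^{(\rno-1)x}-1)$ preserves the sign of $x$ handles both cases simultaneously.
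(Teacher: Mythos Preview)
Your argument is correct and reaches the same conclusion as the paper, but by a genuinely different route for the key inequality \eqref{eq:lem:GcapacityFLB}. The paper does not differentiate: instead it takes the discrete sequence $\pma{}{(\ind)}=\tfrac{\ind-1}{\ind}\widetilde{\mP}+\tfrac{1}{\ind}\mP$ and applies the algebraic identity \eqref{eq:Ginformation-neq-A} directly, which yields
\[
\GMIL{\rno}{\mP}{\Wm}{\lgm}+\RD{\rno}{\qga{\rno,\mP}{\lgm}}{\qga{\rno,\pma{}{(\ind)}}{\lgm}}\leq\GCL{\rno}{\Wm}{\lgm}
\]
without any calculus; it then shows $\qga{\rno,\pma{}{(\ind)}}{\lgm}\to\qga{\rno,\widetilde{\mP}}{\lgm}$ in total variation and invokes lower semicontinuity of the \renyi divergence (Lemma~\ref{lem:divergence:lsc}) to pass to the limit. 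The same identity \eqref{eq:Ginformation-neq-A} is also how the paper establishes continuity of $\GMIL{\rno}{\cdot}{\Wm}{\lgm}$ on the simplex, in place of your Fatou/DCT argument. What the paper's route buys is robustness: the sequence-plus-lower-semicontinuity step automatically handles the edge case (for $\rno>1$) in which some $\Wm(\dinp_{0})$ with $\dinp_{0}\notin\supp{\widetilde{\mP}}$ fails to be absolutely continuous with respect to $\qga{\rno,\widetilde{\mP}}{\lgm}$, whereas your ``differentiate under the integral'' step needs a short extra argument there (either rule this case out at a maximizer via the $\beta^{1/\rno}$ growth of $\lon{\mma{\rno,\mP_{\beta}}{\lgm}}$, or interpret $\phi'(0)=+\infty$ as a contradiction). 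What your route buys is a single closed-form expression for $\phi'(0)$ that makes the sign analysis transparent for both $\rno\in(0,1)$ and $\rno\in(1,\infty)$ at once.
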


The expression on the left hand side of \eqref{eq:thm:Gminimax} 
is the R-G capacity, whereas the expression 
in \eqref{eq:thm:Gminimaxradius}  
is the A-L radius defined in \eqref{eq:def:Lradius}.
Thus Theorems \ref{thm:Lminimax} and \ref{thm:Gminimax} imply that
\begin{align}
\label{eq:equalityofcapacities}
\RCL{\rno}{\Wm}{\lgm}
&=\RRL{\rno}{\Wm}{\lgm}=\GCL{\rno}{\Wm}{\lgm}
&
&\forall \rno\in\reals{+},\lgm\in\reals{\geq 0}^{\ell}.
\end{align}
Furthermore, whenever \(\RCL{\rno}{\Wm}{\lgm}\) is finite 
the unique A-L center described in \eqref{eq:thm:Lminimaxradiuscenter}
is equal to 
the unique R-G center described in  \eqref{eq:thm:Gminimaxradiuscenter}
by  Theorems \ref{thm:Lminimax} and \ref{thm:Gminimax}, as well.
\begin{align}
\label{eq:equalityofcenters}
&
&\qma{\rno,\Wm}{\lgm}
&=\qga{\rno,\Wm}{\lgm}
&
&&
&\forall \rno\in\reals{+},\lgm\in\reals{\geq 0}^{\ell} \mbox{~s.t.~}\RCL{\rno}{\Wm}{\lgm}<\infty.
\end{align}
In order to avoid using multiple names for the same quantity, we will state our propositions
in terms of the A-L capacity and center in the rest of the paper. 

If \(\RCL{\rno}{\Wm}{\lgm}\) is finite, then 
\eqref{eq:Ginformation-neq-A},
\eqref{eq:Ginformation-neq},
and
Theorem \ref{thm:Gminimax} 
for \(\rno\!\in\!\reals{+}\!\setminus\!\{1\}\)
and
\eqref{eq:Ginformation-one-A},
\eqref{eq:Ginformation-one}
and
Theorem \ref{thm:Lminimax} 
for \(\rno=1\)
imply that
\begin{align}
\notag %\label{eq:GcapacityLB}
\RCL{\rno}{\Wm}{\lgm}-\GMIL{\rno}{\mP}{\Wm}{\lgm}
&\geq  \RD{\rno}{\qga{\rno,\mP}{\lgm}}{\qma{\rno,\Wm}{\lgm}}
&
&\forall \mP \in \pdis{\inpS}.
\end{align}

Using the same observations, we can prove a van  Erven-\harremoes bound for 
the A-L capacity, as well.
\begin{lemma}\label{lem:GEHB}
For any \(\rno\in \reals{+}\), channel \(\Wm:\inpS\to \pmea{\outA}\) with 
a cost function \(\costf:\inpS\to \reals{\geq0}^{\ell}\), and Lagrange multiplier 
\(\lgm \in \reals{\geq0}^{\ell}\) satisfying \(\RCL{\rno}{\!\Wm\!}{\lgm}<\infty\)
\begin{align}
\label{eq:lem:GEHB}
\sup\nolimits_{\dinp \in \inpS} \RD{\rno}{\Wm(\dinp)}{\mQ}-\lgm \cdot\costf(\dinp)
&\geq \RCL{\rno}{\Wm}{\lgm}+\RD{\rno}{\qma{\rno,\Wm}{\lgm}}{\mQ}
&
&\forall \mQ \in \pmea{\outA}.
\end{align}
\end{lemma}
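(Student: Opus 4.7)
The plan is to mirror the classical derivation of the van Erven--Harremo\"{e}s bound for the R\'{e}nyi capacity, with the R-G identity \eqref{eq:Ginformation-neq-A} (or \eqref{eq:Ginformation-one-A} for $\rno=1$) playing the role of the chain-rule identity \eqref{eq:renyiinformation:EHB}, and with the limit being taken along a sequence of input distributions whose R-G information approaches the R-G/A-L capacity. The equalities $\GCL{\rno}{\Wm}{\lgm}=\RCL{\rno}{\Wm}{\lgm}$ and $\qga{\rno,\Wm}{\lgm}=\qma{\rno,\Wm}{\lgm}$ recorded in \eqref{eq:equalityofcapacities}--\eqref{eq:equalityofcenters} let us freely switch between the A-L and R-G vantage points.

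For $\rno\in\reals{+}\setminus\{1\}$, a direct computation against a reference measure $\rfm$ that dominates $\mQ$ together with $\{\Wm(\dinp)\}_{\dinp\in\supp{\mP}}$ yields
\begin{align}
\notag
\RD{\rno}{\mP\mtimes\Wm e^{\frac{1-\rno}{\rno}\lgm\cdot\costf}}{\mP\otimes\mQ}
=\tfrac{1}{\rno-1}\ln\sum\nolimits_{\dinp}\mP(\dinp)e^{(\rno-1)[\RD{\rno}{\Wm(\dinp)}{\mQ}-\lgm\cdot\costf(\dinp)]}.
\end{align}
Since $a\mapsto e^{(\rno-1)a}$ and the subsequent logarithm-and-divide by $\rno-1$ compose to a monotone nondecreasing map (both for $\rno>1$, where the sign is positive throughout, and for $\rno<1$, where two sign flips cancel), the expression on the right is bounded above by $\sup_{\dinp\in\inpS}[\RD{\rno}{\Wm(\dinp)}{\mQ}-\lgm\cdot\costf(\dinp)]$. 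Combining this with \eqref{eq:Ginformation-neq-A} gives, for every $\mP\in\pdis{\inpS}$,
\begin{align}
\label{eq:prop-key}
\sup\nolimits_{\dinp\in\inpS}[\RD{\rno}{\Wm(\dinp)}{\mQ}-\lgm\cdot\costf(\dinp)]
\geq \GMIL{\rno}{\mP}{\Wm}{\lgm}+\RD{\rno}{\qga{\rno,\mP}{\lgm}}{\mQ}.
\end{align}
For $\rno=1$ the identity \eqref{eq:Ginformation-one-A} together with the trivial bound $\sum_{\dinp}\mP(\dinp)[\RD{1}{\Wm(\dinp)}{\mQ}-\lgm\cdot\costf(\dinp)]\leq\sup_{\dinp}[\RD{1}{\Wm(\dinp)}{\mQ}-\lgm\cdot\costf(\dinp)]$ gives the same inequality \eqref{eq:prop-key} (with $\qga{1,\mP}{\lgm}=\qmn{1,\mP}$).

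To finish, I would choose a sequence $\{\pma{}{(\ind)}\}_{\ind\in\integers{+}}\subset\pdis{\inpS}$ with $\lim_{\ind}\GMIL{\rno}{\pma{}{(\ind)}}{\Wm}{\lgm}=\GCL{\rno}{\Wm}{\lgm}=\RCL{\rno}{\Wm}{\lgm}$. By Theorem \ref{thm:Gminimax} (for $\rno\neq 1$) or Theorem \ref{thm:Lminimax} (for $\rno=1$), the corresponding R-G means $\qga{\rno,\pma{}{(\ind)}}{\lgm}$ converge in total variation to $\qga{\rno,\Wm}{\lgm}=\qma{\rno,\Wm}{\lgm}$. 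Taking $\liminf_{\ind}$ of \eqref{eq:prop-key} along this sequence, the first summand on the right converges to $\RCL{\rno}{\Wm}{\lgm}$, while the lower semicontinuity of $\RD{\rno}{\cdot}{\mQ}$ under setwise convergence (Lemma \ref{lem:divergence:lsc}; total variation convergence implies setwise convergence) yields $\liminf_{\ind}\RD{\rno}{\qga{\rno,\pma{}{(\ind)}}{\lgm}}{\mQ}\geq\RD{\rno}{\qma{\rno,\Wm}{\lgm}}{\mQ}$. The two estimates combine to produce \eqref{eq:lem:GEHB}. The only delicate point is checking that the log-sum-exp inequality is in the correct direction uniformly in the sign of $\rno-1$; once that is in hand, the rest is a routine invocation of semicontinuity together with the R-G minimax machinery already established.
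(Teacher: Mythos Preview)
Your proposal is correct and follows essentially the same approach as the paper: derive the pointwise inequality \(\sup_{\dinp}[\RD{\rno}{\Wm(\dinp)}{\mQ}-\lgm\cdot\costf(\dinp)]\geq \GMIL{\rno}{\mP}{\Wm}{\lgm}+\RD{\rno}{\qga{\rno,\mP}{\lgm}}{\mQ}\) from \eqref{eq:Ginformation-neq-A} (or \eqref{eq:Ginformation-one-A} for \(\rno=1\)), then pass to the limit along a capacity-achieving sequence using Theorems \ref{thm:Lminimax}/\ref{thm:Gminimax} and the lower semicontinuity of the \renyi divergence. The only cosmetic difference is that the paper phrases the first step via the identity \(\RD{\rno}{\Wm(\dinp)e^{\frac{1-\rno}{\rno}\lgm\cdot\costf(\dinp)}}{\mQ}=\RD{\rno}{\Wm(\dinp)}{\mQ}-\lgm\cdot\costf(\dinp)\) rather than your explicit log-sum-exp computation, but these are the same observation.
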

One can prove a similar, but weaker, result using Lemma \ref{lem:information} and Theorem \ref{thm:Lminimax}.
The right most term of the resulting bound is  
\(\RD{\rno\wedge 1}{\qma{\rno,\Wm}{\lgm}}{\mQ}\) rather than
\(\RD{\rno}{\qma{\rno,\Wm}{\lgm}}{\mQ}\). 

Lemma \ref{lem:GEHB} and the continuity of the A-L capacity \(\RCL{\rno}{\Wm}{\lgm}\) as a 
function of \(\lgm\), established in Lemma \ref{lem:Lcapacity}-(\ref{Lcapacity:function}), 
imply the continuity of the A-L center \(\qma{\rno,\Wm}{\lgm}\) in  \(\lgm\) for the total 
variation topology  on \(\pmea{\outA}\) via Lemma \ref{lem:divergence-pinsker}. 

\begin{lemma}\label{lem:Lcentercontinuity} 
For any \(\rno\in \reals{+}\), channel \(\Wm:\inpS\to \pmea{\outA}\) with 
a cost function \(\costf:\inpS\to \reals{\geq0}^{\ell}\), 
and Lagrange multiplier \(\lgm_{0} \in \reals{\geq0}^{\ell}\) satisfying \(\RCL{\rno}{\Wm}{\lgm_{0}}<\infty\),
\begin{align}
\label{eq:lem:Lcentercontinuity}
\RD{\rno}{\qma{\rno,\Wm}{\lgm_{2}}}{\qma{\rno,\Wm}{\lgm_{1}}} 
&\leq \RCL{\rno}{\Wm}{\lgm_{1}}-\RCL{\rno}{\Wm}{\lgm_{2}}
&
&\forall \lgm_{1},\lgm_{2}\in \reals{\geq0}^{\ell} \mbox{~such that~}\lgm_{0}\leq \lgm_{1} \leq \lgm_{2}.
\end{align}
Furthermore \(\qma{\rno,\Wm}{\lgm}\) is continuous in \(\lgm\) on  
\(\{\lgm:\exists\epsilon>0~s.t.~ \RCL{\rno}{\Wm}{\lgm-\epsilon\uc}\!<\!\infty\!\}\) 
for the total variation topology on \(\pmea{\outA}\).
\end{lemma}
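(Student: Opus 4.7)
The plan is to derive the inequality \eqref{eq:lem:Lcentercontinuity} directly from the van Erven-Harremo\"{e}s bound for the A-L capacity, Lemma \ref{lem:GEHB}, and then deduce the continuity of the A-L center using Pinsker's inequality together with the continuity of the A-L capacity from Lemma \ref{lem:Lcapacity}-(\ref{Lcapacity:function}).

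First I would establish that the A-L centers in \eqref{eq:lem:Lcentercontinuity} are well-defined: since \(\RCL{\rno}{\Wm}{\lgm}\) is nonincreasing in \(\lgm\) by Lemma \ref{lem:Lcapacity}-(\ref{Lcapacity:function}), the assumption \(\RCL{\rno}{\Wm}{\lgm_{0}}<\infty\) together with \(\lgm_{0}\leq \lgm_{1}\leq \lgm_{2}\) gives \(\RCL{\rno}{\Wm}{\lgm_{1}},\RCL{\rno}{\Wm}{\lgm_{2}}<\infty\), so Theorem \ref{thm:Lminimax} ensures existence of \(\qma{\rno,\Wm}{\lgm_{1}}\) and \(\qma{\rno,\Wm}{\lgm_{2}}\). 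Then I would apply Lemma \ref{lem:GEHB} at the Lagrange multiplier \(\lgm_{2}\) with the choice \(\mQ=\qma{\rno,\Wm}{\lgm_{1}}\), obtaining
\begin{align*}
\sup\nolimits_{\dinp\in\inpS}\RD{\rno}{\Wm(\dinp)}{\qma{\rno,\Wm}{\lgm_{1}}}-\lgm_{2}\cdot\costf(\dinp)
&\geq \RCL{\rno}{\Wm}{\lgm_{2}}+\RD{\rno}{\qma{\rno,\Wm}{\lgm_{2}}}{\qma{\rno,\Wm}{\lgm_{1}}}.
\end{align*}
Because \(\costf\geq 0\) and \(\lgm_{1}\leq \lgm_{2}\), the left-hand side is bounded above by the same supremum with \(\lgm_{1}\) replacing \(\lgm_{2}\), which equals \(\RCL{\rno}{\Wm}{\lgm_{1}}\) by \eqref{eq:thm:Lminimaxradiuscenter}. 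Rearranging yields \eqref{eq:lem:Lcentercontinuity}.

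For the continuity claim, I would fix \(\tilde{\lgm}\) in the set \(\{\lgm:\exists\epsilon>0~\mathrm{s.t.}~\RCL{\rno}{\Wm}{\lgm-\epsilon\uc}<\infty\}\), pick a witness \(\epsilon>0\), and set \(\lgm_{0}\DEF\tilde{\lgm}-\epsilon\uc\). For every \(\lgm\in\reals{\geq0}^{\ell}\) with \(\lgm\geq \lgm_{0}\) (which includes a neighborhood of \(\tilde{\lgm}\)), I would split the comparison through the componentwise minimum \(\lgm\wedge\tilde{\lgm}\), which satisfies \(\lgm_{0}\leq \lgm\wedge\tilde{\lgm}\leq \lgm\) and \(\lgm_{0}\leq \lgm\wedge\tilde{\lgm}\leq \tilde{\lgm}\). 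Applying \eqref{eq:lem:Lcentercontinuity} to each of these two ordered pairs and combining with Pinsker's inequality (Lemma \ref{lem:divergence-pinsker}) gives
\begin{align*}
\lon{\qma{\rno,\Wm}{\lgm}-\qma{\rno,\Wm}{\tilde{\lgm}}}
&\leq \sqrt{\tfrac{2}{\rno\wedge 1}\left(\RCL{\rno}{\Wm}{\lgm\wedge\tilde{\lgm}}-\RCL{\rno}{\Wm}{\lgm}\right)}
+\sqrt{\tfrac{2}{\rno\wedge 1}\left(\RCL{\rno}{\Wm}{\lgm\wedge\tilde{\lgm}}-\RCL{\rno}{\Wm}{\tilde{\lgm}}\right)}
\end{align*}
via the triangle inequality for the total variation norm.

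To conclude, I would invoke the continuity of \(\RCL{\rno}{\Wm}{\cdot}\) on the target set, which is exactly Lemma \ref{lem:Lcapacity}-(\ref{Lcapacity:function}): as \(\lgm\to\tilde{\lgm}\) we have \(\lgm\wedge\tilde{\lgm}\to\tilde{\lgm}\), both \(\RCL{\rno}{\Wm}{\lgm}\) and \(\RCL{\rno}{\Wm}{\lgm\wedge\tilde{\lgm}}\) converge to \(\RCL{\rno}{\Wm}{\tilde{\lgm}}\), and therefore the bound above tends to zero. No step here looks especially delicate; the only subtlety I anticipate is the direction of the arguments in the \renyi divergence, which forces me to apply Lemma \ref{lem:GEHB} at \(\lgm_{2}\) rather than at \(\lgm_{1}\) and to use the componentwise minimum (not the maximum) as the intermediate point in the triangle inequality.
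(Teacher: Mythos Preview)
Your proposal is correct and follows essentially the same approach as the paper: apply Lemma \ref{lem:GEHB} at \(\lgm_{2}\) with \(\mQ=\qma{\rno,\Wm}{\lgm_{1}}\), use \(\costf\geq 0\) and \(\lgm_{1}\leq\lgm_{2}\) to pass to the supremum with \(\lgm_{1}\), and identify that supremum with \(\RCL{\rno}{\Wm}{\lgm_{1}}\) via \eqref{eq:thm:Lminimaxradiuscenter}; then combine \eqref{eq:lem:Lcentercontinuity} with Pinsker and the continuity of \(\RCL{\rno}{\Wm}{\cdot}\) from Lemma \ref{lem:Lcapacity}-(\ref{Lcapacity:function}). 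The only cosmetic difference is that the paper splits through the componentwise maximum \(\lgm_{1}\vee\lgm_{2}\) rather than the minimum; since Pinsker's inequality symmetrizes the total variation bound, either intermediate point works, so your remark that the minimum is ``forced'' is not quite accurate, but this does not affect the argument.
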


\subsection{Information Measures for Transition Probabilities}\label{sec:cost-TP}
We have defined the conditional \renyi divergence, the Augustin information, 
the A-L information, and the R-G information, only for input distributions 
in \(\pdis{\inpS}\), 
i.e. for probability mass functions that are zero in all but finite number 
of elements of \(\inpS\). 
In many practically relevant and analytically interesting models, however, 
the input set \(\inpS\) is an uncountably infinite set equipped with a 
\(\sigma\)-algebra \(\inpA\).
The Gaussian channels ---possibly with multiple input and output antennas 
and fading--- and the Poisson channels are among the most prominent 
examples of such models.  
For such models, it is often desirable to extend the definitions of 
the Augustin information and the A-L information from \(\pdis{\inpS}\) 
to \(\pmea{\inpA}\).
For instance, in the additive Gaussian channels described in Examples
\ref{eg:SGauss} and \ref{eg:PGauss},
the equality \(\RMI{\rno}{\mP}{\Wm}=\CRC{\rno}{\Wm}{\costc}\) is 
not satisfied by any probability mass function \(\mP\) satisfying the cost
constraint; but it is satisfied by the zero mean Gaussian distribution 
with variance \(\costc\).

In the following, we will first show that if \(\outA\) is a countably generated
\(\sigma\)-algebra, then one can generalize the definitions of the conditional 
\renyi divergence, the Augustin information, and the A-L information from 
\(\pdis{\inpS}\) to \(\pmea{\inpA}\) provided that \(\Wm\) and \(\Qm\)
are not only functions from \(\inpS\) to \(\pmea{\outA}\), but also 
transition probabilities from \((\inpS,\inpA)\) to \((\outS,\outA)\). 
After that we will show that if in addition \(\inpA\) is countably separated,
then the supremum of A-L information \(\RMIL{\rno}{\mP}{\Wm}{\lgm}\) over 
\(\pmea{\inpA}\) is equal to the A-L radius \(\RRL{\rno}{\Wm}{\lgm}\), 
see Theorem \ref{thm:general-Lminimax}.
This will imply that the cost constrained Augustin capacity \(\CRC{\rno}{\Wm}{\costc}\)
---defined in \eqref{eq:def:costcapacity}---
is equal to the supremum of the Augustin information \(\RMI{\rno}{\mP}{\Wm}\)
over members of \(\pmea{\inpA}\)  satisfying \(\EXS{\mP}{\costf}\leq\costc\),
as well, at least for the cost constraints that are in the interior of the set 
of all feasible constraints, see Theorem \ref{thm:general-minimax}.

Let us first recall the definition of transition 
probability.
We adopt the definition provided by Bogachev \cite[10.7.1]{bogachev}
with a minor modification: we use \(\Wm(\oev|\dinp)\) instead of \(\Wm(\dinp|\oev)\).
\begin{definition}\label{def:transitionprobability}  
	Let \((\inpS,\inpA)\) and \((\outS,\outA)\) be measurable  spaces. Then a function 
	\(\Wm:\outA\times\inpS\to[0,1]\) is called a transition probability 
	(a stochastic kernel / a Markov kernel) from \((\inpS,\inpA)\) to \((\outS,\outA)\)
	if it satisfies the following two 
	conditions.
	\begin{enumerate}[(i)]
		\item\label{TPdef-c} For all \(\dinp\in\inpS\), the function \(\Wm(\cdot|\dinp):\outA\to[0,1]\) 
		is a probability measure on \((\outS,\outA)\).
		\item\label{TPdef-m} For all \(\oev\in\outA\), the function \(\Wm(\oev|\cdot):\inpS\to[0,1]\) 
		is a \((\inpA,\rborel{[0,1]})\)-measurable function. 
	\end{enumerate}
\end{definition} 
We denote the set of all transition probabilities from \((\inpS,\inpA)\) to \((\outS,\outA)\)
by \(\pmea{\outA|\inpA}\) with the tacit understanding that \(\inpS\) and \(\outS\) will
be clear from the context. 
If \(\Wm\) satisfies (\ref{TPdef-c}), then \(\Wm:\inpS\to\pmea{\outA}\) 
is a channel, i.e. \(\Wm\) is a member of \(\pmea{\outA|\inpS}\), even if \(\Wm\) does not 
satisfy (\ref{TPdef-m}).
Hence \(\pmea{\outA|\inpA}\subset \pmea{\outA|\inpS}\).
Inspired by this observation, we denote the probability measure \(\Wm(\cdot|\dinp)\) 
by \(\Wm(\dinp)\) whenever it is notationally convenient and unambiguous.

In order to extend the definition of the conditional \renyi divergence 
from \(\pdis{\inpS}\) to \(\pmea{\inpA}\), we ensure the 
\(\inpA\)-measurability of \(\RD{\rno}{\Wm(\dinp)}{\Qm(\dinp)}\) on 
\(\inpS\) and replace the sum in \eqref{eq:def:conditionaldivergence}
with an integral.
If \((\inpS,\tau)\) is a topological space and \(\inpA\) is the associated Borel 
\(\sigma\)-algebra, then one can establish the measurability by first establishing 
the continuity. 
Such a continuity result holds if both \(\der{\Wm(\dinp)}{\rfm}\) and 
\(\der{\Qm(\dinp)}{\rfm}\) are continuous in \(\dinp\) for 
\(\rfm\)-almost every \(\dout\) for some probability measure \(\rfm\)
for which \((\Wm(\dinp)+\Qm(\dinp))\AC\rfm\) for all \(\dinp\in\inpS\).
At times this hypothesis on \(\Wm\) and \(\Qm\) might not be easy to confirm.
If, on the other hand, \(\Wm\) and \(\Qm\) are transition probabilities from 
\((\inpS,\inpA)\) to \((\outS,\outA)\) for a countably generated \(\outA\),
then the desired measurability follows from the elementary properties of 
the measurable functions and Lemma \ref{lem:divergence:ISAlimit}, as we demonstrate 
in the following.
\begin{lemma}\label{lem:divergence-measurability}
	For any \(\rno\in\reals{+}\), countable generated \(\sigma\)-algebra \(\outA\) 
	of subsets of \(\outS\), and \(\Wm, \Qm \in \pmea{\outA|\inpA}\) the function
	\(\RD{\rno}{\Wm(\cdot)}{\Qm(\cdot)}:\inpS\to[0,\infty]\) is \(\inpA\)-measurable.
\end{lemma}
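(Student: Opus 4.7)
The plan is to exploit the countable generation of $\outA$ to reduce the problem to measurability of $\RD{\rno}{\cdot}{\cdot}$ on finite $\sigma$-algebras, and then invoke Lemma \ref{lem:divergence:ISAlimit} to pass to the limit. First I would fix a countable generating family $\{\oev_{\jnd}\}_{\jnd\in\integers{+}}\subset \outA$ and set $\alg{G}_{\ind}\DEF\gsal{\oev_{1},\ldots,\oev_{\ind}}$. Each $\alg{G}_{\ind}$ is a finite $\sigma$-algebra, so it is determined by a finite partition $\{\oev_{\ind,1},\ldots,\oev_{\ind,k_{\ind}}\}$ of $\outS$ into $\alg{G}_{\ind}$-atoms, and $\gsal{\cup_{\ind}\alg{G}_{\ind}}=\outA$ by construction.

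Next I would show $\inpA$-measurability of $\RDF{\rno}{\alg{G}_{\ind}}{\Wm(\dinp)}{\Qm(\dinp)}$ as a function of $\dinp$ for each fixed $\ind$. By Definition \ref{def:divergence} applied to the discrete measures $\wmn{|\alg{G}_{\ind}}$ and $\qmn{|\alg{G}_{\ind}}$,
\begin{align}
\notag
\RDF{\rno}{\alg{G}_{\ind}}{\Wm(\dinp)}{\Qm(\dinp)}
&=\begin{cases}
\tfrac{1}{\rno-1}\ln \sum\nolimits_{\jnd=1}^{k_{\ind}}
\Wm(\oev_{\ind,\jnd}|\dinp)^{\rno}\Qm(\oev_{\ind,\jnd}|\dinp)^{1-\rno}
&\rno\neq 1
\\
\sum\nolimits_{\jnd=1}^{k_{\ind}} \Wm(\oev_{\ind,\jnd}|\dinp)
\left[\ln \Wm(\oev_{\ind,\jnd}|\dinp)-\ln\Qm(\oev_{\ind,\jnd}|\dinp)\right]
&\rno=1
\end{cases}
\end{align}
with the usual conventions $0^{\rno}\cdot 0^{1-\rno}=0$ and $0\ln 0=0$. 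For each $\jnd$, the maps $\dinp\mapsto\Wm(\oev_{\ind,\jnd}|\dinp)$ and $\dinp\mapsto \Qm(\oev_{\ind,\jnd}|\dinp)$ are $\inpA$-measurable by condition (\ref{TPdef-m}) of Definition \ref{def:transitionprobability}; composing with the Borel-measurable functions $(a,b)\mapsto a^{\rno}b^{1-\rno}$ or $(a,b)\mapsto a(\ln a-\ln b)$ and taking finite sums and the logarithm preserves $\inpA$-measurability.

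Finally I would invoke Lemma \ref{lem:divergence:ISAlimit} with the increasing family $\{\alg{G}_{\ind}\}_{\ind\in\integers{+}}$: for every $\dinp\in\inpS$,
\begin{align}
\notag
\lim\nolimits_{\ind\to\infty}\RDF{\rno}{\alg{G}_{\ind}}{\Wm(\dinp)}{\Qm(\dinp)}
&=\RDF{\rno}{\outA}{\Wm(\dinp)}{\Qm(\dinp)}
=\RD{\rno}{\Wm(\dinp)}{\Qm(\dinp)}.
\end{align}
Thus $\RD{\rno}{\Wm(\cdot)}{\Qm(\cdot)}$ is the pointwise limit of a sequence of $\inpA$-measurable functions, hence itself $\inpA$-measurable. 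The whole argument is essentially bookkeeping; the only mildly delicate point is handling the boundary values where one of $\Wm(\oev_{\ind,\jnd}|\dinp)$ or $\Qm(\oev_{\ind,\jnd}|\dinp)$ vanishes, which is resolved by the standard $0/0$, $0\ln 0$ conventions built into Definition \ref{def:divergence}.
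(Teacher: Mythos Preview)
Your proof is correct and follows essentially the same route as the paper: build finite sub-$\sigma$-algebras from a countable generating family, observe that the restricted divergence is an explicit Borel function of the finitely many values $\Wm(\oev|\dinp),\Qm(\oev|\dinp)$ and hence $\inpA$-measurable, and then pass to the limit via Lemma~\ref{lem:divergence:ISAlimit}. The only difference is cosmetic ordering—the paper invokes the limit lemma first and then checks measurability of each approximant, whereas you do the reverse.
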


\begin{proof}[Proof of Lemma \ref{lem:divergence-measurability}]
	There exists \(\{\oev_{\ind}\}_{\ind\in\integers{+}}\subset\outA\) such that
	\(\outA=\gsal{\{\oev_{\ind}:\ind\in\integers{+}\}}\)
	because \(\outA\) is countably generated \(\sigma\)-algebra.
	Let \(\outA_{\ind}\) be
	\begin{align}
	\notag
	\outA_{\ind}
	&\DEF\gsal{\{\oev_{1},\ldots,\oev_{\ind}\}}
	&
	&\ind\in\integers{+}.
	\end{align}
	Then \(\outA_{1}\subset\outA_{2}\subset\cdots\subset \outA\),
	\(\outA=\gsal{\cup_{\ind=1}^{\infty}\outA_{\ind}}\),
	and Lemma \ref{lem:divergence:ISAlimit} implies that
	\begin{align}
	\label{eq:divergence-measurability}	
	\RD{\rno}{\Wm(\dinp)}{\Qm(\dinp)}
	&=\lim\nolimits_{\ind \to \infty}\RDF{\rno}{\outA_{\ind}}{\Wm(\dinp)}{\Qm(\dinp)}
	&
	&\forall \dinp\in\inpS.
	\end{align}
	On the other hand \(\outA_{\ind}\) is finite set for all \(\ind\in\integers{+}\).
	Thus for all \(\ind\in\integers{+}\) there exists a \(\outA_{\ind}\)-measurable 
	finite partition
%%%%%%%%%%%%	\footnote{A collection \(\opa\) of subsets  of \(\outS\) 
%%%%%%%%%%%%		is called a \(\outA\)-measurable partition of \(\outS\) iff 
%%%%%%%%%%%%		\(\cup_{\oev\in \opa}=\outS\), \(\emptyset\notin\opa\), \(\oev\cap\tilde{\oev}=\emptyset\) 
%%%%%%%%%%%%		for all \(\oev,\tilde{\oev} \in \opa\) and \(\opa\subset\outA\).}
	\(\opa_{\ind}\) of \(\outS\).
	Thus as a result of the definition of the \renyi divergence given in \eqref{eq:def:divergence} 
	we have
	\begin{align}
	\notag
	\RDF{\rno}{\outA_{\ind}}{\Wm(\dinp)}{\Qm(\dinp)}
	&=	\begin{cases}
	\tfrac{1}{\rno-1} 
	\ln \sum_{\oev \in \opa_{\ind}} \left(\Wm(\oev|\dinp)\right)^{\rno} \left(\Qm(\oev|\dinp)\right)^{1-\rno} 
	& \rno\in \reals{+}\setminus\{1\}
	\\
	\sum_{\oev \in \opa_{\ind}} \Wm(\oev|\dinp) \ln \tfrac{\Wm(\oev|\dinp)}{\Qm(\oev|\dinp)}
	&\rno=1
	\end{cases}.
	\end{align}
	Then \(\RDF{\rno}{\outA_{\ind}}{\Wm(\dinp)}{\Qm(\dinp)}\) is a \(\inpA\)-measurable 
	function for any \(\ind\in\integers{+}\) by \cite[Thm. 2.1.5-(i-iv)]{bogachev} and 
	\cite[Remark 2.1.6]{bogachev} because \(\Wm(\oev|\dinp)\) and \(\Qm(\oev|\dinp)\) are 
	\(\inpA\)-measurable for all \(\oev\in\opa_{\ind}\) by the hypothesis of the lemma.
	Then \(\RD{\rno}{\Wm(\dinp)}{\Qm(\dinp)}\) is \(\inpA\)-measurable as a result of 
	\eqref{eq:divergence-measurability}
	by \cite[Thm. 2.1.5-(v)]{bogachev} and 
	\cite[Remark 2.1.6]{bogachev}.
\end{proof} 

\begin{definition}\label{def:general-conditionaldivergence} 
	For any \(\rno\in\reals{+}\), countable generated \(\sigma\)-algebra \(\outA\) 
	of subsets of \(\outS\), \(\Wm\in \pmea{\outA|\inpA}\), and 
	\(\mP\in\pmea{\inpA}\)
	\emph{the order \(\rno\) conditional \renyi divergence for 
		the input distribution \(\mP\)} is
	\begin{align}
	\label{eq:def:general-conditionaldivergence}
	\CRD{\rno}{\Wm}{\Qm}{\mP}
	&\DEF \int \RD{\rno}{\Wm(\dinp)}{\Qm(\dinp)}\mP(\dif{\dinp}).
	\end{align}
	If \(\exists\mQ\in\pmea{\outA}\) such that  \(\Qm(\dinp)=\mQ\) for 
	\(\mP\)-a.s.,
	then we denote \(\CRD{\rno}{\Wm}{\Qm}{\mP}\) by \(\CRD{\rno}{\Wm}{\mQ}{\mP}\).
\end{definition}
Then one can define the Augustin information and the A-L information for all 
\(\mP\) in \(\pmea{\inpA}\),
provided that \(\Wm\) is in \(\pmea{\outA|\inpA}\) for a countably generated 
\(\outA\) and \(\costf\) is a \(\inpA\)-measurable function. 
\begin{definition}\label{def:general-information}
	For any \(\rno\in\reals{+}\), countable generated \(\sigma\)-algebra \(\outA\) 
	of subsets of \(\outS\), \(\Wm\in\pmea{\outA|\inpA}\), and  \(\mP\in\pmea{\inpA}\)
	\emph{the order \(\rno\) Augustin information for the input distribution \(\mP\)} is
	\begin{align}
	\label{eq:def:general-information}
	\RMI{\rno}{\mP}{\Wm}
	&\DEF \inf\nolimits_{\mQ\in \pmea{\outA}} \CRD{\rno}{\Wm}{\mQ}{\mP}.
	\end{align}
	Furthermore, for any \(\inpA\)-measurable cost function
	\(\costf:\inpS\to \reals{\geq0}^{\ell}\) and \(\lgm \in \reals{\geq0}^{\ell}\)
	\emph{the order \(\rno\) Augustin-Legendre information for the input distribution \(\mP\) 
		and the Lagrange multiplier \(\lgm\)} 
	is defined as 
	\begin{align}
	\label{eq:def:general-Linformation} 
	\RMIL{\rno}{\mP}{\Wm}{\lgm} 
	&\DEF \RMI{\rno}{\mP}{\Wm}-\lgm\cdot \EXS{\mP}{\costf}
	\end{align}
	with the understanding that if \(\lgm\cdot \EXS{\mP}{\costf}=\infty\), then 
	\(\RMIL{\rno}{\mP}{\Wm}{\lgm}=-\infty\).
\end{definition}
Although we have included \(\lgm\cdot \EXS{\mP}{\costf}=\infty\) case in
the formal definition of the A-L information, we will only be interested in \(\mP\)'s for
which 	\(\lgm \cdot \EXS{\mP}{\costf}\) is finite. We define \(\csetA^{\lgm}\)
to be the set of all such \(\mP\)'s:
\begin{align}
\label{eq:def:finiteexpectedcost}
\csetA^{\lgm}
&\DEF\{\mP\in\pmea{\inpA}: \lgm\cdot\EXS{\mP}{\costf}<\infty\}.
\end{align}

For an arbitrary \(\sigma\)-algebra \(\inpA\), 
the singletons (i.e. sets with only one element) 
are not necessarily measurable sets; thus \(\pdis{\inpS}\) is not necessarily 
a subset of \(\csetA^{\lgm}\). 
If \(\inpA\) is countably separated, then the singletons
are in \(\inpA\) by \cite[Thm. 6.5.7]{bogachev},
\(\pdis{\inpS}\subset\csetA^{\lgm}\)
 and 
\(\sup_{\mP\in\csetA^{\lgm}}\RMIL{\rno}{\mP}{\Wm}{\lgm}
\geq \RCL{\rno}{\Wm}{\lgm}\). 
The reverse inequality follows from Theorem \ref{thm:Lminimax}
and we have 
\(\sup_{\mP\in\csetA^{\lgm}}\RMIL{\rno}{\mP}{\Wm}{\lgm}=\RCL{\rno}{\Wm}{\lgm}\).
Theorem \ref{thm:general-Lminimax} states these observations formally
together with the ones about the A-L center
through a minimax theorem.
\begin{theorem}\label{thm:general-Lminimax}
	Let \(\inpA\) be a countably separated \(\sigma\)-algebra,
	\(\outA\) be a countably generated \(\sigma\)-algebra,
	\(\Wm\) be a transition probability from \((\inpS,\inpA)\) to \((\outS,\outA)\),
	\(\costf:\inpS\to\reals{\geq0}^{\ell}\) be a \(\inpA\)-measurable cost function, 
	and \(\rno\in\reals{+}\).
	Then for all \(\lgm\in\reals{\geq0}^{\ell}\) we have
	\begin{align}
	\label{eq:thm:general-Lminimax}
	\sup\nolimits_{\mP\in \csetA^{\lgm}
}\inf\nolimits_{\mQ \in \pmea{\outA}} 
	\CRD{\rno}{\Wm}{\mQ}{\mP }-\lgm\cdot\EXS{\mP}{\costf}
	&=
	\inf\nolimits_{\mQ \in \pmea{\outA}} \sup\nolimits_{\mP\in \csetA^{\lgm}
}
	\CRD{\rno}{\Wm}{\mQ}{\mP }-\lgm\cdot\EXS{\mP}{\costf}
	\\
	\label{eq:thm:general-Lminimaxradius}
	&=\inf\nolimits_{\mQ\in\pmea{\outA}}\sup\nolimits_{\dinp \in \inpS} 
	\RD{\rno}{\Wm(\dinp)}{\mQ}-\lgm\cdot \costf(\dinp)
\\
\label{eq:thm:general-Lminimaxradius-pdis}
&=\RCL{\rno}{\Wm}{\lgm}
	\end{align}
	where \(\csetA^{\lgm}\) is defined in \eqref{eq:def:finiteexpectedcost}.
	If \(\RCL{\rno}{\Wm}{\lgm}\) is finite,
	then \(\exists!\qma{\rno,\Wm}{\lgm}\!\in\!\pmea{\outA}\), called the order 
	\(\rno\) Augustin-Legendre center of \(\Wm\) for the Lagrange multiplier \(\lgm\),
	satisfying
	\begin{align}
	\label{eq:thm:general-Lminimaxcenter}
\RCL{\rno}{\Wm}{\lgm}	
&=\sup\nolimits_{\mP \in \csetA^{\lgm}} 
	\CRD{\rno}{\Wm}{\qma{\rno,\Wm}{\lgm}}{\mP}-\lgm\cdot\EXS{\mP}{\costf}
	\\
	\label{eq:thm:general-Lminimaxradiuscenter}
	&=\sup\nolimits_{\dinp \in \inpS} \RD{\rno}{\Wm(\dinp)}{\qma{\rno,\Wm}{\lgm}}-\lgm\cdot\costf(\dinp).
	\end{align}
\end{theorem}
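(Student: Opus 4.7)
The plan is to reduce Theorem \ref{thm:general-Lminimax} to the already-established pmf-level minimax Theorem \ref{thm:Lminimax} by a two-sided sandwich argument. The structural key is that when $\inpA$ is countably separated, every singleton lies in $\inpA$ by \cite[Thm.~6.5.7]{bogachev}; since $\costf(\dinp) \in \reals{\geq0}^{\ell}$ is finite on each $\dinp$, every Dirac measure $\delta_{\dinp}$ belongs to $\csetA^{\lgm}$. In particular $\pdis{\inpS} \subset \csetA^{\lgm}$. The $\inpA$-measurability of $\dinp \mapsto \RD{\rno}{\Wm(\dinp)}{\mQ}$ provided by Lemma \ref{lem:divergence-measurability}, together with the $\inpA$-measurability of $\costf$, ensures that both $\CRD{\rno}{\Wm}{\mQ}{\mP}$ from \eqref{eq:def:general-conditionaldivergence} and $\RMIL{\rno}{\mP}{\Wm}{\lgm}$ from \eqref{eq:def:general-Linformation} are well-defined for every $\mP \in \pmea{\inpA}$ and $\mQ \in \pmea{\outA}$.

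First I would establish the pointwise-in-$\mQ$ identity
\begin{align}
\notag
\sup\nolimits_{\mP\in\csetA^{\lgm}}\bigl[\CRD{\rno}{\Wm}{\mQ}{\mP}-\lgm\cdot\EXS{\mP}{\costf}\bigr]
&=\sup\nolimits_{\dinp\in\inpS}\bigl[\RD{\rno}{\Wm(\dinp)}{\mQ}-\lgm\cdot\costf(\dinp)\bigr].
\end{align}
The ``$\leq$'' direction is Jensen for the $\mP$-integral of $\RD{\rno}{\Wm(\cdot)}{\mQ}-\lgm\cdot\costf(\cdot)$, which is bounded above by its supremum; the ``$\geq$'' direction plugs in Dirac measures $\delta_{\dinp}\in\csetA^{\lgm}$. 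Taking $\inf_{\mQ\in\pmea{\outA}}$ of both sides yields the RHS of \eqref{eq:thm:general-Lminimax} equals the A-L radius $\RRL{\rno}{\Wm}{\lgm}$, i.e. \eqref{eq:thm:general-Lminimaxradius}; by Theorem \ref{thm:Lminimax} this A-L radius equals $\RCL{\rno}{\Wm}{\lgm}$, giving \eqref{eq:thm:general-Lminimaxradius-pdis}.

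Next, the LHS of \eqref{eq:thm:general-Lminimax}, which equals $\sup_{\mP\in\csetA^{\lgm}}\RMIL{\rno}{\mP}{\Wm}{\lgm}$ by \eqref{eq:def:general-information}, is sandwiched: from below it dominates $\sup_{\mP\in\pdis{\inpS}}\RMIL{\rno}{\mP}{\Wm}{\lgm}=\RCL{\rno}{\Wm}{\lgm}$ because of the inclusion $\pdis{\inpS}\subset\csetA^{\lgm}$, and from above it is bounded by the RHS via the standard max-min inequality. Combining with the previous step closes the minimax loop and establishes \eqref{eq:thm:general-Lminimax}--\eqref{eq:thm:general-Lminimaxradius-pdis} simultaneously.

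When $\RCL{\rno}{\Wm}{\lgm}$ is finite, existence and uniqueness of $\qma{\rno,\Wm}{\lgm}$ satisfying \eqref{eq:thm:general-Lminimaxradiuscenter} is inherited directly from Theorem \ref{thm:Lminimax}. Equation \eqref{eq:thm:general-Lminimaxcenter} then follows by another sandwich: ``$\leq$'' is obtained by integrating the bound $\RD{\rno}{\Wm(\dinp)}{\qma{\rno,\Wm}{\lgm}}-\lgm\cdot\costf(\dinp)\leq\RCL{\rno}{\Wm}{\lgm}$ from \eqref{eq:thm:Lminimaxradiuscenter} against $\mP\in\csetA^{\lgm}$, while ``$\geq$'' is obtained by restricting the supremum to $\pdis{\inpS}\subset\csetA^{\lgm}$ and invoking \eqref{eq:thm:Lminimaxcenter}. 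For the uniqueness of any $\mQ\in\pmea{\outA}$ meeting \eqref{eq:thm:general-Lminimaxcenter}, such an $\mQ$ would satisfy \eqref{eq:thm:Lminimaxcenter} after restricting the supremum to $\pdis{\inpS}$, forcing $\mQ=\qma{\rno,\Wm}{\lgm}$ by the uniqueness clause of Theorem \ref{thm:Lminimax}. The main obstacle is therefore not analytical but notational: verifying that $\csetA^{\lgm}\supset\pdis{\inpS}$ requires the countable-separability hypothesis, and verifying that the supremum over $\mP\in\csetA^{\lgm}$ of the integral reduces to a supremum over single points $\dinp$ requires Dirac measures to be legitimate elements of $\pmea{\inpA}$, both of which hinge on the hypothesis that singletons are $\inpA$-measurable.
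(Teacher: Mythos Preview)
Your proposal is correct and follows essentially the same approach as the paper: both use the inclusion \(\pdis{\inpS}\subset\csetA^{\lgm}\) (valid because countably separated implies singletons are measurable), the pointwise-in-\(\mQ\) identity \(\sup_{\mP\in\csetA^{\lgm}}[\CRD{\rno}{\Wm}{\mQ}{\mP}-\lgm\cdot\EXS{\mP}{\costf}]=\sup_{\dinp\in\inpS}[\RD{\rno}{\Wm(\dinp)}{\mQ}-\lgm\cdot\costf(\dinp)]\), and the max-min inequality to sandwich everything between the two sides of Theorem~\ref{thm:Lminimax}. Your write-up is slightly more explicit than the paper's in spelling out the uniqueness argument for \eqref{eq:thm:general-Lminimaxcenter}, and one cosmetic point: what you call ``Jensen'' for the \(\leq\) direction is really just the elementary bound \(\int g\,\dif{\mP}\leq\sup g\) for a probability measure, not a convexity argument.
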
 

\begin{proof}[Proof of Theorem \ref{thm:general-Lminimax}]	
Since \(\pdis{\inpS}\subset\csetA^{\lgm}\), the max-min inequality implies
	\begin{align}
	\notag
	\sup\nolimits_{\mP\in \pdis{\inpS}}\inf\nolimits_{\mQ \in \pmea{\outA}} 
	\CRD{\rno}{\Wm}{\mQ}{\mP }-\lgm\cdot\EXS{\mP}{\costf}
	&\leq \sup\nolimits_{\mP\in \csetA^{\lgm}
	}\inf\nolimits_{\mQ \in \pmea{\outA}} 
	\CRD{\rno}{\Wm}{\mQ}{\mP }-\lgm\cdot\EXS{\mP}{\costf}
	\\
	\notag
	&\leq \inf\nolimits_{\mQ \in \pmea{\outA}} \sup\nolimits_{\mP\in \csetA^{\lgm}
	}
	\CRD{\rno}{\Wm}{\mQ}{\mP }-\lgm\cdot\EXS{\mP}{\costf}
	\\
	\notag
	&=\inf\nolimits_{\mQ\in\pmea{\outA}}\sup\nolimits_{\dinp \in \inpS} 
	\RD{\rno}{\Wm(\dinp)}{\mQ}-\lgm\cdot \costf(\dinp).
	\end{align}
	Thus \eqref{eq:thm:general-Lminimax} and \eqref{eq:thm:general-Lminimaxradius} hold 
	as a result of \eqref{eq:thm:Lminimax} and  \eqref{eq:thm:Lminimaxradius} of 
	Theorem \ref{thm:Lminimax}
	and 
	\eqref{eq:thm:general-Lminimaxradius-pdis} follows 
by \eqref{eq:thm:Lminimaxradius} of Theorem \ref{thm:Lminimax}
and \eqref{eq:Lcapacity}.

	If \(\RCL{\rno}{\Wm}{\lgm}\) is finite, then 
	as a result of Theorem \ref{thm:Lminimax}
	there exist a unique \(\qma{\rno,\Wm}{\lgm}\in\pdis{\outA}\) satisfying
	\begin{align}
	\notag
	\sup\nolimits_{\dinp \in \inpS}\RD{\rno}{\Wm(\dinp)}{\qma{\rno,\Wm}{\lgm}}-\lgm\cdot \costf(\dinp)
	&=\RCL{\rno}{\Wm}{\lgm}.
	\end{align}
Then \eqref{eq:thm:general-Lminimaxcenter} and
\eqref{eq:thm:general-Lminimaxradiuscenter} hold 	
because 
\(\sup\nolimits_{\mP\in \csetA^{\lgm}}\CRD{\rno}{\Wm}{\mQ}{\mP }-\lgm\cdot\EXS{\mP}{\costf}
=\sup\nolimits_{\dinp \in \inpS} \RD{\rno}{\Wm(\dinp)}{\mQ}-\lgm\cdot \costf(\dinp)
\) for any \(\mQ\in\pdis{\outA}\).
\end{proof}

Let \(\csetA(\costc)\) be the subset \(\pmea{\inpA}\) 
composed of the probability measures satisfying the cost constraint \(\costc\), 
\begin{align}
\notag %\label{eq:def:general-costconstraint}
\csetA(\costc)
&\DEF\{\mP\in\pmea{\inpA}:\EXS{\mP}{\costf} \leq \costc\}.
\end{align}
Then
\(\cset(\costc)\subset\csetA(\costc)\) 
and
\(\sup_{\mP\in\csetA(\costc)}\RMI{\rno}{\mP}{\Wm}\geq \CRC{\rno}{\Wm}{\costc}\)
whenever \(\inpS\) is countably separated.
For the cost constraints in \(\inte{\fcc{\costf}}\) reverse inequality 
holds as a result of Lemma \ref{lem:Lcapacity}-(\ref{Lcapacity:interior})
and Theorem \ref{thm:general-Lminimax}
and we have \(\sup_{\mP\in\csetA(\costc)}\RMI{\rno}{\mP}{\Wm}=\CRC{\rno}{\Wm}{\costc}\).
Theorem \ref{thm:general-minimax} states these observations formally
together with the ones about the Augustin center
through a minimax theorem.

%%%%%%This inequality is an equality whenever \(\CRC{\rno}{\Wm}{\costc}\) is infinite. 
%%%%%%When \(\CRC{\rno}{\Wm}{\costc}\) is finite, then there exists at least one 
%%%%%%\(\lgm\in\reals{\geq0}^{\ell}\) satisfying 
%%%%%%\(\CRC{\rno}{\Wm}{\costc}=\RCL{\rno}{\Wm}{\lgm}+\lgm \cdot \costc\)
%%%%%%by Lemma \ref{lem:Lcapacity}-(\ref{Lcapacity:interior}).
%%%%%%As a result
%%%%%%\(\sup_{\mP\in\csetA(\costc)}\CRD{\rno}{\Wm}{\qma{\rno,\Wm}{\lgm}}{\mP}\leq \CRC{\rno}{\Wm}{\costc}\)
%%%%%%by  Theorem \ref{thm:general-Lminimax}.
%%%%%%Then  \(\sup_{\mP\in\csetA(\costc)}\RMI{\rno}{\mP}{\Wm}=\CRC{\rno}{\Wm}{\costc}\).
%%%%%%Theorem \ref{thm:general-minimax} states these observations formally
%%%%%%together with the ones about the Augustin center
%%%%%%through a minimax theorem.

\begin{theorem}\label{thm:general-minimax}
	Let \(\inpA\) be a countably separated \(\sigma\)-algebra,
\(\outA\) be a countably generated \(\sigma\)-algebra,
\(\Wm\) be a transition probability from \((\inpS,\inpA)\) to \((\outS,\outA)\),
\(\costf:\inpS\to\reals{\geq0}^{\ell}\) be a \(\inpA\)-measurable cost function, 
and \(\rno\in\reals{+}\).
For any \(\costc\in\inte{\fcc{\costf}}\) we have
\begin{align}
\label{eq:thm:general-minimax}
\sup\nolimits_{\mP \in \csetA(\costc)}   \inf\nolimits_{\mQ \in \pmea{\outA}} \CRD{\rno}{\Wm}{\mQ}{\mP}
&=
\inf\nolimits_{\mQ \in \pmea{\outA}} \sup\nolimits_{\mP \in \csetA(\costc)}   \CRD{\rno}{\Wm}{\mQ}{\mP}
\\
\label{eq:thm:general-minimax-pdis}
&=\CRC{\rno}{\Wm}{\costc}
\end{align}
where \(\CRC{\rno}{\Wm}{\costc}\) is defined in \eqref{eq:def:costcapacity}.
If \(\CRC{\rno}{\Wm}{\costc}\!\in\!\reals{\geq0}\), then
\(\exists!\qmn{\rno,\Wm,\costc}\!\in\!\pmea{\outA}\), called the order 
\(\rno\) Augustin center of \(\Wm\) for the cost constraint \(\costc\),
satisfying
\begin{align}
\label{eq:thm:general-minimaxcenter}
\CRC{\rno}{\Wm}{\costc}
&=\sup\nolimits_{\mP \in \csetA(\costc)} 
\CRD{\rno}{\Wm}{\qmn{\rno,\Wm,\costc}}{\mP}
\\
\label{eq:thm:general-minimaxcenter-pdis}
&=\sup\nolimits_{\mP \in \cset(\costc)} 
\CRD{\rno}{\Wm}{\qmn{\rno,\Wm,\costc}}{\mP}.
\end{align}	
Furthermore, \(\qmn{\rno,\Wm,\costc}=\qma{\rno,\Wm}{\lgm}\) for all \(\lgm\in\reals{\geq0}^{\ell}\)
satisfying \(\CRC{\rno}{\Wm}{\costc}=\RCL{\rno}{\Wm}{\lgm}+\lgm \cdot \costc\). 
\end{theorem}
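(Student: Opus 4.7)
The plan is to reduce the statement to Theorem \ref{thm:general-Lminimax} combined with the duality identity in Lemma \ref{lem:Lcapacity}-(\ref{Lcapacity:interior}). Since \(\inpA\) is countably separated, every singleton lies in \(\inpA\), so each \(\mP\in\pdis{\inpS}\) is a probability measure on \(\inpA\) and \(\cset(\costc)\subset\csetA(\costc)\). This immediately yields
\[
\sup\nolimits_{\mP\in\csetA(\costc)}\inf\nolimits_{\mQ\in\pmea{\outA}}\CRD{\rno}{\Wm}{\mQ}{\mP}
\geq \sup\nolimits_{\mP\in\cset(\costc)}\RMI{\rno}{\mP}{\Wm}=\CRC{\rno}{\Wm}{\costc},
\]
which, combined with the trivial max-min inequality, supplies the lower half of both \eqref{eq:thm:general-minimax} and \eqref{eq:thm:general-minimax-pdis}.

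For the upper bound, I would fix arbitrary \(\mQ\in\pmea{\outA}\) and \(\lgm\in\reals{\geq0}^{\ell}\) and apply the elementary pointwise inequality
\[
\RD{\rno}{\Wm(\dinp)}{\mQ}
\leq \Bigl(\sup\nolimits_{\widetilde{\dinp}\in\inpS}\RD{\rno}{\Wm(\widetilde{\dinp})}{\mQ}-\lgm\cdot\costf(\widetilde{\dinp})\Bigr)+\lgm\cdot\costf(\dinp).
\]
Integrating against any \(\mP\in\csetA(\costc)\) and invoking \(\EXS{\mP}{\costf}\leq \costc\) gives
\[
\CRD{\rno}{\Wm}{\mQ}{\mP}\leq \sup\nolimits_{\dinp\in\inpS}\bigl[\RD{\rno}{\Wm(\dinp)}{\mQ}-\lgm\cdot\costf(\dinp)\bigr]+\lgm\cdot\costc.
\]
Taking the supremum over \(\mP\) and then the infimum over \(\mQ\) and over \(\lgm\geq 0\), using \eqref{eq:thm:general-Lminimaxradius-pdis} of Theorem \ref{thm:general-Lminimax} and Lemma \ref{lem:Lcapacity}-(\ref{Lcapacity:interior}), yields
\[
\inf\nolimits_{\mQ\in\pmea{\outA}}\sup\nolimits_{\mP\in\csetA(\costc)}\CRD{\rno}{\Wm}{\mQ}{\mP}
\leq \inf\nolimits_{\lgm\geq 0}\bigl[\RCL{\rno}{\Wm}{\lgm}+\lgm\cdot\costc\bigr]=\CRC{\rno}{\Wm}{\costc},
\]
closing the minimax chain and establishing \eqref{eq:thm:general-minimax-pdis}.

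For the center, when \(\CRC{\rno}{\Wm}{\costc}\in\reals{\geq0}\), Lemma \ref{lem:Lcapacity}-(\ref{Lcapacity:interior}) produces a \(\lgm_{\rno,\Wm,\costc}\geq 0\) with \(\CRC{\rno}{\Wm}{\costc}=\RCL{\rno}{\Wm}{\lgm_{\rno,\Wm,\costc}}+\lgm_{\rno,\Wm,\costc}\cdot\costc\) and \(\RCL{\rno}{\Wm}{\lgm_{\rno,\Wm,\costc}}<\infty\); Theorem \ref{thm:general-Lminimax} then supplies a unique A-L center \(\qma{\rno,\Wm}{\lgm_{\rno,\Wm,\costc}}\) achieving the A-L radius. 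The pointwise-to-integral argument of the previous paragraph, applied with \(\mQ=\qma{\rno,\Wm}{\lgm_{\rno,\Wm,\costc}}\) and \(\lgm=\lgm_{\rno,\Wm,\costc}\), shows \(\sup_{\mP\in\csetA(\costc)}\CRD{\rno}{\Wm}{\qma{\rno,\Wm}{\lgm_{\rno,\Wm,\costc}}}{\mP}\leq\CRC{\rno}{\Wm}{\costc}\); the reverse inequality is immediate from the minimax identity already proved, giving \eqref{eq:thm:general-minimaxcenter}. For uniqueness, any \(\mQ\) achieving \eqref{eq:thm:general-minimaxcenter} satisfies \(\sup_{\mP\in\cset(\costc)}\CRD{\rno}{\Wm}{\mQ}{\mP}\leq\CRC{\rno}{\Wm}{\costc}\), so Theorem \ref{thm:minimax} applied to the convex set \(\cset(\costc)\) together with Lemma \ref{lem:center} force \(\mQ\) to equal the unique Augustin center of \(\cset(\costc)\), which by Lemma \ref{lem:Lcenter} coincides with \(\qma{\rno,\Wm}{\lgm}\) for every admissible \(\lgm\). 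This also yields \eqref{eq:thm:general-minimaxcenter-pdis} and the final assertion.

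The main obstacle is, in truth, not in the present argument but in the preceding ingredients on which it depends: the measurability of \(\RD{\rno}{\Wm(\cdot)}{\mQ}\) granted by Lemma \ref{lem:divergence-measurability} (which uses countable generation of \(\outA\)), and the nontrivial duality identity \(\CRC{\rno}{\Wm}{\costc}=\inf_{\lgm\geq 0}[\RCL{\rno}{\Wm}{\lgm}+\lgm\cdot\costc]\) of Lemma \ref{lem:Lcapacity}-(\ref{Lcapacity:interior}), which is the only place the interior hypothesis on \(\costc\) is essential, since it guarantees a finite supporting Lagrange multiplier. Once those tools and Theorem \ref{thm:general-Lminimax} are in hand, the theorem reduces to careful bookkeeping: a one-sided trivial inequality from \(\cset(\costc)\subset\csetA(\costc)\), and a Legendre-duality upper bound that converts the A-L radius (a supremum over \(\dinp\)) into a bound uniform over \(\mP\in\csetA(\costc)\).
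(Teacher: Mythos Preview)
Your proposal is correct and follows essentially the same route as the paper's proof: both use \(\cset(\costc)\subset\csetA(\costc)\) for the lower bound, the A-L duality via Lemma \ref{lem:Lcapacity}-(\ref{Lcapacity:interior}) together with Theorem \ref{thm:general-Lminimax} for the upper bound, and Theorem \ref{thm:minimax} with Lemma \ref{lem:Lcenter} for the center assertions. The only cosmetic difference is that the paper plugs in the A-L center \(\qma{\rno,\Wm}{\lgm}\) for a specific optimal \(\lgm\) at the outset, whereas you work with generic \(\mQ\) and \(\lgm\) via the pointwise inequality and take infima afterward; the content is identical.
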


\begin{proof}[Proof of Theorem \ref{thm:general-minimax}]
Since \(\cset(\costc)\subset\csetA(\costc)\),  the max-min inequality implies
	\begin{align}
	\notag
	\sup\nolimits_{\mP \in \cset(\costc)}   \inf\nolimits_{\mQ \in \pmea{\outA}} \CRD{\rno}{\Wm}{\mQ}{\mP}
	&\leq \sup\nolimits_{\mP \in \csetA(\costc)}   \inf\nolimits_{\mQ \in \pmea{\outA}} \CRD{\rno}{\Wm}{\mQ}{\mP}
	\\
	\notag
	&\leq
	\inf\nolimits_{\mQ \in \pmea{\outA}} \sup\nolimits_{\mP \in \csetA(\costc)}   \CRD{\rno}{\Wm}{\mQ}{\mP}.
	\end{align}
	Thus both \eqref{eq:thm:general-minimax} and \eqref{eq:thm:general-minimax-pdis} hold whenever 
	\(\CRC{\rno}{\Wm}{\costc}=\infty\) by \eqref{eq:capacity}.
	On the other hand, as a result of Theorem \ref{thm:general-Lminimax} 
	for any \(\lgm\) with finite \(\RCL{\rno}{\Wm}{\lgm}\) there exists a unique
	\(\qma{\rno,\Wm}{\lgm}\) satisfying \eqref{eq:thm:general-Lminimaxradiuscenter}. Thus we have,
	\begin{align}
	\notag
	\inf\nolimits_{\mQ \in \pmea{\outA}} \sup\nolimits_{\mP \in \csetA(\costc)}   \CRD{\rno}{\Wm}{\mQ}{\mP}
	&\leq \sup\nolimits_{\mP \in \csetA(\costc)}   \CRD{\rno}{\Wm}{\qma{\rno,\Wm}{\lgm}}{\mP}
	\\
	\notag
	&\leq \sup\nolimits_{\mP \in \csetA(\costc)}   
	\CRD{\rno}{\Wm}{\qma{\rno,\Wm}{\lgm}}{\mP}-\lgm\cdot\EXS{\mP}{\costf}+\lgm\cdot\costc
	\\
	\notag
	&\leq \RCL{\rno}{\Wm}{\lgm}+\lgm\cdot\costc.
	\end{align}
	Furthermore, if \(\CRC{\rno}{\Wm}{\costc}\in\reals{}\), then there exists at least one
	\(\lgm\in\reals{\geq0}^{\ell}\) satisfying 
	\(\CRC{\rno}{\Wm}{\costc}=\RCL{\rno}{\Wm}{\lgm}+\lgm \cdot \costc\)
	by Lemma \ref{lem:Lcapacity}-(\ref{Lcapacity:interior}).
	Then \eqref{eq:thm:general-minimax} and \eqref{eq:thm:general-minimax-pdis} hold when
	\(\CRC{\rno}{\Wm}{\costc}\in\reals{}\) and 
	\eqref{eq:thm:general-minimaxcenter} holds for \(\qmn{\rno,\Wm,\costc}=\qma{\rno,\Wm}{\lgm}\)
	provided that \(\CRC{\rno}{\Wm}{\costc}=\RCL{\rno}{\Wm}{\lgm}+\lgm \cdot \costc\).
	On the other hand 
	\(\qmn{\rno,\Wm,\costc}\) is a probability measure satisfying
	\eqref{eq:thm:general-minimaxcenter-pdis} by Theorem \ref{thm:minimax}
	and
	\(\qmn{\rno,\Wm,\costc}=\qma{\rno,\Wm}{\lgm}\)
	for all \(\lgm\) satisfying \(\CRC{\rno}{\Wm}{\costc}=\RCL{\rno}{\Wm}{\lgm}+\lgm \cdot \costc\)
	by Lemma \ref{lem:Lcenter}.
\end{proof}

The countable separability of \(\inpA\) and countable generatedness of \(\outA\) 
are fairly mild assumptions satisfied by most transition probabilities considered 
in practice.
Hence, Theorems \ref{thm:general-Lminimax} and \ref{thm:general-minimax} provide
further justification for studying the relatively simple case of probability 
mass functions, first.

The existence of an input distribution \(\mP\) 
satisfying both \(\EXS{\mP}{\costf}\leq\costc\) and 
\(\RMI{\rno}{\mP}{\Wm}=\CRC{\rno}{\Wm}{\costc}\)
is immaterial to the existence of a unique \(\qmn{\rno,\Wm,\costc}\)
or its characterization through \(\qma{\rno,\Wm}{\lgm}\)
for \(\lgm\)'s satisfying 
\(\CRC{\rno}{\Wm}{\costc}=\RCL{\rno}{\Wm}{\lgm}+\lgm\cdot\costc\)
by Lemma \ref{lem:Lcapacity}-(\ref{Lcapacity:interior},\ref{Lcapacity:optimal})
and
Theorem \ref{thm:general-minimax}.
Although one can prove the existence of such a \(\mP\) for 
certain special cases
such an input distribution does not exist in general. 
Thus, we believe, it is preferable to separate the issue of the existence
of an optimal input distribution from the discussion 
of \(\CRC{\rno}{\Wm}{\costc}\) and \(\qmn{\rno,\Wm,\costc}\) 
and their characterization via \(\RCL{\rno}{\Wm}{\lgm}\) and
\(\qma{\rno,\Wm}{\lgm}\). 
That, however, is not the standard practice, \cite[Thm. 1]{kostinaV15}.

We have assumed \(\outA\) to be countably generated in order to ensure that 
the conditional \renyi divergence used in \eqref{eq:def:general-information}
is well-defined. 
In order to define the \renyi information, however, we do not need to assume 
\(\outA\) to be countably generated; the transition probability structure 
is sufficient.
Recall that if \(\Wm\in\pmea{\outA|\inpA}\), then for any 
\(\mP\in\pmea{\inpA}\) there exists a unique probability measure 
\(\mP \mtimes \Wm\) on \((\inpS\times\outS,\inpA\otimes\outA)\) 
such that 
\begin{align}
\notag
\mP \mtimes \Wm(\oev_{\dinp}\times\oev_{\dout})
&=\int_{\oev_{\dinp}} \Wm(\oev_{\dout}|\dinp) \mP(\dif{\dinp}).
&
&\forall \oev_{\dinp}\in \inpA,\oev_{\dout}\in \outA
\end{align}
by \cite[Thm. 10.7.2.]{bogachev}.
Thus \(\GMI{\rno}{\mP}{\Wm}\) is well defined for any 
\(\Wm\in\pmea{\outA|\inpA}\) and \(\mP\in\pmea{\inpA}\).

Unfortunately, the situation is not nearly as simple for the R-G information. 
In order to define the R-G information using a similar approach one first
shows that \(\Wm e^{\frac{1-\rno}{\rno}\lgm\cdot\costf}\) is
a transition kernel ---rather than a transition probability (i.e. Markov kernel)--- 
and then proceeds with establishing the existence a unique measure 
\(\mP\mtimes \Wm e^{\frac{1-\rno}{\rno}\lgm\cdot\costf}\) for 
all \(\mP\) in \(\pmea{\outA}\).
For orders greater than one, resulting measure is a sub-probability measure 
and one can use \eqref{eq:def:Ginformation} as the definition of 
the R-G information. 
For orders between zero and one, on the other hand,
\(\mP\mtimes \Wm e^{\frac{1-\rno}{\rno}\lgm\cdot\costf}\) is a 
\(\sigma\)-finite measure for all \(\mP\)'s in \(\pmea{\inpA}\),
but it is not necessarily a finite measure for all \(\mP\)'s in 
\(\pmea{\inpA}\).
Thus for orders between zero and one,
one can use \eqref{eq:def:Ginformation} as the definition of 
the R-G information, only after extending the definition of 
the \renyi divergence to \(\sigma\)-finite measures.
%!TEX root=../main-C.tex
\section{Examples}\label{sec:examples}
In this section, we will first demonstrate certain subtleties that we have pointed out 
in the earlier sections.
After that we will study Gaussian channels and obtain closed form expressions
for their Augustin capacity and center.

\subsection{Shift Invariant Families}\label{sec:examples-invariant}
\begin{example}[A Channel with an Affine Capacity]\label{eg:affine}
	Let the channel \(\Wm:\reals{\geq0}\to \pmea{\rborel{[0,1)}}\) 
	and the associated cost function \(\costf:\reals{\geq0}\to\reals{\geq0}\)
	be 
	\begin{align}
	\notag
	\der{\Wm(\dinp)}{\rfm}
	&=\fX_{\lfloor\dinp\rfloor}(\dout-\dinp-\lfloor\dout-\dinp\rfloor),
	\\
	\notag
	\costf(\dinp)&=\lfloor \dinp \rfloor
	\end{align} 
	where \(\rfm\) is the Lebesgue measure on \([0,1)\) and \(\fX_{\ind}\)'s are given by 
	\begin{align}
	\notag
	\fX_{\ind}(\dout)
	&=e^{\ind+1}\IND{\dout\in [0,e^{-\ind-1})}
	&
	&\forall \ind\in\integers{\geq0}.
	\end{align}
	Let \(\umn{\ind}\) be uniform distribution on \([\ind,\ind+1)\); then 
	one can confirm by substitution that \(\Aop{\rno}{\umn{\ind}}{\umn{0}}=\umn{0}\).
	Then using the Jensen's inequality together with the fixed point property we 
	get\footnote{See the derivation of \eqref{eq:lem:information-zto:EHB} and 
		\eqref{eq:lem:information-oti:EHB} of 
		Lemma \ref{lem:information}-(\ref{information:zto},\ref{information:oti})
		given in 
		Appendix \ref{sec:informationproofs}.}
	\begin{align}
	\notag
	\CRD{\rno}{\!\Wm\!}{\mQ}{\umn{\ind}}
	&\geq \CRD{\rno}{\!\Wm\!}{\umn{0}}{\umn{\ind}}+\RD{\rno\wedge1}{\umn{0}}{\mQ}.
	\end{align}
	Thus \(\umn{0}\) is the unique order \(\rno\) Augustin mean for 
	the input distribution \(\umn{\ind}\), i.e. \(\qmn{\rno,\umn{\ind}}=\umn{0}\),
	and 
	\(\RMI{\rno}{\umn{\ind}}{\!\Wm}=\CRD{\rno}{\!\Wm\!}{\umn{0}}{\umn{\ind}}\)
	---and hence \(\RMI{\rno}{\umn{\ind}}{\!\Wm}=\ind+1\)---
		for all \(\ind\in\integers{+}\) and \(\rno\in\reals{+}\).
	Then using  \(\EXS{\umn{\ind}}{\costf}=\ind\), we can conclude that 
	\(\CRC{\rno}{\!\Wm\!}{\costc}\geq (\costc+1)\) 
	not only for \(\costc\in\integers{\geq0}\) but also for
	\(\costc\in\reals{\geq0}\) because \(\CRC{\rno}{\!\Wm\!}{\costc}\)
	is concave in \(\costc\) by Lemma \ref{lem:CCcapacity}-(\ref{CCcapacity:function}).
	One the other hand, one can confirm by substitution that 
	\begin{align}
	\label{eq:eg:affine-A}
	\CRD{\rno}{\!\Wm\!}{\umn{0}}{\mP}
	&=\EXS{\mP}{\costf}+1.
	\end{align}
	Thus \(\RMI{\rno}{\mP}{\!\Wm}\leq(\costc+1)\) for any \(\mP\) satisfying the 
	cost constraint \(\costc\). Hence,
	\begin{align}
	\notag
	\CRC{\rno}{\!\Wm\!}{\costc}
	&=\costc+1,
	\\
	\notag
	\qmn{\rno,\!\Wm\!,\costc}
	&=\umn{0}.
	\end{align}
	Then as a result of \eqref{eq:Lcapacity-astheconjugate} we have
	\begin{align}
	\notag	
	\RCL{\rno}{\!\Wm\!}{\lgm}
	&=\begin{cases}
	\infty 
	& \lgm\in [0,1)
	\\
	1
	& \lgm\in [1,\infty)
	\end{cases}.
	\end{align}
	Then using \eqref{eq:eg:affine-A} and Theorem \ref{thm:general-Lminimax}, we can conclude that 
	\(\qma{\rno,\Wm}{\lgm}=\umn{0}\) for all \(\lgm\in[1,\infty)\).
\end{example}

\begin{example}[A Channel with a Non-Upper Semicontinuous Capacity]\label{eg:non-uppersemicontinuous}
	Let the channel \(\Wm:\reals{}\to \pmea{\rborel{[0,1)}}\) 
	and the associated cost function \(\costf:\reals{}\to\reals{\geq0}\)
	be 
	\begin{align}
	\notag
	\der{\Wm(\dinp)}{\rfm}
	&=\fX_{\lfloor\dinp\rfloor}(\dout-\dinp-\lfloor\dout-\dinp\rfloor)
	\\
	\notag
	\costf(\dinp)
	&=\begin{cases}
	\lfloor \dinp \rfloor
	&\dinp\geq 0
	\\
	2^{\lfloor \dinp \rfloor}
	&\dinp <0 
	\end{cases}.
	\end{align}
	where \(\rfm\) is the Lebesgue measure on \([0,1)\) and \(\fX_{\ind}:\in[0,1)\to\reals{\geq 0}\) are given by 
	\begin{align}
	\notag
	\fX_{\ind}(\dout)
	&=
	\begin{cases}
	2^{\ind+1}\IND{\dout\in [0,2^{-\ind-1})}
	&\ind>0
	\\
	\sfrac{3}{2}\IND{\dout\in [0,\sfrac{2}{3})}
	&\ind=0
	\\
	2\IND{\dout\in [0,\sfrac{1}{2})}
	&\ind<0	
	\end{cases}.
	\end{align}
	Following an analysis similar to the one described above we can conclude that	
	\begin{align}
	\notag
	\CRC{\rno}{\!\Wm\!}{\costc}
	&=\begin{cases}
	(\costc+1)\ln 2
	&\costc>0
	\\
	\ln\sfrac{3}{2}
	&\costc=0
	\end{cases}	,
	\\
	\notag
	\RCL{\rno}{\!\Wm\!}{\lgm}
	&=\begin{cases}
	\infty 
	& \lgm\in [0,\ln 2)
	\\
	\ln 2
	& \lgm\in [\ln2,\infty)
	\end{cases}.
	\end{align}
	Hence \(\CRC{\rno}{\!\Wm\!}{\costc}\neq \inf_{\lgm\geq0} \RCL{\rno}{\!\Wm\!}{\lgm}+\lgm\cdot\costc\) for 
	\(\costc=0\).
\end{example}

\begin{example}[A Product Channel without an Optimal Cost Allocation]\label{eg:non-additive-product}
	Let \(\Wmn{1}\) and \(\Wmn{2}\) be the channels described in Examples 
	\ref{eg:affine} and \ref{eg:non-uppersemicontinuous} and \(\costf_{1}\)
	and \(\costf_{2}\) be the associated cost functions. 
	Let \(\Wmn{[1,2]}\) be the product of these two channels with the additive 
	cost function \(\costc_{[1,2]}\), i.e.
	\begin{align}
	\notag
	\Wmn{[1,2]}(\dinp_1,\dinp_{2})
	&=\Wmn{1}(\dinp_1) \otimes \Wmn{2}(\dinp_{2}), 
	\\
	\notag
	\costf_{[1,2]}(\dinp_1,\dinp_{2})
	&=\costf_{1}(\dinp_1)+\costf_{2}(\dinp_{2}).
	\end{align}
	Then Lemma \ref{lem:CCcapacityproduct} implies
	\begin{align}
	\notag
	\CRC{\rno}{\Wmn{[1,2]}}{\costc}
	&=\begin{cases}
	\costc+1+\ln 2
	&\costc>0
	\\
	1+\ln\tfrac{3}{2}
	&
	\costc=0
	\end{cases}.
	\end{align}
	Note that for positive values of \(\costc\) there does not exist any 	
	\((\costc_{1},\costc_{2})\) pair satisfying both
	\(\CRC{\rno}{\Wmn{[1,2]}}{\costc}=\CRC{\rno}{\Wmn{1}}{\costc_{1}}+\CRC{\rno}{\Wmn{2}}{\costc_{2}}\)
	and the cost constraint \(\costc_{1}+\costc_{2}\leq \costc\) at the same time.
\end{example}

\subsection{Gaussian Channels}\label{sec:examples-gaussian}
In the following,  we denote the zero mean Gaussian probability measure on 
\(\rborel{\reals{}}\) with variance \(\sigma^{2}\) by  \(\GausDen{\sigma^{2}}\).
With a slight abuse of notation, we denote the corresponding probability density 
function by the same symbol:
\begin{align}
\notag %\label{eq:GaussianDensity}
\GausDen{\sigma^{2}}(\dinp)
&=\tfrac{1}{\sqrt{2 \pi} \sigma} e^{-\frac{\dinp^{2}}{2\sigma^{2}}}
&
&\forall \dinp\in\reals{}.
\end{align}
We use the Gaussian channels and the corresponding 
transition probabilities interchangeably;
they have the same cost constrained Augustin capacity and center 
by Theorems \ref{thm:general-Lminimax} and \ref{thm:general-minimax}. 

\begin{example}[Scalar Gaussian Channel]\label{eg:SGauss}
	Let \(\Wm\) be the scalar Gaussian channel with noise variance \(\sigma^2\) 
	and the associated cost function \(\costf\) be the quadratic one, i.e.
	\begin{align}
	\notag
	\Wm(\oev|\dinp)
	&=\int_{\oev} \GausDen{\sigma^{2}}(\dout-\dinp) \dif{\dout}
	&
	&\forall \oev\in\rborel{\reals{}},
	\\
	\notag
	\costf(\dinp)
	&=\dinp^{2}
	&
	&\forall \dinp\in\reals{}.
	\end{align}
The Augustin capacity and center of this channel are given by the following expressions:
\begin{align}
\label{eq:eg:SGauss-capacity}
\CRC{\rno}{\!\Wm\!}{\costc}
&=\begin{cases}
\tfrac{\rno \costc}{2(\rno \theta_{\rno,\sigma,\costc}+(1-\rno)\sigma^{2})}
+\tfrac{1}{\rno-1}\ln\tfrac{(\theta_{\rno,\sigma,\costc})^{\sfrac{\rno}{2}}\sigma^{(1-\rno)}}{\sqrt{\rno\theta_{\rno,\sigma,\costc}+(1-\rno) \sigma^{2}}}
&\rno\in\reals{+}\setminus\{1\}
\\
\tfrac{1}{2}\ln \left(1+\tfrac{\costc}{\sigma^{2}}\right)
&\rno=1
\end{cases},
\\
\label{eq:eg:SGauss-center}
\qmn{\rno,\!\Wm\!,\costc}
&=\GausDen{\theta_{\rno,\sigma,\costc}},
\\
\label{eq:eg:SGauss-center-variance}
\theta_{\rno,\sigma,\costc}
&\DEF\sigma^{2}+\tfrac{\costc}{2}-\tfrac{\sigma^{2}}{2\rno}+\sqrt{(\tfrac{\costc}{2}-\tfrac{\sigma^{2}}{2\rno})^{2}+ \costc\sigma^2}.
\end{align}	
Furthermore, \(\CRC{\rno}{\!\Wm\!}{\costc}\)  is continuously differentiable in \(\costc\) and 
its derivative is a continuous, decreasing, and bijective function of \(\costc\) 
from \(\reals{+}\) to \([0,\sfrac{\rno}{2\sigma^{2}})\)
given by
\begin{align}
\label{eq:eg:SGauss-capacity-derivative}
\der{}{\costc}\CRC{\rno}{\!\Wm\!}{\costc}
&=\tfrac{\rno}{2(\rno \theta_{\rno,\sigma,\costc}+(1-\rno)\sigma^{2})}
\\
\label{eq:eg:SGauss-capacity-derivative-alt}
&=\tfrac{\rno}{\rno\costc+\sigma^2+\sqrt{(\rno\costc-\sigma^2)^{2}+4\costc\rno^2\sigma^{2}}}.
\end{align}
In order to prove these, we first demonstrate that the Augustin mean 
for the zero mean Gaussian distribution with variance \(\costc\) is
the zero mean Gaussian distribution with variance \(\theta_{\rno,\sigma,\costc}\),
i.e. \(\qmn{\rno,\GausDen{\costc}}=\GausDen{\theta_{\rno,\sigma,\costc}}\).
This will imply \(\RMI{\rno}{\GausDen{\costc}}{\!\Wm}=\CRD{\rno}{\!\Wm\!}{\GausDen{\theta_{\rno,\sigma,\costc}}}{\GausDen{\costc}}\).
\(\CRD{\rno}{\!\Wm\!}{\GausDen{\theta_{\rno,\sigma,\costc}}}{\GausDen{\costc}}\) is equal to 
the expression on the right hand side of \eqref{eq:eg:SGauss-capacity}.
In order to establish \eqref{eq:eg:SGauss-capacity} and \eqref{eq:eg:SGauss-center},
we demonstrate that this value is the greatest value for the Augustin information 
among all input distributions satisfying the cost constraint \(\costc\).
Consequently, we have	
\(\CRC{\rno}{\!\Wm\!}{\costc}=\RMI{\rno}{\GausDen{\costc}}{\!\Wm}\) and 
\(\qmn{\rno,\!\Wm\!,\costc}=\qmn{\rno,\GausDen{\costc}}\).
Then we confirm \eqref{eq:eg:SGauss-capacity-derivative} using 
an identity, i.e. \eqref{eq:eg:SGauss-necessarycondition},
obtained while establishing \(\qmn{\rno,\GausDen{\costc}}=\GausDen{\theta_{\rno,\sigma,\costc}}\).

	One can confirm by substitution that 
	\begin{align}
	\RD{\rno}{\Wm(\dinp)}{\GausDen{\theta}}
	%	&=\begin{cases}
	%	\tfrac{1}{\rno-1}\ln \int \tfrac{1}{\sqrt{2\pi\sigma^{2\rno}\theta^{(1-\rno)}}}
	%	e^{-\frac{\rno(\dinp-\dout)^{2}}{2\sigma^{2}}-\frac{(1-\rno)\dout^{2}}{2\theta}} 
	%	\dif{\dout}
	%	&\rno\in\reals{+}\setminus\{1\}
	%	\notag
	%	\\
	%	\int 
	%	\tfrac{1}{\sqrt{2\pi}\sigma} e^{-\frac{(\dinp-\dout)^{2}}{2\sigma^{2}}} 
	%	\left(\tfrac{\dout^{2}}{2\theta}-\tfrac{(\dinp-\dout)^{2}}{2\sigma^{2}}
	%	+\tfrac{1}{2}\ln \tfrac{\theta}{\sigma^{2}} \right)
	%	\dif{\dout}
	%	&\rno=1
	%	\end{cases}
	%	\\
	%	\notag
	%	&=
	%	\begin{cases}
	%	\frac{\rno\dinp^{2}}{2(\rno \theta+(1-\rno)\sigma^{2})}
	%	+\tfrac{1}{\rno-1}\ln \int \tfrac{1}{\sqrt{2\pi\sigma^{2\rno}\theta^{(1-\rno)}}}
	%	e^{-\frac{1}{2}(\frac{\sqrt{\rno\theta+(1-\rno)\sigma^{2}}}{\sigma \theta^{\sfrac{1}{2}}}\dout
	%		-\frac{\rno \theta^{\sfrac{1}{2}} }{\sigma\sqrt{\rno\theta+(1-\rno)\sigma^{2}}}\dinp)^2}
	%	\dif{\dout}
	%	&\rno\in\reals{+}\setminus\{1\}
	%	\\
	%	\tfrac{\sigma^{2}+\dinp^{2}}{2\theta}-\tfrac{\sigma^{2}}{2\sigma^{2}}+\tfrac{1}{2}\ln \tfrac{\theta}{\sigma^{2}}
	%	&\rno=1
	%	\end{cases}
	%	\\
	\label{eq:eg:SGauss-divergence}
	&=\begin{cases}
	\tfrac{\rno \dinp^{2}}{2(\rno \theta+(1-\rno)\sigma^{2})}
	+\tfrac{1}{\rno-1}\ln \tfrac{\theta^{\sfrac{\rno}{2}}\sigma^{(1-\rno)}}{\sqrt{\rno \theta+(1-\rno) \sigma^{2}}}
	&\rno\in\reals{+}\setminus\{1\}
	\\
	\tfrac{\sigma^{2}+\dinp^{2}-\theta}{2\theta}
	+\tfrac{1}{2}\ln \tfrac{\theta}{\sigma^{2}}
	&\rno=1
	\end{cases}.
	\end{align}
	Then the order \(\rno\) tilted channel \(\Wma{\rno}{\GausDen{\theta}}\), defined in 
	\eqref{eq:def:tiltedchannel}, is a Gaussian channel as well:
	\begin{align}
	\notag
	\Wma{\rno}{\GausDen{\theta}}(\oev|\dinp)
	&=\int_{\oev} \GausDen{\frac{\sigma^{2} \theta}{\rno\theta+(1-\rno)\sigma^{2}}}
	\left(\dout-\tfrac{\rno \theta}{\rno\theta+(1-\rno)\sigma^{2}}\dinp\right)\dif{\dout}.
	\end{align}	
	Then \(\Aop{\rno}{\mP}{\mQ}\) is a zero mean Gaussian probability measure whenever 
	both \(\mP\) and \(\mQ\) are so. In particular,
	\begin{align}
	\label{eq:eg:SGauss-Augustinoperator}
	\Aop{\rno}{\GausDen{\costc}}{\GausDen{\theta}}
	&=\GausDen{(\tfrac{\rno \theta}{\rno \theta+(1-\rno)\sigma^{2}})^2 \costc+\tfrac{\sigma^{2} \theta}{\rno \theta+(1-\rno)\sigma^{2}}}.
	\end{align} 	
	Consequently, if \(\GausDen{\theta}\) is a fixed point of \(\Aop{\rno}{\GausDen{\costc}}{\cdot}\),
	then \(\theta\) satisfies the following equality
	\begin{align}
	\label{eq:eg:SGauss-necessarycondition}
	\theta \left[\theta^{2}-\theta \left(\costc+\left(2-\tfrac{1}{\rno}\right)\sigma^{2}\right)+\left(1-\tfrac{1}{\rno}\right)\sigma^{4}\right]
	&=0.
	\end{align}
\(\theta_{\rno,\sigma,\costc}\), defined in \eqref{eq:eg:SGauss-center-variance}, 
is the only root of the equality given in \eqref{eq:eg:SGauss-necessarycondition} 
that is greater than \(\sigma^{2}\) for \(\rno\)'s in \(\reals{+}\);
it is the only positive root for \(\rno\)'s in \((0,1)\),
as well. 	
	Furthermore, using \eqref{eq:eg:SGauss-Augustinoperator} one can confirm 
	that \(\Aop{\rno}{\GausDen{\costc}}{\GausDen{\theta_{\rno,\sigma,\costc}^{2}}}=
	\GausDen{\theta_{\rno,\sigma,\costc}}\), i.e.
	\(\GausDen{\theta_{\rno,\sigma,\costc}}\) is a fixed point of 
	\(\Aop{\rno}{\GausDen{\costc}}{\cdot}\).
	Then using the Jensen's inequality together with this fixed point property 	we 
	get\footnote{Derivation of this inequality is analogous to the derivation of 
		\eqref{eq:lem:information-zto:EHB} and \eqref{eq:lem:information-oti:EHB} of 
		Lemma \ref{lem:information}-(\ref{information:zto},\ref{information:oti}), presented in 
		Appendix \ref{sec:informationproofs}.}
	\begin{align}
	\notag
	\CRD{\rno}{\!\Wm\!}{\mQ}{\GausDen{\costc}}
	&\geq
	\CRD{\rno}{\!\Wm\!}{\GausDen{\theta_{\rno,\sigma,\costc}}}{\GausDen{\costc}}
	+\RD{1\wedge\rno}{\GausDen{\theta_{\rno,\sigma,\costc}}}{\mQ}
	&
	&\forall \mQ\in \pmea{\rborel{\reals{}}}.
	\end{align}
	Thus \(\GausDen{\theta_{\rno,\sigma,\costc}}\) is the order \(\rno\) 
	Augustin mean for the input distribution \(\GausDen{\costc}\), i.e. 
	\(\qmn{\rno,\GausDen{\costc}}=\GausDen{\theta_{\rno,\sigma,\costc}}\) 
	and \(\RMI{\rno}{\GausDen{\costc}}{\!\Wm}
	=\CRD{\rno}{\!\Wm\!}{\GausDen{\theta_{\rno,\sigma,\costc}}}{\GausDen{\costc}}\).	
	On the other hand, \eqref{eq:eg:SGauss-divergence} implies 
	\begin{align}
	\label{eq:eg:SGauss-information}
	\CRD{\rno}{\!\Wm\!}{\GausDen{\theta_{\rno,\sigma,\costc}}}{\mP}
	&=\tfrac{\rno (\EXS{\mP}{\costf}-\costc)}{2(\rno \theta_{\rno,\sigma,\costc}+(1-\rno)\sigma^{2})}+\RMI{\rno}{\GausDen{\costc}}{\!\Wm} 
	&
	&\forall \mP\in \pmea{\rborel{\reals{}}}.
	\end{align}
	Then \(\RMI{\rno}{\mP}{\!\Wm} \leq \RMI{\rno}{\GausDen{\costc}}{\!\Wm}\) for all \(\mP\)
	satisfying \(\EXS{\mP}{\costf}\leq\costc\).
	Consequently, \(\CRC{\rno}{\!\Wm\!}{\costc}=\RMI{\rno}{\GausDen{\costc}}{\!\Wm}\)
	and \(\qmn{\rno,\!\Wm\!,\costc}=\qmn{\rno,\GausDen{\costc}}\).

For \(\rno=1\) case \eqref{eq:eg:SGauss-capacity-derivative} is evident.
In order to establish \eqref{eq:eg:SGauss-capacity-derivative}
for \(\rno\in\reals{+}\!\setminus\!\{1\}\) case, note that
\begin{align}
\notag
\der{}{\costc}\CRC{\rno}{\!\Wm\!}{\costc}
%&=\tfrac{\rno}{2(\rno \theta_{\rno,\sigma,\costc}+(1-\rno)\sigma^{2})}
%+\left[
%\tfrac{-\rno^{2}\costc}{2(\rno \theta_{\rno,\sigma,\costc}+(1-\rno)\sigma^{2})^{2}}
%+\tfrac{\rno}{2(\rno-1)\theta_{\rno,\sigma,\costc}}
%-\tfrac{\rno}{2(\rno-1)(\rno \theta_{\rno,\sigma,\costc}+(1-\rno)\sigma^{2})}\right]\der{}{\costc}\theta_{\rno,\sigma,\costc}
%\\
%\notag
&=\tfrac{\rno}{2(\rno \theta_{\rno,\sigma,\costc}+(1-\rno)\sigma^{2})}
+\left[
\tfrac{-\rno^{2}\costc}{2(\rno \theta_{\rno,\sigma,\costc}+(1-\rno)\sigma^{2})^{2}}
+\tfrac{\rno(\theta_{\rno,\sigma,\costc}-\sigma^{2})}{2(\rno \theta_{\rno,\sigma,\costc}+(1-\rno)\sigma^{2})\theta_{\rno,\sigma,\costc}}\right]\der{}{\costc}\theta_{\rno,\sigma,\costc}
\\
\notag
&=\tfrac{\rno}{2(\rno \theta_{\rno,\sigma,\costc}+(1-\rno)\sigma^{2})}
+\tfrac{\rno^{2}}{2(\rno \theta_{\rno,\sigma,\costc}+(1-\rno)\sigma^{2})^{2}\theta_{\rno,\sigma,\costc}}
\left[\theta_{\rno,\sigma,\costc}^{2}-\theta_{\rno,\sigma,\costc}\left(\costc+(2-\tfrac{1}{\rno})\sigma^{2}\right)
+(1-\tfrac{1}{\rno})\sigma^{4}
\right]
\der{}{\costc}\theta_{\rno,\sigma,\costc}.
\end{align}
Then  \eqref{eq:eg:SGauss-capacity-derivative} holds
for \(\rno\in\reals{+}\!\setminus\!\{1\}\)  
because \(\theta_{\rno,\sigma,\costc}\) is a root of the equality 
in \eqref{eq:eg:SGauss-necessarycondition}.

The A-L capacity and center of this channel are given by the following expressions:
	\begin{align}
\label{eq:eg:SGauss-ALcapacity}
\RCL{\rno}{\!\Wm\!}{\lgm}
&=\begin{cases}
\left(
\tfrac{\rno}{\rno-1}\ln \sqrt{\tfrac{1}{\rno}+\tfrac{\rno-1}{\rno}\tfrac{2\sigma^{2}\lgm}{\rno}}
-\ln \sqrt{\tfrac{2\sigma^{2}\lgm}{\rno}}
\right)
\IND{\lgm\in (0,\frac{\rno}{2\sigma^{2}})}
&\rno\in\reals{+}\setminus\{1\}
\\
\left(
\sigma^{2}\lgm-\ln \sqrt{2e\sigma^{2}\lgm}
\right)
\IND{\lgm\in (0,\tfrac{1}{2\sigma^{2}})} 
&\rno=1
\end{cases}
\\
\label{eq:eg:SGauss-ALcenter}
\qma{\rno,\Wm}{\lgm}
&=\GausDen{\theta_{\rno,\sigma}^{\lgm}}
\\
\label{eq:eg:SGauss-ALcenter-variance}
\theta_{\rno,\sigma}^{\lgm}
&\DEF
\sigma^{2}+\abp{\tfrac{1}{2\lgm}-\tfrac{\sigma^{2}}{\rno}}
\end{align} 
Then \(\RCL{\rno}{\!\Wm\!}{\lgm}\) is a continuously differentiable function of \(\lgm\) 
and its derivative is a continuous, increasing, and bijective function of \(\lgm\) 
from \(\reals{+}\) to \((-\infty,0]\)
given by
\begin{align}
\label{eq:eg:SGauss-ALcapacity-derivative}
\der{}{\lgm}\RCL{\rno}{\!\Wm\!}{\lgm}
&=-\tfrac{\rno-2\sigma^{2}\lgm}{2\lgm(\rno+(\rno-1)2\sigma^{2}\lgm)}\IND{\lgm\leq\frac{\rno}{2\sigma^{2}}}.
\end{align}
The expressions for the A-L capacity and center given in 
\eqref{eq:eg:SGauss-ALcapacity} and \eqref{eq:eg:SGauss-ALcenter}
are derived using
the expressions for Augustin capacity and center,
\eqref{eq:Lcapacity-astheconjugate}, 
\eqref{eq:eg:SGauss-capacity-derivative},  
\eqref{eq:eg:SGauss-capacity-derivative-alt},
\eqref{eq:eg:SGauss-divergence},
and Lemma \ref{lem:Lcenter}.
\begin{itemize}
\item If \(\lgm\in(0,\sfrac{\rno}{2\sigma^{2}})\), then there exists 
a unique \(\costc_{\rno,\!\Wm\!}^{\lgm}\) satisfying
\(\der{}{\costc}\CRC{\rno}{\Wm}{\costc}\vert_{\costc=\costc_{\rno,\Wm}^{\lgm}}\!=\!\lgm\)
by \eqref{eq:eg:SGauss-capacity-derivative-alt}.
Furthermore, \(\costc_{\rno,\!\Wm\!}^{\lgm}\) satisfies
\(\RCL{\rno}{\!\Wm\!}{\lgm}\!=\!\CRC{\rno}{\!\Wm\!}{\costc_{\rno,\Wm}^{\lgm}}\!-\!\lgm\costc_{\rno,\Wm}^{\lgm}\)
by \eqref{eq:Lcapacity-astheconjugate} because
\(\der{}{\costc}\CRC{\rno}{\Wm}{\costc}\) is decreasing in \(\costc\).
Then \eqref{eq:eg:SGauss-ALcapacity}
follows from \eqref{eq:eg:SGauss-capacity} and \eqref{eq:eg:SGauss-capacity-derivative}.
On the other hand \(\qma{\rno,\Wm}{\lgm}\!=\!\qmn{\rno,\Wm,\costc_{\rno,\Wm}^{\lgm}}\) by 
Lemma \ref{lem:Lcenter} because 
\(\CRC{\rno}{\!\Wm\!}{\costc_{\rno,\Wm}^{\lgm}}\!=\!\RCL{\rno}{\!\Wm\!}{\lgm}\!+\!\lgm \costc_{\rno,\Wm}^{\lgm}\).
Then \eqref{eq:eg:SGauss-ALcenter} follows from
\eqref{eq:eg:SGauss-center},
\eqref{eq:eg:SGauss-center-variance},
\eqref{eq:eg:SGauss-capacity-derivative},
and  \eqref{eq:eg:SGauss-ALcenter-variance}.
In addition one can confirm that
\(\costc_{\rno,\Wm}^{\lgm}=-\der{}{\lgm}\RCL{\rno}{\!\Wm\!}{\lgm}\)
by solving 
\(\der{}{\costc}\CRC{\rno}{\Wm}{\costc}\vert_{\costc=\costc_{\rno,\Wm}^{\lgm}}=\lgm\)
explicitly for \(\costc_{\rno,\Wm}^{\lgm}\).
We, however, do not need to obtain that explicit solution to confirm 
\eqref{eq:eg:SGauss-ALcapacity} and \eqref{eq:eg:SGauss-ALcenter}.
\item If \(\lgm\in[\sfrac{\rno}{2\sigma^{2}},\infty)\), then
\(\CRD{\rno}{\Wm}{\GausDen{\sigma^{2}}}{\mP}-\lgm\EXS{\mP}{\costc}\leq0\)
by \eqref{eq:eg:SGauss-divergence}.
On the other hand, \(\RCL{\rno}{\Wm}{\lgm}\geq0\) because A-L information is zero for
the probability measure that puts all its probability mass to \(\dinp=0\). 
Hence \(\RCL{\rno}{\!\Wm\!}{\lgm}=0\) and \(\qma{\rno,\!\Wm\!}{\lgm}=\GausDen{\sigma^{2}}\).
Thus, both \eqref{eq:eg:SGauss-ALcapacity} and \eqref{eq:eg:SGauss-ALcenter} hold.
\end{itemize}
\end{example}

\begin{example}[Parallel Gaussian Channels]\label{eg:PGauss}
Let \(\Wmn{[1,\blx]}\) be the product of scalar Gaussian channels \(\Wmn{\ind}\)
with noise variance  \(\sigma_{\ind}\) for \(\ind\in\{1,\ldots,\blx\}\)
and \(\costf_{[1,\blx]}\) be the additive cost function, i.e. 
	\begin{align}
	\notag
	\Wmn{[1,\blx]}(\oev|\dinp_{1}^{\blx})
	&=\int_{\oev}\left[
	\prod\nolimits_{\ind=1}^{\blx}\GausDen{\sigma_{\ind}^{2}}(\dout_{\ind}-\dinp_{\ind})
	\right]\dif{\dout_{1}^{\blx}}
	&
	&\forall \oev\in\rborel{\reals{}^{\blx}},
	\\
	\notag
	\costf_{[1,\blx]}(\dinp_{1}^{\blx})
	&=\sum\nolimits_{\ind=1}^{\blx} \dinp_{\ind}^{2}
	&
	&\forall \dinp_{1}^{\blx}\in\reals{}^{\blx}.
	\end{align}
	As a result of Lemma \ref{lem:CCcapacityproduct}, 
	the cost constrained Augustin capacity of \(\Wmn{[1,\blx]}\) 
	satisfies
	\begin{align}
	\notag
	\CRC{\rno}{\!\Wmn{[1,\blx]}\!}{\costc}
	&=\sup\nolimits_{\costc_{1},\ldots,\costc_{\blx}:\sum_{\ind} \costc_{\ind}\leq\costc} 
	\CRC{\rno}{\Wmn{\ind}}{\costc_{\ind}}.
	\end{align}
Since \(\CRC{\rno}{\Wmn{\ind}}{\costc_{\ind}}\!\)'s are 
continuous, strictly concave, and increasing in \(\costc_{\ind}\)  
the supremum is achieved at a  unique \((\costc_{\rno,1},\ldots,\costc_{\rno,\blx})\).
Then 
\(\qmn{\rno,\!\Wmn{[1,\blx]},\costc}=
\qmn{\rno,\!\Wmn{1},\costc_{\rno,1}}
\otimes
\cdots
\otimes
\qmn{\rno,\!\Wmn{\blx},\costc_{\rno,\blx}}
\)
by Lemma \ref{lem:CCcapacityproduct}.
Furthermore, since \(\CRC{\rno}{\Wmn{\ind}}{\costc_{\ind}}\!\)'s are 
continuously
differentiable in \(\costc_{\ind}\), 
the unique point \((\costc_{\rno,1},\ldots,\costc_{\rno,\blx})\) can be determined via the derivative test:
\(\der{}{\costc_{\ind}}\CRC{\rno}{\!\Wmn{\ind}}{\costc_{\ind}}\vert_{\costc_{\ind}=\costc_{\rno,\ind}}\!=\!\lgm_{\rno}\)
for all \(\ind\)'s with a positive \(\costc_{\rno,\ind}\) and
\(\der{}{\costc_{\ind}}\CRC{\rno}{\!\Wmn{\ind}}{\costc_{\ind}}\vert_{\costc_{\ind}=\costc_{\rno,\ind}}\!\leq\!\lgm_{\rno}\)
for all \(\ind\)'s with a zero \(\costc_{\rno,\ind}\) for some \(\lgm_{\rno}\in\reals{+}\).
Thus using  \eqref{eq:eg:SGauss-capacity-derivative-alt}, we can conclude that 
the optimal cost allocation, i.e.\((\costc_{\rno,1},\ldots,\costc_{\rno,\blx})\), 
satisfies 
\begin{align}
\label{eq:eg:PGauss-input-variance}
\costc_{\rno,\ind}
&=
\tfrac{\abp{\rno-2\sigma_{\ind}^{2}\lgm_{\rno}}}{2\lgm_{\rno}(\rno +2(\rno-1)\sigma_{\ind}^2\lgm_{\rno})}
\end{align}
for some \(\lgm_{\rno}\) that is uniquely determined by constraint \(\sum_{\ind=1}^{\blx} \costc_{\rno,\ind}=\costc\)
because the expression on the right hand side of \eqref{eq:eg:PGauss-input-variance} 
is nonincreasing in \(\lgm_{\rno}\) for each \(\ind\).
Consequently,
\begin{align}
\label{eq:eg:PGauss-capacity}
\CRC{\rno}{\!\Wmn{[1,\blx]}\!}{\costc}
&=\sum\nolimits_{\ind=1}^{\blx} \CRC{\rno}{\Wmn{\ind}}{\costc_{\rno,\ind}}
\\
\label{eq:eg:PGauss-capacitycenter}
\qmn{\rno,\!\Wmn{[1,\blx]}\!,\costc}
&=\bigotimes\nolimits_{\ind=1}^{\blx}\GausDen{\theta_{\rno,\sigma_{\ind},\costc_{\rno,\ind}}}
\end{align}
where \(\theta_{\rno,\sigma,\costc}\) is defined in \eqref{eq:eg:SGauss-center-variance}.
Using  the constraints for the optimality of a cost allocation we obtained via the derivative test, 
i.e. \(\der{}{\costc_{\ind}}\CRC{\rno}{\!\Wmn{\ind}}{\costc_{\ind}}\vert_{\costc_{\ind}=\costc_{\rno,\ind}}\!=\!\lgm_{\rno}\)
for all \(\ind\)'s with a positive \(\costc_{\rno,\ind}\) and
\(\der{}{\costc_{\ind}}\CRC{\rno}{\!\Wmn{\ind}}{\costc_{\ind}}\vert_{\costc_{\ind}=\costc_{\rno,\ind}}\!\leq\!\lgm_{\rno}\)
for all \(\ind\)'s with a zero \(\costc_{\rno,\ind}\),
together with \eqref{eq:eg:SGauss-capacity-derivative} 
---instead of \eqref{eq:eg:SGauss-capacity-derivative-alt}--- 
we obtain the following alternative characterization of \(\theta_{\rno,\sigma_{\ind},\costc_{\rno,\ind}}\)
in terms of \(\sigma_{\ind}\) and \(\lgm_{\rno}\) that does not depend on \(\costc_{\rno,\ind}\!\)'s
explicitly
\begin{align}
\label{eq:eg:PGauss-center-variance}
\theta_{\rno,\sigma_{\ind},\costc_{\rno,\ind}}
&=\sigma_{\ind}^{2}+\abp{\tfrac{1}{2\lgm_{\rno}}-\tfrac{\sigma_{\ind}^{2}}{\rno}}.
\end{align}
	
The A-L capacity and center of \(\Wmn{[1,\blx]}\) can be written in terms of the
corresponding quantities for the component channels using Lemma \ref{lem:Lcapacityproduct}
as follows:
	\begin{align}
	\notag
	\RCL{\rno}{\!\Wmn{[1,\blx]}\!}{\lgm}
	&=\sum\nolimits_{\ind=1}^{\blx} \RCL{\rno}{\Wmn{\ind}}{\lgm},
	\\
	\notag
	\qma{\rno,\Wmn{[1,\blx]}}{\lgm}
	&=\bigotimes\nolimits_{\ind=1}^{\blx}\qma{\rno,\Wmn{\ind}}{\lgm}.
	\end{align}
\end{example}
The cost constrained Augustin capacity and center
and A-L capacity and center of
vector Gaussian channels with multiple 
input and output antennas can be analyzed with a similar approach with the help of 
singular value decomposition.   
%!TEX root=../main-C.tex
\section{Discussion}\label{sec:conclusion}
Similar to the \renyi information, the Augustin information is a generalization 
of the mutual information defined in terms of the \renyi divergence.   
Unlike the order \(\rno\) \renyi information, however, 
the order \(\rno\) Augustin information 
does not have a closed form expression, except for the order one case.
This makes it harder to prove certain properties of the Augustin 
information such as its continuous differentiability as a function 
of the order \(\rno\),
the existence of a unique order \(\rno\) Augustin mean \(\qmn{\rno,\mP}\),
or the bounds given in \eqref{eq:augustinslaw}.
However, once these fundamental properties of the Augustin information 
are established, the analysis of the Augustin capacity is rather 
straightforward and very similar to the analogous analysis for the \renyi 
capacity, presented in \cite{nakiboglu19A}.

Previously, the convex conjugation techniques have been applied to 
the calculation of the cost constrained Augustin capacity 
through the quantity \(\GMIL{\rno}{\mP}{\Wm}{\lgm}\),
which we have called the R-G information.
Although such an approach can successfully characterize
the cost constrained Augustin capacity via the R-G capacity;
it is non-standard and somewhat convoluted. 
A more standard approach, based on the concept of 
A-L information \(\RMIL{\rno}{\mP}{\Wm}{\lgm}\), 
is presented in \S\ref{sec:cost-AL}.
The A-L information has not been used or studied before to the best of
our knowledge; nevertheless the resulting capacity is identical 
to the one associated with the R-G information. 
The optimality of the approach based 
on the R-G information seems more intuitive,
in the light of this observation.

Our analysis of the Augustin information and capacity was 
primarily motivated by their operational significance in 
the channel coding problem, \cite{augustin78}. 
We investigate that operational significance more closely and
derive sphere packing bounds with polynomial prefactors
for two families of memoryless channels ---composition constrained 
and cost constrained--- in \cite{nakiboglu18D}.
Broadly speaking, the derivation of the sphere packing bound 
for memoryless channels in \cite{nakiboglu18D}
is similar to the derivation of the sphere packing bound 
for product channels in \cite{nakiboglu19B},
except for the use of the Augustin capacity and center
instead of the \renyi capacity and center.
\appendix
\numberwithin{equation}{subsection}
\def\thesubsection{\Alph{subsection}}
\def\thesubsectiondis{\Alph{subsection}.} 
%!TEX root=../main-C.tex
\subsection{Proofs of Lemmas on the Analyticity of the \renyi Divergence}\label{sec:analyticityproofs}
\begin{proof}[Proof of Lemma \ref{lem:analyticity}]
	Let	\(\gX(\rno)\) and \(\fX(\rno,\dout)\) be
	\begin{align}
	\label{eq:analyticity-1}
	\gX(\rno)
	&\DEF \int\left(\der{\mW}{\rfm}\right)^{\rno} \left(\der{\mQ}{\rfm}\right)^{1-\rno}\rfm(\dif{\dout}),
	\\
	\label{eq:analyticity-2}
	\fX(\rno,\dout)
	&\DEF \left(\der{\mW}{\rfm}\right)^{\rno}\left(\der{\mQ}{\rfm}\right)^{1-\rno}
	\end{align}
	where \(\rfm\) is any reference measure satisfying \(\mW\AC\rfm\) and \(\mQ\AC\rfm\).
	Note that
	\begin{align}
	\label{eq:analyticity-3}
	\RD{\rno}{\mW}{\mQ}
	&=\tfrac{1}{\rno-1}\ln \gX(\rno)
	&
	&\rno\in \reals{+}\setminus\{1\}.
	\end{align}
	Furthermore \(\gX(\rno)\) does not depend on the choice of \(\rfm\), but \(\fX(\rno,\dout)\) does. 
	\begin{align}
	\label{eq:analyticity-4}
	\pder{^{\knd}}{\rno^{\knd}}\fX(\rno,\dout)
	&=\left(\ln\der{\mW}{\rfm}-\ln\der{\mQ}{\rfm}\right)^{\knd} \fX(\rno,\dout)
	&
	&\forall \knd\in\integers{\geq0}.
	\end{align}
	Then using the inequality \(\dsta \ln \dsta \geq -\sfrac{1}{e}\) we get
	\begin{align}
	\notag
	\abs{\pder{^{\knd}}{\rno^{\knd}}\fX(\rno,\dout)}
	&\leq
	\left(\tfrac{\knd}{\rno e}\right)^{\knd}\der{\mQ}{\rfm} 
	\IND{\frac{\dif{\mW}}{\dif{\rfm}}\leq \frac{\dif{\mQ}}{\dif{\rfm}}}
	+\left(\tfrac{\knd}{(\rnf-\rno)e}\right)^{\knd}\fX(\rnf,\dout)
	\IND{\frac{\dif{\mW}}{\dif{\rfm}}>\frac{\dif{\mQ}}{\dif{\rfm}}}
	&
	&\forall \knd\in\integers{\geq0},\rnf\in(\rno,\infty).
	\end{align}
	Invoking the Stirling's approximation for the factorial function, i.e.
	\(\sqrt{2\pi \knd} (\sfrac{\knd}{e})^{\knd}\leq  \knd!  \leq  e \sqrt{\knd} (\sfrac{\knd}{e})^{\knd}\),
	we get
	\begin{align}
	\label{eq:analyticity-5}
	\abs{\pder{^{\knd}}{\rno^{\knd}}\fX(\rno,\dout)}
	&\leq
	\tfrac{\knd!}{\sqrt{2 \pi \knd}} \left(
	\tfrac{1}{\rno^{\knd}}\der{\mQ}{\rfm}
	\IND{\frac{\dif{\mW}}{\dif{\rfm}}\leq \frac{\dif{\mQ}}{\dif{\rfm}}}
	+\tfrac{\fX(\rnf,\dout)}{(\rnf-\rno)^{\knd}}
	\IND{\frac{\dif{\mW}}{\dif{\rfm}}>\frac{\dif{\mQ}}{\dif{\rfm}}}  
	\right)
	&
	&\forall \knd\in\integers{\geq0},\rnf\in(\rno,\infty).
	\end{align}	
	On the other hand 
	\(\int \fX(\rnf,\dout) \rfm(\dif{\dout})=e^{(\rnf-1)\RD{\rnf}{\mW}{\mQ}}\)
	and
	for all \(\rno\) in \((0,\chi_{\mW,\mQ})\) there exists a 
	\(\rnf\) in \((\rno,\chi_{\mW,\mQ})\) with 
	finite \(\RD{\rnf}{\mW}{\mQ}\). 
	Then as a result of \cite[Corollary 2.8.7-(ii)]{bogachev},
	\(\gX(\rno)\) is an infinitely differentiable function of \(\rno\) on \((0,\chi_{\mW,\mQ})\) such that
	\begin{align}
	\label{eq:analyticity-6}
	\pder{^{\knd}}{\rno^{\knd}}\gX(\rno)
	&=\int \left[\pder{^{\knd}}{\rno^{\knd}}\fX(\rno,\dout)\right] \rfm(\dif{\dout})
	&
	&\forall \knd\in\integers{\geq0}.
	\end{align}
	Consequently, if \(\chi_{\mW,\mQ}>1\), then
	\begin{align}
	\label{eq:analyticity-7}
	\RD{1}{\mW}{\mQ}
	&=\left.\pder{}{\rno}\ln \gX(\rno) \right\vert_{\rno=1}.
	\end{align}
	Using \eqref{eq:analyticity-5} and \eqref{eq:analyticity-6} we get
	\begin{align}
	\label{eq:analyticity-8}
	\abs{\pder{^{\knd}}{\rno^{\knd}}\gX(\rno)}
	&\leq \tfrac{\knd!}{\sqrt{2 \pi \knd}} \left(\tfrac{1}{\rno^{\knd}}+\tfrac{\gX(\rnf)}{(\rnf-\rno)^{\knd}}\right)
	&
	&\forall \knd\in\integers{+},\rnf\in(\rno,\chi_{\mW,\mQ}).
	\end{align}
	Thus \(\gX(\rno)\) is not only infinitely differentiable but also analytic in \(\rno\) 
	on \((0,\chi_{\mW,\mQ})\) by \cite[Proposition 1.2.12]{krantzparks}.
	On the other hand \(\gX(\rno)\in\reals{+}\) for all \(\rno\in(0,\chi_{\mW,\mQ})\) 
	because \(\gX(\rno)=e^{(\rno-1)\RD{\rno}{\mW}{\mQ}}\) by \eqref{eq:analyticity-3}
	and \(\RD{\rno}{\mW}{\mQ}\in\reals{\geq0}\) by  Lemmas \ref{lem:divergence-pinsker}
	and \ref{lem:divergence-order} and the definition of \(\chi_{\mW,\mQ}\). 
	Thus \(\ln \gX(\rno)\) is analytic in \(\rno\) on \((0,\chi_{\mW,\mQ})\)
	because composition of analytic functions is analytic by \cite[Proposition 1.4.2]{krantzparks}. 
	Then \(\RD{\rno}{\mW}{\mQ}\) is analytic in \(\rno\) on	\((0,\chi_{\mW,\mQ})\setminus\{1\}\) 
	because the quotient of analytic functions is analytic at points with open neighborhoods on which 
	the function in the denominator is non-zero by \cite[Proposition 1.1.12]{krantzparks}.
	
	Now we proceed with establishing the analyticity of \(\RD{\rno}{\mW}{\mQ}\) at \(\rno=1\) for 
	\(\chi_{\mW,\mQ}>1\) case. 
	Since \(\ln \gX(\rno)\) is analytic in \(\rno\) on \((0,\chi_{\mW,\mQ})\) we can write 
	\(\ln \gX(\rno)\) as a convergent power series around any point in \((0,\chi_{\mW,\mQ})\) 
	for some neighborhood.
	Thus, there exists a \(\delta>0\) for which the following two identities hold for all
	\(\rnt\in(1-\delta,1+\delta)\)
	\begin{align}
	\notag
	\sum\nolimits_{\ind=0}^{\infty}\tfrac{\abs{\rnt-1}^{\ind}}{\ind !}
	\abs{\left.\pder{^{\ind}}{\rno^{\ind}}\ln \gX(\rno)\right\vert_{\rno=1}} 
	&<\infty,
	\\
	\notag
	\sum\nolimits_{\ind=0}^{\infty}\tfrac{(\rnt-1)^{\ind}}{\ind !}
	\left.\pder{^{\ind}}{\rno^{\ind}}\ln \gX(\rno)\right\vert_{\rno=1}
	&=\ln \gX(\rnt).
	\end{align}
	Then using \(\ln \gX(1)=0\) together with \eqref{eq:analyticity-3} and \eqref{eq:analyticity-7} we get
	\begin{align}
	\label{eq:analyticity-9}
	\RD{\rnt}{\mW}{\mQ}
	&=\RD{1}{\mW}{\mQ}+\sum\nolimits_{\ind=2}^{\infty}\tfrac{(\rnt-1)^{\ind-1}}{\ind !}
	\left.\pder{^{\ind}}{\rno^{\ind}}\ln \gX(\rno)\right\vert_{\rno=1}
	&
	&\forall\rnt\in(1-\delta,1+\delta).
	\end{align}
	Then \(\RD{\rnt}{\mW}{\mQ}\) is analytic on \((1-\delta,1+\delta)\) by \cite[Corollary 1.2.4]{krantzparks}
	because it is equal to a function defined by a convergent power series.
	
	The convergent power series given in \eqref{eq:analyticity-9}
	determines the derivatives of \(\RD{\rno}{\mW}{\mQ}\) at \(\rno=1\) by \cite[Corollary 1.1.16]{krantzparks}:
	\begin{align}
	\label{eq:analyticity-10}
	\left.\pder{^{\knd}}{\rno^{\knd}}\RD{\rno}{\mW}{\mQ}\right\vert_{\rno=1}
	&=\tfrac{1}{\knd+1}\left.\pder{^{\knd+1}}{\rno^{\knd+1}}\ln \gX(\rno)\right\vert_{\rno=1}
	&
	&\knd\in\integers{+}.
	\end{align}
	Using \eqref{eq:analyticity-3} together with the elementary rules of differentiation
	we can express the derivatives of \(\RD{\rno}{\mW}{\mQ}\) in terms of the derivatives
	of \(\ln \gX(\rno)\) for other orders in \((0,\chi_{\mW,\mQ})\), as well
	\begin{align}
	\notag
	\left.\pder{^{\knd}}{\rno^{\knd}}\RD{\rno}{\mW}{\mQ}\right\vert_{\rno=\rnf}
	&=\sum\nolimits_{\tin=0}^{\knd} 
	\tfrac{\knd!}{\tin! (\knd-\tin)!}
	\left(\left.\pder{^{\knd-\tin}}{\rno^{\knd-\tin}} \tfrac{1}{\rno-1}\right\vert_{\rno=\rnf}\right)
	\left(\left.\pder{^{\tin}}{\rno^{\tin}}\ln \gX(\rno)\right\vert_{\rno=\rnf}\right)
	\\
	\label{eq:analyticity-11}
	&= \sum\nolimits_{\tin=0}^{\knd} 
	\tfrac{\knd!}{\tin!} \tfrac{(-1)^{\knd-\tin}}{(\rnf-1)^{\knd-\tin+1}}
	\left.\pder{^{\tin}}{\rno^{\tin}}\ln \gX(\rno)\right\vert_{\rno=\rnf}
	&
	&\knd\in\integers{+},\rnf\in(0,\chi_{\mW,\mQ})\setminus\{1\}.
	\end{align}
	On the other hand by Fa\`{a} di Bruno formula for derivatives of the composition of smooth functions
	\cite[Thm. 1.3.2]{krantzparks} we have
	\begin{align}
	\notag
	\pder{^{\tin}}{\rno^{\tin}}\ln \gX(\rno)
	&=\sum\nolimits_{\set{J}_{\tin}} \tfrac{\tin!}{\jnd_{1}! \jnd_{2}!\ldots \jnd_{\tin}!}
	\left(\left.\pder{^{\jnd_{1}+\jnd_{2}+\cdots +\jnd_{\tin}}}{\tau^{\jnd_{1}+\jnd_{2}+\cdots +\jnd_{\tin}}} \ln \tau \right\vert_{\tau=\gX(\rno)}\right)
	\left(\tfrac{1}{1!}\pder{^{1}}{\rno^{1}}\gX(\rno) \right)^{\jnd_{1}}
	\left(\tfrac{1}{2!}\pder{^{2}}{\rno^{2}}\gX(\rno) \right)^{\jnd_{2}}
	\cdots
	\left(\tfrac{1}{\tin!} \pder{^{\tin}}{\rno^{\tin}}\gX(\rno)\right)^{\jnd_{\tin}}
	\\
	\notag
	&=\sum\nolimits_{\set{J}_{\tin}} \tfrac{\tin!}{\jnd_{1}! \jnd_{2}!\ldots \jnd_{\tin}!}
	\tfrac{(-1)(\jnd_{1}+\jnd_{2}+\cdots +\jnd_{\tin}-1)!}{(-\gX(\rno))^{\jnd_{1}+\jnd_{2}+\cdots +\jnd_{\tin}}}
	\left(\tfrac{1}{1!}\pder{^{1}}{\rno^{1}}\gX(\rno) \right)^{\jnd_{1}}
	\left(\tfrac{1}{2!}\pder{^{2}}{\rno^{2}}\gX(\rno) \right)^{\jnd_{2}}
	\cdots
	\left(\tfrac{1}{\tin!} \pder{^{\tin}}{\rno^{\tin}}\gX(\rno)\right)^{\jnd_{\tin}}
	&
	&\forall \tin\in\integers{+}.
	\end{align}
	Then using \eqref{eq:analyticity-1}, \eqref{eq:analyticity-2}, \eqref{eq:analyticity-4}, 
	and \eqref{eq:analyticity-6} we get
	\begin{align}
	\label{eq:analyticity-13}
	\pder{^{\tin}}{\rno^{\tin}}\ln \gX(\rno)
	&=\tin!\sum\nolimits_{\set{J_{\tin}}}\tfrac{(-1)(\jnd_{1}+\jnd_{2}+\cdots +\jnd_{\tin}-1)!}{\jnd_{1}! \jnd_{2}!\ldots \jnd_{\tin}!}
	\prod\nolimits_{\ind=1}^{\tin}
	\left(\tfrac{(-1)}{\ind!}\EXS{\wma{\rno}{\mQ}}{\left(\ln\der{\mW}{\rfm}-\ln\der{\mQ}{\rfm}\right)^{\ind}}\right)^{\jnd_{\ind}}
	&
	&\forall \tin\in\integers{+}.
	\end{align}
	The expression given in \eqref{eq:lem:analyticity-derivative} for \(\knd^{{th}}\) derivative of \(\RD{\rno}{\mW}{\mQ}\) with respect to \(\rno\) follows from 
	the identity \(\ln \gX(1)=0\) and equations
	\eqref{eq:analyticity-3},
	\eqref{eq:analyticity-10}, \eqref{eq:analyticity-11},
	and \eqref{eq:analyticity-13}.
	
	In order to prove the analyticity of \(\RD{1}{\wma{\rno}{\mQ}}{\mW}\) and \(\RD{1}{\wma{\rno}{\mQ}}{\mQ}\), 
	first note that as a result of \eqref{eq:divergence-derivative-first}, which follows from
	\eqref{eq:lem:analyticity-derivative}, we have
	\begin{align}
	\label{eq:analyticity-14}
	\RD{1}{\wma{\rnf}{\mQ}}{\mW}
	&=(\rnf-1)^2 \left.\pder{}{\rno}\RD{\rno}{\mW}{\mQ}\right\vert_{\rno=\rnf}
	&
	&\forall \rnf\in(0,\chi_{\mW,\mQ}).
	\end{align}
	Since \(\RD{\rno}{\mW}{\mQ}\) is analytic in \(\rno\) on \((0,\chi_{\mW,\mQ})\), 
	so is \(\pder{}{\rno}\RD{\rno}{\mW}{\mQ}\).
	Hence, \(\RD{1}{\wma{\rno}{\mQ}}{\mW}\) is analytic in \(\rno\) on \((0,\chi_{\mW,\mQ})\). 
	Since \(\RD{1}{\wma{\rnf}{\mQ}}{\mW}\) is analytic in \(\rnf\) on \((0,\chi_{\mW,\mQ})\), 
	it is finite on \((0,\chi_{\mW,\mQ})\). 
	Thus \eqref{eq:lem:variational-tilted} holds for all \(\rno\) in \((0,\chi_{\mW,\mQ})\) and 
	\eqref{eq:analyticity-14} implies
	\begin{align}
	\notag
	\RD{1}{\wma{\rnf}{\mQ}}{\mQ}
	&=\RD{\rnf}{\mW}{\mQ}-\rnf(1-\rnf)\left.\pder{}{\rno}\RD{\rno}{\mW}{\mQ}\right\vert_{\rno=\rnf}
	&
	&\forall \rnf\in(0,\chi_{\mW,\mQ}).
	\end{align}
	Thus \(\RD{1}{\wma{\rno}{\mQ}}{\mQ}\) is an analytic function of \(\rno\) on \((0,\chi_{\mW,\mQ})\), 
	as well.
\end{proof}

\begin{proof}[Proof of Lemma \ref{lem:uniform-analyticity}]
As results of 
\eqref{eq:analyticity-2},
\eqref{eq:analyticity-4},
\eqref{eq:analyticity-5},
and
Definition \ref{def:tiltedprobabilitymeasure}
we have
	\begin{align}
	\notag
	\EXS{\wma{\rno}{\mQ}}{\abs{\ln \der{\mW}{\rfm}-\ln \der{\mQ}{\rfm}}^{\ind}}
	&\leq \tfrac{e^{(1-\rno)\RD{\rno}{\mW}{\mQ}}\ind!}{\sqrt{2 \pi \ind}}\left(\tfrac{1}{\rno^{\ind}}+\tfrac{e^{(\rnb-1)\RD{\rnb}{\mW}{\mQ}}}{(\rnb-\rno)^{\ind}}\right)
	&
	&\forall \ind\in\integers{+}.
	\end{align}
	Then using \(\RD{\rnb}{\mW}{\mQ}\leq \gamma\) together with Lemma \ref{lem:divergence-order} we get
	\begin{align}
	\notag
	\EXS{\wma{\rno}{\mQ}}{\abs{\ln \der{\mW}{\rfm}-\ln \der{\mQ}{\rfm}}^{\ind}}
	&\leq  \tfrac{\ind!}{\sqrt{\ind}} \tfrac{e^{(1\vee\rnb)\gamma}}{(\rno \wedge(\rnb-\rno))^{\ind}}
	&
	&\forall \ind\in\integers{+}.
	\end{align}
	Then using \eqref{eq:lem:analyticity-derivative} and \eqref{eq:lem:def:G} we get,
	\begin{align}
	\notag
	\abs{\left.\pder{^{\knd}\RD{\rno}{\mW}{\mQ}}{\rno^{\knd}}\right\vert_{\rno=\rnf}} 
	&\leq\begin{cases}
	\knd!\tfrac{\gamma}{\abs{\rnf-1}^{\knd}}+
	\knd!\sum\limits_{\tin=1}^{\knd}
	\tfrac{1}{\abs{\rnf-1}^{\knd-\tin+1}}
	\tfrac{1}{(\rnf \wedge(\rnb-\rnf))^{\tin}}
	\sum\nolimits_{\set{J}_{\tin}}
	\tfrac{(\jnd_{1}+\jnd_{2}+\cdots +\jnd_{\tin}-1)!}{\jnd_{1}! \jnd_{2}!\ldots \jnd_{\tin}!}
	\left(e^{(1\vee\rnb)\gamma}\right)^{\jnd_{1}+\jnd_{2}+\cdots +\jnd_{\tin}}
	&\rnf\neq 1
	\\
	\knd!
	\tfrac{1}{(1 \wedge(\rnb-1))^{\knd+1}}
	\sum\nolimits_{\set{J}_{\knd+1}}
	\tfrac{(\jnd_{1}+\jnd_{2}+\cdots +\jnd_{\knd+1}-1)!}{\jnd_{1}! \jnd_{2}!\ldots \jnd_{\knd+1}!}
	\left(e^{(1\vee\rnb)\gamma}\right)^{\jnd_{1}+\jnd_{2}+\cdots +\jnd_{\knd+1}}
	&\rnf=1
	\end{cases}.
	\end{align}
	On the other hand \(\sum_{\set{J}_{\tin}} \tfrac{(\jnd_{1}+\jnd_{2}+\cdots +\jnd_{\tin})!}{\jnd_{1}! \jnd_{2}!\ldots \jnd_{\tin}!}
	\xi^{\jnd_{1}+\jnd_{2}+\cdots +\jnd_{\tin}}=\xi(1+\xi)^{\tin-1}\)
	by  \cite[Thm. 1.4.1]{krantzparks}. Thus we get the following inequality,
	which implies \eqref{eq:lem:uniform-analyticity-derivativebound}
	for the \(\tau\) defined in	\eqref{eq:lem:uniform-analyticity-tau}.
	\begin{align}
	\notag
	\abs{\left.\pder{^{\knd}\RD{\rno}{\mW}{\mQ}}{\rno^{\knd}}\right\vert_{\rno=\rnf}} 
	&\leq\begin{cases}
	\knd! \sum\limits_{\tin=1}^{\knd}
	\tfrac{1}{\abs{\rnf-1}^{\knd-\tin+1}}
	\left(\tfrac{1+e^{(1\vee\rnb)\gamma}}{\rnf \wedge(\rnb-\rnf)}
	+\gamma\IND{\tin=1}\right)^{\tin}
	&\rnf\neq 1
	\\
	\knd! \left(\tfrac{1+e^{\rnb\gamma}}{1 \wedge(\rnb-1)}\right)^{\knd+1}
	&\rnf=1
	\end{cases}.
	\end{align}
	As a result of \cite[Corollaries 1.2.4 and 1.2.5]{krantzparks} the following equality 
	holds on the open interval in which the power series on the right hand side is convergent,
	\begin{align}
	\label{eq:uniform-analyticity-1}
	\RD{\rnt}{\mW}{\mQ}
	&=\sum\nolimits_{\jnd=0}^{\infty} \tfrac{(\rnt-\rnf)^{\jnd}}{\jnd!} 
	\left.\pder{^{\jnd}}{\rno^{\jnd}}\RD{\rno}{\mW}{\mQ}\right\vert_{\rno=\rnf}.
	\end{align}	
	Note that as a result of \eqref{eq:lem:uniform-analyticity-derivativebound} we have
	\begin{align}
	\notag
	\limsup\nolimits_{\knd\to\infty} \sqrt[\knd]{\tfrac{1}{\knd!}\abs{\left.\pder{^{\knd}}{\rno^{\knd}}\RD{\rno}{\mW}{\mQ}\right\vert_{\rno=\rnf}}}
	&\leq \tau.
	\end{align}
	Thus radius of convergence of the power series on the right hand side of \eqref{eq:uniform-analyticity-1} 
	is at least \(\tfrac{1}{\tau}\) by \cite[Lemma 1.1.8]{krantzparks}, i.e. by Hadamard formula.
	Thus for all \(\rnt\in(\rnf-\tfrac{1}{\tau},\rnf+\tfrac{1}{\tau})\) 
	using \eqref{eq:lem:uniform-analyticity-derivativebound} and \eqref{eq:uniform-analyticity-1} we get	
	\begin{align}
	\notag
	\abs{\RD{\rnt}{\mW}{\mQ}-\sum\nolimits_{\ind=0}^{\knd-1}\tfrac{(\rnt-\rnf)^{\ind}}{\ind!}
		\left.\pder{^{\ind}\RD{\rno}{\mW}{\mQ}}{\rno^{\ind}}\right\vert_{\rno=\rnf}}
	&\leq \sum\nolimits_{\ind=\knd}^{\infty} \abs{\rnt-\rnf}^{\ind}\tau^{\ind+1}\left(\IND{\rnf= 1}+\ind\IND{\rnf\neq 1}\right).
	\end{align}
	Using identities \(\sum_{\ind=0}^{\infty}\dsta^{\ind}=\tfrac{1}{1-\dsta}\) and
	\(\sum_{\ind=0}^{\infty}(\ind +1)\dsta^{\ind}=\tfrac{1}{(1-\dsta)^2}\) for \(\abs{\dsta}<1\) we get,
	\begin{align}
	\notag
	\abs{\RD{\rnt}{\mW}{\mQ}-\sum\nolimits_{\ind=0}^{\knd-1}\tfrac{(\rnt-\rnf)^{\ind}}{\ind!}
		\left.\pder{^{\ind}\RD{\rno}{\mW}{\mQ}}{\rno^{\ind}}\right\vert_{\rno=\rnf}}
	&\leq \tfrac{\tau^{\knd+1}\abs{\rnt-\rnf}^{\knd}}{1-\abs{\rnt-\rnf}\tau}
	\left[\IND{\rnf= 1}+\left(\knd-1+\tfrac{1}{1-\abs{\rnt-\rnf}\tau}\right)\IND{\rnf\neq 1}\right].
	\end{align}
\end{proof}

\subsection{Proofs of Lemmas on the Augustin Information}\label{sec:informationproofs}
\begin{proof}[Proof of Lemma \ref{lem:information}]~
	\begin{enumerate}[(a)]
		\item [(\ref{lem:information}-\ref{information:bounded})]
		\(\RMI{\rno}{\mP}{\Wm}\leq \CRD{\rno}{\Wm}{\mQ}{\mP}\) for all \(\mQ\in\pmea{\outA}\) by definition.
		On the other hand, \(\RD{\rno}{\Wm(\dinp)}{\qmn{1,\mP}}\leq -\ln \mP(\dinp)\) for all \(\dinp\) with positive 
		\(\mP(\dinp)\) by Lemma \ref{lem:divergence-RM} because
		\(\mP(\dinp) \Wm(\dinp)\leq \qmn{1,\mP}\). 
		Hence, \(\RMI{\rno}{\mP}{\Wm}\leq -\sum_{\dinp} \mP(\dinp)\ln \mP(\dinp)\).
		
		\item[(\ref{lem:information}-\ref{information:one})]
		Note that as a result of Lemma \ref{lem:divergence-pinsker} and \eqref{eq:topsoe},
		\begin{align}
		\notag
		\CRD{1}{\Wm}{\mQ}{\mP}
		&\geq\CRD{1}{\Wm}{\qmn{1,\mP}}{\mP}+\tfrac{1}{2}\lon{\qmn{1,\mP}-\mQ}^2 
		&
		&\forall \mQ\in\pmea{\outA}.
		\end{align}
		Then \(\qmn{1,\mP}\) is the unique probability measure satisfying 
		\(\RMI{1}{\mP}{\Wm}=\CRD{1}{\Wm}{\qmn{1,\mP}}{\mP}\).
Then \eqref{eq:lem:information-one:EHB} follows from \eqref{eq:topsoe}.

		\item[(\ref{lem:information}-\ref{information:zto})]
		Let  \(\set{S}\)  and \(\varsigma\) be
		\begin{align}
		\notag
		\varsigma
		&\DEF\min\nolimits_{\dinp:\mP(\dinp)>0} \mP(\dinp),
		\\
		\notag
		\set{S}
		&\DEF \left\{\mS\in \fdis{\inpS}:\varsigma \IND{\mP(\dinp)>0}\leq \mS(\dinp)\leq \left(e^{\frac{1-\rno}{\varsigma}\CRD{\rno}{\Wm}{\qgn{\rno,\mP}}{\mP}}
		\right)\IND{\mP(\dinp)>0}~\forall \dinp\in \inpS\right\}.
		\end{align}
		The statements proved in 
		(\ref{information:zto}-\ref{information-zto:sufficientcondition}),
		(\ref{information:zto}-\ref{information:zto-equivalence}),
		(\ref{information:zto}-\ref{information:zto-fixedpoint}),
		and
		(\ref{information:zto}-\ref	{information:zto-iteration}) 
		collectively imply part (\ref{information:zto}).
		\begin{enumerate}[(\ref{information:zto}-i)]
			\item\label{information-zto:sufficientcondition}
			\emph{If \(\qmn{1,\mP}\AC\mU\) and \(\Aop{\rno}{\mP}{\mU}=\mU\), 
				then \(\CRD{\rno}{\Wm}{\mU}{\mP}=\RMI{\rno}{\mP}{\Wm}\),
				\eqref{eq:lem:information-zto:fixedpoint} 
				and \eqref{eq:lem:information-zto:EHB} 
				hold for \(\qmn{\rno,\mP}=\mU\), and \(\qmn{\rno,\mP}\) is unique:}
			Note that \(\Aop{\rno}{\mP}{\mU}=\mU\) and \(\qmn{1,\mP}\AC\mU\) imply 
			\begin{align}
			\notag
			\der{\mU}{\rfm}
			&=\left[\sum\nolimits_{\dinp}\mP(\dinp) \left(\der{\Wm(\dinp)}{\rfm}\right)^{\rno} e^{(1-\rno)\RD{\rno}{\Wm(\dinp)}{\mU}}\right]^{\frac{1}{\rno}}
			&
			&\forall \rfm:\qmn{1,\mP}\AC\rfm.
			\end{align} 
			Then one can confirm by substitution that
			\begin{align}
			\notag
			\RD{\rno}{\mU}{\mQ}
			&=\tfrac{1}{\rno-1}\ln 
			\sum\nolimits_{\dinp}\mP(\dinp) e^{(\rno-1)(\RD{\rno}{\Wm(\dinp)}{\mQ}-\RD{\rno}{\Wm(\dinp)}{\mU})}
			&
			&\forall \mQ\in\pmea{\outA}.
			\end{align}
			Then Jensen's inequality and convexity of the exponential function imply
			\begin{align}
			\label{eq:information-zto-1}
			\RD{\rno}{\mU}{\mQ}
			&\leq \CRD{\rno}{\Wm}{\mQ}{\mP}-\CRD{\rno}{\Wm}{\mU}{\mP}
			&
			&\forall \mQ\in\pmea{\outA}.
			\end{align}
			Then \(\mU\) is the unique probability measure satisfying 
			\(\RMI{\rno}{\mP}{\Wm}=\CRD{\rno}{\Wm}{\mU}{\mP}\)
			by Lemma \ref{lem:divergence-pinsker}.
			Consequently \eqref{eq:lem:information-zto:fixedpoint}
			and  the lower bound given in \eqref{eq:lem:information-zto:EHB} hold. 
			
In order to establish the upper bound given in \eqref{eq:lem:information-zto:EHB}
for \(\mQ\in \domtr{\rno,\mP}\), first note that
\(\Wm(\dinp)\AC\mU\) for all \(\dinp\) with a positive
\(\mP(\dinp)\) because  \(\qmn{1,\mP}\AC\mU\).
Thus for all \(\dinp\) with positive \(\mP(\dinp)\)
we have
\begin{align}
\notag
\RD{\rno}{\Wm(\dinp)}{\mQ}-\RD{\rno}{\Wm(\dinp)}{\mU}
&=\tfrac{1}{\rno-1}\left[\ln\!\int\! 
(\der{\Wm(\dinp)}{\mU})^{\rno}
(\der{\qmn{\sim}}{\mU})^{1-\rno}\mU(\dif{\dout})
-(\rno-1)\RD{\rno}{\Wm(\dinp)}{\mU}
\right]
\\
\notag
&=\tfrac{1}{\rno-1}\ln\!\int\! 
(\der{\qmn{\sim}}{\mU})^{1-\rno}
(\der{\Wm(\dinp)}{\mU})^{\rno}e^{(1-\rno)\RD{\rno}{\Wm(\dinp)}{\mU}}
\mU(\dif{\dout})
\\
\notag
&=\tfrac{1}{\rno-1}\ln\!\int\! 
(\der{\qmn{\sim}}{\mU})^{1-\rno}
\der{\Wma{\rno}{\mU}(\dinp)}{\mU}
\mU(\dif{\dout})
\\
\notag
&=\tfrac{1}{\rno-1}\ln\!\int\! 
(\der{\qmn{\sim}}{\mU})^{1-\rno} \Wma{\rno}{\mU}(\dinp)(\dif{\dout})
&
&\forall\mQ\in \domtr{\rno,\mP}
\end{align}
where \(\qmn{\sim}\) is the component of \(\mQ\) that is absolutely continuous in \(\mU\).
Consequently,
\begin{align}
\label{eq:information-zto-2}
\CRD{\rno}{\Wm}{\mQ}{\mP}-\CRD{\rno}{\Wm}{\mU}{\mP}
&=\tfrac{1}{\rno-1}\sum\nolimits_{\dinp}\!\mP(\dinp)\ln\!\int\! 
(\der{\qmn{\sim}}{\mU})^{1-\rno} \Wma{\rno}{\mU}(\dinp)(\dif{\dout})
&
&\forall\mQ\in \domtr{\rno,\mP}.
\end{align}			
			
On the other hand using the Jensen's inequality and concavity of the natural logarithm function we get
			\begin{align}
			\notag
			\tfrac{1}{\rno-1}\sum\nolimits_{\dinp} \mP(\dinp) \ln 
			\int (\der{\qmn{\sim}}{\mU})^{1-\rno} \Wma{\rno}{\mU}(\dinp)(\dif{\dout})
			&\leq \tfrac{1}{\rno-1}\sum\nolimits_{\dinp}\mP(\dinp) \int
			\left[\ln(\der{\qmn{\sim}}{\mU})^{1-\rno}\right]						\Wma{\rno}{\mU}(\dinp)(\dif{\dout})
			\\
			\label{eq:information-zto-3}
			&= \int  \left[\ln (\der{\mU}{\qmn{\sim}})\right] 
			\Aop{\rno}{\mP}{\mU}(\dif{\dout}).
			\end{align}
			Since \(\Aop{\rno}{\mP}{\mU}=\mU\) by the hypothesis, using
			\eqref{eq:information-zto-2} and \eqref{eq:information-zto-3} we get
			\begin{align}
			\notag
			\RD{1}{\mU}{\mQ}
			&\geq
			\CRD{\rno}{\Wm}{\mQ}{\mP}-\CRD{\rno}{\Wm}{\mU}{\mP}
			&
			&\forall \mQ\in \domtr{\rno,\mP}. 
			\end{align}
			In order to establish the upper bound given in  \eqref{eq:lem:information-zto:EHB} 
			for \(\mQ\notin \domtr{\rno,\mP}\)  we need to make the 
			following additional observation. If \(\mQ\notin \domtr{\rno,\mP}\),  then
			there exists an \(\dinp\) for which \(\mP(\dinp)>0\) and \(\Wm(\dinp)\perp\mQ\)
			because \(\RD{\rno}{\Wm(\dinp)}{\mQ}=\infty\)
			implies  \(\Wm(\dinp)\perp\mQ\) by \eqref{eq:shiryaev-new}.
As a result there exists an event \(\oev\in\outA\)
			such that such that \(\mU(\oev)>0\) and \(\mQ(\oev)=0\)
			because \(\Wm(\dinp)\AC\qmn{1,\mP}\) and \(\qmn{1,\mP}\AC\mU\).
			Consequently \(\RD{1}{\mU}{\mQ}=\infty\) and 
			the upper bound in equation \eqref{eq:lem:information-zto:EHB} holds for 
			\(\mQ\notin \domtr{\rno,\mP}\), as well.

			\item\emph{\(\CRD{\rno}{\Wm}{\mQ}{\mP}\!-\!\CRD{\rno}{\Wm}{\Aop{\rno}{\mP}{\mQ}}{\mP}\!\geq\!\RD{1}{\Aop{\rno}{\mP}{\mQ}}{\mQ}\) 
				for all \(\mQ\in\domtr{\rno,\mP}\):}
			Note that  \(\Aop{\rno}{\mP}{\mQ} \AC \mQ\) for all \(\mQ\in \domtr{\rno,\mP}\) by definition.
			Then
			\begin{align}
			\notag
			\CRD{\rno}{\Wm}{\mQ}{\mP}-
			\CRD{\rno}{\Wm}{\Aop{\rno}{\mP}{\mQ}}{\mP}
			&=\tfrac{1}{1-\rno}\sum\nolimits_{\dinp} \mP(\dinp) 
			\ln\int \left(\der{\Aop{\rno}{\mP}{\mQ}}{\mQ}\right)^{1-\rno} \Wma{\rno}{\mQ}(\dinp)(\dif{\dout}) 
			\\
			\notag
			&\geq \tfrac{1}{1-\rno} \sum\nolimits_{\dinp} \mP(\dinp) 
			\int \left[\ln\left(\der{\Aop{\rno}{\mP}{\mQ}}{\mQ}\right)^{1-\rno}\right] \Wma{\rno}{\mQ}(\dinp)(\dif{\dout}) 
			&
			&
			\\
			\label{eq:information-1}
			&=\RD{1}{\Aop{\rno}{\mP}{\mQ}}{\mQ}.
			\end{align}
			The inequality follows from the Jensen's inequality  and the concavity of the natural logarithm function.
			
			\item\label{information-zto:totallybounded}
			\emph{\(\{\Aopi{\rno}{\mP}{\ind}{\qgn{\rno,\mP}}\}_{\ind\in\integers{+}}\) is totally bounded for total 
				variation metric on \(\fmea{\outA}\):}
			For any \(\mQ\in \domtr{\rno,\mP}\), as a result of definitions of \(\Aop{\rno}{\mP}{\cdot}\) and 
			\(\mmn{\rno,\mP}\) we have
			\begin{align}
			\notag
			\der{\Aop{\rno}{\mP}{\mQ}}{\rfm}
			&=(\der{\mmn{\rno,\mS}}{\rfm})^{\rno} (\der{\mQ}{\rfm})^{1-\rno}
			\end{align}
			where \(\mS(\dinp)=\mP(\dinp)e^{(1-\rno)\RD{\rno}{\Wm(\dinp)}{\mQ}}\).
			Furthermore, if \(\CRD{\rno}{\Wm}{\mQ}{\mP}\leq \CRD{\rno}{\Wm}{\qgn{\rno,\mP}}{\mP}\), then \(\mS\in \set{S}\).
			
			In addition \(\qgn{\rno,\mP}\) is equal to \(\mmn{\rno,\mS}\) for an \(\mS\in \set{S}\). In particular
			\begin{align}
			\notag
			\qgn{\rno,\mP}
			&=\mmn{\rno,\smn{0}} 
			\end{align}
			where \(\smn{0}=\lon{\mmn{\rno,\mP}}^{-\rno} \mP\).
			One can confirm by substitution that 
			\(\lon{\mmn{\rno,\mP}}^{-\rno}=e^{(1-\rno)\RD{\rno}{\mP\mtimes \Wm}{\mP \otimes\qgn{\rno,\mP}}}\).
			Furthermore, \(\RD{\rno}{\mP\mtimes \Wm}{\mP \otimes\qgn{\rno,\mP}}\geq 0\) by  
			Lemma \ref{lem:divergence-pinsker}
			and \(\RD{\rno}{\mP\mtimes \Wm}{\mP \otimes\qgn{\rno,\mP}}\leq \CRD{\rno}{\Wm}{\qgn{\rno,\mP}}{\mP}\)
			by the Jensen's  inequality and the concavity of the natural logarithm function.
			Thus \(\smn{0}\in \set{S}\).

			On the other hand, \(\CRD{\rno}{\Wm}{\Aopi{\rno}{\mP}{\ind}{\qgn{\rno,\mP}}}{\mP}\leq \CRD{\rno}{\Wm}{\qgn{\rno,\mP}}{\mP}\)
			for all \(\ind\geq \integers{+}\).
			Thus we can write \(\Aopi{\rno}{\mP}{\ind}{\qgn{\rno,\mP}}\) in terms of the elements of  \(\mmn{\rno,\set{S}}\) as follows:
			\begin{align}
			\notag
			\der{\Aopi{\rno}{\mP}{\ind}{\qgn{\rno,\mP}}}{\rfm}
			&=(\der{\mmn{\rno,\smn{0}}}{\rfm})^{(1-\rno)^{\ind}}
			\prod\nolimits_{\jnd=1}^{\ind}
			(\der{\mmn{\rno,\smn{\jnd}}}{\rfm})^{\rno (1-\rno)^{\ind-\jnd}}
			\end{align}
			where \(\smn{\jnd}(\dinp)=\mP(\dinp)e^{(1-\rno)\RD{\rno}{\Wm(\dinp)}{\Aopi{\rno}{\mP}{\jnd-1}{\qgn{\rno,\mP}}}}\).

			In order to prove that \(\{\Aopi{\rno}{\mP}{\ind}{\qgn{\rno,\mP}}\}_{\ind\in\integers{+}}\) is totally bounded,
			we prove that a superset of it, i.e. \(\set{B}\) defined in the following, is totally bounded. 
			\begin{align}
			\label{eq:def:setBi}
			\set{B}_{\ind}
			&\DEF
			\left\{\mB\in\fmea{\outA}: 
			\der{\mB}{\rfm}=(\der{\mmn{\rno,\smn{0}}}{\rfm})^{(1-\rno)^{\ind}}
			\prod\nolimits_{\jnd=1}^{\ind}
			(\der{\mmn{\rno,\smn{\jnd}}}{\rfm})^{\rno (1-\rno)^{\ind-\jnd}}
			\mbox{~for some~} \smn{\jnd}\in\set{S}\right\}
			\\
			\label{eq:def:setB}
			\set{B}
			&\DEF \cup_{\ind\in \integers{+} } \set{B}_{\ind}.
			\end{align}
			Let us denote the number of \(\dinp\)'s with \(\mP(\dinp)>0\) by \(\knd\).
			Then \(\set{S}\) is isometric to a cube 
			in\footnote{We assume \(\reals{}^{\knd}\) has the metric \(d:\reals{}^{\knd}\times\reals{}^{\knd}\to\reals{\geq0}\) 
				given by \(d(\dsta,\tilde{\dsta})=\sum_{\tin=1}^{\knd}\abs{\dsta_{\tin}-\tilde{\dsta}_{\tin}}\) 
				for all \(\dsta,\tilde{\dsta}\in \reals{}^{\knd}\).}  \(\reals{}^{\knd}\)\!.
			We divide each side of the cube into \({\blx}\) equal length intervals. 
			Thus \(\set{S}\) is composed of \(\blx^{\knd}\) sub-cubes. 
			Furthermore,  \(\mmn{\rno,\mS}\leq \mmn{\rno,\widetilde{\mS}}\) 
			whenever
			\(\mS\leq \widetilde{\mS}\) by definition. 
			Thus, for any \(\mS\in \set{S}\) we have
			\begin{align}
			\notag
			\mmn{\rno,\lfloor\mS\rfloor_{\blx}}
			\leq \mmn{\rno,\mS}
			&\leq \left[1+\tfrac{e^{\frac{1-\rno}{\varsigma}\CRD{\rno}{\Wm}{\qgn{\rno,\mP}}{\mP}}-\varsigma}{\varsigma \blx}\right]^{\frac{1}{\rno}}
			\mmn{\rno,\lfloor\mS\rfloor_{\blx}}
			\end{align}  
			where \(\lfloor\mS\rfloor_{\blx}\) is the corner point 
			that satisfies \(\lfloor\mS\rfloor_{\blx}\leq \tilde{\mS}\) for all \(\tilde{\mS}\)
			in the sub-cube for the sub-cube that \(\mS\) is in. 
			
			In order to approximate members of \(\set{B}_{\ind}\) one can use the preceding discretization
			on each \(\mS_{\jnd}\) given in definition \(\set{B}_{\ind}\).
			Thus we have \(\blx^{(\ind+1) \knd}\) point set \(\set{K}_{\ind,\blx}\)
			such that:
			\begin{align}
			\notag
			\forall \mB\in \set{B}_{\ind}~\exists \mean\in \set{K}_{\ind,\blx}
			\mbox{~such that~}
			\mean \leq \mB \leq \left[1+\tfrac{e^{\frac{1-\rno}{\varsigma}\CRD{\rno}{\Wm}{\qgn{\rno,\mP}}{\mP}}-\varsigma}{\varsigma \blx}\right]^{\frac{1}{\rno}}
			\mean.
			\end{align} 
			One can use the points of \(\set{K}_{\ind,\blx}\) to approximate the points in 
			\(\cup_{\tin>\ind} \set{B}_{\tin}\), as well. 
			We apply the approximation with the sub-cubes described 
			above for the last \(\ind\) components of \(\mB\), 
			i.e. for \(\ind\) \(\smn{\jnd}\)'s with the largest indices.
			The remaining component of \(\mean\) is set to the minimum element 
			of\footnote{Such a minimum element might not exist for an arbitrary set of measures, but for the image of 
				\(\set{S}\) it exists: the minimum element is the image of the minimum point of \(\set{S}\).} 
			\(\mmn{\rno,\set{S}}\). Then
			\begin{align}
			\notag
			\forall\!\mB\!\in\!\cup_{\tin>\ind}\set{B}_{\tin}~\exists\mean\!\in\!\set{K}_{\ind,\blx}
			\mbox{~such that~}
			\mean\!\leq\!\mB\!\leq\!
			\left[1+\tfrac{e^{\frac{1-\rno}{\varsigma}\CRD{\rno}{\Wm}{\qgn{\rno,\mP}}{\mP}}-\varsigma}{\varsigma \blx}\right]^{\frac{1-(1-\rno)^{\ind}}{\rno}}
			\left[\tfrac{e^{\frac{1-\rno}{\varsigma}\CRD{\rno}{\Wm}{\qgn{\rno,\mP}}{\mP}}}{\varsigma}\right]^{\frac{(1-\rno)^{\ind}}{\rno}}\!
			\mean.
			\end{align} 
			Let \(\set{K}_{\blx}\) be \(\set{K}_{\blx}=\cup_{\jnd\in\{0,\ldots,\blx\} } \set{K}_{\jnd,\blx}\).
			Then
			\begin{align}
			\notag
			\forall\!\mB\!\in\!\set{B}~\exists\mean\!\in\!\set{K}_{\blx}
			\mbox{~such that~}
			&
			&
			\lon{\mB\!-\!\mean}
			&\!\leq\!
			\left(\!
			\left[1+\tfrac{e^{\frac{1-\rno}{\varsigma}\CRD{\rno}{\Wm}{\qgn{\rno,\mP}}{\mP}}-\varsigma}{\varsigma \blx}\right]^{\frac{1}{\rno}}\!
			\left[\tfrac{e^{\frac{1-\rno}{\varsigma}\CRD{\rno}{\Wm}{\qgn{\rno,\mP}}{\mP}}}{\varsigma}\right]^{\frac{(1-\rno)^{\blx}}{\rno}}\!
			-\!1\!\right)\!
			\sup\nolimits_{\mS\in \set{S}} \lon{\mmn{\rno,\mS}}.
			\end{align}
			Note that \(\sup\nolimits_{\mS\in \set{S}} \lon{\mmn{\rno,\mS}}\) is finite 
			and its coefficient 
			converges to zero as \(\blx\) diverges. 
			Furthermore, \(\set{K}_{\blx}\) is a finite set for any \(\blx\). Thus \(\set{B}\) is totally bounded. 
			As a result every subset of \(\set{B}\), and hence  
			\(\{\Aopi{\rno}{\mP}{\ind}{\qgn{\rno,\mP}}\}_{\ind\in\integers{+}}\),
			is totally bounded. 
			
			\item\label{information:zto-equivalence}
			\emph{\(\{\Aopi{\rno}{\mP}{\ind}{\qgn{\rno,\mP}}\}_{\ind\in\integers{+}}\) has a 
				subsequence \(\{\Aopi{\rno}{\mP}{\ind(\jnd)}{\qgn{\rno,\mP}}\}_{\jnd\in\integers{+}}\) 
				satisfying \(\lim_{\jnd\to \infty}\lon{\Aopi{\rno}{\mP}{\ind(\jnd)}{\qgn{\rno,\mP}}\!-\!\mU}\!=\!0\)
				for a \(\mU\!\sim\!\qmn{1,\mP}\):}
			The existence of a limit point \(\mU\) and convergent subsequence follow from 
			the compactness of the completion of \(\{\Aopi{\rno}{\mP}{\ind}{\qgn{\rno,\mP}}\}_{\ind\in\integers{+}}\).
			The completion is compact by \cite[Thm. 45.1]{munkres} because 
			\(\{\Aopi{\rno}{\mP}{\ind}{\qgn{\rno,\mP}}\}_{\ind\in\integers{+}}\) is totally bounded.
			
			Note that \(\Aopi{\rno}{\mP}{\ind}{\qgn{\rno,\mP}}\AC\qmn{1,\mP}\) because \(\qgn{\rno,\mP}\sim \qmn{1,\mP}\).
			Then \(\mU\AC\qmn{1,\mP}\) because any probability measure that is not absolute continuous in 
			\(\qmn{1,\mP}\) is outside the closure of \(\{\Aopi{\rno}{\mP}{\ind}{\qgn{\rno,\mP}}\}_{\ind\in\integers{+}}\).

			On the other hand, \(\mmn{\rno,\mP}\leq \qgn{\rno,\mP}\) by definition because \(\lon{\mmn{\rno,\mP}}\leq 1\).
			Furthermore, for any \(\mQ\in \domtr{\rno,\mP}\) we have
			\begin{align}
			\notag
			\der{\Aop{\rno}{\mP}{\mQ}}{\rfm}
			&\geq \sum\nolimits_{\dinp} \mP(\dinp) (\der{\Wm(\dinp) }{\rfm})^{\rno} (\der{\mQ}{\rfm})^{1-\rno}
			\\
			\notag
			&=(\der{\mmn{\rno,\mP}}{\rfm})^{\rno} (\der{\mQ}{\rfm})^{1-\rno}.
			\end{align} 
			Hence, if \(\mmn{\rno,\mP}\leq\mQ\), then \(\mmn{\rno,\mP}\leq\Aop{\rno}{\mP}{\mQ}\).
			Consequently,  \(\mmn{\rno,\mP}\leq\Aopi{\rno}{\mP}{\ind}{\qgn{\rno,\mP}}\) for all \(\ind\in\integers{+}\).
			Hence \(\mmn{\rno,\mP}\leq \mU\), because otherwise \(\mU\) can not 
			be in the closure of \(\{\Aopi{\rno}{\mP}{\ind}{\qgn{\rno,\mP}}\}_{\ind\in\integers{+}}\).
			Then \(\qmn{1,\mP}\AC \mU\) because \(\qmn{1,\mP}\sim \mmn{\rno,\mP}\).

			\item\label{information:zto-continuity}
			\emph{\(\Aop{\rno}{\mP}{\cdot}:\domtr{\rno,\mP}\to \pmea{\outA}\) is continuous 
				if both \(\domtr{\rno,\mP}\) and \(\pmea{\outA}\) have the total variation topology:}
			First, note that \((\dsta+\tin)^{1-\rno}-\dsta^{1-\rno}\) is a monotonically decreasing function of 
			\(\dsta\) on \(\reals{\geq0}\) for fixed \(\tin\in\reals{\geq0}\) and \(\rno\in(0,1)\).
			Then for any \(\dinp\) with positive \(\mP(\dinp)\) and \(\qmn{1},\qmn{2}\in \domtr{\rno,\mP}\)
			as a result of Holder's inequality we have
			\begin{align}
			\notag
			\int \abs{\left(\der{\Wm(\dinp)}{\rfm}\right)^{\rno}\left(\der{\qmn{1}}{\rfm}\right)^{1-\rno}
				-\left(\der{\Wm(\dinp)}{\rfm}\right)^{\rno}\left(\der{\qmn{2}}{\rfm}\right)^{1-\rno}} \rfm(\dif{\dout})
			&\leq \int \left(\der{\Wm(\dinp)}{\rfm}\right)^{\rno}
			\abs{\der{\qmn{1}}{\rfm}-\der{\qmn{2}}{\rfm}}^{1-\rno} \rfm(\dif{\dout})
			\\
			\notag
			&\leq \lon{\qmn{1}-\qmn{2}}^{1-\rno}. 
			\end{align}
			Hence \(e^{(\rno-1)\RD{\rno}{\Wm(\dinp)}{\mQ}}\Wma{\rno}{\mQ}(\dinp)\) is a continuous 
			function of \(\mQ\) from \(\domtr{\rno,\mP}\) to \(\fmea{\outA}\) 
			for the total variation topology.
			Then \(\Wma{\rno}{\mQ}(\dinp)\) is a continuous function of \(\mQ\) for 
			the total variation topology, as well,
			because \(\RD{\rno}{\Wm(\dinp)}{\mQ}\) is continuous in \(\mQ\) 
			for the total variation topology by Lemma \ref{lem:divergence:uc}.
			Thus \(\Aop{\rno}{\mP}{\cdot}:\domtr{\rno,\mP}\to \pmea{\outA}\) is continuous.
			
			\item\label{information:zto-fixedpoint}
			\emph{The limit point of the convergent subsequence \(\{\Aopi{\rno}{\mP}{\ind(\jnd)}{\qgn{\rno,\mP}}\}_{\jnd\in\integers{+}}\) 
				is a fixed point of \(\Aop{\rno}{\mP}{\cdot}\), i.e. \(\Aop{\rno}{\mP}{\mU}\!=\!\mU\):} 
			Using the non-negativity of the \renyi divergence for probability measures 
			and \eqref{eq:information-1} we get
			\begin{align}
			\notag
			\CRD{\rno}{\Wm}{\qgn{\rno,\mP}}{\mP}
			&\geq 
			\sum\nolimits_{\ind\in \integers{\geq 0}}
			\CRD{\rno}{\Wm}{\Aopi{\rno}{\mP}{\ind}{\qgn{\rno,\mP}}}{\mP}-
			\CRD{\rno}{\Wm}{\Aop{\rno}{\mP}{\Aopi{\rno}{\mP}{\ind}{\qgn{\rno,\mP}}}}{\mP}
			\\
			\notag
			&\geq 
			\sum\nolimits_{\ind\in \integers{\geq 0}}
			\RD{1}{\Aop{\rno}{\mP}{\Aopi{\rno}{\mP}{\ind}{\qgn{\rno,\mP}}}}{\Aopi{\rno}{\mP}{\ind}{\qgn{\rno,\mP}}}.
			\end{align}
			Then 
			\(\lim\nolimits_{\ind \to \infty}\RD{1}{\Aop{\rno}{\mP}{\Aopi{\rno}{\mP}{\ind}{\qgn{\rno,\mP}}}}{\Aopi{\rno}{\mP}{\ind}{\qgn{\rno,\mP}}}=0\).
			Hence
			\(\lim\nolimits_{\jnd \to \infty}\RD{1}{\Aop{\rno}{\mP}{\Aopi{\rno}{\mP}{\ind(\jnd)}{\qgn{\rno,\mP}}}}{\Aopi{\rno}{\mP}{\ind(\jnd)}{\qgn{\rno,\mP}}}=0\).
			
			On the other hand, \(\RD{1}{\Aop{\rno}{\mP}{\mQ}}{\mQ}\) is lower semicontinuous in \(\mQ\) for the total 
			variation topology because the
			\renyi divergence is lower semicontinuous in its arguments 
			for the topology of setwise convergence ---and hence to the total variation 
			topology---
			by Lemma \ref{lem:divergence:lsc}
			and
			\(\Aop{\rno}{\mP}{\cdot}\) is continuous in the total variation topology. 
			Then \(\RD{1}{\Aop{\rno}{\mP}{\mU}}{\mU}=0\)
			because \(\Aopi{\rno}{\mP}{\ind(\jnd)}{\qgn{\rno,\mP}}\) converges to \(\mU\)
			in total variation topology as \(\jnd\) diverges.
			Thus \(\Aop{\rno}{\mP}{\mU}=\mU\) as a result of Lemma \ref{lem:divergence-pinsker}.
			
			\item\label{information:zto-iteration}
			\emph{\(\qmn{\rno,\mP}\) satisfies \eqref{eq:lem:information-zto:iteration}:}
			Recall that \(\RD{\rno}{\mW}{\mQ}\) is continuous in  \(\mQ\) for the total variation topology 
			by Lemma \ref{lem:divergence:uc}. Furthermore,
			\(\lim_{\jnd\to \infty}\lon{\Aopi{\rno}{\mP}{\ind(\jnd)}{\qgn{\rno,\mP}}\!-\!\qmn{\rno,\mP}}\!=\!0\),
			and  \(\CRD{\rno}{\Wm}{\qmn{\rno,\mP}}{\mP}\!=\!\RMI{\rno}{\mP}{\Wm}\).
			Then
			\begin{align}
			\notag
			\lim\nolimits_{\jnd\to \infty}\CRD{\rno}{\Wm}{\Aopi{\rno}{\mP}{\ind(\jnd)}{\qgn{\rno,\mP}}}{\mP}
			&=\RMI{\rno}{\mP}{\Wm}.
			\end{align}
			On the other hand  \(\CRD{\rno}{\Wm}{\Aopi{\rno}{\mP}{\ind}{\qgn{\rno,\mP}}}{\mP}\geq \CRD{\rno}{\Wm}{\Aopi{\rno}{\mP}{\ind+\tin}{\qgn{\rno,\mP}}}{\mP}\geq \RMI{\rno}{\mP}{\Wm}\) 
			for all \(\tin\in \integers{+}\) by \eqref{eq:information-1} and the definition of the
			Augustin information. Thus
			\begin{align}
			\notag
			\lim\nolimits_{\ind\to \infty}\CRD{\rno}{\Wm}{\Aopi{\rno}{\mP}{\ind}{\qgn{\rno,\mP}}}{\mP}
			&=\RMI{\rno}{\mP}{\Wm}.
			\end{align}
			Then as a result of \eqref{eq:lem:information-zto:EHB}, which is implied by the assertions we have
			already established, we have
			\begin{align}
			\notag
			\lim\nolimits_{\ind\to \infty}\RD{\rno}{\qmn{\rno,\mP}}{\Aopi{\rno}{\mP}{\ind}{\qgn{\rno,\mP}}}
			&=0.
			\end{align}
			Then \(\lim_{\ind\to \infty}\lon{\qmn{\rno,\mP}-\Aopi{\rno}{\mP}{\ind}{\qgn{\rno,\mP}}}=0\)
			as a result of Lemma \ref{lem:divergence-pinsker}.
		\end{enumerate}

\begin{remark}\label{remark:information-zto:iteration-general}
For any \(\mQ\) satisfying \(\mQ\sim\mmn{\rno,\mP}\) 
with a finite
	\(\essup\nolimits_{\mmn{\rno,\mP}}\abs{\ln \der{\mQ}{\mmn{\rno,\mP}}}\), 
	we can define the sets \(\set{S}\) and  \(\set{B}_{\ind}\)
	as follows
	\begin{align}
	\notag
	\set{S}
	&\DEF \left\{\mS\in \fdis{\inpS}:\varsigma \IND{\mP(\dinp)>0}\leq \mS(\dinp)\leq \left(e^{\frac{1-\rno}{\varsigma}\CRD{\rno}{\Wm}{\mQ}{\mP}}
	\right)\IND{\mP(\dinp)>0}~\forall \dinp\in \inpS\right\},
	\\
	%\label{eq:def:setBi-new}
	\notag
	\set{B}_{\ind}
	&\DEF
	\left\{\mB\in\fmea{\outA}: 
	\der{\mB}{\rfm}=(e^{\gamma}\der{\mQ}{\rfm})^{(1-\rno)^{\ind}}
	\prod\nolimits_{\jnd=1}^{\ind}
	(\der{\mmn{\rno,\smn{\jnd}}}{\rfm})^{\rno (1-\rno)^{\ind-\jnd}}
	\mbox{~for some~} \gamma\in \{-\Gamma,0,\Gamma\} 
	\mbox{~and~}\smn{\jnd}\in\set{S} \right\},
	\end{align}
	where \(\Gamma=\tfrac{(1-\rno)\CRD{\rno}{\Wm}{\mQ}{\mP}}{\varsigma}
	-\tfrac{\ln \varsigma}{\rno}+
	\essup\nolimits_{\mmn{\rno,\mP}}\abs{\ln \der{\mQ}{\mmn{\rno,\mP}}}\).
	Then one can confirm that 
	\(e^{-\Gamma}\mQ \leq \mmn{\rno,\mS}
	\leq  e^{\Gamma} \mQ\)
	for all \(\mS\in\set{S}\).
	Using this property, we can repeat the rest of the analysis with appropriate modifications to establish the following:
	\begin{align}
	\label{eq:lem:information-zto:iteration-general}
	\lim\nolimits_{\jnd\to\infty}  
	\lon{\qmn{\rno,\mP}\!-\Aopi{\rno}{\mP}{\jnd}{\mQ}}
	&=0
	&
	&\mbox{if~}\mQ\sim\mmn{\rno,\mP}
	\mbox{~and~}
	\essup\nolimits_{\mmn{\rno,\mP}}\abs{\ln \der{\mQ}{\mmn{\rno,\mP}}}<\infty,
	\end{align}
	On the other hand, 
	\(\qmn{1,\mP}\sim\mmn{\rno,\mP}\)
	by \cite[{Lemma \ref*{A-lem:powermeanequivalence}-(\ref*{A-powermeanequivalence-a})}]{nakiboglu19A}
	and
	\(\abs{\ln \der{\qmn{1,\mP}}{\mmn{\rno,\mP}}}\leq \tfrac{\abs{\rno-1}}{\rno}\ln \tfrac{1}{\varsigma}\)
	holds \(\qmn{1,\mP}\)-a.s.
	by \cite[{Lemma \ref*{A-lem:powermeandensityO}-(\ref*{A-powermeandensityO-a})}]{nakiboglu19A}.
	Thus \(\qmn{1,\mP}\) satisfies the condition given in \eqref{eq:lem:information-zto:iteration-general} and 
	the convergence described in \eqref{eq:lem:information-zto:iteration-general}
	is equivalent to the one in  \eqref{eq:lem:information-zto:iteration-general-new}.
\end{remark}

		\item[(\ref{lem:information}-\ref{information:oti})]
		Let the function \(\fX(\cdot)\) and the set of channels \(\set{U}\)
		be
		\begin{align}
		\notag
		\fX(\Vm)
		&\DEF\tfrac{\rno}{1-\rno}\CRD{1}{\Vm}{\Wm}{\mP}+\RMI{1}{\mP}{\Vm}
		&
		&\forall \Vm\in\set{U},
		\\
		\notag
		\set{U}
&\DEF\{\Vm\in \pmea{\outA|\supp{p}}:\CRD{1}{\Vm}{\Wm}{\mP}<\infty\}.
%\\
%\notag
%&\DEF\{\Vm\in \bigtimes\nolimits_{\dinp:\mP(\dinp)>0}\pmea{\outA}:\CRD{1}{\Vm}{\Wm}{\mP}<\infty\}.
	\end{align}
		The statements proved in 
		(\ref{information:oti}-\ref{information-oti:sufficientcondition}),
		(\ref{information:oti}-\ref{information:oti-existence}),
		(\ref{information:oti}-\ref{information:oti-saddlepoint}), 
		and
		(\ref{information:oti}-\ref{information:oti-fixedpoint})
		collectively imply part (\ref{information:oti}).
		
		\begin{enumerate}[(\ref{information:oti}-i)]
			\item\label{information-oti:sufficientcondition}
			\emph{If \(\Aop{\rno}{\mP}{\mU}=\mU\), then \(\CRD{\rno}{\Wm}{\mU}{\mP}=\RMI{\rno}{\mP}{\Wm}\),
				\eqref{eq:lem:information-oti:fixedpoint} and \eqref{eq:lem:information-oti:EHB} 
				hold for \(\qmn{\rno,\mP}=\mU\),
				\(\qmn{\rno,\mP}\) is unique and \(\qmn{\rno,\mP}\sim\qmn{1,\mP}\):}
			\begin{align}
			\label{eq:information-oti-1}
			\CRD{\rno}{\Wm}{\mQ}{\mP}-\CRD{\rno}{\Wm}{\mU}{\mP}
			&=\tfrac{1}{\rno-1}\sum\nolimits_{\dinp} \mP(\dinp) \ln 
			\int (\der{\qmn{\sim}}{\mU})^{1-\rno} \Wma{\rno}{\mU}(\dinp)(\dif{\dout})
			&
			&\forall\mQ\in \domtr{\rno,\mP}
			\end{align}
			where \(\qmn{\sim}\) is the component of \(\mQ\) that is absolutely continuous in \(\mU\).
			
			On the other hand using the Jensen's inequality and concavity of the natural logarithm function we get
			\begin{align}
			\notag
			\tfrac{1}{\rno-1}\sum\nolimits_{\dinp} \mP(\dinp) \ln 
			\int (\der{\qmn{\sim}}{\mU})^{1-\rno} \Wma{\rno}{\mU}(\dinp)(\dif{\dout})
			&\geq \tfrac{1}{\rno-1} \sum\nolimits_{\dinp}\mP(\dinp) 
			\int \left[\ln(\der{\qmn{\sim}}{\mU})^{1-\rno}\right]
			\Wma{\rno}{\mU}(\dinp)(\dif{\dout})
			\\
			\label{eq:information-oti-2}
			&= \int \left[\ln (\der{\mU}{\qmn{\sim}})\right]  
			\Aop{\rno}{\mP}{\mU}(\dif{\dout}).
			\end{align}
			Since \(\Aop{\rno}{\mP}{\mU}=\mU\) by the hypothesis, using   
			\eqref{eq:information-oti-1} and \eqref{eq:information-oti-2} we get
			\begin{align}
			\label{eq:information-oti-3}
			\CRD{\rno}{\Wm}{\mQ}{\mP}-\CRD{\rno}{\Wm}{\mU}{\mP}
			&\geq \RD{1}{\mU}{\mQ}
			&
			&\forall \mQ\in \domtr{\rno,\mP}. 
			\end{align}
			\(\RD{1}{\mU}{\mQ}>0\) for all  \(\mQ\in\pmea{\outA}\setminus\{\mU\}\) 
			by Lemma \ref{lem:divergence-pinsker}
			and \(\CRD{\rno}{\Wm}{\mQ}{\mP}=\infty\) for \(\mQ \notin \domtr{\rno,\mP}\) by definition.
			Then \(\mU\) is the unique probability measure satisfying 
			\(\RMI{\rno}{\mP}{\Wm}=\CRD{\rno}{\Wm}{\mU}{\mP}\)
			and  \eqref{eq:lem:information-oti:fixedpoint} holds.
			In addition \(\qmn{1,\mP}\AC\mU\) because otherwise
			\(\CRD{\rno}{\Wm}{\mU}{\mP}\) would have been infinite.
			Furthermore, \(\mU\AC\qmn{1,\mP}\) because 
			\(\RD{1}{\mU}{\qmn{1,\mP}}\) is finite
			by \eqref{eq:information-oti-3} and part (\ref{information:bounded}).
			
			The lower bound given in \eqref{eq:lem:information-oti:EHB} holds for 
			\(\mQ\in\domtr{\rno,\mP}\) by \eqref{eq:information-oti-3}
			and for  \(\mQ\notin \domtr{\rno,\mP}\) by definition.
			In order to establish the upper bound given in  \eqref{eq:lem:information-zto:EHB},
			note that \(\Aop{\rno}{\mP}{\mU}=\mU\) implies
			\begin{align}
			\notag
			\der{\mU}{\rfm}
			&=\left[\sum\nolimits_{\dinp}\mP(\dinp) \left(\der{\Wm(\dinp)}{\rfm}\right)^{\rno} e^{(1-\rno)\RD{\rno}{\Wm(\dinp)}{\mU}}\right]^{\frac{1}{\rno}}
			&
			&\forall \rfm:\qmn{1,\mP}\AC\rfm.
			\end{align} 
			Then one can confirm by substitution that
			\begin{align}
			\notag
			\RD{\rno}{\mU}{\mQ}
			&=\tfrac{1}{\rno-1}\ln 
			\sum\nolimits_{\dinp}\mP(\dinp) e^{(\rno-1)(\RD{\rno}{\Wm(\dinp)}{\mQ}-\RD{\rno}{\Wm(\dinp)}{\mU})}
			&
			&\forall \mQ\in\pmea{\outA}.
			\end{align}
			Then Jensen's inequality and convexity of the exponential function imply
			\begin{align}
			\notag
			\RD{\rno}{\mU}{\mQ}
			&\geq \CRD{\rno}{\Wm}{\mQ}{\mP}-\CRD{\rno}{\Wm}{\mU}{\mP}
			&
			&\forall \mQ\in\pmea{\outA}.
			\end{align}

			\item \emph{\(\fX(\cdot):\set{U}\to\reals{}\) is concave and upper semicontinuous
				on \(\set{U}\) for the topology of setwise 
				convergence:\footnote{The set \(\set{U}\) is a
					subset of the Cartesian product of a finite number of copies of \(\pmea{\outA}\).
					What we mean by the topology of setwise convergence on \(\set{U}\) is the product topology obtained 
					by  assuming topology of setwise convergence on each component of the Cartesian product.
					We employ this terminology in the rest of the proof without explicitly mentioning it.}}
			Using the definition of the tilted channel given in \eqref{eq:def:tiltedchannel},
			the identity given in \eqref{eq:lem:information-one:EHB},
			and the joint convexity of the order one \renyi divergence in its arguments, i.e. 
			Lemma \ref{lem:divergence-jointconvexity}, 
			we can write \(\fX(\Vm)\) as the sum of three finite terms as follows for all \(\Vm\in\set{U}\):
			\begin{align}
			\label{eq:information-oti-U}
			\fX(\Vm)
			&=
			\tfrac{1}{1-\rno}\CRD{1}{\Vm}{\Wma{\rno}{\qmn{1,\mP}}}{\mP}
			-\RD{1}{\sum\nolimits_{\dinp}\mP(\dinp)\Vm(\dinp)}{\qmn{1,\mP}}
			+\CRD{\rno}{\Wm}{\qmn{1,\mP}}{\mP}.
			\end{align}	
			Then \(\fX(\cdot)\) is a concave because the order one \renyi divergence is convex in its first argument 
			by Lemma \ref{lem:divergence-jointconvexity}.
			Similarly, \(\fX(\cdot)\) is upper semicontinuity for the topology of setwise convergence, 
			because \renyi divergence is lower semicontinuous in its first argument for the topology of setwise convergence
			by Lemma \ref{lem:divergence:lsc}.
			
			\item \emph{\(\set{U}'\DEF\{\Vm\in\set{U}:\max_{\dinp:\mP(\dinp)>0}\!\mP(\dinp)\RD{1}{\!\Vm(\dinp)}{\!\Wm(\dinp)}\leq \tfrac{\rno-1}{\rno}\bent{\mP}\}\)
				is compact for the topology of setwise convergence:}
			For any \(\mV\in\pmea{\outA}\) and \(\mW\in\pmea{\outA}\),
			the identity \(\dsta\ln \dsta\geq \sfrac{-1}{e}\) implies that
			\begin{align}
			\notag
			\int \abp{\der{\mV}{\mW} \ln \der{\mV}{\mW}} \mW(\dif{\dout})
			&\leq \RD{1}{\mV}{\mW}+\sfrac{1}{e}.
			&
			&
			\end{align}
			Then for any \(\gamma\in\reals{+}\) and \(\mW\in\pmea{\outA}\), the set of Radon-Nikodym derivatives 
			\(\{\der{\mV}{\mW}\}_{\mV:\RD{1}{\mV}{\mW}\leq \gamma}\) is uniformly integrable because 
			it satisfies the necessary 
			and sufficient condition for the uniform integrability given by de la Vallee Poussin 
			\cite[Thm. 4.5.9]{bogachev}.
			Hence, \(\{\mV\in\pmea{\outA}:\RD{1}{\mV}{\mW}\leq \gamma\}\UAC\mW\).
			Then\footnote{Note that \(\{\mV\in\pmea{\outA}:\RD{1}{\mV}{\mW}\leq \gamma\}\) is bounded in variation norm 
				by definition.} \(\{\mV\in\pmea{\outA}:\RD{1}{\mV}{\mW}\leq \gamma\}\) has compact closure in the topology 
			of setwise convergence
			by \cite[Thm. 4.7.25]{bogachev}.
			On the other hand the set \(\{\mV\in\pmea{\outA}:\RD{1}{\mV}{\mW}\leq \gamma\}\) is closed, i.e. it is equal to its 
			closure, because \renyi divergence is lower semicontinuous in its arguments for the topology of setwise convergence
			by Lemma \ref{lem:divergence:lsc}.
			Hence \(\{\mV\in\pmea{\outA}:\RD{1}{\mV}{\mW}\leq \gamma\}\) is compact in the topology of setwise convergence
			for any \(\gamma\in\reals{+}\) and \(\mW\in\pmea{\outA}\). 
			Then \(\set{U}'\) is compact in the topology of setwise convergence because product of finite number of compact sets 
			is compact by \cite[Thm. 26.7]{munkres}.
			
			\item\label{information:oti-existence}
			\emph{\(\exists \Umn{*}\in\set{U}'\) s.t. \(\fX(\Umn{*})=\sup_{\Vm\in\set{U}} \fX(\Vm)\):}
			Note that \(\Wm\in\set{U}\) and \(\fX(\Wm)=\RMI{1}{\mP}{\Wm}\).
			Furthermore, \(\RMI{1}{\mP}{\Wm}\geq0\) by Lemma \ref{lem:divergence-pinsker} 
			and part (\ref{information:one}).
			On the other hand, if \(\mP(\dinp)\RD{1}{\Vm(\dinp)}{\Wm(\dinp)}>\tfrac{\rno-1}{\rno}\bent{\mP}\) 
			for an \(\dinp\),  
			then \(\fX(\Vm)<0\) because 
			\(\RD{1}{\Vm(\dinp)}{\Wm(\dinp)}\geq 0\) by Lemma \ref{lem:divergence-pinsker}
			and \(\RMI{1}{\mP}{\Vm}\leq \bent{\mP}\) by part (\ref{information:bounded}).
			Thus,
			\begin{align}
			\notag
			\sup\nolimits_{\Vm\in\set{U}} \fX(\Vm)
			&=\sup\nolimits_{\Vm\in\set{U}'} \fX(\Vm).
			\end{align}
			On the other hand, \(\exists\Umn{*}\) such that \(\fX(\Umn{*})=\sup\nolimits_{\Vm\in\set{U}'} \fX(\Vm)\) 
			by the extreme value theorem for the upper semicontinuous functions \cite[Ch3\S12.2]{kolmogorovfomin75} because 
			\(\set{U}'\) is compact and \(\fX(\cdot) \) is upper semicontinuous for the topology of setwise 
			convergence. 
			
			\item\label{information:oti-saddlepoint}
			\emph{\(\fX(\Umn{*})=\CRD{\rno}{\Wm}{\umn{*}}{\mP}\) where
				\(\umn{*}\DEF\sum_{\dinp}\mP(\dinp)\Umn{*}(\dinp)\):} 
			As a result of Lemma \ref{lem:variational} we have
			\begin{align}
			\label{eq:information-oti-variational}
			\CRD{\rno}{\Wm}{\umn{*}}{\mP}
			&=\sup\nolimits_{\Vm\in\set{U}}\tfrac{\rno}{1-\rno}\CRD{1}{\Vm}{\Wm}{\mP}+\CRD{1}{\Vm}{\umn{*}}{\mP}.
			\end{align}
			On the other hand 
			\(\tfrac{\rno}{1-\rno}\CRD{1}{\Umn{*}}{\Wm}{\mP}+\CRD{1}{\Umn{*}}{\umn{*}}{\mP}=\fX(\Umn{*})\)
			because  \(\RMI{1}{\mP}{\Umn{*}}=\CRD{1}{\Umn{*}}{\umn{*}}{\mP}\) by part (\ref{information:one}). 
			Then \(\CRD{\rno}{\Wm}{\umn{*}}{\mP}\geq \fX(\Umn{*})\) is evident by 
			\eqref{eq:information-oti-variational}.
			
			In order to prove \(\CRD{\rno}{\Wm}{\umn{*}}{\mP}\leq \fX(\Umn{*})\), let us consider a 
			\(\Vm\in\set{U}\) and define \(\Vma{}{(\ind)}\) and \(\qma{}{(\ind)}\) for each \(\ind\in\integers{+}\) 
			as
			\begin{align}
			\notag
			\Vma{}{(\ind)}
			&\DEF\tfrac{\ind-1}{\ind}\Umn{*}+\tfrac{1}{\ind}\Vm,
			\\
			\notag
			\qma{}{(\ind)}
			&\DEF\tfrac{\ind-1}{\ind}\umn{*}+\tfrac{1}{\ind}\sum\nolimits_{\dinp}\mP(\dinp)\Vm(\dinp).
			\end{align}
			As a result of the decomposition given in \eqref{eq:information-oti-U} we have
			\begin{align}
			\notag
			\fX(\Vma{}{(\ind)})
			&=
			\tfrac{1}{1-\rno}\CRD{1}{\Vma{}{(\ind)}}{\Wma{\rno}{\qmn{1,\mP}}}{\mP}
			-\RD{1}{\qma{}{(\ind)}}{\qmn{1,\mP}}
			+\CRD{\rno}{\Wm}{\qmn{1,\mP}}{\mP}.
			\end{align}
			Then using the Jensen's inequality and convexity of the order one \renyi divergence 
			in its first argument established in Lemma \ref{lem:divergence-jointconvexity} we get
			\begin{align}
			\notag
			\fX(\Vma{}{(\ind)})
			&\geq 
			\tfrac{1}{1-\rno}
			\left[\tfrac{\ind-1}{\ind} \CRD{1}{\Umn{*}}{\Wma{\rno}{\qmn{1,\mP}}}{\mP}
			+\tfrac{1}{\ind} \CRD{1}{\Vm}{\Wma{\rno}{\qmn{1,\mP}}}{\mP}\right]
			-\RD{1}{\qma{}{(\ind)}}{\qmn{1,\mP}}
			+\CRD{\rno}{\Wm}{\qmn{1,\mP}}{\mP}
			\\
			\notag
			&=\tfrac{\ind-1}{\ind}
			\left[\fX(\Umn{*})+\RD{1}{\umn{*}}{\qmn{1,\mP}}\right]
			+\tfrac{1}{\ind}
			\left[\tfrac{\rno}{1-\rno}\CRD{1}{\Vm}{\Wm}{\mP}+\CRD{1}{\Vm}{\qmn{1,\mP}}{\mP}\right]
			-\RD{1}{\qma{}{(\ind)}}{\qmn{1,\mP}}
			\\
			\notag
			&=\tfrac{\ind-1}{\ind}
			\left[\fX(\Umn{*})+\RD{1}{\umn{*}}{\qma{}{(\ind)}}\right]
			+\tfrac{1}{\ind}
			\left[\tfrac{\rno}{1-\rno}\CRD{1}{\Vm}{\Wm}{\mP}+\CRD{1}{\Vm}{\qma{}{(\ind)}}{\mP}\right].
			\end{align}
			Then using \(\fX(\Umn{*})=\sup_{\Vm\in\set{U}} \fX(\Vm)\geq \fX(\Vma{}{(\ind)})\) and \(\RD{1}{\umn{*}}{\qma{}{(\ind)}}\geq 0\)
			we get
			\begin{align}
			\notag
			\fX(\Umn{*})
			&\geq \tfrac{\rno}{1-\rno}\CRD{1}{\Vm}{\Wm}{\mP}+\CRD{1}{\Vm}{\qma{}{(\ind)}}{\mP}
			&
			&\forall \ind \in \integers{+}.
			\end{align}
			On the other hand, \(\lim\nolimits_{\ind \to \infty} \CRD{1}{\Vm}{\qma{}{(\ind)}}{\mP}
			\geq  \CRD{1}{\Vm}{\umn{*}}{\mP}\)
			because \(\lim_{\ind \to \infty} \lon{\qma{}{(\ind)}-\umn{*}}=0\) by construction
			and the \renyi divergence is lower semicontinuous in its second argument 
			for the topology of setwise convergence by Lemma \ref{lem:divergence:lsc}.
			Then 
			\begin{align}
			\notag
			\fX(\Umn{*})
			&\geq \tfrac{\rno}{1-\rno}\CRD{1}{\Vm}{\Wm}{\mP}+\CRD{1}{\Vm}{\umn{*}}{\mP}
			&
			&\forall \Vm\in\set{U}.
			\end{align}
			Hence, \(\fX(\Umn{*})\geq\CRD{\rno}{\Wm}{\umn{*}}{\mP}\)
			by \eqref{eq:information-oti-variational}.
			
			\item\label{information:oti-fixedpoint}
			\emph{\(\Aop{\rno}{\mP}{\umn{*}}=\umn{*}\) and 
				\(\Umn{*}(\dinp)=\Wma{\rno}{\umn{*}}(\dinp)\) 
				for all \(\dinp\) such that \(\mP(\dinp)>0\):}
			Note that \(\CRD{\rno}{\Wm}{\umn{*}}{\mP}<\infty\) by
			part (\ref{information:oti}-\ref{information:oti-saddlepoint})
			because \(\fX(\Umn{*})\leq \bent{\mP}<\infty\) by definition. 
			Consequently, we can define the tilted probability measure 
			\(\Wma{\rno}{\umn{*}}(\dinp)\) for each 
			\(\dinp\) such that \(\mP(\dinp)>0\).
			Using the definitions of \(\fX(\cdot)\) and the tilted channel \(\Wma{\rno}{\mQ}\)
			together with the identity
			\(\RMI{1}{\mP}{\Umn{*}}=\CRD{1}{\Umn{*}}{\umn{*}}{\mP}\), 
			which follows from  part (\ref{information:one}), we get
			\begin{align}
			\notag
			\fX(\Umn{*})
			&=\CRD{\rno}{\Wm}{\umn{*}}{\mP}
			+\tfrac{1}{1-\rno}\CRD{1}{\Umn{*}}{\Wma{\rno}{\umn{*}}}{\mP}.
			\end{align}  
			Since \(\fX(\Umn{*})=\CRD{\rno}{\Wm}{\umn{*}}{\mP}\) by 
			part (\ref{information:oti}-\ref{information:oti-saddlepoint})
			we get \(\CRD{1}{\Umn{*}}{\Wma{\rno}{\umn{*}}}{\mP}=0\). 
			Hence \(\Umn{*}(\dinp)=\Wma{\rno}{\umn{*}}(\dinp)\) for all \(\dinp\) 
			such that \(\mP(\dinp)>0\) by Lemma \ref{lem:divergence-pinsker}.
			As a result \(\Aop{\rno}{\mP}{\umn{*}}=\umn{*}\) because 
			\(\Aop{\rno}{\mP}{\umn{*}}=\sum_{\dinp}\mP(\dinp)\Wma{\rno}{\umn{*}}(\dinp)\)
			and
			\(\umn{*}=\sum_{\dinp}\mP(\dinp)\Umn{*}(\dinp)\) 
			by definition.
		\end{enumerate}
				
		\item[(\ref{lem:information}-\ref{information:alternative})]
		We prove the statement for \(\rno\in(0,1)\) and \(\rno\in(1,\infty)\) cases separately,
		\begin{itemize}
			\item Recall that \(\sum_{\dinp}\mP(\dinp)\Wma{\rno}{\qmn{\rno,\mP}}(\dinp)=\qmn{\rno,\mP}\)
			by parts (\ref{information:zto}).
			Then as a result of \eqref{eq:lem:information-one:EHB} we have
			\begin{align}
			\RMI{1}{\mP}{\Wma{\rno}{\qmn{\rno,\mP}}}
			&=\CRD{1}{\Wma{\rno}{\qmn{\rno,\mP}}}{\qmn{\rno,\mP}}{\mP}.
			\end{align}
			Then \eqref{eq:lem:information:alternative:opt} follows from
			Lemma \ref{lem:variational} 
			and  \(\RMI{\rno}{\mP}{\Wm}=\CRD{\rno}{\Wm}{\qmn{\rno,\mP}}{\mP}\)
			for \(\rno\in(0,1)\). 
			
			On the other hand as a result the definition of the Augustin information,
			and Lemma \ref{lem:variational} we have
			\begin{align}
			\notag
			\RMI{\rno}{\mP}{\Wm}
			&=\inf\nolimits_{\mQ\in\pmea{\outA}}\inf\nolimits_{\Vm\in\pmea{\outA|\inpS}} \tfrac{\rno}{1-\rno}\CRD{1}{\Vm}{\Wm}{\mP}+\CRD{1}{\Vm}{\mQ}{\mP}  
			\\
			\notag
			&=\inf\nolimits_{\Vm\in\pmea{\outA|\inpS}}\inf\nolimits_{\mQ\in\pmea{\outA}}
			\tfrac{\rno}{1-\rno}\CRD{1}{\Vm}{\Wm}{\mP}+\CRD{1}{\Vm}{\mQ}{\mP}.  
			\end{align}
			Then \eqref{eq:lem:information:alternative:def} follows from the definition of the order
			one Augustin information.
			
			\item Note that for \(\rno\in(1,\infty)\) identity given in \eqref{eq:lem:information:alternative:opt} 
			is nothing but \(\fX(\Wma{\rno}{\qmn{\rno,\mP}})=\RMI{\rno}{\mP}{\Wm}\) which is already 
			established in the proof of part (\ref{information:oti}). Similarly  
			\eqref{eq:lem:information:alternative:def} is equivalent to 
			\(\sup_{\Vm\in\pmea{\outA|\inpS}} \fX(\Vm)=\RMI{\rno}{\mP}{\Wm}\) which is established in 
			the proof of part (\ref{information:oti}).
		\end{itemize}
	\end{enumerate}
\end{proof}

\begin{proof}[Proof of Lemma \ref{lem:information:product}]
	The following identity can be confirmed by substitution
	\begin{align}
	\notag
	\CRD{\rno}{\Wmn{[1,\blx]}}{\bigotimes\nolimits_{\tin=1}^{\blx}\qmn{\rno,\pmn{\tin}}}{\mP}
	&=\sum\nolimits_{\tin=1}^{\blx}\CRD{\rno}{\Wmn{\tin}}{\qmn{\rno,\pmn{\tin}}}{\pmn{\tin}}.
	\end{align}
	Then using \eqref{eq:lem:information-one:EHB}, \eqref{eq:lem:information-zto:EHB}, \eqref{eq:lem:information-oti:EHB}
	we get
	\begin{align}
	\notag
	\sum\nolimits_{\tin=1}^{\blx} \RMI{\rno}{\pmn{\tin}}{\Wmn{\tin}}
	-\RD{1\wedge \rno}{\qmn{\rno,\mP}}{\bigotimes\nolimits_{\tin=1}^{\blx}\qmn{\rno,\pmn{\tin}}}
	\geq \RMI{\rno}{\mP}{\Wmn{[1,\blx]}}
	&\geq \sum\nolimits_{\tin=1}^{\blx} \RMI{\rno}{\pmn{\tin}}{\Wmn{\tin}}
	-\RD{\rno\vee 1}{\qmn{\rno,\mP}}{\bigotimes\nolimits_{\tin=1}^{\blx}\qmn{\rno,\pmn{\tin}}}.
	\end{align} 
	Thus \eqref{eq:lem:information:product} holds for all \(\mP\in\pdis{\inpS_{1}^{\blx}}\)
	because \renyi divergence between probability measures is non-negative.
	Furthermore, \eqref{eq:lem:information:product} holds as an equality iff
	for an \(\rno\in\reals{+}\)  iff \(\qmn{\rno,\mP}\) satisfies
	\eqref{eq:lem:information:product-mean}
	because the \renyi divergence between distinct probability measures is positive.
	
	If \(\mP\!=\!\bigotimes\nolimits_{\tin=1}^{\blx}\pmn{\tin}\), then
	one can confirm \eqref{eq:lem:information:product-mean} for \(\rno\!=\!1\) case
	by substitution.
	In addition for any \(\rno\!\in\!\reals{+}\!\setminus\!\{1\}\) one can show by substitution 
	that the probability measure 
	\(\mQ=\bigotimes\nolimits_{\tin=1}^{\blx}\qmn{\rno,\pmn{\tin}}\) is a fixed point of \(\Aop{\rno}{\mP}{\cdot}\), 
	i.e. \(\Aop{\rno}{\mP}{\mQ}=\mQ\).
	Furthermore,	
	\(\qmn{1,\mP}\AC \mQ\) because \(\qmn{1,\pmn{\tin}}\AC \qmn{\rno,\pmn{\tin}}\) for each 
	\(\tin\in\{1,\ldots,\blx\}\).
	Thus for \(\rno\!\in\!\reals{+}\!\setminus\!\{1\}\)
	the identity in \eqref{eq:lem:information:product-mean} follows from 
	Lemma \ref{lem:information}-(\ref{information:zto},\ref{information:oti}). 
\end{proof}

\begin{proof}[Proof of Lemma \ref{lem:informationP}]
	Note that \(\CRD{\rno}{\Wm}{\mQ}{\mP}\) is linear and hence concave in \(\mP\) for any \(\mQ\) by 
	definition. Then \(\RMI{\rno}{\mP}{\Wm}\) is concave in \(\mP\) because pointwise infimum of a family 
	of concave functions is concave. 
	Furthermore by Lemma \ref{lem:information}-(\ref{information:one},\ref{information:zto},\ref{information:oti}),
	\(\exists!\qmn{\rno,\pmn{\beta}}\)  such that \(\CRD{\rno}{\Wm}{\qmn{\rno,\pmn{\beta}}}{\pmn{\beta}}=\RMI{\rno}{\pmn{\beta}}{\Wm}\).
	In addition,
	\begin{align}
	\notag
	\CRD{\rno}{\Wm}{\qmn{\rno,\pmn{\beta}}}{\pmn{\beta}}
	&=\beta \CRD{\rno}{\Wm}{\qmn{\rno,\pmn{\beta}}}{\pmn{1}}
	+(1-\beta) \CRD{\rno}{\Wm}{\qmn{\rno,\pmn{\beta}}}{\pmn{0}}.
	\end{align}
	Then equation \eqref{eq:lem:informationP-A} and \eqref{eq:lem:informationP-B} are 
	obtained by bounding
	\(\CRD{\rno}{\Wm}{\qmn{\rno,\pmn{\beta}}}{\pmn{1}}\) and \(\CRD{\rno}{\Wm}{\qmn{\rno,\pmn{\beta}}}{\pmn{0}}\)
	using Lemma \ref{lem:information}-(\ref{information:one},\ref{information:zto},\ref{information:oti}).
	
	On the other hand, Lemma \ref{lem:divergence-RM} implies
	\begin{align}
	\notag
	\CRD{\rno}{\Wm}{\beta\qmn{\rno,\pmn{1}}+(1-\beta)\qmn{\rno,\pmn{0}}}{\pmn{\beta}}
	&=\beta    \CRD{\rno}{\Wm}{\beta\qmn{\rno,\pmn{1}}+(1-\beta)\qmn{\rno,\pmn{0}}}{\pmn{1}}
	+(1-\beta) \CRD{\rno}{\Wm}{\beta\qmn{\rno,\pmn{1}}+(1-\beta)\qmn{\rno,\pmn{0}}}{\pmn{0}}
	\\
	\notag
	&\leq \beta \CRD{\rno}{\Wm}{\qmn{\rno,\pmn{1}}}{\pmn{1}}-\beta\ln \beta
	+(1-\beta)  \CRD{\rno}{\Wm}{\qmn{\rno,\pmn{0}}}{\pmn{0}}-(1-\beta)\ln(1-\beta)
	\\
	\notag
	&=\beta \RMI{\rno}{\pmn{1}}{\Wm}+(1-\beta) \RMI{\rno}{\pmn{0}}{\Wm}+\bent{\beta}.
	\end{align}
	Then \eqref{eq:lem:informationP-C} follows from the lower bound on \(\CRD{\rno}{\Wm}{\mQ}{\mP}\)
	given in Lemma \ref{lem:information}-(\ref{information:one},\ref{information:zto},\ref{information:oti}).
\end{proof}

\begin{proof}[Proof of Lemma \ref{lem:meanO}]
	Note that as result of \eqref{eq:mean} we have,
	\begin{align}
	\label{eq:mean-bound}
	[\mP(\dinp)]^{\frac{1}{\rno}}e^{\frac{1-\rno}{\rno}\RD{\rno}{\Wm(\dinp)}{\qmn{\rno,\mP}}}\Wm(\dinp)
	&\leq \qmn{\rno,\mP}
	&
	&\forall \dinp \in \inpS.
	\end{align}
	
	\begin{enumerate}
		\item[(\ref{lem:meanO}-\ref{meanO:Rdivergencebound})] 
		Using Lemma \ref{lem:divergence-RM} and \eqref{eq:mean-bound} we get
		\begin{align}
		\notag
		\RD{\rno}{\Wm(\dinp)}{\qmn{\rno,\mP}}
		&\leq \RD{\rno}{\Wm(\dinp)}{[\mP(\dinp)]^{\frac{1}{\rno}}e^{\frac{1-\rno}{\rno}\RD{\rno}{\Wm(\dinp)}{\qmn{\rno,\mP}}} \Wm(\dinp)}
		\\
		\notag
		&=\tfrac{1}{\rno}\ln \tfrac{1}{\mP(\dinp)} +\tfrac{\rno-1}{\rno}\RD{\rno}{\Wm(\dinp)}{\qmn{\rno,\mP}}.
		\end{align}
		
		\item[(\ref{lem:meanO}-\ref{meanO:measurebound})]
		We employ \eqref{eq:mean-bound}
		together with  \(\RD{\rno}{\Wm(\dinp)}{\qmn{\rno,\mP}}>0\) for \(\rno\!\in\!(0,1]\) case 
		and  
		together with part (\ref{meanO:Rdivergencebound}) for \(\rno\!\in\!(1,\infty)\) case.
		
		\item[(\ref{lem:meanO}-\ref{meanO:boundedlogRND})]
		We prove \(\rno\in(0,1)\) case and \(\rno\in(1,\infty)\) case separately.
		\begin{itemize}
			\item \emph{\(\rno\in(0,1)\):} First use Jensen's inequality, i.e. \(\EX{\xi^{\rno}}\leq \EX{\xi}^{\rno}\), 
			in \eqref{eq:mean}; then invoke \(\RD{\rno}{\Wm(\dinp)}{\qmn{\rno,\mP}}\leq \ln \tfrac{1}{\mP(\dinp)}\):
			\begin{align}
			\notag
			\der{\qmn{\rno,\mP}}{\qmn{1,\mP}}
			&\leq \sum\nolimits_{\dinp}  \mP(\dinp)\der{\Wm(\dinp)}{\qmn{1,\mP}}e^{\frac{1-\rno}{\rno}\RD{\rno}{\Wm(\dinp)}{\qmn{\rno,\mP}}}
			\\
			\notag
			&\leq(\min\nolimits_{\dinp:\mP(\dinp)>0}\mP(\dinp))^{\frac{\rno-1}{\rno}}.
			\end{align}
			Recall that if \(\xi(\dinp)\geq0\) for all \(\dinp\), then
			\(\sum_{\dinp} [\xi(\dinp)]^{\rno}\geq [\sum_{\dinp} \xi(\dinp)]^{\rno}\).
			Using \(\RD{\rno}{\Wm(\dinp)}{\qmn{\rno,\mP}}\geq 0\) we get 
			\begin{align}
			\notag
			\der{\qmn{\rno,\mP}}{\qmn{1,\mP}}
			&\geq \sum\nolimits_{\dinp} \left(\mP(\dinp)\right)^{\frac{1}{\rno}} \der{\Wm(\dinp)}{\qmn{1,\mP}} e^{(\frac{1-\rno}{\rno})\RD{\rno}{\Wm(\dinp)}{\qmn{\rno,\mP}}}
			\\
			\notag
			&\geq (\min\nolimits_{\dinp:\mP(\dinp)>0}\mP(\dinp))^{\frac{1-\rno}{\rno}}.
			\end{align}
			\item \emph{\(\rno\in(1,\infty)\):} First use Jensen's inequality, i.e. \(\EX{\xi^{\rno}}\geq \EX{\xi}^{\rno}\), 
			in \eqref{eq:mean}, then invoke \(\RD{\rno}{\Wm(\dinp)}{\qmn{\rno,\mP}}\leq \ln \tfrac{1}{\mP(\dinp)}\):
			\begin{align}
			\notag
			\der{\qmn{\rno,\mP}}{\qmn{1,\mP}}
			&\geq \sum\nolimits_{\dinp}  \mP(\dinp)\der{\Wm(\dinp)}{\qmn{1,\mP}}e^{\frac{1-\rno}{\rno}\RD{\rno}{\Wm(\dinp)}{\qmn{\rno,\mP}}}
			\\
			\notag
			&\geq (\min\nolimits_{\dinp:\mP(\dinp)>0}\mP(\dinp))^{\frac{\rno-1}{\rno}}.
			\end{align}
			Recall that if \(\xi(\dinp)\geq0\) for all \(\dinp\), then
			\(\sum_{\dinp} [\xi(\dinp)]^{\rno}\leq [\sum_{\dinp} \xi(\dinp)]^{\rno}\).
			Using \(\RD{\rno}{\Wm(\dinp)}{\qmn{\rno,\mP}}\geq 0\) we get 
			\begin{align}
			\notag
			\der{\qmn{\rno,\mP}}{\qmn{1,\mP}}
			&\leq \sum\nolimits_{\dinp} \left(\mP(\dinp)\right)^{\frac{1}{\rno}} \der{\Wm(\dinp)}{\qmn{1,\mP}} e^{(\frac{1-\rno}{\rno})\RD{\rno}{\Wm(\dinp)}{\qmn{\rno,\mP}}} 			
			\\
			\notag
			&\leq (\min\nolimits_{\dinp:\mP(\dinp)>0}\mP(\dinp))^{\frac{1-\rno}{\rno}}.
			\end{align}
		\end{itemize}
	\end{enumerate}
\end{proof}

\begin{proof}[Proof of Lemma \ref{lem:informationO}]~
\begin{enumerate}[(a)]
\item [(\ref{lem:informationO}-\ref{informationO:strictconvexity})]
For brevity, let us denote \((\rno-1)\RMI{\rno}{\mP}{\Wm}\) by \(\gX(\rno)\)
in this part of the proof.
We first prove the dichotomy about \(\gX(\cdot)\)
on \((0,1)\) and on \((1,\infty)\). 
Then we extend this dichotomy to \(\reals{+}\) assuming 
that \(\gX(\cdot)\) is convex on \(\reals{+}\).
After that we establish the assumed convexity of \(\gX(\cdot)\)
on \(\reals{+}\).

Let \(\rno_{\beta}=\beta\rno_{1}+(1-\beta)\rno_{0}\)
for any \(\rno_{0},\rno_{1}\in(0,1)\) and \(\beta\in(0,1)\).
Then for any \(\rno_{0},\rno_{1}\in(0,1)\) and \(\beta\in(0,1)\) we have
\begin{align}
\notag
\beta \gX(\rno_{1})+(1-\beta)\gX(\rno_{0})
&\geq \rnb(\rno_{1}-1)
\CRD{\rno_{1}}{\Wm}{\qmn{\rno_{\rnb},\mP}}{\mP}
+(1-\rnb)(\rno_{0}-1)
\CRD{\rno_{0}}{\Wm}{\qmn{\rno_{\rnb},\mP}}{\mP}
\\
\notag
&=\sum\nolimits_{\dinp}\mP(\dinp)
\ln 
\left(
\EXS{\qmn{\rno_{\rnb},\mP}}{(\der{\Wm(\dinp)}{\qmn{\rno_{\rnb},\mP}})^{\rno_{1}}}
\right)^{\rnb}
\left(
\EXS{\qmn{\rno_{\rnb},\mP}}{(\der{\Wm(\dinp)}{\qmn{\rno_{\rnb},\mP}})^{\rno_{0}}}
\right)^{1-\rnb}
\\
\notag
&\geq\sum\nolimits_{\dinp}\mP(\dinp)
\ln 
\EXS{\qmn{\rno_{\rnb},\mP}}{(\der{\Wm(\dinp)}{\qmn{\rno_{\rnb},\mP}})^{\rno_{\rnb}}}
\\
\notag
&=\gX(\rno_{\beta})
\end{align}
where the first inequality follows from the definition of the Augustin 
information and the second inequality follows from the H\"{o}lder's 
inequality. 
Furthemore, the first inequality is an equality iff
\(\qmn{\rno_{0},\mP}=\qmn{\rno_{\beta},\mP}=\qmn{\rno_{1},\mP}\) 
by Lemma \ref{lem:information}-(\ref{information:zto})
and the second inequality is an equality iff
\(\der{\Wm(\dinp)}{\qmn{\rno_{\rnb},\mP}}=\gamma(\dinp)\) 
holds \(\Wm(\dinp)\)-a.s. for all \(\dinp\in\supp{\mP}\).
On the other hand, 
if \(\der{\Wm(\dinp)}{\qmn{\rno,\mP}}=\gamma(\dinp)\) 
holds \(\Wm(\dinp)\)-a.s., 
then \(\Wma{\rno}{\qmn{\rno,\mP}}(\dinp)=\Wm(\dinp)\).
Consequently, if
\(\der{\Wm(\dinp)}{\qmn{\rno,\mP}}=\gamma(\dinp)\) 
holds \(\Wm(\dinp)\)-a.s. for all \(\dinp\in\supp{\mP}\)
then  \(\qmn{\rno,\mP}=\qmn{1,\mP}\) by Lemma \ref{lem:information}-(\ref{information:zto})
because \(\Aop{\rno}{\mP}{\qmn{\rno,\mP}}=\qmn{1,\mP}\).
Thus 
either \(\gX(\cdot)\) is strictly convex on \((0,1)\) 
or \(\der{\Wm(\dinp)}{\qmn{1,\mP}}=\gamma(\dinp)\) 
\(\Wm(\dinp)\)-a.s. for all \(\dinp\in\supp{\mP}\)
and \(\RMI{\rno}{\mP}{\Wm}=\sum_{\dinp}\mP(\dinp)\ln \gamma(\dinp)\)
for all \(\rno\in\reals{+}\). 

Let \(\rno_{\beta}=\beta\rno_{1}+(1-\beta)\rno_{0}\) and
\(\der{\mean}{\qmn{1,\mP}}=(\der{\qmn{\rno_{1},\mP}}{\qmn{1,\mP}})^{\frac{(\rno_{1}-1)\rnb}{\rno_{\rnb}-1}} 
(\der{\qmn{\rno_{0},\mP}}{\qmn{1,\mP}})^{\frac{(\rno_{0}-1)(1-\rnb)}{\rno_{\rnb}-1}}\)
for any  \(\rno_{0},\rno_{1}\in(1,\infty)\) and \(\beta\in(0,1)\).
Then
\begin{align}
\notag
\beta \gX(\rno_{1})+(1-\beta)\gX(\rno_{0})
&=\sum\nolimits_{\dinp}\mP(\dinp)
\ln \left(
\EXS{\qmn{1,\mP}}{(\der{\Wm(\dinp)}{\qmn{1,\mP}})^{\rno_{1}}(\der{\qmn{\rno_{1},\mP}}{\qmn{1,\mP}})^{1-\rno_{1}}}
\right)^{\beta}
\left(\EXS{\qmn{1,\mP}}{(\der{\Wm(\dinp)}{\qmn{1,\mP}})^{\rno_{0}}(\der{\qmn{\rno_{0},\mP}}{\qmn{1,\mP}})^{1-\rno_{0}}}
\right)^{1-\beta}
\\
\notag
&\geq \sum\nolimits_{\dinp}\mP(\dinp)
\ln 
\EXS{\qmn{1,\mP}}{(\der{\Wm(\dinp)}{\qmn{1,\mP}})^{\rno_{\beta}}
	(\der{\qmn{\rno_{1},\mP}}{\qmn{1,\mP}})^{\beta(1-\rno_{1})}
	(\der{\qmn{\rno_{0},\mP}}{\qmn{1,\mP}})^{(1-\beta)(1-\rno_{0})}}
\\
\notag
&=(\rno_{\beta}-1)\CRD{\rno_{\beta}}{\Wm}{\tfrac{\mean}{\lon{\mean}}}{\mP}-(\rno_{\beta}-1)\ln \lon{\mean}
\\
\notag
&\geq \gX(\rno_{\beta})
\end{align}
where the first inequality follows from the H\"{o}lder's inequality
and the second inequality follows from 
the definition of Augustin information 
and the fact that \(\lon{\mean}\leq 1 \), 
which is consequence of the H\"{o}lder's inequality.
Furthermore, the first inequality is an equality iff
\(\der{\Wm(\dinp)}{\qmn{1,\mP}}
(\der{\qmn{\rno_{1},\mP}}{\qmn{1,\mP}})^{\frac{1-\rno_{1}}{\rno_{1}-\rno_{0}}}
(\der{\qmn{\rno_{0},\mP}}{\qmn{1,\mP}})^{\frac{\rno_{0}-1}{\rno_{1}-\rno_{0}}}
=\gamma(\dinp)\)
holds \(\Wm(\dinp)\)-a.s. for all \(\dinp\in\supp{\mP}\)
and the second inequality is an equality iff
\(\mean=\qmn{\rno_{\beta},\mP}\)
by Lemma \ref{lem:information}-(\ref{information:oti})
because \(\lon{\mean}\leq 1\) by the H\"{o}lder's inequality.
On the other hand, \(\lon{\mean}=1\) iff 
\(\qmn{\rno_{0},\mP}=\qmn{\rno_{1},\mP}\)
as a result of the H\"{o}lder's inequality.
Thus the second inequality is an equality iff
\(\qmn{\rno_{0},\mP}=\qmn{\rno_{1},\mP}=\qmn{\rno_{\rnb},\mP}\).
Hence both inequalities are equalities, i.e. 
\(\beta \gX(\rno_{1})+(1-\beta)\gX(\rno_{0})=\gX(\rno_{\beta})\),
iff 
\(\qmn{\rno_{0},\mP}=\qmn{\rno_{\beta},\mP}=\qmn{\rno_{1},\mP}\)
and \(\der{\Wm(\dinp)}{\qmn{\rno_{\rnb},\mP}}=\gamma(\dinp)\) 
holds \(\Wm(\dinp)\)-a.s. for all \(\dinp\in\supp{\mP}\).
Following a reasoning similar to the one for \(\rno_{0},\rno_{1}\in(0,1)\)
case and invoking 
Lemma \ref{lem:information}-(\ref{information:oti})
instead of
Lemma \ref{lem:information}-(\ref{information:zto}),
we conclude that
either \(\gX(\cdot)\) is strictly convex on \((1,\infty)\) 
or \(\der{\Wm(\dinp)}{\qmn{1,\mP}}=\gamma(\dinp)\) 
\(\Wm(\dinp)\)-a.s. for all \(\dinp\in\supp{\mP}\)
and \(\RMI{\rno}{\mP}{\Wm}=\sum_{\dinp}\mP(\dinp)\ln \gamma(\dinp)\)
for all \(\rno\in\reals{+}\).

We assume the convexity of \(\gX(\cdot)\) on \(\reals{+}\), 
in order to establish the strict convexity of \(\gX(\cdot)\) on \(\reals{+}\)
using the strict convexity on \((0,1)\) and \((1,\infty)\).
Note that if \(\rno_{0}\in(0,1]\), \(\rno_{1}\in(1,\infty)\)
and \(\rno_{\beta}\in(1,\infty)\), then 
there exists an \(\epsilon\in(0,\beta)\) such that
\(\rno_{\beta-\epsilon}\in(1,\infty)\). 
Thus
\begin{align}
\notag
\beta \gX(\rno_{1})+(1-\beta)\gX(\rno_{0})
&=\tfrac{\epsilon}{1-\beta+\epsilon}\gX(\rno_{1})+
\tfrac{1-\beta}{1-\beta+\epsilon}
\left[(\beta-\epsilon)\gX(\rno_{1})+(1-\beta+\epsilon)\gX(\rno_{0})\right]
\\
\notag
&\geq \tfrac{\epsilon}{1-\beta+\epsilon}\gX(\rno_{1})+
\tfrac{1-\beta}{1-\beta+\epsilon}\gX(\rno_{\beta-\epsilon})
\\
\notag
&>\gX(\tfrac{\epsilon}{1-\beta+\epsilon}\rno_{1}+\tfrac{1-\beta}{1-\beta+\epsilon}\rno_{\beta-\epsilon})
\\
\notag
&=\gX(\rno_{\beta}).
\end{align}
Similar manipulations can be used to prove the strict inequality for
\(\rno_{0}\in(0,1)\), \(\rno_{1}\in[1,\infty)\),\(\rno_{\beta}\in(0,1)\)
case
and
\(\rno_{0}\in(0,1)\), \(\rno_{1}\in(1,\infty)\), \(\rno_{\beta}=1\)
case.

Now we are left with establishing the convexity of \(\gX(\cdot)\)
on \(\reals{+}\) that we have assumed. 
Invoking Lemma \ref{lem:information}-(\ref{information:alternative}) 
for \(\rno\in\reals{+}\setminus\{1\}\) case and
using \(\CRD{1}{\Wm}{\Wm}{\mP}\!=\!0\) for \(\rno\!=\!1\) case we get
\begin{align}
\notag
\gX(\rno)
&=\sup\nolimits_{\Vm\in\pmea{\outA|\inpS}} (\rno-1)\RMI{1}{\mP}{\Vm}-\rno\CRD{1}{\Vm}{\Wm}{\mP}.
\end{align}
Then \(\gX(\rno)\) is convex in \(\rno\) 
because pointwise supremum of a family of linear/convex functions is convex. 

On the other hand, using \(\Vm=\Wm\) we can deduce that,
\(\gX(\rno)\geq (\rno-1)\RMI{1}{\mP}{\Wm}\).
and \(\RMI{1}{\mP}{\Wm}\in[0,\bent{\mP}]\) by Lemma \ref{lem:information}-(\ref{information:bounded}).
Thus \(\gX(\rno)\geq -\bent{\mP}\).

		\item[(\ref{lem:informationO}-\ref{informationO:decreasing})]
		Since \((\rno-1)\RMI{\rno}{\mP}{\Wm}\) is finite and convex in \(\rno\) on \(\reals{+}\),
		it is  continuous on \(\reals{+}\) by \cite[Thm. 6.3.3]{dudley}.
		Then \(\frac{1-\rno}{\rno}\RMI{\rno}{\mP}{\Wm}\) is continuous in \(\rno\) on \(\reals{+}\), 
		as well.
		Furthermore,
		\begin{align}
		\notag
		\tfrac{1-\rno}{\rno}\RMI{\rno}{\mP}{\Wm}
		&=\inf\nolimits_{\Vm\in\pmea{\outA|\inpS}} \tfrac{1-\rno}{\rno}\RMI{1}{\mP}{\Vm}+\CRD{1}{\Vm}{\Wm}{\mP}
		\end{align}
		by Lemma \ref{lem:information}-(\ref{information:bounded},\ref{information:alternative})
		and \(\CRD{1}{\Wm}{\Wm}{\mP}\!=\!0\).
		Then \(\tfrac{1-\rno}{\rno}\RMI{\rno}{\mP}{\Wm}\) is nonincreasing in \(\rno\) because 
		infimum of a family of nonincreasing functions is nonincreasing.
		Note that \(\tfrac{1-\rno}{\rno}\RMI{1}{\mP}{\Vm}+\CRD{1}{\Vm}{\Wm}{\mP}\)
		is nonincreasing in \(\rno\) because \(\RMI{1}{\mP}{\Vm}\) is nonnegative. 
		
		\item [(\ref{lem:informationO}-\ref{informationO:continuity})] 
		\(\RMI{\rno}{\mP}{\Wm}\) is nondecreasing in \(\rno\) because the pointwise infimum of a family 
		of nondecreasing functions is nondecreasing and 
		the \renyi divergence is nondecreasing in its order by 
		Lemma \ref{lem:divergence-order}.
		
		Since \((\rno-1)\RMI{\rno}{\mP}{\Wm}\) is finite and convex in \(\rno\) on \(\reals{+}\),
		it is  continuous on \(\reals{+}\) by \cite[Thm. 6.3.3]{dudley}.
		Then \(\RMI{\rno}{\mP}{\Wm}\) is continuous on \((0,1)\) and \((1,\infty)\). 
		In order to extend the continuity to \(\reals{+}\) we need to prove that 
		\(\RMI{\rno}{\mP}{\Wm}\) is continuous at \(\rno=1\).
		Note that as a result of the definition of the Augustin information we have
		\(\RMI{\rno}{\mP}{\Wm}\leq \CRD{\rno}{\Wm}{\qmn{1,\mP}}{\mP}\) for all \(\rno\in\reals{+}\).
		Since \(\RMI{\rno}{\mP}{\Wm}\) is nondecreasing in \(\rno\) we have
		\begin{align}
		\label{eq:informationO:continuity-A}
		\RMI{1}{\mP}{\Wm}&\leq \RMI{\rno}{\mP}{\Wm}\leq \CRD{\rno}{\Wm}{\qmn{1,\mP}}{\mP}
		&
		&\forall \rno\in(1,\infty).
		\end{align}
		Since
		\(\qmn{\rno,\mP}\leq (\min\nolimits_{\dinp:\mP(\dinp)>0} \mP(\dinp))^{-\frac{\abs{1-\rno}}{\rno}}\qmn{1,\mP}\)
		by Lemma \ref{lem:meanO}-(\ref{meanO:boundedlogRND}),
		using Lemma \ref{lem:divergence-RM} we get
		\begin{align}
		\notag
		\CRD{\rno}{\Wm}{\qmn{\rno,\mP}}{\mP}
		&\geq \CRD{\rno}{\Wm}{\qmn{1,\mP}}{\mP}+\tfrac{1-\rno}{\rno}\ln (\min\nolimits_{\dinp:\mP(\dinp)>0} \mP(\dinp))
		&
		&\forall \rno\in(0,1).
		\end{align}
		Recall that \(\CRD{\rno}{\Wm}{\qmn{\rno,\mP}}{\mP}=\RMI{\rno}{\mP}{\Wm}\) by Lemma \ref{lem:information}-(\ref{information:zto})
		and \(\RMI{\rno}{\mP}{\Wm}\) is nondecreasing in \(\rno\). Thus we have
		\begin{align}
		\label{eq:informationO:continuity-B}
		\CRD{\rno}{\Wm}{\qmn{1,\mP}}{\mP}+\tfrac{1-\rno}{\rno}\ln (\min\nolimits_{\dinp:\mP(\dinp)>0} \mP(\dinp))
		&\leq \RMI{\rno}{\mP}{\Wm} \leq \RMI{1}{\mP}{\Wm}
		&
		&\forall \rno\in(0,1).
		\end{align}
		On the other hand, \(\CRD{\rnf}{\Wm}{\qmn{1,\mP}}{\mP}
		\leq \bent{\mP}<\infty\) for any \(\rnf\in\reals{+}\) 
		by Lemma \ref{lem:information}-(\ref{information:bounded}).
		Then \(\CRD{\rno}{\Wm}{\qmn{1,\mP}}{\mP}\) is continuous in \(\rno\) by Lemma \ref{lem:divergence-order}.
		Furthermore, \(\CRD{1}{\Wm}{\qmn{1,\mP}}{\mP}=\RMI{1}{\mP}{\Wm}\) by Lemma \ref{lem:information}-(\ref{information:one}).
		Then
		\begin{align}
		\notag
		\lim\nolimits_{\rno \to 1}\CRD{\rno}{\Wm}{\qmn{1,\mP}}{\mP}
		&=\RMI{1}{\mP}{\Wm}.
		\end{align}
		Then the continuity of  \(\RMI{\rno}{\mP}{\Wm}\) at \(\rno=1\) follows from 
		\eqref{eq:informationO:continuity-A} and \eqref{eq:informationO:continuity-B}.

		\item [(\ref{lem:informationO}-\ref{informationO:mean-uec})]
		Let \(\tau_{\dinp}(\rno)\) be \(\tau_{\dinp}(\rno)\DEF \tfrac{\rno-1}{\rno}\RD{\rno}{\Wm(\dinp)}{\qmn{\rno,\mP}}\).
		Then we can rewrite \eqref{eq:mean} as follows:
		\begin{align}
		\notag
		\ln\der{\qmn{\rnt,\mP}}{\qmn{1,\mP}}
		&=\tfrac{1}{\rnt}\ln \sum\nolimits_{\dinp} \mP(\dinp) 
		\left(\left(\der{\Wm(\dinp)}{\qmn{1,\mP}}\right)^{\rnf}e^{(1-\rnf)\RD{\rnf}{\Wm(\dinp)}{\qmn{\rnf,\mP}}}\right)^\frac{\rnt}{\rnf}
		e^{\rnt(\tau_{\dinp}(\rnf)-\tau_{\dinp}(\rnt))}.
		\end{align}
		Let us assume without loss of generality that \(\rnf>\rnt\). Then using the Jensen's inequality we get
		\begin{align}
		\label{eq:informationO:mean-uec-1}
		\ln\der{\qmn{\rnt,\mP}}{\qmn{1,\mP}}
		&\leq \ln\der{\qmn{\rnf,\mP}}{\qmn{1,\mP}}+\max\nolimits_{\dinp:\mP(\dinp)>0}(\tau_{\dinp}(\rnf)-\tau_{\dinp}(\rnt)).
		\end{align}
		On the other hand, using the fact that 
		\(\sum_{\dinp} [\xi(\dinp)]^{\frac{\rnt}{\rnf}}\geq [\sum_{\dinp} \xi(\dinp)]^{\frac{\rnt}{\rnf}}\)
		for non-negative \(\xi(\dinp)\) we get
		\begin{align}
		\label{eq:informationO:mean-uec-2}
		\ln\der{\qmn{\rnt,\mP}}{\qmn{1,\mP}}
		&\geq \ln\der{\qmn{\rnf,\mP}}{\qmn{1,\mP}}+(1-\tfrac{\rnt}{\rnf})\ln(\min\nolimits_{\dinp:\mP(\dinp)>0}\mP(\dinp))
		+\min\nolimits_{\dinp:\mP(\dinp)>0}(\tau_{\dinp}(\rnf)-\tau_{\dinp}(\rnt)).
		\end{align}
		If \(\{\tau_{\dinp}(\rno)\}_{\dinp:\mP(\dinp)>0}\) is equicontinuous in \(\rno\), 
		then \(\{\ln\der{\qmn{\rno,\mP}}{\qmn{1,\mP}}\}_{\dout\in\outS}\) is equicontinuous 
		in \(\rno\) by \eqref{eq:informationO:mean-uec-1} and \eqref{eq:informationO:mean-uec-2}.
		On the other hand, there are only finitely many \(\dinp\)'s with positive \(\mP(\dinp)\).
		Thus \(\{\tau_{\dinp}(\rno)\}_{\dinp:\mP(\dinp)>0}\) is equicontinuous if each \(\tau_{\dinp}(\rno)\)
		is continuous. 
		We are left with establishing the continuity of \(\tau_{\dinp}(\rno)\).
		
		Let \(\gX_{\dinp}(\cdot):[\rnt,\rnf]\to\reals{+}\),
		\(\fX_{\dinp}(\cdot,\cdot):[\rnt,\rnf]\times\outS\to\reals{\geq0}\),
		and \(\smn{\cdot}:[\rnt,\rnf]\to\pmea{\outA}\) be
		\begin{align}
		\notag
		\gX_{\dinp}(\rno)
		&\DEF \int \fX_{\dinp}(\rno,\dout) \rfm(\dif{\dout}),
		\\
		\notag
		\fX_{\dinp}(\rno,\dout)
		&\DEF \left(\der{\Wm(\dinp)}{\rfm}\right)^{\rno} \left(\der{\smn{\rno}}{\rfm}\right)^{1-\rno},
		\\
		\notag
		\smn{\rno}
		&\DEF\tfrac{\rnf-\rno}{\rnf-\rnt}\qmn{\rnt,\mP}+\tfrac{\rno-\rnt}{\rnf-\rnt}\qmn{\rnf,\mP}.
		\end{align}
		Then \(\fX_{\dinp}(\rno,\dout)\) is differentiable in \(\rno\) 
		and its derivative can be bounded using  Lemma \ref{lem:meanO}-(\ref{meanO:measurebound}) 
		and the identity \(\tau \ln \tfrac{1}{\tau}\leq \tfrac{1}{e}\):
		\begin{align}
		\notag
		\pder{}{\rno}\fX_{\dinp}(\rno,\dout)
		&=\left(\der{\Wm(\dinp)}{\rfm}\right)^{\rno} \left(\der{\smn{\rno}}{\rfm}\right)^{1-\rno}\ln \der{\Wm(\dinp)}{\smn{\rno}}
		+\left(\der{\Wm(\dinp)}{\rfm}\right)^{\rno} \left(\der{\smn{\rno}}{\rfm}\right)^{-\rno} \tfrac{1-\rno}{\rnf-\rnt} 
		\left[\der{\qmn{\rnf,\mP}}{\rfm}-\der{\qmn{\rnt,\mP}}{\rfm}\right]
		\\
		\notag
		\abs{\left.\pder{}{\rno}\fX_{\dinp}(\rno,\dout)\right\vert_{\rno=\rnb}} 
		&\leq \der{\smn{\rnb}}{\rfm}\left[\tfrac{1}{\rnb e}+ (\tfrac{1}{\mP(\dinp)^{\frac{1}{\rnt \wedge  1}}})^{\rnb}\ln \tfrac{1}{\mP(\dinp)^{\frac{1}{\rnt \wedge 1}}}\right]
		+(\tfrac{1}{\mP(\dinp)^{\frac{1}{\rnt \wedge  1}}})^{\rnb} 
		\tfrac{\abs{1-\rnb}}{\rnf-\rnt}
		\left[\der{\qmn{\rnf,\mP}}{\rfm}+\der{\qmn{\rnt,\mP}}{\rfm}\right]
		\\
		\notag
		&\leq \left(\der{\qmn{\rnf,\mP}}{\rfm}+\der{\qmn{\rnt,\mP}}{\rfm}\right)
		\left[\tfrac{1}{\rnt e}+ (\tfrac{1}{\mP(\dinp)^{\frac{1}{\rnt \wedge 1}}})^{\rnf}
		\left(\ln \tfrac{1}{\mP(\dinp)^{\frac{1}{\rnt \wedge 1}}}+\tfrac{1+\rnf}{\rnf-\rnt}\right)
		\right]
		&
		&\forall \rnb\in[\rnt,\rnf].
		\end{align}
		The expression on the right hand side is \(\rfm-\)integrable. 
		Thus as a result of \cite[Corollary 2.8.7]{bogachev} we have
		\begin{align}
		\notag
		\pder{}{\rno}\gX_{\dinp}(\rno)
		&=\int \pder{}{\rno}\fX_{\dinp}(\rno,\dout)\rfm(\dif{\dout}).
		\end{align}
		Furthermore, \(\pder{}{\rno}\gX_{\dinp}(\rno)\) is continuous by \cite[Corollary 2.8.7]{bogachev}
		because \(\pder{}{\rno}\fX_{\dinp}(\rno,\dout)\) is continuous in \(\rno\).
		Then \(\tfrac{\ln\gX_{\dinp}(\rno)}{\rno}\) is a continuous function on \([\rnt,\rnf]\) that 
		is continuously  differentiable on \((\rnt,\rnf)\).
		Then, as a result of mean value theorem \cite[Thm. 5.10]{rudin}
		we have 
		\begin{align}
		\label{eq:informationO:mean-uec-3}
		\abs{\tfrac{\ln\gX_{\dinp}(\rnf)}{\rnf}-\tfrac{\ln\gX_{\dinp}(\rnt)}{\rnt}} 
		&\leq (\rnf-\rnt) \sup\nolimits_{\rnb\in(\rnt,\rnf)} \abs{\left.\pder{}{\rno}\tfrac{\ln\gX_{\dinp}(\rno)}{\rno}\right\vert_{\rno=\rnb}}.
		\end{align}
		Using Lemma \ref{lem:meanO}-(\ref{meanO:measurebound}) and  the identity 
		\(\tau \ln \tfrac{1}{\tau}\leq \tfrac{1}{e}\) we get
		\begin{align}
		\notag
		\abs{\left.\pder{}{\rno}\tfrac{\ln\gX_{\dinp}(\rno)}{\rno}\right\vert_{\rno=\rnb}}
		&=\abs{\tfrac{-\ln\gX_{\dinp}(\rnb)}{\rnb^{2}}+\tfrac{1}{\rnb \gX_{\dinp}(\rnb)}
			\int \left.\pder{}{\rno}\fX_{\dinp}(\rno,\dout)\right\vert_{\rno=\rnb} \rfm(\dif{\dout})} 
		\\
		\label{eq:informationO:mean-uec-4}
		&\leq \abs{\tfrac{\ln \gX_{\dinp}(\rnb)}{\rnb^{2}}}
		+\tfrac{1}{\rno \gX_{\dinp}(\rnb)}
		\left[\tfrac{1}{\rnb e}+(\tfrac{1}{\mP(\dinp)^{\frac{1}{\rnt \wedge 1}}})^{\rnb} 
		\left(\ln \tfrac{1}{\mP(\dinp)^{\frac{1}{\rnt \wedge 1}}}
		+ \tfrac{\abs{1-\rnb}}{\rnf-\rnt}\lon{\qmn{\rnf,\mP}-\qmn{\rnt,\mP}}\right)\right].
		\end{align}		
We bound \(\ln \gX_{\dinp}(\rnb)\) using	
the definition of \(\gX_{\dinp}(\rnb)\)
together with  
Lemmas
\ref{lem:meanO}-(\ref{meanO:measurebound}) 
and \ref{lem:divergence-RM}:
\begin{align}
\notag
\abs{\ln \gX_{\dinp}(\rnb)} 
&= \abs{\rnb-1}
\RD{\rnb}{\Wm(\dinp)}{\tfrac{\rnf-\rnb}{\rnf-\rnt}\qmn{\rnt,\mP}+\tfrac{\rnb-\rnt}{\rnf-\rnt}\qmn{\rnf,\mP}}
\\
\notag
&\leq \abs{\rnb-1}
\RD{\rnb}{\Wm(\dinp)}{
\left[
\tfrac{\rnf-\rnb}{\rnf-\rnt}[\mP(\dinp)]^{\frac{1}{1\wedge \rnt}}
+\tfrac{\rnb-\rnt}{\rnf-\rnt}[\mP(\dinp)]^{\frac{1}{1\wedge \rnf}}
\right]
\Wm(\dinp)}
\\
\notag
&\leq (\rnb\vee 1)
\RD{\rnb}{\Wm(\dinp)}{[\mP(\dinp)]^{\frac{1}{1\wedge \rnt}}\Wm(\dinp)}
\\
\label{eq:informationO:mean-uec-5}
&\leq \tfrac{\rnf\vee 1}{\rnt \wedge  1} \ln \tfrac{1}{\mP(\dinp)}
&
&\forall \rnb\in[\rnt,\rnf].
\end{align}

		The Augustin information is nondecreasing in its order by part (\ref{informationO:continuity}).
		Thus Lemmas \ref{lem:divergence-pinsker} and \ref{lem:information} imply that
		\begin{align}
		\label{eq:informationO:mean-uec-6}
		\lon{\qmn{\rnf,\mP}-\qmn{\rnt,\mP}}
		&\leq \sqrt{\tfrac{2}{\rnt \wedge 1} (\RMI{\rnf}{\mP}{\Wm}-\RMI{\rnt}{\mP}{\Wm})}.
		\end{align}
		On the other hand, one can confirm by substitution that
		\begin{align}
		\label{eq:informationO:mean-uec-7}
		\tau_{\dinp}(\rnf)-\tau_{\dinp}(\rnt)  
		&=\tfrac{\ln\gX_{\dinp}(\rnf)}{\rnf}-\tfrac{\ln\gX_{\dinp}(\rnt)}{\rnt}.
		\end{align}
		Then the continuity of \(\tau_{\dinp}(\rno)\) in \(\rno\) is implied by 
		\eqref{eq:informationO:mean-uec-3},
		\eqref{eq:informationO:mean-uec-4},
		\eqref{eq:informationO:mean-uec-5},
		\eqref{eq:informationO:mean-uec-6},
		\eqref{eq:informationO:mean-uec-7}
		and
		the continuity of the Augustin information in the order established in part (\ref{informationO:continuity}).

		\item [(\ref{lem:informationO}-\ref{informationO:differentiability})]
		\(\CRD{\rnt}{\Wm}{\qmn{\rnf,\mP}}{\mP}\geq \RMI{\rnt}{\mP}{\Wm}\) for any 
		\(\rnf\in\reals{+}\) and \(\rnt\in\reals{+}\) by the definition of the Augustin 
		information. Then the differentiability of \(\CRD{\rno}{\Wm}{\mQ}{\mP}\) in \(\rno\)
		established in Lemma \ref{lem:analyticity}
		 implies that
		\begin{align}
		\notag
		\lim\nolimits_{\rnt\downarrow\rnf}
		\tfrac{\RMI{\rnt}{\mP}{\Wm}-\RMI{\rnf}{\mP}{\Wm}}{\rnt-\rnf}
		&\leq 
		\lim\nolimits_{\rnt\downarrow\rnf}
		\tfrac{\CRD{\rnt}{\Wm}{\qmn{\rnf,\mP}}{\mP}-\CRD{\rnf}{\Wm}{\qmn{\rnf,\mP}}{\mP}}{\rnt-\rnf}
		\\
		\label{eq:informationO:differentiability-1}
		&=\left.\pder{}{\rno}\CRD{\rno}{\Wm}{\qmn{\rnf,\mP}}{\mP}\right\vert_{\rno=\rnf} 
		\\
		\notag
		\lim\nolimits_{\rnt\uparrow\rnf}
		\tfrac{\RMI{\rnt}{\mP}{\Wm}-\RMI{\rnf}{\mP}{\Wm}}{\rnt-\rnf}
		&\geq 
		\lim\nolimits_{\rnt\downarrow\rnf}
		\tfrac{\CRD{\rnt}{\Wm}{\qmn{\rnf,\mP}}{\mP}-\CRD{\rnf}{\Wm}{\qmn{\rnf,\mP}}{\mP}}{\rnt-\rnf}
		\\
		\label{eq:informationO:differentiability-2}
		&=\left.\pder{}{\rno}\CRD{\rno}{\Wm}{\qmn{\rnf,\mP}}{\mP}\right\vert_{\rno=\rnf}.
		\end{align}
		Similarly,\footnote{For \(\rnf\in(1,\infty)\), we can also use \eqref{eq:lem:information:alternative:def} 
			of  Lemma \ref{lem:information}-(\ref{information:alternative}) to establish reverse inequalities for 
			\eqref{eq:informationO:differentiability-1}  and \eqref{eq:informationO:differentiability-2}.}
		\(\CRD{\rnf}{\Wm}{\qmn{\rnt,\mP}}{\mP}\geq \RMI{\rnf}{\mP}{\Wm}\) for any 
		\(\rnf\in\reals{+}\) and \(\rnt\in\reals{+}\) by the definition of the Augustin 
		information. Hence,
		\begin{align}
		\label{eq:informationO:differentiability-3}
		\lim\nolimits_{\rnt\downarrow\rnf}
		\tfrac{\RMI{\rnt}{\mP}{\Wm}-\RMI{\rnf}{\mP}{\Wm}}{\rnt-\rnf}
		&\geq 
		\lim\nolimits_{\rnt\downarrow\rnf}
		\tfrac{\CRD{\rnt}{\Wm}{\qmn{\rnt,\mP}}{\mP}-\CRD{\rnf}{\Wm}{\qmn{\rnt,\mP}}{\mP}}{\rnt-\rnf}
		\\
		\label{eq:informationO:differentiability-4}
		\lim\nolimits_{\rnt\uparrow\rnf}
		\tfrac{\RMI{\rnt}{\mP}{\Wm}-\RMI{\rnf}{\mP}{\Wm}}{\rnt-\rnf}
		&\leq 
		\lim\nolimits_{\rnt\uparrow\rnf}
		\tfrac{\CRD{\rnt}{\Wm}{\qmn{\rnt,\mP}}{\mP}-\CRD{\rnf}{\Wm}{\qmn{\rnt,\mP}}{\mP}}{\rnt-\rnf}.
		\end{align}
		For any \(\delta\in(0,\rnf)\) by Lemma \ref{lem:meanO}-(\ref{meanO:measurebound}) and 
		Lemma \ref{lem:divergence-RM} we have 
		\begin{align}
		\label{eq:informationO:differentiability-5}
		\RD{\rno}{\Wm(\dinp)}{\qmn{\rnt,\mP}}
		&\leq \tfrac{1}{(\rnf-\delta) \wedge 1}\ln \tfrac{1}{\mP(\dinp)}
		&
		&\forall \rnt:\abs{\rnt-\rnf}<\delta, \rno\in\reals{+}. 
		\end{align}
		Then as a result of Lemma \ref{lem:uniform-analyticity}, there 
		exists\footnote{Note that, this is not just the Taylor expansion of \(\RD{\rno}{\mW}{\mQ}\)
			around \(\rno=\rnf\) for a given \((\mW,\mQ)\) pair. 
			Lemma \ref{lem:analyticity} allows us to apply the Taylor expansion for a family of 
			\((\mW,\mQ)\) pairs around \(\rno=\rnf\) simultaneously if we can bound 
			\(\RD{\rnb}{\mW}{\mQ}\) uniformly for all \((\mW,\mQ)\)'s for some \(\rnb>\rnf\).}
		a \(K_{\rnf,\mP}>0\) such that for \(\rnt\) close enough to \(\rnf\) we have
		\begin{align}
		\label{eq:informationO:differentiability-6}
		\abs{\CRD{\rnt}{\Wm}{\qmn{\rnt,\mP}}{\mP}-\CRD{\rnf}{\Wm}{\qmn{\rnt,\mP}}{\mP}-
			(\rnt-\rnf)\left.\pder{}{\rno}\CRD{\rno}{\Wm}{\qmn{\rnt,\mP}}{\mP}\right\vert_{\rno=\rnf}} 
		&\leq K_{\rnf,\mP} \abs{\rnt-\rnf}^{2}.
		\end{align}
		We show in the following that \(\left.\pder{}{\rno}\CRD{\rno}{\Wm}{\qmn{\rnt,\mP}}{\mP}\right\vert_{\rno=\rnf}\) 
		is a continuous function of \(\rnt\), i.e.
		\begin{align}
		\label{eq:informationO:differentiability-7}
		\lim\nolimits_{\rnt\to\rnb}
		\left.\pder{}{\rno}\CRD{\rno}{\Wm}{\qmn{\rnt,\mP}}{\mP}\right\vert_{\rno=\rnf}
		&=\left.\pder{}{\rno}\CRD{\rno}{\Wm}{\qmn{\rnb,\mP}}{\mP}\right\vert_{\rno=\rnf}
		&
		&\forall\rnf,\rnb\in\reals{+}.
		\end{align}
		Using \eqref{eq:informationO:differentiability-6} and \eqref{eq:informationO:differentiability-7}
		we get
		\begin{align}
		\label{eq:informationO:differentiability-8}
		\lim\nolimits_{\rnt\to\rnf}
		\tfrac{\CRD{\rnt}{\Wm}{\qmn{\rnt,\mP}}{\mP}-\CRD{\rnf}{\Wm}{\qmn{\rnt,\mP}}{\mP}}{\rnt-\rnf}
		&= 
		\left.\pder{}{\rno}\CRD{\rno}{\Wm}{\qmn{\rnf,\mP}}{\mP}\right\vert_{\rno=\rnf}.
		\end{align}
		Differentiability of the Augustin information and \eqref{eq:lem:informationO:differentiability}
		follow from
		\eqref{eq:informationO:differentiability-1},
		\eqref{eq:informationO:differentiability-2},
		\eqref{eq:informationO:differentiability-3}, 
		\eqref{eq:informationO:differentiability-4}, 
		and
		\eqref{eq:informationO:differentiability-8}.
		
		In order to establish the continuity of 
		\(\left.\pder{}{\rno}\CRD{\rno}{\Wm}{\qmn{\rnt,\mP}}{\mP}\right\vert_{\rno=\rnf}\) in \(\rnt\),
		i.e. \eqref{eq:informationO:differentiability-7},
		let us first recall that the expression for the derivative of the \renyi divergence given in \eqref{eq:divergence-derivative-first}:
		\begin{align}
		\notag
		\left.\pder{}{\rno}\CRD{\rno}{\Wm}{\qmn{\rnt,\mP}}{\mP}\right\vert_{\rno=\rnf}
		&=\begin{cases}
		\tfrac{1}{(\rnf-1)^2}
		\sum\nolimits_{\dinp}\mP(\dinp)
		\int \der{\Wma{\rnf}{\qmn{\rnt,\mP}}(\dinp)}{\qmn{1,\mP}} 
		\left(\ln \der{\Wma{\rnf}{\qmn{\rnt,\mP}}(\dinp)}{\Wm(\dinp)}\right) \qmn{1,\mP}(\dif{\dout})
		&\rnf\neq 1
		\\
		\sum\nolimits_{\dinp}\tfrac{\mP(\dinp)}{2}
		\left(\int \der{\Wm(\dinp)}{\qmn{1,\mP}}
		\left(\ln\der{\Wm(\dinp)}{\qmn{\rnt,\mP}}
		\right)^{2} \qmn{1,\mP}(\dif{\dout})-
		[\RD{1}{\Wm(\dinp)}{\qmn{\rnt,\mP}}]^2
		\right)
		&\rnf= 1
		\end{cases}.
		\end{align}
		Recall that, 
		\begin{align}
		\notag
		\RD{\rnf}{\Wm(\dinp)}{\qmn{\rnt,\mP}}
		&=
		\begin{cases}
		\tfrac{1}{\rnf-1}\ln \int 
		\left(\der{\Wm(\dinp)}{\qmn{1,\mP}}\right)^{\rnf}
		\left(\der{\qmn{\rnt,\mP}}{\qmn{1,\mP}}\right)^{1-\rnf}
		\qmn{1,\mP}(\dif{\dout})
		&\rnf\neq 1
		\\
		\int\der{\Wm(\dinp)}{\qmn{1,\mP}} 
		\left(\ln \der{\Wm(\dinp)}{\qmn{1,\mP}}-\ln \der{\qmn{\rnt,\mP}}{\qmn{1,\mP}}\right)
		\qmn{1,\mP}(\dif{\dout})
		&\rnf=1
		\end{cases}.
		\end{align}
		Then \(\RD{\rnf}{\Wm(\dinp)}{\qmn{\rnt,\mP}}\) is continuous in \(\rnt\) for any \(\rnf\in\reals{+}\) 
		by \cite[Corollary 2.8.7-(i)]{bogachev} because 
		\(\{\ln\der{\qmn{\rnt,\mP}}{\qmn{1,\mP}}\}_{\dout\in\outS}\) is equicontinuous function of \(\rnt\)
		by part (\ref{informationO:mean-uec})
		and \(\abs{\ln\der{\qmn{\rnt,\mP}}{\qmn{1,\mP}}}\leq \tfrac{\abs{\rnt-1}}{\rnt}\ln \tfrac{1}{\min\nolimits_{\dinp:\mP(\dinp)>0}\mP(\dinp)}\)
		by Lemma \ref{lem:meanO}-(\ref{meanO:boundedlogRND}).
		On the other hand,
		\begin{align}
		\notag
		\ln \der{\Wma{\rnf}{\qmn{\rnt,\mP}}(\dinp)}{\qmn{1,\mP}}
		&=\rnf \ln \der{\Wm(\dinp)}{\qmn{1,\mP}}+(1-\rnf)\ln \der{\qmn{\rnt,\mP}}{\qmn{1,\mP}}
		+(1-\rnf)\RD{\rnf}{\Wm(\dinp)}{\qmn{\rnt,\mP}}.
		\end{align}
		Thus \(\{\der{\Wma{\rnf}{\qmn{\rnt,\mP}}(\dinp)}{\qmn{1,\mP}}\}_{\dout\in\outS}\) is 
		equicontinuous in \(\rnt\) because \(\{\ln \der{\qmn{\rnt,\mP}}{\qmn{1,\mP}}\}_{\dout\in\outS}\)
		is equicontinuous and \(\RD{\rnf}{\Wm(\dinp)}{\qmn{\rnt,\mP}}\) is continuous in 
		\(\rnt\).
		Furthermore, using Lemma \ref{lem:divergence-RM}, Lemma \ref{lem:meanO}-(\ref{meanO:measurebound}),
		and the identity \(\tau \ln \tfrac{1}{\tau}\leq \tfrac{1}{e}\) we obtain the following bounds
		\begin{align}
		\label{eq:informationO:differentiability-9}
		\der{\Wma{\rnf}{\qmn{\rnt,\mP}}(\dinp)}{\qmn{1,\mP}} 
		\abs{\ln \der{\Wma{\rnf}{\qmn{\rnt,\mP}}(\dinp)}{\Wm(\dinp)}} 
		&\leq \der{\Wm(\dinp)}{\qmn{1,\mP}} \left(
		\IND{\frac{\dif{\Wma{\rnf}{\qmn{\rnt,\mP}}}(\dinp)}{\dif{\Wm}(\dinp)}\leq 1}\tfrac{1}{e}
		+
		\IND{\frac{\dif{\Wma{\rnf}{\qmn{\rnt,\mP}}}(\dinp)}{\dif{\Wm}(\dinp)}> 1}
		\left[\tfrac{1}{\mP(\dinp)}\right]^{\frac{\rnf-1}{\rnt \wedge 1}}
		\ln
		\left[\tfrac{1}{\mP(\dinp)}\right]^{\frac{\rnf-1}{\rnt \wedge 1}}		
		\right)
		&
		&\mbox{if~}\rnf\in[1,\infty),
		\\
		\label{eq:informationO:differentiability-10}
		\der{\Wma{\rnf}{\qmn{\rnt,\mP}}(\dinp)}{\qmn{1,\mP}} 
		\abs{\ln \der{\Wma{\rnf}{\qmn{\rnt,\mP}}(\dinp)}{\Wm(\dinp)}} 
		&\leq \der{\Wm(\dinp)}{\qmn{1,\mP}}
		\IND{\frac{\dif{\Wma{\rnf}{\qmn{\rnt,\mP}}}(\dinp)}{\dif{\Wm}(\dinp)}\leq 1}\tfrac{1}{e}
		+
		\der{\qmn{\rnt,\mP}}{\qmn{1,\mP}}
		\IND{\frac{\dif{\Wma{\rnf}{\qmn{\rnt,\mP}}}(\dinp)}{\dif{\Wm}(\dinp)}> 1}
		\tfrac{1-\rnf}{\rnf e}
		\left[\tfrac{1}{\mP(\dinp)}\right]^{\frac{1}{\rnt \wedge 1}}
		&
		&\mbox{if~}\rnf\in(0,1).
		\end{align}
		Using \(\abs{\ln\der{\qmn{\rnt,\mP}}{\qmn{1,\mP}}}\leq \tfrac{\abs{\rnt-1}}{\rnt}\ln \tfrac{1}{\min\nolimits_{\dinp:\mP(\dinp)>0}\mP(\dinp)}\)
		---i.e. Lemma \ref{lem:meanO}-(\ref{meanO:boundedlogRND})---
		in \eqref{eq:informationO:differentiability-10},
		we get
\begin{align}
\label{eq:informationO:differentiability-10-new}
\der{\Wma{\rnf}{\qmn{\rnt,\mP}}(\dinp)}{\qmn{1,\mP}} 
\abs{\ln \der{\Wma{\rnf}{\qmn{\rnt,\mP}}(\dinp)}{\Wm(\dinp)}} 
&\leq \der{\Wm(\dinp)}{\qmn{1,\mP}}\tfrac{1}{e}
+
\left[\min\limits_{\dinp:\mP(\dinp)>0}\mP(\dinp)
\right]^{\frac{-\abs{\rnt-1}}{\rnt}}
\tfrac{1}{\rnf e}
\left[\tfrac{1}{\mP(\dinp)}\right]^{\frac{1}{\rnt \wedge 1}}
&
&\mbox{if~}\rnf\in(0,1).
\end{align}
Using \eqref{eq:informationO:differentiability-9} and 
\eqref{eq:informationO:differentiability-10-new}, we get
the following bound for all 
\(\rnf\in\reals{+}\) and  \(\rnt\in[\mA,\mB]\)
\begin{align}
\notag
\der{\Wma{\rnf}{\qmn{\rnt,\mP}}(\dinp)}{\qmn{1,\mP}} 
\abs{\ln \der{\Wma{\rnf}{\qmn{\rnt,\mP}}(\dinp)}{\Wm(\dinp)}} 
&\leq \der{\Wm(\dinp)}{\qmn{1,\mP}} \left(
\tfrac{1}{e}
+\left[\tfrac{1}{\mP(\dinp)}\right]^{\frac{\rnf}{\mA \wedge 1}}
\ln
\left[\tfrac{1}{\mP(\dinp)}\right]^{\frac{\rnf}{\mA \wedge 1}}		
\right)
+
\left[\min\limits_{\dinp:\mP(\dinp)>0}\mP(\dinp)
\right]^{-1-\frac{1}{\mA}}
\tfrac{1}{\rnf e}
\left[\tfrac{1}{\mP(\dinp)}\right]^{\frac{1}{\rnt \wedge 1}}
.
\end{align}
Then
		\(\left\{\der{\Wma{\rnf}{\qmn{\rnt,\mP}}(\dinp)}{\qmn{1,\mP}} \abs{\ln \der{\Wma{\rnf}{\qmn{\rnt,\mP}}(\dinp)}{\Wm(\dinp)}}\right\}_{\rnt\in[\mA,\mB]}\)
		is bounded from above by a \(\qmn{1,\mP}\)-integrable function 
		for any closed interval \([\mA,\mB]\subset\reals{+}\). 
		Thus \(\CRD{1}{\Wma{\rnf}{\qmn{\rnt,\mP}}}{\Wm}{\mP}\)
		is a continuous function of \(\rnt\) by  \cite[Corollary 2.8.7-(i)]{bogachev}
		for all \(\rnf\in\reals{+}\).
		Then  \(\left.\pder{}{\rno}\CRD{\rno}{\Wm}{\qmn{\rnt,\mP}}{\mP}\right\vert_{\rno=\rnf}\)
		is continuous in \(\rnt\) for \(\rnf\in\reals{+}\setminus{1}\).
		The continuity of 
		\(\left.\pder{}{\rno}\CRD{\rno}{\Wm}{\qmn{\rnt,\mP}}{\mP}\right\vert_{\rno=1}\) in \(\rnt\) 
		follows from  the continuity of \(\RD{1}{\Wm(\dinp)}{\qmn{\rnt,\mP}}\) and 
		\cite[Corollary 2.8.7-(i)]{bogachev} via the following bound, which can be established using
		the identity \(\tau (\ln \tau)^{2}\IND{\tau\in (0,1]}\leq \tfrac{4}{e^{2}}\) and
		Lemma \ref{lem:meanO}-(\ref{meanO:measurebound}),
		\begin{align}
\label{eq:informationO:differentiability-11-new}		
\der{\Wm(\dinp)}{\qmn{1,\mP}}
		\left(\ln\der{\Wm(\dinp)}{\qmn{\rnt,\mP}}
		\right)^{2}
		&\leq \der{\qmn{\rnt,\mP}}{\qmn{1,\mP}} \tfrac{4}{e^{2}}+
		\der{\Wm(\dinp)}{\qmn{1,\mP}}
		\left(\tfrac{\ln \mP(\dinp)}{\rnt \wedge 1}\right)^{2}.
		\end{align}
		
		Now we are left with establishing the continuity of the derivative of 
		the Augustin information.
		Since  \(\{\ln \der{\qmn{\rno,\mP}}{\qmn{1,\mP}}\}_{\dout\in\outS}\) is equicontinuous in \(\rno\)
		by part (\ref{informationO:mean-uec}),
		for any \(\epsilon>0\) there exists a \(\delta\) such that 
		\begin{align}
		\notag
		e^{-\epsilon}\qmn{\rnf,\mP}\leq\qmn{\rnt,\mP}
		&\leq e^{\epsilon}\qmn{\rnf,\mP}
		&
		&\forall \rnt:\abs{\rnt-\rnf}<\delta. 
		\end{align}
		On the other hand \((\mP(\dinp))^{\frac{1}{\rnf \wedge 1}}\Wm(\dinp)\leq \qmn{\rnf,\mP}\) by Lemma \ref{lem:meanO}-(\ref{meanO:measurebound}).
		Then as a result of Lemma \ref{lem:divergence-RM}
		\begin{align}
		\notag
		\RD{\rno}{\Wm(\dinp)}{\qmn{\rnt,\mP}}
		&\leq \tfrac{1}{\rnf \wedge 1} \ln \tfrac{1}{\mP(\dinp)}+\epsilon 
		&
		&\forall \rnt:\abs{\rnf-\rnt}<\delta,~\forall \rno\in\reals{+}. 
		\end{align}
		Hence, Lemma \ref{lem:uniform-analyticity} implies the existence of  
		a \(\tau\in\reals{+}\) that does not depend on \(\rnt\)
		such that
		\begin{align}
		\label{eq:informationO:differentiability-11}
		\abs{\left.\pder{^{\knd}\CRD{\rno}{\Wm}{\qmn{\rnt,\mP}}{\mP}}{\rno^{\knd}}\right\vert_{\rno=\rnf}} 
		&\leq \knd!\tau^{\knd+1} \knd
		&
		&\forall \rnt:\abs{\rnf-\rnt}<\delta.
		\end{align}
		Then \(\limsup\nolimits_{\knd\to\infty}\abs{\tfrac{1}{\knd!}\left.\pder{^{\knd}}{\rno^{\knd}}
			\left(\pder{}{\rno} \CRD{\rno}{\Wm}{\qmn{\rnt,\mP}}{\mP}\right)\right\vert_{\rno=\rnf}}^{1/\knd}\leq \tau\).
		Thus the radius of convergence of the Taylor's expansion of  \(\pder{\CRD{\rno}{\Wm}{\qmn{\rnt,\mP}}{\mP}}{\rno}\)
		around \(\rno=\rnf\) is  at least \(\tfrac{1}{\tau}\) for all \(\rnt\in [\rnf-\delta,\rnf+\delta]\)
		by Hadamard's formula \cite[Lemma 1.1.8]{krantzparks}.
		Furthermore, we can use \eqref{eq:informationO:differentiability-11} to bound higher order derivatives:
		\begin{align}
		\notag
		\abs{\left. \pder{\CRD{\rno}{\Wm}{\qmn{\rnt,\mP}}{\mP}}{\rno}\right\vert_{\rno=\rnb}
			-\left.\pder{\CRD{\rno}{\Wm}{\qmn{\rnt,\mP}}{\mP}}{\rno}\right\vert_{\rno=\rnf}}
		&\leq \sum\nolimits_{\ind=1}^{\infty} \tfrac{\abs{\rnb-\rnf}^{\ind}}{\ind!}
		\abs{\left.
			\pder{^{\ind+1}\CRD{\rno}{\Wm}{\qmn{\rnt,\mP}}{\mP}}{\rno^{\ind+1}}\right\vert_{\rno=\rnf}} 
		\\
		\notag
		&\leq \tau^{2} \sum\nolimits_{\ind=1}^{\infty} \left(\ind^{2}+2\ind+1\right)\abs{\rnb-\rnf}^{\ind}\tau^{\ind}
		&
		&\forall \rnt:\abs{\rnt-\rnf}\leq\delta,~\forall\rnb:\abs{\rnb-\rnf}<\tfrac{1}{\tau}.
		\end{align}
		Using identities 
		\(\sum_{\ind=1}^{\infty}\xi^{\ind}\leq \sum_{\ind=1}^{\infty}\ind \xi^{\ind}\leq 
		\sum_{\ind=1}^{\infty}\ind^{2}\xi^{\ind}\)
		for \(\xi\geq 0\)
		and
		\(\sum_{\ind=1}^{\infty}\ind^{2} \xi^{\ind}=\tfrac{(1+\xi)\xi}{(1-\xi)^3}\) for \(\abs{\xi}<1\) we get,
		\begin{align}
		\notag
		\abs{\left. \pder{\CRD{\rno}{\Wm}{\qmn{\rnt,\mP}}{\mP}}{\rno}\right\vert_{\rno=\rnb}
			-\left.\pder{\CRD{\rno}{\Wm}{\qmn{\rnt,\mP}}{\mP}}{\rno}\right\vert_{\rno=\rnf}}
		&\leq 4\tau^{3}\abs{\rnb-\rnf}\tfrac{1+\tau\abs{\rnb-\rnf}}{(1-\tau\abs{\rnb-\rnf})^{3}}
		&
		&\forall \rnt:\abs{\rnt-\rnf}\leq\delta,~\forall\rnb:\abs{\rnb-\rnf}<\tfrac{1}{\tau}.
		\end{align}
		Then using \eqref{eq:lem:informationO:differentiability}  we get
		\begin{align}
		\notag
		\abs{\left.\pder{}{\rno}\RMI{\rno}{\mP}{\Wm}\right\vert_{\rno=\rnt}
			-\left.\pder{\CRD{\rno}{\Wm}{\qmn{\rnt,\mP}}{\mP}}{\rno}\right\vert_{\rno=\rnf}}
		&\leq 4\tau^{3}\abs{\rnt-\rnf}\tfrac{1+\tau\abs{\rnt-\rnf}}{(1-\tau\abs{\rnt-\rnf})^{3}}
		&
		&\forall \rnt:\abs{\rnt-\rnf}\leq\delta\wedge\tfrac{1}{\tau}.
		\end{align}
		Hence,
		\(\lim\nolimits_{\rnt\to\rnf} \left.\pder{}{\rno}\RMI{\rno}{\mP}{\Wm}\right\vert_{\rno=\rnt}
		=\lim\nolimits_{\rnt\to\rnf} \left.\pder{\CRD{\rno}{\Wm}{\qmn{\rnt,\mP}}{\mP}}{\rno}\right\vert_{\rno=\rnf}\),
		if the latter limit exists.
		However, we have already established the existence of that limit in order to calculate 
		the derivative of the Augustin information: it is equal to 
		\(\left.\pder{}{\rno}\RMI{\rno}{\mP}{\Wm}\right\vert_{\rno=\rnf}\).
		Thus the Augustin information is continuously differentiable in the order.

\item [(\ref{lem:informationO}-\ref{informationO:monotonicityofharoutunianinformation})]
Let us start with analyzing the case when \((\rno-1)\RMI{\rno}{\mP}{\Wm}\) is strictly convex
in \(\rno\).
The chain rule for derivatives implies
\begin{align}
\notag
\pder{}{\mS} \mS \RMI{\frac{1}{1+\mS}}{\mP}{\Wm}
&=\RMI{\frac{1}{1+\mS}}{\mP}{\Wm}
+\mS\tfrac{(-1)}{(1+\mS)^{2}}
\left.\pder{}{\rno} \RMI{\rno}{\mP}{\Wm}\right\vert_{\rno=\frac{1}{1+\mS}}.
\end{align} 
Using  \eqref{eq:lem:informationO:differentiability-alt},
\eqref{eq:informationO:differentiability-11-new},		
and the fact that
 \(\EXS{\Wm(\dinp)}{\left(\ln \der{\Wm(\dinp)}{\qmn{1,\mP}}-\RD{1}{\Wm(\dinp)}{\qmn{1,\mP}}\right)^{2}}
\leq \EXS{\Wm(\dinp)}{\left(\ln \der{\Wm(\dinp)}{\qmn{1,\mP}}\right)^{2}}
\),
we get
\begin{align}
\notag
\left.\pder{}{\mS} \mS \RMI{\frac{1}{1+\mS}}{\mP}{\Wm}\right\vert_{\mS=0}
&=
\begin{cases}
\RMI{1}{\mP}{\Wm}
&\mS=0
\\
\RMI{\frac{1}{1+\mS}}{\mP}{\Wm}
-\tfrac{1}{\mS}\CRD{1}{\Wma{\frac{1}{1+\mS}}{\qmn{\frac{1}{1+\mS},\mP}}}{\Wm}{\mP}
&\mS\in(-1,0)\cup(0,\infty)
\end{cases}.
\end{align}
Then as a result of \eqref{eq:lem:information:alternative:opt}, 
we can assert the following expression for all \(\mS\in(-1,\infty)\)
\begin{align}
\label{eq:informationO:monotonicityofharoutunianinformation-1}
\pder{}{\mS} \mS \RMI{\frac{1}{1+\mS}}{\mP}{\Wm}
&=\RMI{1}{\mP}{\Wma{\frac{1}{1+\mS}}{\qmn{\frac{1}{1+\mS},\mP}}}.
\end{align} 

Then the continuous differentiability of \(\RMI{\rno}{\mP}{\Wm}\) in \(\rno\) on \(\reals{+}\),
established in part (\ref{informationO:differentiability}),
implies the continuity of \(\RMI{1}{\mP}{\Wma{\rno}{\qmn{\rno,\mP}}}\) in
\(\rno\) on \(\reals{+}\).

In order to prove that \(\RMI{1}{\mP}{\Wma{\rno}{\qmn{\rno,\mP}}}\) is monotonically increasing 
in \(\rno\) on \(\reals{+}\),
note that the strict convexity of \((\rno-1)\RMI{\rno}{\mP}{\Wm}\) in \(\rno\) on \(\reals{+}\)
is equivalent to the strict concavity of \(\mS \RMI{\frac{1}{1+\mS}}{\mP}{\Wm}\) in \(\mS\) on \((-1,\infty)\)
because the inequality 
\begin{align}
\notag
(\rno_{\beta}-1)\RMI{\rno_{\beta}}{\mP}{\Wm}
<\beta(\rno_{1}-1)\RMI{\rno_{1}}{\mP}{\Wm}
+(1-\beta)(\rno_{0}-1)\RMI{\rno_{0}}{\mP}{\Wm}
\end{align}
holds for \(\rno_{0},\rno_{1}\in\reals{+}\), \(\beta\in(0,1)\),
and \(\rno_{\beta}=\beta\rno_{1}+(1-\beta)\rno_{0}\)
iff the inequality
\begin{align}
\notag
\smn{\mean}\RMI{\frac{1}{1+\smn{\mean}}}{\mP}{\Wm}
>\mean\smn{1}\RMI{\frac{1}{1+\smn{1}}}{\mP}{\Wm}
+(1-\mean)\smn{0}\RMI{\frac{1}{1+\smn{0}}}{\mP}{\Wm}.
\end{align}
holds for \(\smn{0}=\frac{1-\rno_{0}}{\rno_{0}}\), \(\smn{1}=\frac{1-\rno_{1}}{\rno_{1}}\)
\(\smn{\mean}=\mean\smn{1}+(1-\mean)\smn{0}\)
and \(\mean=\tfrac{\beta \rno_{1}}{\rno_{\beta}}\).

On the other hand for any strictly concave function \(\fX(\cdot)\)
and \(\smn{1},\smn{2},\smn{3},\smn{4}\) satisfying 
\(\smn{1}<\smn{2}<\smn{3}<\smn{4}\) we 
have\footnote{Note that \(\fX(\smn{2})>\tfrac{\smn{3}-\smn{2}}{\smn{3}-\smn{1}} \fX(\smn{1}) 
	+\tfrac{\smn{2}-\smn{1}}{\smn{3}-\smn{1}}\fX(\smn{3})\)
	implies \(\tfrac{\fX(\smn{2})-\fX(\smn{1})}{\smn{2}-\smn{1}}>\tfrac{\fX(\smn{3})-\fX(\smn{2})}{\smn{3}-\smn{2}}\).}
\begin{align}
\notag
\tfrac{\fX(\smn{2})-\fX(\smn{1})}{\smn{2}-\smn{1}}
>\tfrac{\fX(\smn{4})-\fX(\smn{3})}{\smn{4}-\smn{3}}.
\end{align}
Thus \(\RMI{1}{\mP}{\Wma{\frac{1}{1+\mS}}{\qmn{\frac{1}{1+\mS},\mP}}}\) is a decreasing 
function of \(\mS\)  on \((-1,\infty)\)
by \eqref{eq:informationO:monotonicityofharoutunianinformation-1}
and the definition of the derivative
because \(\mS \RMI{\frac{1}{1+\mS}}{\mP}{\Wm}\) is strictly concave in \(\mS\).
Hence \(\RMI{1}{\mP}{\Wma{\rno}{\qmn{\rno,\mP}}}\) is an increasing 
function of \(\rno\)  on \(\reals{+}\).

If \((\rno-1)\RMI{\rno}{\mP}{\Wm}\) is not strictly convex
in \(\rno\) then there exists a \(\gamma:\inpS\to[1,\infty)\) satisfying
\(\der{\Wm(\dinp)}{\qmn{1,\mP}}=\gamma(\dinp)\)  \(\Wm(\dinp)\)-a.s. for all 
\(\dinp\in\supp{\mP}\) and \(\qmn{\rno,\mP}=\qmn{1,\mP}\) for all \(\rno\in\reals{+}\)
by part (\ref{informationO:strictconvexity}).
Thus \(\Wma{\rno}{\qmn{\rno,\mP}}(\dinp)=\Wm(\dinp)\) and \(\der{\Wma{\rno}{\qmn{\rno,\mP}}(\dinp)}{\qmn{\rno,\mP}}=\gamma(\dinp)\)
for all \(\dinp\in\supp{\mP}\). Consequently \(\RMI{1}{\mP}{\Wma{\rno}{\qmn{\rno,\mP}}}=\sum_{\dinp}\mP(\dinp)\ln \gamma(\dinp)\).

\item [(\ref{lem:informationO}-\ref{informationO:limitofharoutunianinformation})]
	Let us define \(\RMI{0}{\mP}{\Wm}\) to be \(\lim_{\rno\downarrow0}\RMI{\rno}{\mP}{\Wm}\),
	such a limit exists because \(\RMI{\rno}{\mP}{\Wm}\) is non-decreasing
	function of  \(\rno\) on \(\reals{+}\).
	Then \((\rno\!-\!1)\RMI{\rno}{\mP}{\Wm}\) is convex in \(\rno\) on \([0,\infty)\),
	as well.
	Thus for any \(\rno\in\reals{+}\) and \(\epsilon\in\reals{+}\)
	\begin{align}
	\notag
	\tfrac{(\rno-1)\RMI{\rno}{\mP}{\Wm}+\RMI{0}{\mP}{\Wm}}{\rno}
	\leq \tfrac{(\rno+\epsilon-1)\RMI{\rno+\epsilon}{\mP}{\Wm}-(\rno-1)\RMI{\rno}{\mP}{\Wm}}{\epsilon}
	\end{align}
	by \cite[Proposition 6.3.2]{dudley}.
	Taking the limits as \(\epsilon\downarrow0\) 
	and invoking \eqref{eq:lem:informationO:differentiability-alt}
	we get
	\begin{align}
	\notag
	\tfrac{(\rno-1)\RMI{\rno}{\mP}{\Wm}+\RMI{0}{\mP}{\Wm}}{\rno}
	\leq \RMI{\rno}{\mP}{\Wm}+\tfrac{1}{\rno-1}\CRD{1}{\Wma{\rno}{\qmn{\rno,\mP}}}{\Wm}{\mP}.
	\end{align}
	Thus \eqref{eq:lem:information:alternative:opt} implies
	\begin{align}
	\notag
	\RMI{0}{\mP}{\Wm}
	&\leq \RMI{1}{\mP}{\Wma{\rno}{\qmn{\rno,\mP}}}.
	\end{align}
	On the other hand
	for all \(\rno\in(0,1)\)
	the non-negativity of the \renyi divergence and \eqref{eq:lem:information:alternative:opt}
	implies 
	\begin{align}
	\notag
	\RMI{1}{\mP}{\Wma{\rno}{\qmn{\rno,\mP}}}
	&\leq \RMI{\rno}{\mP}{\Wm}.
	\end{align}
	Hence \(\lim_{\rno\downarrow0}\RMI{1}{\mP}{\Wma{\rno}{\qmn{\rno,\mP}}}=\RMI{0}{\mP}{\Wm}\),
	i.e.  \(\lim_{\rno\downarrow0}\RMI{1}{\mP}{\Wma{\rno}{\qmn{\rno,\mP}}}=\lim_{\rno\downarrow0}\RMI{\rno}{\mP}{\Wm}\).
\end{enumerate}
\end{proof}

\begin{proof}[Proof of Lemma \ref{lem:polsha}]~
	Lemma \ref{lem:polsha} is nothing but Lemma \ref{lem:Lpolsha} for the case 
	when \(\lgm\) is a vector of zeros. Thus we do not present a separate proof
	for Lemma \ref{lem:polsha},	see the proof of Lemma \ref{lem:Lpolsha}.	 
	\begin{comment}
	\begin{enumerate}
	\item[(\ref{polsha:poltyrev})]
	Note that as a result of Lemma \ref{lem:information}-(\ref{information:zto},\ref{information:oti}) we have
	\begin{align}
	\label{eq:polsha:poltyrev-1}
	\RD{1}{\mP}{\umn{\rno,\mP}}
	&=(\rno-1)\RMI{\rno}{\mP}{\Wm}
	+\ln\sum\nolimits_{\tilde{\dinp}}\mP(\tilde{\dinp})e^{(1-\rno)\RD{\rno}{\Wm(\tilde{\dinp})}{\qmn{\rno,\mP}}}.
	\end{align}
	On the other hand as a result of \eqref{eq:def:meanmeasure}, \eqref{eq:mean} and 
	\eqref{eq:renyiinformation-neq-alternative}
	\begin{align}
	\notag
	\GMI{\rno}{\umn{\rno,\mP}}{\Wm}
	&=\tfrac{\rno}{\rno-1}\ln 
	\int \left(\sum\nolimits_{\dinp} \umn{\rno,\mP}(\dinp)
	(\der{\Wm(\dinp)}{\rfm})^{\rno} \right)^{\sfrac{1}{\rno}} \rfm(\dif{\dout})
	\\
	\notag
	&=\tfrac{\rno}{\rno-1}\ln \int\der{\qmn{\rno,\mP}}{\rfm}\rfm(\dif{\dout})	
	-\tfrac{1}{\rno-1}\ln\sum\nolimits_{\tilde{\dinp}}\mP(\tilde{\dinp})e^{(1-\rno)\RD{\rno}{\Wm(\tilde{\dinp})}{\qmn{\rno,\mP}}}
	\end{align}
	Then  \eqref{eq:lem:polsha:poltyrev} follows from \eqref{eq:polsha:poltyrev-1}
	\begin{itemize}
	\item  Let us proceed with establishing \eqref{eq:lem:polsha:poltyrev-variational} for \(\rno\in(0,1)\). 
	In order to do that we prove,
	\begin{align}
	\notag
	\GMI{\rno}{\mU}{\Wm}+\tfrac{1}{\rno-1}\RD{1}{\mP}{\mU}
	&\leq \RMI{\rno}{\mP}{\Wm}
	&
	&\forall \mU\in\pdis{\inpS}.
	\end{align}
	Preceding inequality together with \eqref{eq:lem:polsha:poltyrev} imply 
	\eqref{eq:lem:polsha:poltyrev-variational} for \(\rno\in(0,1)\).
	Note that the inequality holds trivially when \(\mP\NAC\mU\) because \(\RD{1}{\mP}{\mU}\)
	is infinite in that case. Thus we are  left with \(\mP\AC\mU\) case.
	On the other hand, any \(\mU\in\pdis{\inpS}\) can be written as 
	\(\mU=\umn{ac}+\umn{s}\) where \(\umn{ac}\AC\mP\) and \(\umn{s}\perp\mP\). 
	Then 
	\begin{align}
	\notag
	\GMI{\rno}{\mU}{\Wm}+\tfrac{1}{\rno-1}\RD{1}{\mP}{\mU}
	&\mathop{\leq}^{(i)} \RD{\rno}{\mU\mtimes \Wm}{\mU\otimes\qmn{\rno,\mP}}+\tfrac{1}{\rno-1}\RD{1}{\mP}{\mU}
	\\
	\notag
	&\mathop{\leq}^{(ii)}\tfrac{1}{\rno-1}
	\ln \left[\sum\nolimits_{\dinp}\umn{ac}(\dinp)e^{(\rno-1)\RD{\rno}{\Wm(\dinp)}{\qmn{\rno,\mP}}}\right]
	+\tfrac{1}{\rno-1}\RD{1}{\mP}{\umn{ac}}
	\\
	\notag
	&\mathop{=}^{(iii)}\tfrac{1}{\rno-1}
	\ln 
	\left[\sum\nolimits_{\dinp}\mP(\dinp)\tfrac{\umn{ac}(\dinp)}{\mP(\dinp)}e^{(\rno-1)\RD{\rno}{\Wm(\dinp)}{\qmn{\rno,\mP}}}\right]
	+\tfrac{1}{\rno-1}\RD{1}{\mP}{\umn{ac}}
	\\
	\notag
	&\mathop{\leq}^{(iv)}\tfrac{1}{\rno-1}
	\left[\sum\nolimits_{\dinp} \mP(\dinp)\ln \tfrac{\umn{ac}(\dinp)}{\mP(\dinp)}e^{(\rno-1)\RD{\rno}{\Wm(\dinp)}{\qmn{\rno,\mP}}}\right]
	+\tfrac{1}{\rno-1}\RD{1}{\mP}{\umn{ac}}
	\\
	\notag
	&\mathop{=}^{(v)} \RMI{\rno}{\mP}{\Wm}
	&
	&
	\end{align}
	where \((i)\) follows from \eqref{eq:def:renyiinformation},
	\((ii)\) follows from \eqref{eq:def:divergence} and 
	the monotonicity of the natural logarithm function,
	\((iii)\) follows from \(\umn{ac}\sim\mP\) which holds because \(\mP\AC\mU\),
	\((iv)\) follows from the Jensen's inequality and the concavity of the 
	natural logarithm function,
	\((v)\) follows from Lemma \ref{lem:information}. 	
	\item
	In order to prove \eqref{eq:lem:polsha:poltyrev-variational} for \(\rno\in(1,\infty)\),
	we prove the following identity
	\begin{align}
	\notag
	\GMI{\rno}{\mU}{\Wm}+\tfrac{1}{\rno-1}\RD{1}{\mP}{\mU}
	&\geq \RMI{\rno}{\mP}{\Wm}
	&
	&\forall \mU\in\pdis{\inpS}.
	\end{align}
	The analysis is pretty much the same, except for the direction of the inequalities.
	\end{itemize}

	\item[(\ref{polsha:shayevitz})]
	Note that the order \(\rno\) \renyi mean for the input distribution \(\mP\) is a fixed point of the order \(\rno\) 
	Augustin operator for the input distribution \(\amn{\rno,\mP}\), i.e.
	\begin{align}
	\notag
	\der{\Aop{\rno}{\amn{\rno,\mP}}{\qgn{\rno,\mP}}}{\rfm}
	&=\sum\nolimits_{\dinp} \amn{\rno,\mP}(\dinp) (\der{\Wm(\dinp)}{\rfm})^{\rno} (\der{\qgn{\rno,\mP}}{\rfm})^{1-\rno} e^{(1-\rno)\RD{\rno}{\Wm(\dinp)}{\qgn{\rno,\mP}}}
	\\
	\notag
	&=\tfrac{1}{e^{(\rno-1)\GMI{\rno}{\mP}{\Wm}}}
	\sum\nolimits_{\dinp} \mP(\dinp) (\der{\Wm(\dinp)}{\rfm})^{\rno} (\der{\qgn{\rno,\mP}}{\rfm})^{1-\rno} 
	\\
	\notag
	&=\der{\qgn{\rno,\mP}}{\rfm}.
	\end{align} 
	Consequently \(\RMI{\rno}{\amn{\rno,\mP}}{\Wm}=\CRD{\rno}{\Wm}{\qgn{\rno,\mP}}{\amn{\rno,\mP}}\) 
	by Lemma \ref{lem:information}-(\ref{information:zto},\ref{information:oti}). 
	Then
	
	\begin{align}
	\notag
	\RD{1}{\amn{\rno,\mP}}{\mP}
	&=\sum\nolimits_{\dinp} \amn{\rno,\mP}(\dinp)\ln \tfrac{\mP(\dinp)e^{(\rno-1)\RD{\rno}{\Wm(\dinp)}{\qgn{\rno,\mP}}}}{\sum_{\tilde{\dinp}}\mP(\tilde{\dinp})e^{(\rno-1)\RD{\rno}{\Wm(\tilde{\dinp})}{\qgn{\rno,\mP}}}} \tfrac{1}{\mP(\dinp)}
	\\
	\notag
	&=(\rno-1)\RMI{\rno}{\amn{\rno,\mP}}{\Wm}
	-\ln \sum_{\tilde{\dinp}}\mP(\tilde{\dinp})e^{(\rno-1)\RD{\rno}{\Wm(\tilde{\dinp})}{\qgn{\rno,\mP}}}
	\\
	\notag
	&=(\rno-1)[\RMI{\rno}{\amn{\rno,\mP}}{\Wm}-\GMI{\rno}{\mP}{\Wm}].
	\end{align}
	\begin{itemize}
	\item 
	Let us proceed with establishing \eqref{eq:lem:polsha:shayevitz-variational} for
	\(\rno\in(0,1)\). 
	In order to do that we prove,
	\begin{align}
	\notag
	\RMI{\rno}{\mA}{\Wm}-\tfrac{1}{\rno-1}\RD{1}{\mA}{\mP}
	&\geq \GMI{\rno}{\mP}{\Wm}
	&
	&\forall \mA\in\pdis{\inpS}.
	\end{align}
	Preceding inequality together with \eqref{eq:lem:polsha:shayevitz} imply 
	\eqref{eq:lem:polsha:shayevitz-variational} for \(\rno\in(0,1)\).
	Note that the inequality holds trivially when \(\mA\NAC\mP\) because \(\RD{1}{\mA}{\mP}\)
	is infinite in that case. Thus we are  left with \(\mA\AC\mP\) case.
	On the other hand, for any \(\mA\in\pdis{\inpS}\), \(\mP\) can be written as 
	\(\mP=\pmn{ac}+\pmn{s}\) where \(\pmn{ac}\AC\mA\) and \(\pmn{s}\perp\mA\). 
	Then
	\begin{align}
	\notag
	\RMI{\rno}{\mA}{\Wm}-\tfrac{1}{\rno-1}\RD{1}{\mA}{\mP}
	&\mathop{=}^{(i)}\CRD{\rno}{\Wm}{\qmn{\rno,\mA}}{\mA}-\tfrac{1}{\rno-1}\RD{1}{\mA}{\pmn{ac}}
	\\
	\notag
	&\mathop{=}^{(ii)}\tfrac{1}{\rno-1}\sum\nolimits_{\dinp}
	\mA(\dinp)\ln \left[\tfrac{\pmn{ac}(\dinp)}{\mA(\dinp)}e^{(\rno-1)\RD{\rno}{\Wm(\dinp)}{\qmn{\rno,\mA}}}\right]
	\\
	\notag
	&\mathop{\geq}^{(iii)}
	\tfrac{1}{\rno-1}\ln \sum\nolimits_{\dinp} \pmn{ac}(\dinp) e^{(\rno-1)\RD{\rno}{\Wm(\dinp)}{\qmn{\rno,\mA}}}
	\\
	\notag
	&\mathop{\geq}^{(iv)}\tfrac{1}{\rno-1}\ln
	\sum\nolimits_{\dinp} \mP(\dinp) e^{(\rno-1)\RD{\rno}{\Wm(\dinp)}{\qmn{\rno,\mA}}}
	\\
	\notag
	&\mathop{\geq}^{(v)} \GMI{\rno}{\mP}{\Wm}
	&
	&
	\end{align}
	where \((i)\) follows from \eqref{eq:def:divergence} and Lemma \ref{lem:information},
	\((ii)\) follows from \(\pmn{ac}\sim\mA\) which holds because \(\mA\AC\mP\), 
	\((iii)\) follows from the Jensen's inequality and the concavity of the natural logarithm function, 
	\((iv)\) follows from the monotonicity of the natural logarithm function,
	\((v)\) follows from \eqref{eq:def:divergence} and \eqref{eq:def:renyiinformation}.
	\item
	In order to prove \eqref{eq:lem:polsha:shayevitz-variational} for \(\rno\in(1,\infty)\),
	we prove the following identity
	\begin{align}
	\notag
	\RMI{\rno}{\mA}{\Wm}-\tfrac{1}{\rno-1}\RD{1}{\mA}{\mP}
	&\leq \GMI{\rno}{\mP}{\Wm}
	&
	&\forall \mA\in\pdis{\inpS}.
	\end{align}
	The analysis is pretty same, except for the direction of the inequalities.
	\item[(\ref{polsha:augustin})] 	 
	\eqref{eq:lem:polsha:augustin} follows from \eqref{eq:def:meanmeasure},
	\eqref{eq:renyiinformation-neq-alternative}, \eqref{eq:lem:polsha:poltyrev} by substitution. 
	\eqref{eq:lem:polsha:augustin-variational} follows from \eqref{eq:def:meanmeasure},
	\eqref{eq:renyiinformation-neq-alternative}, \eqref{eq:lem:polsha:poltyrev-variational}
	by substitution.
	\end{itemize}
	\end{enumerate}
	%%%%\end{comment}
\end{proof}

\subsection{Augustin's Proof of Lemma \ref{lem:information}-(\ref{information:zto})}\label{sec:augustinsproof}
We have employed the relative compactness in the total variation topology for proving 
Lemma \ref{lem:information}-(\ref{information:zto}) because we wanted to assert 
\(\qmn{\rno,\mP}\sim\qmn{1,\mP}\), the convergence described in \eqref{eq:lem:information-zto:iteration},
and the inequality given in \eqref{eq:lem:information-zto:EHB}. 
Establishing the existence of a unique Augustin mean together with
the fixed point property described in 
\eqref{eq:lem:information-zto:fixedpoint} is considerably easier. 
It can be done using the concept of relative compactness in the 
topology of setwise convergence, as demonstrated by Augustin in \cite[\S34]{augustin78}. 
Augustin claims to establish other assertions of 
Lemma \ref{lem:information}-(\ref{information:zto}), as well.
In the following, we discuss why we think there are caveats in 
Augustin's argument in \cite[\S34]{augustin78}.

Let us first establish the existence of a unique Augustin mean.
First, we establish \eqref{eq:information-1} as we have done in the current proof. 
Then we consider the set 
\(\set{Q}'\DEF 
\{\mQ\in \pmea{\outA}:\CRD{\rno}{\Wm}{\mQ}{\mP}
<\CRD{\rno}{\Wm}{\qgn{\rno,\mP}}{\mP}\}\).
Note that
\(\RMI{\rno}{\mP}{\Wm}=\inf\nolimits_{\mQ\in \Aop{\rno}{\mP}{\set{Q}'}} \CRD{\rno}{\Wm}{\mQ}{\mP}\)
because of the definition of \(\set{Q}'\) and \eqref{eq:information-1}.
Furthermore for all \(\mQ\in \set{Q}'\) and
\(\oev \in \outA\),
\begin{align}
\notag
\Aop{\rno}{\mP}{\mQ}(\oev)
&= 
\sum\nolimits_{\dinp} \mP(\dinp)
e^{(1-\rno)\RD{\rno}{\Wm(\dinp)}{\mQ}}
 \int_{\oev} (\der{\Wm(\dinp)}{\rfm})^{\rno}(\der{\mQ}{\rfm})^{1-\rno} \rfm(\dif{\dout})  
&
&
\\
\notag
&\leq 
e^{\frac{(1-\rno)\CRD{\rno}{\Wm(\dinp)}{\qgn{\rno,\mP}}{\mP}}{\min_{\dinp:\mP(\dinp)>0}}}
\sum\nolimits_{\dinp} \mP(\dinp)
\int_{\oev} (\der{\Wm(\dinp)}{\rfm})^{\rno}(\der{\mQ}{\rfm})^{1-\rno} \rfm(\dif{\dout})  
&
&\mbox{by the definition of \(\set{Q}'\) and Lemma \ref{lem:divergence-pinsker}}
\\
\notag
&= 
e^{\frac{(1-\rno)\CRD{\rno}{\Wm(\dinp)}{\qgn{\rno,\mP}}{\mP}}{\min_{\dinp:\mP(\dinp)>0}}}
\int_{\oev} (\der{\mmn{\rno,\mP}}{\rfm})^{\rno}(\der{\mQ}{\rfm})^{1-\rno} \rfm(\dif{\dout})  
&
&\mbox{by \eqref{eq:def:meanmeasure}}
\\
\notag
&\leq 
e^{\frac{(1-\rno)\CRD{\rno}{\Wm(\dinp)}{\qgn{\rno,\mP}}{\mP}}{\min_{\dinp:\mP(\dinp)>0}}}
\left[\mmn{\rno,\mP}(\oev)\right]^{\rno}
\left[\mQ(\oev)\right]^{1-\rno}
&
&\mbox{by Holder's inequality}
\\
\notag
&\leq 
e^{\frac{(1-\rno)\CRD{\rno}{\Wm(\dinp)}{\qgn{\rno,\mP}}{\mP}}{\min_{\dinp:\mP(\dinp)>0}}}
\left[\mmn{\rno,\mP}(\oev)\right]^{\rno}
&
&\mbox{because \(\mQ(\oev)\leq 1\)}
\end{align}

Thus \(\Aop{\rno}{\mP}{\set{Q}'}\UAC \qgn{\rno,\mP}\) and
\(\Aop{\rno}{\mP}{\set{Q}'}\) has compact closure in the topology of setwise convergence 
by a version of Dunford-Pettis theorem \cite[4.7.25]{bogachev}.
On the other hand, \(\CRD{\rno}{\Wm}{\mQ}{\mP}\) is lower-semicontinuous in \(\mQ\) for the 
topology of 
setwise convergence because \(\RD{\rno}{\Wm(\dinp)}{\mQ}\) is, 
by Lemma \ref{lem:divergence:lsc}.
Then there exists a \(\qmn{\rno,\mP}\) in the closure of \({\Aop{\rno}{\mP}{\set{Q}'}}\) 
for the topology of setwise convergence such that 
\(\CRD{\rno}{\Wm}{\qmn{\rno,\mP}}{\mP}=\inf_{\mQ\in \pmea{\outA}} \CRD{\rno}{\Wm}{\mQ}{\mP}\)
by the extreme value theorem  for lower semicontinuous functions \cite[Ch3\S12.2]{kolmogorovfomin75}.
The uniqueness of \(\qmn{\rno,\mP}\) follows from the strict convexity of the \renyi divergence 
in its second argument described in Lemma \ref{lem:divergence-convexity}.

This construction establishes certain additional properties of the Augustin mean, as well.
Note that \(\qmn{\rno,\mP}\AC\qmn{1,\mP}\) 
because \(\qmn{\rno,\mP}\) is in the closure of \(\Aop{\rno}{\mP}{\set{Q}'}\)
for the topology of setwise convergence.
In addition,
\(\Aop{\rno}{\mP}{\qmn{\rno,\mP}}=\qmn{\rno,\mP}\)
because of Lemma \ref{lem:divergence-pinsker} and \eqref{eq:information-1}. 
Furthermore, any \(\mQ\) satisfying 
\(\Aop{\rno}{\mP}{\mQ}=\mQ\) and \(\qmn{1,\mP}\AC \mQ\) is equal to \(\qmn{\rno,\mP}\)
because of the argument presented in step (\ref*{information:zto}-\ref{information-zto:sufficientcondition})
of the proof of Lemma \ref{lem:information}-(\ref{information:zto}).
These observations, with minor differences, exist in 
Augustin's proof of \cite[Lemma 34.2]{augustin78}.

Above discussion establishes Lemma \ref{lem:information}-(\ref{information:zto}) except 
for the following three assertions:
\begin{enumerate}[(i)]
	\item \(\qmn{1,\mP}\AC\qmn{\rno,\mP}\),
	\item the identity given in  \eqref{eq:lem:information-zto:iteration},
	\item the inequality  given in \eqref{eq:lem:information-zto:EHB}. 	
\end{enumerate}
Note that \(\Aop{\rno}{\mP}{\set{Q}'}\UAC \qgn{\rno,\mP}\) and 
\(\{\Aopi{\rno}{\mP}{\ind}{\qgn{\rno,\mP}}\}_{\ind\in\integers{+}} \subset\Aop{\rno}{\mP}{\set{Q}'}\). 
Then by \cite[Thm. 4.7.25]{bogachev},
\(\{\Aopi{\rno}{\mP}{\ind}{\qgn{\rno,\mP}}\}_{\ind\in\integers{+}}\)
has a subsequence \(\{\Aopi{\rno}{\mP}{\ind(\jnd)}{\qgn{\rno,\mP}}\}_{\jnd\in\integers{+}}\)
converging to a \(\mQ\in \clos{(\{\Aopi{\rno}{\mP}{\ind}{\qgn{\rno,\mP}}\}_{\ind\in\integers{+}})}\) 
where both the convergence and the closure are for the topology of setwise convergence.
Furthermore, \(\mQ\sim\qmn{1,\mP}\) because of the arguments used in step 
(\ref*{information:zto}-\ref{information:zto-equivalence}) 
of the proof of Lemma \ref{lem:information}-(\ref{information:zto}). 
There are two ways one can prove remaining assertions of 
Lemma \ref{lem:information}-(\ref{information:zto}) without using 
the totally boundedness of 
\(\{\Aopi{\rno}{\mP}{\ind}{\qgn{\rno,\mP}}\}_{\ind\in\integers{+}}\)
established in step (\ref*{information:zto}-\ref{information-zto:totallybounded})
of the proof of Lemma \ref{lem:information}-(\ref{information:zto}):
\begin{itemize}
\item If one can show that \(\der{\Aopi{\rno}{\mP}{\ind(\jnd)}{\qgn{\rno,\mP}}}{\qmn{1,\mP}}\)
converges to \(\der{\mQ}{\qmn{1,\mP}}\) in measure \(\qmn{1,\mP}\), then
because of the Lebesgue-Vitali convergence theorem \cite[4.5.4]{bogachev} 
one would have
\(\lim_{\jnd\to \infty}\lon{\Aopi{\rno}{\mP}{\ind(\jnd)}{\qgn{\rno,\mP}}-\mQ}=0\),
established step  (\ref*{information:zto}-\ref{information:zto-equivalence}).
Thus one can skip steps 
(\ref*{information:zto}-\ref{information-zto:totallybounded})
and
(\ref*{information:zto}-\ref{information:zto-equivalence})
and 
proceed with the step (\ref*{information:zto}-\ref{information:zto-continuity})
of the proof. 

\item If one can show that the limit point \(\mQ\) of the subsequence 
\(\der{\Aopi{\rno}{\mP}{\ind(\jnd)}{\qgn{\rno,\mP}}}{\qmn{1,\mP}}\) is a fixed point 
of \(\Aop{\rno}{\mP}{\cdot}\), then one would have a statement equivalent to step
(\ref*{information:zto}-\ref{information:zto-fixedpoint}).
Thus  one can skip steps 
(\ref*{information:zto}-\ref{information-zto:totallybounded}), 
(\ref*{information:zto}-\ref{information:zto-equivalence}),
and
(\ref*{information:zto}-\ref{information:zto-fixedpoint})
and 
proceed with the step 
(\ref*{information:zto}-\ref{information:zto-iteration})
after deriving
(\ref*{information:zto}-\ref{information:zto-continuity}).
\end{itemize}
We proved Lemma \ref{lem:information}-(\ref{information:zto}) using the concept of 
totally boundedness
because we could not find an easy way to establish either the convergence in measure property 
or the fixed point property mentioned in the preceding discussion. 
However, we do know that both properties hold.
The convergence in measure holds because of the only if part of the 
Lebesgue-Vitali convergence theorem \cite[4.5.4]{bogachev}.
The fixed point property holds because \(\{\Aopi{\rno}{\mP}{\ind}{\qgn{\rno,\mP}}\}_{\ind\in\integers{+}}\)
has a unique limit point both in total variation topology and in topology of setwise convergence
by \eqref{eq:lem:information-zto:iteration}. 

While proving \cite[Lemma 34.2]{augustin78}, 
after establishing the weak convergence
of \(\{\Aopi{\rno}{\mP}{\ind(\jnd)}{\qmn{1,\mP}}\}_{\jnd\in \integers{+}}\) 
to \(\mQ\), Augustin asserts that
\(\lim_{\jnd\to \infty}\lon{\Aopi{\rno}{\mP}{\ind(\jnd)}{\qmn{1,\mP}}-\mQ}=0\).
This is the first one of our two major reservations for Augustin's proof 
of \cite[Lemma 34.2]{augustin78}.
Note that convergence in the topology of setwise 
convergence and weak convergence are one and the same thing for sequences of measures
by \cite[Corollary 4.7.26]{bogachev}. 
But convergence in the topology of setwise
convergence does not imply convergence in total variation 
topology.\footnote{Consider, for example, the 
sequence of measure on the unit interval whose Radon-Nikodym derivatives with respect to
the Lebesgue measure is given by \(\{(1+\cos(\pi \jnd \dsta)) \}_{\jnd\in\integers{+}}\).
This set of probability measures converges to the Lebesgue measure on every measurable set, 
but not in total variation.}
Thus we don't know how one can justify such an assertion.

In order to prove \cite[Lemma 34.2]{augustin78}, Augustin establishes the totally 
boundedness of \(\{\Aopi{\rno}{\mP}{\ind}{\qmn{1,\mP}}\}_{\ind\in\integers{+}}\) 
for the total variation metric.
In that proof Augustin asserts that \(\Aopi{\rno}{\mP}{\ind}{\qmn{1,\mP}}\) is in
\(\set{B}_{\jnd}\), defined in equation \eqref{eq:def:setBi},  for some \(\jnd\in\integers{+}\).
We don't know whether such an assertion is correct or not. 
But we know that \(\Aopi{\rno}{\mP}{\ind}{\qgn{\rno,\mP}}\) is in \(\set{B}_{\ind}\).
Thus one can fix this problem easily. 
A more important problem stems from Augustin's obliviousness about the 
infiniteness of the set of positive integers. 
Either in his discussion or in his equations there is 
no evidence suggesting that he makes a distinction of cases for approximating 
\(\{\Aopi{\rno}{\mP}{\tin}{\qgn{\rno,\mP}}\}_{\tin\leq\ind}\)
and
\(\{\Aopi{\rno}{\mP}{\tin}{\qgn{\rno,\mP}}\}_{\tin>\ind}\).
This is our other major reservation for Augustin's proof 
of \cite[Lemma 34.2]{augustin78}.

\subsection{Proofs of Lemmas on the Augustin Capacity}\label{sec:capacityproofs}
\begin{proof}[Proof of Lemma \ref{lem:capacityFLB}]~
	\begin{enumerate}[(i)]
		\item {\it \(\forall\rno\in\reals{+}\exists \widetilde{\mP}\in\pdis{\inpS}\) s.t. \(\RMI{\rno}{\widetilde{\mP}}{\Wm}=\CRC{\rno}{\!\Wm\!}{\cset}\):}
		Since \(\bent{\mP}\leq \ln \abs{\inpS}\) for all \(\mP\in \pdis{\inpS}\),
		\eqref{eq:concavitybound} and Lemma \ref{lem:information}-(\ref{information:bounded})
		imply that
		\begin{align}
		\label{eq:capacityFLB-1}
		\abs{\RMI{\rno}{\pmn{2}}{\Wm}-\RMI{\rno}{\pmn{1}}{\Wm}}
		&\leq \bent{\tfrac{\lon{\pmn{1}-\pmn{2}}}{2}}+\tfrac{\lon{\pmn{1}-\pmn{2}}}{2} \ln \abs{\inpS}.
		\end{align}
		Hence, \(\RMI{\rno}{\mP}{\Wm}\) is continuous in \(\mP\) on \(\pdis{\inpS}\).
		On the other hand, \(\pdis{\inpS}\) is compact because \(\inpS\) is a finite set.
		Then \(\cset\) is compact because any closed subset of a compact set is compact, 
		\cite[Thm. 26.2]{munkres}.
		Then there exists a \(\widetilde{\mP}\in \cset\) such that
		\(\RMI{\rno}{\widetilde{\mP}}{\Wm}=\sup_{\mP \in \cset} \RMI{\rno}{\mP}{\Wm}\)
		by the extreme value
		theorem,\footnote{We do not need to establish the continuity of 
			\(\RMI{\rno}{\mP}{\Wm}\) in \(\mP\); the upper semicontinuity is sufficient
			as a result of \cite[Ch3\S12.2]{kolmogorovfomin75}.
			 Note that 
			\(\RMI{\rno}{\mP}{\Wm}\) is upper semicontinuous in \(\mP\) because it 
			is the infimum of a family of linear functions.}
		\cite[27.4]{munkres}.
		
		\item {\it If \(\rno\in\reals{+}\) and \(\RMI{\rno}{\widetilde{\mP}}{\Wm}=\CRC{\rno}{\!\Wm\!}{\cset}\), 
			then \(\CRD{\rno}{\Wm}{\qmn{\rno,\widetilde{\mP}}}{\mP}\leq\CRC{\rno}{\!\Wm\!}{\cset}\)
			for all \(\mP\in \cset\):}
		Let \(\mP\) be any member of \(\cset\) and \(\pma{}{(\ind)}\) be \(\tfrac{\ind-1}{\ind}\widetilde{\mP}+\tfrac{1}{\ind}{\mP}\) 
		for \(\ind \in \integers{+}\). Then \(\pma{}{(\ind)}\in \cset\) because \(\cset\) is convex. 
		Furthermore, by Lemma \ref{lem:information}-(\ref{information:one},\ref{information:zto},\ref{information:oti}) we have
		\begin{align}
		\notag
		\RMI{\rno}{\pma{}{(\ind)}}{\Wm}
		&\!=\!
		\tfrac{\ind-1}{\ind} 
		\CRD{\rno}{\Wm}{\qmn{\rno,\pma{}{(\ind)}}}{\widetilde{\mP}}
		\!+\!
		\tfrac{1}{\ind}\CRD{\rno}{\Wm}{\qmn{\rno,\pma{}{(\ind)}}}{\mP}
		\\
		\notag
		&\!\geq \!
		\tfrac{\ind-1}{\ind} 
		\left[\RMI{\rno}{\widetilde{\mP}}{\Wm}+\RD{\rno\wedge 1}{\qmn{\rno,\widetilde{\mP}}}{\qmn{\rno,\pma{}{(\ind)}}}\right]
		\!+\!\tfrac{1}{\ind}\CRD{\rno}{\Wm}{\qmn{\rno,\pma{}{(\ind)}}}{\mP}
		&
		&\forall \ind \in \integers{+}.
		\end{align}
		Using \(\RMI{\rno}{\pma{}{(\ind)}}{\Wm}\leq \CRC{\rno}{\!\Wm\!}{\cset}\), 
		\(\RMI{\rno}{\widetilde{\mP}}{\Wm}=\CRC{\rno}{\!\Wm\!}{\cset}\), and 
		\(\RD{\rno\wedge 1}{\qmn{\rno,\widetilde{\mP}}}{\qmn{\rno,\pma{}{(\ind)}}}\geq0\)
		we get
		\begin{align}
		\label{eq:capacityFLB-2}
		\CRC{\rno}{\!\Wm\!}{\cset}
		&\geq \CRD{\rno}{\Wm}{\qmn{\rno,\pma{}{(\ind)}}}{\mP}
		&
		&\forall \ind \in \integers{+}.
		\end{align}
		On the other hand, using \(\RMI{\rno}{\pma{}{(\ind)}}{\Wm}\leq \CRC{\rno}{\!\Wm\!}{\cset}\),
		\(\RMI{\rno}{\widetilde{\mP}}{\Wm}=\CRC{\rno}{\!\Wm\!}{\cset}\), and 
		\(\CRD{\rno}{\Wm}{\qmn{\rno,\pma{}{(\ind)}}}{\mP}\geq 0\) we get
		\begin{align}
		\notag
		\tfrac{1}{\ind}\CRC{\rno}{\!\Wm\!}{\cset}
		&\geq \tfrac{\ind-1}{\ind}\RD{\rno \wedge 1}{\qmn{\rno,\widetilde{\mP}}}{\qmn{\rno,\pma{}{(\ind)}}}
		&
		&\forall \ind \in \integers{+}.
		\end{align}
		Then using Lemma \ref{lem:divergence-pinsker} we get
		\begin{align}
		\notag
		\sqrt{\tfrac{2}{\rno\wedge 1}\tfrac{1}{\ind-1}\CRC{\rno}{\!\Wm\!}{\cset}}
		&\geq \lon{\qmn{\rno,\widetilde{\mP}}-\qmn{\rno,\pma{}{(\ind)}}}
		&
		&\forall \ind \in \integers{+}.
		\end{align}
		Thus \(\qmn{\rno,\pma{}{(\ind)}}\) converges to \(\qmn{\rno,\widetilde{\mP}}\) in the total variation topology and hence
		in the topology of setwise convergence. 
		Since the \renyi divergence is lower semicontinuous in the topology of setwise convergence by
		Lemma \ref{lem:divergence:lsc}, we have
		\begin{align}
		\label{eq:capacityFLB-3}
		\liminf\nolimits_{\ind \to \infty} 
		\CRD{\rno}{\Wm}{\qmn{\rno,\pma{}{(\ind)}}}{\mP}
		&\geq  \CRD{\rno}{\Wm}{\qmn{\rno,\widetilde{\mP}}}{\mP}
		\end{align}
		Then the inequality \(\CRD{\rno}{\Wm}{\qmn{\rno,\widetilde{\mP}}}{\mP}\leq\CRC{\rno}{\!\Wm\!}{\cset}\) 
		follows from \eqref{eq:capacityFLB-2} and \eqref{eq:capacityFLB-3}.
		
		\item {\it If \(\rno\in\reals{+}\), then \(\exists!\qmn{\rno,\!\Wm\!,\cset}\in\pmea{\outA}\) satisfying \eqref{eq:lem:capacityFLB}
			such that \(\qmn{\rno,\mP}=\qmn{\rno,\!\Wm\!,\cset}\) for all \(\mP\in \cset\) 
			satisfying \(\RMI{\rno}{\mP}{\Wm}=\CRC{\rno}{\!\Wm\!}{\cset}\):}
		If \(\RMI{\rno}{\mP}{\Wm}=\CRC{\rno}{\!\Wm\!}{\cset}\) for a 
		\(\mP\in\cset\), then 
		Lemma \ref{lem:information}-(\ref{information:one},\ref{information:zto},\ref{information:oti})
and Lemma \ref{lem:divergence-pinsker}
imply 		
\begin{align}
\label{eq:capacityFLB-4}
\CRD{\rno}{\Wm}{\qmn{\rno,\widetilde{\mP}}}{\mP}
\geq \CRC{\rno}{\!\Wm\!}{\cset}+
\tfrac{\rno\wedge 1}{2}\lon{\qmn{\rno,\mP}-\qmn{\rno,\widetilde{\mP}}}^{2}.
\end{align}
Since we have already established that
\(\CRD{\rno}{\Wm}{\qmn{\rno,\widetilde{\mP}}}{\mP}\leq\CRC{\rno}{\!\Wm\!}{\cset}\)
for all \(\mP\in\cset\), \eqref{eq:capacityFLB-4} implies that 		
\(\qmn{\rno,\mP}=\qmn{\rno,\widetilde{\mP}}\) for any \(\mP\in\cset\) 
satisfying \(\RMI{\rno}{\mP}{\Wm}=\CRC{\rno}{\!\Wm\!}{\cset}\).		
\end{enumerate}
\end{proof}

\begin{proof}[Proof of Theorem \ref{thm:minimax}]
	The right hand side of \eqref{eq:thm:minimax} is an upper bound on the left hand side
	because of the max-min inequality.
	Furthermore, the left hand side of \eqref{eq:thm:minimax} is equal to \(\CRC{\rno}{\!\Wm\!}{\cset}\) by \eqref{eq:capacity}.
	Thus when \(\CRC{\rno}{\!\Wm\!}{\cset}\) is infinite, \eqref{eq:thm:minimax} holds trivially.
	When \(\CRC{\rno}{\!\Wm\!}{\cset}\) is finite, \eqref{eq:thm:minimax} follows from \eqref{eq:thm:minimaxcenter}
	and the max-min inequality.
	Thus we can assume \(\CRC{\rno}{\!\Wm\!}{\cset}\) to be finite and prove  the claims about \(\qmn{\rno,\!\Wm\!,\cset}\)
	in order to prove the theorem.
	\begin{enumerate}[(i)]
		\item {\it If \(\CRC{\rno}{\!\Wm\!}{\cset}\in\reals{\geq0}\) and 
			\(\lim\nolimits_{\ind \to \infty}\!\RMI{\rno}{\pma{}{(\ind)}}{\Wm}\!=\!\CRC{\rno}{\!\Wm\!}{\cset}\)
			for a \(\{\pma{}{(\ind)}\}_{\ind\in\integers{+}} \subset \cset\),
			then \(\{\qmn{\rno,\pma{}{(\ind)}}\}_{\ind\in\integers{+}}\) is a Cauchy sequence in \(\pmea{\outA}\) for the total variation metric:}
		For any sequence of members of \(\cset\) satisfying 
		\(\lim\nolimits_{\ind \to \infty} \RMI{\rno}{\pma{}{(\ind)}}{\Wm}=\CRC{\rno}{\!\Wm\!}{\cset}\), let \(\{\cset^{(\ind)}\}_{\ind\in\integers{+}}\) 
		be  a nested sequence of closed, convex, subsets of \(\cset\) defined as follows, 
		\begin{align}
		\notag
		\cset^{(\ind)} 
		&\DEF \conv{(\cup_{\jnd=1}^{\ind}  \{\pma{}{(\jnd)}\})}. 
		\end{align}
		Furthermore, each \(\cset^{(\ind)}\) can be interpreted as a constraint set for a \(\Wm^{(\ind)}\)
		with a finite input set \(\inpS^{(\ind)}\) defined as follows,
		\begin{align}
		\notag
		\inpS^{(\ind)} 
		&\DEF\{\dinp\in\inpS:\exists \jnd\in\{1,\ldots,\ind\} \mbox{~such that~} \pma{}{(\jnd)}(\dinp)>0\}.
		\end{align}
		With a slight abuse of notation we use the symbol \(\cset^{(\ind)}\) not only for a subset of \(\pdis{\inpS}\) 
		but also for the corresponding subset of \(\pdis{\inpS^{(\ind)}}\).
		For any \(\ind\in\integers{+}\), there exists a unique \(\qmn{\rno,\!\Wm^{(\ind)}\!,\cset^{(\ind)}}\) 
		satisfying inequality \eqref{eq:lem:capacityFLB} by Lemma \ref{lem:capacityFLB}.
		Furthermore, \(\cset^{(\jnd)}\subset\cset^{(\ind)}\) for any \(\ind,\jnd\in\integers{+}\) such that \(\jnd\leq \ind\).
		In order to bound
		\(\lon{\qmn{\rno,\pma{}{(\jnd)}}-\qmn{\rno,\pma{}{(\ind)}}}\) for positive integers \(\jnd<\ind\), we use the triangle
		inequality for \(\qmn{\rno,\pma{}{(\jnd)}}\), \(\qmn{\rno,\pma{}{(\ind)}}\), and \(\qmn{\rno,\!\Wm^{(\ind)}\!,\cset^{(\ind)}}\)
		\begin{align}
		\label{eq:minimax-A}
		\lon{\qmn{\rno,\pma{}{(\jnd)}}-\qmn{\rno,\pma{}{(\ind)}}}
		\leq \lon{\qmn{\rno,\pma{}{(\jnd)}}-\qmn{\rno,\!\Wm^{(\ind)}\!,\cset^{(\ind)}}}
		+\lon{\qmn{\rno,\pma{}{(\ind)}}-\qmn{\rno,\!\Wm^{(\ind)}\!,\cset^{(\ind)}}}.
		\end{align} 
		
		Let us proceed with bounding \(\lon{\qmn{\rno,\pma{}{(\jnd)}}-\qmn{\rno,\!\Wm^{(\ind)}\!,\cset^{(\ind)}}}\)
		and \(\lon{\qmn{\rno,\pma{}{(\ind)}}-\qmn{\rno,\!\Wm^{(\ind)}\!,\cset^{(\ind)}}}\) from above.
		\begin{align}
		\notag
		\lon{\qmn{\rno,\pma{}{(\jnd)}}-\qmn{\rno,\!\Wm^{(\ind)}\!,\cset^{(\ind)}}}
		&\mathop{\leq}^{(a)} 
		\sqrt{\tfrac{2}{\rno \wedge 1}
			\RD{\rno\wedge 1}{\qmn{\rno,\pma{}{(\jnd)}}}{\qmn{\rno,\!\Wm^{(\ind)}\!,\cset^{(\ind)}}}}
		\\
		\notag
		&\mathop{\leq}^{(b)} 
		\sqrt{\tfrac{2}{\rno \wedge 1}}
		\sqrt{\CRD{\rno}{\Wm}{\qmn{\rno,\!\Wm^{(\ind)}\!,\cset^{(\ind)}}}{\pma{}{(\jnd)}}-\RMI{\rno}{\pma{}{(\jnd)}}{\Wm^{(\ind)}}}
		\\
		\notag
		&\mathop{\leq}^{(c)} 
		\sqrt{\tfrac{2}{\rno \wedge 1}}
		\sqrt{\CRC{\rno}{\Wm^{(\ind)}}{\cset^{(\ind)}}-\RMI{\rno}{\pma{}{(\jnd)}}{\Wm^{(\ind)}}}
		\\
		\notag
		&\mathop{\leq}^{(d)} 
		\sqrt{\tfrac{2}{\rno \wedge 1}}
		\sqrt{\CRC{\rno}{\!\Wm\!}{\cset}-\RMI{\rno}{\pma{}{(\jnd)}}{\Wm}}
		&
		&
		\end{align}
		where 
		\((a)\) follows from Lemma \ref{lem:divergence-pinsker}, 
		\((b)\) follows from %		\eqref{eq:lem:information-one:EHB}, \eqref{eq:lem:information-zto:EHB}, \eqref{eq:lem:information-oti:EHB} of 
		Lemma \ref{lem:information}-(\ref{information:one},\ref{information:zto},\ref{information:oti}),
		\((c)\) follows from Lemma \ref{lem:capacityFLB} because \(\pma{}{(\jnd)}\!\in\!\cset^{(\ind)}\),
		and \((d)\)   follows from the identities 
		\(\CRC{\rno}{\Wm^{(\ind)}}{\cset^{(\ind)}}\!=\!\CRC{\rno}{\!\Wm\!}{\cset^{(\ind)}}\!\leq\!\CRC{\rno}{\!\Wm\!}{\cset}\)
		and
		\(\RMI{\rno}{\pma{}{(\jnd)}}{\Wm^{(\ind)}}\!=\!\RMI{\rno}{\pma{}{(\jnd)}}{\Wm}\).
		We can obtain a similar bound on \(\lon{\qmn{\rno,\pma{}{(\ind)}}-\qmn{\rno,\!\Wm^{(\ind)}\!,\cset^{(\ind)}}}\).
		Then \(\{\qmn{\rno,\pma{}{(\ind)}}\}\) is a Cauchy sequence by  
		\eqref{eq:minimax-A} because \(\lim\limits_{\jnd \to \infty} \RMI{\rno}{\pma{}{(\jnd)}}{\Wm}\!=\!\CRC{\rno}{\!\Wm\!}{\cset}\).
		
		\item {\it If \(\CRC{\rno}{\!\Wm\!}{\cset}\in\reals{\geq0}\), 
			then \(\exists!\qmn{\rno,\!\Wm\!,\cset}\!\in\!\pmea{\outA}\) 
			satisfying  \(\lim\nolimits_{\ind\to \infty}\!\lon{\qmn{\rno,\!\Wm\!,\cset}\!-\!\qmn{\rno,\pma{}{(\ind)}}}\!=\!0\)
			for all \(\{\pma{}{(\ind)}\}_{\ind\in\integers{+}}\subset \cset\) such that 
			\(\lim\nolimits_{\ind \to \infty} \RMI{\rno}{\pma{}{(\ind)}}{\Wm}\!=\!\CRC{\rno}{\!\Wm\!}{\cset}\):}
		Note that \(\smea{\outA}\) is a complete metric space for the total variation metric, i.e. every Cauchy sequence has 
		a unique limit point in \(\smea{\outA}\),
		because \(\smea{\outA}\) is a Banach space for the total variation topology \cite[Thm. 4.6.1]{bogachev}.
		Then \(\{\qmn{\rno,\pma{}{(\ind)}}\}_{\ind\in\integers{+}}\) has a unique limit point \(\qmn{\rno,\pma{}{*}}\) in \(\smea{\outA}\). 
		Since \(\pmea{\outA}\) is a closed set for the total variation topology and 
		\(\cup_{\ind\in\integers{+}}\qmn{\rno,\pma{}{(\ind)}}\subset \pmea{\outA}\),
		then \(\qmn{\rno,\pma{}{*}}\in\pmea{\outA}\) by \cite[Thm. 2.1.3]{munkres}.
		
		We have established the existence of a unique limit point \(\qmn{\rno,\pma{}{*}}\),
		for any \(\{\pma{}{(\ind)}\}_{\ind\in\integers{+}}\subset \cset\) satisfying 
		\(\lim\nolimits_{\ind \to \infty}\RMI{\rno}{\pma{}{(\ind)}}{\Wm}=\CRC{\rno}{\!\Wm\!}{\cset}\).
		This, however, implies
		\(\lim_{\ind\to\infty}\lon{\qmn{\rno,\widetilde{\mP}^{(\ind)}}-\qmn{\rno,\pma{}{*}}}=0\)
		for any \(\{\widetilde{\mP}^{(\ind)}\}_{\ind\in\integers{+}}\subset \cset\) 
		satisfying
		\(\lim\nolimits_{\ind \to \infty}\RMI{\rno}{\widetilde{\mP}^{(\ind)}}{\Wm}=\CRC{\rno}{\!\Wm\!}{\cset}\)
		because we can interleave the elements of 
		\(\{\pma{}{(\ind)}\}_{\ind\in\integers{+}}\) and 
		\(\{\widetilde{\mP}^{(\ind)}\}_{\ind\in\integers{+}}\)
		to obtain a new sequence \(\{\widehat{\mP}^{(\ind)}\}_{\ind\in\integers{+}}\subset \cset\)
		satisfying \(\lim\nolimits_{\ind \to \infty} \RMI{\rno}{\widehat{\mP}^{(\ind)}}{\Wm}=\CRC{\rno}{\!\Wm\!}{\cset}\)
		for which \(\{\qmn{\rno,\widehat{\mP}^{(\ind)}}\}\) is a Cauchy sequence. 
		Then \(\qmn{\rno,\Wm,\cset}=\qmn{\rno,\pma{}{*}}\).

		\item {\it \(\qmn{\rno,\!\Wm\!,\cset}\) satisfies the equality given in \eqref{eq:thm:minimaxcenter}:}
		For any \(\mP \in \cset\), let us consider any sequence \(\{\pma{}{(\ind)}\}_{\ind\in\integers{+}}\subset\cset\) 
		satisfying \(\pma{}{(1)}=\mP\) and 
		\(\lim\nolimits_{\ind \to \infty} \RMI{\rno}{\pma{}{(\ind)}}{\Wm}=\CRC{\rno}{\!\Wm\!}{\cset}\). 
		Then \(\mP\in \cset^{(\ind)}\) for all \(\ind\in\integers{+}\). Using Lemma \ref{lem:capacityFLB} we get
		\begin{align}
		\label{eq:minimax-D}
		\CRD{\rno}{\Wm}{\qmn{\rno,\!\Wm^{(\ind)}\!,\cset^{(\ind)}}}{\mP}
		&\leq  \CRC{\rno}{\Wm^{(\ind)}}{\cset^{(\ind)}}
		&
		&\forall \ind\in \integers{+}.
		\end{align}
		Since \(\Wm^{(\ind)}\) has a finite input set, \(\exists\widetilde{\mP}^{(\ind)}\in \cset^{(\ind)}\) 
		satisfying \(\RMI{\rno}{\widetilde{\mP}^{(\ind)}}{\Wm^{(\ind)}}=\CRC{\rno}{\Wm^{(\ind)}}{\cset^{(\ind)}}\)
		and \(\qmn{\rno,\widetilde{\mP}^{(\ind)}}=\qmn{\rno,\!\Wm^{(\ind)}\!,\cset^{(\ind)}}\) by Lemma \ref{lem:capacityFLB}.
		Then \(\RMI{\rno}{\widetilde{\mP}^{(\ind)}}{\Wm^{(\ind)}}\geq \RMI{\rno}{\pma{}{(\ind)}}{\Wm^{(\ind)}}\) 
		and consequently \(\lim\nolimits_{\ind \to \infty} \RMI{\rno}{\widetilde{\mP}^{(\ind)}}{\Wm}=\CRC{\rno}{\!\Wm\!}{\cset}\). 
		We have already established that for such a sequence
		\(\qmn{\rno,\widetilde{\mP}^{(\ind)}}\rightarrow \qmn{\rno,\!\Wm\!,\cset}\) in the total variation topology, 
		and hence in the topology of setwise convergence. Then
		the lower semicontinuity  of the \renyi divergence in its arguments for the topology of setwise convergence, i.e. Lemma \ref{lem:divergence:lsc}, 
		the identity \(\CRC{\rno}{\Wm^{(\ind)}}{\cset^{(\ind)}}=\CRC{\rno}{\!\Wm\!}{\cset^{(\ind)}}\leq\CRC{\rno}{\!\Wm\!}{\cset}\),
		and \eqref{eq:minimax-D} imply 
		\begin{align}
		\notag
		\CRD{\rno}{\Wm}{\qmn{\rno,\!\Wm\!,\cset}}{\mP}
		&\leq \CRC{\rno}{\!\Wm\!}{\cset}
		&
		&\forall \mP\in\cset.
		\end{align}
		On the other hand \(\CRD{\rno}{\Wm}{\qmn{\rno,\!\Wm\!,\cset}}{\mP}\geq \RMI{\rno}{\mP}{\Wm}\) 
		and \(\sup_{\mP\in\cset}\RMI{\rno}{\mP}{\Wm}=\CRC{\rno}{\!\Wm\!}{\cset}\) by definition.
		Thus \eqref{eq:thm:minimaxcenter} holds.
	\end{enumerate} 
\end{proof}

\begin{proof}[Proof of Lemma \ref{lem:center}]
Lemma \ref{lem:information}-(\ref{information:one},\ref{information:zto},\ref{information:oti})
and 
the hypothesis given in \eqref{eq:def:center}
imply
	\begin{align}
	\notag
	\CRC{\rno}{\!\Wm\!}{\cset}-\RMI{\rno}{\mP}{\Wm}
	&\geq 	\RD{\rno\wedge 1}{\qmn{\rno,\mP}}{\qmn{\rno,\!\Wm\!,\cset}}
	&
	&\forall \mP\in\cset.
	\end{align}
	Then as a result of Lemma \ref{lem:divergence-pinsker}, 
	\begin{align}
	\notag
	\sqrt{\tfrac{2(\CRC{\rno}{\!\Wm\!}{\cset}-\RMI{\rno}{\mP}{\Wm})}{\rno\wedge 1}}
	&\geq 	\lon{\qmn{\rno,\mP}-\qmn{\rno,\!\Wm\!,\cset}}
	&
	&\forall \mP\in\cset.
	\end{align}
	Thus \(\{\qmn{\rno,\pma{}{(\ind)}}\}_{\ind\in\integers{+}}\) is a Cauchy sequence 
	with the limit point \(\qmn{\rno,\!\Wm\!,\cset}\) for any sequence of input distributions 
	\(\{\pma{}{(\ind)}\}_{\ind\in\integers{+}}\!\subset\!\cset\) satisfying
	\(\lim_{\ind \to \infty} \RMI{\rno}{\pma{}{(\ind)}}{\Wm}\!=\!\CRC{\rno}{\!\Wm\!}{\cset}\).
\end{proof}

\begin{proof}[Proof of Lemma \ref{lem:EHB}]
As a result of 
Lemma \ref{lem:information}-(\ref{information:one},\ref{information:zto},\ref{information:oti}) 
we have
	\begin{align}
	\notag
	\sup\nolimits_{\tilde{\mP}\in\cset}  \CRD{\rno}{\Wm}{\mQ}{\tilde{\mP}} 
	&\geq \CRD{\rno}{\Wm}{\mQ}{\mP} 
	&
	&\forall \mP \in \cset
	\\
	\label{eq:EHB-B}
	&\geq \RMI{\rno}{\mP}{\Wm}+\RD{\rno\wedge 1}{\qmn{\rno,\mP}}{\mQ}
	&
	&\forall \mP \in \cset.
	\end{align}
	Let \(\{\pma{}{(\ind)}\}_{\ind\in\integers{+}}\subset\cset\) be a sequence such that
	\(\lim\nolimits_{\ind\to\infty}\RMI{\rno}{\pma{}{(\ind)}}{\Wm}=\CRC{\rno}{\!\Wm\!}{\cset}\).
	Then \(\{\qmn{\rno,\pma{}{(\ind)}}\}\to \qmn{\rno,\!\Wm\!,\cset}\) in total variation topology
	and hence in the topology of set wise convergence by Lemma \ref{lem:center}.
	On the other hand, the \renyi divergence is lower semicontinuous in its arguments for 
	the topology of setwise convergence by Lemma \ref{lem:divergence:lsc}. 
	Then
	\begin{align}
	\label{eq:EHB-C}
	\liminf\nolimits_{\ind \to \infty} \left[\RMI{\rno}{\pma{}{(\ind)}}{\Wm}+\RD{\rno \wedge 1}{\qmn{\rno,\pma{}{(\ind)}}}{\mQ}\right]
	&\geq \CRC{\rno}{\!\Wm\!}{\cset}+\RD{\rno \wedge 1}{\qmn{\rno,\!\Wm\!,\cset}}{\mQ}.
	\end{align}
	\eqref{eq:lem:EHB} follows from  \eqref{eq:EHB-B} and \eqref{eq:EHB-C}.
\end{proof}

\begin{proof}[Proof of Lemma \ref{lem:blahutidentity}]
	Note that as a result of \eqref{eq:capacity:altdef} and the max-min inequality we have
	\begin{align}
	\label{eq:blahutidentity-maxmin}
	\CRC{\rno}{\!\Wm\!}{\cset}
	&\leq \inf\nolimits_{\Vm\in\pmea{\outA|\inpS}} \sup\nolimits_{\mP \in \cset} 
	\tfrac{\rno}{1-\rno}\CRD{1}{\Vm}{\Wm}{\mP}+\RMI{1}{\mP}{\Vm}.
	\end{align}
	Hence, \eqref{eq:lem:blahutminimax} holds trivially whenever \(\CRC{\rno}{\!\Wm\!}{\cset}=\infty\)
	and \eqref{eq:lem:blahutidentity} implies \eqref{eq:lem:blahutminimax}
	whenever \(\CRC{\rno}{\!\Wm\!}{\cset}\in\reals{\geq0}\).
	
	In order to establish \eqref{eq:lem:blahutidentity} assuming \(\CRC{\rno}{\!\Wm\!}{\cset}\in\reals{\geq0}\), 
	first note that whenever \(\CRC{\rno}{\!\Wm\!}{\cset}\in\reals{\geq0}\) there exists a unique \(\qmn{\rno,\Wm,\cset}\)
	satisfying \eqref{eq:def:center} by Lemma \ref{lem:center}. 
	Then as a result of 
	Definitions \ref{def:divergence}, \ref{def:tiltedprobabilitymeasure}, \ref{def:conditionaldivergence}, \ref{def:tiltedchannel} 
	and Lemma \ref{lem:variational}  we have
	\begin{align}
	\notag
	\CRD{\rno}{\Wm}{\qmn{\rno,\!\Wm\!,\cset}}{\mP}
	&=\tfrac{\rno}{1-\rno}\CRD{1}{\Wma{\rno}{\qmn{\rno,\!\Wm\!,\cset}}}{\Wm}{\mP}
	+\CRD{1}{\Wma{\rno}{\qmn{\rno,\!\Wm\!,\cset}}}{\qmn{\rno,\!\Wm\!,\cset}}{\mP}.
	\end{align}
	Then using Lemma \ref{lem:information}-(\ref{information:one}) and Lemma \ref{lem:divergence-pinsker}, we get
	\begin{align}
	\notag
	\CRD{\rno}{\Wm}{\qmn{\rno,\!\Wm\!,\cset}}{\mP}
	&=\tfrac{\rno}{1-\rno}\CRD{1}{\Wma{\rno}{\qmn{\rno,\!\Wm\!,\cset}}}{\Wm}{\mP}+\RMI{1}{\mP}{\Wma{\rno}{\qmn{\rno,\!\Wm\!,\cset}}}
	+\RD{1}{\sum\nolimits_{\dinp}\mP(\dinp)\Wma{\rno}{\qmn{\rno,\!\Wm\!,\cset}}(\dinp)}{\qmn{\rno,\!\Wm\!,\cset}}
	\\
	\notag
	&\geq \tfrac{\rno}{1-\rno}\CRD{1}{\Wma{\rno}{\qmn{\rno,\!\Wm\!,\cset}}}{\Wm}{\mP}+\RMI{1}{\mP}{\Wma{\rno}{\qmn{\rno,\!\Wm\!,\cset}}}
	&
	&\forall \mP\in\cset.
	\end{align}
	Thus \eqref{eq:def:center} and Lemma \ref{lem:center} implies that
	\begin{align}
	\label{eq:blahutidentity-A}
	\CRC{\rno}{\!\Wm\!}{\cset}
	&\geq \sup\nolimits_{\mP\in\cset}
	\tfrac{\rno}{1-\rno}\CRD{1}{\Wma{\rno}{\qmn{\rno,\!\Wm\!,\cset}}}{\Wm}{\mP}+\RMI{1}{\mP}{\Wma{\rno}{\qmn{\rno,\!\Wm\!,\cset}}}
	\\
	\label{eq:blahutidentity-B}
	&\geq \inf\nolimits_{\Vm} \sup\nolimits_{\mP\in\cset}\tfrac{\rno}{1-\rno}\CRD{1}{\Vm}{\Wm}{\mP}+\RMI{1}{\mP}{\Vm}.
	\end{align}
Note that \eqref{eq:lem:blahutminimax} follows from \eqref{eq:blahutidentity-maxmin}
and \eqref{eq:blahutidentity-B}.
On the other hand, using the \csiszar\!\!'s form for 
	the Augustin information, given in \eqref{eq:lem:information:alternative:def}, 
	we get
	\begin{align}
	\notag
	\tfrac{\rno}{1-\rno}\CRD{1}{\Wma{\rno}{\qmn{\rno,\!\Wm\!,\cset}}}{\Wm}{\mP}+\RMI{1}{\mP}{\Wma{\rno}{\qmn{\rno,\!\Wm\!,\cset}}}
	&\geq \inf\nolimits_{\Vm} \tfrac{\rno}{1-\rno}\CRD{1}{\Vm}{\Wm}{\mP}
	+\RMI{1}{\mP}{\Vm}
	&
	&
	\\
	\notag
	&=\RMI{\rno}{\mP}{\Wm}
	&
	&\forall \mP\in\pdis{\inpS}.
	\end{align}

	Then \eqref{eq:lem:blahutidentity} follows from 
	the definition of \(\CRC{\rno}{\!\Wm\!}{\cset}\) 
	and \eqref{eq:blahutidentity-A}.
\end{proof}

\begin{proof}[Proof of Lemma \ref{lem:capacityO}]~
	\begin{enumerate}
		\item[(\ref{lem:capacityO}-\ref{capacityO-ilsc})]
		\(\CRC{\rno}{\!\Wm\!}{\cset}\) is nondecreasing and lower semicontinuous because
		it is the pointwise supremum of \(\RMI{\rno}{\mP}{\Wm}\) for \(\mP\in\cset\) and 
		\(\RMI{\rno}{\mP}{\Wm}\) is nondecreasing and continuous in \(\rno\) by 
		Lemma \ref{lem:informationO}-(\ref{informationO:continuity}).

		\item[(\ref{lem:capacityO}-\ref{capacityO-decreasing})] 
		\(\tfrac{1-\rno}{\rno}\RMI{\rno}{\mP}{\Wm}\) is nonincreasing and continuous in \(\rno\) on \(\reals{+}\) 
		for all \(\mP\in\pdis{\inpS}\) by Lemma \ref{lem:informationO}-(\ref{informationO:decreasing}). 
		Furthermore,
		\begin{align}
		\notag
		\tfrac{1-\rno}{\rno}\CRC{\rno}{\!\Wm\!}{\cset}
		&=\sup\nolimits_{\mP\in \cset} \tfrac{1-\rno}{\rno}\RMI{\rno}{\mP}{\Wm}
		&
		&\forall \rno\in (0,1).
		\end{align} 
		Then \(\tfrac{1-\rno}{\rno}\CRC{\rno}{\!\Wm\!}{\cset}\) is nonincreasing and lower semicontinuous 
		in \(\rno\) on \((0,1)\) because the pointwise supremum of a family of nonincreasing 
		(lower semicontinuous) functions is nonincreasing (lower semicontinuous). 
		Thus \(\tfrac{1-\rno}{\rno}\CRC{\rno}{\!\Wm\!}{\cset}\) and \(\CRC{\rno}{\!\Wm\!}{\cset}\) are both continuous from the right on \((0,1)\). 
		On the other hand \(\CRC{\rno}{\!\Wm\!}{\cset}\) and \(\tfrac{1-\rno}{\rno}\CRC{\rno}{\!\Wm\!}{\cset}\) are both continuous 
		from the left on \((0,1)\) because \(\CRC{\rno}{\!\Wm\!}{\cset}\) is nondecreasing 
		and lower semicontinuous on \((0,1)\) by part (\ref{capacityO-ilsc}).
		Consequently, \(\CRC{\rno}{\!\Wm\!}{\cset}\) and 
		\(\tfrac{1-\rno}{\rno}\CRC{\rno}{\!\Wm\!}{\cset}\) are both continuous on \((0,1)\). 
		
		\item[(\ref{lem:capacityO}-\ref{capacityO-convexity})] 
		\((\rno-1)\RMI{\rno}{\mP}{\Wm}\) is convex in \(\rno\) on \(\reals{+}\) by Lemma \ref{lem:informationO}-(\ref{informationO:strictconvexity}). 
		Furthermore,
		\begin{align}
		\notag
		(\rno-1)\CRC{\rno}{\!\Wm\!}{\cset}
		&=\sup\nolimits_{\mP\in \cset} (\rno-1)\RMI{\rno}{\mP}{\Wm}
		&
		&\forall \rno\in(1,\infty).
		\end{align}
		Then \((\rno\!-\!1)\!\CRC{\rno}{\!\Wm\!}{\cset}\) is convex in \(\rno\) because
		the pointwise supremum of a family of convex functions is convex.
		
		\item[(\ref{lem:capacityO}-\ref{capacityO-continuity})]
		\(\CRC{\rno}{\!\Wm\!}{\cset}\) is continuous in \(\rno\) on \((0,1)\) by 
		part (\ref{capacityO-decreasing}).
		Furthermore, \(\CRC{\rno}{\!\Wm\!}{\cset}\) is continuous from the left because 
		it is nondecreasing and lower semicontinuous. 
		Thus \(\CRC{\rno}{\!\Wm\!}{\cset}\) is continuous in \(\rno\) on \((0,1]\).
		If \(\chi_{\Wm,\cset}=1\) we are done. 
		
		If \(\chi_{\Wm,\cset}>1\), then  \((\rno-1)\CRC{\rno}{\!\Wm\!}{\cset}\)  is finite and convex in \(\rno\) on 
		\([1,\chi_{\Wm,\cset})\) by part (\ref{capacityO-convexity}) and the definition of \(\chi_{\Wm,\cset}\). 
		Then \((\rno-1)\CRC{\rno}{\!\Wm\!}{\cset}\) is continuous in \(\rno\) on \((1,\chi_{\Wm,\cset})\) by \cite[Thm. 6.3.3]{dudley}. 
		The continuity of \((\rno-1)\CRC{\rno}{\!\Wm\!}{\cset}\) on \((1,\chi_{\Wm,\cset})\) implies the continuity of 
		\(\CRC{\rno}{\!\Wm\!}{\cset}\) on \((1,\chi_{\Wm,\cset})\). 
		Furthermore, \(\CRC{\rno}{\!\Wm\!}{\cset}\) is continuous from the left because \(\CRC{\rno}{\!\Wm\!}{\cset}\) 
		is nondecreasing and lower semicontinuous. Hence, \(\CRC{\rno}{\!\Wm\!}{\cset}\) is continuous in \(\rno\) 
		on \((1,\chi_{\Wm,\cset}]\), as well.

		\item[(\ref{lem:capacityO}-\ref{capacityO-continuity-extension})]
		As a result of part (\ref{capacityO-continuity}), we only need to prove the 
		continuity of \(\CRC{\rno}{\!\Wm\!}{\cset}\) from the right at \(\rno=1\) 
		when \(\chi_{\Wm,\cset}>1\).
		As a result of \cite[(\ref*{A-eq:UECatone})]{nakiboglu19A} we have
		\begin{align}
		\notag
		\GMI{\rno}{\mP}{\Wm} 
		&\leq \GMI{1}{\mP}{\Wm} + \tfrac{8(\rno-1)}{\epsilon^{2}e^{2}} 
		e^{\frac{\rnt-1}{\rnt}\GMI{\rnt}{\mP}{\Wm}}
		&
		&\forall \epsilon\in (0,\tfrac{\rnt-1}{\rnt}), \rno\in [1,(1-\epsilon)\rnt].
		\end{align}
		On the other hand \(\RMI{1}{\mP}{\Wm}=\GMI{1}{\mP}{\Wm}\) and \(\RMI{\rno}{\mP}{\Wm}\leq \GMI{\rno}{\mP}{\Wm}\)
		for \(\rno>1\). Then,
		\begin{align}
		\notag
		\CRC{\rno}{\!\Wm\!}{\cset}
		&\leq \CRC{1}{\!\Wm\!}{\cset} 
		+ \tfrac{8(\rno-1)}{\epsilon^{2}e^{2}} e^{\frac{\rnt-1}{\rnt}\sup_{\mP\in\cset}\GMI{\rnt}{\mP}{\Wm}}
		&
		&\forall \epsilon\in (0,\tfrac{\rnt-1}{\rnt}), \rno\in [1,(1-\epsilon)\rnt].
		\end{align}
		Thus \(\CRC{\rno}{\!\Wm\!}{\cset}\) is continuous at \(\rno=1\) from the right because 
		\(\CRC{\rno}{\!\Wm\!}{\cset}\geq \CRC{1}{\!\Wm\!}{\cset}\).
	\end{enumerate}
\end{proof}

\begin{proof}[Proof of Lemma \ref{lem:centercontinuity}]
	Note that \(\CRC{\rno}{\!\Wm\!}{\cset}\) is finite for all \(\rno\in(0,\rnt]\) by Lemma \ref{lem:capacityO}.
	Then there exists a unique order \(\rno\) Augustin center, \(\qmn{\rno,\!\Wm\!,\cset}\), 
	for all \(\rno\in(0,\rnt]\) by Theorem \ref{thm:minimax}.
	We apply Lemma \ref{lem:EHB} for \(\mQ=\qmn{\rnf,\Wm,\cset}\) to get
	\begin{align}
	\label{eq:centercontinuity-A}
	\sup\nolimits_{\mP\in\cset}\CRD{\rno}{\Wm}{\qmn{\rnf,\Wm,\cset}}{\mP}
	&\geq \CRC{\rno}{\!\Wm\!}{\cset}+\RD{\rno \wedge 1}{\qmn{\rno,\!\Wm\!,\cset}}{\qmn{\rnf,\Wm,\cset}}.  
	\end{align}
	Note that \(\CRD{\rno}{\Wm}{\qmn{\rnf,\Wm,\cset}}{\mP}\) is nondecreasing in \(\rno\)  for all \(\mP\in \cset\), 
	because \(\RD{\rno}{\Wm(\dinp)}{\qmn{\rnf,\Wm,\cset}}\) is,
	by Lemma \ref{lem:divergence-order}. 
	Then,
	\begin{align}
	\label{eq:centercontinuity-B}
	\CRD{\rnf}{\Wm}{\qmn{\rnf,\Wm,\cset}}{\mP}
	&\geq \CRD{\rno}{\Wm}{\qmn{\rnf,\Wm,\cset}}{\mP}
	&
	&\forall \mP\in \cset,~\rnf\in[\rno,\rnt].
	\end{align}
	On the other hand, by \eqref{eq:thm:minimaxcenter} of Theorem \ref{thm:minimax} we have
	\begin{align}
	\label{eq:centercontinuity-C}
	\sup\nolimits_{\mP\in\cset}\CRD{\rnf}{\Wm}{\qmn{\rnf,\Wm,\cset}}{\mP}
	&=\CRC{\rnf}{\Wm}{\cset}
	&
	&\forall \rnf\in(0,\rnt].
	\end{align}
	\eqref{eq:lem:centercontinuity} follows from 
	\eqref{eq:centercontinuity-A}, \eqref{eq:centercontinuity-B}, and \eqref{eq:centercontinuity-C}.
	
	Using Lemma \ref{lem:divergence-pinsker} together with 
	\eqref{eq:lem:centercontinuity} we get
	\begin{align}
	\label{eq:centercontinuity-D}
	\lon{\qmn{\rnf,\Wm,\cset}-\qmn{\rno,\!\Wm\!,\cset}}
	&\leq \sqrt{\tfrac{2}{\rno \wedge 1}(\CRC{\rnf}{\!\Wm\!}{\cset}-\CRC{\rno}{\!\Wm\!}{\cset})}
	&
	&\forall \rno,\rnf \mbox{~such that~}0<\rno<\rnf\leq\rnt.
	\end{align}  
	Then the continuity \(\qmn{\rno,\!\Wm\!,\cset}\) in \(\rno\) for the total  variation topology 
	on \(\pmea{\outA}\) follows 
	from the continuity \(\CRC{\rno}{\!\Wm\!}{\cset}\) in \(\rno\) on \(\alg{I}\).
\end{proof}

\begin{proof}[Proof of Lemma \ref{lem:capacityunion}]
	We analyze the upper bound on \(\CRC{\rno}{\!\Wm\!}{\cset}\) 
	and the lower bound on \(\CRC{\rno}{\!\Wm\!}{\cset}\) separately. 
	\begin{itemize}
		\item \(\sup\nolimits_{\ind \in \tinS} \CRC{\rno}{\!\Wm\!}{\cset^{(\ind)}}\leq \CRC{\rno}{\!\Wm\!}{\cset}\):
		Note that \(\CRC{\rno}{\!\Wm\!}{\cset^{(\ind)}}\leq \CRC{\rno}{\!\Wm\!}{\cset}\) by definition because \(\cset^{(\ind)}\subset\cset\).
		Thus \(\CRC{\rno}{\!\Wm\!}{\cset}\) is bounded from below by \(\sup\nolimits_{\ind \in \tinS} \CRC{\rno}{\!\Wm\!}{\cset^{(\ind)}}\), as well.
		\begin{itemize}
			\item If \(\CRC{\rno}{\!\Wm\!}{\cset^{(\ind)}}=\CRC{\rno}{\!\Wm\!}{\cset}<\infty\), 
			then \(\qmn{\rno,\!\Wm\!,\cset}=\qmn{\rno,\!\Wm\!,\cset^{(\ind)}}\) because using 
			Theorem \ref{thm:minimax} for \((a)\), Lemma \ref{lem:EHB} for \((b)\)
			and Lemma \ref{lem:divergence-pinsker} for \((c)\) we get
			\begin{align}
			\notag
			\CRC{\rno}{\!\Wm\!}{\cset}
			&\mathop{\geq}^{(a)} \sup\nolimits_{\mP\in \cset^{(\ind)}} \CRD{\rno}{\Wm}{\qmn{\rno,\!\Wm\!,\cset}}{\mP}
			\\
			\notag
			&\mathop{\geq}^{(b)} \CRC{\rno}{\!\Wm\!}{\cset^{(\ind)}}+\RD{\rno \wedge 1}{\qmn{\rno,\!\Wm\!,\cset^{(\ind)}}}{\qmn{\rno,\!\Wm\!,\cset}}
			\\
			\notag
			&\mathop{\geq}^{(c)} \CRC{\rno}{\!\Wm\!}{\cset^{(\ind)}}+\tfrac{\rno\wedge 1}{2} 
			\lon{\qmn{\rno,\!\Wm\!,\cset^{(\ind)}}-\qmn{\rno,\!\Wm\!,\cset}}^2.
			\end{align}
			If \(\CRC{\rno}{\!\Wm\!}{\cset^{(\ind)}}=\CRC{\rno}{\!\Wm\!}{\cset}\) and \(\qmn{\rno,\!\Wm\!,\cset^{(\ind)}}=\qmn{\rno,\!\Wm\!,\cset}\),
			then \(\sup\nolimits_{\mP\in \cset} \CRD{\rno}{\Wm}{\qmn{\rno,\!\Wm\!,\cset^{(\ind)}}}{\mP}\leq \CRC{\rno}{\!\Wm\!}{\cset^{(\ind)}}\)
			by Theorem \ref{thm:minimax}.
			\item If \(\sup_{\mP\in\cset} \CRD{\rno}{\Wm}{\qmn{\rno,\cset^{(\ind)}}}{\mP}\leq \CRC{\rno}{\!\Wm\!}{\cset^{(\ind)}}\), 
			then \(\CRC{\rno}{\!\Wm\!}{\cset}\leq \CRC{\rno}{\!\Wm\!}{\cset^{(\ind)}}\) 
			by \eqref{eq:capacity} and Theorem \ref{thm:minimax}.
			On the other hand,  \(\CRC{\rno}{\!\Wm\!}{\cset}\geq \CRC{\rno}{\!\Wm\!}{\cset^{(\ind)}}\) 
			because \(\cset^{(\ind)}\subset \cset\). 
			Hence, \(\CRC{\rno}{\!\Wm\!}{\cset}=\CRC{\rno}{\!\Wm\!}{\cset^{(\ind)}}<\infty\).
		\end{itemize}
		
		\item \(\CRC{\rno}{\!\Wm\!}{\cset}\leq \ln \sum\nolimits_{\ind\in \tinS} e^{\CRC{\rno}{\!\Wm\!}{\cset^{(\ind)}}}\):
		If \(\tinS\) is an infinite set, then the inequality holds trivially because 
		\(\ln \sum\nolimits_{\ind\in \tinS} e^{\CRC{\rno}{\!\Wm\!}{\cset^{(\ind)}}}=\infty\). 
		Thus we assume \(\tinS\) to be a finite set for the rest of the proof.
		Let \(\mean\) be \(\mean=\bigvee_{\ind\in \tinS} e^{\CRC{\rno}{\!\Wm\!}{\cset^{(\ind)}}}\qmn{\rno,\!\Wm\!,\cset^{(\ind)}}\).
		Then as a result of Lemma \ref{lem:divergence-RM} we have
		\begin{align}
		\notag
		\sup\nolimits_{\mP\in \cset^{(\ind)}}\CRD{\rno}{\Wm}{\tfrac{\mean}{\lon{\mean}}}{\mP}
		&=\sup\nolimits_{\mP\in \cset^{(\ind)}}\CRD{\rno}{\Wm}{\mean}{\mP}+\ln \lon{\mean} 
		\\
		\notag
		&\leq\sup\nolimits_{\mP\in \cset^{(\ind)}}
		\CRD{\rno}{\Wm}{\qmn{\rno,\!\Wm\!,\cset^{(\ind)}}}{\mP}-\CRC{\rno}{\!\Wm\!}{\cset^{(\ind)}}+\ln \lon{\mean}
		&
		&\forall \ind\in\tinS. 
		\end{align}
		Since
		\(\sup\nolimits_{\mP\in\cset^{(\ind)}}\CRD{\rno}{\Wm}{\qmn{\rno,\!\Wm\!,\cset^{(\ind)}}}{\mP}=\CRC{\rno}{\!\Wm\!}{\cset^{(\ind)}}\)
		by hypothesis, we have
		\begin{align}
		\notag
		\sup\nolimits_{\mP\in \cset^{(\ind)}}\CRD{\rno}{\Wm}{\tfrac{\mean}{\lon{\mean}}}{\mP}
		&\leq \ln \lon{\mean}
		&
		&\forall \ind\in\tinS. 
		\end{align}
		Then using \eqref{eq:capacity} and Theorem \ref{thm:minimax} we get
		\begin{align}
		\notag
		\CRC{\rno}{\!\Wm\!}{\cset}
		&\leq \sup\nolimits_{\mP\in \cset}\CRD{\rno}{\Wm}{\tfrac{\mean}{\lon{\mean}}}{\mP}
		\\
		\notag
		&=
		\sup\nolimits_{\ind\in\tinS}\sup\nolimits_{\mP\in \cset^{(\ind)}}\CRD{\rno}{\Wm}{\tfrac{\mean}{\lon{\mean}}}{\mP}
		\\
		\notag
		&\leq \ln \lon{\mean}
		\\
		\notag
		&\leq 
		\ln \sum\nolimits_{\ind\in \tinS} e^{\CRC{\rno}{\!\Wm\!}{\cset^{(\ind)}}}. 
		\end{align}
		\begin{itemize}
			\item If \(\qmn{\rno,\!\Wm\!,\cset^{(\ind)}}\) and \(\qmn{\rno,\!\Wm\!,\cset^{(\jnd)}}\) are not singular for some 
			\(\ind \neq \jnd\), 
			then \(\lon{\mean}<\sum\nolimits_{\ind\in\tinS} e^{\CRC{\rno}{\!\Wm\!}{\cset^{(\ind)}}}\).
			Thus \(\CRC{\rno}{\!\Wm\!}{\cset}<\ln \sum\nolimits_{\ind\in\tinS} e^{\CRC{\rno}{\!\Wm\!}{\cset^{(\ind)}}}\).
			Consequently, if \(\CRC{\rno}{\!\Wm\!}{\cset}=\ln \sum\nolimits_{\ind\in\tinS} e^{\CRC{\rno}{\!\Wm\!}{\cset^{(\ind)}}}\), 
			then \(\qmn{\rno,\!\Wm\!,\cset^{(\ind)}}\perp\qmn{\rno,\!\Wm\!,\cset^{(\jnd)}}\) for all \(\ind \neq \jnd\).
			
			\item If \(\qmn{\rno,\!\Wm\!,\cset^{(\ind)}}\perp\qmn{\rno,\!\Wm\!,\cset^{(\jnd)}}\) 
			for all \(\ind \neq \jnd\), 
			then any \(\mS\in \pmea{\outA}\) can be written as \(\mS=\sum_{\ind=1}^{\abs{\tinS} +1} \mS_{\ind}\) 
			where \(\mS_{\ind}\)'s are finite measures  such that \(\mS_{\ind}\AC \qmn{\rno,\!\Wm\!,\cset^{(\ind)}}\) 
			for \(\ind\in \tinS\) and \(\mS_{\abs{\tinS}+1}\perp  (\sum_{\ind\in \tinS} \qmn{\rno,\!\Wm\!,\cset^{(\ind)}})\)
			by the Lebesgue decomposition theorem \cite[5.5.3]{dudley}.
			Using Lemmas  \ref{lem:divergence-RM}, \ref{lem:divergence-pinsker}, and \ref{lem:EHB} we get
			\begin{align}
			\notag
			\sup\nolimits_{\mP\in \cset^{(\ind)}}\!\CRD{\rno}{\Wm}{\mS}{\mP}
			&\geq \CRC{\rno}{\!\Wm\!}{\cset^{(\ind)}}+\RD{\rno \wedge 1 }{\qmn{\rno,\!\Wm\!,\cset^{(\ind)}}}{\mS}
			\\
			\notag
			&=\CRC{\rno}{\!\Wm\!}{\cset^{(\ind)}}+\RD{\rno \wedge 1 }{\qmn{\rno,\!\Wm\!,\cset^{(\ind)}}}{\tfrac{\mS_{\ind}}{\lon{\mS_{\ind}}}}
			-\ln \lon{\mS_{\ind}}
			\\
			\notag
			&\geq \CRC{\rno}{\!\Wm\!}{\cset^{(\ind)}}-\ln \lon{\mS_{\ind}}.
			\end{align}
			\(\sup\nolimits_{\mP\in \cset}\CRD{\rno}{\Wm}{\mS}{\mP}= \max\nolimits_{\ind\in \tinS}\sup\nolimits_{\mP\in \cset^{(\ind)}}\!\CRD{\rno}{\Wm}{\mS}{\mP}\)
			because \(\CRD{\rno}{\Wm}{\mS}{\mP}\) is linear in \(\mP\) and \(\cset=\conv{(\cup_{\ind\in \tinS} \cset^{(\ind)})}\).
			Then  using \(\sum_{\ind=1}^{\abs{\tinS}} \lon{\mS_{\ind}}\leq \lon{\mS} =1\) we get,
			\begin{align}
			\notag
			\sup\nolimits_{\mP\in \cset}\CRD{\rno}{\Wm}{\mS}{\mP}
			&\geq \max\nolimits_{\ind\in \tinS}  \ln \tfrac{e^{\CRC{\rno}{\!\Wm\!}{\cset^{(\ind)}}}}{\lon{\mS_{\ind}}} 
			\\
			\notag
			&\geq \ln \sum\nolimits_{\ind\in\tinS} e^{\CRC{\rno}{\!\Wm\!}{\cset^{(\ind)}}}
			&
			&\forall \mS\in\pmea{\outA}.
			\end{align}
			Then \(\CRC{\rno}{\!\Wm\!}{\cset}\geq \ln \sum\nolimits_{\ind\in\tinS} e^{\CRC{\rno}{\!\Wm\!}{\cset^{(\ind)}}}\)
			by \eqref{eq:capacity} and Theorem \ref{thm:minimax}.
			Since we have already established the reverse inequality, we have
			\(\CRC{\rno}{\!\Wm\!}{\cset}=\ln \sum\nolimits_{\ind\in\tinS} e^{\CRC{\rno}{\!\Wm\!}{\cset^{(\ind)}}}\).
		\end{itemize}
		We have proved that if 
		\(\qmn{\rno,\!\Wm\!,\cset^{(\ind)}}\perp\qmn{\rno,\!\Wm\!,\cset^{(\jnd)}}\) 
		for all \(\ind \neq \jnd\), 
		then 
		\(\CRC{\rno}{\Wm}{\cset}=\sum_{\ind\in \tinS} e^{\CRC{\rno}{\!\Wm\!}{\cset^{(\ind)}}}\).
		One can confirm by substitution that
		\(\sup\nolimits_{\mP\in \cset^{(\ind)}}\!\CRD{\rno}{\Wm}{\mS}{\mP}\leq \CRC{\rno}{\Wm}{\cset}\)
		for all \(\ind\in\tinS\)
		for \(\mS=\sum_{\ind\in \tinS} e^{-\CRC{\rno}{\Wm}{\cset}+\CRC{\rno}{\!\Wm\!}{\cset^{(\ind)}}}\qmn{\rno,\!\Wm\!,\cset^{(\ind)}}\).
		On the other hand, 
		\(\sup\nolimits_{\mP\in \cset}\CRD{\rno}{\Wm}{\mS}{\mP}= \max\nolimits_{\ind\in \tinS}\sup\nolimits_{\mP\in \cset^{(\ind)}}\!\CRD{\rno}{\Wm}{\mS}{\mP}\)
		because \(\CRD{\rno}{\Wm}{\mS}{\mP}\) is linear in \(\mP\).
		Then \(\mS\) is the unique order \(\rno\) Augustin center by Theorem \ref{thm:minimax}.
	\end{itemize}
\end{proof}

\begin{proof}[Proof of Lemma \ref{lem:capacityproduct}]
	Let \(\rno\) be any fixed positive real order.
	Then as a result of Lemma \ref{lem:information:product} we have
	\begin{align}
	\label{eq:capacityproduct-1}
	\CRC{\rno}{\!\Wmn{[1,\blx]}}{\cset_{1}^{\blx}}
	&= \sum\nolimits_{\tin=1}^{\blx}\CRC{\rno}{\Wmn{\tin}}{\cset_{\tin}}.
	\end{align}
	On the other hand, \(\CRC{\rno}{\!\Wmn{[1,\blx]}}{\cset_{1}^{\blx}}\leq \CRC{\rno}{\!\Wmn{[1,\blx]}}{\cset}\)
	because \(\cset_{1}^{\blx}\subset \cset\). Then
	\begin{align}
	\label{eq:capacityproduct-2}
	\sum\nolimits_{\tin=1}^{\blx}\CRC{\rno}{\Wmn{\tin}}{\cset_{\tin}}
	&\leq \CRC{\rno}{\!\Wmn{[1,\blx]}}{\cset}.
	\end{align}
	We proceed to prove \(\CRC{\rno}{\!\Wmn{[1,\blx]}}{\cset}\leq\sum\nolimits_{\tin=1}^{\blx}\CRC{\rno}{\Wmn{\tin}}{\cset_{\tin}}\). 
	If there exists a \(\tin\in\{1,\ldots,\blx\}\) such that \(\CRC{\rno}{\Wmn{\tin}}{\cset_{\tin}}=\infty\),
	then the inequality holds trivially.
	Else \(\CRC{\rno}{\Wmn{\tin}}{\cset_{\tin}}\) is finite for all \(\tin\in\{1,\ldots,\blx\}\) and by
	Lemma \ref{lem:center} there exists a unique 
	\(\qmn{\rno,\Wmn{\tin},\cset_{\tin}}\) for each \(\tin\in\{1,\ldots,\blx\}\) such that
	\begin{align}
	\notag
	\CRD{\rno}{\Wmn{\tin}}{\qmn{\rno,\Wmn{\tin},\cset_{\tin}}}{\widetilde{\pmn{\tin}}}
	&\leq \CRC{\rno}{\Wmn{\tin}}{\cset_{\tin}}
	&
	&\forall \widetilde{\pmn{\tin}}\in \cset_{\tin}.
	\end{align}
	Since the conditional \renyi divergence \(\CRD{\rno}{\Wmn{\tin}}{\qmn{\rno,\Wmn{\tin},\cset_{\tin}}}{\widetilde{\pmn{\tin}}}\)
	is linear in the input distribution \(\widetilde{\pmn{\tin}}\),
	this implies
	\begin{align}
	\label{eq:capacityproduct-3}
	\CRD{\rno}{\Wmn{\tin}}{\qmn{\rno,\Wmn{\tin},\cset_{\tin}}}{\widetilde{\pmn{\tin}}}
	&\leq \CRC{\rno}{\Wmn{\tin}}{\cset_{\tin}}
	&
	&\forall \widetilde{\pmn{\tin}}\in \conv{\cset_{\tin}}.
	\end{align}
	Let \(\mQ\) be \(\mQ=\bigotimes\nolimits_{\tin=1}^{\blx}\qmn{\rno,\Wmn{\tin},\cset_{\tin}}\).
	Then as a result of  Tonelli-Fubini theorem \cite[4.4.5]{dudley} we have
	\begin{align}
	\notag
	\RD{\rno}{\Wmn{[1,\blx]}(\dinp_{1}^{\blx})}{\mQ}
	&=\sum\nolimits_{\tin=1}^{\blx} \RD{\rno}{\Wmn{\tin}(\dinp_{\tin})}{\qmn{\rno,\Wmn{\tin},\cset_{\tin}}}
	&
	&\forall \dinp_{1}^{\blx}\in\inpS_{1}^{\blx}.
	\end{align}
	Hence,
	\begin{align}
	\notag
	\CRD{\rno}{\!\Wmn{[1,\blx]}}{\mQ}{\mP}
	&=\sum\nolimits_{\tin=1}^{\blx} \CRD{\rno}{\Wmn{\tin}}{\qmn{\rno,\Wmn{\tin},\cset_{\tin}}}{\pmn{\tin}}
	&
	&\forall \mP\in\pdis{\inpS_{1}^{\blx}},
	\end{align}
	where \(\pmn{\tin}\in\pdis{\inpS_{\tin}}\) is the \(\inpS_{\tin}\) marginal of \(\mP\)
	for each \(\tin\in\{1,\ldots,\blx\}\).
	Note that \(\pmn{\tin}\in\conv{\cset_{\tin}}\) for all  \(\tin\in\{1,\ldots,\blx\}\)  
	by the definition constraint set \(\cset\).
	Thus \eqref{eq:capacityproduct-3} implies
	\begin{align}
	\label{eq:capacityproduct-4}
	\CRD{\rno}{\!\Wmn{[1,\blx]}}{\mQ}{\mP}
	&\leq \sum\nolimits_{\tin=1}^{\blx}\CRC{\rno}{\Wmn{\tin}}{\cset_{\tin}}
	&
	&\forall \mP\in \cset.
	\end{align}
	On the other hand \(\CRD{\rno}{\!\Wmn{[1,\blx]}}{\mQ}{\mP}\geq\RMI{\rno}{\mP}{\Wmn{[1,\blx]}~\!}\) by definition.
	Thus \eqref{eq:capacityproduct-2} and \eqref{eq:capacityproduct-4} imply 
	\(\CRC{\rno}{\!\Wmn{[1,\blx]}}{\cset}=\sum\nolimits_{\tin=1}^{\blx}\CRC{\rno}{\Wmn{\tin}}{\cset_{\tin}}\)
	and \(\qmn{\rno,\!\Wmn{[1,\blx]}\!,\cset}=\mQ\).
	Then \(\qmn{\rno,\!\Wmn{[1,\blx]}\!,\cset_{1}^{\blx}}=\mQ\) by Lemma \ref{lem:capacityunion}, as well,
	because \(\cset_{1}^{\blx}\subset \cset\) and  \(\CRC{\rno}{\!\Wmn{[1,\blx]}}{\cset_{1}^{\blx}}=\CRC{\rno}{\!\Wmn{[1,\blx]}}{\cset}\).
\end{proof}

\subsection{Proofs of Lemmas on the Cost Constrained Problem}\label{sec:costproofs}
%\subsubsection{Proof of Lemmas on Cost Constrained Capacity}\label{sec:CCproofs}
\begin{proof}[Proof of Lemma \ref{lem:CCcapacity}]~
	\begin{enumerate}[(a)]
		\item[(\ref{lem:CCcapacity}-\ref{CCcapacity:function})]
		If \(\costc_{1}\leq \costc_{2}\), then \(\CRC{\rno}{\!\Wm\!}{\costc_{1}}\leq\CRC{\rno}{\!\Wm\!}{\costc_{2}}\)
		because \(\cset(\costc_{1})\subset \cset(\costc_{2})\).
		Thus \(\CRC{\rno}{\!\Wm\!}{\costc}\) is nondecreasing in \(\costc\).
		
		Let \(\costc_{\rnb}=\rnb \costc_{1}+(1-\rnb)\costc_{0}\); then
		\((\rnb \pmn{1}+(1-\rnb) \pmn{0})\in\cset(\costc_{\rnb})\) for any \(\pmn{1}\in\cset(\costc_{1})\)
		and \(\pmn{0}\in\cset(\costc_{0})\).
		Hence, using the concavity of the Augustin information in its input distribution established in 
		Lemma \ref{lem:informationP} we get
		\begin{align}
		\notag
		\CRC{\rno}{\!\Wm\!}{\costc_{\rnb}}
		&\geq \sup\nolimits_{\pmn{1}\in\cset(\costc_{1}),\pmn{0}\in\cset(\costc_{0})}\RMI{\rno}{\rnb \pmn{1}+(1-\rnb) \pmn{0}}{\Wm}
		\\
		\notag
		&\geq \sup\nolimits_{\pmn{1}\in\cset(\costc_{1}),\pmn{0}\in\cset(\costc_{0})}
		\rnb\RMI{\rno}{\pmn{1}}{\Wm} + (1-\rnb)\RMI{\rno}{\pmn{0}}{\Wm}
		\\
		\notag
		&=\rnb \CRC{\rno}{\!\Wm\!}{\costc_{1}}+(1-\rnb)\CRC{\rno}{\!\Wm\!}{\costc_{0}}.
		\end{align} 
		Thus \(\CRC{\rno}{\!\Wm\!}{\costc}\) is concave in \(\costc\).
		
		Now let us proceed by proving that if \(\CRC{\rno}{\!\Wm\!}{\costc_{0}}=\infty\) for a \(\costc_{0}\in\inte{\fcc{\costf}}\); 
		then \(\CRC{\rno}{\!\Wm\!}{\costc}=\infty\) for all  \(\costc\in\inte{\fcc{\costf}}\). 
		Note that any point \(\costc\) in \(\inte{\fcc{\costf}}\) can be written as
		\(\costc=\rnb\costc_{1}+(1-\rnb)\costc_{0}\) for some \(\rnb\in(0,1)\) and \(\costc_{1}\in\inte{\fcc{\costf}}\) 
		because \(\inte{\fcc{\costf}}\) is a convex open subset \(\reals{}^{\ell}\).
		Then \(\CRC{\rno}{\!\Wm\!}{\costc}=\infty\) follows from the concavity of \(\CRC{\rno}{\!\Wm\!}{\costc}\).
		
		If \(\CRC{\rno}{\!\Wm\!}{\costc}\) is finite on \(\inte{\fcc{\costf}}\), 
		then \(\CRC{\rno}{\!\Wm\!}{\costc}\) is continuous on 
		\(\inte{\fcc{\costf}}\) by \cite[Thm. 6.3.4]{dudley}
		because \(\inte{\fcc{\costf}}\) is a convex  open subset \(\reals{}^{\ell}\) and \((-\CRC{\rno}{\!\Wm\!}{\costc})\)
		is a convex function of \(\costc\) on \(\inte{\fcc{\costf}}\).
		
		\item[(\ref{lem:CCcapacity}-\ref{CCcapacity:interior})]
		Let us extend the definition of \(\CRC{\rno}{\!\Wm\!}{\costc}\) from \(\reals{\geq0}^{\ell}\)
		to \(\reals{}^{\ell}\) by setting \(\CRC{\rno}{\!\Wm\!}{\costc}\) to \(-\infty\) for all \(\costc\in\reals{}^{\ell}\setminus\reals{\geq0}^{\ell}\).
		Then \((\!-\CRC{\rno}{\!\Wm\!}{\costc}\!)\) is a proper convex function, i.e. 
		\((\!-\CRC{\rno}{\!\Wm\!}{\costc}\!)\!:\!\reals{}^{\ell}\!\to\!(-\infty,\infty]\) is a convex function and \(\exists\costc\) such that
		\((\!-\CRC{\rno}{\!\Wm\!}{\costc}\!)\!<\!\infty\).
		Furthermore, \(\inte{\fcc{\costf}}\) is also the interior of the effective domain of the extended function. 
		Hence the sub-differential \(\partial(-\CRC{\rno}{\!\Wm\!}{\costc})\) is non-empty and compact by 
		\cite[Proposition 4.4.2]{bertsekas}. 
		Then \eqref{eq:CCcapacity:interior} follows from the fact that 
		the epigraph of a convex function lies above the tangent planes drawn at any point.
		The non-negativity of the components of \(\lgm_{\rno,\!\Wm\!,\costc}\) follows from the monotonicity of 
		\(\CRC{\rno}{\!\Wm\!}{\costc}\) in \(\costc\).

		\item[(\ref{lem:CCcapacity}-\ref{CCcapacity:zeroonedichotomy})]
		If \(\CRC{\tilde{\rno}}{\Wm}{\tilde{\costc}}=\infty\) for a \(\tilde{\rno}\in(0,1)\) and
		\(\tilde{\costc}\in \inte{\fcc{\costf}}\), 
		then \(\CRC{\rno}{\!\Wm\!}{\tilde{\costc}}=\infty\) for all 
		\(\rno\in(0,1)\) by Lemma \ref{lem:capacityO}-(\ref{capacityO-ilsc},\ref{capacityO-decreasing}). 
		Therefore, \(\CRC{\rno}{\!\Wm\!}{\costc}=\infty\) for all 
		\(\rno\in(0,1)\) and \(\costc\in\inte{\fcc{\costf}}\) by part (\ref{CCcapacity:function}).
		
		In order to prove the continuity when \(\CRC{\rno}{\!\Wm\!}{\costc}\) is finite, note that 
		as a result of the triangle inequality we have 
		\begin{align}
		%\label{eq:CCcapacity:zeroonedichotomy-0}
		\notag
		\abs{\CRC{\rno_{1}}{\!\Wm\!}{\costc_{1}}-\CRC{\rno_{2}}{\!\Wm\!}{\costc_{2}}} 
		&\leq \abs{\CRC{\rno_{1}}{\!\Wm\!}{\costc_{1}}-\CRC{\rno_{1}}{\!\Wm\!}{\costc_{2}}} 
		+\abs{\CRC{\rno_{1}}{\!\Wm\!}{\costc_{2}}-\CRC{\rno_{2}}{\!\Wm\!}{\costc_{2}}}. 
		\end{align}
		The first term converges to zero as \(\costc_{2}\to\costc_{1}\) as a result of the continuity of 
		the Augustin capacity 
		in the constraint established in part (\ref{CCcapacity:function}).
		The second term converges to zero as \(\rno_{2}\to\rno_{1}\) because of 
		\eqref{eq:CCcapacity:zeroonedichotomy-1} established in the following. 
		Thus \(\CRC{\rno}{\!\Wm\!}{\costc}\) is continuous in the pair \((\rno,\costc)\).
		
		Using the monotonicity of \(\CRC{\rno}{\!\Wm\!}{\costc}\) and \(\tfrac{1-\rno}{\rno}\CRC{\rno}{\!\Wm\!}{\costc}\)
		established in Lemma \ref{lem:capacityO}-(\ref{capacityO-ilsc}) and Lemma \ref{lem:capacityO}-(\ref{capacityO-decreasing})  
		we get
		\begin{align}
		\notag
		\abs{\CRC{\rno_{1}}{\!\Wm\!}{\costc_{2}}-\CRC{\rno_{2}}{\!\Wm\!}{\costc_{2}}} 
		&\leq \tfrac{\abs{\rno_{2}-\rno_{1}}}{(\rno_{1}\wedge \rno_{2}) (1-\rno_{1}\vee \rno_{2})}\CRC{\rno_{1}}{\!\Wm\!}{\costc_{2}}.
		\end{align}
		Thus using \eqref{eq:CCcapacity:interior} to bound \(\CRC{\rno_{1}}{\!\Wm\!}{\costc_{2}}\) we get
		\begin{align}
		\label{eq:CCcapacity:zeroonedichotomy-1}
		\abs{\CRC{\rno_{1}}{\!\Wm\!}{\costc_{2}}-\CRC{\rno_{2}}{\!\Wm\!}{\costc_{2}}} 
		&\leq \tfrac{\abs{\rno_{2}-\rno_{1}}}{(\rno_{1}\wedge\rno_{2}) (1-\rno_{1} \vee \rno_{2})}
		(\CRC{\rno_{1}}{\!\Wm\!}{\costc_{1}}+\abp{\lgm_{\rno_{1},\!\Wm\!,\costc_{1}} \cdot (\costc_{2}-\costc_{1})}).
		\end{align}
		
		In order to prove the continuity of the Augustin center, note that by the triangle inequality we have
		\begin{align}
		\label{eq:CCcapacity:zeroonedichotomy-2}
		\lon{\qmn{\rno_{1},\!\Wm\!,\costc_{1}}-\qmn{\rno_{2},\!\Wm\!,\costc_{2}}} 
		&\leq 
		\lon{\qmn{\rno_{1},\!\Wm\!,\costc_{1}}-\qmn{\rno_{1},\!\Wm\!,\costc_{2}}} 
		+\lon{\qmn{\rno_{1},\!\Wm\!,\costc_{2}}-\qmn{\rno_{2},\!\Wm\!,\costc_{2}}}.
		\end{align}
		Using first Lemmas \ref{lem:divergence-pinsker} and \ref{lem:centercontinuity}, 
		and then \eqref{eq:CCcapacity:zeroonedichotomy-1} we get
		\begin{align}
		\notag
		\lon{\qmn{\rno_{1},\!\Wm\!,\costc_{2}}-\qmn{\rno_{2},\!\Wm\!,\costc_{2}}}
		&\leq\sqrt{\tfrac{2\abs{\CRC{\rno_{1}}{\!\Wm\!}{\costc_{2}}-\CRC{\rno_{2}}{\!\Wm\!}{\costc_{2}}}}{\rno_{1}\wedge \rno_{2}}}
		\\
		\label{eq:CCcapacity:zeroonedichotomy-3}
		&\leq\sqrt{\tfrac{2\abs{\rno_{2}-\rno_{1}}}{(\rno_{1}\wedge \rno_{2})^{2} (1-\rno_{1}\vee \rno_{2})}(\CRC{\rno_{1}}{\!\Wm\!}{\costc_{1}}+\abp{\lgm_{\rno_{1},\!\Wm\!,\costc_{1}} \cdot (\costc_{2}-\costc_{1})})}.
		\end{align}
		In order to bound \(\lon{\qmn{\rno_{1},\!\Wm\!,\costc_{1}}-\qmn{\rno_{1},\!\Wm\!,\costc_{2}}}\), 
		we use triangle inequality once more
		\begin{align}
		\notag
		\lon{\qmn{\rno_{1},\!\Wm\!,\costc_{1}}-\qmn{\rno_{1},\!\Wm\!,\costc_{2}}} 
		&\leq 
		\lon{\qmn{\rno_{1},\!\Wm\!,\costc_{1}}-\qmn{\rno_{1},\Wm,\costc_{\vee}}} 
		+\lon{\qmn{\rno_{1},\Wm,\costc_{\vee}}-\qmn{\rno_{1},\!\Wm\!,\costc_{2}}} 
		\end{align}
		where \(\costc_{\vee}=\costc_{1}\vee\costc_{2}\), i.e. \(\costc_{\vee}^{\ind}=\costc_{1}^{\ind}\vee\costc_{2}^{\ind}\) for each \(\ind\in\{1,\ldots,\ell\}\).

		On the other hand by Lemma \ref{lem:EHB} we have
		\begin{align}
		\notag
		\sup\nolimits_{\mP \in \cset(\costc_{1})} \CRD{\rno_{1}}{\Wm}{\qmn{\rno_{1},\Wm,\costc_{\vee}}}{\mP}
		&\geq  \CRC{\rno_{1}}{\!\Wm\!}{\costc_{1}}
		+\RD{1\wedge\rno_{1}}{\qmn{\rno_{1},\!\Wm\!,\costc_{1}}}{\qmn{\rno_{1},\Wm,\costc_{\vee}}}.
		\end{align}
		Since \(\cset(\costc_{1}) \subset \cset(\costc_{\vee})\), using Theorem \ref{thm:minimax} and
		Lemma \ref{lem:divergence-pinsker}  we get,
		\begin{align}
		\notag
		\CRC{\rno_{1}}{\Wm}{\costc_{\vee}}-\CRC{\rno_{1}}{\!\Wm\!}{\costc_{1}}
		&\geq \tfrac{\rno_{1}}{2} \lon{\qmn{\rno_{1},\!\Wm\!,\costc_{1}}-\qmn{\rno_{1},\Wm,\costc_{\vee}}}^{2}.
		\end{align}
		Repeating the same analysis for \(\lon{\qmn{\rno_{1},\!\Wm\!,\costc_{2}}-\qmn{\rno_{1},\Wm,\costc_{\vee}}}\)
		and bounding \(\CRC{\rno_{1}}{\Wm}{\costc_{\vee}}\) using \eqref{eq:CCcapacity:interior} we get
		\begin{align}
		\label{eq:CCcapacity:zeroonedichotomy-4}
		\lon{\qmn{\rno_{1},\!\Wm\!,\costc_{1}}-\qmn{\rno_{1},\!\Wm\!,\costc_{2}}} 
		\leq \sqrt{\tfrac{2}{\rno_{1}}} \left(
		\sqrt{\CRC{\rno_{1}}{\!\Wm\!}{\costc_{1}}-\CRC{\rno_{1}}{\!\Wm\!}{\costc_{2}}+\lgm_{\rno_{1},\!\Wm\!,\costc_{1}} \cdot (\costc_{\vee}-\costc_{1})}
		+\sqrt{\lgm_{\rno_{1},\!\Wm\!,\costc_{1}} \cdot (\costc_{\vee}-\costc_{1})}
		\right) 
		\end{align}
		The continuity of \(\qmn{\rno,\!\Wm\!,\costc}\) for the total variation topology on \(\pmea{\outA}\)
		follows from \eqref{eq:CCcapacity:zeroonedichotomy-2}, \eqref{eq:CCcapacity:zeroonedichotomy-3},
		\eqref{eq:CCcapacity:zeroonedichotomy-4} and the continuity of Augustin capacity as a function of 
		the constraint established in part (\ref{CCcapacity:function}).
	\end{enumerate}
\end{proof}

\begin{proof}[Proof of Lemma \ref{lem:CCcapacityproduct}]
	Let \(\set{B}(\costc)\) be 
	\begin{align}
	\notag
	\set{B}(\costc)
	&\DEF\left\{(\costc_{1},\ldots,\costc_{\blx}):\sum\nolimits_{\tin=1}^{\blx} \costc_{\tin}\leq \costc,\costc_{\tin}\in\fcc{\costf_{\tin}}\right\}.  
	\end{align}
	Note that if \(\set{B}(\costc)=\emptyset\),  
	then \(\costc\notin\fcc{\costf_{[1,\blx]}}\) and
	\(\CRC{\rno}{\Wmn{[1,\blx]}}{\costc}=-\infty\).	
	On the other hand,
	\(\sum\nolimits_{\tin=1}^{\blx} \CRC{\rno}{\!\Wmn{\tin}}{\costc_{\tin}}\) is minus infinity 
	for \((\costc_{1},\ldots,\costc_{\blx})\)'s that are outside  \(\set{B}(\costc)\)
	by the convention stated in the lemma.
	Thus \eqref{eq:lem:CCcapacityproduct} holds for \(\costc\in\reals{\geq0}^{\ell}\setminus\fcc{\costf_{[1,\blx]}}\) 
	case and the constraints \(\costc_{\tin}\in \reals{\geq0}^{\ell}\) can be replaced by 
	\(\costc_{\tin}\in\fcc{\costf_{\tin}}\) for \(\costc\in\fcc{\costf_{[1,\blx]}}\) case in 
	\eqref{eq:lem:CCcapacityproduct}.
	
	Furthermore, as a result of Lemma \ref{lem:capacityproduct} 
	for any  \((\costc_{1},\ldots,\costc_{\blx})\in \set{B}(\costc)\)
	we have
	\begin{align}
	\notag
	\CRC{\rno}{\Wmn{[1,\blx]}}{\bigtimes\nolimits_{\tin=1}^{\blx} \cset_{\tin}(\costc_{\tin})}
	&=\sum\nolimits_{\tin=1}^{\blx} \CRC{\rno}{\!\Wmn{\tin}}{\costc_{\tin}}.
	\end{align}
	On the other hand \((\bigtimes\nolimits_{\tin=1}^{\blx} \cset_{\tin}(\costc_{\tin}))\subset \cset(\costc)\)
	for any  \((\costc_{1},\ldots,\costc_{\blx})\in \set{B}(\costc)\).
	Thus as a result of Lemma \ref{lem:capacityunion} we have
	\begin{align}
	\notag
	\CRC{\rno}{\Wmn{[1,\blx]}}{\costc}
	&\geq\sup \left\{\sum\nolimits_{\tin=1}^{\blx} \CRC{\rno}{\!\Wmn{\tin}}{\costc_{\tin}}:
	\sum\nolimits_{\tin=1}^{\blx} \costc_{\tin}\leq \costc,~
	\costc_{\tin}\in\fcc{\costf_{\tin}} \right\}.
	\end{align}
	For deriving the reverse inequality,
	first recall that Lemma \ref{lem:information:product} implies
	\(\RMI{\rno}{\mP}{\Wmn{[1,\blx]}}\leq \sum\nolimits_{\tin=1}^{\blx}\RMI{\rno}{\pmn{\tin}}{\Wmn{\tin}}\) 
	for all \(\mP\in\pdis{\inpS_{1}^{\blx}}\) 
	where \(\pmn{\tin}\in\pdis{\inpS_{\tin}}\) is the \(\inpS_{\tin}\)
	marginal of \(\mP\).
	On the other hand, 
	\(\EXS{\mP}{\costf_{[1,\blx]}}=\sum\nolimits_{\tin=1}^{\blx}\EXS{\pmn{\tin}}{\costf_{\tin}}\)
	and \(\EXS{\pmn{\tin}}{\costf_{\tin}}\in \fcc{\costf_{\tin}}\).
	Hence,
	\begin{align}
	\notag
	\sup\nolimits_{\mP: \EXS{\mP}{\costf_{[1,\blx]}}\leq \costc}\RMI{\rno}{\mP}{\Wmn{[1,\blx]}}
	&\leq \sup\nolimits_{\pmn{1},\ldots,\pmn{\blx}:\sum\nolimits_{\tin=1}^{\blx}\EXS{\pmn{\tin}}{\costf_{\tin}}\leq \costc} \sum\nolimits_{\tin=1}^{\blx}\RMI{\rno}{\pmn{\tin}}{\Wmn{\tin}}
	\\
	\notag
	&=\sup \left\{\sum\nolimits_{\tin=1}^{\blx} \CRC{\rno}{\!\Wmn{\tin}}{\costc_{\tin}}:
	\sum\nolimits_{\tin=1}^{\blx} \costc_{\tin}\leq \costc,~
	\costc_{\tin}\in\fcc{\costf_{\tin}} \right\}.
	\end{align}
	Thus \eqref{eq:lem:CCcapacityproduct} holds.
	In addition, \(\cset(\costc)\) can be interpreted as the union of 
	\((\bigtimes\nolimits_{\tin=1}^{\blx} \cset_{\tin}(\costc_{\tin}))\) and
	\(\cset(\costc)\setminus (\bigtimes\nolimits_{\tin=1}^{\blx} \cset_{\tin}(\costc_{\tin}))\).
	Therefore, if there exists a \((\costc_{1},\ldots,\costc_{\blx})\) such that
	\(\CRC{\rno}{\!\Wmn{[1,\blx]}}{\costc}=\sum\nolimits_{\tin=1}^{\blx} \CRC{\rno}{\!\Wmn{\tin}}{\costc_{\tin}}\)
	and \(\CRC{\rno}{\!\Wmn{[1,\blx]}}{\costc}<\infty\),
	then \(\qmn{\rno,\!\Wmn{[1,\blx]},\costc}=\bigotimes\nolimits_{\tin=1}^{\blx} \qmn{\rno,\!\Wmn{\tin}\!,\costc_{\tin}}\)
	because 
	\(\CRC{\rno}{\!\Wmn{[1,\blx]}}{\costc}=\CRC{\rno}{\Wmn{[1,\blx]}}{\bigtimes\nolimits_{\tin=1}^{\blx} \cset_{\tin}(\costc_{\tin})}\)
	and \(\CRC{\rno}{\!\Wmn{[1,\blx]}}{\costc}<\infty\) imply
	\(\qmn{\rno,\!\Wmn{[1,\blx]},\costc}=\qmn{\rno,\Wmn{[1,\blx]},\bigtimes\nolimits_{\tin=1}^{\blx} \cset_{\tin}(\costc_{\tin})}\)
	by Lemma \ref{lem:capacityunion}
	and
	\(\qmn{\rno,\Wmn{[1,\blx]},\bigtimes\nolimits_{\tin=1}^{\blx} \cset_{\tin}(\costc_{\tin})}=\bigotimes\nolimits_{\tin=1}^{\blx} \qmn{\rno,\!\Wmn{\tin}\!,\costc_{\tin}}\)
	by  Lemma \ref{lem:capacityproduct}. 
\end{proof}

%\subsubsection{Proof of Lemmas on A-L Capacity}\label{sec:ALproofs}
\begin{proof}[Proof of Lemma \ref{lem:Lcapacity}]~
	\begin{enumerate}[(a)]
		\item[(\ref{Lcapacity:function})]
		\(\RCL{\rno}{\Wm}{\lgm}\) is convex, nonincreasing, and lower semicontinuous in \(\lgm\) because 
		\(\RCL{\rno}{\Wm}{\lgm}\) is the pointwise supremum of such functions
		as a result of  \eqref{eq:Lcapacity-astheconjugate}.
		
		Since \(\RCL{\rno}{\Wm}{\lgm}\) is convex it is continuous on the interior of \(\{\lgm\in\reals{\geq0}^{\ell}:\RCL{\rno}{\Wm}{\lgm}<\infty\}\)
		by \cite[Thm. 6.3.4]{dudley}. The interior of \(\{\lgm\in\reals{\geq0}^{\ell}:\RCL{\rno}{\Wm}{\lgm}<\infty\}\) is
		\(\{\lgm\!\in\reals{\geq0}^{\ell}\!:\exists\epsilon>0~s.t.~ \RCL{\rno}{\Wm}{\lgm-\epsilon\uc}<\infty\}\)
		because \(\RCL{\rno}{\Wm}{\lgm}\)  is nonincreasing in \(\lgm\).
		
		\item[(\ref{Lcapacity:minimax})]
		Note that \(\CRC{\rno}{\!\Wm\!}{\costc}=\sup\nolimits_{\mP\in\pdis{\inpS}}\xi_{\rno,\mP}(\costc)\)
		as a result of \eqref{eq:def:costcapacity} and \eqref{eq:def:information-constrained}.
		Then as a result of  \eqref{eq:information-constrained},
		we have
		\begin{align}
		\label{eq:Lcostcapacity}
		\CRC{\rno}{\!\Wm\!}{\costc}
		&=\sup\nolimits_{\mP\in\pdis{\inpS}}\inf\nolimits_{\lgm\geq 0}
		\RMIL{\rno}{\mP}{\Wm}{\lgm}+\lgm\cdot\costc
		&
		&\forall \costc\in\reals{\geq0}^{\ell}.
		\end{align}
		If \(\inpS\) is finite, then \(\pdis{\inpS}\) is compact.
		Furthermore, using \eqref{eq:capacityFLB-1} together with triangle inequality we get
		\begin{align}
		\label{eq:Lcapacity:informationContinuity}
		\abs{\RMIL{\rno}{\mP_{2}}{\Wm}{\lgm}-\RMIL{\rno}{\mP_{1}}{\Wm}{\lgm}}
		&\leq \bent{\tfrac{\lon{\pmn{1}-\pmn{2}}}{2}} 
		+\tfrac{\lon{\pmn{1}-\pmn{2}}}{2} \ln \abs{\inpS} 
		+\tfrac{\lon{\pmn{1}-\pmn{2}}}{2} \max\nolimits_{\dinp\in \inpS} \lgm\cdot \costf(\dinp).
		\end{align}
		Then \(\RMIL{\rno}{\mP}{\Wm}{\lgm}+\lgm\cdot\costc\) is continuous in \(\mP\) on \(\pdis{\inpS}\).
		On the other hand, \(\RMIL{\rno}{\mP}{\Wm}{\lgm}+\lgm\cdot\costc\) is concave in \(\mP\) 
		by Lemma \ref{lem:informationP} and convex and continuous in \(\lgm\).
		Thus we can change the order of the infimum and supremum in \eqref{eq:Lcostcapacity}
		---using the Sion's minimax theorem, \cite{komiya88}, \cite{sion58}---
		and \(\CRC{\rno}{\!\Wm\!}{\costc}=\inf\nolimits_{\lgm\geq0} \RCL{\rno}{\Wm}{\lgm}+\lgm\cdot\costc\)
		by \eqref{eq:def:Lcapacity}.
		
		\item[(\ref{Lcapacity:interior})]
		If \(\costc\in\inte{\fcc{\costf}}\) and \(\CRC{\rno}{\!\Wm\!}{\costc}\) is infinite, 
		then
		\(\CRC{\rno}{\!\Wm\!}{\costc}=\inf\nolimits_{\lgm\geq0} \RCL{\rno}{\Wm}{\lgm}+\lgm\cdot\costc\)
		follows from \eqref{eq:CCcapacity:dualitybound} trivially.
		
		If \(\costc\in\inte{\fcc{\costf}}\) and \(\CRC{\rno}{\!\Wm\!}{\costc}\) is finite, 
		then there  exists a non-empty, convex, and compact  set of 
		\(\lgm_{\rno,\!\Wm\!,\costc}\)'s satisfying \eqref{eq:CCcapacity:interior}
		by 
		Lemma \ref{lem:CCcapacity}-(\ref{CCcapacity:interior}). 
		Furthermore, \eqref{eq:Lcapacity-astheconjugate} implies
		for any \(\lgm_{\rno,\!\Wm\!,\costc}\) satisfying  \eqref{eq:CCcapacity:interior}
		the following identity
		\begin{align}
		\notag
		\RCL{\rno}{\Wm}{\lgm_{\rno,\!\Wm\!,\costc}}
		&=\sup\nolimits_{\tilde{\costc}\geq0}
		\CRC{\rno}{\Wm}{\tilde{\costc}}-\lgm_{\rno,\!\Wm\!,\costc}\cdot \tilde{\costc}
		\\
		\notag
		&= \CRC{\rno}{\Wm}{\costc}-\lgm_{\rno,\!\Wm\!,\costc}\cdot \costc.
		\end{align}
		Then \(\CRC{\rno}{\Wm}{\costc}=\inf_{\lgm\geq 0} \RCL{\rno}{\Wm}{\lgm}+\lgm\cdot \costc\)
		by \eqref{eq:CCcapacity:dualitybound}.
		
		\item[(\ref{Lcapacity:optimal})] 
		Note that \(\RMIL{\rno}{\mP}{\Wm}{\lgm}\leq \RCL{\rno}{\Wm}{\lgm}\) by definition. 
		Hence \(\limsup_{\ind \to \infty} \RMIL{\rno}{\pma{}{(\ind)}}{\Wm}{\lgm}
		\leq\RCL{\rno}{\Wm}{\lgm}\). 
		Furthermore,
		\begin{align}
		\notag
		\RMIL{\rno}{\mP}{\Wm}{\lgm}
		&\geq \RMI{\rno}{\mP}{\Wm}-\lgm\cdot\costc
		&
		&\forall \mP\in\cset(\costc).
		\end{align}
		Thus \(\liminf_{\ind \to \infty} \RMIL{\rno}{\pma{}{(\ind)}}{\Wm}{\lgm}
		\geq \CRC{\rno}{\!\Wm\!}{\costc}-\lgm\cdot\costc
		=\RCL{\rno}{\Wm}{\lgm}\), as well.
		Thus \(\lim_{\ind \to \infty} \RMIL{\rno}{\pma{}{(\ind)}}{\Wm}{\lgm}=\RCL{\rno}{\Wm}{\lgm}\).
	\end{enumerate}
		\end{proof}

\begin{proof}[Proof of Lemma \ref{lem:LcapacityFLB}]~
	\begin{enumerate}[(i)]
		\item {\it \(\forall\rno\in\reals{+}\)\(\exists \widetilde{\mP}\in\pdis{\inpS}\) such that \(\RMIL{\rno}{\widetilde{\mP}}{\Wm}{\lgm}=\RCL{\rno}{\Wm}{\lgm}\):}
		Note that
		\(\RMIL{\rno}{\mP}{\Wm}{\lgm}\) is continuous in \(\mP\) on \(\pdis{\inpS}\)
		by 	\eqref{eq:Lcapacity:informationContinuity}.
		On the other hand, \(\pdis{\inpS}\) is compact because \(\inpS\) is a finite set.
		Then there exists a \(\widetilde{\mP}\in \pdis{\inpS}\) such that 
		\(\RMIL{\rno}{\widetilde{\mP}}{\Wm}{\lgm}=\sup_{\mP \in \pdis{\inpS}} \RMIL{\rno}{\mP}{\Wm}{\lgm}\)
		by the extreme value theorem, \cite[27.4]{munkres}.
		
		\item {\it If \(\rno\in\reals{+}\) and \(\RMIL{\rno}{\widetilde{\mP}}{\Wm}{\lgm}=\RCL{\rno}{\Wm}{\lgm}\), 
			then
			\(\CRD{\rno}{\Wm}{\qmn{\rno,\widetilde{\mP}}}{\mP}-\lgm\cdot\EXS{\mP}{\costf}\leq \RCL{\rno}{\Wm}{\lgm}\)
			for all \(\mP\in \pdis{\inpS}\):}
		Let  
		\(\mP\) be any member of \(\pdis{\inpS}\) and \(\pma{}{(\ind)}\) be 
		\(\tfrac{\ind-1}{\ind}\widetilde{\mP} +\tfrac{1}{\ind}{\mP}\) for  \(\ind \in \integers{+}\). 
		Then by Lemma \ref{lem:information}
		\begin{align}
		\notag
		\RMIL{\rno}{\pma{}{(\ind)}}{\Wm}{\lgm}
		&\!=\!
		\tfrac{\ind-1}{\ind} \left[\CRD{\rno}{\Wm}{\qmn{\rno,\pma{}{(\ind)}}}{\widetilde{\mP}}-\lgm\cdot\EXS{\widetilde{\mP}}{\costf}\right]
		\!+\!\tfrac{1}{\ind} 
		\left[\CRD{\rno}{\Wm}{\qmn{\rno,\pma{}{(\ind)}}}{\mP}-\lgm\cdot\EXS{\mP}{\costf}\right]
		\\
		\notag
		&\!\geq \!
		\tfrac{\ind-1}{\ind} 
		\left[\RMIL{\rno}{\widetilde{\mP}}{\Wm}{\lgm}+\RD{\rno \wedge 1}{\qmn{\rno,\widetilde{\mP}}}{\qmn{\rno,\pma{}{(\ind)}}}\right]
		\!+\!\tfrac{1}{\ind}
		\left[\CRD{\rno}{\Wm}{\qmn{\rno,\pma{}{(\ind)}}}{\mP}-\lgm\cdot\EXS{\mP}{\costf}\right]
		&
		&\forall \ind \in \integers{+}.
		\end{align}
		Then using \(\RMIL{\rno}{\pma{}{(\ind)}}{\Wm}{\lgm} \leq \RCL{\rno}{\Wm}{\lgm}\),
		\(\RMIL{\rno}{\widetilde{\mP}}{\Wm}{\lgm}=\RCL{\rno}{\Wm}{\lgm}\), and 
		\(\RD{\rno \wedge 1}{\qmn{\rno,\widetilde{\mP}}}{\qmn{\rno,\pma{}{(\ind)}}}\geq0\)
		we get
		\begin{align}
		\label{eq:LcapacityFLB-b-A}
		\RCL{\rno}{\Wm}{\lgm}
		&\geq \CRD{\rno}{\Wm}{\qmn{\rno,\pma{}{(\ind)}}}{\mP}-\lgm\cdot\EXS{\mP}{\costf}.
		\end{align}
		On the other hand, using \(\RMIL{\rno}{\pma{}{(\ind)}}{\Wm}{\lgm}\leq \RCL{\rno}{\Wm}{\lgm}\),
		\(\RMIL{\rno}{\widetilde{\mP}}{\Wm}{\lgm}=\RCL{\rno}{\Wm}{\lgm}\) and 
		\(\CRD{\rno}{\Wm}{\qmn{\rno,\pma{}{(\ind)}}}{\mP}\geq 0\) we get
		\begin{align}
		\notag
		\tfrac{\RCL{\rno}{\Wm}{\lgm}+\lgm\cdot \EXS{\mP}{\costf}}{\ind}
		&\geq \tfrac{\ind-1}{\ind}\RD{\rno \wedge 1}{\qmn{\rno,\widetilde{\mP}}}{\qmn{\rno,\pma{}{(\ind)}}}
		&
		&\forall \ind \in \integers{+}.
		\end{align}
		Then using, Lemma \ref{lem:divergence-pinsker} we get
		\begin{align}
		\notag
		\sqrt{\tfrac{2}{\rno \wedge 1}\tfrac{\RCL{\rno}{\Wm}{\lgm}+\lgm\cdot \EXS{\mP}{\costf}}{\ind-1}}
		&\geq \lon{\qmn{\rno,\widetilde{\mP}}-\qmn{\rno,\pma{}{(\ind)}}}
		&
		&\forall \ind \in \integers{+}.
		\end{align}
		Thus \(\qmn{\rno,\pma{}{(\ind)}}\) converges to \(\qmn{\rno,\widetilde{\mP}}\) in the total variation topology and hence
		in the topology of setwise convergence. 
		Since the \renyi divergence is lower semicontinuous in the topology of setwise convergence by
		Lemma \ref{lem:divergence:lsc}, we have
		\begin{align}
		\label{eq:LcapacityFLB-b-B}
		\liminf\nolimits_{\ind \to \infty} \CRD{\rno}{\Wm}{\qmn{\rno,\pma{}{(\ind)}}}{\mP}
		&\geq  \CRD{\rno}{\Wm}{\qmn{\rno,\widetilde{\mP}}}{\mP}.
		\end{align}
		Then the inequality \(\CRD{\rno}{\Wm}{\qmn{\rno,\widetilde{\mP}}}{\mP}-\lgm\cdot\EXS{\mP}{\costf}\leq \RCL{\rno}{\Wm}{\lgm}\)
		follows from \eqref{eq:LcapacityFLB-b-A}  and \eqref{eq:LcapacityFLB-b-B}.
		
		\item {\it If \(\rno\in\reals{+}\), 
			then \(\exists!\qma{\rno,\Wm}{\lgm}\!\in\!\pmea{\outA}\) satisfying \eqref{eq:lem:LcapacityFLB}
			such that \(\qmn{\rno,\mP}\!=\!\qma{\rno,\Wm}{\lgm}\) for all \(\mP\in\pdis{\inpS}\) satisfying 
			\(\RMIL{\rno}{\mP}{\Wm}{\lgm}\!=\!\RCL{\rno}{\Wm}{\lgm}\):}
If \(\RMIL{\rno}{\mP}{\Wm}{\lgm}\!=\!\RCL{\rno}{\Wm}{\lgm}\)
for a \(\mP\in\pdis{\inpS}\), 
then Lemma \ref{lem:information}-(\ref{information:one},\ref{information:zto},\ref{information:oti})
and Lemma \ref{lem:divergence-pinsker} imply
\begin{align}
\label{eq:LcapacityFLB-c}
\CRD{\rno}{\Wm}{\qmn{\rno,\widetilde{\mP}}}{\mP}-\lgm\cdot \EXS{\mP}{\costc}\geq \RCL{\rno}{\!\Wm\!}{\lgm}+
\tfrac{\rno\wedge 1}{2}\lon{\qmn{\rno,\mP}-\qmn{\rno,\widetilde{\mP}}}^{2}.
\end{align}
Since we have already established that
\(\CRD{\rno}{\Wm}{\qmn{\rno,\widetilde{\mP}}}{\mP}-\lgm\cdot \EXS{\mP}{\costc}\leq \RCL{\rno}{\!\Wm\!}{\lgm}\)
for any \(\mP\in\pdis{\inpS}\),
\eqref{eq:LcapacityFLB-c} implies that \(\qmn{\rno,\mP}=\qmn{\rno,\widetilde{\mP}}\)
for any \(\mP\in\pdis{\inpS}\) 
satisfying \(\RMIL{\rno}{\mP}{\Wm}{\lgm}=\RCL{\rno}{\!\Wm\!}{\lgm}\).
\end{enumerate}
\end{proof}

\begin{proof}[Proof of Theorem \ref{thm:Lminimax}]
	First note that \eqref{eq:thm:Lminimax} implies \eqref{eq:thm:Lminimaxradius} and 
	\eqref{eq:thm:Lminimaxcenter} implies \eqref{eq:thm:Lminimaxradiuscenter}.
	Furthermore, the left hand side of \eqref{eq:thm:Lminimax} is equal to \(\RCL{\rno}{\Wm}{\lgm}\) by
	\eqref{eq:Lcapacity}.
	Thus when \(\RCL{\rno}{\Wm}{\lgm}\) is infinite, \eqref{eq:thm:Lminimax} holds trivially by the max-min inequality.
	When \(\RCL{\rno}{\Wm}{\lgm}\) is finite, \eqref{eq:thm:Lminimax} follows from \eqref{eq:thm:Lminimaxcenter}
	and the max-min inequality.
	Thus we can assume \(\RCL{\rno}{\Wm}{\lgm}\) to be finite and prove  the claims about \(\qma{\rno,\Wm}{\lgm}\), 
	in order to prove the theorem.
	
	\begin{enumerate}[(i)]		
		\item {\it If \(\RCL{\rno}{\Wm}{\lgm}\!<\!\infty\) and \(\lim\nolimits_{\ind \to \infty}\!\RMIL{\rno}{\pma{}{(\ind)}}{\Wm}{\lgm}\!=\!\RCL{\rno}{\Wm}{\lgm}\),
			then \(\{\qmn{\rno,\pma{}{(\ind)}}\}_{\ind\in\integers{+}}\) is a Cauchy 
			sequence in \(\pmea{\outA}\) for the total variation metric:}
		For any sequence \(\{\pma{}{(\ind)}\}_{\ind\in \integers{+}} \subset \pdis{\inpS}\) satisfying 
		\(\lim\nolimits_{\ind \to \infty} \RMIL{\rno}{\pma{}{(\ind)}}{\Wm}{\lgm}=\RCL{\rno}{\Wm}{\lgm}\), let 
		us consider a sequence of channels \(\{\Wm^{(\ind)}\}_{\ind\in\integers{+}}\) 
		whose input sets \(\{\inpS^{(\ind)}\}_{\ind\in\integers{+}}\) 
		form a nested sequence of finite subsets of \(\inpS\) defined as follows, 
		\begin{align}
		\notag
		\inpS^{(\ind)} 
		&\DEF\{\dinp\in\inpS:\exists \jnd\in\{1,\ldots,\ind\} \mbox{~such that~} \pma{}{(\jnd)}(\dinp)>0\}.
		\end{align}
		Then for any \(\ind\in\integers{+}\), there exists a unique \(\qma{\rno,\Wm^{(\ind)}}{\lgm}\) satisfying 
		\eqref{eq:lem:LcapacityFLB} by Lemma \ref{lem:LcapacityFLB}.
		Furthermore, \(\pdis{\inpS^{(\jnd)}}\subset\pdis{\inpS^{(\ind)}}\) for any \(\ind,\jnd\in\integers{+}\) such that \(\jnd\leq \ind\).
		In order to bound
		\(\lon{\qmn{\rno,\pma{}{(\jnd)}}-\qmn{\rno,\pma{}{(\ind)}}}\) for positive integers \(\jnd<\ind\) we use the triangle
		inequality for \(\qmn{\rno,\pma{}{(\jnd)}}\), \(\qmn{\rno,\pma{}{(\ind)}}\), and \(\qma{\rno,\Wm^{(\ind)}}{\lgm}\)
		\begin{align}
		\label{eq:Lminimax-A}
		\lon{\qmn{\rno,\pma{}{(\jnd)}}-\qmn{\rno,\pma{}{(\ind)}}}
		\leq \lon{\qmn{\rno,\pma{}{(\jnd)}}-\qma{\rno,\Wm^{(\ind)}}{\lgm}}
		+\lon{\qmn{\rno,\pma{}{(\ind)}}-\qma{\rno,\Wm^{(\ind)}}{\lgm}}.
		\end{align} 
		Let us proceed with bounding \(\lon{\qmn{\rno,\pma{}{(\jnd)}}-\qma{\rno,\Wm^{(\ind)}}{\lgm}}\)
		and \(\lon{\qmn{\rno,\pma{}{(\ind)}}-\qma{\rno,\Wm^{(\ind)}}{\lgm}}\) from above.
		\begin{align}
		\notag
		\lon{\qmn{\rno,\pma{}{(\jnd)}}-\qma{\rno,\Wm^{(\ind)}}{\lgm}}
		&\mathop{\leq}^{(a)} 
		\sqrt{\tfrac{2}{\rno \wedge 1}\RD{\rno \wedge 1}{\qmn{\rno,\pma{}{(\jnd)}}}{\qma{\rno,\Wm^{(\ind)}}{\lgm}}}
		\\
		\notag
		&\mathop{\leq}^{(b)} \sqrt{\tfrac{2}{\rno \wedge 1}}
		\sqrt{\CRD{\rno}{\Wm}{\qma{\rno,\Wm^{(\ind)}}{\lgm}}{\pma{}{(\jnd)}} -\RMI{\rno}{\pma{}{(\jnd)}}{\Wm^{(\ind)}}}
		\\
		\notag
		&\mathop{\leq}^{(c)} \sqrt{\tfrac{2}{\rno \wedge 1}}
		\sqrt{\RCL{\rno}{\Wm^{(\ind)}}{\lgm}-\RMIL{\rno}{\pma{}{(\jnd)}}{\Wm^{(\ind)}}{\lgm}}
		\\
		\notag
		&\mathop{\leq}^{(d)} \sqrt{\tfrac{2}{\rno \wedge 1}}
		\sqrt{\RCL{\rno}{\Wm}{\lgm}-\RMIL{\rno}{\pma{}{(\jnd)}}{\Wm}{\lgm}}
		\end{align}
		where 
		\((a)\) follows from Lemma \ref{lem:divergence-pinsker}, 
		\((b)\) follows from Lemma \ref{lem:information}-(\ref{information:one},\ref{information:zto},\ref{information:oti}),
		\((c)\) follows from Lemma \ref{lem:LcapacityFLB} because \(\pma{}{(\jnd)}\in \pdis{\inpS^{(\ind)}}\), 
		and 
		\((d)\) follows from the identities 
		\(\RMIL{\rno}{\pma{}{(\jnd)}}{\Wm^{(\ind)}}{\lgm}=\RMIL{\rno}{\pma{}{(\jnd)}}{\Wm}{\lgm}\) 
		and \(\RCL{\rno}{\Wm^{(\ind)}}{\lgm}\leq \RCL{\rno}{\Wm}{\lgm}\).
		We can obtain a similar bound on \(\lon{\qmn{\rno,\pma{}{(\ind)}}-\qma{\rno,\Wm^{(\ind)}}{\lgm}}\).
		Then \(\{\qmn{\rno,\pma{}{(\ind)}}\}\) is a Cauchy sequence as a result of \eqref{eq:Lminimax-A}
		because \(\lim_{\jnd\to\infty}\RMIL{\rno}{\pma{}{(\jnd)}}{\Wm}{\lgm}=\RCL{\rno}{\Wm}{\lgm}\).

		\item {\it If \(\RCL{\rno}{\Wm}{\lgm}<\infty\), 
			then \(\exists!\qma{\rno,\Wm}{\lgm}\in \pmea{\outA}\)
			satisfying 
			\(\lim\nolimits_{\ind\to \infty}\lon{\qma{\rno,\Wm}{\lgm}-\qmn{\rno,\pma{}{(\ind)}}}=0\) for all 
			\(\{\pma{}{(\ind)}\}_{\ind\in\integers{+}}\subset \pdis{\inpS}\) such that  
			\(\lim\nolimits_{\ind \to \infty} \RMIL{\rno}{\pma{}{(\ind)}}{\Wm}{\lgm}=\RCL{\rno}{\Wm}{\lgm}\):}
		Note that \(\smea{\outA}\) is a complete metric space for the total variation metric
		because \(\smea{\outA}\) is a Banach space for the total variation topology \cite[Thm. 4.6.1]{bogachev}.
		Then \(\{\qmn{\rno,\pma{}{(\ind)}}\}_{\ind\in\integers{+}}\) has a unique limit point \(\qmn{\rno,\pma{}{*}}\) 
		in \(\smea{\outA}\). 
		Since \(\pmea{\outA}\) is a closed set for the total variation topology and 
		\(\cup_{\ind\in\integers{+}}\qmn{\rno,\pma{}{(\ind)}}\subset \pmea{\outA}\),
		then \(\qmn{\rno,\pmn{}{*}}\in\pmea{\outA}\), by \cite[Thm. 2.1.3]{munkres}.

		We have established the existence of a unique limit point \(\qmn{\rno,\pmn{}{*}}\),
		for any sequence  \(\{\pma{}{(\ind)}\}_{\ind\in\integers{+}}\subset\pdis{\inpS}\) satisfying 
		\(\lim\nolimits_{\ind \to \infty}\RMIL{\rno}{\pma{}{(\ind)}}{\Wm}{\lgm}=\RCL{\rno}{\Wm}{\lgm}\).
		This, however, implies \(\lim_{\ind\to\infty}\lon{\qmn{\rno,\widetilde{\mP}^{(\ind)}}-\qmn{\rno,\pmn{}{*}}}=0\)
		for any \(\{\widetilde{\mP}^{(\ind)}\}_{\ind\in\integers{+}}\) satisfying 
		\(\lim\nolimits_{\ind \to \infty}\RMIL{\rno}{\widetilde{\mP}^{(\ind)}}{\Wm}{\lgm}=\RCL{\rno}{\Wm}{\lgm}\)
		because we can interleave the elements of \(\{\pma{}{(\ind)}\}_{\ind\in\integers{+}}\) and 
		\(\{\widetilde{\mP}^{(\ind)}\}_{\ind\in\integers{+}}\) to obtain a new sequence 
		\(\{\widehat{\mP}^{(\ind)}\}_{\ind\in\integers{+}}\)
		satisfying 
		\(\lim\nolimits_{\ind \to \infty}\RMIL{\rno}{\widehat{\mP}^{(\ind)}}{\Wm}{\lgm}=\RCL{\rno}{\Wm}{\lgm}\)
		for which \(\{\qmn{\rno,\widehat{\mP}^{(\ind)}}\}\) is a Cauchy sequence. Then 
		\(\qma{\rno,\Wm}{\lgm}=\qmn{\rno,\pma{}{*}}\)
		
		\item {\it \(\qma{\rno,\Wm}{\lgm}\) satisfies the equality given in \eqref{eq:thm:Lminimaxcenter}:}
		For any \(\mP \in \pdis{\inpS}\), let us consider any sequence \(\{\pma{}{(\ind)}\}_{\ind\in\integers{+}}\) 
		satisfying \(\pma{}{(1)}=\mP\) and
		\(\lim\nolimits_{\ind \to \infty} \RMIL{\rno}{\pma{}{(\ind)}}{\Wm}{\lgm}=\RCL{\rno}{\Wm}{\lgm}\).
		Then \(\mP\in \pdis{\inpS^{(\ind)}}\) for all \(\ind\in\integers{+}\). 
		Using Lemma \ref{lem:LcapacityFLB} we get
		\begin{align}
		\label{eq:LcapacityLB-D}
		\CRD{\rno}{\Wm}{\qmn{\rno,\Wm^{(\ind)}}}{\mP}-\lgm\cdot \EXS{\mP}{\costf}
		&\leq  \RCL{\rno}{\Wm^{(\ind)}}{\lgm}
		&
		&\forall \ind\in\integers{+}.
		\end{align}
		Since \(\inpS^{(\ind)}\) is a finite set, \(\exists\widetilde{\mP}^{(\ind)}\in \pdis{\inpS^{(\ind)}}\) 
		satisfying \(\RMIL{\rno}{\widetilde{\mP}^{(\ind)}}{\Wm^{(\ind)}}{\lgm}=\RCL{\rno}{\Wm^{(\ind)}}{\lgm}\)
		and \(\qmn{\rno,\widetilde{\mP}^{(\ind)}}=\qma{\rno,\Wm^{(\ind)}}{\lgm}\)  by Lemma \ref{lem:LcapacityFLB}.
		Then \(\RMIL{\rno}{\widetilde{\mP}^{(\ind)}}{\Wm^{(\ind)}}{\lgm}\geq \RMIL{\rno}{\pma{}{(\ind)}}{\Wm^{(\ind)}}{\lgm}\) 
		and consequently \(\lim\nolimits_{\ind \to \infty} \RMIL{\rno}{\widetilde{\mP}^{(\ind)}}{\Wm}{\lgm}=\RCL{\rno}{\Wm}{\lgm}\). 
		We have already established that for such a sequence
		\(\qmn{\rno,\widetilde{\mP}^{(\ind)}}\rightarrow \qma{\rno,\Wm}{\lgm}\) in the total variation topology, 
		and hence in the topology of setwise convergence. Then
		the lower semicontinuity  of the \renyi divergence in its arguments for the topology of setwise convergence, 
		i.e. Lemma \ref{lem:divergence:lsc},
		the identity \(\RCL{\rno}{\Wm^{(\ind)}}{\lgm}\leq\RCL{\rno}{\Wm}{\lgm}\), and \eqref{eq:LcapacityLB-D} 
		imply that
		\begin{align}
		\notag
		\CRD{\rno}{\Wm}{\qma{\rno,\Wm}{\lgm}}{\mP}-\lgm\cdot \EXS{\mP}{\costf}
		&\leq  \RCL{\rno}{\Wm}{\lgm}
		&
		&\forall \mP\in\pdis{\inpS}.
		\end{align}
		On the other hand \(\CRD{\rno}{\Wm}{\qma{\rno,\Wm}{\lgm}}{\mP}-\lgm\cdot \EXS{\mP}{\costf}\geq \RMIL{\rno}{\mP}{\Wm}{\lgm}\) 
		and \(\RCL{\rno}{\Wm}{\lgm}=\sup_{\mP\in\pdis{\inpS}} \RMIL{\rno}{\mP}{\Wm}{\lgm}\) 
		by the definitions of \(\RMI{\rno}{\mP}{\Wm}\), \(\RMIL{\rno}{\mP}{\Wm}{\lgm}\),  and
		\(\RCL{\rno}{\Wm}{\lgm}\). Thus \eqref{eq:thm:Lminimaxcenter} holds.
	\end{enumerate} 
\end{proof}

\begin{proof}[Proof of Lemma \ref{lem:Lcenter}]
	Let  \(\{\pma{}{(\ind)}\}_{\ind\in\integers{+}}\subset\cset(\costc)\) be 
	such that \(\lim\nolimits_{\ind \to \infty} \RMI{\rno}{\pma{}{(\ind)}}{\Wm}=\CRC{\rno}{\!\Wm\!}{\costc}\).
	Then \(\{\qmn{\rno,\pma{}{(\ind)}}\}_{\ind\in\integers{+}}\) is a Cauchy sequence with the limit point 
	\(\qmn{\rno,\!\Wm\!,\costc}\) by Theorem \ref{thm:minimax}.
	On the other hand,
	\(\lim\nolimits_{\ind \to \infty} \RMIL{\rno}{\pma{}{(\ind)}}{\Wm}{\lgm}=\RCL{\rno}{\Wm}{\lgm}\)
	by Lemma \ref{lem:Lcapacity}-(\ref{Lcapacity:optimal}).
	Then \(\{\qmn{\rno,\pma{}{(\ind)}}\}_{\ind\in\integers{+}}\) is a Cauchy sequence with the limit point 
	\(\qma{\rno,\Wm}{\lgm}\) by Theorem \ref{thm:Lminimax}.
	Hence \(\qmn{\rno,\!\Wm\!,\costc}=\qma{\rno,\Wm}{\lgm}\).
\end{proof}

\begin{proof}[Proof of Lemma \ref{lem:Lcapacityproduct}]
%	For all \(\mP\in\pdis{\inpS_{1}^{\blx}}\)
%	\begin{align}
%	\notag
%	\EXS{\mP}{\costf_{[1,\blx]}}
%	&=\sum\nolimits_{\tin=1}^{\blx}\EXS{\pmn{\tin}}{\costf_{\tin}}
%	\end{align}
%	where \(\pmn{\tin}\in\pdis{\inpS_{\tin}}\) is  the \(\inpS_{\tin}\) marginal of \(\mP\).
%	Thus Lemma \ref{lem:information:product} and
%	the definition of the A-L information given in \eqref{eq:def:Linformation} imply
%	\begin{align}
%	\notag
%	\RMIL{\rno}{\mP}{\Wmn{[1,\blx]}}{\lgm}
%	&\leq \RMIL{\rno}{\pmn{1}\otimes\cdots\otimes\pmn{\blx}}{\Wmn{[1,\blx]}}{\lgm}
%	\\
%	\notag
%	&=\sum\nolimits_{\tin=1}^{\blx}\RMIL{\rno}{\pmn{\tin}}{\Wmn{\tin}}{\lgm}.
%	\end{align}
%	Hence, 
As a result of \eqref{eq:lem:A-L-information:product} we have
	\begin{align}
	\notag
	\sup\nolimits_{\mP\in\pdis{\inpS_{1}^{\blx}}}\RMIL{\rno}{\mP}{\Wmn{[1,\blx]}}{\lgm}
	&=\sup\nolimits_{\pmn{1}\in\pdis{\inpS_{1}},\ldots,\pmn{\blx}\in\pdis{\inpS_{\blx}}}
	\sum\nolimits_{\tin=1}^{\blx}\RMIL{\rno}{\pmn{\tin}}{\Wmn{\tin}}{\lgm}.
	\end{align}
Thus \eqref{eq:lem:Lcapacityproduct} holds.	
	In order to establish \(\qma{\rno,\Wmn{[1,\blx]}}{\lgm}=\bigotimes\nolimits_{\tin=1}^{\blx}\qma{\rno,\Wmn{\tin}}{\lgm}\), one can confirm 
	by substitution that \(\bigotimes\nolimits_{\tin=1}^{\blx}\qma{\rno,\Wmn{\tin}}{\lgm}\) 
	satisfies \eqref{eq:thm:Lminimaxradiuscenter}.
\end{proof}

%\subsubsection{Proof of Lemmas on R-G Capacity}\label{sec:RGproofs}
\begin{proof}[Proof of Lemma \ref{lem:Lpolsha}]~
\begin{enumerate}
	\item[(\ref{Lpolsha:poltyrev})]
	Note that as a result of Lemma \ref{lem:information}-(\ref{information:zto},\ref{information:oti}) 
	and the definition of \(\RMIL{\rno}{\mP}{\Wm}{\lgm}\) given in \eqref{eq:def:Linformation} we have
	\begin{align}
	\label{eq:Lpolsha:poltyrev-1}
	\RD{1}{\mP}{\uma{\rno,\mP}{\lgm}}
	&=(\rno-1)\RMIL{\rno}{\mP}{\Wm}{\lgm}
	+\ln\sum\nolimits_{\tilde{\dinp}}\mP(\tilde{\dinp})e^{(1-\rno)\RD{\rno}{\Wm(\tilde{\dinp})}{\qmn{\rno,\mP}}
		+(\rno-1)\lgm \cdot \costf(\dinp)}.
	\end{align}
	On the other hand as a result of \eqref{eq:mean}, \eqref{eq:def:Gmeanmeasure}, 
	and \eqref{eq:Ginformation-neq-alternative}
	\begin{align}
	\notag
	\GMIL{\rno}{\uma{\rno,\mP}{\lgm}}{\Wm}{\lgm}
	&=\tfrac{\rno}{\rno-1}\ln 
	\int \left(\sum\nolimits_{\dinp} \uma{\rno,\mP}{\lgm}(\dinp)
	e^{(1-\rno)\lgm\cdot \costf(\dinp)}
	(\der{\Wm(\dinp)}{\rfm})^{\rno} \right)^{\sfrac{1}{\rno}} \rfm(\dif{\dout})
	\\
	\notag
	&=\tfrac{\rno}{\rno-1}\ln \int\der{\qmn{\rno,\mP}}{\rfm}\rfm(\dif{\dout})	
	-\tfrac{1}{\rno-1}
	\ln\sum\nolimits_{\tilde{\dinp}}\mP(\tilde{\dinp})e^{(1-\rno)\RD{\rno}{\Wm(\tilde{\dinp})}{\qmn{\rno,\mP}}
		+(\rno-1)\lgm \cdot \costf(\dinp)}.
	\end{align}
	Then  \eqref{eq:lem:Lpolsha:poltyrev} follows from \eqref{eq:Lpolsha:poltyrev-1}
	\begin{itemize}
		\item 
		In order to prove \eqref{eq:lem:Lpolsha:poltyrev-variational} for \(\rno\in(0,1)\) case,
		we prove the following inequality
		\begin{align}
		\notag
		\GMIL{\rno}{\mU}{\Wm}{\lgm}+\tfrac{1}{\rno-1}\RD{1}{\mP}{\mU}
		&\leq \RMIL{\rno}{\mP}{\Wm}{\lgm}
		&
		&\forall \mU\in\pdis{\inpS}.
		\end{align}
		Preceding inequality together with \eqref{eq:lem:Lpolsha:poltyrev} imply 
		\eqref{eq:lem:Lpolsha:poltyrev-variational} for \(\rno\in(0,1)\).
		Note that the inequality holds trivially when \(\mP\NAC\mU\) because \(\RD{1}{\mP}{\mU}\)
		is infinite in that case. Thus we are  left with \(\mP\AC\mU\) case.
		On the other hand, any \(\mU\in\pdis{\inpS}\) can be written as 
		\(\mU=\umn{ac}+\umn{s}\) where \(\umn{ac}\AC\mP\) and \(\umn{s}\perp\mP\). 
		Then
		\begin{align}
		\notag
		\GMIL{\rno}{\mU}{\Wm}{\lgm}
		&\mathop{\leq}^{(i)}\RD{\rno}{\mU\mtimes \Wm e^{\frac{1-\rno}{\rno}\lgm \cdot \costf}}{\mU\otimes\qmn{\rno,\mP}}
		\\
		\notag
		&\mathop{\leq}^{(ii)} \tfrac{1}{\rno-1}
		\ln \left[\sum\nolimits_{\dinp}\umn{ac}(\dinp)e^{(\rno-1)\RD{\rno}{\Wm(\dinp)}{\qmn{\rno,\mP}}+(1-\rno)\lgm \cdot\costf(\dinp)}\right]
		\\
		\notag
		&\mathop{=}^{(iii)}\tfrac{1}{\rno-1}
		\ln \left[\sum\nolimits_{\dinp}\mP(\dinp)\tfrac{\umn{ac}(\dinp)}{\mP(\dinp)}e^{(\rno-1)\RD{\rno}{\Wm(\dinp)}{\qmn{\rno,\mP}}+(1-\rno)\lgm \cdot\costf(\dinp) }\right]
		\\
		\notag
		&\mathop{\leq}^{(iv)}\tfrac{1}{\rno-1}
		\left[\sum\nolimits_{\dinp} \mP(\dinp)\ln \tfrac{\umn{ac}(\dinp)}{\mP(\dinp)}e^{(\rno-1)\RD{\rno}{\Wm(\dinp)}{\qmn{\rno,\mP}}+(1-\rno)\lgm \cdot\costf(\dinp) }\right]
		\\
		\notag
		&\mathop{=}^{(v)}\RMIL{\rno}{\mP}{\Wm}{\lgm}
		-\tfrac{1}{\rno-1}\RD{1}{\mP}{\umn{ac}}
		&
		&
		\\
\notag
&\mathop{=}^{(vi)}\RMIL{\rno}{\mP}{\Wm}{\lgm}
-\tfrac{1}{\rno-1}\RD{1}{\mP}{\mU}.
&
&
		\end{align}
		where \((i)\) follows from \eqref{eq:def:Ginformation},
		\((ii)\) follows from \eqref{eq:def:divergence} and 
		the monotonicity of the natural logarithm function,
		\((iii)\) follows from \(\umn{ac}\sim\mP\) which holds because \(\mP\AC\mU\),
		\((iv)\) follows from the Jensen's inequality and the concavity of the 
		natural logarithm function,
		\((v)\) follows from Lemma \ref{lem:information}-\eqref{information:zto} and the definition of 
		\(\RMIL{\rno}{\mP}{\Wm}{\lgm}\) given in \eqref{eq:def:Linformation}. 	
		
		\item
		In order to prove \eqref{eq:lem:Lpolsha:poltyrev-variational} for \(\rno\in(1,\infty)\) case,
		we prove the following inequality
		\begin{align}
		\notag
		\GMIL{\rno}{\mU}{\Wm}{\lgm}+\tfrac{1}{\rno-1}\RD{1}{\mP}{\mU}
		&\geq \RMIL{\rno}{\mP}{\Wm}{\lgm}
		&
		&\forall \mU\in\pdis{\inpS}.
		\end{align}
Preceding inequality together with \eqref{eq:lem:Lpolsha:poltyrev} imply 
\eqref{eq:lem:Lpolsha:poltyrev-variational} for \(\rno\in(1,\infty)\).
Note that the inequality holds trivially when \(\mP\NAC\mU\) because \(\RD{1}{\mP}{\mU}\)
is infinite in that case. Thus we are  left with \(\mP\AC\mU\) case.
On the other hand, any \(\mU\in\pdis{\inpS}\) can be written as 
\(\mU=\umn{ac}+\umn{s}\) where \(\umn{ac}\AC\mP\) and \(\umn{s}\perp\mP\). 
Then
\begin{align}
\notag
\GMIL{\rno}{\mU}{\Wm}{\lgm}
&\mathop{=}^{(i)}\RD{\rno}{\mU\mtimes \Wm e^{\frac{1-\rno}{\rno}\lgm \cdot \costf}}{\mU\otimes\qga{\rno,\mU}{\lgm}}
\\
\notag
&\mathop{\geq}^{(ii)} \tfrac{1}{\rno-1}
\ln \left[\sum\nolimits_{\dinp}\umn{ac}(\dinp)e^{(\rno-1)\RD{\rno}{\Wm(\dinp)}{\qga{\rno,\mU}{\lgm}}+(1-\rno)\lgm \cdot\costf(\dinp)}\right]
\\
\notag
&\mathop{=}^{(iii)}\tfrac{1}{\rno-1}
\ln \left[\sum\nolimits_{\dinp}\mP(\dinp)\tfrac{\umn{ac}(\dinp)}{\mP(\dinp)}e^{(\rno-1)\RD{\rno}{\Wm(\dinp)}{\qga{\rno,\mU}{\lgm}}+(1-\rno)\lgm \cdot\costf(\dinp) }\right]
\\
\notag
&\mathop{\geq}^{(iv)}\tfrac{1}{\rno-1}
\left[\sum\nolimits_{\dinp} \mP(\dinp)\ln \tfrac{\umn{ac}(\dinp)}{\mP(\dinp)}e^{(\rno-1)\RD{\rno}{\Wm(\dinp)}{\qga{\rno,\mU}{\lgm}}+(1-\rno)\lgm \cdot\costf(\dinp) }\right]
\\
\notag
&\mathop{\geq}^{(v)}\RMI{\rno}{\mP}{\Wm}+\RD{1}{\qmn{\rno,\mP}}{\qga{\rno,\mU}{\lgm}} -\lgm\cdot\EXS{\mP}{\costf}-\tfrac{1}{\rno-1}\RD{1}{\mP}{\umn{ac}}
&
&
\\
\notag
&\mathop{=}^{(vi)}\RMIL{\rno}{\mP}{\Wm}{\lgm}
-\tfrac{1}{\rno-1}\RD{1}{\mP}{\mU}.
&
&
\end{align}
where \((i)\) follows from \eqref{eq:Ginformation-neq}, 
\((ii)\) follows from \eqref{eq:def:divergence} and 
the monotonicity of the natural logarithm function,
\((iii)\) follows from \(\umn{ac}\sim\mP\) which holds because \(\mP\AC\mU\),
\((iv)\) follows from the Jensen's inequality and the concavity of the 
natural logarithm function,
\((v)\) follows from Lemma \ref{lem:information}-\eqref{information:oti},
\((vi)\) follows from Lemma \ref{lem:divergence-pinsker} and the definition of 
\(\RMIL{\rno}{\mP}{\Wm}{\lgm}\) given in \eqref{eq:def:Linformation}. 	
\end{itemize}
	
	\item[(\ref{Lpolsha:shayevitz})]
	Note that the order \(\rno\) R-G mean for the input distribution \(\mP\) and 
	the Lagrange multiplier \(\lgm\)
	is a fixed point of the order \(\rno\) Augustin operator for the input distribution \(\ama{\rno,\mP}{\lgm}\), i.e.
	\begin{align}
	\notag
	\der{\Aop{\rno}{\ama{\rno,\mP}{\lgm}}{\qga{\rno,\mP}{\lgm}}}{\rfm}
	&=\sum\nolimits_{\dinp} \ama{\rno,\mP}{\lgm}(\dinp) (\der{\Wm(\dinp)}{\rfm})^{\rno} (\der{\qga{\rno,\mP}{\lgm}}{\rfm})^{1-\rno} e^{(1-\rno)\RD{\rno}{\Wm(\dinp)}{\qga{\rno,\mP}{\lgm}}}
	\\
	\notag
	&=\tfrac{1}{e^{(\rno-1)\GMIL{\rno}{\mP}{\Wm}{\lgm}}}
	\sum\nolimits_{\dinp} \mP(\dinp) e^{(1-\rno)\lgm \cdot \costf(\dinp)} (\der{\Wm(\dinp)}{\rfm})^{\rno} (\der{\qga{\rno,\mP}{\lgm}}{\rfm})^{1-\rno} 
	\\
	\notag
	&=\der{\qga{\rno,\mP}{\lgm}}{\rfm}.
	\end{align} 
	Consequently \(\RMI{\rno}{\ama{\rno,\mP}{\lgm}}{\Wm}=\CRD{\rno}{\Wm}{\qga{\rno,\mP}{\lgm}}{\ama{\rno,\mP}{\lgm}}\)
	by Lemma \ref{lem:information}-(\ref{information:zto},\ref{information:oti}).
	Then 
	\begin{align}
	\notag
	\RD{1}{\ama{\rno,\mP}{\lgm}}{\mP}
	&=\sum\nolimits_{\dinp} \ama{\rno,\mP}{\lgm}(\dinp)
	\ln\tfrac{\mP(\dinp)e^{(\rno-1)\RD{\rno}{\Wm(\dinp)}{\qga{\rno,\mP}{\lgm}}+(1-\rno)\lgm \cdot \costf(\dinp)} }{\sum_{\tilde{\dinp}}\mP(\tilde{\dinp})e^{(\rno-1)\RD{\rno}{\Wm(\tilde{\dinp})}{\qga{\rno,\mP}{\lgm}}+(1-\rno)\lgm \cdot \costf(\tilde{\dinp})}} \tfrac{1}{\mP(\dinp)}
	\\
	\notag
	&=(\rno-1)\RMIL{\rno}{\ama{\rno,\mP}{\lgm}}{\Wm}{\lgm}
	-\ln \sum\nolimits_{\tilde{\dinp}}\mP(\tilde{\dinp})e^{(\rno-1)\RD{\rno}{\Wm(\tilde{\dinp})}{\qga{\rno,\mP}{\lgm}}+(1-\rno)\lgm \cdot \costf(\dinp)}
	\\
	\notag
	&=(\rno-1)[\RMIL{\rno}{\ama{\rno,\mP}{\lgm}}{\Wm}{\lgm}-\GMIL{\rno}{\mP}{\Wm}{\lgm}].
	\end{align}
	Thus \eqref{eq:lem:Lpolsha:shayevitz} holds.
	\begin{itemize}
		\item 
		In order to prove \eqref{eq:lem:Lpolsha:shayevitz-variational} for \(\rno\in(0,1)\) case,
		we prove the following inequality 
		\begin{align}
		\notag
		\RMIL{\rno}{\mA}{\Wm}{\lgm}-\tfrac{1}{\rno-1}\RD{1}{\mA}{\mP}
		&\geq \GMIL{\rno}{\mP}{\Wm}{\lgm}
		&
		&\forall \mA\in\pdis{\inpS}.
		\end{align}
		Preceding inequality together with \eqref{eq:lem:Lpolsha:shayevitz} imply 
		\eqref{eq:lem:Lpolsha:shayevitz-variational} for \(\rno\in(0,1)\).
		Note that the inequality holds trivially when \(\mA\NAC\mP\) because \(\RD{1}{\mA}{\mP}\)
		is infinite in that case. Thus we are  left with \(\mA\AC\mP\) case.
		On the other hand, for any \(\mA\in\pdis{\inpS}\), \(\mP\) can be written as 
		\(\mP=\pmn{ac}+\pmn{s}\) where \(\pmn{ac}\AC\mA\) and \(\pmn{s}\perp\mA\). 
		Then
		\begin{align}
		\notag
		\RMIL{\rno}{\mA}{\Wm}{\lgm}-\tfrac{1}{\rno-1}\RD{1}{\mA}{\mP}
		&\mathop{=}^{(i)}\CRD{\rno}{\Wm}{\qmn{\rno,\mA}}{\mA}-\lgm\cdot \EXS{\mA}{\costf}-\tfrac{1}{\rno-1}\RD{1}{\mA}{\pmn{ac}}
		\\
		\notag
		&\mathop{=}^{(ii)}\tfrac{1}{\rno-1}\sum\nolimits_{\dinp}
		\mA(\dinp)\ln \left[\tfrac{\pmn{ac}(\dinp)}{\mA(\dinp)}
		e^{(\rno-1)\RD{\rno}{\Wm(\dinp)}{\qmn{\rno,\mA}}+(1-\rno)\lgm \cdot \costf(\dinp)}\right]
		\\
		\notag
		&\mathop{\geq}^{(iii)}
		\tfrac{1}{\rno-1}\ln \sum\nolimits_{\dinp} \pmn{ac}(\dinp) 
		e^{(\rno-1)\RD{\rno}{\Wm(\dinp)}{\qmn{\rno,\mA}}+(1-\rno)\lgm \cdot \costf(\dinp)}
		\\
		\notag
		&\mathop{\geq}^{(iv)}\tfrac{1}{\rno-1}\ln\sum\nolimits_{\dinp} \mP(\dinp) 
		e^{(\rno-1)\RD{\rno}{\Wm(\dinp)}{\qmn{\rno,\mA}}+(1-\rno)\lgm\cdot\costf(\dinp)}
		\\
		\notag
		&\mathop{\geq}^{(v)} \GMIL{\rno}{\mP}{\Wm}{\lgm}.
		&
		&
		\end{align}
		where \((i)\) follows from \eqref{eq:def:divergence}, \eqref{eq:def:Linformation}, and 
Lemma \ref{lem:information}-\eqref{information:zto},
		\((ii)\) follows from \(\pmn{ac}\sim\mA\) which holds because \(\mA\AC\mP\), 
		\((iii)\) follows from the Jensen's inequality and the concavity of the natural logarithm function, 
		\((iv)\) follows from the monotonicity of the natural logarithm function,
		\((v)\) follows from \eqref{eq:def:divergence} and \eqref{eq:def:Ginformation}.
		\item
		In order to prove \eqref{eq:lem:Lpolsha:shayevitz-variational} for \(\rno\in(1,\infty)\) case,
		we prove the following inequality 
		\begin{align}
		\notag
		\RMIL{\rno}{\mA}{\Wm}{\lgm}-\tfrac{1}{\rno-1}\RD{1}{\mA}{\mP}
		&\leq \GMIL{\rno}{\mP}{\Wm}{\lgm}
		&
		&\forall \mA\in\pdis{\inpS}.
		\end{align}
		Preceding inequality together with \eqref{eq:lem:Lpolsha:shayevitz} imply 
\eqref{eq:lem:Lpolsha:shayevitz-variational} for \(\rno\in(1,\infty)\).
Note that the inequality holds trivially when \(\mA\NAC\mP\) because \(\RD{1}{\mA}{\mP}\)
is infinite in that case. Thus we are  left with \(\mA\AC\mP\) case.
On the other hand, for any \(\mA\in\pdis{\inpS}\), \(\mP\) can be written as 
\(\mP=\pmn{ac}+\pmn{s}\) where \(\pmn{ac}\AC\mA\) and \(\pmn{s}\perp\mA\). 
Then
\begin{align}
\notag
\RMIL{\rno}{\mA}{\Wm}{\lgm}-\tfrac{1}{\rno-1}\RD{1}{\mA}{\mP}
&\mathop{\leq}^{(i)}\CRD{\rno}{\Wm}{\qga{\rno,\mP}{\lgm}}{\mA}-\lgm\cdot \EXS{\mA}{\costf}-\tfrac{1}{\rno-1}\RD{1}{\mA}{\pmn{ac}}
\\
\notag
&\mathop{=}^{(ii)}\tfrac{1}{\rno-1}\sum\nolimits_{\dinp}
\mA(\dinp)\ln \left[\tfrac{\pmn{ac}(\dinp)}{\mA(\dinp)}
e^{(\rno-1)\RD{\rno}{\Wm(\dinp)}{\qga{\rno,\mP}{\lgm}}+(1-\rno)\lgm \cdot \costf(\dinp)}\right]
\\
\notag
&\mathop{\leq}^{(iii)}
\tfrac{1}{\rno-1}\ln \sum\nolimits_{\dinp} \pmn{ac}(\dinp) 
e^{(\rno-1)\RD{\rno}{\Wm(\dinp)}{\qga{\rno,\mP}{\lgm}}+(1-\rno)\lgm \cdot \costf(\dinp)}
\\
\notag
&\mathop{\leq}^{(iv)}\tfrac{1}{\rno-1}\ln\sum\nolimits_{\dinp} \mP(\dinp) 
e^{(\rno-1)\RD{\rno}{\Wm(\dinp)}{\qga{\rno,\mP}{\lgm}}+(1-\rno)\lgm\cdot\costf(\dinp)}
\\
\notag
&\mathop{=}^{(v)} \GMIL{\rno}{\mP}{\Wm}{\lgm}.
&
&
\end{align}
where \((i)\) follows from \eqref{eq:def:divergence}, \eqref{eq:def:information},
  and \eqref{eq:def:Linformation},
\((ii)\) follows from \(\pmn{ac}\sim\mA\) which holds because \(\mA\AC\mP\), 
\((iii)\) follows from the Jensen's inequality and the concavity of the natural logarithm function, 
\((iv)\) follows from the monotonicity of the natural logarithm function,
\((v)\) follows from \eqref{eq:def:divergence} and \eqref{eq:Ginformation-neq}.
	\end{itemize}    
	\item[(\ref{Lpolsha:augustin})]	
\eqref{eq:lem:Lpolsha:augustin} follows from \eqref{eq:mean} by substitution. 
On the other hand, \eqref{eq:Ginformation-neq-alternative}
and \eqref{eq:lem:Lpolsha:poltyrev-variational} imply
\begin{align}
\label{eq:Lpolsha:augustin-1}
\tfrac{\rno-1}{\rno}\RMIL{\rno}{\mP}{\Wm}{\lgm}
&\leq 
\ln \lon{\mma{\rno,\mU}{\lgm}}
+\tfrac{\RD{1}{\mP}{\mU}}{\rno}
&
&\forall \mU\in\pdis{\inpS}.
\end{align}
For any \(\fX\) satisfying \(\fX:\EXS{\mP}{\fX}=0\),
let \(\umn{\fX}\in\pdis{\inpS}\) be \(\umn{\fX}(\dinp)
\DEF\tfrac{\mP(\dinp) e^{(1-\rno)\fX(\dinp)}}{\sum_{\dsta} \mP(\dsta) e^{(1-\rno)\fX(\dsta)}}
\) for all \(\dinp \in\inpS\).
Thus as a result of \eqref{eq:Lpolsha:augustin-1} and \eqref{eq:def:Gmeanmeasure}
we have
\begin{align}
\notag
\tfrac{\rno-1}{\rno}\RMIL{\rno}{\mP}{\Wm}{\lgm}
&\leq 
\ln\EXS{\rfm}{\left(\sum\nolimits_{\dinp}\mP(\dinp)
	e^{(1-\rno)(\fX(\dinp)+\lgm\cdot\costf(\dinp))}
	\left[\der{\Wm(\dinp)}{\rfm}\right]^{\rno}
	\right)^{\sfrac{1}{\rno}}}
&
&\forall \fX:\EXS{\mP}{\fX}=0.
\end{align}
Then \eqref{eq:lem:Lpolsha:augustin-variational} follows from 
\eqref{eq:lem:Lpolsha:augustin}.
\end{enumerate}
\end{proof}

\begin{proof}[Proof of Lemma \ref{lem:GcapacityFLB}]~
	\begin{enumerate}[(i)]
		\item  {\it \(\exists \widetilde{\mP}\in\pdis{\inpS}\) such that \(\GMIL{\rno}{\widetilde{\mP}}{\Wm}{\lgm}=\GCL{\rno}{\Wm}{\lgm}\):}
		Note that \(\pdis{\inpS}\) is compact because \(\inpS\) is a finite set.
		If \(\GMIL{\rno}{\mP}{\Wm}{\lgm}\) is continuous in \(\mP\), then
		the existence of \(\widetilde{\mP}\) follows from 
		the extreme value theorem, \cite[27.4]{munkres}.
		Thus we are left with establishing the continuity
		of \(\GMIL{\rno}{\mP}{\Wm}{\lgm}\) in \(\mP\).

		Note that for any \(\pmn{1}\) and \(\pmn{0}\) there exist probability mass functions  
		\(\smn{1}\), \(\smn{0}\), and \(\smn{\wedge}\) satisfying
		\(\smn{0}\perp\smn{1}\), 
		\(\pmn{1}=(1-\delta)\smn{\wedge}+\delta\smn{1}\),
		and
		\(\pmn{0}=(1-\delta)\smn{\wedge}+\delta\smn{0}\)
		where \(\delta=\tfrac{\lon{\pmn{1}-\pmn{0}}}{2}\).
		Then applying first \eqref{eq:Ginformation-neq-A} and \eqref{eq:Ginformation-neq}
		we get
		\begin{align}
		\label{eq:GcapacityFLB-1}
		\GMIL{\rno}{\pmn{1}}{\Wm}{\lgm}
		&=\tfrac{1}{\rno-1}\ln \left[
		(1-\delta)e^{(\rno-1)[\GMIL{\rno}{\smn{\wedge}}{\Wm}{\lgm}+\RD{\rno}{\qma{\rno,\smn{\wedge}}{\lgm}}{\qma{\rno,\pmn{1}}{\lgm}}]}
		+\delta e^{(\rno-1)[\GMIL{\rno}{\smn{1}}{\Wm}{\lgm}+\RD{\rno}{\qma{\rno,\smn{1}}{\lgm}}{\qma{\rno,\pmn{1}}{\lgm}}]}
		\right].
		\end{align}
		Note that  \(\GMIL{\rno}{\mP}{\Wm}{\lgm}\leq \GMIL{\rno}{\mP}{\Wm}{}\) 		 
		and \(\GMIL{\rno}{\mP}{\Wm}{}\leq \RD{\rno}{\mP\mtimes \Wm}{\mP\otimes \qmn{1,\mU}}\) 
		for all \(\mP\in\pdis{\inpS}\) by definition
		where \(\mU\) is the uniform distribution on \(\inpS\).
		Furthermore, 
		\(\RD{\rno}{\mP\mtimes \Wm}{\mP\otimes\qmn{1,\mU}}=\tfrac{1}{\rno-1}\ln \sum_{\dinp}\mP(\dinp)e^{(\rno-1)\RD{\rno}{\Wm(\dinp)}{\qmn{1,\mU}}} 
		\leq \ln  \abs{\inpS}\) 
		for all \(\mP\in\pdis{\inpS}\) by Lemma \ref{lem:divergence-RM}.
		Thus \(\GMIL{\rno}{\smn{\wedge}}{\Wm}{\lgm}\leq \ln  \abs{\inpS}\)
		and using Lemma \ref{lem:divergence-pinsker} to bound the expression 
		in  \eqref{eq:GcapacityFLB-1}  we get
		\begin{align}
		\label{eq:GcapacityFLB-2}
		\GMIL{\rno}{\pmn{1}}{\Wm}{\lgm}
		&\geq \GMIL{\rno}{\smn{\wedge}}{\Wm}{\lgm}+\tfrac{1}{\rno-1}\ln \left[(1-\delta)+\delta
		e^{(1-\rno)\ln \abs{\inpS}}\right].
		\end{align}
		On the other hand 
		\((1-\delta)^{\frac{1}{\rno}}\mma{\rno,\smn{\wedge}}{\lgm}\leq \mma{\rno,\pmn{1}}{\lgm}\)
		and
		\(\delta^{\frac{1}{\rno}}\mma{\rno,\smn{1}}{\lgm}\leq \mma{\rno,\pmn{1}}{\lgm}\)
		by \eqref{eq:def:Gmeanmeasure}.
		Then using \eqref{eq:def:Gmean} and Lemma \ref{lem:divergence-RM}
		we get
		\begin{align}
		\notag
		\RD{\rno}{\qma{\rno,\smn{\wedge}}{\lgm}}{\qma{\rno,\pmn{1}}{\lgm}}
		&\leq \tfrac{1}{\rno}\ln\tfrac{1}{1-\delta}-\tfrac{\rno-1}{\rno}(\GMIL{\rno}{\smn{\wedge}}{\Wm}{\lgm}-\GMIL{\rno}{\pmn{1}}{\Wm}{\lgm}),
		\\
		\notag
		\RD{\rno}{\qma{\rno,\smn{1}}{\lgm}}{\qma{\rno,\pmn{1}}{\lgm}}
		&\leq \tfrac{1}{\rno}\ln\tfrac{1}{\delta}-\tfrac{\rno-1}{\rno}(\GMIL{\rno}{\smn{1}}{\Wm}{\lgm}-\GMIL{\rno}{\pmn{1}}{\Wm}{\lgm}).
		\end{align}
		Since \(\GMIL{\rno}{\smn{1}}{\Wm}{\lgm}\leq \GMIL{\rno}{\smn{1}}{\Wm}{}\leq \ln \abs{\inpS}\) 
		using \eqref{eq:GcapacityFLB-1} and we get
		\begin{align}
		\label{eq:GcapacityFLB-4}
		\GMIL{\rno}{\pmn{1}}{\Wm}{\lgm}
		&\leq \GMIL{\rno}{\smn{\wedge}}{\Wm}{\lgm}+ 
		\tfrac{\rno}{\rno-1}\ln \left[(1-\delta)^{\frac{1}{\rno}}
		+\delta^{\frac{1}{\rno}}e^{\frac{\rno-1}{\rno}\ln \abs{\inpS}}
		\right].
		\end{align}
		Using \eqref{eq:GcapacityFLB-2} and \eqref{eq:GcapacityFLB-4} we get
		\begin{align}
		\notag
		\abs{\GMIL{\rno}{\pmn{1}}{\Wm}{\lgm}-\GMIL{\rno}{\pmn{2}}{\Wm}{\lgm}}
		&\leq 
		\tfrac{\rno}{\rno-1}\ln \left[(1-\delta)^{\frac{1}{\rno}}+\delta^{\frac{1}{\rno}}e^{\frac{\rno-1}{\rno}\ln \abs{\inpS}}\right]
		-\tfrac{1}{\rno-1}\ln \left[(1-\delta)+\delta e^{(1-\rno)\ln \abs{\inpS}}\right].
		\end{align}
		Then \(\GMIL{\rno}{\mP}{\Wm}{\lgm}\) is continuous in \(\mP\).
		
		\item {\it If \(\GMIL{\rno}{\widetilde{\mP}}{\Wm}{\lgm}=\GCL{\rno}{\Wm}{\lgm}\), 
			then
			\(\RD{\rno}{\mP \mtimes  \Wm e^{\frac{1-\rno}{\rno}\lgm\cdot \costf}}{\mP \otimes \qga{\rno,\widetilde{\mP}}{\lgm}}\leq\GCL{\rno}{\Wm}{\lgm}\)
			for all \(\mP\in \pdis{\inpS}\):}
		Let \(\widetilde{\mP} \in \pdis{\inpS}\) be such that \(\GMIL{\rno}{\widetilde{\mP}}{\Wm}{\lgm}=\GCL{\rno}{\Wm}{\lgm}\), 
		\(\mP\) be any member of \(\pdis{\inpS}\) and \(\pma{}{(\ind)}\) be \(\tfrac{\ind-1}{\ind}\widetilde{\mP} +\tfrac{1}{\ind}{\mP}\) for 
		\(\ind \in \integers{+}\). Then
		\begin{align}
		%\label{eq:rmi-convexcombination}
		\notag
		\GMIL{\rno}{\pma{}{(\ind)}}{\Wm}{\lgm}
		&\!=\!
		\tfrac{1}{\rno-1}\!\ln\!\left[\!
		\tfrac{\ind-1}{\ind} e^{(\rno-1)\left(\GMIL{\rno}{\widetilde{\mP}}{\Wm}{\lgm}+\RD{\rno}{\qga{\rno,\widetilde{\mP}}{\lgm}}{\qga{\rno,\pma{}{(\ind)}}{\lgm}}\right)}
		\!+\!\tfrac{1}{\ind} e^{(\rno-1)\left(\GMIL{\rno}{{\mP}}{\Wm}{\lgm}+\RD{\rno}{\qga{\rno,{\mP}}{\lgm}}{\qga{\rno,\pma{}{(\ind)}}{\lgm}}\right)}
		\right].
		\end{align}
		Then using \(\GMIL{\rno}{\pma{}{(\ind)}}{\Wm}{\lgm} \leq \GCL{\rno}{\Wm}{\lgm}\),
		\(\GMIL{\rno}{\widetilde{\mP}}{\Wm}{\lgm}=\GCL{\rno}{\Wm}{\lgm}\),  
		and 
		\(\RD{\rno}{\qga{\rno,\widetilde{\mP}}{\lgm}}{\qga{\rno,\pma{}{(\ind)}}{\lgm}}\geq0\)
		we get
		\begin{align}
		\label{eq:GcapacityFLB-b-A}
		\GMIL{\rno}{\mP}{\Wm}{\lgm}+\RD{\rno}{\qga{\rno,\mP}{\lgm}}{\qga{\rno,\pma{}{(\ind)}}{\lgm}}
		&\leq \GCL{\rno}{\Wm}{\lgm}
		&
		&\forall \ind \in \integers{+}.
		\end{align}
		On the other hand using \(\GMIL{\rno}{\pma{}{(\ind)}}{\Wm}{\lgm} \leq \GCL{\rno}{\Wm}{\lgm}\),
		\(\GMIL{\rno}{\widetilde{\mP}}{\Wm}{\lgm}=\GCL{\rno}{\Wm}{\lgm}\),  
		\(\GMIL{\rno}{\mP}{\Wm}{\lgm}\geq 0\), and 
		\(\RD{\rno}{\qga{\rno,{\mP}}{\lgm}}{\qga{\rno,\pma{}{(\ind)}}{\lgm}}\geq 0\),
		we get
		\begin{align}
\notag
\RD{\rno}{\qga{\rno,\widetilde{\mP}}{\lgm}}{\qga{\rno,\pma{}{(\ind)}}{\lgm}}
&\leq \tfrac{1}{\rno-1}\ln \tfrac{\ind-e^{(1-\rno)\GCL{\rno}{\Wm}{\lgm}}}{\ind-1}
&
&\forall \ind\in\integers{+}.
\end{align}
Thus Lemma \ref{lem:divergence-pinsker} implies 
\begin{align}
\notag
\limsup\nolimits_{\ind\to\infty}
\lon{\qga{\rno,\widetilde{\mP}}{\lgm}-\qga{\rno,\pma{}{(\ind)}}{\lgm}}
\leq 0.
\end{align}
Then \(\qga{\rno,\pma{}{(\ind)}}{\lgm}\) converges to \(\qga{\rno,\widetilde{\mP}}{\lgm}\) 
		in the total variation topology and hence in the topology of setwise convergence.
		Since the \renyi divergence is lower semicontinuous in the topology of setwise convergence by 
		Lemma \ref{lem:divergence:lsc},  we have
		\begin{align}
		\label{eq:GcapacityFLB-b-B}
		\RD{\rno}{\qga{\rno,\mP}{\lgm}}{\qga{\rno,\widetilde{\mP}}{\lgm}}
		&\leq  \liminf\nolimits_{\ind \to \infty} 
		\RD{\rno}{\qga{\rno,\mP}{\lgm}}{\qga{\rno,\pma{}{(\ind)}}{\lgm}}.
		\end{align}
		Equations 
		\eqref{eq:Ginformation-neq-A},
		\eqref{eq:Ginformation-neq},
		\eqref{eq:GcapacityFLB-b-A}, \eqref{eq:GcapacityFLB-b-B} imply that 
		\(\RD{\rno}{\mP \mtimes  \Wm e^{\frac{1-\rno}{\rno}\lgm\cdot \costf}}{\mP \otimes \qga{\rno,\widetilde{\mP}}{\lgm}}\leq\GCL{\rno}{\Wm}{\lgm}\)
		for all \(\mP\in \pdis{\inpS}\).
		
		\item {\it \(\exists!\qga{\rno,\Wm}{\lgm}\in\pmea{\outA}\) satisfying \eqref{eq:lem:GcapacityFLB}
			such that \(\qga{\rno,\mP}{\lgm}=\qga{\rno,\Wm}{\lgm}\) for all \(\mP\) with 
			\(\GMIL{\rno}{\mP}{\Wm}{\lgm}=\GCL{\rno}{\Wm}{\lgm}\):}
		If \(\GMIL{\rno}{\mP}{\Wm}{\lgm}\!=\!\GCL{\rno}{\Wm}{\lgm}\)
		for a \(\mP\in\pdis{\inpS}\), 
		then
		as a result of 
		\eqref{eq:Ginformation-neq-A},	\eqref{eq:Ginformation-neq}, and Lemma \ref{lem:divergence-pinsker} we have
		\begin{align}
		\label{eq:GcapacityFLB-c}
		\RD{\rno}{\mP \mtimes  \Wm e^{\frac{1-\rno}{\rno}\lgm\cdot \costf}}{\mP \otimes \qga{\rno,\widetilde{\mP}}{\lgm}}
		&\geq \GCL{\rno}{\!\Wm\!}{\lgm}+\tfrac{\rno\wedge 1}{2}\lon{\qga{\rno,\mP}{\lgm}-\qga{\rno,\widetilde{\mP}}{\lgm}}^{2}.
		\end{align}
		Since  we have already established that
		\(\RD{\rno}{\mP \mtimes  \Wm e^{\frac{1-\rno}{\rno}\lgm\cdot \costf}}{\mP \otimes \qga{\rno,\widetilde{\mP}}{\lgm}}\leq \GCL{\rno}{\!\Wm\!}{\lgm}\)
		for any \(\mP\in\pdis{\inpS}\), \eqref{eq:GcapacityFLB-c} 
		implies that \(\qga{\rno,\mP}{\lgm}=\qga{\rno,\tilde{\mP}}{\lgm}\)  
		for any \(\mP\in\pdis{\inpS}\) satisfying 
		\(\GMIL{\rno}{\mP}{\Wm}{\lgm}=\GCL{\rno}{\Wm}{\lgm}\).		
	\end{enumerate}
\end{proof}

\begin{proof}[Proof of Theorem \ref{thm:Gminimax}]
	Note that \eqref{eq:thm:Gminimax} implies \eqref{eq:thm:Gminimaxradius} and 
	\eqref{eq:thm:Gminimaxcenter} implies \eqref{eq:thm:Gminimaxradiuscenter}.
	Furthermore, the left hand side of \eqref{eq:thm:Gminimax} is equal to \(\GCL{\rno}{\Wm}{\lgm}\) by
	\eqref{eq:Gcapacity}.
	Thus when \(\GCL{\rno}{\Wm}{\lgm}\) is infinite, \eqref{eq:thm:Gminimax} holds trivially by 
	the max-min inequality.
	When \(\GCL{\rno}{\Wm}{\lgm}\) is finite, \eqref{eq:thm:Gminimax} follows from \eqref{eq:thm:Gminimaxcenter}
	and the max-min inequality. 
	Thus we can assume \(\GCL{\rno}{\Wm}{\lgm}\) to be finite and prove  the claims about \(\qga{\rno,\Wm}{\lgm}\), 
	in order to prove the theorem.
	\begin{enumerate}[(i)]
		\item {\it If \(\GCL{\rno}{\Wm}{\lgm}\!<\!\infty\) and
			\(\lim\nolimits_{\ind \to \infty}\!\GMIL{\rno}{\pma{}{(\ind)}}{\Wm}{\lgm}\!=\!\GCL{\rno}{\Wm}{\lgm}\),
			then \(\{\qga{\rno,\pma{}{(\ind)}}{\lgm}\}_{\ind\in\integers{+}}\) is a Cauchy sequence in 
			\(\pmea{\outA}\) for the total variation metric:}
		For any sequence \(\{\pma{}{(\ind)}\}_{\ind\in \integers{+}}\subset\pdis{\inpS}\) satisfying 
		\(\lim\nolimits_{\ind \to \infty} \GMIL{\rno}{\pma{}{(\ind)}}{\Wm}{\lgm}=\GCL{\rno}{\Wm}{\lgm}\), let 
		us consider a sequence of channels \(\{\Wm^{(\ind)}\}_{\ind\in\integers{+}}\) 
		whose input sets \(\{\inpS^{(\ind)}\}_{\ind\in\integers{+}}\) 
		form a nested sequence of finite subsets of \(\inpS\) defined as follows, 
		\begin{align}
		\notag
		\inpS^{(\ind)} 
		&\DEF\{\dinp\in\inpS:\exists \jnd\in\{1,\ldots,\ind\} \mbox{~such that~} \pma{}{(\jnd)}(\dinp)>0\}.
		\end{align}
		
		Then for any \(\ind\in\integers{+}\), there exists a unique \(\qga{\rno,\Wm^{(\ind)}}{\lgm}\) satisfying 
		\eqref{eq:lem:GcapacityFLB} by Lemma \ref{lem:GcapacityFLB}.
		Furthermore, \(\pdis{\inpS^{(\jnd)}}\subset\pdis{\inpS^{(\ind)}}\) for any \(\ind,\jnd\in\integers{+}\) such that \(\jnd\leq \ind\).
		In order to bound
		\(\lon{\qga{\rno,\pma{}{(\jnd)}}{\lgm}-\qga{\rno,\pma{}{(\ind)}}{\lgm}}\) for positive integers \(\jnd<\ind\), we use the triangle
		inequality for \(\qma{\rno,\pma{}{(\jnd)}}{\lgm}\), \(\qga{\rno,\pma{}{(\ind)}}{\lgm}\) and \(\qga{\rno,\Wm^{(\ind)}}{\lgm}\)
		\begin{align}
		\label{eq:GcapacityLB-A}
		\lon{\qga{\rno,\pma{}{(\jnd)}}{\lgm}-\qga{\rno,\pma{}{(\ind)}}{\lgm}}
		\leq \lon{\qga{\rno,\pma{}{(\jnd)}}{\lgm}-\qga{\rno,\Wm^{(\ind)}}{\lgm}}
		+\lon{\qga{\rno,\pma{}{(\ind)}}{\lgm}-\qga{\rno,\Wm^{(\ind)}}{\lgm}}.
		\end{align} 
		Let us proceed with bounding \(\lon{\qga{\rno,\pma{}{(\jnd)}}{\lgm}-\qga{\rno,\Wm^{(\ind)}}{\lgm}}\)
		and \(\lon{\qga{\rno,\pma{}{(\ind)}}{\lgm}-\qga{\rno,\Wm^{(\ind)}}{\lgm}}\).
		\begin{align}
		\notag
		\lon{\qga{\rno,\pma{}{(\jnd)}}{\lgm}-\qga{\rno,\Wm^{(\ind)}}{\lgm}}
		&\mathop{\leq}^{(a)} 
		\sqrt{\tfrac{2}{\rno\wedge 1} \RD{\rno}{\qga{\rno,\pma{}{(\jnd)}}{\lgm}}{\qga{\rno,\Wm^{(\ind)}}{\lgm}}}
		\\
		\notag
		&\mathop{\leq}^{(b)} 
		\sqrt{\tfrac{2}{\rno\wedge 1}}
\sqrt{\RD{\rno}{\pma{}{(\jnd)} \mtimes  \Wm^{(\ind)} e^{\frac{1-\rno}{\rno}\lgm\cdot\costf}}{\pma{}{(\jnd)}\otimes {\qga{\rno,\Wm^{(\ind)}}{\lgm}}}
	-\GMIL{\rno}{\pma{}{(\jnd)}}{\Wm^{(\ind)}}{\lgm}}
		\\
		\notag
		&\mathop{\leq}^{(c)} 
		\sqrt{\tfrac{2}{\rno\wedge 1}}
		\sqrt{\GCL{\rno}{\Wm^{(\ind)}}{\lgm}-\GMIL{\rno}{\pma{}{(\jnd)}}{\Wm^{(\ind)}}{\lgm}}
		\\
		\notag
		&\mathop{\leq}^{(d)} 
		\sqrt{\tfrac{2}{\rno\wedge 1}}
\sqrt{\GCL{\rno}{\Wm}{\lgm}-\GMIL{\rno}{\pma{}{(\jnd)}}{\Wm}{\lgm}}
		\end{align}
		where 
		\((a)\) follows from Lemma \ref{lem:divergence-pinsker},
		\((b)\) follows from \eqref{eq:Ginformation-neq-A} and \eqref{eq:Ginformation-neq}, 
		\((c)\) follows Lemma \ref{lem:GcapacityFLB} because \(\pma{}{(\jnd)}\!\in\!\pdis{\inpS^{(\ind)}}\), 
		and 
		\((d)\) follows from  \(\GMIL{\rno}{\pma{}{(\jnd)}}{\Wm^{(\ind)}}{\lgm}\!=\!\GMIL{\rno}{\pma{}{(\jnd)}}{\Wm}{\lgm}\) 
		and \(\GCL{\rno}{\Wm^{(\ind)}}{\lgm}\!\leq\!\GCL{\rno}{\Wm}{\lgm}\).
		We can obtain a similar bound on \(\lon{\qga{\rno,\pma{}{(\ind)}}{\lgm}\!-\!\qga{\rno,\Wm^{(\ind)}}{\lgm}}\).
		Then \(\{\qga{\rno,\pma{}{(\ind)}}{\lgm}\}\) is a Cauchy sequence as a result of \eqref{eq:GcapacityLB-A}
		because \(\lim\nolimits_{\ind \to \infty}\!\GMIL{\rno}{\pma{}{(\ind)}}{\Wm}{\lgm}\!=\!\GCL{\rno}{\Wm}{\lgm}\).
		
		\item {\it If \(\GCL{\rno}{\Wm}{\lgm}<\infty\), then \(\exists!\qga{\rno,\Wm}{\lgm}\in \pmea{\outA}\) 
			satisfying 
			\(\lim\nolimits_{\ind\to \infty}\lon{\qga{\rno,\Wm}{\lgm}-\qga{\rno,\pma{}{(\ind)}}{\lgm}}=0\) 
			for all \(\{\pma{}{(\ind)}\}_{\ind\in\integers{+}}\subset\pdis{\inpS}\) such that
			\(\lim\nolimits_{\ind \to \infty} \GMIL{\rno}{\pma{}{(\ind)}}{\Wm}{\lgm}=\GCL{\rno}{\Wm}{\lgm}\):}
		Note that \(\smea{\outA}\) is a complete metric space for the total variation metric.
		Then \(\{\qga{\rno,\pma{}{(\ind)}}{\lgm}\}_{\ind\in\integers{+}}\) has a unique limit point \(\qga{\rno,\Wm}{\lgm}\) in \(\smea{\outA}\). 
		Since \(\pmea{\outA}\) is a closed set for the total variation topology and 
		\(\cup_{\ind\in\integers{+}}\qga{\rno,\pma{}{(\ind)}}{\lgm}\subset \pmea{\outA}\),
		then \(\qga{\rno,\Wm}{\lgm}\in\pmea{\outA}\), by \cite[Thm. 2.1.3]{munkres}.
		
		We have established the existence of a unique limit point \(\qga{\rno,\Wm}{\lgm}\),
		for any sequence  \(\{\pma{}{(\ind)}\}_{\ind\in\integers{+}}\subset\pdis{\inpS}\) satisfying
		\(\lim\nolimits_{\ind \to \infty}\GMIL{\rno}{\pma{}{(\ind)}}{\Wm}{\lgm}=\GCL{\rno}{\Wm}{\lgm}\).
		This, however, implies 
		\(\lim_{\ind\to\infty}\lon{\qga{\rno,\widetilde{\mP}^{(\ind)}}{\lgm}-\qga{\rno,\pmn{}{*}}{\lgm}}=0\)
		for any \(\{\widetilde{\mP}^{(\ind)}\}_{\ind\in\integers{+}}\) satisfying 
		\(\lim\nolimits_{\ind \to \infty}\GMIL{\rno}{\widetilde{\mP}^{(\ind)}}{\Wm}{\lgm}=\GCL{\rno}{\Wm}{\lgm}\)
		because we can interleave the elements of \(\{\pma{}{(\ind)}\}_{\ind\in\integers{+}}\) and 
		\(\{\widetilde{\mP}^{(\ind)}\}_{\ind\in\integers{+}}\) to obtain a new sequence 
		\(\{\widehat{\mP}^{(\ind)}\}_{\ind\in\integers{+}}\)
		satisfying 
		\(\lim\nolimits_{\ind \to \infty}\GMIL{\rno}{\widehat{\mP}^{(\ind)}}{\Wm}{\lgm}=\GCL{\rno}{\Wm}{\lgm}\)
		for which \(\{\qga{\rno,\widehat{\mP}^{(\ind)}}{\lgm}\}\) is a Cauchy sequence. Then 
		\(\qga{\rno,\Wm}{\lgm}=\qga{\rno,\pma{}{*}}{\lgm}\)
		
		\item {\it \(\qga{\rno,\Wm}{\lgm}\) satisfies the equality given in \eqref{eq:thm:Gminimaxcenter}:}
		For any \(\mP \in \pdis{\inpS}\), let us consider any sequence \(\{\pma{}{(\ind)}\}_{\ind\in\integers{+}}\) 
		satisfying \(\pma{}{(1)}=\mP\) and
		\(\lim\nolimits_{\ind \to \infty} \GMIL{\rno}{\pma{}{(\ind)}}{\Wm}{\lgm}=\GCL{\rno}{\Wm}{\lgm}\).
		Then \(\mP\in \pdis{\inpS^{(\ind)}}\) for all \(\ind\in\integers{+}\). 
		Using Lemma \ref{lem:GcapacityFLB} we get
		\begin{align}
		\label{eq:GcapacityLB-D}
		\RD{\rno}{\mP \mtimes  \Wm e^{\frac{1-\rno}{\rno}\lgm\cdot\costf}}{\mP\otimes\qga{\rno,\Wm^{(\ind)}}{\lgm}}
		&\leq  \GCL{\rno}{\Wm^{(\ind)}}{\lgm}
		&
		&\forall \ind\in\integers{+}.
		\end{align}
		Since \(\inpS^{(\ind)}\) is a finite set, \(\exists\widetilde{\mP}^{(\ind)}\in \pdis{\inpS^{(\ind)}}\) 
		satisfying \(\GMIL{\rno}{\widetilde{\mP}^{(\ind)}}{\Wm^{(\ind)}}{\lgm}=\GCL{\rno}{\Wm^{(\ind)}}{\lgm}\)
		and \(\qga{\rno,\widetilde{\mP}^{(\ind)}}{\lgm}=\qga{\rno,\Wm^{(\ind)}}{\lgm}\)  by Lemma \ref{lem:GcapacityFLB}.
		Then \(\GMIL{\rno}{\widetilde{\mP}^{(\ind)}}{\Wm^{(\ind)}}{\lgm}\geq \GMIL{\rno}{\pma{}{(\ind)}}{\Wm^{(\ind)}}{\lgm}\) 
		and consequently \(\lim\nolimits_{\ind \to \infty} \GMIL{\rno}{\widetilde{\mP}^{(\ind)}}{\Wm}{\lgm}=\GCL{\rno}{\Wm}{\lgm}\). 
		We have already established that for such a sequence
		\(\qga{\rno,\widetilde{\mP}^{(\ind)}}{\lgm}\rightarrow \qga{\rno,\Wm}{\lgm}\) in the total variation topology, 
		and hence in the topology of setwise convergence. Then
		the lower semicontinuity  of the \renyi divergence (i.e. Lemma \ref{lem:divergence:lsc})
		and	the identity \(\GCL{\rno}{\Wm^{(\ind)}}{\lgm}\leq\GCL{\rno}{\Wm}{\lgm}\) imply that 
		\begin{align}
		\notag
		\RD{\rno}{\mP \mtimes  \Wm e^{\frac{1-\rno}{\rno}\lgm\cdot\costf}}{\mP\otimes\qga{\rno,\Wm}{\lgm}}
		&\leq \GCL{\rno}{\Wm}{\lgm}
		&
		&\forall \mP \in \pdis{\inpS}.
		\end{align}
		On the other hand
		\(\RD{\rno}{\mP \mtimes  \Wm e^{\frac{1-\rno}{\rno}\lgm\cdot\costf}}{\mP\otimes\qga{\rno,\Wm}{\lgm}}
		\geq \GMIL{\rno}{\mP}{\Wm}{\lgm}\)
		and
		\(\GCL{\rno}{\Wm}{\lgm}=\sup_{\mP\in\pdis{\inpS}}\GMIL{\rno}{\mP}{\Wm}{\lgm}\)
		by definitions of \(\GMIL{\rno}{\mP}{\Wm}{\lgm}\) and \(\GCL{\rno}{\Wm}{\lgm}\).
		Thus \eqref{eq:thm:Gminimaxcenter} holds.
	\end{enumerate} 
\end{proof}

\begin{proof}[Proof of Lemma \ref{lem:GEHB}]
	Let us first consider the case \(\rno\in\reals{+}\setminus\{1\}\). As a result 
	of \eqref{eq:Ginformation-neq-A} and \eqref{eq:Ginformation-neq} we have,
	\begin{align}
	\notag
	\sup\nolimits_{\dinp \in \inpS} \RD{\rno}{\Wm(\dinp)}{\mQ}-\lgm \cdot \costf(\dinp)
	&=\sup\nolimits_{\dinp \in \inpS} \RD{\rno}{\Wm(\dinp)e^{\frac{1-\rno}{\rno}\lgm \cdot \costf(\dinp)}}{\mQ}
	\\
	\notag
	&\geq \RD{\rno}{\mP \mtimes  \Wm e^{\frac{1-\rno}{\rno}\lgm \cdot \costf}}{\mP \otimes\mQ}
	&
	&
	\\
	\label{eq:LEHB-A}
	&\geq \GMIL{\rno}{\mP}{\Wm}{\lgm}+\RD{\rno}{\qga{\rno,\mP}{\lgm}}{\mQ}
	&
	&\forall \mP \in \pdis{\inpS}.
	\end{align}
	Let \(\{\pma{}{(\ind)}\}_{\ind\in\integers{+}}\) be a sequence of elements of \(\pdis{\inpS}\) such that
	\(\lim\nolimits_{\ind\to\infty}\GMIL{\rno}{\pma{}{(\ind)}}{\Wm}{\lgm}=\GCL{\rno}{\Wm}{\lgm}\).
	Then the sequence \(\{\qga{\rno,\pma{}{(\ind)}}{\lgm}\}_{\ind\in\integers{+}}\) is a Cauchy sequence with the unique 
	limit point \(\qga{\rno,\Wm}{\lgm}\) by Theorem \ref{thm:Gminimax}. 
	Since \(\{\qga{\rno,\pma{}{(\ind)}}{\lgm}\}\to \qga{\rno,\Wm}{\lgm}\) in total variation topology, same convergence holds in 
	the topology of setwise convergence.
	On the other hand, the order \(\rno\) \renyi divergence is lower semicontinuous for the topology of setwise convergence 
	by Lemma \ref{lem:divergence:lsc}. Thus we have 
	\begin{align}
	\label{eq:LEHB-B}
	\liminf\nolimits_{\ind \to \infty} \left[\GMIL{\rno}{\pma{}{(\ind)}}{\Wm}{\lgm}+
	\RD{\rno}{\qga{\rno,\pma{}{(\ind)}}{\lgm}}{\mQ}\right]
	&\geq \GCL{\rno}{\Wm}{\lgm}+\RD{\rno}{\qga{\rno,\Wm}{\lgm}}{\mQ}.
	\end{align}
	\eqref{eq:LEHB-A} and \eqref{eq:LEHB-B} imply  \eqref{eq:lem:GEHB} 
	for \(\rno\!\in\!\reals{+}\!\setminus\!\{1\}\) because 
	\(\GCL{\rno}{\Wm}{\lgm}\!=\!\RCL{\rno}{\Wm}{\lgm}\) by \eqref{eq:equalityofcapacities} 
	and \(\qga{\rno,\Wm}{\lgm}\!=\!\qma{\rno,\Wm}{\lgm}\)
	by \eqref{eq:equalityofcenters}.
	
	For \(\rno=1\) case, as a result of Lemma \ref{lem:information}-\eqref{information:one}
	and the definition of A-L information given in \eqref{eq:def:Linformation}
	 we have,
	\begin{align}
	\label{eq:LEHB-C}
	\sup\nolimits_{\dinp \in \inpS} \RD{1}{\Wm(\dinp)}{\mQ}-\lgm \cdot \costf(\dinp)
	&\geq \RMIL{1}{\mP}{\Wm}{\lgm}+\RD{1}{\qmn{1,\mP}}{\mQ}
	&
	&\forall \mP \in \pdis{\inpS}.
	\end{align}
	Repeating the argument leading to \eqref{eq:LEHB-B} and invoking Theorem \ref{thm:Lminimax}, 
	rather than Theorem \ref{thm:Gminimax}, we get
	\begin{align}
	\label{eq:LEHB-D}
	\liminf\nolimits_{\ind \to \infty} \left[\RMIL{1}{\pma{}{(\ind)}}{\Wm}{\lgm}+
	\RD{1}{\qmn{1,\pma{}{(\ind)}}}{\mQ}\right]
	&\geq \RCL{1}{\Wm}{\lgm}+\RD{1}{\qma{1,\Wm}{\lgm}}{\mQ}.
	\end{align}
	\eqref{eq:LEHB-C} and \eqref{eq:LEHB-D} imply  \eqref{eq:lem:GEHB} 
	for \(\rno=1\) case.
\end{proof}

\begin{proof}[Proof of Lemma \ref{lem:Lcentercontinuity}]
	Since \(\RCL{\rno}{\Wm}{\lgm}\) is nonincreasing in \(\lgm\) by Lemma \ref{lem:Lcapacity}-(\ref{Lcapacity:function}),
	\(\RCL{\rno}{\Wm}{\lgm_{2}}\leq\RCL{\rno}{\Wm}{\lgm_{1}}\leq\RCL{\rno}{\Wm}{\lgm_{0}}<\infty\).
	We apply Lemma \ref{lem:GEHB} for 
	\(\lgm=\lgm_{2}\) and \(\mQ=\qma{\rno,\Wm}{\lgm_{1}}\) 
	and use the fact that \(0\leq \costf(\dinp)\) for all \(\dinp\in\inpS\) 
	to obtain
	\begin{align}
	\notag
	\RD{\rno}{\qma{\rno,\Wm}{\lgm_{2}}}{\qma{\rno,\Wm}{\lgm_{1}}}+\RCL{\rno}{\Wm}{\lgm_{2}}
	&\leq \sup\nolimits_{\dinp\in \inpS}\RD{\rno}{\Wm(\dinp)}{\qma{\rno,\Wm}{\lgm_{1}}}-\lgm_{2}\cdot \costf(\dinp)  
	\\
	\notag
	&\leq \sup\nolimits_{\dinp\in \inpS}\RD{\rno}{\Wm(\dinp)}{\qma{\rno,\Wm}{\lgm_{1}}}-\lgm_{1}\cdot \costf(\dinp).  
	\end{align}
	Then \eqref{eq:lem:Lcentercontinuity} follows from \eqref{eq:thm:Lminimaxradiuscenter} of Theorem \ref{thm:Lminimax}.
	
	For any two point \(\lgm_{1}\) and \(\lgm_{2}\) in 
	\(\{\lgm:\exists\epsilon>0~s.t.~ \RCL{\rno}{\Wm}{\lgm-\epsilon\uc}\!<\!\infty\!\}\),
	not necessarily satisfying \(\lgm_{1}\leq\lgm_{2} \),
	let \(\lgm_{\vee}\) be \(\lgm_{1}\vee \lgm_{2}\), i.e. \(\lgm_{\vee}^{\ind}=\lgm_{1}^{\ind} \vee \lgm_{2}^{\ind}\)
	for all \(\ind\in\{1,\ldots,\ell\}\).
	Then as a result of the triangle inequality we have
	\begin{align}
	\label{eq:Lcentercontinuity-A}
	\lon{\qma{\rno,\Wm}{\lgm_{1}}-\qma{\rno,\Wm}{\lgm_{2}}}
	&\leq 
	\lon{\qma{\rno,\Wm}{\lgm_{1}}-\qma{\rno,\Wm}{\lgm_{\vee}}}+\lon{\qma{\rno,\Wm}{\lgm_{\vee}}-\qma{\rno,\Wm}{\lgm_{2}}}.
	\end{align}
	On the other hand, as a result of Lemma \ref{lem:divergence-pinsker} and \eqref{eq:lem:Lcentercontinuity}
	we have,
	\begin{align}
	\label{eq:Lcentercontinuity-B}
	\lon{\qma{\rno,\Wm}{\lgm_{1}}-\qma{\rno,\Wm}{\lgm_{\vee}}}
	&=\sqrt{\tfrac{2}{\rno \wedge 1}}\sqrt{\RCL{\rno}{\Wm}{\lgm_{1}}-\RCL{\rno}{\Wm}{\lgm_{\vee}}},
	\\
	\label{eq:Lcentercontinuity-C}
	\lon{\qma{\rno,\Wm}{\lgm_{\vee}}-\qma{\rno,\Wm}{\lgm_{2}}}
	&=\sqrt{\tfrac{2}{\rno \wedge 1}}\sqrt{\RCL{\rno}{\Wm}{\lgm_{2}}-\RCL{\rno}{\Wm}{\lgm_{\vee}}}.
	\end{align}
	Then continuity of \(\qma{\rno,\Wm}{\lgm}\) in \(\lgm\) on \(\{\lgm:\exists\epsilon>0~s.t.~ \RCL{\rno}{\Wm}{\lgm-\epsilon\uc}\!<\!\infty\!\}\)
	for the total variation topology on \(\pmea{\outA}\) follows from
	\eqref{eq:Lcentercontinuity-A}, \eqref{eq:Lcentercontinuity-B},
	\eqref{eq:Lcentercontinuity-C},
	and
	the continuity of \(\RCL{\rno}{\Wm}{\lgm}\) in \(\lgm\) on \(\{\lgm:\exists\epsilon>0~s.t.~ \RCL{\rno}{\Wm}{\lgm-\epsilon\uc}\!<\!\infty\!\}\)
	established in Lemma \ref{lem:Lcapacity}-(\ref{Lcapacity:function}).
\end{proof}

%\subsubsection{Proof of Lemmas about Transition Probabilities}\label{sec:TPproofs}

%\clearpage
\section*{Acknowledgment}
The author would like to thank Fatma Nakibo\u{g}lu and Mehmet Nakibo\u{g}lu for their hospitality; 
this work would not have been possible without it.  
The author would like to thank  Marco Dalai for informing him about Fano's implicit assertion of 
the fixed point property in  \cite{fano} and 
Gonzalo Vazquez-Vilar for pointing out Poltyrev's paper \cite{poltyrev82} on the random coding bound. 
Author would also like to thank the reviewer for his meticulous report, which allowed 
the author to correct a number of inaccurate and/or imprecise statements in the original 
manuscript.
\bibliographystyle{unsrt} 
%\bibliography{main}
\newcommand{\noopsort}[1]{} \newcommand{\printfirst}[2]{#1}
  \newcommand{\singleletter}[1]{#1} \newcommand{\switchargs}[2]{#2#1}

\end{document}